\newcounter{questioncounter}
\newcommand{\questiondisplay}[1]{\arabic{questioncounter}} 
\definecolor{mygray}{gray}{0.85}
\definecolor{quantumviolet}{RGB}{62, 37, 113}
\pgfplotsset{compat=1.18} 
\theoremstyle{plain}
\newtheorem{thm}{Theorem}
\newtheorem{assumption}[thm]{Assumption}
\newtheorem{lemma}[thm]{Lemma}
\newtheorem{prop}[thm]{Proposition}
\newtheorem{cor}[thm]{Corollary}
\theoremstyle{definition}
\newtheorem{definition}[thm]{Definition}
\newtheorem{rmk}[thm]{Remark}
\numberwithin{thm}{section}
\numberwithin{assumption}{section}
\numberwithin{lemma}{section}
\numberwithin{prop}{section}
\numberwithin{cor}{section}
\numberwithin{definition}{section}
\numberwithin{rmk}{section}
\numberwithin{equation}{section}
\definecolor{mygreen}{RGB}{0, 100, 0}
\newcommand{\ketbra}[2]   {\left|#1\middle\rangle\!\middle\langle#2\right|}
\newcommand{\abs}[1]      {\left| #1 \right|}
\newcommand{\norm}[1]     {\left\| #1 \right\|}
\newcommand{\ot}{\otimes}
\newcommand\F{ {\mathcal F} }
\newcommand\D{ {\mathcal D} }
\newcommand\Q{ {\mathcal Q} }
\newcommand\T{ {\mathcal T} }
\newcommand{\mx}{\mathrm{max}}
\renewcommand{\v}[1]{\ensuremath{\boldsymbol #1}}
\newenvironment{aligns}{\subequations \align} {\endalign \endsubequations}
\newcommand{\hh}[1]{\ensuremath{\mathfrak{H}\left[#1\right]}}
\newcommand{\R}{\mathbb{R}}
\newcommand{\myC}{\mathbb{C}}
\newcommand{\myN}{\mathbb{N}}
\newcommand{\lrb}[1]{\left ( #1 \right )}
\newcommand{\lrsb}[1]{\left [ #1 \right ]}
\newcommand{\lrcb}[1]{\left \{ #1 \right \}}
\newcommand{\defeq}{\coloneqq}
\renewcommand{\emptyset}{\varnothing}
\renewcommand{\i}{\ensuremath{\mathbf{i}}}
\newcommand{\realpart}[1]{\operatorname{Re}\lrb{#1}}
\newcommand{\imaginarypart}[1]{\operatorname{Im}\lrb{#1}}
\newcommand{\myspan}[1]{\operatorname{span}\lrcb{#1}}
\newcommand{\mypoly}[1]{\operatorname{poly} \lrb{#1}}
\newcommand{\mypolylog}[1]{\operatorname{polylog} \lrb{#1}}
\newcommand{\mylog}[1]{\operatorname{log} \lrb{#1}}
\newcommand{\myln}[1]{\operatorname{ln} \lrb{#1}}
\newcommand{\mymax}[1]{\operatorname{max}\lrb{#1}}
\newcommand{\mymin}[1]{\operatorname{min}\lrb{#1}}
\newcommand{\myO}[1]{O\lrb{#1}}
\newcommand{\mytO}[1]{\tilde{O}\lrb{#1}}
\newcommand{\myOmega}[1]{\Omega\lrb{#1}}
\newcommand{\myTheta}[1]{\Theta\lrb{#1}}
\newcommand{\diag}[1]{\operatorname{diag}\lrb{#1}}
\newcommand{\mbt}{\mathbf{t}}
\newcommand{\mbf}{\mathbf{T}}
\newcommand{\tdeg}[1]{\widetilde{\operatorname{deg}}\lrb{#1}}
\definecolor{edgecolor}{RGB}{72, 37, 113}
\definecolor{nodecolor}{RGB}{72, 37, 113}
\newlength{\eqTree}
\newlength{\treeTree}
\newcommand{\bt}{\bullet}
\newcommand{\btx}{\textbf{x}}
\begin{document}


\title{Quantum algorithms for general nonlinear dynamics based on the Carleman embedding}

\author{David Jennings}
\affiliation{PsiQuantum, 700 Hansen Way, Palo Alto, CA 94304, USA}

\author{Kamil Korzekwa}
\affiliation{PsiQuantum, 700 Hansen Way, Palo Alto, CA 94304, USA}

\author{Matteo Lostaglio}
\affiliation{PsiQuantum, 700 Hansen Way, Palo Alto, CA 94304, USA}

\author{Andrew T Sornborger}
\affiliation{Computer, Computational, and Statistical Sciences Division, Los Alamos National Laboratory, Los Alamos, New Mexico 87545, USA}

\author{Yi\u{g}it Suba\c{s}i}
\affiliation{Computer, Computational, and Statistical Sciences Division, Los Alamos National Laboratory, Los Alamos, New Mexico 87545, USA}

\author{Guoming Wang\textsuperscript{*}}
\affiliation{PsiQuantum, 700 Hansen Way, Palo Alto, CA 94304, USA}

\renewcommand{\thefootnote}{\fnsymbol{footnote}} 
\footnotetext[1]{Lead author email: gwang@psiquantum.com}
\renewcommand{\thefootnote}{\arabic{footnote}}

\begin{abstract}
    
Important nonlinear dynamics, such as those found in plasma and fluid systems, are typically hard to simulate on classical computers. Thus, if fault-tolerant quantum computers could efficiently solve such nonlinear problems, it would be a transformative change for many industries. In a recent breakthrough [Liu et al., PNAS 2021], the first efficient quantum algorithm for solving nonlinear differential equations was constructed, based on a single condition $R<1$, where $R$ characterizes the ratio of nonlinearity to dissipation. This result, however, is limited to the class of purely dissipative systems with negative log-norm, which excludes application to many important problems. In this work, we correct technical issues with this and other prior analysis, and substantially extend the scope of nonlinear dynamical systems that can be efficiently simulated on a quantum computer in a number of ways. Firstly, we extend the existing results from purely dissipative systems to a much broader class of stable systems, and show that every quadratic Lyapunov function for the linearized system corresponds to an independent $R$-number criterion for the convergence of the Carlemen scheme. Secondly, we extend our stable system results to physically relevant settings where conserved polynomial quantities exist. Finally, we provide extensive results for the class of non-resonant systems. With this, we are able to show that efficient quantum algorithms exist for a much wider class of nonlinear systems than previously known, and prove the BQP-completeness of nonlinear oscillator problems of exponential size. In our analysis, we also obtain several results related to the Poincar\'{e}-Dulac theorem and diagonalization of the Carleman matrix, which could be of independent interest.
\end{abstract}

\maketitle
\newpage
\tableofcontents
\newpage


\section{Overview}


\subsection{Introduction}

The simulation of dynamical systems plays a central role in science and engineering, underpinning models across physics, chemistry, biology, and finance. While linear differential equations are well understood and efficiently solvable with classical methods, nonlinear differential equations present a significantly greater challenge. Their nonlinearity often leads to complex phenomena, and solutions typically require costly numerical integration schemes, occupying a substantial part of modern high-performance computing resources.

To overcome these challenges, one may need to completely change the computational paradigm, and quantum computers are a compelling candidate for that. As the field of quantum computing matures, it broadens its scope for scientific applications~\cite{lin2022lecture}, with recent advancements in quantum algorithms presenting efficient methods for solving linear ordinary differential equations~\cite{berry2014high,berry2017quantum,berry2024quantum,an2023linear,an2023quantum,jennings2023cost}. The seminal paper of Liu~\textit{et~al.}~\cite{liu2021efficient} provided the first convincing argument that quantum algorithms could indeed provide a computational advantage also for solving nonlinear differential equations. However, the nonlinear regime is far less explored, and it is still an active area of research~\cite{surana2022efficient,costa2023further,krovi2023improved,liu2023efficient}.

One of the central challenges in bringing the power of quantum computing to solving nonlinear problems is that quantum computers, by their very construction, operate linearly on quantum states -- evolving them via unitary transformations governed by the linear Schr\"{o}dinger equation. This fundamental mismatch creates a barrier to directly implementing nonlinear dynamics on quantum hardware. A promising strategy to overcome this limitation is to embed the nonlinear system into a higher-dimensional \emph{linear representation}, such that its evolution can be captured through the simulation of an associated linear system~\cite{engel2021linear,lin2022challenges}. This embedding -- achieved, for example, through Carleman linearization~\cite{Carleman32}, Liouville representations~\cite{lin2022koopman}, Koopman-von Neumann methods~\cite{joseph2020koopman}, or other lifting techniques  -- makes it possible to harness the rich toolbox of quantum linear algebra and Hamiltonian simulation~~\cite{lin2022lecture}. By doing so, one can leverage quantum speedups for linear systems to indirectly simulate nonlinear behavior, provided that the embedding preserves essential properties of the dynamics and the associated error converges under suitable conditions.

In this work, we establish rigorous convergence criteria for Carleman linearization in the context of quantum simulation. We derive explicit conditions under which the Carleman embedding error converges for a wide range of nonlinear dynamical systems. First, we extend existing results from purely dissipative systems to a much broader class of stable systems, and show that each quadratic Lyapunov function for the linearized system yields an independent $R$-number criterion for the convergence of the Carleman scheme. We further extend the analysis to physically relevant systems that possess conserved polynomial observables, which fall outside the stable setting treated above. Finally, we present a collection of results for non-resonant systems and leverage them to establish the BQP-completeness of certain nonlinear oscillator problems. Together, these results significantly broaden the class of nonlinear dynamical systems for which quantum simulation can offer provable computational advantages.

In essence, we significantly extend the setting where quantum algorithms can tackle nonlinear dynamics via Carleman embedding from 
\begin{center}
    \begin{tcolorbox}[
      colback=quantumviolet!15,
      colframe=quantumviolet,
      boxrule=0.9pt,            
      arc=2pt,                  
      left=8pt,right=8pt,       
      top=6pt,bottom=6pt,       
      width=0.5\linewidth,         
    ]
        purely dissipative systems under  $R<1$,
    \end{tcolorbox}
\end{center}
\noindent where $R$ is a specific measure of nonlinearity to dissipation (the main result of Ref.~\cite{liu2021efficient}), to  
\begin{center}
    \begin{tcolorbox}[
      colback=quantumviolet!15,
    colframe=quantumviolet,
      boxrule=0.9pt,            
      arc=2pt,                  
      left=8pt,right=8pt,       
      top=6pt,bottom=6pt,       
      width=0.95\linewidth,         
    ]
        stable/conservative/nonresonant systems under a family of generalized $R<1$ conditions,
    \end{tcolorbox}
\end{center}
where the generalized conditions integrate information about Lyapunov functions, conserved quantities and resonance properties of the linearized dynamics to provide families of convergence guarantees.


\subsection{Aims of this work}

We shall aim at providing a broad framework for approximately embedding a general quadratic nonlinear ordinary differential equation (ODE) of the form
\begin{equation}
\label{eq:nonlinear}
    \dot{x}(t) = F_0 + F_1 x(t) + F_2 x(t)^{\otimes 2}, \quad x(t) \in \mathbb{C}^N,
\end{equation}
into a linear problem in a much larger space, which is then to be `solved' on a fault-tolerant quantum computer. Here, $F_1 \in \mathbb{C}^{N\times N}$ is a square matrix encoding linear dynamics of an $N$-dimensional system of interest; $F_2 \in \mathbb{C}^{N\times N^2}$ is a rectangular matrix encoding nonlinearities that maps two copies of the vector $x$ into one copy; and $F_0 \in \mathbb{C}^{N\times 1}$ is a rectangular matrix (a vector) encoding the driving. In what follows, we shall typically assume that these matrices are time-independent, so that Eq.~\eqref{eq:nonlinear} is an \emph{autonomous} ODE system. However, we shall also give techniques to deal with some time-dependent scenarios, including a time-dependent linear term~$F_1$. Importantly, note that quadratic nonlinear ODEs considered here in principle encompass all autonomous polynomial differential equations, as they can always be transformed into the quadratic form through the introduction of auxiliary variables (see, for example~\cite{kerner1981universal}). 

It is commonplace to obtain systems of the form~\eqref{eq:nonlinear} by spatial discretization of time-dependent nonlinear partial differential equations (PDEs). If the PDE is first order in time and has quadratic nonlinearity, one obtains directly \eqref{eq:nonlinear} by spatial discretization. In principle, however, higher order time-derivatives and higher order polynomial nonlinearities can also be reduced to the form~\eqref{eq:nonlinear} by introducing additional variables. 

Consider a scheme where we track the tensor powers $x(t)^{\otimes j}$ for $j \in \mathbb{N}$, a technique known in the classical literature as \emph{Carleman embedding} (or \emph{Carleman linearization})~\cite{Carleman32, kowalski1991nonlinear}, to be distinguished from standard `linearization', as the Carleman embedding captures nonlinear effects. For example, for $N=3$, $x^{\otimes j}$ is a vector involving all monomials of degree $j$, meaning $x^{j_1}_1 x^{j_2}_2 x^{j_3}_3$ with $j_1 + j_2 + j_3 = j$.  As we describe in more detail in Section~\ref{sec:carlemanlinearizationscheme}, we can write a linear differential equation for $x^{\ot j}$ that recursively involves $x^{\ot (j+1)}$. This defines an infinite-dimensional linear differential equation for the vector of all powers $\{ x^{\ot j} \}_{j=1}^{\infty}$. In practice, however, this linear system needs to be truncated at some finite power $k$. We can then write a \emph{Carleman ODE} system for the truncated vector:
\begin{equation}
\begin{aligned}
\label{eq:linearODE}
    \dot{y}(t)& = Ay(t) + a,  \\
     y(t) &=  [y^{[1]}(t), y^{[2]}(t), \dots, y^{[k]}(t)]^T, \quad  \quad y^{[j]}(t) \in \mathbb{C}^{N^j},  \\
    y(0) &= [ x(0),x(0)^{\otimes 2},x(0)^{\otimes 3}, \dots, x(0)^{\otimes k} ]^T,
\end{aligned}
\end{equation}
where $A$ is an $N+N^2 + \dots + N^{k}$-dimensional square matrix -- the \emph{Carleman matrix} -- and $a$ is a $N+N^2 + \dots + N^{k}$-dimensional vector given by $[F_0,0,\dots,0]^T$. While this suggests that we can trade nonlinearity for higher dimensionality, doing so on a classical computer is generally impractical due to the prohibitive memory requirements of simulating the system~\eqref{eq:linearODE}. In contrast, fault-tolerant quantum computers can efficiently perform certain linear operations on vector spaces whose dimensions scale exponentially with the number of qubits. This opens the door for linear embedding methods to serve as a potentially powerful approach to tackling nonlinear problems on a quantum computer.

However, to provide rigorous evidence that such methods can be effectively used to construct quantum algorithms for nonlinear problems, we must first precisely delineate their regimes of validity:

\begin{questionbox}{Carleman convergence}{}
    \label{question1}
    Under what conditions is the solution $y(t)$ of the Carleman ODE system \eqref{eq:linearODE} $\epsilon$-close to the vector $[x(t), x(t)^{\otimes 2}, \dots, x(t)^{\otimes k}]^T$, where $x(t)$ is the solution of the nonlinear system described by Eq.~\eqref{eq:nonlinear}? Can we set $k = \myO{\mylog{t/\epsilon}}$?  
\end{questionbox}

An affirmative answer to the above central question \emph{provably} reduces the problem of solving the nonlinear ODE in Eq.~\eqref{eq:nonlinear} for an $N$-dimensional vector $x$ to the linear ODE in Eq.~\eqref{eq:linearODE} for the $\sim N^{k}$-dimensional vector~$y$. The Carleman ODE system~\eqref{eq:linearODE} can be encoded in $\myO{k \log N} = \myO{\mylog{t/\epsilon}\log N}$ qubits, making it amenable to quantum simulation.

However, we emphasize that answering Central Question 1 is not the end of the story. Constructing an efficient quantum algorithm for the nonlinear differential equation~\eqref{eq:nonlinear} also requires analyzing the complexity of preparing a coherent quantum encoding of the Carleman ODE system. We formulate this as our second central question, presented as follows:

\begin{questionbox}{Algorithmic efficiency}{}
    Under the same conditions introduced to ensure an affirmative answer to the Central Question~1, and given access to unitaries block-encoding the matrices $F_0$, $F_1$ and $F_2$, what is the asymptotic query complexity of a quantum algorithm outputting a quantum state encoding a vector $[\tilde{x}(0), \tilde{x}(h), \tilde{x}(2h), \dots, \tilde{x}(t)]^T$ 
    such that each  
    $\tilde{x}(j h)$ is $\epsilon$-close to the $x(jh)$ 
    solving the nonliner ODE system \eqref{eq:nonlinear} at the corresponding time? And what is the complexity of outputting just $\tilde{x}(t)$?    
\end{questionbox}

While the detailed scalings will depend on the specific algorithmic choices and setups, we will see that the results from Central Question 1 allow us to derive a range of new efficiency guarantees for quantum algorithms for nonlinear differential equations in regimes where such guarantees were not previously available.

Note that, even when we can guarantee efficiency of the quantum algorithm in the sense of Central Question~2,
one still needs to analyze on a case-by-case basis how  to input data into the quantum computer (encoding the matrices $F_0$, $F_1$, $F_2$, and the initial condition $x(0)$), as well as extracting information from the `Carleman vector' $y(t)$ solving Eq.~\eqref{eq:linearODE}. This question necessarily requires us to focus on a specific application, which is beyond the scope of the present work. 

    The class of systems~\eqref{eq:nonlinear} under consideration here contains quantum dynamics as a special case ($x(t) = \ket{\psi(t)}, F_1 = -\i H$, for $H$ Hermitian, $F_2=0$, $F_0 =0$), for which BQP-complete problems are the most promising early-generation candidates for quantum advantage. However, it is relevant to ask whether there are decision problems involving classical dynamical systems admitting exponential advantage. This was recently answered in the positive for a system of exponentially many sparsely coupled linear oscillators~\cite{babbush2023exponential} (which is a case with $F_2=0$, $F_0 =0$). The result was then extended to other classes of similar problems, such as exponentially many sparsely coupled linear oscillators with damping inverse polynomial in the number of qubits~\cite{krovi2024quantum} and Gaussian bosonic circuits on exponentially many modes~\cite{barthe2025gate}. Given the established exponential quantum advantage in the linear regime, our ultimate goal is to determine to which degree this advantage can be extended into the nonlinear setting, ideally achieving exponential or substantial polynomial speedups over classical algorithms. So we ask and answer the following question in the affirmative \footnote{In the process of completing this work, Ref.~\cite{bravyi2025quantum} appeared, where BQP completeness of a class of nonlinear problems was proved. However, there are two key differences between their results and ours. First, we focus on deterministic dynamics governed by standard differential equations, while they consider noisy systems modeled by stochastic differential equations and reduce the problem to the Kolmogorov backward equation, a linear PDE, which they solve using a spectral method. Their BQP-completeness result applies only when the noise is sufficiently strong. Second, they study the evolution of a fixed observable over time and aim to estimate its expected value at a given time, averaged over noise. In contrast, we are concerned with the evolution of the system's state itself, and the output of our quantum algorithm can be used flexibly depending on the user's needs.}:

\begin{questionbox}{Exponential advantage in the nonlinear regime}{}
    Can we establish an exponential quantum advantage over classical methods in the simulation of genuinely nonlinear classical dynamical problems?
\end{questionbox}

Finally, in the process of providing a range of sufficient conditions for a positive answer to the Central Question~1, a key technical contribution of this work is the development of a detailed framework for when a nonlinear dynamical system near an equilibrium can be brought into a normal form via nonlinear transformations of the dynamical variable. At the level of the Carleman matrix $A$, this becomes a question of identifying the structure of the similarity matrix $V$ that brings $A$ into Jordan normal form, and how the particular dynamical properties -- such as the spectral properties of linear terms, the strength of nonlinearities -- impact this similarity transformation and the convergence of the Carleman algorithm. These technical results are particularly important, for instance, in the context of Central Question~2, where we apply quantum algorithms such as \cite{berry2017quantum} whose runtime depends on the condition number of the matrix diagonalizing the generator of the dynamics.


\subsection{Summary of the main results}

We present four classes of main results, giving answers to each of the three Central Questions listed above and explaining the main features of the technical framework that we developed for diagonalizing the Carleman matrix.


\subsubsection{Conditions for the convergence of the Carleman  embedding}
\label{sec:summary_carleman}

We consider a range of nonlinear systems in this work, under different criteria. Table~\ref{tablesummary} offers a guide to the results.

\begin{table}[h!]
    \centering
    \begin{tabular}{|l|l|}
    \hline
    \textbf{Type of nonlinear system}  &  \textbf{Covered in} \\
    \hline
    Stable, autonomous, $\mu(F_1) <0$.  & Section~\ref{sec:solutionnormbounds} \\
    Stable, autonomous, Lyapunov observables.  & Theorem~\ref{thm:Carlemanstable} \\
    Conserved linear observables, no driving, autonomous. & Theorem~\ref{thm:Carleman_cons} \\
    Conserved linear observables, with driving. & Corollary~\ref{cor:Carleman_cons_extension} \\
    Conserved polynomial observables, no driving, autonomous & Theorem~\ref{thm:Carleman_poly_cons} \\
    Dissipative, non-autonomous & Section~\ref{sec:carleman_time_dependent} \\
    Nonresonant, Poincar\'{e} domain, autonomous & Theorem~\ref{thm:non-resonant_Poincare} \\
    Nonresonant, Siegel domain, $F_2$ constrained, autonomous & Theorem~\ref{thm:error_bound_nonresonant_siegel_domain_decompose_f2} \\
    Possibly resonant, oscillating $F_2$ & Theorem~\ref{thm:resonant_oscillating_F2} \\
    Subset of nonlinear oscillator systems & Section~\ref{subsubsec:coupled_nonlinear_oscillators} \\
    Subset of nonlinear Schr\"{o}dinger equations & Section~\ref{subsec:nonlinear_schrodinger_equations} \\
    \hline
    \end{tabular}
    \caption{Summary of new Carleman convergence criteria for nonlinear systems.}
    \label{tablesummary}
\end{table}

At the highest level, our answers to Central Question 1 are summarized in the right column of Table~\ref{tab:summaryofresults} for three broad categories of systems: \emph{stable}, \emph{conservative} and \emph{nonresonant}, defined in the left column of that table.

\begin{table}[h!]
\begin{adjustbox}{width=\textwidth}
\begin{tabular}{|>{\centering\arraybackslash}p{6cm}|>{\centering\arraybackslash}p{8cm}|}
    \hline 
    \textbf{Stable systems}  & \textbf{Convergence condition} 
    \tabularnewline
    \hline 
    \[
        \begin{array}{c}
            \alpha:= \max\limits_i \mathrm{Re}(\lambda_i) < 0 \\ \\
            \includegraphics[width=0.38\columnwidth]{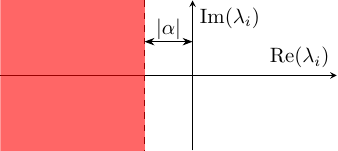}
        \end{array}
    \]
    & \[
        \begin{array}{c}
            R_P <1 \\
        \end{array}
        \]
        \[
            R_P:=  \frac{  \|F_{2}\|_P \|x(0)\|_P + \frac{\|F_0\|_P}{ \|x(0)\|_P}}{|\mu_P(F_{1})|} \]
        \[
        \begin{array}{c}\\
            P F_1 + F_1^\dag P < 0 \textrm{~for~any~$P>0$ } \\ 
            \textrm{(such $P$ is guaranteed to exist)}
        \end{array}
        \]
    \tabularnewline
    \hline 
    \hline 
    \textbf{Conservative systems} & \textbf{Convergence condition}
    \tabularnewline
    \hline 
    \[
        \begin{array}{c}
            \delta:= -\!\!\!\max\limits_{\substack{i\\\mathrm{Re}(\lambda_i)\neq 0}} \mathrm{Re}(\lambda_i )> 0\\ \\
            F_1\ket{\xi}=i\omega\ket{\xi} \quad\Rightarrow\quad \bra{\xi}F_2=0\\ \\
            \includegraphics[width=0.38\columnwidth]{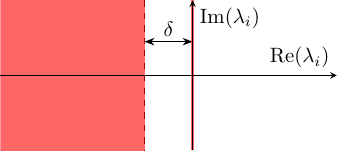}
        \end{array}
    \]
    & 
    \[
        \begin{array}{c} 
            F_1 \textrm{~diagonalized~by~}Q \\ \\
            \forall t:~\|Q^{-1}x(t)\| \leq \|\tilde{x}_{\mathrm{max}}\|\\ \\
            R_\delta <1 \\ 
        \end{array}
    \]
    \[
        R_\delta:=  \frac{2e\|Q^{-1}F_2Q^{\otimes 2}\| \sqrt{\|\tilde{x}_{\mathrm{max}}\|^2+\frac{\|Q^{-1} F_0\|}{\|Q^{-1}F_2Q^{\otimes 2}\|}}}{\delta(F_1)}
    \]
\tabularnewline
\hline 
\hline 
\textbf{Nonresonant systems} & \textbf{Convergence condition}
\tabularnewline
\hline 
    \[
        \begin{array}{c}
            \Delta \defeq \min\limits_i \!\!\! \inf\limits_{\substack{\alpha_j \in \myN_0 \\ \sum_{j=1}^N \alpha_j \ge 2}}  \frac{\abs{\lambda_i-\sum_{j=1}^N \alpha_j\lambda_j}}{\sum_{j=1}^N \alpha_j-1} > 0 \\ \\
            \includegraphics[width=0.38\columnwidth]{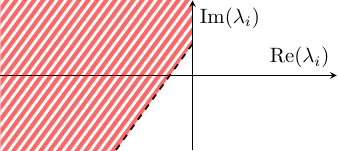}    
            \\
            \textrm{The~other~scenario~is~obtained~by}\\ 
            \textrm{reflecting~across~the~real~axis.}
        \end{array}
    \]
    & 
    \[
        \begin{array}{c} 
            F_0 = 0\\ \\
            F_1 \textrm{~diagonalized~by~}Q \\ \\
            Q^{-1}F_2Q^{\otimes 2} \textrm{~is~}s\textrm{-column-sparse} \\ \\
            \forall t:~\|Q^{-1}x(t)\| \leq \|\tilde{x}_{\mathrm{max}}\|\\ \\
            R_\Delta <1 \\ 
        \end{array}
    \]
    \[
        R_\Delta:=  \frac{8s \|Q^{-1}F_2Q^{\otimes 2}\| \|\tilde{x}_{\mathrm{max}}\|}{\Delta(F_1)}
    \]
\tabularnewline
\hline 
\end{tabular}
\end{adjustbox}
\caption{\textbf{A family of $R$-numbers.} Shown here are sufficient conditions under which the error incurred in solving the Carleman linear ODE problem truncated at power $k$ (see Eq.~\eqref{eq:linearODE}) rather than the original nonlinear system (see Eq.~\eqref{eq:nonlinear}) decreases exponentially fast with $k$. We use the following notation: $\lambda_i$ denote the eigenvalues of $F_1$, $\| \cdot \|_P$ indicates vector or matrix norms induced by the scalar product associated with $P$ (see Eq.~\eqref{eq:scalar_P}), and  $\mu_P(F_1)$ is a generalized log-norm of $F_1$ (see Eq.~\eqref{eq:Plognorm}), which is guaranteed to be negative as a consequence of the Lyapunov inequality. The solid red region indicates the allowed eigenvalues of $F_1$, whereas the striped red region indicates a  necessary condition for $F_1$ eigenvalues (i.e., eigenvalues must lie in this region, but not all choices of eigenvalues within that region are allowed by the condition $\Delta>0$).}
\label{tab:summaryofresults} 
\end{table}

Our work relies on both the existence of monotonic observables of the dynamics and whether the dynamics has resonances or not. We first extend purely dissipative dynamics, which has been the dominant focus of quantum Carleman algorithms to date, to the more general setting of \emph{stable} systems, where a range of Lyapunov observables exist (monotonically decreasing observables). We then extend our scope further to systems that have one or more \emph{conserved} quantities. In both cases, there is a key dependence on the spectral magnitudes of the linear term $F_1$.  Both cases also depend on the non-normality of $F_1$, which is a key property that has a non-trivial impact on the resulting quantum algorithms. However, there is another important property relevant to a nonlinear dynamical system near an equilibrium, namely whether the system is \emph{resonant} or \emph{nonresonant}. When a dynamical system is nonresonant, it is `topologically simple' and can in principle be smoothly deformed into a purely linear system. For such systems, the stability analysis of the full nonlinear system is simplified, as it reduces to linear stability analysis. In contrast, resonant systems cannot be smoothly deformed to a linear system and thus their stability analysis is much more involved. Given this overview, we now summarize our results. The detailed proofs can be found in Sections~\ref{sec:stable},~\ref{sec:conservative},~and~\ref{sec:nonresonant_systems}. 

\paragraph{Stable systems.}

Stable autonomous systems are all those for which the \emph{spectral abscissa} of $F_1$ satisfies
\begin{align}
    \label{eq:spectralabscissa}
    \alpha(F_1) := \max_i \mathrm{Re}(\lambda_i(F_1))<0,
\end{align}
where here and in what follows $\lambda_i(F_1)$ denotes the eigenvalues of $F_1$ (see Table~\ref{tab:summaryofresults}, first block). These are systems that are stable if we set $F_2 = 0$, and so, in this sense, they are linearly stable. Here we use the term `stable' to distinguish from some of the previous quantum algorithm literature, and use the term `dissipative' to indicate the restricted class of systems with negative log-norm~\cite{costa2023further}, i.e.,
\begin{align}
\label{eq:lognorm}
    \mu(F_1):= \frac{1}{2} \max_i \lambda_i\left( F_1+ F_1^\dag \right)<0,
\end{align} 
which is a stronger condition compared to \eqref{eq:spectralabscissa}, since $\mu(F_1) \geq \alpha(F_1)$.

Stable systems are the first class of nonlinear systems for which provably efficient quantum algorithms were introduced, based on a convergence proof for Carleman embedding~\cite{liu2021efficient}. However, as we discuss later there is non-trivial fine-print with previous analysis.
The available convergence results and a comparison with the contributions of this work are summarized in Table~\ref{tab:stableconvergence}.

\begin{table}
    \begin{adjustbox}{width=\textwidth,center}
    \begin{tabular}{|c|c|c|c|}
        \hline 
        \textbf{Class} & \textbf{Toy example} & \textbf{Convergence condition} & \textbf{Comments} 
        \tabularnewline
        \hline 
        \hline 
        $\begin{array}{c}
       \textrm{Negative log-norm} \\
    \textrm{No driving}
        \end{array}$ & $\dot{x} = - \mu x$ & $R_\mu <1$ \cite{forets2017explicit, costa2023further} & $\begin{array}{c} - \end{array}$
        \tabularnewline
        \hline 
       $\begin{array}{c}
       \textrm{Negative log-norm}\\
    F_1 \, \textrm{normal}
        \end{array}$ & $\dot{x} = nx^2 - \mu x + c $ & $R_\mu <1$ \cite{liu2021efficient} & Issue with proof
        \tabularnewline
        \hline 
       \textrm{Negative log-norm}  & $\dot{x} = nx^2 - \mu x + c $ & $R_\mu <1$ \cite{krovi2023improved} & Issue with proof
        \tabularnewline
        \hline 
        \textrm{Stable}  & $\ddot{x} = nx^2 -x - \mu \dot{x}$ &$\begin{array}{c}
        R_P < 1 \\ \textrm{(for any $P$ satisfying Eq.~\eqref{eq:lyapunovinequality})}
        \end{array}$  & $\begin{array}{c}
        \textrm{Corrects and/or extends}  \\
        \textrm{all previous cases}
        \end{array}$
            \tabularnewline
        \hline
    \end{tabular}
    \end{adjustbox}
    \caption{\textbf{Related works for stable systems.} The above table shows convergence conditions for stable systems, i.e, systems satisfying~\eqref{eq:spectralabscissa}. Stability ensures the existence of matrices $P>0$ such that the Lyapunov inequality~\eqref{eq:lyapunovinequality}  is satisfied. Note that for systems with negative log-norm, $P= I$ satisfies~\eqref{eq:lyapunovinequality}  and we have $R_P = R_\mu$. For $F_1$ diagonalizable by $Q$, $P=Q^\dag Q$ satisfies~\eqref{eq:lyapunovinequality} and, if we write the system in the basis where $F_1$ is diagonal, we have $R_P = R_\alpha$. $R_\alpha$, $R_\mu$ and $R_P$ are defined, respectively, in \eqref{eq:Ralpha}, \eqref{eq:Rmu}, \eqref{eq:RPfirst}.}
    \label{tab:stableconvergence}
\end{table}

In Ref.~\cite{liu2021efficient}, the authors claim that Eq.~\eqref{eq:nonlinear} is well-approximated by the linear Carleman ODE~\eqref{eq:linearODE} for stable systems if the following condition is satisfied:
    \begin{align}
        F_1 \textrm{ diagonalizable,}\quad  R_{\alpha} < 1 \quad\Rightarrow\quad \textrm{convergence},
\end{align}
where
\begin{equation}
    \label{eq:Ralpha}R_{\alpha}:= \frac{1}{-\alpha(F_1)} \left (  \|F_2\|\|x(0)\| + \frac{\|F_0\|}{ \|x(0)\|} \right ).
\end{equation}
However, as we discuss in more detail in Section~\ref{sec:stable}, the central proof of Ref.~\cite{liu2021efficient} implicitly assumes that $F_1$ is normal (which excludes systems where non-normality plays a key role in e.g. transient growth effects). 
For this special class of dynamics, $\alpha(F_1)=\mu(F_1)$, and so this restricts us to systems with negative log-norm (which, for example, excludes linear oscillator systems). Therefore, the condition $R_\alpha<1$ can be rewritten as $R_\mu<1$, where
\begin{align}
    \label{eq:Rmu}
        R_\mu:= \frac{1}{-\mu(F_1)} \left (  \|F_2\|\|x(0)\| + \frac{\|F_0\|}{ \|x(0)\|} \right ).
    \end{align} 
Introducing these extra assumptions, the convergence claim in Ref.~\cite{liu2021efficient} is restricted to
 \begin{align}
\label{eq:Carlemanconvergencelognormlimited}
     F_1 \textrm{ normal,}\quad \mu(F_1)<0, \quad R_\mu < 1 \quad\Rightarrow \quad\textrm{convergence}.
 \end{align}
However, in this restricted setting
we also flag an issue with the proof of Lemma~2 in Ref.~\cite{liu2021efficient} (see Appendix~\ref{app:issueliu} for details). Without the convergence proof, the efficiency claim of the quantum algorithm in Ref.~\cite{liu2021efficient} is upended.
    
Another proof of the convergence of Carleman embedding for stable systems with negative log-norm was given in Ref.~\cite{krovi2023improved}, in a form that extends Eq.~\eqref{eq:Carlemanconvergencelognormlimited} to
 \begin{align}
\label{eq:Carlemanconvergencelognorm}
     \mu(F_1) <0,\quad R_\mu < 1\quad \Rightarrow \quad\textrm{convergence}.
 \end{align}
Also here, we flag an issue with the convergence proof, which we describe in Appendix~\ref{app:issuekrovi}. We also show, by constructing an explicit counterexample, that the main technical claim upon which the convergence proofs in Refs.~\cite{liu2021efficient, krovi2023improved} are based does not hold, meaning that the issue cannot be trivially corrected. Our counterexample has $F_0 \neq 0$, and in fact a proof of convergence exists for stable systems with negative log-norm and no driving~\cite{costa2023further}. 

This leaves us only with the proofs from Refs.~\cite{forets2017explicit, costa2023further}, ensuring convergence for dissipative systems in the form\footnote{Note that in this work we focus on the case of quadratic nonlinearities, whereas Ref.~\cite{costa2023further} more broadly considers degree $M$ nonlinearity of the form $\dot{x} = F_1 x + F_M x^{\otimes M}$.}
 \begin{align}
    \label{eq:convergence_lognorm_noF0}
     F_0 =0,\quad \mu(F_1) <0,\quad R_\mu < 1 \quad\Rightarrow\quad \textrm{convergence}.
 \end{align}

In this work, we correct these previous results and build beyond them. Our main result is that for every second-order Lyapunov function $f_P(x)$ for the linearized dynamical system we have an associated quantity $R_P$ for which $R_P <1$ implies Carleman convergence of the full nonlinear dynamical system. In more detail, for every stable autonomous system, one can prove that there exist positive-definite matrices $P>0$\footnote{Recall that $P>0$ means that $P$ is Hermitian and its eigenvalues are positive.} satisfying the Lyapunov inequality~\cite{plischke2005transient}
 \begin{align}
\label{eq:lyapunovinequality}
    P F_1 + F_1^\dag P < 0.
\end{align}
Each matrix $P$ is associated to a Lyapunov function of the linear system $\dot{x} = F_1 x$, meaning that $f_P(x) = x^\dag P x$ is non-increasing along the trajectories of that linear system. We leverage such stability analysis of the linearized system to prove convergence of the Carleman embedding of the nonlinear system, for small enough nonlinearities. In particular, we prove that every such $P$ provides a sufficient condition for convergence:
 \begin{align}
    R_P < 1\quad\Rightarrow\quad \textrm{convergence},
 \end{align}
where
\begin{align}
\label{eq:RPfirst}
     R_P :=  \frac{1}{-\mu_P(F_{1})} \left (  \|F_{2}\|_P \|x(0)\|_P + \frac{\|F_0\|_P}{ \|x(0)\|_P} \right ),
\end{align}
and the subscript $P$ for norms and the log-norm means that the Euclidean inner product in their definitions is replaced by $\langle x, y \rangle_P := x^\dag P y$, see Section~\ref{sec:stable} for details. For systems with negative log-norm, $\mu(F_1)<0$, we have that $P=I$ satisfies Eq.~\eqref{eq:lyapunovinequality} and $R_{P=I} = R_\mu$, so we recover Eq.~\eqref{eq:Carlemanconvergencelognorm} as a sufficient condition, while circumventing the issue with the proof presented in Ref.~\cite{krovi2023improved}. For $F_1$ diagonalizable via $Q$, $P= Q^\dag Q$ satisfies Eq.~\eqref{eq:lyapunovinequality} and, writing all operators in the basis where $F_1$ is diagonal, $R_{P= Q^\dag Q} = R_\alpha$. This essentially recovers the claim from Eq.~\eqref{eq:Ralpha}, while circumventing the issue with the proof in Ref.~\cite{liu2021efficient}. Crucially, beyond these two special cases, we prove that every~$P$ satisfying Eq.~\eqref{eq:lyapunovinequality} gives a sufficient convergence condition $R_P<1$, extending the set of stable systems for which Carleman linearization converges.

For a simple example, consider a single non-linear damped `oscillator',
\begin{align}
\label{eq:simpleoscillator}
    \ddot{q} = -q -r \dot{q} - n q^2,  \quad r>0,
\end{align}
which we can rewrite in standard form from Eq.~\eqref{eq:nonlinear} by introducing $v = \dot{q}$ and setting $x = (q, \dot{q})$. This system is linearly stable, $\alpha(F_1)<0$, but its logarithmic norm is zero, $\mu(F_1)=0$. Hence, the sufficient condition result from Eq.~\eqref{eq:convergence_lognorm_noF0} does not guarantee convergence for any nontrivial choice of parameters, see the left panel of Fig.~\ref{fig:oscillatorexample}. Our general result $R_P<1$ returns the sufficient condition $R_{\alpha} <1$ (if we write the equation in the basis where $F_1$ is diagonal), see the center panel of Fig.~\ref{fig:oscillatorexample}. Finally, more broadly, our convergence result gives an infinite number of sufficient conditions $R_P<1$, ensuring convergence at least for all parameters shown in right panel of Fig.~\ref{fig:oscillatorexample}.

\begin{figure}[t]
    \centering
    \includegraphics[width=0.3\linewidth]{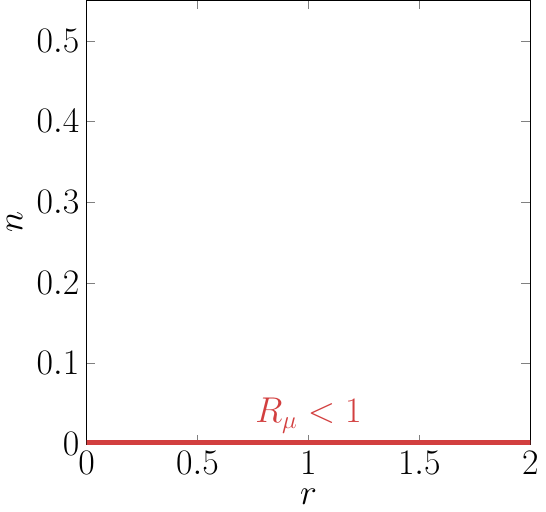}\hspace{0.5cm}
    \includegraphics[width=0.3\linewidth]{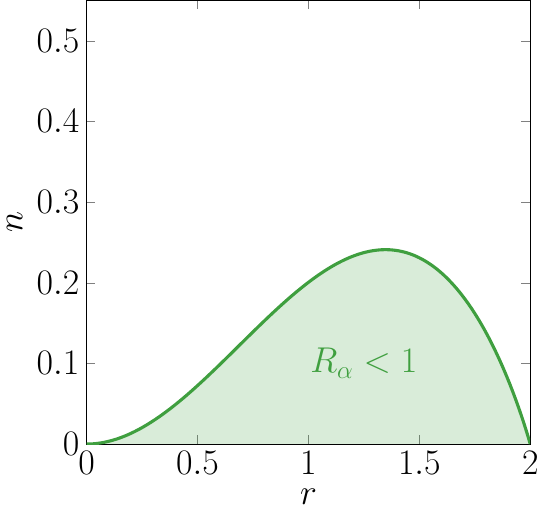}\hspace{0.5cm}
    \includegraphics[width=0.3\linewidth]{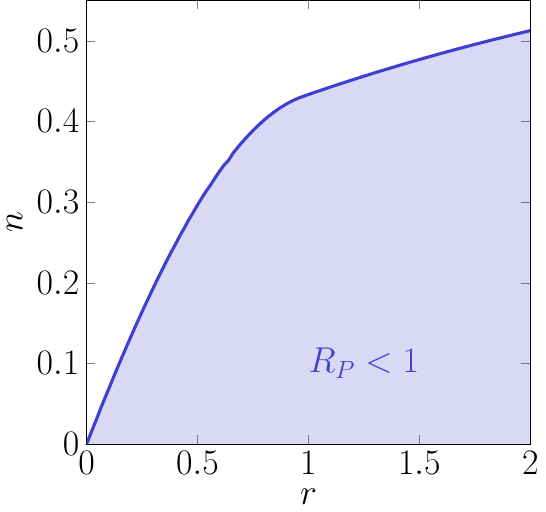}
    \caption{\textbf{A continuum of $R_P<1$ conditions.} Shown here is the region of guaranteed Carleman convergence for the toy nonlinear oscillator problem in Eq.~\eqref{eq:simpleoscillator}, with $q(0) = v(0) = 1/2$, as a function of the dissipation parameter $r \geq 0$ and the nonlinear parameter $n \geq 0$, for the intermediate damping regime $r\leq 2$. Left: convergence regions with the previously known~\cite{forets2017explicit, costa2023further} sufficient condition $R_\mu <1$ (no convergence guarantees for any $n>0$, since $R_\mu = \infty$). Middle: convergence region with our corrected version of Ref.~\cite{liu2021efficient}, $R_\alpha <1$. Right: convergence region using a numerically optimized $R$-number sufficient conditions $\{R_P<1\}$ from Theorem~\ref{thm:Carlemanstable}. 
    }
    \label{fig:oscillatorexample}
\end{figure}


\paragraph{Conservative systems.}

    Although Carleman convergence results for stable systems open a path for developing quantum algorithms for simulating nonlinear dynamics, many physically relevant systems are not captured by this formalism. The reason is that systems with conserved quantities are \emph{marginally stable}, i.e., 
    \begin{align}
        \alpha(F_1)=0,
    \end{align}
    which also implies $\mu(F_1)\geq 0$. So for such systems the existing results -- including our extended results -- do not guarantee convergence, even in the presence of arbitrarily weak nonlinearities. We remedy this situation by proving non-trivial Carleman convergence conditions for systems with general polynomial conserved quantities~$\Q(x)$. These are polynomial functions $\Q$ of the system state $x$ satisfying $d\Q/dt=0$, examples of which include conserved charge or fluid density (when $x$ describes the distribution of charge or fluid), and conserved linear momentum (when $x$ describes such distributions in phase space). As we explain in Section~\ref{sec:conservative}, the existence of such conserved quantities at the linear level implies the existence of zero eigenvalues of~$F_1$. We then focus on systems for which all zero eigenvalues of~$F_1$ arise from conservation laws (formally that means that whenever $F_1\xi=0$ we have $\xi^\dagger F_2=0$, see Section~\ref{sec:conservative_setting} for details), and the remaining eigenvalues have negative real parts with absolute values upper bounded by $\delta(F_1)$ (see Table~\ref{tab:summaryofresults}, second row). In such cases, we prove that the nonlinear system from Eq.~\eqref{eq:nonlinear} is well-approximated by the linear Carleman ODE equation~\eqref{eq:linearODE} under the following conditions:
        \begin{equation}
            F_1\text{~diagonalized~by~} Q,\quad R_\delta<1 \quad\Rightarrow\quad \textrm{convergence},
        \end{equation}
        where
        \begin{equation}
            R_\delta:=  \frac{2e\|Q^{-1}F_2Q^{\otimes 2}\| }{\delta(F_1)}\sqrt{\|\tilde{x}_{\mathrm{max}}\|^2+\frac{\|Q^{-1} F_0\|}{\|Q^{-1}F_2Q^{\otimes 2}\|}},
        \end{equation}
    and $\|\tilde{x}_{\mx}\|$ is an upper bound on the solution norm in the eigenbasis of $F_1$ that holds for all $t \ge 0$:
    \begin{equation}
        \forall t \ge 0: \|Q^{-1} x(t)\|\leq \|\tilde{x}_{\mx}\|.
    \end{equation}
     One can interpret the condition $R_\delta<1$ as the requirement for the relative strength of nonlinearity to dissipation of the non-conserved degrees of freedom to be weak enough. Finally, we prove that any higher order polynomial observable that is conserved by the dynamics can be reformulated as a conserved linear observable for an embedding higher dimensional system. This implies the analysis and convergence criterion for linear systems can be applied to the more general polynomial case.
     
     While we are not aware of any previous results directly dealing with Carleman convergence for marginally stable systems that we could compare our result with, in certain cases and after some manipulations it is possible to use the stable convergence results for systems with linear conserved quantities. More precisely, assuming we know the form of the conserved quantities $\Q$, it is possible to reduce the system dimension by removing the number of degrees of freedom equal to the number of conserved quantities. Such dimensional reduction process, the example of which we present in Section~\ref{sec:cons_discussion}, modifies matrices $F_0$, $F_1$ and $F_2$. The modified system may become stable and then one can use the condition $R_P<1$ to prove Carleman convergence. The crucial point we want to emphasize is that neither $R_P<1$ for such a reduced system implies $R_\delta<1$ for the original system, nor the other way round. In this sense, our conservative convergence results and the modification of dissipative convergence results are \emph{complementary criteria} for systems with linear conserved quantities. This can be illustrated for a two-dimensional toy problem (see Section~\ref{sec:cons_discussion} for details) described by
     \begin{equation}
     \label{eq:toy_cons}
         \dot{x}_1(t) = 0,\qquad \dot{x}_2(t) = -x_2(t) - b x_2(t)^2 + a x_1(t)^2,
     \end{equation}
     where $x_1(t)$ is a conserved quantity, and the reduced problem is the evolution of $x_2(t)$ with $x_1(t)$ replaced by a constant $x_1(0)$. In Fig.~\ref{fig:R_regions}, we present convergence regions for which the considered $R$-numbers are smaller than 1 for an exemplary choice of parameters, illustrating the complementarity of the results.
     
    \begin{figure}[t]
        \centering
        \includegraphics[width=0.35\linewidth]{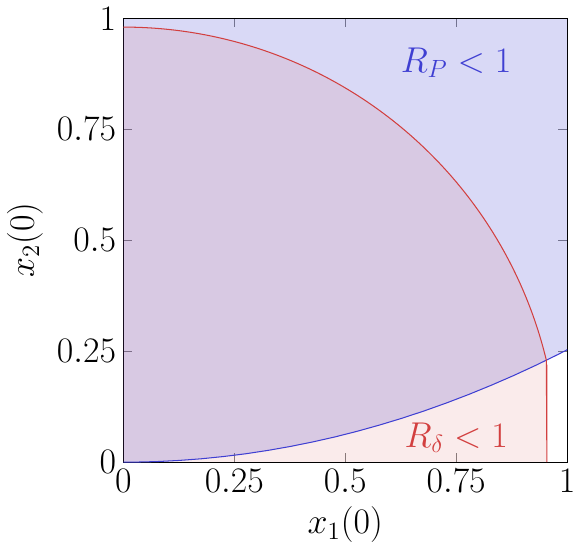}\hspace{1cm}
        \includegraphics[width=0.35\linewidth]{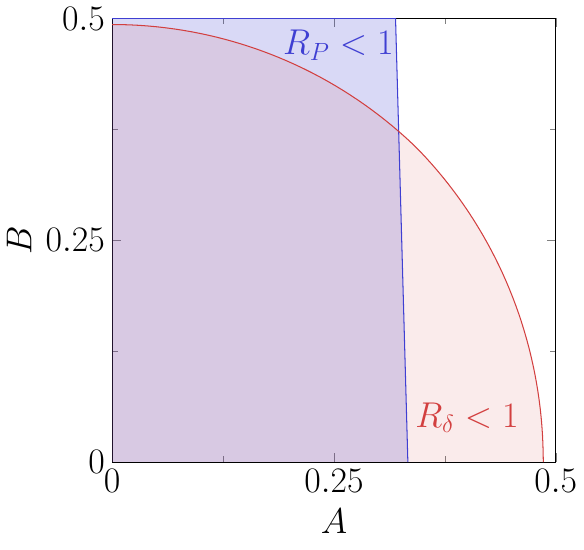}
        \caption{\textbf{Inequivalence of stability and conservative criteria.} Shown here are regions of guaranteed Carleman convergence via the dissipative result for the reduced system, $R_P<1$, and conservative result for the original system, $R_\delta<1$, for the toy nonlinear problem from Eq.~\eqref{eq:toy_cons}. Left: Convergence regions as a function of initial conditions $x_1(0)$ and $x_2(0)$ for parameters $a=1/5$, $b=1/20$. Right: Convergence regions as a function of parameters $a$ and $b$ for initial conditions $x_1(0)=1/2$, $x_2(0)=1/12$.}
        \label{fig:R_regions}
    \end{figure}

     Stability analysis for non-autonomous systems is more complex, and so Carleman convergence analysis is in turn affected. In Section~\ref{sec:cons_discussion}, we show how our results on Carleman convergence for systems with linear conserved quantities can be generalized to include systems with linear oscillating quantities, i.e., quantities $\Q$ that evolve according to $\Q(t)=\Q(0)e^{\i \omega t}$ (with $\omega=0$ reproducing the conservative result). This then extends the allowed set of eigenvalues of $F_1$ not only to zero, but to all purely imaginary numbers $i\omega$, as long as whenever $F_1\xi =i\omega \xi$, we have $\xi^\dagger F_2=0$ (which is a formal definition of a linear oscillating quantity described by $\xi$). Using this together with Fourier methods we prove Carleman convergence for \emph{non-autonomous} nonlinear systems that have time-dependent linear matrix $F_1(t)$ and driving $F_0(t)$.


\paragraph{Nonresonant systems.}

Most prior work on quantum algorithms for nonlinear dynamics has focused on dissipative systems, where all eigenvalues of $F_1$ have negative real parts. While such systems are prevalent across science and engineering, they fail to capture the oscillatory behavior intrinsic to many physical models. These dynamics -- often associated with purely imaginary spectra -- lie beyond the scope of results such as Ref.~\cite{liu2021efficient} that rely on dissipation. Recent work~\cite{wu2024quantum} attempted to address the challenge of simulating non-dissipative, nonlinear dynamics on a quantum computer. The work of Ref.~\cite{wu2024quantum} highlighted the importance of the spectrum of $F_1$, and then provided a convergence error analysis for non-dissipative systems to obtain a convergence criterion. However, as we describe later, this analysis contains errors in key parts, and the obstacles involved are non-trivial to circumvent\footnote{The work~\cite{wu2024quantum} seeks to establish the convergence of Carleman embedding for nonresonant systems. However, the proof contains fundamental errors that resist straightforward correction. 
For instance, Corollary 2.10.1 is incorrect -- even one-dimensional counterexamples exist, as demonstrated in Appendix~\ref{app:issuenonresonant}. Therefore, the subsequent arguments that rely on it, such as Lemma 2.11, are also invalid. In addition, the overall proof strategy overlooks a critical structural aspect: it does not account for the inverse of the linear operator used to diagonalize the Carleman matrix, which we show to be essential for a correct convergence argument. See Appendix~\ref{app:issuenonresonant} for further details.}. Therefore, the question of efficient quantum algorithms for non-dissipative systems is re-opened as a problem. 

To address this gap and broaden the applicability of Carleman convergence, we study nonresonant systems whose spectra are not confined to the dissipative regime, thereby encompassing models that exhibit persistent oscillations driven by purely imaginary eigenvalues. Informally, a system is said to be nonresonant if no eigenvalue of $F_1$ equals a nontrivial integer combination of the others. Under reasonable assumptions, we establish the convergence of Carleman linearization for such systems, significantly extending the reach of quantum algorithms to a broader class of nonlinear dynamical systems -- including those characterized by sustained, non-decaying motion, such as systems governed by nonlinear Schr\"{o}dinger equations.

Formally, suppose $F_1=Q \Lambda Q^{-1}$, where $\Lambda=\diag{\lambda_1, \lambda_2, \dots, \lambda_N}$. We say that the system is \emph{nonresonant} if for all $i \in [N]$, 
\begin{align}
\lambda_i \neq \sum_{j=1}^N \alpha_j \lambda_j,~~~\forall \alpha_j \in \myN_0,~~~\sum_{j=1}^N \alpha_j \ge 2.    
\end{align}
Otherwise, the system is said to be \emph{resonant}. 

We analyze in detail the explicit diagonalization of Carleman matrices, and derive upper bounds on the approximation errors of truncated Carleman linearizations for nonresonant systems. These bounds depend critically on  the geometric arrangement of the eigenvalues of $F_1$. Specifically, the spectrum $(\lambda_1, \lambda_2, \dots, \lambda_N)$ is said to lie in
the \emph{Poincar\'{e} domain} if the origin of the complex plane is not contained in the convex hull of $\lambda_1, \lambda_2, \dots, \lambda_N$; otherwise, it lies in the \emph{Siegel domain}. Interestingly, under the condition $\realpart{\lambda_j}\le 0$ for each $j$, the spectrum $(\lambda_1, \lambda_2, \dots, \lambda_N)$ belongs to the Poincar\'{e} domain if and only if the eigenvalues do not simultaneously span both the positive and negative segments of the imaginary axis, as illustrated in Table~\ref{tab:summaryofresults}.

When the spectrum of $F_1$ is nonresonant and lies in the Poincar\'{e} domain, we define a positive, normalized no-resonance gap $\Delta(F_1)$ as:
\begin{align}
\Delta(F_1)\defeq \min_{i \in [N]} \inf_{\substack{\alpha_j \in \myN_0 \\ \sum_{j=1}^N \alpha_j \ge 2}} \frac{\abs{\lambda_i-\sum_{j=1}^N \alpha_j\lambda_j}}{\sum_{j=1}^N \alpha_j-1} > 0. 
\label{eq:def_noramlized_no_resonance_gap_intro}
\end{align}
This gap is strictly positive only in the Poincar\'{e} domain and plays a central role in our error analysis. In this setting, we prove that the truncation error of the Carleman scheme converges to zero as the truncation order increases, provided the following condition holds:   
\begin{align}
F_0=0, \quad F_1\textrm{~diagonalized~by~} Q,\quad R_\Delta<1 \quad\Rightarrow\quad \textrm{convergence},    
\end{align}
where
\begin{align}
    R_\Delta:=  \frac{8s \|Q^{-1}F_2Q^{\otimes 2}\| \|\tilde{x}_{\mathrm{max}}\|}{\Delta(F_1)}
\end{align}
with $s$ denoting the column sparsity of $Q^{-1}F_2Q^{\otimes 2}$ and $\norm{\tilde{x}_{\rm max}}$ being an upper bound on the norm of $\tilde{x}(t) \defeq Q^{-1}x(t)$ for all $t \ge 0$:
\begin{align}
    \forall t \ge 0: \norm{Q^{-1} x(t)}\leq \norm{\tilde{x}_{\mx}}.    
\end{align}

In contrast, when the spectrum of $F_1$ is nonresonant but lies in the Siegel domain, the normalized no-resonance gap defined in Eq.~\eqref{eq:def_noramlized_no_resonance_gap_intro} becomes zero. In this case, an alternative definition of $\Delta(F_1)$ is required. Based on this revised definition, we can still  derive an upper bound on the truncation error of the Carleman embedding in terms of $\Delta(F_1)$ and other system parameters.
However, unlike in the case where the spectrum lies in the  Poincar\'{e} domain, this bound does not necessarily vanish as the truncation order increases. Nevertheless, if the quadratic term $F_2$ satisfies additional structural conditions or becomes time-dependent, convergence may still be recovered. A complete resolution of the convergence behavior of the Carleman scheme for nonresonant systems in the Siegel domain remains an open problem for future investigation.

 
\subsubsection{Quantum algorithms for nonlinear systems}

Building upon the error analysis of the Carleman scheme for stable, conservative, and nonresonant systems, we develop quantum algorithms for preparing quantum states that encode the solutions of these differential equations in Section~\ref{sec:quantum_algorithms_nonlinear_systems}. These algorithms are constructed by applying the linear ODE solvers of Ref.~\cite{jennings2023cost, berry2024quantum} to the Carleman-embedded ODE system, and, for dissipative and conservative systems, by incorporating a newly introduced quantum Lyapunov transformation.

While the Carleman embedding provides a pathway to linearize the dynamics, realizing efficient quantum algorithms requires addressing several technical challenges. In particular, Ref.~\cite{jennings2023cost} requires \emph{a priori} bounds on the norms of the exponentials of the Carleman matrices, which are essential for controlling the condition numbers of the linear systems that embed the Carleman ODEs. In Section~\ref{subsec:quantum_algorithm_nonresonant_systems}, we establish such bounds for nonresonant systems
 by drawing on the new structural insights into the diagonalization of the Carleman matrices developed in Section \ref{subsec:carleman_matrix_diagonalization}. Moreover, care must be taken to ensure that the quantum state output by the algorithm accurately reflects the solution of the original nonlinear system, which involves a careful rescaling step. The efficiency of these algorithms depends critically on the condition number of the matrix $Q$ that diagonalizes the linear term $F_1$, or put another way: on the non-normality of $F_1$. Under mild assumptions, the runtime scales polynomially in the evolution time $T$ and inverse accuracy $1/\epsilon$ when $Q$ has condition number $\myO{1}$. This condition is satisfied, for instance, when $F_1$ is a normal matrix, in which case $Q$ is unitary. Consequently, we obtain efficient quantum algorithms for preparing solution states of a broad class of nonlinear Schr\"{o}dinger equations. When $F_1$ is far from normal -- that is, when $Q$ has a large condition number -- our algorithms may exhibit unfavorable scaling in $T$ and $1/\epsilon$. 

Even in the favorable case where $Q$ has condition number $\myO{1}$, these algorithms may still incur a polynomial scaling in $T/\epsilon$. For dissipative systems, in Section~\ref{subsec:quantum_algorithm_stable_conservative_systems}, we address this limitation by introducing a quantum Lyapunov transform. This requires access to an oracle block-encoding the Lyapunov matrix $P$ introduced earlier, and performs a change of coordinates in which the system becomes purely dissipative and the Carleman matrix remains well-conditioned. After a filtering stage, where the $j=1$ component of the Carleman vector is isolated, the problem is mapped back to the original coordinates. For history-state preparation, this transformation reduces the polynomial dependence on $T/\epsilon$ in the complexity to $\log(T/\epsilon)$ when $\kappa_{Q}=\myO{1}$ (where $P = Q^\dag Q$). However, even this approach does not entirely eliminate sensitivity to the condition number of $Q$, and hence the non-normality of $F_1$. Whether this dependence on $\kappa_Q$ is intrinsic to the problem or merely an artifact of our approach remains an open question, which we leave for future investigation. For conservative systems, much of the same holds if we take $P = Q^\dag Q$ with $Q$ being the matrix diagonalizing $F_1$ (assuming $F_1$ is diagonalizable).

These developments collectively enable a rigorous quantum treatment of nonlinear dynamics and lay the groundwork for our BQP-completeness results in Section~\ref{sec:exponential_quantum_advantage}.


\subsubsection{Exponential quantum advantage for the simulation of nonlinear systems}

Beyond convergence guarantees and algorithmic design, it is crucial to understand the computational complexity of simulating nonlinear dynamical systems -- particularly in identifying regimes where exponential quantum advantage can be rigorously established. To this end, we first analyze the complexity of a class of nonlinear oscillator problems that fit within the formalism we have developed. These dynamics can be embedded into a nonlinear Schr\"{o}dinger equation and solved efficiently using the algorithm described above. Our result generalizes those of Refs.~\cite{babbush2023exponential} and~\cite{krovi2024quantum}, which consider only linear oscillators, and it inherits BQP-completeness in the regime of sufficiently weak nonlinearities.

We further show that BQP-hardness persists even for strictly non-zero nonlinearities. By analyzing small nonlinear perturbations of the BQP-hard linear system from Ref.~\cite{babbush2023exponential}, we demonstrate that the solution remains sufficiently close to preserve computational hardness. This establishes BQP-completeness for a class of inherently nonlinear problems, which can still be efficiently solved using our Carleman-based approach in combination with existing quantum algorithms for linear differential equations. These results suggest that quantum computers can provide exponential quantum advantage that extends into the regime of nonlinear classical dynamics. It remains an open question as to whether one can find such advantage in high-impact problems of interest.


\subsubsection{General structure of nonresonant Carleman matrices}
\label{subsubsec:general_structure_carleman_matrix}
As part of our Carleman error analysis, we derive a range of results on the structure of the similarity transformation that diagonalizes the associated Carleman matrices, which may be of independent interest. In particular, we constructively prove that when the system is nonresonant and undriven, the matrix $A$ arising from the Carleman embedding is diagonalizable via a similarity transformation. That is, there exists an invertible matrix $V$ such that $A = V D V^{-1}$, where $D$ is diagonal with entries given by integer combinations of the eigenvalues of $F_1$. The matrices $V$ and $V^{-1}$ are upper triangular and can be partitioned into blocks in a way that respects the Carleman embedding structure: $V = (V_{i,j})_{1 \le i \le j \le k}$, $V^{-1} = ((V^{-1})_{i,j})_{1 \le i \le j \le k}$, where $V_{i,j}, (V^{-1})_{i,j} \in \mathbb{C}^{N^i \times N^j}$. 

We propose a method for explicitly constructing these blocks. Specifically, each $V_{i,j}$ is expressed as a sum of linear operators, each associated with a distinct binary forest consisting of $i$ trees and a total of $j$ leaves. These forests capture the combinatorial structure of the Carleman embedding and determine how contributions to $V_{i,j}$ are assembled. A similar representation applies to the blocks $(V^{-1})_{i,j}$ as well. As an example, in the case $(i,j)=(2,4)$ we have that
\begingroup
\setlength{\jot}{12pt} 
\tikzset{
	level distance=16pt,
	every tree node/.style = {align=center, anchor=north,nodecolor},
	edge from parent/.style = {draw,thick,edgecolor},
	edge from parent path = {
		([yshift=4.5pt]\tikzparentnode.south) -- 
		([yshift=-3.5pt]\tikzchildnode.north)
	}
}
\begin{align}
    \lrb{V^{-1}}_{2,4}=&
	g\left(
	\raisebox{-0.6cm}{
		\begin{tikzpicture}[scale=1]
			\Tree [.$\bt$ 
			[.$\bt$ [.$\bt$ ] [.$\bt$ ] ] 
			[.$\bt$ ] 
			];		
			\node at (0,0) {$a$};
			\node at (-0.6,-0.75) {$b$};
			\node at (0.6,-0.75) {$c$};
			\node at (-0.9,-1.35) {$d$};
			\node at (0.15,-1.35) {$e$};
		\end{tikzpicture}
		\hspace{\treeTree}
		\begin{tikzpicture}[scale=1]
			\Tree [.$\bt$  ];
			\node at (0,0.05) {$f$};
			\node at (0,-0.2) {\phantom{$f$}};
		\end{tikzpicture}
	}
	\right)
	+
	g\left(
	\raisebox{-0.6cm}{
		\begin{tikzpicture}[scale=1]
			\Tree [.$\bt$ 
			[.$\bt$ ] 
			[.$\bt$ [.$\bt$ ] [.$\bt$ ] ] 
			];
			\node at (0,0) {$a$};
			\node at (-0.6,-0.75) {$b$};
			\node at (0.6,-0.75) {$c$};
			\node at (-0.15,-1.35) {$d$};
			\node at (0.9,-1.35) {$e$};
		\end{tikzpicture}
		\hspace{\treeTree}
		\begin{tikzpicture}[scale=1]
			\Tree [.$\bt$  ];
			\node at (0,0.05) {$f$};
			\node at (0,-0.2) {\phantom{$f$}};
		\end{tikzpicture}
	}
	\right)
	+
	g\left(
	\raisebox{-0.6cm}{
		\begin{tikzpicture}[scale=1]
			\Tree [.$\bt$  ];
			\node at (0,0.05) {$d$};
			\node at (0,-0.2) {\phantom{$f$}};
		\end{tikzpicture}
		\hspace{\treeTree}
		\begin{tikzpicture}[scale=1]
			\Tree [.$\bt$ 
			[.$\bt$ [.$\bt$ ] [.$\bt$ ] ] 
			[.$\bt$ ] ];
			\node at (0,0) {$a$};
			\node at (-0.6,-0.75) {$b$};
			\node at (0.6,-0.75) {$c$};
			\node at (-0.9,-1.35) {$e$};
			\node at (0.15,-1.35) {$f$};
		\end{tikzpicture}
	}
	\right)
	\nonumber
	\\
	&
	+
	g\left(
	\raisebox{-0.6cm}{
		\begin{tikzpicture}[scale=1]
			\Tree [.$\bt$  ];
			\node at (0,0.05) {$d$};
			\node at (0,-0.2) {\phantom{$f$}};
		\end{tikzpicture}
		\hspace{\treeTree}
		\begin{tikzpicture}[scale=1]
			\Tree [.$\bt$ 
			[.$\bt$ ] 
			[.$\bt$ [.$\bt$ ] [.$\bt$ ] ] 
			];
			\node at (0,0) {$a$};
			\node at (-0.6,-0.75) {$b$};
			\node at (0.6,-0.75) {$c$};
			\node at (-0.15,-1.35) {$e$};
			\node at (0.9,-1.35) {$f$};
		\end{tikzpicture}
	}
	\right)
	+
	g\left(
	\raisebox{-0.35cm}{
		\begin{tikzpicture}[scale=1]
			\Tree [.$\bt$ [.$\bt$ ] [.$\bt$ ] ];
			\node at (0,0.05) {$a$};
			\node at (-0.5,-0.75) {$c$};
			\node at (0.5,-0.75) {$d$};
		\end{tikzpicture}
		\hspace{\treeTree}
		\begin{tikzpicture}[scale=1]
			\Tree [.$\bullet$ [.$\bt$ ] [.$\bt$ ] ];
			\node at (0,0.05) {$b$};
			\node at (-0.5,-0.75) {$e$};
			\node at (0.5,-0.75) {$f$};
		\end{tikzpicture}
	}
	\right),
\end{align}
\endgroup
where $g$ is a function that encodes both the linear $F_1$ and nonlinear $F_2$ terms in the defining ODE. This binary forest formalism enables us to derive upper bounds on the norms of $V_{i,j}$ and $(V^{-1})_{i,j}$, which play a crucial role in bounding the truncation error of Carleman scheme in the nonresonant regime, as well as the complexities of the quantum algorithms in Section \ref{subsec:quantum_algorithm_nonresonant_systems}. Furthermore, these results offer an alternative perspective on the Poincar\'{e}-Dulac normal forms of differential equations and may find applications beyond the present work.

\newpage

\section{Background material and notation }
\label{sec:carlemanlinearizationscheme}


\subsection{The rescaling degree of freedom}
\label{sec:rescaling}

A key aspect of nonlinear equations that does not arise for linear equations is that terms transform nontrivially under the change of units. In particular, we may multiply both sides of Eq.~\eqref{eq:nonlinear} by a positive constant $\gamma >0$ to obtain
\begin{equation}
    \gamma \dot{x}(t) = \gamma F_0 + F_1 \gamma x(t)+ \frac{F_2}{\gamma} \left (\gamma x(t) \right )^{\otimes 2}.
\end{equation}
Thus, the considered system from Eq.~\eqref{eq:nonlinear} is physically equivalent to the following ODE system:
\begin{align}
\label{eq:gammarescaledsystem} 
\dot{\bar{x}}(t) &= \bar{F}_0+ \bar{F}_1 \bar{x}(t)+ \bar{F}_2 \bar{x}(t)^{\otimes 2},
\end{align}
if we have the relations
\begin{align}
    \bar{F}_0  = \gamma F_0, \qquad
    \bar{F}_1 = F_1, \qquad
    \bar{F}_2 =  \frac{1}{\gamma}F_2, \qquad
    \bar{x}(t) = \gamma x(t).
    \label{eq:rescalingrelations}
\end{align}
At this stage, $\gamma$ is a free parameter that does not change the physics or the exactness of the nonlinear ODE. However, when we come to approximating this nonlinear system, the choice of $\gamma$ does make a difference. From now on we shall drop the bars, but keep in mind that all the relevant quantities are defined modulo the transformation in Eq.~\eqref{eq:rescalingrelations}.


\subsection{Infinite and truncated Carleman systems}\label{sec:Core-Carleman}

In this work, we make use of the Carleman linear representation scheme. The Carleman linearization scheme is a classical method for transforming a finite-dimensional nonlinear dynamical system into an infinite-dimensional linear system. It operates by expanding the nonlinear dynamics in terms of monomials of the system's variables and then expressing the evolution of these monomials as a linear system of ordinary differential equations. Although the resulting system is infinite, truncating the expansion at a finite order yields a high-dimensional linear ODE that approximates the original nonlinear dynamics. This representation provides a bridge between nonlinear and linear dynamics, making it especially well-suited for quantum algorithms that rely on linear evolution.

Let $\gamma >0$ be fixed, but arbitrary. The linear representation of the nonlinear dynamics is obtained by considering tensor powers $x^{\ot j}$ as a basis of linearly transforming variables. For any integer $j=2,3\dots \infty$, we have that
\begin{equation}
    \frac{d}{dt}{x}^{\ot j} = \dot{x} \otimes x^{\otimes (j-1)} + x\otimes \dot{x}\otimes x^{\otimes (j-2)} + \cdots +x^{\otimes (j-1)}\otimes \dot{x}.
\end{equation}
However, to simplify the presentation, we will use the following ``shift" notation, defined for any matrices $X,Z$ of the same output dimension by
\begin{equation}
   Z\otimes X^{\otimes (j-1)} + \mbox{ shifts } :=  Z\otimes X^{\otimes (j-1)} + X \otimes Z \otimes X^{\otimes (j-2)} + X^{\otimes 2} \otimes Z \otimes X^{\otimes (j-3)} + \cdots + X^{\otimes (j-1)} \otimes Z,
\end{equation}
for any integer $j\ge 2$. In other words, ``$\mbox{shifts}$" denotes other terms that are obtained by shifting the system the $Z$ matrix acts on. We now have that
\begin{equation}
\begin{aligned}
    \frac{d}{dt}{x}^{\ot j} &= \dot{x} \otimes x^{\otimes (j-1)} + \mbox{shifts}  \\
    &= (F_0+ F_1 x + F_2 x^{\otimes 2} ) \otimes x^{\otimes (j-1)} + \mbox{shifts}  \\
        &= (F_0 \otimes I^{\otimes (j-1)}) x^{\otimes (j-1)} + (F_1 \otimes I^{\otimes (j-1)} )x^{\otimes j} + (F_2 \otimes I^{\otimes (j-1)}) x^{\otimes (j+1)} + \mbox{shifts}  \\
        &= A_{j,j-1} x^{\otimes (j-1)} + A_{j,j}x^{\otimes j}+ A_{j,j+1} x^{\otimes (j+1)},
\end{aligned}
\end{equation}
where $I$ is the $N$-dimensional identity operator and where we introduced the following $A_{i,j}$ matrices:
\begin{subequations}  
\begin{align}
    A_{j,j-1}  &:= F_0\otimes I^{\otimes (j-1)} + \mbox{shifts}, \label{eq:A_F0} \\
    A_{j,j}  &:= F_1 \otimes I^{\otimes (j-1)}  + \mbox{shifts},  \label{eq:A_F1}\\
    A_{j,j+1}  &:=F_2\otimes I^{\otimes (j-1)}+ \mbox{shifts}.\label{eq:A_F2}
\end{align}
\end{subequations}
Note that we have $N^j$-dimensional vector spaces and maps
\begin{align}
    A_{j,j-1}: \mathbb{C}^{N^{j-1}} \rightarrow \mathbb{C}^{N^{j}}, \qquad A_{j,j}: \mathbb{C}^{N^{j}} \rightarrow \mathbb{C}^{N^{j}}, \qquad  A_{j,j+1}: \mathbb{C}^{N^{j+1}} \rightarrow \mathbb{C}^{N^{j}}, \label{eq:Amatricesdomains} 
\end{align}
where $\mathbb{C}^{N^{j}}$ denotes the $j$-copy vector space. 

We thus obtain an infinite set of coupled ODEs, which is then truncated at some level $j=k$ by setting $A_{k,k+1} =0$ to enforce closure. Also, note the other boundary case of $j=1$ requires us to interpret $x^{\ot 0} =1$. We then end up with the following two ODE systems.
\bigskip

\emph{Infinite Carleman system}: $\forall j\in \mathbb{N}$,
\begin{equation}
\begin{aligned}
    \frac{d}{dt} x^{\ot j} &= A_{j,j-1} x^{\ot (j-1)} + A_{j,j} x^{\ot j} + A_{j,j+1} x^{\ot (j+1)},  \\
    x^{\ot j} (0) &= x(0)^{\otimes j},\\
    x^{\ot 0}(t) &:=1.
    \label{eq:infiniteCarleman}
\end{aligned}
\end{equation}

\bigskip

\emph{Truncated Carleman system}:  \mbox{$j\in\{1,2,\dots, k-1\}$}, 
\begin{equation}
\begin{aligned}
    \frac{d}{dt}y^{[j]} &= A_{j,j-1} y^{[j-1]} + A_{j,j}y^{[j]} + A_{j,j+1} y^{[j+1]}, \\
    \frac{d}{dt} y^{[k]} &= A_{k,k-1} y^{[k-1]} + A_{k,k}y^{[k]},  \\
    y^{[j]} (0) &= x(0)^{\otimes j},\quad y^{[k]} (0) = x(0)^{\otimes k},\\
    y^{[0]}(t)&:=1.
\end{aligned}
\end{equation}

By construction, the solution to the infinite Carleman system takes the form $x(t)^{\otimes j}$ for all $j \in \mathbb{N}$, where $x(t)$ is the exact solution of Eq.~\eqref{eq:nonlinear} at time $t \ge 0$. Note that the solution of the truncated system does not need to take a simple tensor product form, and so we denote it as $y^{[j]}(t)$ for all $j=1,2,\dots, k$. We expect that only for sufficiently large $k$ we may get $y^{[j]}(t) \approx x(t)^{\otimes j}$ under appropriate conditions. By writing $y = [y^{[1]}, y^{[2]}, \dots, y^{[k]}]^T$, we therefore get that 
\begin{equation}
    y(t) \in \mathbb{C}^{\sum_{j=1}^{k} N^j},
\end{equation}
and so the dimension of $y(t)$ is
\begin{equation}
    \sum_{j=1}^{k} N^j = \frac{N^{k+1} - N}{N-1} = O(N^{k}).
\end{equation}
This vector is subject to the system of equations on $\bigoplus_{j=1}^{k} \mathbb{C}^{N^j}$:
\begin{equation}
\begin{aligned}
\label{eq:CarlemanODE}
    \dot{y} &= A y + a,  \\
    a &= [F_0, 0, 0, \dots , 0]^T,  \\
    y(0) &= [ x(0),x(0)^{\otimes 2}, \dots, x(0)^{\otimes k} ]^T,
\end{aligned}
\end{equation}
where we defined the \emph{Carleman matrix} $A$:
\begin{equation}
    \begin{aligned}
        A &= H_0 + H_1 +H_2,  \\
        H_0 = \sum_{j=2}^{k} \ketbra{j}{j-1}\otimes A_{j,j-1},
         \qquad
        H_1 &= \sum_{j=1}^{k} \ketbra{j}{j}\otimes A_{j,j},
         \qquad
        H_2 = \sum_{j=1}^{k-1} \ketbra{j}{j+1} \otimes A_{j,j+1}.
    \end{aligned}
    \label{eq:CarlemannMatrix2}
\end{equation}


\subsection{Carleman truncation error}

We now consider the error that arises in using the Carleman truncation relative to the exact, infinite-dimensional system. 
We have an infinite vector $[x,x^{\ot 2},\dots]^T$ representing the exact dynamics with $\gamma>0$ fixed but arbitrary, and $y = [y^{[1]}, y^{[2]}, \dots, y^{[k]}]^T$ to be the solution of the truncated Carleman system. We now define
\begin{equation}
\label{eq:errorvectorcomponents}
    \eta_j(t) := x(t)^{\ot j} - y^{[j]}(t)
\end{equation}
for $j=1,2, \dots k$ and are interested in $\|\eta_j(t)\|$.

Note that the truncated Carleman system can also be viewed as an infinite linear system, by replacing $A_{j,j \pm 1}$ in Eq.~\eqref{eq:infiniteCarleman} with matrices $B_{j,j\pm 1 }$, defined as $B_{j,j-1} = A_{j,j-1}$, $B_{j,j} = A_{j,j}$  for $j=1, 2, \dots, k$, $B_{j,j+1} = A_{j,j+1}$ for  $j=1, 2, \dots, k-1$, and $B_{j,j\pm 1} =0$ otherwise. Correspondingly, we embed $y(t)$ in the infinite dimensional setting via
\begin{equation}
    y(t) = [y^{[1]}, y^{[2]}, \dots , y^{[k]}, 0, 0, \dots]^T.
\end{equation}
In other words, we set $y^{[j]}(t) = 0$ for all $j> k$. 

We have
\begin{equation}
    \frac{d}{dt}y^{[j]} = B_{j,j-1} y^{[j-1]} + B_{j,j} y^{[j]} + B_{j,j+1} y^{[j+1]} 
\end{equation}
for any $j\in \mathbb{N}$. Therefore, for any $j \in \mathbb{N}$, we have
\begin{align}    
    \dot{\eta}_j &= (A_{j,j-1} x^{\ot (j-1)} - B_{j,j-1} y^{[j-1]}) +(A_{j,j} x^{\ot j} - B_{j,j} y^{[j]}) +(A_{j,j+1} x^{\ot (j+1)} - B_{j,j+1} y^{[j+1]}). 
\end{align}
Hence, for $j=1,2,\dots k-1$,
\begin{equation}
    \dot{\eta}_j = A_{j,j-1} \eta_{j-1} + A_{j,j} \eta_j + A_{j,j+1} \eta_{j+1},
\end{equation}
where $\eta_0 \equiv 0$. For $j=k$, we have
\begin{equation}
    \dot{\eta}_k = A_{k,k-1} \eta_{k-1} + A_{k,k} \eta_k + A_{k,k+1} x^{\ot (k+1)},
\end{equation}
whereas for $j \ge k+1$
\begin{equation}
    \dot{\eta}_{j} = A_{j,j-1} x^{\ot (j-1)} + A_{j,j} x^{\ot j} + A_{j,j+1} x^{\ot (j+1)} .
\end{equation}
The above gives the dynamics for the error vector in the full infinite dimensional space.

For $j=1,2,\dots,k$, we can view the above equations for the error vector as an ODE system in $\eta_j$ plus an external time-dependent driving source term $\zeta(t):=A_{k,k+1} x(t)^{\ot (k+1)}$. This can be viewed as the errors coming from the bounding truncation scale $j=k$, and it acts as a source for the linear ODE error vector.

We define $\eta \in \mathbb{C}^{\sum^{k}_{j=1} N^j}$ the error vector
\begin{equation}
    \eta := [\eta_1, \eta_2, \dots, \eta_{k}]^T,
\end{equation}
which satisfies the set of linear coupled equations
\begin{equation}
\begin{aligned}
    \dot{\eta}(t) &= A \eta(t) + \zeta(t), \nonumber \\
    \zeta(t) &:= [0, 0, \dots , 0, A_{k,k+1} x^{\ot (k+1)}(t)]^T, \\
    \eta(0) &= [0,0,\dots, 0]^T. \label{eq:error_evolution}
\end{aligned}
\end{equation}

Central Question 1 concerns identifying sufficient conditions under which the error $\|\eta(t)\|$ or $\|\eta_1(t)\|$ can be bounded to decay rapidly -- ideally exponentially -- with respect to $k$, while growing slowly -- ideally polynomially -- with other relevant parameters such as the total simulation time.


\subsection{Solving the truncated Carleman system on a quantum computer}

Once the nonlinear dynamics are embedded into a high-dimensional linear system via Carleman linearization, the problem reduces to solving a linear ordinary differential equation (ODE) of the form 
\begin{align}
\dot{y} &= A y + a, \\    
a &= [F_0, 0, 0, \dots , 0]^T,  \\
y(0) &= [ x(0),x(0)^{\otimes 2}, \dots, x(0)^{\otimes k} ]^T,
\end{align}
This reformulation makes it possible to leverage a variety of quantum algorithms developed for linear systems. Under suitable assumptions on sparsity, condition number, and time discretization, quantum linear ODE solvers can offer exponential or polynomial speedups over the best known classical methods, making them particularly attractive for simulating the truncated Carleman system. Below we briefly review the main classes of such algorithms.

A widely used strategy is to further recast the problem as a large system of linear equations and then apply a quantum linear system algorithm (QLSA). The dynamics can be encoded into such a system in several ways. One approach is to quantize established numerical discretization schemes, such as finite difference methods \cite{berry2014high} and finite element methods \cite{montanaro2016quantum}. Another is to represent the Taylor (or Dyson) series expansion of the evolution operator as a linear system \cite{berry2017quantum, berry2024quantum, krovi2023improved, jennings2023cost}. A third approach employs spectral methods \cite{childs2020quantum, childs2021highprecision}, expanding the solution in a suitable basis and solving for the expansion coefficients. Regardless of the encoding strategy, the efficiency of any QLSA-based method ultimately depends on the condition number of the resulting coefficient matrix, which is typically determined by intrinsic properties of $A$ and the total evolution time $T$. 

In addition to QLSA-based approaches, there are quantum algorithms for linear differential equations that avoid the reduction to a linear system. Certain problem classes -- such as coupled harmonic oscillators \cite{babbush2023exponential} and wave equations \cite{costa2019quantum} -- can be naturally embedded into Hamiltonian simulation problems and solved directly using quantum simulation algorithms. Another approach, known as Schr\"{o}dingerization \cite{jin2024quantum, jin2023quantum}, reformulates linear partial differential equations as systems of Schr\"{o}dinger equations, enabling the application of Hamiltonian simulation methods to a broader class of problems. Alternatively, some quantum algorithms for linear differential equations are based on directly implementing the matrix exponential $e^{At}$. This can be achieved using recent techniques such as linear combination of Hamiltonian simulations (LCHS) \cite{an2023linear, an2023quantum}, quantum eigenvalue transformation (QEVT) \cite{low2024quantum}, or, in certain cases, quantum singular value transformation (QSVT) \cite{fang2023timemarchingbased, an2025quantum}.

In this work, we primarily employ the quantum algorithm of Ref. \cite{jennings2023cost} to solve the truncated Carleman system. The efficiency of this algorithm can be summarized as follows. Consider a linear ODE system
\begin{align}
    \dot{z} = B z + b.
\end{align}
Let $U_B$ be an $(\omega, a, 0)$-block-encoding of the matrix $B$, and let $U_0$ be a unitary that prepares the normalized initial state $\ket{z(0)}$. Then, the algorithm of Ref.~\cite{jennings2023cost} prepares a quantum state that is $\epsilon$-close to $\ket{z(T)}$ using
$$\mytO{\omega C_{\rm max}\hat{g}T\mylog{1/\epsilon}}$$
queries to (controlled-) $U_B$, $U_0$, and their inverses, where 
\begin{align}
C_{\rm max} = \max_{t \in [0,T]} \norm{e^{Bt}},~~~\hat{g}=\frac{\max_{t \in [0,T]} \norm{z(t)}}{\norm{z(T)}},
\end{align}
and additional
$$\mytO{\omega C_{\rm max}\hat{g}T\mylog{1/\epsilon}\cdot \mypoly{\mylog{N}}}$$
elementary quantum gates. In situations where $B$ generates stable dynamics, the algorithm in Ref.~\cite{jennings2023cost} gives fast-forwarding, with sublinear in $T$ scaling. This fast-forwarding has been further developed in Ref.~\cite{an2024fast} for autonomous and non-autonomous linear ODEs, and more recently in Ref.~\cite{yang2025quantum}.  For dissipative dynamics, in Section~\ref{subsec:quantum_algorithm_stable_conservative_systems} we shall also introduce a method based on a \emph{quantum Lyapunov transform}. This is achieved given oracle access to a Lyapunov matrix of the linearized dynamics, and allows one to transform to a setting where $C_{\max}=1$.

\newpage 

\section{Stable systems}
\label{sec:stable}


Dynamical systems typically have solutions with bounded norms that do not blow up to infinity. A subset of these are \emph{stable} systems for which the dynamics in the late-time will decay to a fixed-point with some characteristic exponent. Moreover, this stability property is typically robust against small perturbations in the dynamical equation. The ability to have reliable descriptions of late-time dynamics is crucial for the analysis of Carleman convergence. Indeed, the seminal work~\cite{liu2021efficient} crucially relied on the dynamics being dissipative with a known exponential decay rate~$\mu$. Here, we provide a broad framework for stable autonomous systems that exploits Lyapunov stability theory~\cite{plischke2005transient,jennings2023cost}. This leads us to a new continuous family of criteria for the Carleman convergence of stable systems.

\subsection{Setting}
\label{sec:solutionnormbounds}

We set out to construct a Carleman embedding theory for general stable autonomous systems, i.e., systems satisfying 
\begin{align}
    \alpha(F_1)<0,
\end{align} 
where $\alpha$ is the spectral abscissa of $F_1$ defined in Eq.~\eqref{eq:spectralabscissa}. The consequences of stability for quantum algorithms for \emph{linear} ODE equations have been recently analyzed in Ref.~\cite{jennings2023cost}. The present analysis can be seen as an extension of those considerations to quantum algorithms for \emph{nonlinear} differential equations. For autonomous systems, stability is equivalent to the existence of a positive-definite matrix $P>0$ such that the Lyapunov inequality from Eq.~\eqref{eq:lyapunovinequality} is satisfied~\cite{plischke2005transient}. This in turn means that the \emph{generalized log-norm} is negative:
\begin{align}
\label{eq:Plognorm}
    \mu_P(F_1) : = \max_{x \neq 0 } \mathrm{Re} \frac{\langle F_1 x, x \rangle_P}{\langle x, x \rangle_P} < 0,
\end{align}
where 
\begin{equation}
    \label{eq:scalar_P}
    \langle x, y \rangle_P := x^\dag P y    
\end{equation}
is the scalar product induced by $P$.
The generalized log-norm recovers the log-norm from Eq.~\eqref{eq:lognorm} (which plays a central role in quantum algorithm analysis presented in Refs.~\cite{krovi2023improved, costa2023further}) when $P=I$: 
\begin{align}
    \mu_{P=I}(F_1) = \mu(F_1).
\end{align} 
Also, if $F_1$ is diagonalized by a matrix $Q$, as assumed for example in Ref.~\cite{liu2021efficient}, then $P=Q^\dag Q$ satisfies the Lyapunov inequality and $\mu_{Q^\dag Q}(F_1) = \alpha(F_1)$. This shows that the generalized log-norm of Eq.~\eqref{eq:Plognorm} gives a unified description of previously considered stability properties.

We shall prove that, if a system is stable, for every $P$ satisfying a Lyapunov inequality~\eqref{eq:lyapunovinequality} we can associate a \emph{Lyapunov $R$-number}:
\begin{align}
\label{eq:RP}
     R_P :=  \frac{1}{-\mu_P(F_{1})} \left (  \|F_{2}\|_P \|x(0)\|_P + \frac{\|F_0\|_P}{ \|x(0)\|_P} \right ),
\end{align}
giving a sufficient convergence condition for the Carleman embedding method as $R_P<1$. Here
 $\| x\|_P := \sqrt{x^\dag P x}$ is the norm induced by $P$, and similarly with the operator norms:
\begin{align}
    \| F_2\|_P = \left\| P^{1/2} F_2\left( P^{-1/2} \otimes P^{-1/2}\right)\right\|.
\end{align}
Note that when $P=I$, the above recovers the $R$-number $R_\mu$ in Eq.~\eqref{eq:Rmu} from Refs.~\cite{krovi2023improved, costa2023further}. When $P=Q^\dag Q$, we recover the $R$ number $R_\alpha$ from Eq.~\eqref{eq:Ralpha} introduced in Ref.~\cite{liu2021efficient}, if norms are computed in the basis where $F_1$ is diagonal. 

\begin{rmk}
\label{rmk:knowingPmatrix}
Note that we do not necessarily need to know $P$ explicitly to certify convergence. For example, if we have a bound on the condition number of $P$, $\kappa_P := \| P\| \| P^{-1}\|$, and knowledge of the value of $\mu_P(F_1)$, then we can replace the $R_P <1$ condition with a weaker one only involving $\kappa_P$ or its upper bound. In particular, we can instead impose that
\begin{equation}
    R_P \le \frac{1}{-\mu_P(F_1)} \left( \kappa_P \|F_2\| \|x(0)\| + \sqrt{\kappa_P} \frac{\|F_0\|}{\|x(0)\|} \right ) \le  1.
\end{equation}
This provides simpler sufficient criteria for convergence that do not need detailed knowledge of $P$.
\end{rmk}


\subsubsection*{Solution norm bounds}

An important prerequisite in bounding the Carleman embedding error is control over how the solution norm of the nonlinear ODE~\eqref{eq:nonlinear} changes over time. For systems with negative log-norm the central result is 
\begin{lemma}[Ref.~\cite{krovi2023improved}, Lemma 15]
\label{lem:normdecrease}
Let $\mu(F_1)$ be defined as in Eq.~\eqref{eq:lognorm} and $R_\mu$ as in Eq.~\eqref{eq:Rmu}. Then,
\begin{align}
  \mu(F_1)<0,  \quad  R_\mu<1 \quad \quad \Rightarrow \quad  \quad \|x(t)\| \le \|x(0)\|.
\end{align}
\end{lemma}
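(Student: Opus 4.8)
The plan is to control the growth of $\|x(t)\|$ directly from the nonlinear ODE~\eqref{eq:nonlinear} by a differential-inequality (Gr\"{o}nwall-type) argument, using the logarithmic norm $\mu(F_1)$ to bound the linear contribution and the hypothesis $R_\mu<1$ to absorb the driving and quadratic terms. First I would write $u(t) := \|x(t)\|$ and compute the one-sided (Dini) derivative of $u$. Using $\frac{d}{dt}\tfrac12 u^2 = \mathrm{Re}\,\langle x(t), \dot x(t)\rangle$ and substituting $\dot x = F_0 + F_1 x + F_2 x^{\ot 2}$, one gets
\begin{equation}
    u(t)\, \dot u(t) \;\le\; \|F_0\|\,u(t) + \mu(F_1)\,u(t)^2 + \|F_2\|\,u(t)^3,
\end{equation}
where the middle term comes from the definition of $\mu(F_1) = \max_{x\neq 0}\mathrm{Re}\,\langle F_1 x, x\rangle/\|x\|^2$ and the outer terms from Cauchy--Schwarz together with $\|x^{\ot 2}\| = \|x\|^2$. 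Dividing by $u(t)$ (handling $u=0$ separately) yields the scalar differential inequality $\dot u \le \|F_0\| + \mu(F_1)\,u + \|F_2\|\,u^2$.

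Next I would analyze this scalar inequality by a barrier/comparison argument: show that $u(t)$ cannot exceed $u(0) = \|x(0)\|$. It suffices to verify that the right-hand side, viewed as a quadratic function $\phi(u) = \|F_2\|u^2 + \mu(F_1)u + \|F_0\|$, is $\le 0$ at $u = \|x(0)\|$, since then $u=\|x(0)\|$ is a (super)solution and the standard comparison principle gives $u(t)\le u(0)$ for all $t\ge 0$ (any trajectory starting at $u(0)$ with $\dot u \le \phi(u(0)) \le 0$ stays below, and one checks it cannot re-cross from below). Evaluating, $\phi(\|x(0)\|)\le 0$ is equivalent to
\begin{equation}
    \|F_2\|\,\|x(0)\|^2 + \|F_0\| \;\le\; -\mu(F_1)\,\|x(0)\|,
\end{equation}
which, after dividing by $-\mu(F_1)\,\|x(0)\| > 0$ (here we use $\mu(F_1)<0$), is exactly the condition $R_\mu \le 1$; the strict inequality $R_\mu < 1$ gives it with room to spare. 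Since $u=\|x(0)\|$ is an equilibrium point of the comparison ODE $\dot v = \phi(v)$ only if $\phi$ vanishes there, and $\phi(\|x(0)\|)\le 0$ in general, a short monotonicity argument on $\phi$ (it is an upward parabola, negative at $\|x(0)\|$, and $\phi(0)=\|F_0\|\ge 0$) confirms $u$ stays in $[0,\|x(0)\|]$.

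The main obstacle I anticipate is the technical care needed around non-smoothness and degeneracy: $t\mapsto\|x(t)\|$ need not be differentiable where $x(t)=0$, so one should work with the upper Dini derivative and invoke a comparison theorem valid in that generality, and one must rule out the edge case where the trajectory touches $\|x(0)\|$ and then escapes --- this is where the \emph{strict} inequality $R_\mu<1$ (rather than $R_\mu\le1$) is genuinely convenient, as it makes $\phi(\|x(0)\|)<0$ strictly, so the sublevel set $[0,\|x(0)\|]$ is forward-invariant with a strict inward-pointing vector field at the boundary. Everything else is routine: the inner-product computation, Cauchy--Schwarz, and the elementary quadratic inequality.
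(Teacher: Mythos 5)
Your proposal is correct and follows essentially the same route as the paper's own argument (given in Appendix~\ref{sec:proofCarlemanstable} for the generalization Lemma~\ref{lem:normdecreasegeneral}, of which this is the $P=I$ case): differentiate $\tfrac12\|x(t)\|^2$, bound the three terms by $\|F_2\|u^3+\mu(F_1)u^2+\|F_0\|u$, and observe that on the sphere $\|x\|=\|x(0)\|$ the derivative equals $(-\mu(F_1))\|x(0)\|^2(R_\mu-1)\le 0$, so the ball is forward-invariant. The paper works with $V=\tfrac12\|x\|^2$ directly rather than dividing by $u$, which sidesteps the $u=0$ degeneracy you flag, but this is a cosmetic difference.
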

The above implies that for systems with $\mu(F_1)<0$, $R_\mu<1$ the norm can be controlled. However, Lemma~\ref{lem:normdecrease} cannot be applied to general stable system, since it is limited to the subclass of systems with negative log-norm. To go beyond this, we give the following generalization of Lemma~\ref{lem:normdecrease}, the proof of which can be found in Appendix~\ref{sec:proofCarlemanstable}.
\begin{lemma}
\label{lem:normdecreasegeneral}
Given $P>0$, let $\mu_P(F_1)$ be defined as in Eq.~\eqref{eq:Plognorm} and $R_P$ as in Eq.~\eqref{eq:RP}. Then
\begin{align}
\label{eq:solutionormbound}
  \mu_P(F_1)<0,  R_P<1 \quad  \Rightarrow \quad \|x(t)\|_P \le \|x(0)\|_P.
\end{align}
\end{lemma}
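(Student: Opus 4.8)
The plan is to mirror the strategy of Lemma~\ref{lem:normdecrease} but carry it out inside the $P$-weighted inner product space instead of the Euclidean one. Define the scalar Lyapunov function $f(t) := \|x(t)\|_P^2 = \langle x(t), x(t)\rangle_P = x(t)^\dagger P x(t)$, which is the natural energy-like quantity adapted to the Lyapunov matrix $P$. The first step is to differentiate $f$ along trajectories of the nonlinear ODE~\eqref{eq:nonlinear}. Using $\dot x = F_0 + F_1 x + F_2 x^{\otimes 2}$ and the Hermiticity of $P$, one gets
\begin{equation}
\dot f(t) = 2\,\realpart{\langle F_0 + F_1 x + F_2 x^{\otimes 2},\, x\rangle_P}.
\end{equation}
The $F_1$ term is controlled by the definition of the generalized log-norm~\eqref{eq:Plognorm}: $\realpart{\langle F_1 x, x\rangle_P} \le \mu_P(F_1)\,\|x\|_P^2$. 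The $F_0$ and $F_2$ contributions are bounded by Cauchy--Schwarz in the $P$-inner product together with the induced operator norm definitions: $|\langle F_0, x\rangle_P| \le \|F_0\|_P\,\|x\|_P$ and $|\langle F_2 x^{\otimes 2}, x\rangle_P| \le \|F_2\|_P\,\|x^{\otimes 2}\|_P\,\|x\|_P = \|F_2\|_P\,\|x\|_P^3$, where the last equality uses multiplicativity of the $P$-norm under tensor powers (this needs a one-line check that $\|v\|_P \|w\|_P = \|v\otimes w\|_P$ with the tensor-product weighting $P\otimes P$, which is immediate since $P^{1/2}$ respects tensor products). Collecting terms,
\begin{equation}
\dot f(t) \le 2\Big(\mu_P(F_1)\,f(t) + \|F_2\|_P\, f(t)^{3/2} + \|F_0\|_P\, f(t)^{1/2}\Big).
\end{equation}

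The second step is a comparison / barrier argument. Let $g(t)$ solve the scalar ODE obtained by replacing $\le$ with $=$ above, with $g(0) = f(0) = \|x(0)\|_P^2$; by the standard comparison principle for ODEs, $f(t) \le g(t)$ for all $t \ge 0$ as long as $g$ exists. It then suffices to show the constant function $g \equiv \|x(0)\|_P^2$ is a supersolution, i.e.\ that the right-hand side evaluated at $f = \|x(0)\|_P^2$ is $\le 0$. Writing $s_0 := \|x(0)\|_P$, this is the inequality
\begin{equation}
\mu_P(F_1)\, s_0^2 + \|F_2\|_P\, s_0^3 + \|F_0\|_P\, s_0 \le 0,
\end{equation}
which, after dividing by $s_0^2 > 0$ (note $x(0)\neq 0$; the degenerate case is trivial) and using $\mu_P(F_1) < 0$, is exactly
\begin{equation}
\frac{1}{-\mu_P(F_1)}\Big(\|F_2\|_P\, s_0 + \frac{\|F_0\|_P}{s_0}\Big) \le 1,
\end{equation}
i.e.\ $R_P \le 1$. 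Since we assume $R_P < 1$, the inequality is strict, so the right-hand side of the differential inequality is strictly negative at $f = s_0^2$; hence whenever $f(t) = s_0^2$ we have $\dot f(t) < 0$, which prevents $f$ from ever exceeding $s_0^2$. This gives $\|x(t)\|_P^2 = f(t) \le s_0^2 = \|x(0)\|_P^2$ for all $t \ge 0$, which is the claim.

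The main obstacle I anticipate is making the barrier argument fully rigorous at the level of ``$f$ can never cross the threshold $s_0^2$,'' since $\dot f < 0$ only \emph{at} the threshold value and one must argue there is no first crossing time — the clean way is to invoke the comparison theorem for scalar ODEs directly (guaranteeing $f \le g$ with $g$ the solution of the majorizing equation, which stays constant because the constant is an equilibrium that is approached from below), rather than a hand-wavy ``if it touches it turns around'' argument. A secondary, more routine, technicality is verifying the tensor-power norm identity $\|x^{\otimes 2}\|_P = \|x\|_P^2$ under the appropriate $P^{\otimes 2}$ weighting and confirming that the operator-norm bound $\|F_2 x^{\otimes 2}\|_P \le \|F_2\|_P\|x^{\otimes 2}\|_P$ is consistent with the definition $\|F_2\|_P = \|P^{1/2}F_2(P^{-1/2}\otimes P^{-1/2})\|$ given in the text; this just amounts to inserting $P^{\pm 1/2}$ resolutions of the identity and is purely mechanical. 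Everything else reduces to Cauchy--Schwarz in the $P$-inner product and the defining property of $\mu_P(F_1)$, both already set up in the excerpt.
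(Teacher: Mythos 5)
Your proposal is correct and follows essentially the same route as the paper's proof: differentiate the $P$-weighted squared norm along trajectories, bound the three contributions via the generalized log-norm $\mu_P(F_1)$ and Cauchy--Schwarz in the $P$-inner product (using $\|F_2 x^{\otimes 2}\|_P \le \|F_2\|_P\|x\|_P^2$), and then observe that the resulting one-sided derivative is nonpositive exactly on the sphere $\|x\|_P = \|x(0)\|_P$ whenever $R_P<1$, so the solution cannot exit it. Your extra care about formalizing the "no first crossing" step via the scalar comparison principle is a reasonable refinement of the paper's more terse "the solution cannot exit $\mathcal{S}$" argument, but it is not a different method.
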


\begin{rmk}
\label{issue1liu}
    Note that Ref.~\cite{liu2021efficient} (Lemma 1) claims the alternative result:
    \begin{align*}
        F_1  \textrm{ diagonalizable,}\quad
    \alpha(F_1)<0,\quad R_{\alpha} <1 \qquad\Rightarrow\qquad \|x(t)\| \le \|x(0)\|.
    \end{align*}
    However, the proof is incorrect. As a counterexample, consider the system of linear equations
    \begin{align}
      \frac{d}{dt}  \begin{bmatrix}
            x_1(t) \\ x_2(t) 
        \end{bmatrix} = 
        \begin{bmatrix}
            1/4 & 1 \\
            -1 & -1/2
        \end{bmatrix}  \begin{bmatrix}
            x_1(t) \\ x_2(t) 
        \end{bmatrix},
    \end{align}
    with $x_1(0) = 1$, $x_2(0) = 1$. The matrix $F_1$ is diagonalizable and $\alpha(F_1)=-1/8$. Furthermore, $R_{\alpha} = 0 <1$. However, we have
    \begin{align}
        \|x(0)\| = \sqrt{2}, \quad \|x(1)\| \geq 1.65 > \sqrt{2}
    \end{align}
    and so $\|x(1)\|> \|x(0)\|$, which is consistent with the ability of stable systems to have transient growth. 
    The issue with the proof in Ref.~\cite{liu2021efficient} is that the following inequality is claimed [Eq.~(4.2) ~\cite{liu2021efficient}]
    \begin{align}
    \label{eq:error}
        x^\dag (F_1 + F^\dag_1) x \leq 2 \alpha(F_1) \| x\|^2,
    \end{align} 
    which does not hold in general. One simple fix is to limit the scope of the result in Ref.~\cite{liu2021efficient} and add the assumption that $F_1$ has negative log-norm, in which case the result follows from Lemma~\ref{lem:normdecrease}. But then the result in Lemma~\ref{lem:normdecrease} is more general, as it does not require that $F_1$ is diagonalizable.
\end{rmk}

\begin{rmk}[Improving solution norm bound]
For simplicity, we give the bound Eq.~\eqref{eq:solutionormbound} in a time-independent form, but this is in fact derived from a stronger bound where one can show $\|x(t)\|_P$ will decrease below $\|x(0)\|_P$. For example, for times $t \gg \Delta :=  \sqrt{\mu_P(F_1)^2 - 4 \| F_0\|_P \| F_2\|_P}$ we have $\| x(t)\|_P \leq c \approx \frac{\mu_P(F_1) - \Delta}{2 \|F_2\|_P}$, a bound that can be much stronger than $\|x(t)\|_P \leq \|x(0)\|_P$ and hence give exponential improvements in the Carleman error bounds we will present in the next section.
\end{rmk}


\subsection{Convergence of the Carleman embedding}
\label{sec:stableconvergence}

We start by presenting a statement of our main result. It gives us a generally infinite family of conditions, each of them sufficient for Carleman convergence:
\begin{align}
    \left\{R_P<1 : P >0 \textrm{ satisfies the Lyapunov inequality } P F_1 + F_1^\dag P < 0\right\}.
\end{align}
More formally, we have the following theorem, the proof of which can be found in  Appendix~\ref{sec:proofCarlemanstable}.

\begin{thm}[Carleman error bound for stable systems]
\label{thm:Carlemanstable}
Let $x(t)$ be the solution to the quadratic ODE system
\begin{align}
\dot{x}(t) = F_0 + F_1 x(t)+ F_2x(t)^{\otimes 2},
\end{align}
and the corresponding Carleman ODE system truncated at level $k$ as in Eq.~\eqref{eq:CarlemanODE}:
\begin{align}
    \dot{y}(t) = Ay(t) +a, \quad \quad y(0) = [x(0), x(0)^{\otimes 2}, \dots, x(0)^{\otimes k}], \quad   a = [F_0, 0, 0, \dots , 0].
\end{align}
 Assume the system is stable, i.e., $\alpha(F_1)<0$, with $\alpha(F_1)$ defined in Eq.~\eqref{eq:spectralabscissa}. If for $P>0$ satisfying the Lyapunov inequality~\eqref{eq:lyapunovinequality} the generalized $R$-number condition  
\begin{align}
    R_P :=  \frac{1}{-\mu_P(F_{1})} \left (  \|F_{2}\|_P \|x(0)\|_P + \frac{\|F_0\|_P}{ \|x(0)\|_P} \right )<1
\end{align}
is satisfied, then there is a rescaling as in Eq.~\eqref{eq:rescalingrelations} such that $\| x(0) \|_P<1$ and the Carleman embedding error vector $\eta_j(t) = x(t)^{\otimes j} - y^{[j]}(t)$ satisfies for $j=1, \dots, k$:
  \begin{align}
  \label{eq:Carlemanerrorbound}
  \|\eta_j(t)\|  \leq \frac{1}{-\xi_P } k \|F_2\|_P \| P^{-1}\|^{j/2} \| x(0)\|_P^{k+1},
\end{align}
 with
\begin{align}
\xi_P \leq  4 \mu_P(F_1) + 5 \| F_0\|_P + 3 \| F_2\|_P <0.
\end{align}
\end{thm}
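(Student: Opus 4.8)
The plan is to work entirely in the $P$-weighted inner product, reducing to the case $\mu_P(F_1)<0$ already understood in spirit, and then to run a Gr\"onwall-type argument on the error vector $\eta$ defined in Eq.~\eqref{eq:error_evolution}. First I would fix the rescaling: since $R_P<1$, one can choose $\gamma>0$ in Eq.~\eqref{eq:rescalingrelations} so that after rescaling $\|x(0)\|_P<1$ while the quantity $R_P$ is rescaling-invariant (the $\gamma$-dependence of $\|F_2\|_P\|x(0)\|_P$ and $\|F_0\|_P/\|x(0)\|_P$ cancels). From Lemma~\ref{lem:normdecreasegeneral} we then get $\|x(t)\|_P\le\|x(0)\|_P<1$ for all $t\ge 0$, which is the crucial a priori control on the nonlinear trajectory and feeds the source term $\zeta(t)=A_{k,k+1}x(t)^{\otimes(k+1)}$.

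Next I would estimate the relevant operator norms in the $P$-geometry. Working with the $P$-norm on each block $\mathbb{C}^{N^j}$ (i.e.\ using $P^{\otimes j}$), one checks that $A_{j,j}$ has generalized log-norm at most $j\,\mu_P(F_1)$ (it is $F_1\otimes I^{\otimes(j-1)}+\text{shifts}$, and log-norms add under tensoring with the identity and under sums of such commuting-in-the-relevant-sense shifts), that $\|A_{j,j-1}\|_P\le j\|F_0\|_P$ and $\|A_{j,j+1}\|_P\le j\|F_2\|_P$, and that $\|\zeta(t)\|_{P}=\|A_{k,k+1}x(t)^{\otimes(k+1)}\|_P\le k\|F_2\|_P\|x(0)\|_P^{k+1}$. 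Then I would set up the scalar comparison functions: let $u_j(t):=\|\eta_j(t)\|_P$ and bound $\tfrac{d}{dt}\|\eta_j\|_P$ using $\mathrm{Re}\langle\dot\eta_j,\eta_j\rangle_{P}/\|\eta_j\|_P$, yielding a coupled differential inequality
\begin{align}
\dot u_j \le j\,\mu_P(F_1)\,u_j + j\|F_0\|_P\,u_{j-1} + j\|F_2\|_P\,u_{j+1} + [j=k]\,k\|F_2\|_P\|x(0)\|_P^{k+1},
\end{align}
with $u_0\equiv 0$ and $u_j\equiv 0$ for $j>k$. The standard device is to pass to the generating function $w(z,t):=\sum_{j=1}^{k}u_j(t)\,z^j$ (or equivalently bound $\sum_j u_j$ after absorbing powers of $\|x(0)\|_P$), turning the coupled system into a single scalar inequality whose decay rate is controlled by $\xi_P$. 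The exponent $\xi_P\le 4\mu_P(F_1)+5\|F_0\|_P+3\|F_2\|_P<0$ (strict negativity here uses $R_P<1$ and $\|x(0)\|_P<1$) is exactly the effective stability rate surviving after the nonlinear and driving couplings are folded in.

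I would close the argument by solving the scalar inequality: since $\eta(0)=0$, Duhamel/Gr\"onwall gives $\sum_j u_j(t)\lesssim \frac{1}{-\xi_P}k\|F_2\|_P\|x(0)\|_P^{k+1}$ uniformly in $t$, and then translating back from the $P$-norm to the Euclidean norm on block $j$ costs a factor $\|P^{-1}\|^{j/2}$ (since $\|v\|\le\|P^{-1}\|^{1/2}\|v\|_P$ on $\mathbb{C}^N$, hence $\|P^{-1}\|^{j/2}$ on $\mathbb{C}^{N^j}$), yielding Eq.~\eqref{eq:Carlemanerrorbound}. I expect the main obstacle to be making the generating-function / comparison-system step fully rigorous with the correct constants: one must choose the right auxiliary weights so that the $u_{j-1}$ and $u_{j+1}$ couplings, together with the time-dependent decay of $\|x(t)\|_P$, combine into a clean single exponential with rate $\xi_P$, and carefully account for the $j$-dependent prefactors ($j\mu_P$, $j\|F_0\|_P$, $j\|F_2\|_P$) without losing the $k^{1}$ (rather than $k^2$ or worse) dependence in the final bound. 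The sign/strictness bookkeeping that guarantees $\xi_P<0$ precisely when $R_P<1$ is where the hypothesis is really used, and is the delicate quantitative heart of the proof.
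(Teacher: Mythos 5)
Your outline follows the same skeleton as the paper's proof (rescale so that $\|x(0)\|_P<1$, invoke Lemma~\ref{lem:normdecreasegeneral} for the a priori bound $\|x(t)\|_P\le\|x(0)\|_P$, derive a tridiagonal comparison system for the block errors with coefficients $j\mu_P(F_1)$, $j\|F_0\|_P$, $j\|F_2\|_P$, close with Gr\"onwall, and convert back to the Euclidean norm at a cost of $\|P^{-1}\|^{j/2}$ per block). However, the step you yourself flag as "the delicate quantitative heart" is a genuine gap, and the generating-function device you propose for it is unlikely to work as stated: because the off-diagonal couplings grow linearly in $j$, summing the inequalities $\dot u_j \le j\mu_P u_j + j\|F_0\|_P u_{j-1}+j\|F_2\|_P u_{j+1}+\cdots$ against weights $z^j$ produces a first-order PDE in $(z,t)$ rather than a scalar ODE, and extracting a single uniform decay rate $\xi_P$ from it with the stated constants is not routine. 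The paper avoids this entirely: it works with $\tfrac{d}{dt}\|\eta\|^2=\eta^\dagger(A+A^\dagger)\eta+2\,\mathrm{Re}\,\eta^\dagger\zeta$, bounds the quadratic form by $\eta_G^\dagger\,\hh{G}\,\eta_G\le\lambda_1(\hh{G})\|\eta\|^2$ where $\hh{G}$ is the \emph{symmetrized} $k\times k$ tridiagonal matrix of norms, and then applies Gershgorin's circle theorem to $\hh{G}$; the maximum of the Gershgorin row sums is attained at $j\in\{1,2\}$, which is precisely where the constants $4\mu_P+5\|F_0\|_P+3\|F_2\|_P$ come from and why the final bound is linear in $k$. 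Note that the paper explicitly warns (Appendix~\ref{app:issueliu}, Remark in Section~\ref{sec:comparisons}) that the seemingly natural shortcuts here — e.g.\ replacing the quadratic form by $2\alpha(G)\|\eta_G\|^2$, or claiming $\mu(A)<0$ from $\mu(F_1)+\|F_0\|+\|F_2\|<0$ — are \emph{false}, so the symmetrization-plus-Gershgorin route is not a cosmetic choice but the device that makes the argument correct.

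The second gap is the rescaling claim. You assert that since $R_P$ is rescaling-invariant one can pick $\gamma$ with $\|x(0)\|_P<1$ and that "strict negativity of $\xi_P$ uses $R_P<1$." But the rescaling simultaneously sends $\|F_0\|_P\mapsto\gamma\|F_0\|_P$ and $\|F_2\|_P\mapsto\|F_2\|_P/\gamma$, so one must exhibit a single $\gamma$ satisfying \emph{both} $\gamma<1/\|x(0)\|_P$ and $\|F_2\|_P/\gamma < |\mu_P|-\gamma\|F_0\|_P$ (up to the factors $4,5,3$). The paper proves this exists via a three-case analysis on the size of $\|x(0)\|_P$ relative to $\|F_0\|_P/|\mu_P|$ and $2\|F_0\|_P/|\mu_P|$, using in one case the consequence $\mu_P^2>4\|F_0\|_P\|F_2\|_P$ of $R_P<1$ and a continuity argument in a small parameter $\epsilon$. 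Without this analysis the implication "$R_P<1\Rightarrow\xi_P<0$ after a valid rescaling" is unsupported. The reduction of the general $P$ to $P=I$ via $y=Qx$ with $Q^\dagger Q=P$, and the final conversion factor $\|P^{-1}\|^{j/2}$, you have correctly.
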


\begin{rmk}[Carleman errors for higher order components]
    The result for $j=1$ is what matters for the convergence of the Carleman embedding. However, as we shall see in more detail when we analyze quantum algorithms based on the Carleman method, to obtain an efficient quantum algorithm one also needs to bound the errors that one makes in the high-order components $j=2,\dots, k$. Hence we also give bounds for those components.
\end{rmk}


\subsection{Discussion and example}
\label{sec:comparisons}


\subsubsection*{Comparison with prior results}

When $P = I$ satisfies the Lyapunov inequality -- i.e., for systems for negative log-norm -- Theorem~\ref{thm:Carlemanstable} gives a proof of convergence of Carleman linearization that extends~\cite{forets2017explicit, costa2023further} to systems with driving, and coincides with the condition claimed in Ref.~\cite{krovi2023improved}. We have also seen that the stronger result claimed in Ref.~\cite{liu2021efficient} also ends up being subsumed in the claim of Ref.~\cite{krovi2023improved}, once we account to extra assumptions required to prove the solution norm bounds (see Section~\ref{sec:solutionnormbounds}).

Hence, we see that Theorem~\ref{thm:Carlemanstable} extends results claimed in the literature from systems with negative log-norm to general stable systems. However, even in the special case of systems with negative log-norm, we cannot rely on the available convergence proofs. In fact, at a high level the convergence proofs in Ref.~\cite{liu2021efficient, krovi2023improved} are based based on two steps:
\begin{enumerate}
    \item[(a)] Show that $R_{P=I}= R_\mu <1$ implies there is a rescaling $\gamma>0$ for which
\begin{align}
\label{eq:previousconditions}
    \mu(F_1) + \| F_0\| + \| F_2\| < 0, \quad \|x(0)\|<1,
\end{align}
see Eq.~(B.6) in Ref.~\cite{krovi2023improved}, Eq.~(2.3) in Ref.~\cite{liu2021efficient};
\item[(b)] Show that Eq.~\eqref{eq:previousconditions} implies that the Carleman ODE system has negative log-norm, \mbox{$\mu(A) < 0$},  which then implies the exponential decrease of Carleman errors with $k$ (Eq.~7.18-7.19 in Ref.~\cite{krovi2023improved}, or proof of Lemma 2 in Ref.~\cite{liu2021efficient} after adding the assumption of negative log-norm of $F_1$).
\end{enumerate}  
However, (b) does not hold, as we can see via a counterexample: 
\begin{rmk}[Eq.~\eqref{eq:previousconditions} $\not\Rightarrow  \mu(A) <0$]
    Consider the rescaled equation $\dot{x} = f_0 + f_1 x + f_2 x^2$, with $f_0 = 0.97$, $f_1 = -1$, $f_2 = 0.02$ and $x(0)=0.99$. We have $R_\mu= 0.999598 <1$ and clearly with this rescaling the conditions from Eq.~\eqref{eq:previousconditions} are satisfied. The Carleman matrix at $k =4$ reads
\begin{align}
  A =  \left[\begin{matrix}
  -1 & 0.02 & 0 & 0\\
  1.94 & -2 & 0.04 & 0 \\
  0 & 2.91 & -3 & 0.06 \\
  0 & 0 & 3.88 & -4
\end{matrix}\right].
\end{align}
However, $\mu(A) > 0.012 >0$. 
\end{rmk}
Recall that in Appendix~\ref{app:issueliu} and Appendix~\ref{app:issuekrovi} we also pinpoint an error in the convergence proofs in Ref.~\cite{liu2023efficient, krovi2023improved}. Without these proofs, one cannot claim efficiency of the corresponding quantum algorithms. Theorem~\ref{thm:Carlemanstable} restores the convergence proofs and extends them to a much larger class of systems.


\subsubsection*{Elementary example: damped nonlinear oscillator}

Consider a toy one-dimensional nonlinear oscillator problem described by
\begin{align}
    \label{eq:toyoscillatorbelow}
    \ddot{q} = - q - r \dot{q} - n q^2, \quad r>0,
\end{align}
where $r$ is the damping strength and $n$ is the strength of a nonlinear force towards the left. We will analyze the parameter regime $(r,n)$ for which the Carleman embedding procedure convergences according to Theorem~\ref{thm:Carlemanstable}.

Introducing $v = \dot{q}$ and $x = (q, v)^T$ we can transform the system into a standard form:
\begin{align}
    \dot{x} = F_1 x + F_2 (x \otimes x),
\end{align}
where
\begin{align}
    F_1 = \begin{bmatrix}
        0 & 1 \\ -1 & -r
    \end{bmatrix}, \quad F_2 = \begin{bmatrix}
        0 & 0 & 0 & 0 \\
        -n & 0 & 0 & 0
    \end{bmatrix}.
\end{align}
We have that
\begin{align}
    \alpha(F_1) = \frac{1}{2} \mathrm{Re} \left[-r + \sqrt{r^2 - 4}\right]< 0,
\end{align}
so the system is stable. However,
\begin{align}
\mu(F_1) = 0,
\end{align}
and so previous convergence results do not apply.

We will now find the form of the most general $P>0$ satisfying the Lyapunov inequality from Eq.~\eqref{eq:lyapunovinequality} for the considered two-dimensional problem. First, note that, since $P>0$, its trace is positive, but its value has no effect on satisfying the inequality. Thus, without loss of generality, we can restrict to matrices $P$ with unit trace. Then, note that positive matrices of size 2 with unit trace are equivalent to mixed (not pure) density matrices of a qubit system, and so can be parametrized by the interior of the Bloch sphere:
\begin{equation}
    P = \begin{pmatrix}
        \frac{1+r_z}{2} & \frac{r_x-i r_y}{2}\\[1ex]
        \frac{r_x+i r_y}{2} & \frac{1-r_z}{2}
    \end{pmatrix},\qquad r_x^2+r_y^2+r_z^2 < 1.
\end{equation}
Since there are just three parameters $(r_x,r_y,r_z)$ in a bounded region, for a given initial condition $x(0)$ and parameter values $(r,n)$, one can efficiently minimize $R_P$ numerically under the Lyapunov inequality constraint (which can be simplified to the constraint $\mu_P(F_1)<0$). The right panel of Fig.~\ref{fig:oscillatorexample} was generated exactly in this way, with the blue region corresponding to points where the described minimization procedure gave the value of $R_P$ smaller than one.

\newpage 

\section{Conservative systems}
\label{sec:conservative}

In the previous section, we have developed a formalism for stable nonlinear systems using Lyapunov theory. This applies whenever the autonomous ODE has the spectrum of the linear contribution $F_1$ lying strictly in the left-half of the complex plane. However, many dynamical systems have part of their linear spectrum lying on the imaginary axis, and so such systems are not strictly stable (a small perturbation in $F_1$ can lead to unbounded growth). One notable example is that of linear quantum dynamics, where the Schr\"{o}dinger equation takes the form $\dot{x} = F_1x$, and $F_1 = -i H$ for some Hermitian operator $H$. This implies that the spectrum of $F_1$ is \emph{entirely} on the imaginary axis, and the bounded dynamics is robust due to the constraint of unitarity.

Another such \emph{marginally stable} scenario arises when the system has at least one \emph{conservation law} at play that constrains the dynamics to a sub-manifold defined by the conserved quantities. For this situation the linear term $F_1$ has one or more zero eigenvalues, and moreover the nonlinear term $F_2$ preserves the nullspace of $F_1$. The goal of this section is to extend our stability analysis to include the case of dynamical systems that have one or more conserved quantities. In the first instance, we consider purely linear conserved quantities, and then describe the case of more general polynomial conserved quantities. Moreover, we explain how to extend the allowed setting from conserved to oscillating quantities (when $F_1$ has purely imaginary eigenvalues and $F_2$ preserves that imaginary subspace), and use the obtained result to obtain Carleman convergence for non-autonomous systems.


\subsection{Setting}
\label{sec:conservative_setting}

A dynamical system with $M$ linear conserved quantities $\Q^{(m)}$ is defined as a system for which there exist $M$ linear functionals $\Q^{(m)}$ of the system state, 
\begin{equation}
    \label{eq:conserved_q}
    \Q^{(m)} := q^{(m)\dagger} x,\quad q^{(m)} = \sum_{j=1}^N q_j^{(m)} \ket{j}, \quad m\in\{1,\dots,M\},
\end{equation}
such that 
\begin{equation}
    \label{eq:cons_gneral}
    \forall m, \forall t \ge 0:~\frac{d\Q^{(m)}}{dt} = 0.
\end{equation}
For example, the set of such systems includes cases when the variable $x$ describes the distribution of charge or fluid density in space, and the total charge or fluid density is conserved by the dynamics. Similarly, one can also capture dynamics with conserved linear momentum when $x$ describes such distributions in phase space. More generally, the above setting can be applied whenever $x$ describes a distribution over arbitrary states to which one can ascribe a value of some additive conserved quantity.

Using the dynamical equation,
\begin{equation}
    \dot{x}(t) = F_0 + F_1 x(t) + F_2 x(t)^{\ot 2},
\end{equation}
the conservation condition from Eq.~\eqref{eq:cons_gneral} can be rewritten as
\begin{equation}
    \label{eq:cons_dynamic}
    \forall m:~q^{(m)\dagger} F_1 x+q^{(m)\dagger} F_2 x^{\otimes 2}+q^{(m)\dagger}F_0 = 0.
\end{equation}
For this to hold for all states $x$, each of the above terms has to vanish independently:
\begin{equation}
    \label{eq:conservation}
    \forall m:~ q^{(m)\dagger} F_1 x=q^{(m)\dagger} F_2 x^{\otimes 2}=q^{(m)\dagger}F_0 = 0.
\end{equation}
Indeed, if Eq.~\eqref{eq:cons_dynamic} holds for $x$, it should also hold for $ax$, where $a$ is any nonzero real number, and then it is clear that the only way to satisfy this is by enforcing Eq.~\eqref{eq:conservation}. Then, we clearly see that in such a setting $q^{(m)\dagger}$ are the left eigenvectors of $F_1$ corresponding to the zero eigenvalue. 

We assume that $F_1$ is diagonalizable,
\begin{equation}
    F_1 = Q\Lambda Q^{-1},
\end{equation}
and that all its eigenvalues, $\lambda_j(F_1)=\Lambda_{jj}$, have negative real parts except for the zero eigenvalues corresponding to the conserved quantities. Consequently, the considered system is linearly \emph{marginally} stable and not stable:
\begin{align}
    \alpha(F_1)=0.
\end{align}
As such, the existing results, whether presented in Section~\ref{sec:stable} or in Refs.~\cite{forets2017explicit,liu2021efficient,krovi2023improved,costa2023further}, cannot be used to guarantee Carleman error convergence under any conditions (except the trivial one with vanishing nonlinearity, $F_2=0$).

Define the \emph{real spectral gap} of $F_1$ by 
\begin{equation}
    \label{eq:real_spectral}
    \delta(F_1):= -\!\!\!\!\!\!\!\max\limits_{\substack{i\\\mathrm{Re}(\lambda_i(F_1))\neq 0}} \mathrm{Re}(\lambda_i(F_1) )> 0.
\end{equation}
Introducing the transformed vectors and matrices,
\begin{align}
    \tilde{x} = Q^{-1}x,\qquad \tilde{F}_0 = Q^{-1} F_0,\qquad \tilde{F}_2 = Q^{-1} F_2 Q^{\otimes 2},\qquad \tilde{q}^{(m)^\dagger} = {q}^{(m)^\dagger} Q,
\end{align}
Eq.~\eqref{eq:conservation} implies that
\begin{equation}
    \forall \tilde{x},m:~ \tilde{q}^{(m)\dagger}\Lambda \tilde{x} = 0.
\end{equation}
This means that $\tilde{q}^{(m)}$ are zero eigenstates of a diagonal matrix, so they can be identified with the first $M$ computational basis states:
\begin{equation}
    \label{eq:q_tilde}
    \forall m:~\tilde{q}^{(m)} = \ket{m},
\end{equation}
while the zero eigenstates of $F_1$ are simply $Q\ket{m}$. Moreover, combining Eq.~\eqref{eq:conservation} with Eq.~\eqref{eq:q_tilde}, one gets
\begin{equation}
    \label{eq:F2_cons_zero}
    \forall \tilde{x},m:~ \bra{m}\tilde{F}_2 \tilde{x}^{\otimes 2} =0.
\end{equation}
This means that the image of $\tilde{F}_2$ is orthogonal to the first $M$ computational basis states,
\begin{equation}
    \label{eq:f2_cons}
    \forall \tilde{x}:~\tilde{F}_2 \tilde{x}^{\otimes 2} = \sum_{j=M+1}^N c_j \ket{j},
\end{equation}
or that, for any $x$, the decomposition of $F_2 x^{\otimes 2}$ into the eigenstates of $F_1$ has no components corresponding to zero eigenstates of $F_1$.


\subsection{Convergence of the Carleman embedding}

The central result of this section states that, even if $\alpha(F_1)=0$, we can still ensure converge of the Carleman embedding for systems with linear conservation laws  under appropriate conditions on the real spectral gap $\delta(F_1)$. We first present the result for a system without driving, the proof of which can be found in Appendix~\ref{app:thm_cons}.
\begin{thm}[Carleman error convergence for conservative systems without driving]
    \label{thm:Carleman_cons}
    Let $x(t)$ be the solution to the $N$-dimensional quadratic ODE system
    \begin{align}
    \dot{x}(t) = F_1 x(t)+ F_2x(t)^{\otimes 2},
    \end{align}
    with $M$ linear conserved quantities $\Q^{(m)}$ defined in Eq.~\eqref{eq:conserved_q}, and 
    \begin{align}
        \dot{y}(t) = Ay(t), \quad \quad y(0) = [x(0), x(0)^{\otimes 2}, \dots, x(0)^{\otimes k}]
    \end{align}
    the corresponding Carleman ODE system truncated at level $k$, see Eq.~\eqref{eq:CarlemanODE}. Assume $F_1$ is diagonalized by $Q$, with the only eigenvalues with non-negative real parts being the zero eigenvalues related to conserved quantities. Moreover, suppose that the norm of the transformed solution vector is bounded for all $t \ge 0$:
    \begin{equation}
        \forall t \ge 0: \|Q^{-1} x(t)\| \leq \|\tilde{x}_{\mathrm{max}}\|.
    \end{equation}
    Then, for every rescaling parameter $\gamma$ as in Eq.~\eqref{eq:rescalingrelations} and satisfying
    \begin{equation}
        \gamma = p \gamma_0,\qquad \gamma_0=\frac{e\|Q^{-1}F_2Q^{\otimes 2}\|}{\delta(F_1)\|Q\|},\qquad p\in(0,1],
    \end{equation}
    the Carleman embedding error vectors $\eta_j(t) = x(t)^{\otimes j} - y^{[j]}(t)$ satisfy
    \begin{align}
        \|\eta_j(t)\| &\leq \frac{j}{2(k+1)}p^j R_\delta^{k+1},
    \end{align}
    where
    \begin{equation}\label{eq:R-number-conservative}
        R_\delta := \frac{2e\|\tilde{x}_{\mathrm{max}}\|\|Q^{-1}F_2Q^{\otimes 2}\|}{\delta(F_1)},
    \end{equation}
    with $\delta(F_1)$ denoting the real spectral gap of $F_1$, see Eq.~\eqref{eq:real_spectral}. As a result, $\|\eta(t)\|\to 0$ as $k\to\infty$ when $R_\delta<1$. The result can be improved for the special case of $\|\eta_1\|$ by replacing $2e$ by $4$ in the expression for $R_\delta$ and $e$ by $16$ in the expression for $\gamma_0$. 
\end{thm}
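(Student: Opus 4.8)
The plan is to pass to the eigenbasis of $F_1$, read off from the conservation laws a uniform contraction rate valid on the portion of the Carleman space that the error actually visits, and then run a Duhamel estimate on the error cascade, which collapses to a one-directional chain because $F_0=0$. First I would set $\tilde x=Q^{-1}x$ and $\tilde F_2=Q^{-1}F_2Q^{\otimes 2}$, so that $\dot{\tilde x}=\Lambda\tilde x+\tilde F_2\tilde x^{\otimes 2}$; conjugating the truncated Carleman matrix $A$ by the block map $\bigoplus_j Q^{\otimes j}$ turns it into the Carleman matrix $\tilde A$ built from $\Lambda$ and $\tilde F_2$ with the same block-bidiagonal shape, while the error blocks transform as $\eta_j=Q^{\otimes j}\tilde\eta_j$, so that $\norm{\eta_j}\le\norm{Q}^j\norm{\tilde\eta_j}$; the $\norm{Q}$ this produces is exactly what the $\norm{Q}^{-1}$ inside $\gamma_0$ is there to cancel. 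I would also apply the rescaling $\gamma=p\gamma_0$ of Eq.~\eqref{eq:rescalingrelations}, which sends $\tilde F_2\mapsto\tilde F_2/\gamma$, $\tilde x(0)\mapsto\gamma\tilde x(0)$ and $\norm{\tilde x_{\mathrm{max}}}\mapsto\gamma\norm{\tilde x_{\mathrm{max}}}$. The structural input from conservation is Eq.~\eqref{eq:f2_cons}: the image of $\tilde F_2$ lies in the non-conserved subspace $\mathcal{N}:=\myspan{\ket{M+1},\dots,\ket{N}}$. Let $\mathcal{V}_m\subseteq\mathbb{C}^{N^m}$ be spanned by the basis tensors having at least one slot in $\mathcal{N}$. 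Every off-diagonal block $\tilde A_{m-1,m}$ is a sum of maps whose $\tilde F_2$-output slot lands in $\mathcal{N}$, so $\mathrm{Im}\,\tilde A_{m-1,m}\subseteq\mathcal{V}_{m-1}$, and $\tilde A_{m,m}$ is diagonal, hence preserves $\mathcal{V}_m$. Therefore the truncation source $\zeta$ in Eq.~\eqref{eq:error_evolution} lies in $\bigoplus_m\mathcal{V}_m$, $\tilde A$ maps $\bigoplus_m\mathcal{V}_m$ into itself, and hence $\tilde\eta(t)\in\bigoplus_m\mathcal{V}_m$ for all $t$. On $\mathcal{V}_m$ the diagonal entries of $\tilde A_{m,m}$ are sums $\sum_p\lambda_{i_p}$ with at least one $i_p$ from the non-conserved block, hence of real part $\le-\delta(F_1)$; so $e^{\tilde A_{m,m}\tau}$ restricted to $\mathcal{V}_m$ has norm at most $e^{-\delta(F_1)\tau}$ for $\tau\ge 0$. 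This uniform rate is what replaces the hypothesis $\mu(F_1)<0$ of the stable case.

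Next, since $F_0=0$ every block $A_{j,j-1}$ vanishes and the error equations of Eq.~\eqref{eq:error_evolution} form a one-way cascade: $\dot{\tilde\eta}_j=\tilde A_{j,j}\tilde\eta_j+\tilde A_{j,j+1}\tilde\eta_{j+1}$ for $j<k$, $\dot{\tilde\eta}_k=\tilde A_{k,k}\tilde\eta_k+\tilde A_{k,k+1}\tilde x^{\otimes(k+1)}$, with $\tilde\eta_j(0)=0$. I would unroll this with variation of constants from level $k$ down to level $j$ -- equivalently, noting that the truncated and infinite Carleman propagators agree on the blocks $(j,\ell)$ with $j\le\ell\le k$, one has the exact tail identity $\tilde\eta_j(t)=\sum_{\ell>k}(e^{\tilde A_\infty t})_{j,\ell}\tilde x(0)^{\otimes\ell}$. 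Expanding each block by the Dyson series turns this into an iterated time integral in which, by the previous paragraph, every diagonal propagator except the bottom one contributes a factor $e^{-\delta(F_1)\times(\text{its time slice})}$; the exponents telescope, integration over the time simplex produces one factor $1/\delta(F_1)$ per level descended, and the residual vector norms are absorbed by $\norm{\tilde x(s)}\le\norm{\tilde x_{\mathrm{max}}}$ (or $\norm{\tilde x(0)}\le\norm{\tilde x_{\mathrm{max}}}$ in the tail form).

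The remaining -- and delicate -- ingredient is the weight accumulated from the off-diagonal blocks $\tilde A_{m,m+1}$. The crude bound $\norm{\tilde A_{m,m+1}}\le m\norm{\tilde F_2}$ is fatal: used level by level it produces a product of order $k!/(j-1)!$ (resp.\ a divergent series of factorials in the tail form) that overwhelms both the $\delta(F_1)^{-(k+1-j)}$ from the integrals and the $\gamma^{k+1}$ from the rescaling. Instead I would keep the tensor structure: all the $\tilde\eta_m$ are symmetric tensors, and the contributions to $(e^{\tilde A t})_{j,\ell}$ are organised by ordered binary forests with $j$ trees and $\ell$ leaves -- the combinatorial objects used to diagonalize nonresonant Carleman matrices in Section~\ref{subsubsec:general_structure_carleman_matrix} (for $j=1$ the forest degenerates to a single plane binary tree). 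Grouping the sum by forest shape replaces the factorial count by a Catalan-type count $\le 4^{\,k+1-j}$, so the accumulated off-diagonal weight is $(C\norm{\tilde F_2})^{k+1-j}$ with $C$ absolute. Substituting the rescaled coefficients, collecting the powers of $\gamma$ with $\gamma_0=e\norm{Q^{-1}F_2Q^{\otimes2}}/(\delta(F_1)\norm{Q})$, and recombining with the $\norm{Q}^j$ from the change of basis, one arrives at $\norm{\eta_j(t)}\le\frac{j}{2(k+1)}p^jR_\delta^{k+1}$ with $R_\delta=2e\norm{\tilde x_{\mathrm{max}}}\norm{Q^{-1}F_2Q^{\otimes2}}/\delta(F_1)$; the constant $e$ (and the sharper $4$ and $16$ when only $\norm{\eta_1}$ is needed, where there is a single root) is the precise residue of this counting and of the optimization over $\gamma$. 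Summing, $\norm{\eta(t)}^2=\sum_{j=1}^k\norm{\eta_j(t)}^2\le\tfrac14\big(\sum_{j}p^{2j}\big)R_\delta^{2(k+1)}\to 0$ as $k\to\infty$ precisely when $R_\delta<1$.

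The main obstacle is the step just described. The Carleman generator is strongly non-normal -- its off-diagonal blocks grow linearly with the level -- so a naive operator-norm estimate blows up super-exponentially in the truncation order $k$; the entire theorem hinges on exploiting the finer binary-forest (Poincar\'{e}--Dulac-type) structure of the embedding, or equivalently on constructing an equally carefully weighted Lyapunov norm on $\bigoplus_m\mathcal{V}_m$, to bring this growth down to the geometric rate encoded by the single number $R_\delta$.
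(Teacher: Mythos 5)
Your architecture matches the paper's: pass to the eigenbasis of $F_1$, use the conservation structure to confine the error to the non-conserved tensor slots, unroll the one-way cascade by Duhamel, and control the off-diagonal weights combinatorially via binary forests. You also correctly identify the crux — the naive bound $\|\tilde A_{m,m+1}\|\le m\|\tilde F_2\|$ produces a factorial $k!/(j-1)!$ that destroys the estimate. However, your proposed resolution of that crux does not work as stated. You use only the \emph{uniform} contraction rate $e^{-\delta(F_1)\tau}$ on the subspace $\mathcal{V}_m$ of tensors with at least one non-conserved slot, so each time integral yields a single factor $1/\delta(F_1)$, and you then claim that ``grouping the sum by forest shape replaces the factorial count by a Catalan-type count $\le 4^{k+1-j}$.'' Grouping alone cannot do this: the $k!/(j-1)!$ position sequences $(l_j,\dots,l_k)$ partition into forest-shape classes whose sizes are the numbers of fusion paths (topological orders) of each forest, and these multiplicities sum back to $k!/(j-1)!$. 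Unless each term carries a weight that cancels its class's multiplicity, regrouping leaves the factorial intact, and the bound still diverges. (Your appeal to the nonresonant diagonalization of Section~\ref{subsubsec:general_structure_carleman_matrix} does not rescue this, since the conservative case is maximally resonant — zero eigenvalues sum to zero — so that machinery is unavailable here.)

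The mechanism the paper actually uses, and which your proposal is missing, is that the contraction rate at intermediate level $i$ is not $\delta(F_1)$ but $\Xi_i\,\delta(F_1)$, where $\Xi_i$ is the \emph{number} of excited (non-conserved) slots in the product state at that stage: each excited factor independently contributes $e^{-\delta(F_1)(s_{i-1}-s_i)}$. Integrating then gives $1/(\Xi_i\delta(F_1))$ rather than $1/\delta(F_1)$, so the positional sum becomes the weighted fusion sum $S^j_k=\sum_{l_j,\dots,l_k}\prod_i\Xi_i^{-1}$. The decisive combinatorial identity is that for each fixed forest $\mathbf{T}$ the sum of $\prod_i\Xi_i^{-1}$ over all its fusion paths equals exactly $1$ (at every step the number of ways to grow the forest equals the current number of excitations), whence $S^j_k$ equals the bare forest count, i.e.\ the $j$-fold Catalan convolution $C^j_{k-j+1}\le 2^k e^{k+1-j}j/(k+1)$. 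Without the $\Xi_i^{-1}$ weights this identity is unavailable and your geometric bound $(C\|\tilde F_2\|)^{k+1-j}$ is not justified. To repair your proof you would need to sharpen your $\mathcal{V}_m$ decay estimate to track the multiplicity of excited slots and then prove (or cite) the path-weight identity; as written, the step from the Dyson expansion to the claimed $4^{k+1-j}$ count is a genuine gap.
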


\begin{rmk}
    Similar to the conditions for Lyapunov stability and the matrix $P$, as discussed in Remark~\ref{rmk:knowingPmatrix}, in some situations we may not know $Q$, but have knowledge of, e.g.,~its condition number~$\kappa_Q$. In such situations we can replace the above with weaker conditions, for example
    \begin{equation}
        R_\delta \le \frac{2e \kappa_Q^2 \max_t \|x(t)\|\|F_2\|}{\delta(F_1)} <1,
    \end{equation}
    which is a sufficient condition to give convergence of the Carleman expansion.
\end{rmk}

Next, in Appendix~\ref{app:thm_cons_ext}, we explain how to use the above result to bound the Carleman error for conservative systems with driving, as captured by the following corollary.

\begin{cor}[Carleman error convergence for conservative systems with driving]
\label{cor:Carleman_cons_extension}
    Let $x(t)$ be the solution to the $N$-dimensional quadratic ODE system
    \begin{align}
    \dot{x}(t) = F_0 + F_1 x(t)+ F_2x(t)^{\otimes 2},
    \end{align}
    with $M$ linear conserved quantities $\Q^{(m)}$. Assume $F_1$ is diagonalized by $Q$, with only eigenvalues with non-negative real parts being the zero eigenvalues related to conserved quantities. Moreover, suppose that the norm of the transformed solution vector is bounded for all $t \ge 0$:
    \begin{equation}
        \forall t \ge 0: \|Q^{-1} x(t)\| \leq \|\tilde{x}_{\mathrm{max}}\|.
    \end{equation}
    Then, let $\upsilon \in \mathbb{R}$ and consider an $(N+1)$-dimensional quadratic ODE system
    \begin{align}
    \label{eq:G_system}
    \dot{z}(t) = G_1 z(t)+ G_2 z(t)^{\otimes 2}, \quad \quad
            z(0) = [x(0),\gamma \upsilon],
    \end{align}
    with 
    \begin{equation}
        G_1 = \sum_{i,j=1}^N (F_1)_{i,j} \ketbra{i}{j},\qquad
        G_2 = \sum_{i,j,l=1}^N (F_2)_{i,jl} \ketbra{i}{jl} + \sum_{i=1}^N \frac{(F_0)_i}{\gamma^2 \upsilon^2} \ketbra{i}{N+1,N+1},
    \end{equation}
    and $\gamma$ is a rescaling parameter as in Eq.~\eqref{eq:rescalingrelations}.
    The solution to Eq.~\eqref{eq:G_system} is $z(t)=[x(t),\gamma \upsilon]$. Denote by $y(t)$ the solution to the Carleman ODE system for Eq.~\eqref{eq:G_system} and by $\D_j$ the operation of discarding from $y^{[j]}$ all entries involving the $N+1$ label.\footnote{E.g., if $x=(x_1,x_2)$, the system is extended to $(x_1,x_2,x_3)$. Then $y^{[2]} = (y_{11}, y_{12}, y_{13}, y_{21}, y_{22}, y_{23}, y_{31}, y_{32}, y_{33})$ and $\mathcal{D}_2 y^{[2]} = (y_{11}, y_{12}, y_{21}, y_{22})$.} 
    Then, for every \begin{equation}
        \gamma = p \gamma_0,\qquad \gamma_0=\frac{e\|Q^{-1}F_2Q^{\otimes 2}\|}{\delta(F_1)\|Q\|},\qquad p\in(0,1],
    \end{equation}
    we have
    \begin{equation}
        \|x^{\ot j}(t) -\D_j y^{[j]}(t)\| \leq  \frac{j}{2(k+1)} p^j R_{\delta,a}^{k+1},
    \end{equation}
    where
    \begin{equation}
        R_{\delta,\upsilon}=\frac{2e\sqrt{\|\tilde{x}_{\mathrm{max}}\|^2+\upsilon^2} \max\left\{ \|Q^{-1}{F}_2 Q^{\ot 2}\|,\|Q^{-1}{F}_0\|/\upsilon^2\right\}}{\delta(F_1)}.
    \end{equation}
    The optimal choice of $\upsilon$ that minimizes $R_{\delta,\upsilon}$ is
    \begin{equation}
    \upsilon_*=\sqrt{\|Q^{-1}F_0\|/\|Q^{-1}F_2 Q^{\ot 2}\|}
    \end{equation}
    leading to
    \begin{equation}
        R_{\delta}=  \frac{2e\|Q^{-1}F_2Q^{\otimes 2}\| \sqrt{\|\tilde{x}_{\mathrm{max}}\|^2+\frac{\|Q^{-1} F_0\|}{\|Q^{-1}F_2Q^{\otimes 2}\|}}}{\delta(F_1)}<1
    \end{equation}
    as a sufficient condition for convergence of the Carleman error bound.
\end{cor}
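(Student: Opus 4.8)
The plan is to reduce Corollary~\ref{cor:Carleman_cons_extension} to the already-established undriven result, Theorem~\ref{thm:Carleman_cons}, by a homogenization trick: embed the driven $N$-dimensional system into an \emph{undriven} $(N+1)$-dimensional conservative system whose solution literally encodes $x(t)$ (the auxiliary coordinate being a frozen constant $\gamma\upsilon$), verify that this enlarged system satisfies every hypothesis of Theorem~\ref{thm:Carleman_cons}, invoke that theorem, and then read off the original error by projecting the enlarged Carleman vector onto the coordinates that do not mention the $N+1$ label.

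First I would verify the homogenization itself. Since $\bra{N+1}G_1=0$ and $\bra{N+1}G_2=0$ by construction, the last component of $z(t)$ is conserved, so $z_{N+1}(t)\equiv\gamma\upsilon$; substituting this into the remaining $N$ equations, the term $\sum_i\tfrac{(F_0)_i}{\gamma^2\upsilon^2}\ketbra{i}{N+1,N+1}$ applied to $z^{\otimes 2}$ contributes exactly $F_0$, so $z_{[1:N]}$ solves the original driven ODE with the right initial data and hence $z(t)=[x(t),\gamma\upsilon]$ by uniqueness; this is a direct computation. The technical heart of the proof is then to check the hypotheses of Theorem~\ref{thm:Carleman_cons} for the pair $(G_1,G_2)$: (i) $G_1=F_1\oplus[0]$ is diagonalized by $\hat Q:=Q\oplus[1]$, its spectrum is that of $F_1$ with one extra $0$, so $\alpha(G_1)=0$ and the real spectral gap is unchanged, $\delta(G_1)=\delta(F_1)$; (ii) every zero eigenvalue of $G_1$ comes from a \emph{linear} conserved quantity of the enlarged system -- the vectors $[q^{(m)},0]$ satisfy $[q^{(m)},0]^\dagger G_1=0$ and $[q^{(m)},0]^\dagger G_2=0$ because $q^{(m)\dagger}F_1=q^{(m)\dagger}F_2=q^{(m)\dagger}F_0=0$ from Eq.~\eqref{eq:conservation}, while $\ket{N+1}$ gives a further conserved quantity trivially since $\bra{N+1}G_1=\bra{N+1}G_2=0$; hence there are $M+1$ conserved quantities and no other eigenvalues with non-negative real part; (iii) the solution-norm bound transfers, since $\hat Q^{-1}z(t)=[Q^{-1}x(t),\gamma\upsilon]$ gives $\|\hat Q^{-1}z(t)\|\le\sqrt{\|\tilde x_{\mathrm{max}}\|^2+(\gamma\upsilon)^2}$, which plays the role of $\|\tilde z_{\mathrm{max}}\|$.

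Next I would apply Theorem~\ref{thm:Carleman_cons} to $(G_1,G_2)$, with a rescaling of the enlarged system of the form demanded there, to obtain $\|\eta^{(z)}_j(t)\|\le\frac{j}{2(k+1)}p^j(R_\delta^{(z)})^{k+1}$ for the enlarged Carleman error $\eta^{(z)}_j(t)=z^{\otimes j}(t)-y^{[j]}(t)$, where $R_\delta^{(z)}=\frac{2e\,\|\hat Q^{-1}G_2\hat Q^{\otimes 2}\|\,\sqrt{\|\tilde x_{\mathrm{max}}\|^2+\upsilon^2}}{\delta(F_1)}$ after identifying $\gamma\upsilon$ with the corollary's $\upsilon$. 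The operator $\hat Q^{-1}G_2\hat Q^{\otimes 2}$ splits into the block $Q^{-1}F_2Q^{\otimes 2}$ acting on monomials in the first $N$ variables, a single column $\tfrac{1}{\upsilon^2}Q^{-1}F_0$ attached to the $\ket{N+1,N+1}$ input, and zeros elsewhere, so its norm is governed by $\max\{\|Q^{-1}F_2Q^{\otimes 2}\|,\|Q^{-1}F_0\|/\upsilon^2\}$. To pull the bound back to the original system, note that $\mathcal{D}_j$ is a coordinate projection and $\mathcal{D}_j(z^{\otimes j})=x^{\otimes j}$, so $x^{\otimes j}(t)-\mathcal{D}_j y^{[j]}(t)=\mathcal{D}_j\eta^{(z)}_j(t)$ and hence $\|x^{\otimes j}(t)-\mathcal{D}_j y^{[j]}(t)\|\le\|\eta^{(z)}_j(t)\|$, which is the claimed inequality with $R_{\delta,\upsilon}$. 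Finally, minimizing $\sqrt{\|\tilde x_{\mathrm{max}}\|^2+\upsilon^2}\,\max\{\|Q^{-1}F_2Q^{\otimes 2}\|,\|Q^{-1}F_0\|/\upsilon^2\}$ over $\upsilon>0$ gives the minimizer $\upsilon_\ast=\sqrt{\|Q^{-1}F_0\|/\|Q^{-1}F_2Q^{\otimes 2}\|}$, at which the two arguments of the $\max$ coincide, yielding the stated $R_\delta$.

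The main obstacle I anticipate is not any single step but the bookkeeping of the rescaling degree of freedom across the two systems: the parameter $\gamma$ appears in the coefficient $1/(\gamma^2\upsilon^2)$ defining $G_2$, appears again when Theorem~\ref{thm:Carleman_cons} rescales the enlarged system, and must combine so that all spurious powers of $\gamma$ cancel and $\gamma$, $\upsilon$ enter the final $R_{\delta,\upsilon}$ exactly as stated. A secondary subtlety is that $z^{\otimes 2}$ for $z=[x,\gamma\upsilon]$ does not range over all of $\mathbb{C}^{(N+1)^2}$, so some care is needed to argue that the relevant norm controlling the analysis is genuinely $\max\{\|Q^{-1}F_2Q^{\otimes 2}\|,\|Q^{-1}F_0\|/\upsilon^2\}$ and not something larger by an absolute constant.
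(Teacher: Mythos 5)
Your proposal follows the paper's proof of Corollary~\ref{cor:Carleman_cons_extension} essentially step for step: verify that the homogenized $(N+1)$-dimensional system satisfies every hypothesis of Theorem~\ref{thm:Carleman_cons} (spectrum of $G_1=F_1\oplus[0]$, diagonalization by $Q\oplus 1$, $\delta(G_1)=\delta(F_1)$, the $M+1$ linear conserved quantities, the transferred solution-norm bound), apply that theorem, and pull the bound back through the coordinate projection $\mathcal{D}_j$ using $\mathcal{D}_j(z^{\otimes j})=x^{\otimes j}$, so $\|x^{\otimes j}-\mathcal{D}_j y^{[j]}\|\le\|\eta_j^{(z)}\|$. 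All of that is correct and is exactly the paper's route.

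The one step you assert but do not close is the norm bound $\|\hat Q^{-1}G_2\hat Q^{\otimes 2}\|\lesssim\max\{\|Q^{-1}F_2Q^{\otimes 2}\|,\|Q^{-1}F_0\|/\upsilon^2\}$, and your own ``secondary subtlety'' correctly identifies why this is not automatic: as an operator on all of $\mathbb{C}^{(N+1)^2}$, the block matrix consisting of $A=\tilde F_2$ together with the extra column $b\propto \tilde F_0/\upsilon^2$ has norm $\|AA^\dagger+bb^\dagger\|^{1/2}$, which can exceed the max by a factor of $\sqrt{2}$. The paper closes this gap by exploiting the freedom to choose $F_2$ (hence $G_2$) symmetric under permutation of its two input factors, so that $\|\tilde G_2\|=\max_{\|\xi\|=1}\|\tilde G_2\,\xi^{\otimes 2}\|$; writing $\xi=[\zeta,b]$ with $\|\zeta\|^2=1-|b|^2$, the two contributions then enter as a \emph{convex} combination $(1-|b|^2)\|\tilde F_2\|+|b|^2\|\tilde F_0\|/\upsilon^2\le\max\{\cdot,\cdot\}$, which is what produces the stated $R_{\delta,\upsilon}$ with no extra constant. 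If you instead bound the raw operator norm, your argument still proves convergence but with $R_{\delta,\upsilon}$ inflated by $\sqrt{2}$. The rescaling bookkeeping you worry about is handled the same way in the paper (the factors of $\gamma$ in $G_2$ and in the rescaling of Theorem~\ref{thm:Carleman_cons} cancel as you expect), and the optimization over $\upsilon$ is identical.
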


Although the above results are restricted to systems with \emph{linear} conserved quantities, the underlying analysis can be extended to general polynomial quantities that are conserved under the dynamics. This is captured by the following theorem, which implies a Carleman convergence criterion for systems with quite general conservation laws, and whose proof can be found in Appendix~\ref{apps:Carleman_poly_cons}.
\begin{thm}[Carleman convergence for systems with polynomial conserved quantities]
    \label{thm:Carleman_poly_cons}
        Let $x(t)$ be the solution to the $N$-dimensional quadratic ODE system
    \begin{align}
    \dot{x}(t) = F_1 x(t)+ F_2x(t)^{\otimes 2}, 
    \end{align}
    with $M$ conserved quantities $\Q_r^{(m)}(x)$ that are order--$r$ polynomial functions of the variable $x(t)$. Then, there exists an embedding of the $N$--dimensional nonlinear dynamics into an $O(N^r)$--dimensional quadratic ODE system $\check{x}(t)$ 
    \begin{align}
    \dot{\check{x}}(t) = \check{F}_1 \check{x}(t)+ \check{F}_2\check{x}(t)^{\otimes 2},
    \end{align}
    from which $x(t)$ can be obtained, and for which $\Q_r^{(m)}$ are linearly conserved observables in the embedding system. Hence, an application of Theorem~\ref{thm:Carleman_cons} implies convergence if $\check{R}_\delta<1$, where $\check{R}_\delta$ is the quantity defined in Eq.~\eqref{eq:R-number-conservative} for the embedding system $\check{x}(t)$.
\end{thm}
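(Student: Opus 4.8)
The plan is to use the standard trick for turning polynomial conserved quantities into linear ones: lift the dynamics to the space of tensor powers of $x$ up to order $r$. Concretely, I would set $\check{x}(t) := [x(t), x(t)^{\otimes 2}, \dots, x(t)^{\otimes r}]^T \in \mathbb{C}^{N + N^2 + \dots + N^r}$, which has dimension $O(N^r)$. By exactly the Carleman computation already carried out in Section~\ref{sec:Core-Carleman}, each block $x^{\otimes j}$ evolves as $\frac{d}{dt} x^{\otimes j} = A_{j,j-1} x^{\otimes(j-1)} + A_{j,j} x^{\otimes j} + A_{j,j+1} x^{\otimes(j+1)}$. The only obstruction to this being a closed quadratic system in $\check{x}$ is the top block $j=r$, whose right-hand side involves $x^{\otimes(r+1)}$. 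Here I would observe that $x^{\otimes(r+1)}$ is a quadratic function of the components of $\check{x}$ whenever $r+1 \le 2\lfloor r/2 \rfloor$ is false in general—so instead the clean statement is that $x^{\otimes(r+1)} = x^{\otimes \lceil (r+1)/2 \rceil} \otimes x^{\otimes \lfloor (r+1)/2 \rfloor}$, a product of two blocks of $\check{x}$ (both of order $\le r$ since $\lceil(r+1)/2\rceil \le r$ for $r \ge 1$), hence a genuine quadratic term $\check{F}_2 \check{x}^{\otimes 2}$. Similarly $A_{j,j-1} x^{\otimes(j-1)}$ is linear in $\check{x}$. This defines $\check{F}_1$ (collecting all the $A_{j,j}$ and $A_{j,j-1}$ blocks, which are linear maps among the $\check{x}$-blocks) and $\check{F}_2$ (the single quadratic contribution from the top block), giving $\dot{\check{x}} = \check{F}_1 \check{x} + \check{F}_2 \check{x}^{\otimes 2}$, and $x(t)$ is recovered as the first $N$ components.

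Next I would verify that each polynomial conserved quantity becomes a \emph{linear} conserved quantity of the lifted system. Since $\Q_r^{(m)}(x)$ is an order-$r$ polynomial in $x$, it is a linear combination of monomials $x_{i_1} \cdots x_{i_s}$ with $s \le r$, and each such monomial is a specific component (or a sum of components) of $x^{\otimes s}$, hence a linear functional $\check{q}^{(m)\dagger} \check{x}$ of $\check{x}$. The condition $\frac{d}{dt}\Q_r^{(m)}(x(t)) = 0$ along solutions then reads $\check{q}^{(m)\dagger} \dot{\check{x}}(t) = 0$; to get the structural conditions of Section~\ref{sec:conservative_setting} (namely $\check{q}^{(m)\dagger}\check{F}_1 = 0$ and $\check{q}^{(m)\dagger}\check{F}_2 = 0$), I would run the same homogeneity argument used there: the identity must hold for all initial data, and rescaling $x(0) \mapsto a\, x(0)$ rescales the degree-$j$ block of $\check{x}$ by $a^j$, so the linear-in-$\check{x}$ and quadratic-in-$\check{x}$ pieces of $\check{q}^{(m)\dagger}\dot{\check{x}}$ must vanish separately. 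This establishes that $\Q_r^{(m)}$ are linearly conserved observables for $\check{x}$ in precisely the sense required by Theorem~\ref{thm:Carleman_cons}. One then also needs that the remaining eigenvalues of $\check{F}_1$ have negative real part: the eigenvalues of $\check{F}_1$ are integer combinations $\sum_j \alpha_j \lambda_j(F_1)$ with $\sum_j \alpha_j \le r$ (this is the standard spectral structure of the truncated Carleman matrix of a diagonalizable $F_1$, made precise in Section~\ref{subsec:carleman_matrix_diagonalization}), and such a combination has zero real part only when it is built entirely from the zero eigenvalues of $F_1$, i.e.\ exactly the directions spanning the conserved quantities—so the hypotheses of Theorem~\ref{thm:Carleman_cons} transfer.

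Finally, I would apply Theorem~\ref{thm:Carleman_cons} verbatim to the lifted system $\dot{\check{x}} = \check{F}_1 \check{x} + \check{F}_2 \check{x}^{\otimes 2}$: assuming the transformed-solution-norm bound holds (which is inherited from the assumed bound on $\|Q^{-1}x(t)\|$, since the blocks of $\check{x}$ are tensor powers and $\check F_1$ is diagonalized by the tensor powers of $Q$), the theorem gives exponential decay of the Carleman truncation error for $\check{x}$ whenever $\check{R}_\delta < 1$, with $\check{R}_\delta$ being the quantity from Eq.~\eqref{eq:R-number-conservative} computed with $\check{F}_1, \check{F}_2$ and the diagonalizing matrix $\check{Q}$ of $\check{F}_1$. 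Since $x(t)$ is a subvector of $\check{x}(t)$, and the Carleman embedding of $x$ can be read off from that of $\check x$, convergence for $\check x$ yields the claimed convergence for the original system.

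\textbf{Main obstacle.} The delicate point is the bookkeeping in the first paragraph: one must check that the lifted vector field is genuinely quadratic — i.e.\ that $x^{\otimes(r+1)}$ and all shift terms really do factor through products of two blocks of $\check{x}$ of order $\le r$ — and, more importantly, that the lifted $\check{F}_1$ retains the spectral structure (diagonalizability, and all nonzero eigenvalues in the open left half-plane) demanded by Theorem~\ref{thm:Carleman_cons}, and that the new quadratic term $\check{F}_2$ still preserves the nullspace of $\check{F}_1$ (equivalently, that $\check q^{(m)\dagger}\check F_2 = 0$). Getting $\check q^{(m)\dagger}\check F_2 = 0$ from $\frac{d}{dt}\Q_r^{(m)} = 0$ is the crux and is where the homogeneity/rescaling argument of Section~\ref{sec:conservative_setting} must be adapted carefully to the multi-block grading of $\check{x}$; everything downstream is then a direct invocation of Theorem~\ref{thm:Carleman_cons}.
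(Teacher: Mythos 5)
Your proposal is correct and follows essentially the same route as the paper's proof: lift to the direct sum of tensor powers $x^{\otimes j}$ for $j\le r$, realize the leftover $x^{\otimes(r+1)}$ term as a genuine quadratic coupling between blocks of $\check{x}$, use the homogeneity/grading argument to turn the polynomial conservation law into the linear conditions $\check{q}^{(m)\dagger}\check{F}_1=\check{q}^{(m)\dagger}\check{F}_2=0$, and invoke Theorem~\ref{thm:Carleman_cons}. The only (cosmetic) difference is that you factor $x^{\otimes(r+1)}$ via the balanced split $x^{\otimes\lceil(r+1)/2\rceil}\otimes x^{\otimes\lfloor(r+1)/2\rfloor}$, whereas the paper supports $\check{F}_2$ on the $\ket{1,x}\otimes\ket{r,x^{\otimes r}}$ sector; both choices yield a valid quadratic embedding.
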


Finally, we have the following remark that allows one to extend the above results beyond the conservative setup.

\begin{rmk}
\label{rmk:oscillating}
    While our results in this section are framed in a physical context of conserved linear quantities, they can be straightforwardly generalized to a setup where the linear quantities $\Q^{(m)}$ given by Eq.~\eqref{eq:conserved_q}  are oscillating and satisfying:
    \begin{equation}
        \forall m, \forall t \ge 0:~\frac{d\Q^{(m)}}{dt} = i \omega_m \Q^{(m)},
    \end{equation}
    where $\omega_m\in\mathbb{R}$. This is because the above equation is equivalent to stating that $q^{(m)}$ are the eigenstates of $F_1$ with eigenvalues $i\omega_m$, and that $q^{(m)\dagger} F_2 = 0$. Since in the proof of Theorem~\ref{thm:Carleman_cons} we only used this second property and the fact that all other eigenvalues of $F_1$ have negative real parts, the result still holds. In the definition of $\delta(F_1)$ one simply needs to maximize over all eigenvalues with negative real parts. 
\end{rmk}


\subsection{Discussion and examples}
\label{sec:cons_discussion}

Let us consider a toy two-dimensional system described by a state vector $x=(x_1,x_2)$ evolving according to
\begin{equation}
    \label{eq:toy1}
    \dot{x}(t) = F_1 x(t) + F_2 x(t)^{\otimes 2},
\end{equation}
with
\begin{equation}
    \label{eq:toy2}
    F_1 = \begin{bmatrix}
        0 & 0\\
        0 & -1
    \end{bmatrix},
    \qquad
    F_2 = \begin{bmatrix}
        0 & 0 & 0 & 0 \\
        a & 0 & 0 & -b
    \end{bmatrix},
\end{equation}
where we assume $a,b >0$. Clearly, one cannot directly apply the result for linearly stable systems here, since $\alpha(F_1)=0$. One might, however, use the fact that $x_1$ is conserved, and reduce the problem dimension, looking only at the evolution of $x_2$, which reads
\begin{equation}
    \label{eq:reduced}
    \dot{x}_2(t) = -x_2(t) -bx_2(t)^2+ a x_1(0)^2,
\end{equation}
where we used the fact that $x_1(t)=x_1(0)$. For simplicity, we will also restrict to $x_1(0) >0$ and $x_2(0)\geq 0$, which then enforces $x_2(t) \geq 0$ for all $t$. Now, Eq.~\eqref{eq:reduced} describes a one-dimensional  stable system with $F_1=-1$, $F_2=-b$ and $F_0=a x_1(0)^2$. Thus, we can apply the dissipative result noting that for one-dimensional systems, up to scaling, there is just one positive-definite matrix, $P=1$, so that the $R_P$-number reads
\begin{equation}
    R_P = b x_2(0) + a \frac{x_1(0)^2}{x_2(0)}.
\end{equation}
It is then a straightforward calculation to show that the condition $R_P<1$ is equivalent to
\begin{equation}
    \label{eq:ellipse}
     \frac{x_1(0)^2}{\left(\frac{1}{2\sqrt{ab}}\right)^2} + \frac{\big(x_2(0) - \frac{1}{2b} \big)^2}{\Big(\frac{1}{2b}\Big)^2} \leq 1,
\end{equation}
which is an ellipse in the $(x_1(0),x_2(0))$ plane.

On the other hand, if we use the conservative result with $Q=I$ and $F_0 =0$, we get the $R_\delta$-number given by
\begin{equation}
    R_\delta =  2e \sqrt{a^2+b^2}  \sqrt{x_1(0)^2+\max_t x_2(t)^2}.
\end{equation}
From Eq.~\eqref{eq:reduced} and our assumptions on non-negativity of parameters involved, one can infer that
\begin{equation}
    \max_t x_2(t) = \left\{\begin{array}{cc}
         x_2(0) & \mathrm{for~~~} x_2(0)+bx_2(0)^2-ax_1(0)^2 \geq 0,\\
         \frac{\sqrt{1+4abx_1(0)^2}-1}{2b}   & \mathrm{otherwise.}
    \end{array}\right.
\end{equation}
Now, the point we want to make is that there are cases when $R_\delta<1$ and $R_P\geq 1$. The easiest way to see this is to choose $x_2(0)=0$ and a small enough $x_1(0)>0$. Since this point lies outside the ellipse from Eq.~\eqref{eq:ellipse}, we know that $R_P\geq 1$. At the same time, $R_\delta=0$ for $x_1(0)=0$, so by continuity there exists a range of positive values of $x_1(0)$ for which $R_\delta<1$. We have also seen in Fig.~\ref{fig:R_regions} that neither of the convergence regions is a subregion of the other. We conclude that the Carleman convergence region arising from our conservative result cannot be simply obtained from the result for stable systems. Moreover, the conservative result can be applied directly to a conservative system, without the need of modifying it by decreasing its dimensionality to effectively render it dissipative, something that can bring extra complications for higher-dimensional systems.

To illustrate the case with linear oscillating quantities, let us slightly modify our two-dimensional toy example by changing $F_1$ and $F_2$ matrices as follows:
\begin{equation}
    F_1 = \begin{bmatrix}
        \i\omega & 0\\
        0 & -1
    \end{bmatrix},
    \qquad
    F_2 = \begin{bmatrix}
        0 & 0 & 0 & 0 \\
        a & \frac{b}{2} & \frac{b}{2} & c
    \end{bmatrix},
\end{equation}
As before, we cannot directly apply the dissipative result here, since $\alpha(F_1)=0$, but we may reduce the problem dimensionality. We first observe that
\begin{equation}
    x_1(t) = e^{\i \omega t} x_1(0),
\end{equation}
and so the evolution of $x_2$ reads
\begin{equation}
    \dot{x}_2(t) = (b x_1(0) e^{\i \omega t} -1) x_2(t) + c x_2(t)^2  + a x_1(0)^2 e^{2 \i\omega t}.
\end{equation}
We thus have a one-dimensional system with time-dependent matrices $F_0$ and $F_1$. Hence, even if the parameter $b$ was chosen such that the resulting $F_1$ has $\alpha(F_1)<0$, one would have to extend the existing results for linearly stable systems to be able to deal with the extra complication coming from time-dependence of the evolution matrices.\footnote{For time-dependent systems with negative log-norm one should be able to leverage the results we presented here in combination with Ref.~\cite{an2024fast}. But for general linearly stable systems more work is needed.} Instead, using the result from this section, we could simply calculate the $R_\delta$-number to be
\begin{equation}
    R_\delta = 4\sqrt{a^2 +b^2/2 +c^2}\sqrt{x_1(0)^2 + \max_t x_2(t)^2},
\end{equation}
for which there clearly exists a parameter regime guaranteeing $R_\delta<1$. This simple example also illustrates how one could obtain a Carleman convergence result for general systems with explicit time-dependence. We discuss this in more detail in the next section.

\subsection{Handling time-dependent driving and linear couplings}
\label{sec:carleman_time_dependent}

It turns out that the analysis of the conservative systems can be used to prove convergence for nonlinear systems with time-dependent terms. We now explain how to use the results from this section to derive sufficient convergence conditions for the Carleman embedding for systems with time-dependent driving and linear terms. Consider the following nonlinear system:
\begin{equation}
     \dot{x}(t) = F_0(t) + F_1(t) x(t) + F_2 x(t)^{\ot 2},
\end{equation}
with $F_0(t)$ and $F_1(t)$ decomposed into Fourier modes as
\begin{equation}
    F_0(t) = F_0^{(0)} + \sum_{j=1}^J e^{\i \omega_j t} F_0^{(j)} ,\qquad F_1(t) = F_1^{(0)} +  \sum_{j=1}^J e^{\i \omega_j t} F_1^{(j)},
\end{equation}
where we explicitly separated the time-independent components $F_0^{(0)}$ and $F_1^{(0)}$. Now, the idea is to introduce an ancillary oscillating degree of freedom $z_j$, whose evolution reads $\dot{z}_j=\i\omega_j z_j$, for each of the $J$ Fourier modes. Then, one can build the time-independent $G_1$ and $G_2$ matrices of the enlarged system $[x,z]$, so that the ancillary degrees of freedom $z$ simply oscillate with their eigenfrequencies, but affect the original degrees via appropriate couplings corresponding to Fourier coefficients, effectively acting as $F_0(t)$ and $F_1(t)$. 

More precisely, let us construct the following dynamical system
\begin{equation}
    \frac{d}{dt} \begin{bmatrix}
        x\\
        z
    \end{bmatrix} 
    = 
    G_0 +
    G_1 \begin{bmatrix}
        x\\
        z
    \end{bmatrix}
    +
    G_2
    \begin{bmatrix}
        x\ot x\\
        x\ot z\\
        z \ot x\\
        z \ot z
    \end{bmatrix}
\end{equation}
with $x(0)$ given by the original initial condition, $z(0)$ yet unspecified, and
\begin{subequations}
    \begin{align}
        G_0 &= 
        \begin{bmatrix}
            F_0^{(0)}\\
            0
        \end{bmatrix},\\
        G_1 &= 
        \begin{bmatrix}
            F_1^{(0)} & \frac{F_0^{(1)}}{z_1(0)} \dots \frac{F_0^{(J)}}{z_J(0)} \\
            0 & \i\Omega 
        \end{bmatrix},
        \qquad
        \Omega = \mathrm{diag}(\omega_1,\dots,\omega_J),\\
        G_2 &= 
        \begin{bmatrix}
            F_2 & C & 0 & 0 \\
            0 & 0 &0 & 0
        \end{bmatrix},
        \qquad
        C =  \sum_{j=1}^{J} F_1^{(j)}\ot\frac{\ketbra{j}{j}}{z_j(0)}.
    \end{align}
\end{subequations}
The solutions for the ancillary degrees of freedom are then clearly given by
\begin{equation}
    z_j(t) = z_j(0) e^{\i \omega_j t},
\end{equation}
and so they are linear oscillating quantities described in Remark~\ref{rmk:oscillating}. Using these explicit solutions, the evolution of the original degrees of freedom is given by
\begin{equation}
    \dot{x}(t) = F_0^{(0)} + \sum_{j=1}^{J} e^{\i \omega_j t} F_0^{(j)}  + F_1^{(0)} x(t) + \sum_{j=1}^{J}  e^{\i \omega_j t} F_1^{(j)} x(t) + F_2 x(t)^{\ot 2},
\end{equation}
which is precisely the time-dependent nonlinear system we started with. Thus, we have simulated time-dependent driving and linear couplings of the original system using an enlarged system that is time-independent and satisfies our conditions for the result with linear oscillating quantities. We can thus get a converging Carleman error whenever
\begin{equation}
        R_{\delta}=  \frac{2e\|Q^{-1}G_2Q^{\otimes 2}\| \sqrt{\max_t \|Q^{-1} [x(t),z(t)]^T\|^2+\frac{\|Q^{-1} G_0\|}{\|Q^{-1}G_2Q^{\otimes 2}\|}}}{\delta(G_1)}<1,
    \end{equation}
where $Q$ is a matrix diagonalizing $G_1$ and we can use the fact that $\delta(G_1)=\delta(F_1^{(0)})$ and $|z_j(t)|\leq |z_j(0)|$. Moreover, we can choose $z(0)$ to minimize $R_\delta$.

As a simple illustrative example, let us consider a one-dimensional quadratic system evolving according to
\begin{equation}
    \label{eq:time_dep_toy}
    \dot{x}(t)=\sum_{j=1}^J b_j e^{\i \omega_j t}+\left(\sum_{j=1}^J c_j e^{\i \omega_j t} -1\right) x(t) + a x(t)^2.
\end{equation}
We then construct a $(J+1)$-dimensional system described by $\xi=(x,z_1,\dots,z_J)$ and evolving according to 
\begin{equation}
     \dot{\xi}(t)= G_1 \xi(t) + G_2 \xi(t)^{\ot 2}
\end{equation}
with
\begin{equation}
\!\!\!\!\! G_1 = \begin{bmatrix}
        -1 &\frac{b_1}{z_1(0)} & \frac{b_2}{z_2(0)} & \dots &\frac{b_J}{z_J(0)}\\
        0 & i\omega_1 & 0 & \dots & 0\\
        0 & 0 & i\omega_2 & \dots & 0\\
        \vdots &\vdots &\vdots &\ddots & 0\\
        0 & 0& 0 & \dots &i \omega_J
    \end{bmatrix}\! ,
    ~
    G_2 = \begin{bmatrix}
        a & \frac{c_1}{z_1(0)} & \frac{c_2}{z_2(0)} &\dots  & \frac{c_J}{z_J(0)} & 0 &\dots & 0\\
        0 & 0 & \dots & \dots & \dots & \dots & \dots &0 \\
        \vdots & \vdots & \vdots & \vdots & \vdots & \vdots & \vdots &\vdots \\
        0 & 0 & \dots & \dots & \dots & \dots & \dots &0 
    \end{bmatrix}\! .\!
\end{equation}
Clearly, we have $z_j(t)=z_j(0)e^{\i \omega_j t}$ and with that it is easy to see that the first component of $\xi$, i.e., $x$, evolves according to Eq.~\eqref{eq:time_dep_toy}. For example, if originally there is no driving ($b_j=0$ for all $j$), $F_1$ is already diagonal, so that the diagonalizing matrix $Q$ is an identity and we can straightforwardly calculate:
\begin{equation}
    R_\delta = 2e \sqrt{\sum_{j=1}^J \frac{|c_j|^2}{|z_j(0)|^2}+a^2} \sqrt{\sum_{j=1}^J |z_j(0)|^2 + \max_t |x(t)|^2}.
\end{equation}
Note that we have freedom in choosing $z_j(0)$ to minimize $R_\delta$, e.g., choosing $z_j(0)=\sqrt{c_j}$ we get
\begin{equation}
    R_\delta \leq 2e \sqrt{\sum_{j=1}^J |c_j|+a^2} \sqrt{\sum_{j=1}^J |c_j| + \max_t |x(t)|^2}.
\end{equation}

As a final comment, observe that using an analogous construction, one can model time-dependent nonlinearities of order $l$, described by a matrix $F_l(t)$, using an enlarged system with time-independent nonlinearity of order $l+1$.

\newpage 

\section{Nonresonant systems}
\label{sec:nonresonant_systems}

In this section, we study nonresonant differential equations and analyze the convergence behavior of Carleman linearization applied to them. We begin by developing a framework that allows the associated Carleman matrices to be explicitly diagonalized through similarity transformations. Building on this framework, we derive upper bounds on the error introduced by truncating the Carleman embedding. These bounds reveal that the convergence of the Carleman scheme depends crucially on the geometric configuration of the eigenvalues of $F_1$—specifically, on whether the origin of the complex plane lies within their convex hull. Equivalently, this condition distinguishes whether the eigenvalues fall within the \emph{Poincar\'{e}} or \emph{Siegel domain}. While we prove convergence of the Carleman scheme in the Poincar\'{e} domain, its behavior in the Siegel domain is more subtle and remains an open problem.


\subsection{Setting}
\label{subsec:standard_form_nonresonant_system}

We consider a quadratic ODE system without driving:
\begin{align}
\dot{x}(t) = F_1 x(t) + F_2 x(t)^{\otimes 2},    
\label{eq:general_ode_system}
\end{align}
where $F_2 \in \myC^{N \times N^2}$, $F_1 \in \myC^{N \times N}$, $x(t) \in \myC^N$. We shall assume that $F_1$ is diagonalizable with eigenvalues $\lambda_1,\lambda_2, \dots, \lambda_N$ such that $\realpart{\lambda_i} \le 0$ for each $i \in [N]$. That is, there exists an invertible $N \times N$ matrix $Q$ such that
\begin{align}
F_1=Q\Lambda Q^{-1}
\end{align}
where $\Lambda = \diag{\lambda_1, \lambda_2,\dots, \lambda_N}$. We focus on the case where the system is \emph{nonresonant}, i.e., for all $i \in [N]$, 
\begin{align}
\lambda_i \neq \sum_{j=1}^N \alpha_j \lambda_j,~~~\forall \alpha_j \in \myN_0,~~~\sum_{j=1}^N \alpha_j \ge 2.    
\end{align}
The system is said to be \emph{resonant} if this condition is violated. The eigenvalue $\lambda_i$ is a called a \emph{resonant eigenvalue} of order $m \ge 2$ if there exist nonnegative integers $\alpha_1, \alpha_2, \dots, \alpha_N$ such that  $\lambda_i=\sum_{j=1}^N \alpha_j \lambda_j$ and $\sum_{j=1}^N \alpha_j=m$.

We can now rewrite $F_2$ in the eigenbasis of $F_1$ as 
\begin{align}
    \tilde{F}_2 \defeq Q^{-1} F_2 Q^{\otimes 2}.  
\end{align}
Defining $\tilde{x}(t)=Q^{-1}x(t)$ and substituting this into the system, we obtain
\begin{align}
 \dot{\tilde{x}}(t) = \Lambda \tilde{x}(t) + \tilde{F}_2 \tilde{x}(t)^{\otimes 2}.   
 \label{eq:simple_ode_system}
\end{align}
The solution $x(t)$ of Eq.~\eqref{eq:general_ode_system} and the solution $\tilde{x}(t)$ of Eq.~\eqref{eq:simple_ode_system} are related by the linear transformation $x(t)=Q \tilde{x}(t)$. 

Furthermore, the Carleman linearization of Eq.~\eqref{eq:general_ode_system} is also related to that of Eq.~\eqref{eq:simple_ode_system} in a similar way. More precisely,  the order-$k$ Carleman linearization of the original system~\eqref{eq:general_ode_system} takes the form
\begin{align}
    \frac{d}{dt}\begin{pmatrix}
    y^{[1]} \\
    y^{[2]} \\
    y^{[3]} \\
    \vdots \\
    y^{[k-1]} \\
    y^{[k]}
    \end{pmatrix}=\begin{pmatrix}
    A_{1,1} & A_{1,2} & & & &\\
     & A_{2,2} & A_{2,3} & & &\\
            &  & A_{3,3} & A_{3,4} &       & \\
            &         & \ddots   & \ddots   & \ddots & \\
            &         &         &  &A_{k-1,k-1}  & A_{k-1,k}\\      
            &         &         &         &   & A_{k,k}
    \end{pmatrix}\begin{pmatrix}
    y^{[1]} \\
    y^{[2]} \\
    y^{[3]} \\
    \vdots \\
    y^{[k-1]} \\
    y^{[k]}
    \end{pmatrix},      
    \label{eq:carleman_linearization_def}
\end{align}
where $y(t) = [y^{[1]}(t), y^{[2]}(t), \dots, y^{[k]}(t)]^T$
with $y^{[j]}(t) \in \myC^{N^j}$ for $j \in [k]$, and 
\begin{align}
A_{j,j}&=\sum_{l=0}^{j-1} I^{\otimes l} \otimes F_1 \otimes I^{\otimes j-l-1} \in \myC^{N^j\times N^j},
\label{eq:carleman_linearization_def_ajj}\\ 
A_{j,j+1}&=\sum_{l=0}^{j-1} I^{\otimes l} \otimes F_2 \otimes I^{\otimes j-l-1} \in \myC^{N^{j}\times N^{j+1}}.
\label{eq:carleman_linearization_def_ajj1}    
\end{align}
On the other hand, the order-$k$ Carleman linearization of  Eq.~\eqref{eq:simple_ode_system} is 
\begin{align}
    \frac{d}{dt}\begin{pmatrix}
    \tilde{y}^{[1]} \\
    \tilde{y}^{[2]} \\
    \tilde{y}^{[3]} \\
    \vdots \\
    \tilde{y}^{[k-1]} \\
    \tilde{y}^{[k]}
    \end{pmatrix}=\begin{pmatrix}
    \tilde{A}_{1,1} & \tilde{A}_{1,2} & & & &\\
     & \tilde{A}_{2,2} & \tilde{A}_{2,3} & & &\\
            &  & \tilde{A}_{3,3} & \tilde{A}_{3,4} &       & \\
            &         & \ddots   & \ddots   & \ddots & \\
            &         &         &  &\tilde{A}_{k-1,k-1}  & \tilde{A}_{k-1,k}\\      
            &         &         &         &   & \tilde{A}_{k,k}
    \end{pmatrix}\begin{pmatrix}
    \tilde{y}^{[1]} \\
    \tilde{y}^{[2]} \\
    \tilde{y}^{[3]} \\
    \vdots \\
    \tilde{y}^{[k-1]} \\
    \tilde{y}^{[k]}
    \end{pmatrix},      
    \label{eq:carleman_linearization_simple_system_def}
\end{align}
where $\tilde{y}(t) = [\tilde{y}^{[1]}(t), \tilde{y}^{[2]}(t), \dots, \tilde{y}^{[k]}(t)]^T$
with $\tilde{y}^{[j]}(t) \in \myC^{N^j}$ for $j \in [k]$, and
\begin{align}
\tilde{A}_{j,j}&=\sum_{l=0}^{j-1} I^{\otimes l} \otimes \Lambda \otimes I^{\otimes j-l-1} \in \myC^{N^j\times N^j}, \\    
\tilde{A}_{j,j+1}&=\sum_{l=0}^{j-1} I^{\otimes l} \otimes \tilde{F}_2 \otimes I^{\otimes j-l-1} \in \myC^{N^{j}\times N^{j+1}}.
\end{align}
Observe now that the blocks of $A$ and $\tilde{A}$ are related by the following transformations:
\begin{align}
A_{i,j}&=Q^{\otimes i} \tilde{A}_{i,j} (Q^{-1})^{\otimes j},~~~\forall 1\le i,j \le k,
\label{eq:aij_tildeaij_relation}
\end{align}
where $A_{i,j}=\tilde{A}_{i,j}=0$ for all $(i,j)\in [k] \times [k]$ satisfying $j-i \not\in \{0,1\}$. This enables us to reduce the diagonalization of $A$ to that of $\tilde{A}$. Moreover, the solution ${y}(t) = [{y}^{[1]}(t), {y}^{[1]}(t), \dots, {y}^{[k]}(t)]^T$ of Eq.~\eqref{eq:carleman_linearization_def} and
the solution $\tilde{y}(t) \defeq [\tilde{y}^{[1]}(t), \tilde{y}^{[1]}(t), \dots, \tilde{y}^{[k]}(t)]^T$ of Eq.~\eqref{eq:carleman_linearization_simple_system_def} are related as follows:
\begin{align}    
y^{[i]}(t) = Q^{\otimes i} \tilde{y}^{[i]}(t), ~~~\forall i \in [k]. 
\end{align}
This allows us to convert an error bound for  Eq.~\eqref{eq:simple_ode_system} into one for Eq.~\eqref{eq:general_ode_system}, using the inequality:
\begin{align}
\norm{{y}^{[i]}(t)-{x}(t)^{\otimes i}} \le  \norm{Q}^i \cdot  \norm{\tilde{y}^{[i]}(t)-\tilde{x}(t)^{\otimes i}}, ~~~\forall i \in [k].
\end{align}


\subsection{Poincar\'{e}-Dulac normal forms and Carleman linearization}
\label{subsec:poincare_dulac_carleman_relation}

Understanding the local behavior of nonlinear differential equations near fixed points often involves reducing the system to a simpler form. The Poincar\'{e}–Dulac theorem \cite{arnold2012geometrical} facilitates this by using formal series transformations to bring the system into a normal form that highlights the role of resonant terms. In contrast, Carleman linearization embeds the nonlinear system into an infinite-dimensional linear system, making it amenable to linear techniques. In this section, we explain how these two approaches are connected, particularly through the influence of resonances and the structure of the Carleman matrix in its Jordan normal form.

\begin{thm}[Poincar\'{e}-Dulac theorem]
    Consider an ODE system of the form
    \begin{align}
        \dot{x}=\Lambda x + g(x),   
    \end{align}
    where $x\in\myC^N$, $\Lambda=\diag{\lambda_1, \lambda_2, \dots, \lambda_N}$, 
    and $g(x)$ consists of  higher-degree terms in $x_1,x_2,\dots,x_N$. If this system is nonresonant, then it can be transformed into a canonical form  
    \begin{align}
        \dot{y}=\Lambda y + \myO{\norm{y}^{k+1}}    
    \end{align}
    via a change of variables $y_i=x_i+p_i(x)$, where $p_i$ is a polynomial of degree $\le k$, for $i \in [N]$. \\
    If the system is resonant, the canonical form becomes 
    \begin{align}
        \dot{y}=\Lambda y + w(y) + \myO{\norm{y}^{k+1}}    ,
    \end{align}
    where $w(y)$ consists only of the resonant monomials.
\end{thm}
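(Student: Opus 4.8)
The statement is the \emph{formal} Poincar\'{e}--Dulac normal form, in which only polynomial changes of variables of bounded degree appear, so the plan is to prove it by induction on the polynomial degree, eliminating one homogeneous layer at a time through a near-identity substitution governed by the \emph{homological equation}. Write $g(x)=\sum_{m\ge 2}g_m(x)$ with $g_m$ homogeneous of degree $m$, and for each $m$ let $H_m$ denote the finite-dimensional space of $\myC^N$-valued homogeneous polynomials of degree $m$ in $x_1,\dots,x_N$, with basis the monomial vector fields $x^\alpha e_i$, where $\alpha\in\myN_0^N$ with $\sum_j\alpha_j=m$, $i\in[N]$, $e_i$ the $i$-th coordinate vector, and $x^\alpha=\prod_j x_j^{\alpha_j}$. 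The key object is the linear operator $\mathrm{ad}_\Lambda\colon H_m\to H_m$ defined by $(\mathrm{ad}_\Lambda h)(x):=\Lambda\,h(x)-Dh(x)\,\Lambda x$. A one-line computation on basis elements gives $\mathrm{ad}_\Lambda(x^\alpha e_i)=\big(\lambda_i-\sum_j\alpha_j\lambda_j\big)\,x^\alpha e_i$, so $\mathrm{ad}_\Lambda$ is diagonal with eigenvalue equal to the resonance defect $\lambda_i-\sum_j\alpha_j\lambda_j$ --- exactly the quantity appearing in the non-resonance condition and in the gap $\Delta(F_1)$.

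\textbf{The inductive step.} Suppose that, after a sequence of changes of variables, the system has been brought to $\dot x=\Lambda x+\sum_{2\le m<\ell}r_m(x)+\sum_{m\ge\ell}\tilde g_m(x)$, where each $r_m$ is already in the target form (identically zero in the nonresonant case; a combination of resonant monomials in the resonant case). Apply the near-identity substitution $y=x+h_\ell(x)$ with $h_\ell\in H_\ell$ still to be chosen, invert it to order $\ell$, and substitute. A routine (and standard) bookkeeping of homogeneous parts shows that the terms of degree $<\ell$ are untouched, the degree-$\ell$ term becomes $\tilde g_\ell-\mathrm{ad}_\Lambda h_\ell$, and only terms of degree $>\ell$ are otherwise modified. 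In the nonresonant case, $\lambda_i\ne\sum_j\alpha_j\lambda_j$ for every $|\alpha|=\ell$, so $\mathrm{ad}_\Lambda$ is invertible on $H_\ell$ and the choice $h_\ell=\mathrm{ad}_\Lambda^{-1}\tilde g_\ell$ annihilates the degree-$\ell$ term outright. In the resonant case, decompose $H_\ell=\operatorname{Im}(\mathrm{ad}_\Lambda)\oplus\ker(\mathrm{ad}_\Lambda)$, with $\ker(\mathrm{ad}_\Lambda)$ spanned precisely by the resonant monomials $x^\alpha e_i$ satisfying $\lambda_i=\sum_j\alpha_j\lambda_j$; choose $h_\ell$ so that $\mathrm{ad}_\Lambda h_\ell$ equals the $\operatorname{Im}$-component of $\tilde g_\ell$, leaving a residual $r_\ell$ equal to the $\ker$-component of $\tilde g_\ell$, a sum of resonant monomials. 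Iterating over $\ell=2,3,\dots,k$, then composing the successive near-identity maps and truncating the composition at degree $k$, produces a single change of variables $y_i=x_i+p_i(x)$ with $\deg{p_i}\le k$ that brings the system to $\dot y=\Lambda y+\myO{\norm{y}^{k+1}}$ in the nonresonant case, and to $\dot y=\Lambda y+w(y)+\myO{\norm{y}^{k+1}}$ with $w=\sum_{2\le m\le k}r_m$ consisting only of resonant monomials in the resonant case.

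\textbf{Main obstacle.} The delicate point is not any individual step but the bookkeeping of the \emph{cascade}: the substitution at degree $\ell$ freezes every degree $\le\ell$ but injects new contributions into each degree $>\ell$, so one must verify carefully that (i) the later substitutions never disturb an already-normalized layer, and (ii) only finitely many steps are needed, because everything of degree $>k$ is swept into the $\myO{\norm{y}^{k+1}}$ remainder --- so, in contrast with the analytic Poincar\'{e}/Siegel problem, no convergence estimate for an infinite series of transformations is required here. I would also emphasize the connection that motivates this section: $\mathrm{ad}_\Lambda$ restricted to $H_m$ is, up to a change of basis, exactly the operator whose spectrum controls the block-diagonalization of the order-$m$ part of the Carleman matrix $\tilde A$, and its kernel --- the resonant monomials --- is precisely what obstructs diagonalizing $\tilde A$. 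This is why the same non-resonance hypothesis reappears when we explicitly diagonalize the Carleman matrix later in this section.
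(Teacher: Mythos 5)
Your proof is correct: it is the standard formal Poincar\'{e}--Dulac argument via the homological equation, with the right computation $\mathrm{ad}_\Lambda(x^\alpha e_i)=(\lambda_i-\sum_j\alpha_j\lambda_j)x^\alpha e_i$, the correct invertibility/kernel dichotomy on each $H_\ell$ (which is clean here precisely because $\Lambda$ is diagonal, so $\mathrm{ad}_\Lambda$ is diagonal in the monomial basis and $H_\ell=\operatorname{Im}\oplus\ker$ is immediate), and the right observation that only finitely many degree-by-degree steps are needed since everything above degree $k$ is absorbed into the remainder. The paper, however, does not prove the theorem this way --- it cites it as a classical result and instead \emph{derives} it from the explicit diagonalization $A=VDV^{-1}$ of the truncated Carleman matrix: setting $z=V^{-1}[x,x^{\otimes 2},\dots,x^{\otimes k}]^T$ gives $\dot z=Dz+\myO{\norm{x}^{k+1}}$, and the first block $z^{[1]}_i=x_i+p_i(x)$ is exactly the normalizing change of variables. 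The two routes are essentially dual: your inductive elimination layer by layer in polynomial degree corresponds to the paper's layer-by-layer construction of the blocks $\tilde V_{i,j}$ via Eq.~\eqref{eq:condition_vij}, and your operator $\mathrm{ad}_\Lambda$ restricted to $H_m$ is (up to reindexing monomials by tensor indices) the Sylvester-type operator $X\mapsto X\Lambda_m-\Lambda_1 X$ whose entrywise inverses are the $N_m$ matrices. What your approach buys is brevity and directness for the theorem as stated, including the resonant case; what the paper's approach buys is explicit, globally assembled formulas for the full transformation $V$ and crucially for $V^{-1}$ (the binary-forest expansions), which are what the subsequent error analysis actually needs and which the classical induction does not produce in usable closed form. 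Your closing remark correctly identifies this connection, so no gap to report.
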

Here a monomial $x_1^{m_1}x_2^{m_2}\dots x_N^{m_N} e_s$ is \emph{resonant} if and only if $\lambda_s=\sum_{j=1}^N m_j \lambda_j$, where $s \in [N]$, $m_j \in \myN_0$, $\sum_{j=1}^N m_j \ge 2$, and $e_s \in \mathbb{R}^N$ is the vector with 
$1$ in the $s$-th entry and $0$ elsewhere.

As detailed in Ref.~\cite{tsiligiannis1989normal}, the Poincar\'{e}-Dulac theorem is closely linked to the Jordan normal form of the Carleman matrix associated with the system. Specifically, in Section~\ref{subsec:carleman_matrix_diagonalization}, 
we prove that when the system is nonresonant, the order-$k$ Carleman matrix $A$ can be diagonalized via a similarity transformation:
\begin{align}
    A=V D V^{-1},
\end{align}
where
\begin{align}
    D=\begin{pmatrix}
        \Lambda_1  &           &        & \\
                 & \Lambda_2 &        & \\
                 &           & \ddots & \\
                 &           &      &\Lambda_k \end{pmatrix},~~~
        V=\begin{pmatrix}
        I  & {V}_{1,2}   &  {V}_{1,3} & \dots  & {V}_{1,k} \\
                 & I^{\otimes 2}   &  {V}_{2,3} & \dots & {V}_{2,k} \\
                 &           & \ddots &   \ddots       & \vdots \\
                 &           &          & I^{\otimes (k-1)} & {V}_{k-1, k} \\
                &           &          &  & I^{\otimes k} \\\end{pmatrix},     
\end{align}
with
\begin{align}
    \Lambda_j=\diag{\sum_{l=1}^j \lambda_{i_l}:~i_1, i_2,\dots,i_j\in [N]}, ~~~\forall j \in [k].
\end{align}

Let $u=[x, x^{\otimes 2}, \dots, x^{\otimes k}]^T$ and define $z=V^{-1}u$. Then we have
\begin{align}
    \dot{z}=Dz+\myO{\norm{x}^{k+1}},
\end{align}
which follows from
\begin{equation}
\begin{aligned}
    \dot{z}&=V^{-1}\dot{u}\\
    &= V^{-1}\lrb{Au+\myO{\norm{x}^{k+1}}} \\
    &=DV^{-1}u+\myO{\norm{x}^{k+1}}\\
    &=Dz+\myO{\norm{x}^{k+1}}.    
\end{aligned}
\end{equation}
Writing $z=[z^{[1]},z^{[2]},\dots,z^{[k]}]^T$, where $z^{[j]} \in \myC^{N^j}$, we obtain
\begin{align}
\dot{z}^{[1]}=\Lambda z^{[1]}+\myO{\norm{z^{[1]}}^{k+1}},    
\end{align}
where each component $z^{[1]}_i=x_i+p_i(x)$, with  $\operatorname{deg}(p_i) \le k$, for $i \in [N]$.

Therefore, the coordinate transformation $y_i = x_i + p_i(x)$ in the Poincar\'{e}–Dulac theorem corresponds to (part of) the change of basis defined by $V^{-1}$ which  diagonalizes the Carleman matrix~$A$. Also, note that the Carleman matrix remains diagonalizable for certain resonant systems. In other cases, resonance leads to defective Jordan blocks, which manifest as resonant monomials in the canonical form of the system. This perspective provides an alternative interpretation of the Poincar\'{e}–Dulac theorem through the lens of Carleman linearization.


\subsection{The diagonalizing structure for Carleman matrices}
\label{subsec:carleman_matrix_diagonalization}

We now develop a novel framework that compactly characterizes each block matrix of $V$ and $V^{-1}$, which together diagonalize the Carleman system. This framework serves as the foundation for the error analysis of the Carleman scheme applied to nonresonant systems presented in Section~\ref{subsec:error_analysis_nonresonant_systems}. This framework allows us to constructively prove that the Carleman matrices associated with nonresonant differential equations are diagonalizable via similarity transformations, and specify conditions under which resonant systems are also diagonalizable. Every diagonalizable case then provides a criterion for which exponential convergence holds for the Carleman method.

The central result of this framework is given, at the high-level, by the following concise expressions for diagonalizable Carleman systems:
\begin{align}
    V_{i,j} & = \sum_{\mathbf{T}\in \mathcal{T}^i_j} f(\mathbf{T}), 
    \label{eq:V_structure}\\
    (V^{-1})_{i,j} & = (-1)^{j-i}\sum_{\mathbf{T}\in \mathcal{T}^i_j} g(\mathbf{T}),
    \label{eq:V_Vinverse_structure}
\end{align}
for every pair of indices $(i,j$) for which $1\le i\le j \le k$. All other blocks are zero. Here the sum is taken over all \emph{binary forests} $\mathbf{T}$ consisting of  $i$ trees and a total of $j$ leaves. The functions $f$ and $g$ are operator-valued functions that depend on both $F_1$ and $F_2$. 
For example the matrix $(V^{-1})_{2,4}$ is given as follows. First, note that there are 5~distinct binary forests with 2~trees and a total of 4~leaves. By summing over these forests, we obtain
\begingroup
\setlength{\jot}{12pt} 
\tikzset{
	level distance=16pt,
	every tree node/.style = {align=center, anchor=north,nodecolor},
	edge from parent/.style = {draw,thick,edgecolor},
	edge from parent path = {
		([yshift=4.5pt]\tikzparentnode.south) -- 
		([yshift=-3.5pt]\tikzchildnode.north)
	}
}
\begin{align}
    \lrb{V^{-1}}_{2,4}=&
	g\left(
	\raisebox{-0.6cm}{
		\begin{tikzpicture}[scale=1]
			\Tree [.$\bt$ 
			[.$\bt$ [.$\bt$ ] [.$\bt$ ] ] 
			[.$\bt$ ] 
			];		
			\node at (0,0) {$a$};
			\node at (-0.6,-0.75) {$b$};
			\node at (0.6,-0.75) {$c$};
			\node at (-0.9,-1.35) {$d$};
			\node at (0.15,-1.35) {$e$};
		\end{tikzpicture}
		\hspace{\treeTree}
		\begin{tikzpicture}[scale=1]
			\Tree [.$\bt$  ];
			\node at (0,0.05) {$f$};
			\node at (0,-0.2) {\phantom{$f$}};
		\end{tikzpicture}
	}
	\right)
	+
	g\left(
	\raisebox{-0.6cm}{
		\begin{tikzpicture}[scale=1]
			\Tree [.$\bt$ 
			[.$\bt$ ] 
			[.$\bt$ [.$\bt$ ] [.$\bt$ ] ] 
			];
			\node at (0,0) {$a$};
			\node at (-0.6,-0.75) {$b$};
			\node at (0.6,-0.75) {$c$};
			\node at (-0.15,-1.35) {$d$};
			\node at (0.9,-1.35) {$e$};
		\end{tikzpicture}
		\hspace{\treeTree}
		\begin{tikzpicture}[scale=1]
			\Tree [.$\bt$  ];
			\node at (0,0.05) {$f$};
			\node at (0,-0.2) {\phantom{$f$}};
		\end{tikzpicture}
	}
	\right)
	+
	g\left(
	\raisebox{-0.6cm}{
		\begin{tikzpicture}[scale=1]
			\Tree [.$\bt$  ];
			\node at (0,0.05) {$d$};
			\node at (0,-0.2) {\phantom{$f$}};
		\end{tikzpicture}
		\hspace{\treeTree}
		\begin{tikzpicture}[scale=1]
			\Tree [.$\bt$ 
			[.$\bt$ [.$\bt$ ] [.$\bt$ ] ] 
			[.$\bt$ ] ];
			\node at (0,0) {$a$};
			\node at (-0.6,-0.75) {$b$};
			\node at (0.6,-0.75) {$c$};
			\node at (-0.9,-1.35) {$e$};
			\node at (0.15,-1.35) {$f$};
		\end{tikzpicture}
	}
	\right)
	\nonumber
	\\
	&
	+
	g\left(
	\raisebox{-0.6cm}{
		\begin{tikzpicture}[scale=1]
			\Tree [.$\bt$  ];
			\node at (0,0.05) {$d$};
			\node at (0,-0.2) {\phantom{$f$}};
		\end{tikzpicture}
		\hspace{\treeTree}
		\begin{tikzpicture}[scale=1]
			\Tree [.$\bt$ 
			[.$\bt$ ] 
			[.$\bt$ [.$\bt$ ] [.$\bt$ ] ] 
			];
			\node at (0,0) {$a$};
			\node at (-0.6,-0.75) {$b$};
			\node at (0.6,-0.75) {$c$};
			\node at (-0.15,-1.35) {$e$};
			\node at (0.9,-1.35) {$f$};
		\end{tikzpicture}
	}
	\right)
	+
	g\left(
	\raisebox{-0.35cm}{
		\begin{tikzpicture}[scale=1]
			\Tree [.$\bt$ [.$\bt$ ] [.$\bt$ ] ];
			\node at (0,0.05) {$a$};
			\node at (-0.5,-0.75) {$c$};
			\node at (0.5,-0.75) {$d$};
		\end{tikzpicture}
		\hspace{\treeTree}
		\begin{tikzpicture}[scale=1]
			\Tree [.$\bullet$ [.$\bt$ ] [.$\bt$ ] ];
			\node at (0,0.05) {$b$};
			\node at (-0.5,-0.75) {$e$};
			\node at (0.5,-0.75) {$f$};
		\end{tikzpicture}
	}
	\right).
\end{align}
\endgroup

From this graphical specification of $V_{i,j}$ and $(V^{-1})_{i,j}$, we derive corresponding norm bounds for these operators by leveraging structural properties of the spectrum of $F_1$. In particular, for nonresonant systems, the scaling behavior of $\|(V^{-1})_{i,j}\|$, as determined through the binary forest formalism, plays an essential role in establishing the Carleman convergence criterion presented later. 
The fact that binary trees/forests arise is due to the quadratic term $F_2$, which fuses two copies of the system into one. However, crucial to our analysis is the structure of the spectrum of $F_1$. More precisely, we find that the following two properties play important roles: (a) whether $F_1$ has resonant or nonresonant eigenvalues, and (b) whether the spectrum of $F_1$ lies in the Poincar\'e Domain (i.e. the origin is outside the convex hull of the eigenvalues) or Siegel Domain (i.e. the origin lies inside the convex hull of the eigenvalues). 

The discussion above provides a high-level overview of the theory we develop. We now begin the analysis by first specifying the recursive structure that the block matrices must satisfy, and then gradually build up the necessary framework to prove Eqs.~\eqref{eq:V_structure} and \eqref{eq:V_Vinverse_structure}.

Let $\tilde{A}$ denote the order-$k$ Carleman matrix associated with the ODE system~\eqref{eq:simple_ode_system}, i.e. the coefficient matrix of the Carleman system~\eqref{eq:carleman_linearization_simple_system_def}. We will show that when the system is nonresonant, $\tilde{A}$ is diagonalizable via a similarity transformation. We first note that $\tilde{A}$ is block upper triangular. So its eigenvalues are the union of the eigenvalues of $\tilde{A}_{1,1}, \tilde{A}_{2,2}, \dots, \tilde{A}_{k,k}$. The no-resonance condition ensures that $\tilde{A}_{1,1}$ and $\tilde{A}_{j,j}$ do not share common eigenvalues for each $j\ge 2$. But it is possible that $\tilde{A}_{j_1, j_1}$ and $\tilde{A}_{j_2, j_2}$ share common eigenvalues for $j_1, j_2 \ge 2$. Despite this, $\tilde{A}$ remains diagonalizable under the no-resonance condition. 

In fact, we will prove this statement by explicitly diagonalizing the matrix $\tilde{A}$. Specifically, we will construct an invertible matrix $\tilde{V}$ such that 
\begin{align}
    \tilde{A}=\tilde{V} D \tilde{V}^{-1},    
\end{align}
where 
\begin{align}
    D=\begin{pmatrix}
        \Lambda_1  &           &        & \\
                 & \Lambda_2 &        & \\
                 &           & \ddots & \\
                 &           &      &\Lambda_k \end{pmatrix},~~~
        \tilde{V}=\begin{pmatrix}
        \tilde{V}_{1,1}  & \tilde{V}_{1,2}   &  \tilde{V}_{1,3} & \dots  & \tilde{V}_{1,k} \\
                 & \tilde{V}_{2,2}   &  \tilde{V}_{2,3} & \dots & \tilde{V}_{2,k} \\
                 &           & \ddots &   \ddots       & \vdots \\
                 &           &          & \tilde{V}_{k-1,k-1} & \tilde{V}_{k-1, k} \\
                &           &          &  & \tilde{V}_{k, k} \\\end{pmatrix}, 
                \label{eq:def_D_tildeV}
\end{align}
with
\begin{align}
    \Lambda_j=\diag{\sum_{l=1}^j \lambda_{i_l}:~i_1, i_2,\dots,i_j\in [N]}, ~~~\forall j \in [k],    
\end{align}
and $\tilde{V}_{i,j} \in \myC^{N^i \times N^j}$ for $1 \le i \le j \le k$.

An explicit diagonalization of $\tilde{A}$ can be converted into one for $A$ using Eq.~\eqref{eq:aij_tildeaij_relation}. Specifically, we have
\begin{align}
    A=V D V^{-1},    
\end{align}
where $D$ is as defined earlier, and
\begin{align}
        {V}=\begin{pmatrix}
        {V}_{1,1}  & {V}_{1,2}   &  {V}_{1,3} & \dots  & {V}_{1,k} \\
                 & {V}_{2,2}   &  {V}_{2,3} & \dots & {V}_{2,k} \\
                 &           & \ddots &   \ddots       & \vdots \\
                 &           &          & {V}_{k-1,k-1} & {V}_{k-1, k} \\
                &           &          &  & {V}_{k, k} \\\end{pmatrix},     
\end{align}
with 
\begin{align}
    V_{i,j}=Q^{\otimes i} \tilde{V}_{i,j},\quad \forall 1\le i\le j\le k.
\end{align}

Observe that $\tilde{A}=\tilde{V}D\tilde{V}^{-1}$ is equivalent to 
\begin{align}
    \tilde{A}\tilde{V}=\tilde{V}D,    
\end{align}
provided that $\tilde{V}$ is invertible. Using the block-structures of $\tilde{A}$, $D$ and $\tilde{V}$, we can reduce this condition to
\begin{align}
\label{eq:V_recurrence}
    \tilde{V}_{i,j}\Lambda_j-\Lambda_i \tilde{V}_{i,j} =\tilde{A}_{i,i+1}\tilde{V}_{i+1,j},~~~\forall 1\le i<j \le k,
\end{align}
as well as
\begin{align}
    \Lambda_i \tilde{V}_{i,i}=\tilde{V}_{i,i}\Lambda_i,~~~\forall 1\le i\le k.
\end{align}
We will set $\tilde{V}_{i,i}=I^{\otimes i}$ for each $i$, and then the second condition is automatically fulfilled. Furthermore, since $\tilde{V}$ is an upper-triangular matrix with ones on the diagonal, all its eigenvalues are equal to $1$, and therefore $\tilde{V}$ is invertible.
Thus, it remains to find the matrices $\tilde{V}_{i,j}$ for $i<j$ which satisfy the first condition.

\subsubsection{A guiding example}
\label{subsubsec:guiding_example}

Before presenting the expressions for the matrices $\tilde{V}_{i,j}$, it is helpful to illustrate the underlying idea behind their generation through a simple example. Let us consider the case $k=3$. We need to find $\tilde{V}_{1,2}$, $\tilde{V}_{2,3}$ and $\tilde{V}_{1,3}$ such that
\begin{align}
        \begin{pmatrix}
        \Lambda_1  & \tilde{A}_{1,2}   &    \\
                 & \Lambda_2   &  \tilde{A}_{2,3}  \\
                 &           & \Lambda_3  
                 \end{pmatrix}
        \begin{pmatrix}
        I  & \tilde{V}_{1,2}   &  \tilde{V}_{1,3}  \\
                 & I   &  \tilde{V}_{2,3}  \\
                 &           &  I  
                 \end{pmatrix}
                 =\begin{pmatrix}
        I  & \tilde{V}_{1,2}   &  \tilde{V}_{1,3}  \\
                 & I   &  \tilde{V}_{2,3}  \\
                 &           &  I  
                 \end{pmatrix}
                 \begin{pmatrix}
        \Lambda_1  &     &    \\
                 & \Lambda_2   &    \\
                 &           & \Lambda_3  
                 \end{pmatrix},
\end{align}                 
which is equivalent to
\begin{align}
    -\Lambda_1 \tilde{V}_{1,2} + \tilde{V}_{1,2} \Lambda_2 &= \tilde{A}_{1,2}, \label{eq:diag_ex_1}\\    
    -\Lambda_1 \tilde{V}_{1,3} + \tilde{V}_{1,3} \Lambda_3 &= \tilde{A}_{1,2}\tilde{V}_{2,3},\label{eq:diag_ex_2}\\
    -\Lambda_2 \tilde{V}_{2,3} + \tilde{V}_{2,3} \Lambda_3 &= \tilde{A}_{2,3}.\label{eq:diag_ex_3}
\end{align}

We first derive the expression for $\tilde{V}_{1,2}$ using Eq.~\eqref{eq:diag_ex_1}. Note that 
\begin{align}
    -\Lambda_1 \tilde{V}_{1,2} + \tilde{V}_{1,2} \Lambda_2 = M_2 \odot \tilde{V}_{1,2},    
\end{align}
where 
\begin{equation}
    M_2=\sum_{i=1}^N \sum_{i_1,i_2=1}^N \lrb{\lambda_{i_1}+\lambda_{i_2}-\lambda_{i}} \ketbra{i}{i_1,i_2}    
\end{equation}
has non-zero entries only by the no-resonance condition, and $\odot$ is the Hadamard (entrywise) product of matrices. Then, since $\tilde{A}_{1,2}=\tilde{F}_2$, we get
\begin{align}
    \tilde{V}_{1,2}=N_2 \odot \tilde{F}_2,    
\end{align}
where
\begin{align}
    N_2=\sum_{i=1}^N \sum_{i_1,i_2=1}^N \frac{1}{\lambda_{i_1}+\lambda_{i_2}-\lambda_{i}} \ketbra{i}{i_1,i_2}
\end{align}
is well-defined.

Next, we derive the expression for $\tilde{V}_{2,3}$ using Eq.~\eqref{eq:diag_ex_3}. This is slightly more complicated than the procedure for $\tilde{V}_{1,2}$. Note that
\begin{align}
    -\Lambda_2 \tilde{V}_{2,3} + \tilde{V}_{2,3} \Lambda_3 = M_{2,3} \odot \tilde{V}_{2,3},    
\end{align}
where 
\begin{align}
M_{2,3}=\sum_{i_1,i_2=1}^N \sum_{j_1,j_2,j_3=1}^N \lrb{\lambda_{j_1}+\lambda_{j_2}+\lambda_{j_3}-\lambda_{i_1}-\lambda_{i_2}} \ketbra{i_1,i_2}{j_1,j_2,j_3}.    
\end{align}
The entries of $M_{2,3}$ might be zero, and it seems impossible to apply the previous strategy to this scenario. However, we can perform a similar procedure by exploiting the structure of $\tilde{A}_{2,3}$, i.e.
\begin{align}
\tilde{A}_{2,3}=\tilde{F}_2 \otimes I + I \otimes \tilde{F}_2.    
\end{align}
We aim to find two matrices $\tilde{V}_{2,3}^{(1)}$ and $\tilde{V}_{2,3}^{(2)}$ such that
\begin{align}
M_{2,3}\odot \tilde{V}_{2,3}^{(1)} &= \tilde{F}_2 \otimes I, \\   
M_{2,3}\odot \tilde{V}_{2,3}^{(2)} &= I \otimes \tilde{F}_2. 
\end{align}
Then, it suffices to set $\tilde{V}_{2,3}=\tilde{V}_{2,3}^{(1)}+\tilde{V}_{2,3}^{(2)}$. To find the desired $\tilde{V}_{2,3}^{(1)}$, we note that
\begin{align}
\tilde{F}_2 \otimes I=\sum_{i=1}^N \sum_{j_1,j_2=1}^N \sum_{l=1}^N \bra{i}\tilde{F}_2\ket{j_1,j_2} \ketbra{i,l}{j_1,j_2,l},    
\end{align}
and
\begin{align}
M_{2,3}\odot \tilde{V}_{2,3}^{(1)} = \sum_{i_1,i_2=1}^N \sum_{j_1,j_2,j_3=1}^N \lrb{\lambda_{j_1}+\lambda_{j_2}+\lambda_{j_3}-\lambda_{i_1}-\lambda_{i_2}}
\bra{i_1,i_2}\tilde{V}_{2,3}^{(1)}\ket{j_1,j_2,j_3} \ketbra{i_1,i_2}{j_1,j_2,j_3}.
\end{align}
To make these expressions identical, we need to set $i_2=j_3$ in the second equation. Then, it follows that 
\begin{align}
M_{2,3}\odot \tilde{V}_{2,3}^{(1)} = \sum_{i=1}^N \sum_{j_1,j_2,l=1}^N \lrb{\lambda_{j_1}+\lambda_{j_2}-\lambda_{i}}
\bra{i,l}\tilde{V}_{2,3}^{(1)}\ket{j_1,j_2,l} \ketbra{i,l}{j_1,j_2,l}.
\end{align}
Therefore, we could choose
\begin{align}
\tilde{V}_{2,3}^{(1)} = \sum_{i=1}^N \sum_{j_1,j_2,l=1}^N
\frac{\bra{i}\tilde{F}_2\ket{j_1,j_2}}{\lambda_{j_1}+\lambda_{j_2}-\lambda_i} \ketbra{i,l}{j_1,j_2,l}
= (N_2 \odot \tilde{F}_2) \otimes I.
\end{align}
By swapping the role of the first and second subsystems, we obtain the expression for $\tilde{V}_{2,3}^{(2)}$:
\begin{align}
\tilde{V}_{2,3}^{(2)} = I \otimes (N_2 \odot \tilde{F}_2).
\end{align}
This leads to the final expression for $\tilde{V}_{2,3}$:
\begin{align}
\tilde{V}_{2,3}=  (N_2 \odot \tilde{F}_2) \otimes I +  I \otimes (N_2 \odot \tilde{F}_2). 
\end{align}

Finally, we will derive the expression for $\tilde{V}_{1,3}$ using Eq.~\eqref{eq:diag_ex_2} as well as the expression for $\tilde{V}_{2,3}$. Note that
\begin{align}
-\Lambda_1 \tilde{V}_{1,3} + \tilde{V}_{1,3} \Lambda_3 = M_3 \odot \tilde{V}_{1,3},
\end{align}
where 
\begin{align}
M_3=\sum_{i=1}^N \sum_{i_1,i_2,i_3=1}^N \lrb{\lambda_{i_1}+\lambda_{i_2}+\lambda_{i_3}-\lambda_{i}} \ketbra{i}{i_1,i_2,i_3}
\end{align}
has non-zero entries only by the no-resonance condition. Therefore, we need to set
\begin{align}
\tilde{V}_{1,3}=N_3 \odot (\tilde{A}_{1,2}\tilde{V}_{2,3})
= N_3 \odot \lrcb{\tilde{F}_2 \lrsb{\lrb{N_2 \odot \tilde{F}_2} \otimes I}}
+ N_3 \odot \lrcb{\tilde{F}_2 \lrsb{I \otimes \lrb{N_2 \odot \tilde{F}_2}}},
\end{align}
where
\begin{align}
N_3=\sum_{i=1}^N \sum_{i_1,i_2,i_3=1}^N \frac{1}{\lambda_{i_1}+\lambda_{i_2}+\lambda_{i_3}-\lambda_{i}} \ketbra{i}{i_1,i_2,i_3}
\end{align}
is well-defined.

This finishes the construction of $\tilde{V}_{1,2}$, $\tilde{V}_{2,3}$ and $\tilde{V}_{1,3}$ for the case $k=3$. We remark that even when the truncation order $k$ exceeds $3$, the expressions for $\tilde{V}_{1,2}$, $\tilde{V}_{2,3}$ and $\tilde{V}_{1,3}$ with respect to the order-$k$ Carleman matrix $\tilde{A}$ will remain identical to those given above.

\subsubsection{Blockwise expressions for the diagonalizing matrix}
\label{subsec:expression_for_vij_diagonal_f1}

The essence of our strategy for constructing the matrices $\tilde{V}_{i,j}$ can be gleaned from the example above. We derive their expressions iteratively, in a layer-by-layer manner, where $\tilde{V}_{i, j}$ belongs to the $(j-i)$-th layer, for any $1\le i < j \le k$. The expression of $\tilde{V}_{i,j}$ is derived from that of $\tilde{V}_{i+1,j}$ using the identity
\begin{align}
    \tilde{V}_{i,j}\Lambda_j-\Lambda_i \tilde{V}_{i,j} = M_{i,j}\odot \tilde{V}_{i,j} =  \tilde{A}_{i,i+1} \tilde{V}_{i+1,j},    
\label{eq:condition_vij}
\end{align}
where $\tilde{A}_{i,i+1}=\sum_{l=0}^{i-1} I^{\otimes l} \otimes \tilde{F}_2 \otimes I^{\otimes i-l-1}$ (i.e., $\tilde{A}_{i,i+1}$ applies $\tilde{F}_2$ on two adjacent subsystems, with $i$ possible choices for its location), and
\begin{align}
    M_{i,j} \defeq \sum_{a_1,a_2,\dots,a_i=1}^N \sum_{b_1,b_2,\dots,b_j=1}^N \lrb{\sum_{l=1}^j \lambda_{b_l}-\sum_{l=1}^i \lambda_{a_l}} \ketbra{a_1,a_2,\dots,a_i}{b_1,b_2,\dots,b_j}.
\label{eq:def_mij}
\end{align}
Under the no-resonance condition, Eq.~\eqref{eq:condition_vij} is guaranteed to have a solution (as will be proven later in Proposition~\ref{prop:valid_vij}), 
and the resulting $\tilde{V}_{i,j}$ can be written as the sum of multiple terms, where each term is  generated by first applying $\tilde{F}_2$ to two adjacent subsystems of a term in the expression for $\tilde{V}_{i+1,j}$ and then performing a Hadamard product with 
\begin{align}
N_l \defeq \sum_{i=1}^N\sum_{i_1,i_2.\dots,i_l=1}^N \frac{1}{\lambda_{i_1}+\lambda_{i_2}+\dots+\lambda_{i_l}-\lambda_i} \ketbra{i}{i_1,i_2,\dots,i_l}    
\label{eq:def_nl}
\end{align}
for an appropriate $l \in [k]$. We begin the procedure with $\tilde{V}_{i-1,i}$ for $2 \le i \le k$, then proceed to 
 $\tilde{V}_{i-2,i}$ for $3 \le i \le k$, and continue in this manner until finally obtaining the expression for $\tilde{V}_{1,k}$.

Formally, we define the matrices $\tilde{V}_{i,j}$ through the follow procedure:
\begin{definition}
For every binary tree $\mbt$, we assign a linear operator $\tilde{f}(\mbt)$ to it. This operator $\tilde{f}(\mbt)$ is defined recursively: 
\begin{itemize}
    \item If $\mbt$ is trivial (i.e. $\mbt$ consists of a single node), then $\tilde{f}(\mbt)=I$, the $N$-dimensional identity operator;
    \item Otherwise, let $r$ be the root of $\mbt$, and let $c_{\mathrm{left}}$ and $c_{\mathrm{right}}$ be the left and right children of $r$, respectively. Define $\mbt_{\mathrm{left}}$ and $\mbt_{\mathrm{right}}$ as the subtrees of $\mbt$ rooted at $c_{\mathrm{left}}$ and $c_{\mathrm{right}}$, respectively. Additionally, let $v(\mbt)$ be the number of leaves in $\mbt$. The operator assigned to $\mbt$ is then defined as
    \begin{align}
        \tilde{f}(\mbt)=N_{v(\mbt)} \odot \lrsb{\tilde{F}_2\lrb{\tilde{f}(\mbt_{\mathrm{left}}) \otimes \tilde{f}(\mbt_{\mathrm{right}})}}.        
    \end{align}
\end{itemize}
Then, for a binary forest $\mbf=(\mbt_1,\mbt_2,\dots,\mbt_m)$, where $\mbt_l$ is the $l$-th tree in it, the operator assigned to the forest is the tensor product of the operators assigned to the constituent trees:
\begin{align}
    \tilde{f}(\mbf)=\tilde{f}(\mbt_1) \otimes \tilde{f}(\mbt_2) \otimes \dots \otimes \tilde{f}(\mbt_m).
\end{align}
Finally, for any $i \le j$,  let $\mathcal{T}^i_j$ denote the set of all binary forests consisting of $i$ trees and a total of $j$ leaves. Then $\tilde{V}_{i,j}$ is defined as the sum of the operators assigned to all forests in $\mathcal{T}^i_j$:
\begin{align}
\tilde{V}_{i,j}=\sum_{\mbf \in \mathcal{T}^i_j} \tilde{f}(\mbf).
\end{align}
\label{def:vij_expression}
\end{definition}

Note that in each term for $\tilde{V}_{i,j}$, the operator $\tilde{F}_2$ (along with appropriate $N_l$ matrices) appears exactly $j-i$ times. Consequently, $\tilde{V}_{i,j}$ maps an $N^j$-dimensional vector into an $N^i$-dimensional vector, as intended.

\begin{prop}
The matrices $\tilde{V}_{i,j}$ specified in Definition~\ref{def:vij_expression} satisfy the following identity:
\begin{align}
\tilde{V}_{i,j}\Lambda_j-\Lambda_i \tilde{V}_{i,j} = \tilde{A}_{i,i+1} \tilde{V}_{i+1,j}.    ~~~\forall i<j.
\end{align}
\label{prop:valid_vij}
Therefore, we have $\tilde{A}=\tilde{V} D \tilde{V}^{-1}$, where $D$ and $\tilde{V}$ are defined in Eq.~\eqref{eq:def_D_tildeV}.
\end{prop}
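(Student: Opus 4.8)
The plan is to verify the recurrence $\tilde{V}_{i,j}\Lambda_j-\Lambda_i\tilde{V}_{i,j}=\tilde{A}_{i,i+1}\tilde{V}_{i+1,j}$ directly from the forest definition of the $\tilde{V}_{i,j}$, by matching the two sides term-by-term through a combinatorial bijection on forests, rather than by an induction on the layer index $j-i$. The backbone consists of two elementary ``derivation'' identities. First, for any matrix $W\in\myC^{N^p\times N^q}$ one has $W\Lambda_q-\Lambda_p W=M_{p,q}\odot W$ with $M_{p,q}$ as in Eq.~\eqref{eq:def_mij}; this is just Eq.~\eqref{eq:condition_vij} read for a general matrix, and it holds because each $\Lambda_m$ is the sum of $m$ single-site copies of $\Lambda$. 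Second, because of this same additivity over sites, a tensor product $W=W_1\otimes\cdots\otimes W_i$ with $W_r\in\myC^{N\times N^{m_r}}$ and $\sum_r m_r=j$ satisfies the Leibniz-type formula
\[
W\Lambda_j-\Lambda_i W=\sum_{r=1}^{i}\Big(\bigotimes_{s<r}W_s\Big)\otimes\big(W_r\Lambda_{m_r}-\Lambda_1 W_r\big)\otimes\Big(\bigotimes_{s>r}W_s\Big).
\]

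Next I would establish the key \emph{tree identity}: for any nontrivial binary tree $\mathbf{t}$ with $m\ge 2$ leaves and subtrees $\mathbf{t}_{\mathrm{left}},\mathbf{t}_{\mathrm{right}}$,
\[
\tilde{f}(\mathbf{t})\Lambda_m-\Lambda_1\tilde{f}(\mathbf{t})=\tilde{F}_2\big(\tilde{f}(\mathbf{t}_{\mathrm{left}})\otimes\tilde{f}(\mathbf{t}_{\mathrm{right}})\big).
\]
This is a one-line computation: writing $G:=\tilde{F}_2(\tilde{f}(\mathbf{t}_{\mathrm{left}})\otimes\tilde{f}(\mathbf{t}_{\mathrm{right}}))$, so that $\tilde{f}(\mathbf{t})=N_m\odot G$ by Definition~\ref{def:vij_expression}, the first identity gives $\tilde{f}(\mathbf{t})\Lambda_m-\Lambda_1\tilde{f}(\mathbf{t})=M_{1,m}\odot(N_m\odot G)=(M_{1,m}\odot N_m)\odot G$, and $M_{1,m}\odot N_m$ is the all-ones matrix: by the no-resonance condition every entry $\sum_{l=1}^m\lambda_{i_l}-\lambda_i$ of $M_{1,m}$ with $m\ge 2$ is nonzero, so it exactly cancels the corresponding reciprocal entry of $N_m$ (which is therefore also well defined). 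For a trivial tree, $\tilde{f}(\mathbf{t})=I$ and the commutator vanishes since $\Lambda_1$ commutes with $I$.

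Now for a forest $\mathbf{T}=(\mathbf{t}_1,\dots,\mathbf{t}_i)\in\mathcal{T}^i_j$, applying the Leibniz formula to $\tilde{f}(\mathbf{T})=\bigotimes_r\tilde{f}(\mathbf{t}_r)$ and the tree identity to each factor gives $\tilde{f}(\mathbf{T})\Lambda_j-\Lambda_i\tilde{f}(\mathbf{T})=\sum_{r:\,\mathbf{t}_r\text{ nontrivial}}\big(\bigotimes_{s<r}\tilde{f}(\mathbf{t}_s)\big)\otimes\tilde{F}_2(\tilde{f}(\mathbf{t}_{r,\mathrm{left}})\otimes\tilde{f}(\mathbf{t}_{r,\mathrm{right}}))\otimes\big(\bigotimes_{s>r}\tilde{f}(\mathbf{t}_s)\big)$. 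Summing over $\mathbf{T}\in\mathcal{T}^i_j$ turns the left side into $\tilde{V}_{i,j}\Lambda_j-\Lambda_i\tilde{V}_{i,j}$. On the other side, expanding $\tilde{A}_{i,i+1}=\sum_{l=0}^{i-1}I^{\otimes l}\otimes\tilde{F}_2\otimes I^{\otimes i-l-1}$ and $\tilde{V}_{i+1,j}=\sum_{\mathbf{T}'\in\mathcal{T}^{i+1}_j}\tilde{f}(\mathbf{T}')$ and using $(A\otimes B)(C\otimes D)=AC\otimes BD$ produces a sum, over forests $\mathbf{T}'=(\mathbf{t}'_1,\dots,\mathbf{t}'_{i+1})$ and positions $0\le l\le i-1$, of $\big(\bigotimes_{s\le l}\tilde{f}(\mathbf{t}'_s)\big)\otimes\tilde{F}_2(\tilde{f}(\mathbf{t}'_{l+1})\otimes\tilde{f}(\mathbf{t}'_{l+2}))\otimes\big(\bigotimes_{s\ge l+3}\tilde{f}(\mathbf{t}'_s)\big)$. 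These two sums are then identified term-by-term by the bijection that selects a nontrivial tree $\mathbf{t}_r$ and splits it into its left and right subtrees at position $r$, producing a forest $\mathbf{T}'\in\mathcal{T}^{i+1}_j$ with the merge point at $l=r-1$; the inverse merges the adjacent trees $\mathbf{t}'_{l+1},\mathbf{t}'_{l+2}$ of $\mathbf{T}'$ into a single tree having those as its two subtrees. One checks that under this correspondence the positional blocks align ($\bigotimes_{s<r}$ with $\bigotimes_{s\le l}$, and $\bigotimes_{s>r}$ with $\bigotimes_{s\ge l+3}$ after relabeling), so the summands coincide exactly. Finally, combined with $\tilde{V}_{i,i}=I^{\otimes i}$, which trivially satisfies $\Lambda_i\tilde{V}_{i,i}=\tilde{V}_{i,i}\Lambda_i$, the recurrence is precisely the block form of $\tilde{A}\tilde{V}=\tilde{V}D$; since $\tilde{V}$ is block upper triangular with identity diagonal blocks it is invertible, whence $\tilde{A}=\tilde{V}D\tilde{V}^{-1}$.

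The main obstacle I anticipate is purely bookkeeping: getting the Leibniz-type identity stated correctly when the domain ($j$ sites) and codomain ($i$ sites) carry different numbers of tensor factors, and then checking that the split/merge bijection aligns the positional indices $l$ and $r$ so that the two sums match on the nose rather than only up to a permutation of terms. Everything else—the tree identity, the well-definedness of the $N_m$, and the invertibility of $\tilde{V}$—is immediate from the no-resonance condition and the definitions.
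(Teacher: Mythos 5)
Your proof is correct and follows essentially the same route as the paper's: both rest on the split/merge bijection between a nontrivial tree of a forest in $\mathcal{T}^i_j$ and a pair (position, forest in $\mathcal{T}^{i+1}_j$), together with the cancellation of the resonance denominators (the fact that $M_{1,m}\odot N_m$ is the all-ones matrix). The only difference is presentational — you work at the operator level via a Leibniz rule and a tree identity, whereas the paper verifies the same cancellations entrywise after reducing to a single forest via the partition $\lrcb{\Omega(\mbf)}$ — and your bookkeeping of the positional indices is sound.
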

\begin{proof}
See Appendix~\ref{app:proof_lemma_valid_vij}.
\end{proof}

To illustrate this, below we give the explicit expressions for $\tilde{V}_{i,j}$ with $1\le i<j\le 4$, together with their binary forests representations:

\begingroup
\setlength{\jot}{12pt}
\tikzset{
    level distance=16pt,
    every tree node/.style = {align=center, anchor=north,nodecolor},
    edge from parent/.style = {draw,thick,edgecolor},
    edge from parent path = {
        ([yshift=4.5pt]\tikzparentnode.south) -- 
        ([yshift=-3.5pt]\tikzchildnode.north)
    }
}
\begin{align}
    \tilde{V}_{1,2} & =   N_2 \odot \tilde{F}_2 =\tilde{f}\left(
    \raisebox{-0.35cm}{
        \begin{tikzpicture}[scale=1]
            \Tree [.$\bt$ [.$\bt$ ] [.$\bt$ ] ]
        \end{tikzpicture}
    }
    \right) &
    \\
    \tilde{V}_{2,3} & =
    (N_2 \odot \tilde{F}_2) \otimes I +  I \otimes (N_2 \odot \tilde{F}_2) = \tilde{f}\left(
    \raisebox{-0.35cm}{
        \begin{tikzpicture}[scale=1]
            \Tree [.$\bt$ [.$\bt$ ] [.$\bt$ ] ]
        \end{tikzpicture}
        \hspace{\treeTree}
        \begin{tikzpicture}[scale=1]
            \Tree [.$\bt$ ]
        \end{tikzpicture}
    }
    \right)
    +\tilde{f}\left(
    \raisebox{-0.35cm}{
        \begin{tikzpicture}[scale=1]
            \Tree [.$\bt$ ]
        \end{tikzpicture}
        \hspace{\treeTree}
        \begin{tikzpicture}[scale=1]
            \Tree [.$\bt$ [.$\bt$ ] [.$\bt$ ] ]
        \end{tikzpicture}
    }
    \right) &
    \\
    \tilde{V}_{3,4} & =
    (N_2 \odot \tilde{F}_2) \otimes I \otimes I +  I \otimes (N_2 \odot \tilde{F}_2) \otimes I +  I \otimes I \otimes (N_2 \odot \tilde{F}_2) &\\
    & =
    \tilde{f}\left(
    \raisebox{-0.35cm}{
        \begin{tikzpicture}[scale=1]
            \Tree [.$\bt$ [.$\bt$ ] [.$\bt$ ] ]
        \end{tikzpicture}
        \hspace{\treeTree}
        \begin{tikzpicture}[scale=1]
            \Tree [.$\bt$ ]
        \end{tikzpicture}
        \hspace{\treeTree}
        \begin{tikzpicture}[scale=1]
            \Tree [.$\bt$ ]
        \end{tikzpicture}
    }
    \right)
    +\tilde{f}\left(
    \raisebox{-0.35cm}{
        \begin{tikzpicture}[scale=1]
            \Tree [.$\bt$ ]
        \end{tikzpicture}
        \hspace{\treeTree}
        \begin{tikzpicture}[scale=1]
            \Tree [.$\bt$ [.$\bt$ ] [.$\bt$ ] ]
        \end{tikzpicture}
        \hspace{\treeTree}
        \begin{tikzpicture}[scale=1]
            \tikzset{}
            \Tree [.$\bt$ ]
        \end{tikzpicture}
    }
    \right)
    +\tilde{f}\left(
    \raisebox{-0.35cm}{
        \begin{tikzpicture}[scale=1]
            \Tree [.$\bt$ ]
        \end{tikzpicture}
        \hspace{\treeTree}
        \begin{tikzpicture}[scale=1]
            \Tree [.$\bt$ ]
        \end{tikzpicture}
        \hspace{\treeTree}
        \begin{tikzpicture}[scale=1]
            \Tree [.$\bt$ [.$\bt$ ] [.$\bt$ ] ]
        \end{tikzpicture}
    }
    \right) &
    \\
    \tilde{V}_{1,3} & = 
    N_3 \odot \lrcb{\tilde{F}_2\lrsb{(N_2 \odot \tilde{F}_2) \otimes I}}+ N_3 \odot \lrcb{\tilde{F}_2\lrsb{ I \otimes (N_2 \odot \tilde{F}_2)}} &\\
    & = \tilde{f}\left(
    \raisebox{-0.6cm}{
        \begin{tikzpicture}[scale=1]
            \Tree [.$\bt$ 
            [.$\bt$ [.$\bt$ ] [.$\bt$ ] ] 
            [.$\bt$ ] 
            ]
        \end{tikzpicture}
    }
    \right)
    +\tilde{f}\left(
    \raisebox{-0.6cm}{
        \begin{tikzpicture}[scale=1]
            \Tree [.$\bt$ 
            [.$\bt$ ] 
            [.$\bt$ [.$\bt$ ] [.$\bt$ ] ] 
            ]
        \end{tikzpicture}
    }
    \right)&
    \\
    \tilde{V}_{2,4} & = 
    (N_2 \odot \tilde{F}_2) \otimes (N_2 \odot \tilde{F}_2) + N_3 \odot \lrcb{\tilde{F}_2 \lrsb{(N_2 \odot \tilde{F}_2) \otimes I}} \otimes I  & \\
    & \phantom{=} + I \otimes N_3 \odot \lrcb{\tilde{F}_2 \lrsb{(N_2 \odot \tilde{F}_2) \otimes I}} + N_3 \odot \lrcb{\tilde{F}_2 \lrsb{I \otimes (N_2 \odot \tilde{F}_2) }} \otimes I  & \\
    & \phantom{=} + I \otimes N_3 \odot \lrcb{\tilde{F}_2 \lrsb{I \otimes (N_2 \odot \tilde{F}_2)}} & \\
    & = 
    \tilde{f}\left(
    \raisebox{-0.35cm}{
        \begin{tikzpicture}[scale=1]
            \Tree [.$\bt$ [.$\bt$ ] [.$\bt$ ] ]
        \end{tikzpicture}
        \hspace{\treeTree}
        \begin{tikzpicture}[scale=1]
            \Tree [.$\bullet$ [.$\bt$ ] [.$\bt$ ] ]
        \end{tikzpicture}
    }
    \right)
    +
    \tilde{f}\left(
    \raisebox{-0.6cm}{
        \begin{tikzpicture}[scale=1]
            \Tree [.$\bt$ 
            [.$\bt$ [.$\bt$ ] [.$\bt$ ] ] 
            [.$\bt$ ] 
            ]
        \end{tikzpicture}
        \hspace{\treeTree}
        \begin{tikzpicture}[scale=1]
            \Tree [.$\bt$  ]
        \end{tikzpicture}
    }
    \right)
    +
    \tilde{f}\left(
    \raisebox{-0.6cm}{
        \begin{tikzpicture}[scale=1]
            \Tree [.$\bt$  ]
        \end{tikzpicture}
        \hspace{\treeTree}
        \begin{tikzpicture}[scale=1]
            \Tree [.$\bt$ 
            [.$\bt$ [.$\bt$ ] [.$\bt$ ] ] 
            [.$\bt$ ] 
            ]
        \end{tikzpicture}
    }
    \right) & \\
    & \phantom{=} 
    +\tilde{f}\left(
    \raisebox{-0.6cm}{
        \begin{tikzpicture}[scale=1]
            \Tree [.$\bt$ 
            [.$\bt$ ] 
            [.$\bt$ [.$\bt$ ] [.$\bt$ ] ] 
            ]
        \end{tikzpicture}
        \hspace{\treeTree}
        \begin{tikzpicture}[scale=1]
            \Tree [.$\bt$  ]
        \end{tikzpicture}
    }
    \right)
    +
    \tilde{f}\left(
    \raisebox{-0.6cm}{
        \begin{tikzpicture}[scale=1]
            \Tree [.$\bt$  ]
        \end{tikzpicture}
        \hspace{\treeTree}
        \begin{tikzpicture}[scale=1]
            \Tree [.$\bt$ 
            [.$\bt$ ] 
            [.$\bt$ [.$\bt$ ] [.$\bt$ ] ] 
            ]
        \end{tikzpicture}
    }
    \right) &
    \\
    \tilde{V}_{1,4} & = 
    N_4 \odot \lrcb{\tilde{F}_2 \lrsb{(N_2 \odot \tilde{F}_2) \otimes (N_2 \odot \tilde{F}_2)}} &\\
    &\phantom{=}+ N_4 \odot \lrcb{\tilde{F}_2 \lrsb{N_3 \odot \lrcb{\tilde{F}_2 \lrsb{(N_2 \odot \tilde{F}_2) \otimes I}} \otimes I}} &\\
    & \phantom{=} + N_4 \odot \lrcb{\tilde{F}_2 \lrsb{N_3 \odot \lrcb{\tilde{F}_2 \lrsb{I \otimes (N_2 \odot \tilde{F}_2) }} \otimes I}} &\\
    &\phantom{=} + N_4 \odot \lrcb{\tilde{F}_2 \lrsb{I \otimes N_3 \odot \lrcb{\tilde{F}_2 \lrsb{(N_2 \odot \tilde{F}_2) \otimes I}}}}& \\
    & \phantom{=} + N_4 \odot \lrcb{\tilde{F}_2 \lrsb{I \otimes N_3 \odot \lrcb{\tilde{F}_2 \lrsb{I \otimes (N_2 \odot \tilde{F}_2)}}}} &\\
    & = 
    \tilde{f}\left(
    \raisebox{-0.6cm}{
        \begin{tikzpicture}[scale=1]
            \Tree [.$\bt$ [.$\bt$ [.$\bt$ ] [.$\bt$ ] ] [.$\bt$ [.$\bt$ ] [.$\bt$ ] ] ]
        \end{tikzpicture}				
    }
    \right)
    +
    \tilde{f}\left(
    \raisebox{-0.9cm}{
        \begin{tikzpicture}[scale=1]
            \Tree [.$\bt$ 
            [.$\bt$ [.$\bt$ [.$\bt$ ] [.$\bt$ ] ] [.$\bt$ ] ] 
            [.$\bt$ ] 
            ]
        \end{tikzpicture}
    }
    \right)
    +
    \tilde{f}\left(
    \raisebox{-0.9cm}{
        \begin{tikzpicture}[scale=1]
            \Tree [.$\bt$ 
            [.$\bt$ [.$\bt$ ] [.$\bt$ [.$\bt$ ] [.$\bt$ ] ] ] 
            [.$\bt$ ] 
            ]
        \end{tikzpicture}
    }
    \right) &    \\
    & \phantom{=} +
    \tilde{f}\left(
    \raisebox{-0.9cm}{
        \begin{tikzpicture}[scale=1]
            \Tree [.$\bt$ 
            [.$\bt$ ] 
            [.$\bt$ [.$\bt$ [.$\bt$ ] [.$\bt$ ] ] [.$\bt$ ] ] 
            ]
        \end{tikzpicture}
    }
    \right)
    +
    \tilde{f}\left(
    \raisebox{-0.9cm}{
        \begin{tikzpicture}[scale=1]
            \Tree [.$\bt$ 
            [.$\bt$ ] 
            [.$\bt$ [.$\bt$ ] [.$\bt$ [.$\bt$ ] [.$\bt$ ] 	] ] 
            ]
        \end{tikzpicture}
    }
    \right) & 
\end{align}
\endgroup

The expressions for $V_{i,j}$ can be obtained from  those of $\tilde{V}_{i,j}$ by left-multiplying with $Q^{\otimes i}$.
Specifically, for any $\mbf \in \mathcal{T}^i_j$,
we define 
\begin{align}
f(\mbf)=Q^{\otimes i}\tilde{f}(\mbf).    
\end{align}
Then 
\begin{align}
    V_{i,j}= \sum_{\mbf \in \mathcal{T}^i_j} f(\mbf).
\end{align}

Importantly, both $V_{i,j}$ and $\tilde{V}_{i,j}$ are independent of the truncation order $k$ in the Carleman embedding, provided that $k \ge j$. Moreover, while the expressions for $V_{i,j}$ and $\tilde{V}_{i,j}$ grow increasingly intricate with larger $i$ and $j$, we emphasize that they are intended primarily for the error analysis in Section~\ref{subsec:error_analysis_nonresonant_systems}, rather than for explicit computation.

\subsubsection{Blockwise expressions for the inverse of the diagonalizing matrix}
\label{subsec:expression_for_v_inv_ij_diagonal_f1}

Our error analysis in Section~\ref{subsec:error_analysis_nonresonant_systems}
relies not only on explicit expressions for the blocks of
$\tilde{V}$ but also those of $\tilde{V}^{-1}$. Deriving these expressions is more involved than in the case of  $\tilde{V}$ and requires the notation introduced below.

Given an undirected graph \( G = (V, E) \), let \( U \subseteq V \) be any subset of its vertices. The \emph{induced subgraph} of \( G \) on \( U \), denoted by $G[U]$, consists of the vertices in \( U \) and all edges from \( E \) that connect pairs of vertices in \( U \). 

For any $i \le j$, we use $\mathcal{T}^i_j$ to denote the set of binary forests consisting of $i$ binary trees with a total of $j$ leaves. We have shown that there exist at most $4^{j-i}\binom{j-1}{i-1}$ such forests in Lemma~\ref{lem:count_binary_forests}, i.e. 
\begin{align}
|\mathcal{T}^i_j|\le 4^{j-i}\binom{j-1}{i-1}.    
\end{align}

Given any binary tree $\mbt=(V, E)$, we use $V_I$ and $V_L$ to denote the set of internal and leaf nodes of $\mbt$, respectively. For any $v \in V_I$, let $c_1(v)$ and $c_2(v)$ denote its left and right children, respectively, and let $C(v)=\lrcb{c_1(v), c_2(v)}$ denote the set of its children. Moreover, let $D(v)$ denote the set of $v$'s descendants in $\mbt$, and let $B(v) \defeq D(v)\cap V_L$ denote the set of $v$'s descendants which are also leaves of $\mbt$. Finally, for any $S \subseteq V_I$, let 
\begin{align}
C(S) \defeq \lrb{\bigcup_{v \in S} C(v)} \setminus S.    
\end{align}
Note that if $S \neq \emptyset$ and $\mbt[S]$ is connected, then we have 
\begin{align}
|C(S)|=|S|+1.    
\end{align}

A \emph{topological order} of a binary tree is a linear ordering of its internal nodes such that every node comes before its descendants in this order\footnote{Normally, a topological order is defined for a directed acyclic graph (DAG) and includes all its nodes. Here, we treat a binary tree as a DAG with all edges directed downward (i.e. from an internal node to its children). Moreover, we focus on ordering the internal nodes while disregarding the ordering of the leaf nodes.}. That is, $(v_1,v_2,\dots,v_n)$ is a topological order of $\mbt=(V, E)$ if and only if $V_I=\lrcb{v_1,v_2,\dots,v_n}$ and for every $1 \le i < j \le n$, $v_i$ is not a descendant of $v_j$ in $\mbt$. Note that there is no requirement regarding the ordering of nodes that do not share an ancestor-offspring relationship (e.g., the two children of the same node). This leads to potentially multiple valid topological orders of the same tree. Let $O(\mbt)$ denote the set of topological orders of~$\mbt$. Then, we have 
\begin{align}
|O(\mbt)| \le (|V_I|-1)!,    
\end{align}
since every topological order begins with the root of $\mbt$.

A labeling of a binary tree \( \mbt = (V, E) \) is a mapping \( h: V \to [N] \), meaning that each node in \( \mbt \) is assigned an integer label from \( 1 \) to \( N \). Let \( L_N(\mbt) \) denote the set of all such labelings of \( \mbt \). For better readability, we will omit the subscript \( N \) and simply write \( L(\mbt) \) in what follows. Moreover,  given any labeling $h$ of a binary tree $\mbt$, we define 
\begin{align}
    \alpha(\mbt, h) \defeq \prod_{v \in V_I} \bra{h(v)}\tilde{F}_2 \ket{h(c_1(v)), h(c_2(v))},
    \label{eq:alpha_t_h_def}
\end{align}
if $\mbt$ is nontrivial, and $\alpha(\mbt, h)=1$ otherwise. 
That is, $\alpha(\mbt, h)$ is the product of the entries of $\tilde{F}_2$ corresponding to the labels assigned to each internal node and its two children. It is nonzero if and only if all of these entries are nonzero. In this case, we say that $h$ is a \emph{feasible} labeling of $\mbt$ with respect to $\tilde{F}_2$. The number of feasible labelings of $\mbt$ could be much smaller than $|L(\mbt)|=N^{|V(\mbt)|}$ when $\tilde{F}_2$ is sparse.

Now, suppose $\mbt$ contains $m$ leaves (if $\mbt$ is trivial, then $m=1$). Let $r$ denote the root of $\mbt$, and let $l_1,l_2,\dots,l_m$ represent the leaves of $\mbt$ (ordered from left to right). Define
\begin{align}
    \beta(\mbt, h) \defeq \prod_{ v \in V_I} \frac{1}{\sum_{w \in B(v)} \lambda_{h(w)} -\lambda_{h(v)}},
    \label{eq:beta_t_h_def}
\end{align}
if $\mbt$ is nontrivial, and $\beta(\mbt, h)=1$ otherwise. Then, the function $\tilde{f}(\mbt)$ introduced in Section~\ref{subsec:expression_for_vij_diagonal_f1} can be re-written as
\begin{align}
    \tilde{f}(\mbt) = \sum_{h \in L(\mbt)} \alpha(\mbt, h) \beta(\mbt, h) \ketbra{h(r)}{h(l_1), h(l_2),\dots, h(l_m)}.
    \label{eq:ft_def2}
\end{align}
Note that if $\mbt$ is trivial, then $\tilde{f}(\mbt)=I$. Moreover, define 
\begin{align}
    \gamma(\mbt, h) \defeq \sum_{(v_1,v_2,\dots,v_{m-1}) \in O(\mbt)} \prod_{i=1}^{m-1}\frac{1}{\sum_{w \in C(\lrcb{v_1,v_2,\dots,v_i})} \lambda_{h(w)} -\lambda_{h(r)}},
    \label{eq:gamma_t_h_def}
\end{align}
if $\mbt$ is nontrivial, and $\gamma(\mbt, h)=1$ otherwise.
Then, define
\begin{align}
    \tilde{g}(\mbt) \defeq \sum_{h \in L(\mbt)} \alpha(\mbt, h) \gamma(\mbt, h) \ketbra{h(r)}{h(l_1), h(l_2),\dots, h(l_m)}.     
\end{align}
Note that if $\mbt$ is trivial, then $\tilde{g}(\mbt)=I$. 

Finally, we extend the definitions of the functions $f$ and $g$ from binary trees to binary forests. For any binary forest $\mbf = (\mbt_1, \mbt_2, \dots, \mbt_j)$, we define:
\begin{align}
    \tilde{f}(\mbf)&=\tilde{f}(\mbt_1)\otimes \tilde{f}(\mbt_2) \otimes \dots \otimes \tilde{f}(\mbt_j), \\
    \tilde{g}(\mbf)&=\tilde{g}(\mbt_1)\otimes \tilde{g}(\mbt_2) \otimes \dots \otimes \tilde{g}(\mbt_j).     
\end{align}

The functions $\tilde{f}$ and $\tilde{g}$, defined over binary trees and extended to binary forests, play a central role in characterizing the blocks of $\tilde{V}$ and $\tilde{V}^{-1}$, respectively.

Recall that we diagonalize the order-$k$ Carleman matrix $\tilde{A}$ as $\tilde{A}=\tilde{V} D \tilde{V}^{-1}$, where 
\begin{align}
D=\begin{pmatrix}
        \Lambda_1  &           &        & \\
                 & \Lambda_2 &        & \\
                 &           & \ddots & \\
                 &           &      &\Lambda_k \end{pmatrix},~~~
        \tilde{V}=\begin{pmatrix}
        \tilde{V}_{1,1}  & \tilde{V}_{1,2}   &  \tilde{V}_{1,3} & \dots  & \tilde{V}_{1,k} \\
                 & \tilde{V}_{2,2}  &  \tilde{V}_{2,3} & \dots & \tilde{V}_{2,k} \\
                 &           & \ddots &   \ddots       & \vdots \\
                 &           &          & \tilde{V}_{k-1,k-1} & \tilde{V}_{k-1, k} \\
                &           &          &  & \tilde{V}_{k,k} \\\end{pmatrix},     
\end{align}
with 
\begin{align}
\Lambda_j=\diag{\sum_{l=1}^j \lambda_{i_l}:~i_1, i_2,\dots,i_j\in [N]}, ~~~\forall j \in [k],    
\end{align}
and $\tilde{V}_{i,j} \in \myC^{N^i \times N^j}$ for $1 \le i \le j \le k$.
We have provided explicit expressions for the $\tilde{V}_{i,j}$ operators in Section~\ref{subsec:expression_for_vij_diagonal_f1}, i.e. 
\begin{align}
    \tilde{V}_{i,j} &= \sum_{\mbf \in \mathcal{T}^i_j} \tilde{f}(\mbf) \\
    &= \sum_{(\mbt_1,\mbt_2,\dots,\mbt_i) \in \mathcal{T}^i_j}
    \tilde{f}(\mbt_1) \otimes \tilde{f}(\mbt_2) \otimes \dots \otimes \tilde{f}(\mbt_i),
\end{align}
where $\tilde{f}(\mbt)$ satisfies Eq.~\eqref{eq:ft_def2} for any binary tree $\mbt$. In particular, we have $\tilde{V}_{i,i}=I^{\otimes i}$ for $i \in [k]$.

Next, we examine the block structure of the matrix $\tilde{V}^{-1}$. Using the (block) upper triangular structure of $V$, one can show that
\begin{align}
    \tilde{V}^{-1} = \begin{pmatrix}
        (\tilde{V}^{-1})_{1,1}  & (\tilde{V}^{-1})_{1,2}   &  (\tilde{V}^{-1})_{1,3} & \dots  & (\tilde{V}^{-1})_{1,k} \\
                 & (\tilde{V}^{-1})_{2,2}   &  (\tilde{V}^{-1})_{2,3} & \dots & (\tilde{V}^{-1})_{2,k} \\
                 &           & \ddots &   \ddots       & \vdots \\
                 &           &          & (\tilde{V}^{-1})_{k-1,k-1} & (\tilde{V}^{-1})_{k-1, k} \\
                &           &          &  & (\tilde{V}^{-1})_{k,k} \\\end{pmatrix},
\end{align}
where $(\tilde{V}^{-1})_{i,i}=I^{\otimes i}$ for $i \in [k]$, and 
\begin{align}
(\tilde{V}^{-1})_{i,j} = \sum_{p=1}^{j-i} \sum_{i<i_2<\dots<i_p<j} (-1)^p \tilde{V}_{i,i_2}\tilde{V}_{i_2,i_3}\dots \tilde{V}_{i_p,j}, \qquad 1 \le i < j \le k,
\label{eq:wjk_def}
\end{align}
where the inner summation is over all $i_2, i_3, \dots, i_p$ satisfying $i<i_2<i_3<\dots<i_p<j$, and we set $i_1=i$ for the case $p=1$. 

It turns out that the $(\tilde{V}^{-1})_{i,j}$ operators can be expressed compactly in terms of the $\tilde{g}(\mbt)$ operators:
\begin{prop}
For any $i \le j$, we have
    \begin{align}
    (\tilde{V}^{-1})_{i,j} &= (-1)^{j-i} \sum_{\mbf \in \mathcal{T}^i_j} \tilde{g}(\mbf)\\
    &=(-1)^{j-i} \sum_{(\mbt_1,\mbt_2,\dots,\mbt_i) \in \mathcal{T}^i_j} \tilde{g}(\mbt_1) \otimes \tilde{g}(\mbt_2) \otimes \dots \otimes \tilde{g}(\mbt_i).    
    \label{eq:v_inv_ij_expression}
    \end{align}
    \label{prop:v_inv_ij_expression}
\end{prop}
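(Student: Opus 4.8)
Since $\tilde{V}$ is block upper triangular with identity diagonal blocks, $\tilde{V}^{-1}$ exists and its blocks are uniquely determined, layer by layer, by the relations $\sum_{l=i}^{j}\tilde{V}_{i,l}(\tilde{V}^{-1})_{l,j}=\delta_{i,j}I^{\otimes i}$. The plan is to verify that the matrix $W$ with blocks $W_{i,j}:=(-1)^{j-i}\sum_{\mbf\in\mathcal{T}^i_j}\tilde{g}(\mbf)$ (for $i\le j$, and $0$ otherwise) satisfies $\tilde{V}W=I$; since $\tilde V$ is invertible this forces $W=\tilde V^{-1}$. Equivalently, by induction on the layer $n:=j-i$ one checks, for $i<j$, that $\sum_{l=i}^{j}\tilde V_{i,l}W_{l,j}=0$, substituting the expressions $\tilde V_{i,l}=\sum_{\mbf\in\mathcal{T}^i_l}\tilde f(\mbf)$ from Proposition~\ref{prop:valid_vij} and, for $l>i$, the inductive hypothesis for $W_{l,j}$ (whose layer is smaller); the base case $n=0$ is immediate since $\mathcal{T}^i_i$ contains only the forest of $i$ trivial trees, for which $\tilde g$ evaluates to $I^{\otimes i}$. (Alternatively one may expand Eq.~\eqref{eq:wjk_def} directly and carry out the same manipulation.)

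After this substitution one is left with a combinatorial identity among tensor-product operators indexed by binary forests, and the first reduction is that it factors tree by tree. Expanding $\tilde V_{i,l}W_{l,j}$ amounts to grafting a ``bottom'' forest onto the leaves of a ``top'' forest; collecting the contributions whose grafted forest equals a fixed $\mbf=(\mbt_1,\dots,\mbt_i)\in\mathcal{T}^i_j$ and a fixed feasible labeling $h$, one observes that a root-prefix of $\mbf$ (the internal nodes kept in the top part) is just a disjoint union of root-prefixes of $\mbt_1,\dots,\mbt_i$, and that $\tilde f(\mbf)=\bigotimes_a\tilde f(\mbt_a)$, $\tilde g(\mbf)=\bigotimes_a\tilde g(\mbt_a)$, $\alpha(\mbf,h)=\prod_a\alpha(\mbt_a,h)$, $\gamma(\mbf,h)=\prod_a\gamma(\mbt_a,h)$. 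Hence the whole coefficient factors as $\alpha(\mbf,h)(-1)^{j-i}\prod_{a}\big(\sum_{P_a}(-1)^{|P_a|}\sigma(\mbt_a,P_a,h)\big)$, where $\sigma(\mbt_a,P_a,h)$ is the product of the subtree denominators attached to the nodes of $P_a$ (coming from the $N_l$ Hadamard factors in $\tilde f$) times $\prod_{\mbt'}\gamma(\mbt',h)$ over the dangling subtrees. Thus it suffices to prove the single-tree scalar identity
\begin{align}
\sum_{P\subseteq V_I(\mbt)}(-1)^{|P|}\left(\prod_{v\in P}\frac{1}{\sum_{w\in B_P(v)}\lambda_{h(w)}-\lambda_{h(v)}}\right)\prod_{\mbt'\in\mathcal{F}(\mbt,P)}\gamma(\mbt',h)=0,
\label{eq:prop57_scalar_plan}
\end{align}
for every binary tree $\mbt$ (with root $r$) and feasible labeling $h$, where the sum is over all root-prefixes $P$ of $\mbt$, $B_P(v)$ is the set of leaves of the top tree lying below $v$, and $\mathcal{F}(\mbt,P)$ is the multiset of dangling subtrees; the $P=\emptyset$ term is $\gamma(\mbt,h)$ and the $P=V_I(\mbt)$ term is $(-1)^{|V_I(\mbt)|}\beta(\mbt,h)$. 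Written as ``$\emptyset$-term equals minus the rest'', \eqref{eq:prop57_scalar_plan} is precisely $(\tilde V^{-1})_{1,m}=(-1)^{m-1}\sum_{\mbt\in\mathcal{T}^1_m}\tilde g(\mbt)$.

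I would prove \eqref{eq:prop57_scalar_plan} by induction on $|V_I(\mbt)|$, peeling off the root; the algebraic engine is the telescoping relation $\tfrac1A-\tfrac1B=\tfrac{B-A}{AB}$ together with the elementary identity $C(S\cup\{v\})=\big(C(S)\setminus\{v\}\big)\cup\{c_1(v),c_2(v)\}$ for $v\in C(S)$. The latter says that, along a topological order of $\mbt$, appending $v$ to the current order-ideal $S$ changes the ``frontier'' denominator $\sum_{w\in C(S)}\lambda_{h(w)}-\lambda_{h(r)}$ by exactly the ``subtree'' denominator $\lambda_{h(c_1(v))}+\lambda_{h(c_2(v))}-\lambda_{h(v)}$ attached to $v$. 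Iterating, every subtree denominator occurring in the $P\neq\emptyset$ terms of \eqref{eq:prop57_scalar_plan} is rewritten as a difference of consecutive frontier denominators, and the alternating signs make all ``partial'' root-prefixes cancel in pairs, leaving precisely the fully unrolled sum over the topological orders $O(\mbt)$, i.e.\ $\gamma(\mbt,h)$, with the correct overall sign. All denominators are nonzero by the no-resonance hypothesis, so each manipulation is legitimate.

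I expect the main obstacle to be the combinatorial bookkeeping of this last step: organizing the signed sum over all root-prefixes $P$ so that it collapses onto the sum over $O(\mbt)$---concretely, either exhibiting an explicit sign-reversing pairing on the non-surviving terms, or making the root-peeling induction close despite the fact that the common shift $\lambda_{h(r)}$ couples the two subtrees of the root of $\mbt$. By comparison, the tensor factorization used to reduce to a single tree, the set identities for $C(\,\cdot\,)$, and tracking the signs through the Neumann series \eqref{eq:wjk_def} are routine.
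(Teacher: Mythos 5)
Your setup is correct and coincides with the paper's: you invert the block-unitriangular $\tilde{V}$ layer by layer (equivalently, expand the Neumann series \eqref{eq:wjk_def}), factor over the trees of each forest and over feasible labelings, and thereby reduce the proposition to a signed scalar identity for each labeled binary tree $(\mbt,h)$. Your scalar identity is a ``one-layer-plus-inductive-hypothesis'' reformulation of the paper's target identity $\sum_{p=1}^{j-i}(-1)^p\sum_{\Omega\in\mathcal{L}_p(\mbf)}\bar{\beta}(\mbf,\Omega,h)=(-1)^{j-i}\gamma(\mbf,h)$, and the two are interchangeable. That identity, however, is the entire substance of the proposition, and your proof of it is not carried out.

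The obstruction you flag at the end is real, and your plan does not overcome it. Peeling the root $r$ fails to close the induction for two concrete reasons. First, for a root-prefix $P\ni r$ written as $\{r\}\sqcup P_1\sqcup P_2$ (with $P_a$ a root-prefix of the subtree $\mbt_a$), all factors split per subtree except the one attached to $r$, whose denominator sums over the frontiers of both $P_1$ and $P_2$; on the other side, $\gamma(\mbt,h)$ runs over topological orders of $\mbt$, i.e.\ shuffles of orders of $\mbt_1$ and $\mbt_2$, with every frontier denominator anchored at $\lambda_{h(r)}$ rather than at $\lambda_{h(c_1(r))}$ or $\lambda_{h(c_2(r))}$ --- so neither side factors over the two subtrees, and the inductive hypotheses for $\mbt_1,\mbt_2$ (statements about $\gamma(\mbt_a,h)$ anchored at the children) cannot be substituted. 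Second, the frontier increment produced by appending $v$ to an order ideal is the \emph{children} denominator $\lambda_{h(c_1(v))}+\lambda_{h(c_2(v))}-\lambda_{h(v)}$, whereas the denominators attached to $v\in P$ in your identity have the form $\sum_{w\in B_P(v)}\lambda_{h(w)}-\lambda_{h(v)}$ over the frontier below $v$; these coincide only when both children of $v$ lie on the frontier, so $\frac1A-\frac1B=\frac{B-A}{AB}$ does not directly convert one family of denominators into the other. The paper resolves exactly this by inducting on the number of \emph{leaves} instead: it deletes a bottom cherry $(v;w_1,w_2)$, so that every affected denominator is shifted by the single quantity $\tilde\Delta=\lambda_{h(w_1)}+\lambda_{h(w_2)}-\lambda_{h(v)}$, sets up an explicit partition of the layer decompositions of $\mbt$ indexed by those of the reduced tree, and closes the induction with the telescoping identity $\frac{1}{\delta}\lrb{\prod_{l}\frac{1}{x_l}-\prod_{l}\frac{1}{x_l+\delta}}=\sum_{b}\frac{1}{x_1}\cdots\frac{1}{x_b}\cdot\frac{1}{x_b+\delta}\cdots\frac{1}{x_q+\delta}$, introducing an auxiliary ``dummy eigenvalue'' $\lambda_{N+1}=\lambda_{h(w_1)}+\lambda_{h(w_2)}$ when $v$ has a parent. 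Until you either adopt such a leaf-peeling scheme or actually exhibit the sign-reversing pairing you allude to, the key identity --- and with it the proposition --- remains unproved.
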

\begin{proof}
See Appendix~\ref{app:proof_lemma_v_inv_ij_expression}.    
\end{proof}

Below are the explicit expressions for $\lrb{\tilde{V}^{-1}}_{i,j}$ where $1 \le i<j\le 4$, together with their binary forests representations:

\begingroup
\setlength{\jot}{12pt} 
\tikzset{
	level distance=16pt,
	every tree node/.style = {align=center, anchor=north,nodecolor},
	edge from parent/.style = {draw,thick,edgecolor},
	edge from parent path = {
		([yshift=4.5pt]\tikzparentnode.south) -- 
		([yshift=-3.5pt]\tikzchildnode.north)
	}
}
\begin{align}
    \lrb{\tilde{V}^{-1}}_{1,2}=&-\sum_{a,b,c=1}^N \bra{a}\tilde{F}_2\ket{b,c} \cdot \frac{1}{\lambda_b+\lambda_c-\lambda_a} \ket{a}\bra{b,c}
    = -\tilde{g}\left(
	\raisebox{-0.35cm}{
		\begin{tikzpicture}[scale=1]
			\Tree [.$\bt$ [.$\bt$ ] [.$\bt$ ] ];
			\node at (0,0) {$a$};
			\node at (-0.5,-0.75) {$b$};
			\node at (0.5,-0.75) {$c$};
		\end{tikzpicture}
	}
	\right),
    \\
    \lrb{\tilde{V}^{-1}}_{2,3}=& -\sum_{a,b,c,d=1}^N \bra{a}\tilde{F}_2\ket{b,c} \cdot \frac{1}{\lambda_b+\lambda_c-\lambda_a} \ket{a,d}\bra{b,c,d} \nonumber \\
    & -\sum_{a,b,c,d=1}^N \bra{a}\tilde{F}_2\ket{c,d} \cdot \frac{1}{\lambda_c+\lambda_d-\lambda_a} \ket{b,a}\bra{b,c,d}\\
    =&
	-\tilde{g}\left(
	\raisebox{-0.35cm}{
		\begin{tikzpicture}[scale=1]
			\Tree [.$\bt$ [.$\bt$ ] [.$\bt$ ] ];
			\node at (0,0) {$a$};
			\node at (-0.5,-0.75) {$b$};
			\node at (0.5,-0.75) {$c$};
		\end{tikzpicture}
		\hspace{\treeTree}
		\begin{tikzpicture}[scale=1]
			\Tree [.$\bt$ ];
			\node at (0,0.05) {$d$};
		\end{tikzpicture}
	}
	\right)
	-\tilde{g}\left(
	\raisebox{-0.35cm}{
		\begin{tikzpicture}[scale=1]
			\Tree [.$\bt$ ];
			\node at (0,0.05) {$b$};
		\end{tikzpicture}
		\hspace{\treeTree}
		\begin{tikzpicture}[scale=1]
			\Tree [.$\bt$ [.$\bt$ ] [.$\bt$ ] ];
			\node at (0,0) {$a$};
			\node at (-0.5,-0.75) {$c$};
			\node at (0.5,-0.75) {$d$};
		\end{tikzpicture}
	}
	\right),
    \\
    \lrb{\tilde{V}^{-1}}_{3,4}=& -\sum_{a,b,c,d,e=1}^N \bra{a}\tilde{F}_2\ket{b,c} \cdot \frac{1}{\lambda_b+\lambda_c-\lambda_a} \ket{a,d,e}\bra{b,c,d,e} \nonumber \\
    & -\sum_{a,b,c,d,e=1}^N \bra{a}\tilde{F}_2\ket{c,d} \cdot \frac{1}{\lambda_c+\lambda_d-\lambda_a} \ket{b,a,e}\bra{b,c,d,e} \nonumber \\
    & -\sum_{a,b,c,d,e=1}^N \bra{a}\tilde{F}_2\ket{d,e} \cdot \frac{1}{\lambda_d+\lambda_e-\lambda_a} \ket{b,c,a}\bra{b,c,d,e}
    \\
    =&
	-\tilde{g}\left(
	\raisebox{-0.35cm}{
		\begin{tikzpicture}[scale=1]
			\Tree [.$\bt$ [.$\bt$ ] [.$\bt$ ] ];
			\node at (0,0) {$a$};
			\node at (-0.5,-0.75) {$b$};
			\node at (0.5,-0.75) {$c$};
		\end{tikzpicture}
		\hspace{\treeTree}
		\begin{tikzpicture}[scale=1]
			\Tree [.$\bt$ ];
			\node at (0,0.05) {$d$};
		\end{tikzpicture}
		\hspace{\treeTree}
		\begin{tikzpicture}[scale=1]
			\Tree [.$\bt$ ];
			\node at (0,0.05) {$e$};
		\end{tikzpicture}
	}
	\right)
	-\tilde{g}\left(
	\raisebox{-0.35cm}{
		\begin{tikzpicture}[scale=1]
			\Tree [.$\bt$ ];
			\node at (0,0.05) {$b$};
		\end{tikzpicture}
		\hspace{\treeTree}
		\begin{tikzpicture}[scale=1]
			\Tree [.$\bt$ [.$\bt$ ] [.$\bt$ ] ];
			\node at (0,0) {$a$};
			\node at (-0.5,-0.75) {$c$};
			\node at (0.5,-0.75) {$d$};
		\end{tikzpicture}
		\hspace{\treeTree}
		\begin{tikzpicture}[scale=1]
			\tikzset{}
			\Tree [.$\bt$ ];
			\node at (0,0.05) {$e$};
		\end{tikzpicture}
	}
	\right)
	-\tilde{g}\left(
	\raisebox{-0.35cm}{
		\begin{tikzpicture}[scale=1]
			\Tree [.$\bt$ ];
			\node at (0,0.05) {$b$};
		\end{tikzpicture}
		\hspace{\treeTree}
		\begin{tikzpicture}[scale=1]
			\Tree [.$\bt$ ];
			\node at (0,0.05) {$c$};
		\end{tikzpicture}
		\hspace{\treeTree}
		\begin{tikzpicture}[scale=1]
			\Tree [.$\bt$ [.$\bt$ ] [.$\bt$ ] ];
			\node at (0,0) {$a$};
			\node at (-0.5,-0.75) {$d$};
			\node at (0.5,-0.75) {$e$};
		\end{tikzpicture}
	}
	\right),
    \\
    \lrb{\tilde{V}^{-1}}_{1,3}=&\sum_{a,b,c,d,e=1}^N \bra{a}\tilde{F}_2\ket{b,c} \cdot \bra{b}\tilde{F}_2\ket{d,e} \cdot \frac{1}{\lambda_b+\lambda_c-\lambda_a}
    \cdot \frac{1}{\lambda_c+\lambda_d+\lambda_e-\lambda_a}
    \ket{a}\bra{d,e,c} \nonumber \\
    & +\sum_{a,b,c,d,e=1}^N \bra{a}\tilde{F}_2\ket{b,c} \cdot \bra{c}\tilde{F}_2\ket{d,e} \cdot \frac{1}{\lambda_b+\lambda_c-\lambda_a}
    \cdot \frac{1}{\lambda_b+\lambda_d+\lambda_e-\lambda_a}
    \ket{a}\bra{b,d,e}
    \\
    =&
	\tilde{g}\left(
	\raisebox{-0.6cm}{
		\begin{tikzpicture}[scale=1]
			\Tree [.$\bt$ 
			[.$\bt$ [.$\bt$ ] [.$\bt$ ] ] 
			[.$\bt$ ] 
			];
			\node at (0,0) {$a$};
			\node at (-0.6,-0.75) {$b$};
			\node at (0.6,-0.75) {$c$};
			\node at (-0.9,-1.35) {$d$};
			\node at (0.15,-1.35) {$e$};
		\end{tikzpicture}
	}
	\right)
	+
	\tilde{g}\left(
	\raisebox{-0.6cm}{
		\begin{tikzpicture}[scale=1]
			\Tree [.$\bt$ 
			[.$\bt$ ] 
			[.$\bt$ [.$\bt$ ] [.$\bt$ ] ] 
			];
			\node at (0,0) {$a$};
			\node at (-0.6,-0.75) {$b$};
			\node at (0.6,-0.75) {$c$};
			\node at (-0.15,-1.35) {$d$};
			\node at (0.9,-1.35) {$e$};
		\end{tikzpicture}
	}
	\right),
    \\
    \lrb{\tilde{V}^{-1}}_{2,4}=& \sum_{a,b,c,d,e,f=1}^N \bra{a}\tilde{F}_2\ket{b,c} \cdot \bra{b}\tilde{F}_2\ket{d,e} \cdot \frac{1}{\lambda_b+\lambda_c-\lambda_a}
    \cdot \frac{1}{\lambda_c+\lambda_d+\lambda_e-\lambda_a}
    \ket{a,f}\bra{d,e,c,f} \nonumber \\
    & +\sum_{a,b,c,d,e,f=1}^N \bra{a}\tilde{F}_2\ket{b,c} \cdot \bra{c}\tilde{F}_2\ket{d,e} \cdot \frac{1}{\lambda_b+\lambda_c-\lambda_a}
    \cdot \frac{1}{\lambda_b+\lambda_d+\lambda_e-\lambda_a}
    \ket{a,f}\bra{b,d,e,f} \nonumber \\
    & + \sum_{a,b,c,d,e,f=1}^N \bra{a}\tilde{F}_2\ket{b,c} \cdot \bra{b}\tilde{F}_2\ket{e,f} \cdot \frac{1}{\lambda_b+\lambda_c-\lambda_a}
    \cdot \frac{1}{\lambda_c+\lambda_e+\lambda_f-\lambda_a}
    \ket{d,a}\bra{d,e,f,c} \nonumber \\
    & +\sum_{a,b,c,d,e,f=1}^N \bra{a}\tilde{F}_2\ket{b,c} \cdot \bra{c}\tilde{F}_2\ket{e,f} \cdot \frac{1}{\lambda_b+\lambda_c-\lambda_a}
    \cdot \frac{1}{\lambda_b+\lambda_e+\lambda_f-\lambda_a}
    \ket{d,a}\bra{d,b,e,f} \nonumber \\
    & +\sum_{a,b,c,d,e,f=1}^N \bra{a}\tilde{F}_2\ket{c,d} \cdot \bra{b}\tilde{F}_2\ket{e,f} \cdot \frac{1}{\lambda_c+\lambda_d-\lambda_a}
    \cdot \frac{1}{\lambda_e+\lambda_f-\lambda_b}
    \ket{a,b}\bra{c,d,e,f}
    \\
    =& 
	\tilde{g}\left(
	\raisebox{-0.6cm}{
		\begin{tikzpicture}[scale=1]
			\Tree [.$\bt$ 
			[.$\bt$ [.$\bt$ ] [.$\bt$ ] ] 
			[.$\bt$ ] 
			];		
			\node at (0,0) {$a$};
			\node at (-0.6,-0.75) {$b$};
			\node at (0.6,-0.75) {$c$};
			\node at (-0.9,-1.35) {$d$};
			\node at (0.15,-1.35) {$e$};
		\end{tikzpicture}
		\hspace{\treeTree}
		\begin{tikzpicture}[scale=1]
			\Tree [.$\bt$  ];
			\node at (0,0.05) {$f$};
			\node at (0,-0.2) {\phantom{$f$}};
		\end{tikzpicture}
	}
	\right)
	+
	\tilde{g}\left(
	\raisebox{-0.6cm}{
		\begin{tikzpicture}[scale=1]
			\Tree [.$\bt$ 
			[.$\bt$ ] 
			[.$\bt$ [.$\bt$ ] [.$\bt$ ] ] 
			];
			\node at (0,0) {$a$};
			\node at (-0.6,-0.75) {$b$};
			\node at (0.6,-0.75) {$c$};
			\node at (-0.15,-1.35) {$d$};
			\node at (0.9,-1.35) {$e$};
		\end{tikzpicture}
		\hspace{\treeTree}
		\begin{tikzpicture}[scale=1]
			\Tree [.$\bt$  ];
			\node at (0,0.05) {$f$};
			\node at (0,-0.2) {\phantom{$f$}};
		\end{tikzpicture}
	}
	\right)
	+
	\tilde{g}\left(
	\raisebox{-0.6cm}{
		\begin{tikzpicture}[scale=1]
			\Tree [.$\bt$  ];
			\node at (0,0.05) {$d$};
			\node at (0,-0.2) {\phantom{$f$}};
		\end{tikzpicture}
		\hspace{\treeTree}
		\begin{tikzpicture}[scale=1]
			\Tree [.$\bt$ 
			[.$\bt$ [.$\bt$ ] [.$\bt$ ] ] 
			[.$\bt$ ] ];
			\node at (0,0) {$a$};
			\node at (-0.6,-0.75) {$b$};
			\node at (0.6,-0.75) {$c$};
			\node at (-0.9,-1.35) {$e$};
			\node at (0.15,-1.35) {$f$};
		\end{tikzpicture}
	}
	\right)
	\nonumber
	\\
	&
	+
	\tilde{g}\left(
	\raisebox{-0.6cm}{
		\begin{tikzpicture}[scale=1]
			\Tree [.$\bt$  ];
			\node at (0,0.05) {$d$};
			\node at (0,-0.2) {\phantom{$f$}};
		\end{tikzpicture}
		\hspace{\treeTree}
		\begin{tikzpicture}[scale=1]
			\Tree [.$\bt$ 
			[.$\bt$ ] 
			[.$\bt$ [.$\bt$ ] [.$\bt$ ] ] 
			];
			\node at (0,0) {$a$};
			\node at (-0.6,-0.75) {$b$};
			\node at (0.6,-0.75) {$c$};
			\node at (-0.15,-1.35) {$e$};
			\node at (0.9,-1.35) {$f$};
		\end{tikzpicture}
	}
	\right)
	+
	\tilde{g}\left(
	\raisebox{-0.35cm}{
		\begin{tikzpicture}[scale=1]
			\Tree [.$\bt$ [.$\bt$ ] [.$\bt$ ] ];
			\node at (0,0.05) {$a$};
			\node at (-0.5,-0.75) {$c$};
			\node at (0.5,-0.75) {$d$};
		\end{tikzpicture}
		\hspace{\treeTree}
		\begin{tikzpicture}[scale=1]
			\Tree [.$\bullet$ [.$\bt$ ] [.$\bt$ ] ];
			\node at (0,0.05) {$b$};
			\node at (-0.5,-0.75) {$e$};
			\node at (0.5,-0.75) {$f$};
		\end{tikzpicture}
	}
	\right),
    \\    
    \lrb{\tilde{V}^{-1}}_{1,4}=&-\sum_{a,b,c,d,e,f,g=1}^N \bra{a}\tilde{F}_2\ket{b,c} \cdot \bra{b}\tilde{F}_2\ket{d,e} \cdot 
    \bra{d}\tilde{F}_2\ket{f,g} \cdot \frac{1}{\lambda_b+\lambda_c-\lambda_a}  \cdot \frac{1}{\lambda_c+\lambda_d+\lambda_e-\lambda_a} \nonumber \\
    &
    \cdot \frac{1}{\lambda_c+\lambda_e+\lambda_f+\lambda_g-\lambda_a}
    \ket{a}\bra{f,g,e,c} \nonumber \\
    & -\sum_{a,b,c,d,e,f,g=1}^N \bra{a}\tilde{F}_2\ket{b,c} \cdot \bra{b}\tilde{F}_2\ket{d,e} \cdot 
    \bra{e}\tilde{F}_2\ket{f,g} \cdot \frac{1}{\lambda_b+\lambda_c-\lambda_a}  \cdot \frac{1}{\lambda_c+\lambda_d+\lambda_e-\lambda_a} \nonumber \\
    &
    \cdot \frac{1}{\lambda_c+\lambda_d+\lambda_f+\lambda_g-\lambda_a}
    \ket{a}\bra{d,f,g,c} \nonumber \\
    & -\sum_{a,b,c,d,e,f,g=1}^N \bra{a}\tilde{F}_2\ket{b,c} \cdot \bra{c}\tilde{F}_2\ket{d,e} \cdot 
    \bra{d}\tilde{F}_2\ket{f,g} \cdot \frac{1}{\lambda_b+\lambda_c-\lambda_a}  \cdot \frac{1}{\lambda_b+\lambda_d+\lambda_e-\lambda_a} \nonumber \\
    &
    \cdot \frac{1}{\lambda_b+\lambda_e+\lambda_f+\lambda_g-\lambda_a}
    \ket{a}\bra{b,f,g,e} \nonumber \\
    & -\sum_{a,b,c,d,e,f,g=1}^N \bra{a}\tilde{F}_2\ket{b,c} \cdot \bra{c}\tilde{F}_2\ket{d,e} \cdot 
    \bra{e}\tilde{F}_2\ket{f,g} \cdot \frac{1}{\lambda_b+\lambda_c-\lambda_a}  \cdot \frac{1}{\lambda_b+\lambda_d+\lambda_e-\lambda_a} \nonumber \\
    &
    \cdot \frac{1}{\lambda_b+\lambda_d+\lambda_f+\lambda_g-\lambda_a}
    \ket{a}\bra{b,d,f,g} \nonumber \\
    & -\sum_{a,b,c,d,e,f,g=1}^N \bra{a}\tilde{F}_2\ket{b,c} \cdot \bra{b}\tilde{F}_2\ket{d,e} \cdot 
    \bra{c}\tilde{F}_2\ket{f,g} \cdot \frac{1}{\lambda_b+\lambda_c-\lambda_a}  \cdot \left ( \frac{1}{\lambda_c+\lambda_d+\lambda_e-\lambda_a} \right. \nonumber \\
    & \left. +\frac{1}{\lambda_b+\lambda_f+\lambda_g-\lambda_a} \right )
    \cdot \frac{1}{\lambda_d+\lambda_e+\lambda_f+\lambda_g-\lambda_a}
    \ket{a}\bra{d,e,f,g}
    \\
    =&
    -\tilde{g}\left(
	\raisebox{-0.9cm}{
		\begin{tikzpicture}[scale=1]
			\Tree [.$\bt$ 
			[.$\bt$ [.$\bt$ [.$\bt$ ] [.$\bt$ ] ] [.$\bt$ ] ] 
			[.$\bt$ ] 
			];
			\node at (0,0) {$a$};
			\node at (-0.7,-0.75) {$b$};
			\node at (0.7,-0.75) {$c$};
			\node at (-1.05,-1.35) {$d$};
			\node at (0.15,-1.35) {$e$};
			\node at (-1.25,-2) {$f$};
			\node at (-0.25,-2) {$g$};
		\end{tikzpicture}
	}
	\right)
	-
	\tilde{g}\left(
	\raisebox{-0.9cm}{
		\begin{tikzpicture}[scale=1]
			\Tree [.$\bt$ 
			[.$\bt$ [.$\bt$ ] [.$\bt$ [.$\bt$ ] [.$\bt$ ] ] ] 
			[.$\bt$ ] 
			];
			\node at (0,0) {$a$};
			\node at (-0.8,-0.75) {$b$};
			\node at (0.8,-0.75) {$c$};
			\node at (-1.15,-1.35) {$d$};
			\node at (0.05,-1.35) {$e$};
			\node at (-0.7,-2) {$f$};
			\node at (0.3,-2) {$g$};
		\end{tikzpicture}
	}
	\right)
	-
	\tilde{g}\left(
	\raisebox{-0.9cm}{
		\begin{tikzpicture}[scale=1]
			\Tree [.$\bt$ 
			[.$\bt$ ] 
			[.$\bt$ [.$\bt$ [.$\bt$ ] [.$\bt$ ] ] [.$\bt$ ] ] 
			];
			\node at (0,0) {$a$};
			\node at (-0.8,-0.75) {$b$};
			\node at (0.8,-0.75) {$c$};
			\node at (-0.1,-1.35) {$d$};
			\node at (1.1,-1.35) {$e$};
			\node at (-0.35,-2) {$f$};
			\node at (0.65,-2) {$g$};
		\end{tikzpicture}
	}
	\right)
	\nonumber
	\\
	&
	-
	\tilde{g}\left(
	\raisebox{-0.9cm}{
		\begin{tikzpicture}[scale=1]
			\Tree [.$\bt$ 
			[.$\bt$ ] 
			[.$\bt$ [.$\bt$ ] [.$\bt$ [.$\bt$ ] [.$\bt$ ] 	] ] 
			];
			\node at (0,0) {$a$};
			\node at (-0.7,-0.75) {$b$};
			\node at (0.7,-0.75) {$c$};
			\node at (-0.2,-1.35) {$d$};
			\node at (1,-1.35) {$e$};
			\node at (0.25,-2) {$f$};
			\node at (1.25,-2) {$g$};
		\end{tikzpicture}
	}
	\right)
	-
	\tilde{g}\left(
	\raisebox{-0.6cm}{
		\begin{tikzpicture}[scale=1]
			\Tree [.$\bt$ [.$\bt$ [.$\bt$ ] [.$\bt$ ] ] [.$\bt$ [.$\bt$ ] [.$\bt$ ] ] ];
			\node at (0,0) {$a$};
			\node at (-0.7,-0.75) {$b$};
			\node at (0.7,-0.75) {$c$};
			\node at (-0.8,-1.6) {$d$};
			\node at (-0.25,-1.6) {$e$};
			\node at (0.25,-1.6) {$f$};
			\node at (0.8,-1.6) {$g$};
		\end{tikzpicture}				
	}
	\right).
\end{align}
\endgroup

Given the expressions for the blocks of $\tilde{V}^{-1}$, the corresponding blocks of $V^{-1}$ can be obtained by right-multiplying with $(Q^{-1})^{\otimes j}$, where $j$ is the column index. Specifically, one can verify that
\begin{align}
    {V}^{-1} = \begin{pmatrix}
        ({V}^{-1})_{1,1}  & ({V}^{-1})_{1,2}   &  ({V}^{-1})_{1,3} & \dots  & ({V}^{-1})_{1,k} \\
                 & ({V}^{-1})_{2,2}   &  ({V}^{-1})_{2,3} & \dots & ({V}^{-1})_{2,k} \\
                 &           & \ddots &   \ddots       & \vdots \\
                 &           &          & ({V}^{-1})_{k-1,k-1} & ({V}^{-1})_{k-1, k} \\
                &           &          &  & ({V}^{-1})_{k,k} \\\end{pmatrix},
\end{align}
where 
\begin{align}
    ({V}^{-1})_{i,j} = (\tilde{V}^{-1})_{i,j}(Q^{-1})^{\otimes j},\quad ~\forall 1 \le i \le j \le k.
\end{align}
For arbitrary $\mbf \in \mathcal{T}^i_j$, we define
\begin{align}
    g(\mbf) = \tilde{g}(\mbf) \lrb{Q^{-1}}^{\otimes j}.
\end{align}
Then 
\begin{align}
    ({V}^{-1})_{i,j} = (-1)^{j-i}\sum_{\mbf \in \mathcal{T}^i_j} g(\mbf).
\end{align}

As in the case of $V_{i,j}$ and $\tilde{V}_{i,j}$, both $(V^{-1})_{i,j}$ and $(\tilde{V}^{-1})_{i,j}$ are independent of the truncation order $k$ in the Carleman embedding, as long as $k \ge j$. Moreover, the expressions for $(V^{-1})_{i,j}$ and $(\tilde{V}^{-1})_{i,j}$ become quite involved for large $i$ and $j$, but they are primarily used for the error analysis in Section~\ref{subsec:error_analysis_nonresonant_systems} rather than for explicit computation.

\subsubsection{Extension to broader classes of systems}
\label{subsec:carleman_matrix_diagonalization_extension}
Thus far, our framework has focused on the explicit diagonalization of Carleman matrices for quadratic differential equations without external driving:
\begin{align}
\dot{x}(t)=F_1 x(t) + F_2 x(t)^{\otimes 2}.
\end{align}
In what follows, we outline how this framework may be extended to encompass a wider variety of differential equations.

\paragraph{Higher-order nonlinearities.} First, our framework can be extended to handle general polynomial nonlinearities. Consider the ODE system:
\begin{align}
\dot{x}(t) = F_1 x(t) + F_2 x(t)^{\otimes 2} + F_3 x(t)^{\otimes 3} + \dots + F_m x(t)^{\otimes m}.
\end{align}
One can follow the same strategy to derive blockwise expressions for the  linear transformation and its inverse that diagonalize the Carleman matrices associated with this system. Each term in these expressions corresponds to a rooted  tree or forest; however, the internal nodes of these trees and forests may now have degrees $2, 3, \dots, m$. In other words, the representation generalizes from binary to general rooted trees and forests. Nonetheless, the core structure of the representation remains the same as in the quadratic case.

\paragraph{Driven nonlinear systems.} Second, if a system includes an external driving term but is defined in a neighborhood of a known equilibrium, our framework can be extended to analyze its behavior near that equilibrium. Specifically, consider the ODE system:
\begin{align}
\dot{x}(t)  = F_0 + F_1 x(t) + F_2 x(t)^{\otimes 2},
\end{align}
with initial condition $x(0)$ lying in the basin of attraction of a stable fixed point  $x^*$ which satisfies
\begin{align}
F_0 + F_1 x^* + F_2 (x^*)^{\otimes 2} = 0.
\end{align}
Introducing the change of variables $y(t) = x(t) - x^*$, the system can be rewritten as
\begin{align}
\dot{y}(t) = \hat{F}_1 y(t) + F_2 y(t)^{\otimes 2},
\end{align}
where
\begin{align}
\hat{F}_1 = F_1 + F_2 (I \otimes x^* + x^* \otimes I).
\end{align}
Note that the driving term is eliminated in this transformed system, allowing our framework to be directly applied to analyze the dynamics of $y(t)$. Moreover, since $\hat{F}_1$ depends on the choice of fixed point $x^*$, different choices yield distinct transformed systems. The local behavior around each equilibrium is thus determined by the properties of the corresponding $\hat{F}_1$ matrix. This approach extends naturally to higher-order nonlinear systems as well.

\paragraph{Resonant, diagonalizable systems.} Finally, certain resonant systems still admit diagonalizable Carleman matrices and can therefore be treated within our framework. Specifically, consider the ODE system
\begin{align}
\dot{x}(t) = F_1 x(t) + F_2 x(t)^{\otimes 2},
\end{align}
where $F_1 = Q \cdot \mathrm{diag}(\lambda_1, \lambda_2, \dots, \lambda_N) \cdot Q^{-1}$ with possibly resonant eigenvalues $\lambda_1, \lambda_2, \dots, \lambda_N$. The order-$k$ Carleman matrix $A$ associated with this system is diagonalizable via a similarity transformation if we can construct matrices $\lrcb{\tilde{V}_{i,j}:~1\le i\le j\le k}$ satisfying
\begin{align}
\tilde{V}_{i,j} \Lambda_j - \Lambda_i \tilde{V}_{i,j} = \tilde{A}_{i,i+1} \tilde{V}_{i+1,j}, \quad \forall 1 \le i < j \le k,
\label{eq:V_recurrence2}
\end{align}
and $\tilde{V}_{j,j}=I^{\otimes j}$ for all $j \in [k]$. Proposition~\ref{prop:valid_vij} shows that when the tuple $(\lambda_1, \lambda_2, \dots, \lambda_N)$ is nonresonant, these equations always admit a solution. In the resonant case, however, a solution may still exist, provided that any zero entry on the right-hand side of Eq.~\eqref{eq:V_recurrence2} is matched by a corresponding zero entry on the left-hand side. This condition leads to a system of homogeneous polynomial equations in the entries of $\tilde{F}_2$, with coefficients determined by the eigenvalues of $F_1$. As long as these constraints are satisfied, the Carleman matrix $A$ remains diagonalizable, and our framework can be applied essentially as before, with minor adjustments to account for the problematic zero entries. We leave the details of this extension to the reader.

\subsection{Error analysis of the Carleman embedding}
\label{subsec:error_analysis_nonresonant_systems}

Next, we utilize the framework developed in Section~\ref{subsec:carleman_matrix_diagonalization} to derive upper bounds on the approximation errors of truncated Carleman linearizations for nonresonant differential equations.

Let us briefly recall the problem setting. Consider the ODE system:
\begin{align}
\dot{x}(t) = F_1 x(t) + F_2 x(t)^{\otimes 2},    
\label{eq:general_ode_system2}
\end{align}
where $x(t) \in \mathbb{C}^N$, $F_2 \in \mathbb{C}^{N \times N^2}$, and $F_1 = Q \Lambda Q^{-1}$ with $\Lambda = \operatorname{diag}(\lambda_1, \lambda_2, \dots, \lambda_N)$. We assume that $\operatorname{Re}(\lambda_i) \le 0$ for all $i \in [N]$. Unless otherwise stated, we further assume that the spectrum $(\lambda_1, \lambda_2, \dots, \lambda_N)$ is nonresonant throughout this subsection.

Let 
\begin{align}
\dot{y}(t) =Ay(t)    
\end{align}
be the order-$k$ Carleman linearization for this system, where $y(t)=[y^{[1]}(t), y^{[2]}(t),\dots,y^{[k]}(t)]^T$ with $y^{[i]}(t) \in \myC^{N^i}$ for $i \in [k]$. We aim to prove an upper bound on the truncation error, i.e. $\norm{y^{[i]}(t)-x(t)^{\otimes i}}$, for each $i \in [k]$, under mild assumptions on the system.  

To this end, we first simplify the problem through a change of variables $\tilde{x}(t)=Q^{-1}x(t)$. Then $\tilde{x}(t)$ is governed by the ODE system:
\begin{align}
\dot{\tilde{x}}(t) = \Lambda \tilde{x}(t) + \tilde{F}_2 \tilde{x}(t)^{\otimes 2},   
\label{eq:simple_ode_system2}
\end{align}
where $\tilde{F}_2 = Q^{-1}F_2 Q^{\otimes 2}$. Let 
\begin{align}
\dot{\tilde{y}}(t) =\tilde{A}\tilde{y}(t)    
\end{align}
be the order-$k$ Carleman linearization for this system, where $\tilde{y}(t)=[\tilde{y}^{[1]}(t), \tilde{y}^{[2]}(t),\dots,\tilde{y}^{[k]}(t)]^T$ with $\tilde{y}^{[i]}(t) \in \myC^{N^i}$ for $i \in [k]$. We will first establish an upper bound on $\norm{\tilde{y}^{[i]}(t)-\tilde{x}(t)^{\otimes i}}$, and then convert it into an upper bound on $\norm{y^{[i]}(t)-x(t)^{\otimes i}}$, for each $i \in [k]$.

We say that $(\lambda_1, \lambda_2, \dots, \lambda_N)$ lies in the \emph{Poincar\'{e} domain} if $0$ is contained in the convex hull of $\lambda_1, \lambda_2, \dots, \lambda_N$; otherwise, it is said to lie in the \emph{Siegel domain}. As we shall see, the error analysis depends crucially on which of these two domains the system falls into. Accordingly, we address the Poincar\'{e} and Siegel domain cases separately in what follows.

\subsubsection{Error analysis for systems in the Poincar\'{e} domain}
\label{subsubsec:error_analysis_poincare_domain}

We begin by considering the case where $(\lambda_1, \lambda_2, \dots, \lambda_N)$ is nonresonant and lies in the Poincar\'{e} domain. More precisely, we assume that the following condition holds:
\begin{assumption}
There exists a positive normalized ``no-resonance'' gap $\Delta$ defined as:
\begin{align}
\Delta \defeq \min_{i \in [N]} \inf_{\substack{\alpha_j \in \myN_0 \\ \sum_{j=1}^N \alpha_j \ge 2}} \frac{\abs{\lambda_i-\sum_{j=1}^N \alpha_j\lambda_j}}{\sum_{j=1}^N \alpha_j-1} > 0.   
\label{eq:linear_no_resonance_gap}
\end{align}
Under this condition, we say that $(\lambda_1, \lambda_2, \dots, \lambda_N)$ is $\Delta'$-nonresonant for every $\Delta' \in [0, \Delta]$.
\label{assump:linear_no_resonance_gap}
\end{assumption}

Assumption~\ref{assump:linear_no_resonance_gap} might seem unusual at first glance, but it is in fact equivalent to the condition that $(\lambda_1, \lambda_2, \dots, \lambda_N)$ is nonresonant and lies in the Poincar\'{e} domain.

\begin{lemma}
Assumption~\ref{assump:linear_no_resonance_gap} holds if and only if $(\lambda_1, \lambda_2, \dots, \lambda_N)$ is nonresonant and lies in the Poincar\'{e} domain.
\label{lem:equivalence_linear_no_resonance_gap_poincare_domain}
\end{lemma}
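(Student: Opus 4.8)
The plan is to prove the two directions separately, with the contrapositive being the natural formulation in each case. First observe that nonresonance is exactly the statement that $\lambda_i \neq \sum_j \alpha_j \lambda_j$ for all $i$ and all $\alpha \in \myN_0^N$ with $|\alpha| := \sum_j \alpha_j \ge 2$, i.e. that every numerator $|\lambda_i - \sum_j \alpha_j \lambda_j|$ appearing in the definition of $\Delta$ is strictly positive. Since each term of the infimum defining $\Delta$ is a nonnegative quantity divided by the positive integer $|\alpha|-1 \ge 1$, the infimum is always $\ge 0$, and it can only fail to be positive if either some numerator vanishes (a resonance) or the numerators, while all positive, have infimum zero after the normalization. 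So the content of the lemma is: the normalized numerators are bounded away from zero precisely when the system is nonresonant \emph{and} the spectrum avoids the Siegel domain.

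For the ``if'' direction I would argue as follows. Assume nonresonance and that $0$ is not in the convex hull of $\{\lambda_1,\dots,\lambda_N\}$. By a separating hyperplane argument, there is a unit vector $u \in \myC \cong \mathbb{R}^2$ (equivalently a real linear functional $\phi(z) = \realpart{\bar u z}$) and a constant $c > 0$ such that $\realpart{\bar u \lambda_j} \le -c$ for all $j$ (choosing the sign of $u$ appropriately). Then for any $\alpha$ with $|\alpha| \ge 2$,
\begin{align}
\abs{\lambda_i - \sum_{j=1}^N \alpha_j \lambda_j} \;\ge\; \abs{\realpart{\bar u \lambda_i} - \sum_{j=1}^N \alpha_j \realpart{\bar u \lambda_j}} \;\ge\; \abs{\alpha}\, c - \realpart{\bar u \lambda_i}.
\end{align}
Splitting into the regime of large $|\alpha|$ (say $|\alpha| \ge |\alpha|_0$ for a threshold depending on $c$ and $\max_i |\lambda_i|$) where this lower bound grows linearly in $|\alpha|$ and hence the normalized quantity $\frac{\abs{\lambda_i - \sum_j \alpha_j \lambda_j}}{|\alpha|-1}$ is bounded below by a positive constant like $c/2$, and the regime of small $|\alpha| < |\alpha|_0$ (finitely many $(i,\alpha)$ pairs, each with strictly positive numerator by nonresonance), the minimum over the finite part and the uniform bound over the infinite part together give $\Delta > 0$.

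For the ``only if'' direction I would prove the contrapositive: if the system is resonant, then some numerator is $0$ so $\Delta = 0$; and if the spectrum is nonresonant but lies in the Siegel domain, I must show $\Delta = 0$ still. In the Siegel case $0 = \sum_j t_j \lambda_j$ for some convex weights $t_j \ge 0$, $\sum_j t_j = 1$; rational approximation gives nonnegative integers $m_j$ with $m_j / M \to t_j$ (where $M = \sum_j m_j \to \infty$) such that $\big|\sum_j m_j \lambda_j\big| = o(M)$. Taking $\alpha = (m_1,\dots,m_i+1,\dots,m_N)$ for a fixed index $i$ (so that $|\alpha| = M+1$), the numerator $\abs{\lambda_i - \sum_j \alpha_j \lambda_j} = \abs{\sum_j m_j \lambda_j} = o(M)$ while the denominator $|\alpha|-1 = M$, so the ratio tends to $0$, forcing $\Delta = 0$. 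The main obstacle is the Siegel-to-$\Delta=0$ step: one must be careful that the rational approximants $m_j$ can indeed be chosen with $M \to \infty$ and $\sum_j m_j \lambda_j$ genuinely $o(M)$ — this uses simultaneous Dirichlet/Kronecker approximation of the real and imaginary parts of the $\lambda_j$ and the fact that the convex combination has a positive weight configuration, so the approximants stay nonnegative; handling the boundary subcase where $0$ lies on the boundary of the convex hull (some $t_j = 0$) just means dropping the corresponding indices, which is harmless. Everything else is elementary estimation, so I would keep those computations brief.
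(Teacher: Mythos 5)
Your proposal is correct and follows essentially the same route as the paper's proof: for the Poincar\'e direction, a separating linear functional giving a lower bound that grows linearly in $|\alpha|$ plus a finite check over small $|\alpha|$ using nonresonance; for the Siegel direction, integer approximation of a vanishing convex combination $\sum_j t_j\lambda_j=0$ (the paper phrases this as a contradiction rather than a contrapositive, and the simple rounding $m_j=\lfloor K t_j\rfloor$ already yields $\bigl|\sum_j m_j\lambda_j\bigr|=O(1)$, so no Dirichlet/Kronecker machinery is needed). The only slip is a sign in your displayed inequality, whose right-hand side should be $|\alpha|\,c+\realpart{\bar u\,\lambda_i}\ge |\alpha|\,c-\max_j|\lambda_j|$ (as written, with $\realpart{\bar u\,\lambda_i}\le -c$, the bound is false, e.g.\ for $N=1$, $\lambda_1=-1$, $\alpha_1=2$); your subsequent regime split with a threshold depending on $\max_i|\lambda_i|$ shows you intended the correct bound, so nothing breaks.
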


The proof of this lemma is given in Appendix \ref{app:proof_norm_bound_vij_v_inv_ij_poincare_domain}. We remark that, under the condition $\realpart{\lambda_j} \le 0$ for all $j \in [N]$, the origin belongs to the convex hull of $\lambda_1, \lambda_2, \dots, \lambda_N$ if and only if these eigenvalues do not lie on both the positive and negative parts of the imaginary axis. That is, either of the following holds:
\begin{itemize}
    \item $\forall j \in [N]$, $\realpart{\lambda_j}<0$ or $\imaginarypart{\lambda_j} > 0$;
    \item $\forall j \in [N]$, $\realpart{\lambda_j}<0$ or $\imaginarypart{\lambda_j} < 0$.
\end{itemize}
In particular, if the system is dissipative, i.e. $\realpart{\lambda_j} < 0$ for all $j \in [N]$, then $(\lambda_1, \lambda_2, \dots, \lambda_N)$ lies in the Poincar\'{e} domain.

In Section~\ref{subsec:carleman_matrix_diagonalization}, we have studied the diagonalization of the order-$k$ Carleman matrix \( \tilde{A} \) and provided explicit expressions for the blocks of \( \tilde{V} \) and \( \tilde{V}^{-1} \), where \( \tilde{V} \) is the similarity transformation that diagonalizes \( \tilde{A} \), i.e., \( \tilde{A} = \tilde{V} D \tilde{V}^{-1} \). In what follows, we will continue to use the notation introduced in that section and establish upper bounds on the spectral norms of the individual blocks of \( \tilde{V} \) and \( \tilde{V}^{-1} \), which will play a crucial role in the subsequent error analysis.

\begin{lemma}
Under Assumption~\ref{assump:linear_no_resonance_gap}, supposing $\tilde{F}_2 = Q^{-1}F_2 Q^{\otimes 2}$ is $s$-column-sparse, we have 
\begin{align}
\norm{\tilde{V}_{i,j}} \le \binom{j-1}{i-1} \lrb{\frac{4s\norm{\tilde{F}_2}}{\Delta}}^{j-i},~&~\forall~i \le j.
\end{align}
\label{lem:vij_norm_bound_poincare_domain}
\end{lemma}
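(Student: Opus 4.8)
The starting point is the blockwise description of $\tilde V$ in Definition~\ref{def:vij_expression}: $\tilde V_{i,j}=\sum_{\mbf\in\mathcal T^i_j}\tilde f(\mbf)$, where the sum runs over all binary forests $\mbf=(\mbt_1,\dots,\mbt_i)$ with $i$ trees and $j$ leaves in total, and $\tilde f(\mbf)=\tilde f(\mbt_1)\otimes\dots\otimes\tilde f(\mbt_i)$. Since the spectral norm is multiplicative on tensor products and subadditive on sums, the whole statement reduces to the \emph{per-tree estimate}
\[
\norm{\tilde f(\mbt)}\le\lrb{\frac{s\norm{\tilde F_2}}{\Delta}}^{m-1}\quad\text{for every binary tree }\mbt\text{ with }m=v(\mbt)\text{ leaves.}
\]
Indeed, a forest in $\mathcal T^i_j$ has trees with $m_1,\dots,m_i$ leaves, $\sum_l m_l=j$, hence $\sum_l(m_l-1)=j-i$ internal nodes in total, so $\norm{\tilde f(\mbf)}\le(s\norm{\tilde F_2}/\Delta)^{j-i}$ for every such forest; summing over the $\abs{\mathcal T^i_j}\le 4^{j-i}\binom{j-1}{i-1}$ forests (Lemma~\ref{lem:count_binary_forests}) absorbs the $4^{j-i}$ into the base and yields exactly $\binom{j-1}{i-1}(4s\norm{\tilde F_2}/\Delta)^{j-i}$; for $i=j$ one has $\tilde V_{i,i}=I^{\otimes i}$ of norm $1$, consistent with the statement.

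I would prove the per-tree estimate by strong induction on $m$. The base case $m=1$ is $\tilde f(\mbt)=I$. For $m\ge2$, let $\mbt_{\mathrm{left}},\mbt_{\mathrm{right}}$ be the root's two subtrees, with $m_L+m_R=m$, and use the recursion of Definition~\ref{def:vij_expression}, $\tilde f(\mbt)=N_m\odot\lrsb{\tilde F_2\lrb{\tilde f(\mbt_{\mathrm{left}})\otimes\tilde f(\mbt_{\mathrm{right}})}}$. By submultiplicativity and the inductive hypothesis, $\norm{\tilde F_2(\tilde f(\mbt_{\mathrm{left}})\otimes\tilde f(\mbt_{\mathrm{right}}))}\le\norm{\tilde F_2}(s\norm{\tilde F_2}/\Delta)^{m-2}$, so the induction closes once one establishes the \emph{Hadamard estimate} $\norm{N_m\odot(\tilde F_2 Z)}\le(s/\Delta)\norm{Z}$ for every $Z\in\myC^{N^2\times N^m}$ (apply it with $Z=\tilde f(\mbt_{\mathrm{left}})\otimes\tilde f(\mbt_{\mathrm{right}})$). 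At the level of entries Assumption~\ref{assump:linear_no_resonance_gap} already gives part of this for free: the denominators $\sum_l\lambda_{i_l}-\lambda_i$ appearing in $N_m$ (cf.~\eqref{eq:def_nl}) have modulus at least $\Delta(m-1)$, so every entry of $N_m$ has modulus at most $1/(\Delta(m-1))$; the target estimate even leaves slack in the factor $m-1$.

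The Hadamard estimate is the technical core, and the step I expect to be the main obstacle, because the obvious routes all leak a spurious dimension factor: $\norm{N_m}$ itself is not $O(1)$, and bounding $\norm{N_m\odot(\tilde F_2 Z)}$ through a Schur test or through the Frobenius norm of $\tilde F_2 Z$ both cost a $\sqrt N$ that is absent from the claimed bound. The plan uses two ingredients. First, since $\tilde F_2$ is $s$-column-sparse, write $\tilde F_2=\sum_{l=1}^{s}\tilde F_2^{(l)}$ where each $\tilde F_2^{(l)}$ keeps at most one nonzero per column; then the one-per-column structure makes $\tilde F_2^{(l)}(\tilde F_2^{(l)})^{\dagger}$ diagonal, with $a$-th entry $\sum_{p,q}\abs{\tilde F_2^{(l)}[a;p,q]}^2\le\sum_{p,q}\abs{\tilde F_2[a;p,q]}^2=(\tilde F_2\tilde F_2^{\dagger})_{aa}\le\norm{\tilde F_2}^2$, so $\norm{\tilde F_2^{(l)}}\le\norm{\tilde F_2}$, and by the triangle inequality it suffices to prove $\norm{N_m\odot(\tilde F_2^{(l)}Z)}\le\norm{\tilde F_2}\norm{Z}/\Delta$ for a one-per-column matrix — the factor $s$ then comes out front. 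Second, writing $N_m\odot M=\sum_a\ketbra{a}{a}\,M\,D_a$ with $D_a$ the diagonal matrix on $\myC^{N^m}$ of entries $(X_{\vec c}-\lambda_a)^{-1}$, $X_{\vec c}=\sum_l\lambda_{c_l}$, one must tame the $a$-dependence of $D_a$, and this is exactly where the Poincar\'e-domain hypothesis is indispensable. By Lemma~\ref{lem:equivalence_linear_no_resonance_gap_poincare_domain}, $\Delta>0$ is equivalent to $0\notin\operatorname{conv}\{\lambda_i\}$, so (using also $\mathrm{Re}(\lambda_i)\le0$) the denominators $X_{\vec c}-\lambda_a$ all lie — up to a fixed partition into finitely many angular sectors — in half-planes bounded away from the origin; on each such piece $(X_{\vec c}-\lambda_a)^{-1}$ has a rotated Laplace representation $\propto\int_0^\infty e^{-t e^{i\theta}(X_{\vec c}-\lambda_a)}\,dt$, which converts the Hadamard product into an honest operator integral of $\tilde F_2^{(l)}$ against the diagonal factors $e^{t e^{i\theta}\lambda_a}$ on $\myC^N$ and $e^{-t e^{i\theta}X_{\vec c}}$ on $\myC^{N^m}$, whose norms are controlled using $\mathrm{Re}(\lambda_i)\le0$ and the gap $\Delta$; the finitely many small $m$ for which the sector decomposition is needed (and where one falls back on $\abs{X_{\vec c}-\lambda_a}\ge\Delta(m-1)$ directly) affect only absolute constants. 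The delicate point is this confluence — the geometric separation of the spectrum, the quantitative gap $\Delta$, and the column sparsity $s$ — which must be exploited simultaneously so that neither $N$ nor the number of tree labelings enters the final bound. Feeding the Hadamard estimate back into the induction and assembling over forests as in the first paragraph then proves the lemma.
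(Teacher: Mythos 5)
Your reduction to a per-tree bound, the use of $|\mathcal{T}^i_j|\le 4^{j-i}\binom{j-1}{i-1}$, and the splitting $\tilde{F}_2=\sum_{l=1}^s\tilde{F}_2^{(l)}$ into $1$-column-sparse pieces of norm at most $\norm{\tilde{F}_2}$ all match the paper's proof. The gap is in the core step. By applying submultiplicativity first, $\norm{\tilde{F}_2(\tilde{f}(\mbt_{\mathrm{left}})\otimes\tilde{f}(\mbt_{\mathrm{right}}))}\le\norm{\tilde{F}_2}\norm{\tilde{f}(\mbt_{\mathrm{left}})}\norm{\tilde{f}(\mbt_{\mathrm{right}})}$, you discard the structural information about the inner operators and are then forced to bound $\norm{N_m\odot W}$ for an essentially arbitrary $W$ — and note that $\tilde{F}_2^{(l)}Z$ is \emph{not} column-sparse even when $\tilde{F}_2^{(l)}$ is, since distinct columns of $\tilde{F}_2^{(l)}$ may place their single nonzero in distinct rows. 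Your sector/Laplace representation does give a dense-matrix bound of the form $\norm{N_m\odot W}\le(mc_1-c_2)^{-1}\norm{W}$ for $m$ above a threshold, but the constants $c_1,c_2$ come from the half-plane separation of the spectrum and are not interchangeable with $\Delta$ up to absolute constants (the equivalence in Lemma~\ref{lem:equivalence_linear_no_resonance_gap_poincare_domain} only yields $\Delta\gtrsim\min(\Delta_0,\Delta_1)$, not the reverse), so the stated bound with base $4s\norm{\tilde{F}_2}/\Delta$ does not follow; and for the finitely many $m$ below the threshold your fallback is the entrywise bound on $N_m$ applied to a dense matrix, which is exactly the $\sqrt{N}$-losing route you identified as fatal.

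The missing observation — and the paper's actual mechanism — is that $1$-column-sparsity propagates through the \emph{entire} recursion: the product of two $1$-column-sparse matrices is $1$-column-sparse (each column of $AB$ is a scalar multiple of a single column of $A$), as are tensor products, and $M\odot N_l$ is $1$-column-sparse whenever $M$ is. Hence, after expanding every one of the $m-1$ occurrences of $\tilde{F}_2$ in $\tilde{f}(\mbt)$ into the pieces $\tilde{F}_2^{(l)}$ (giving $s^{m-1}$ terms), each term is $1$-column-sparse at every stage, and Lemma~\ref{lem:hadamard_prod_norm_bound} with $s=1$ and the entrywise bound $\norm{N_l}_{\mathrm{max}}\le 1/\Delta$ (immediate from $\Delta$-nonresonance, since $\abs{\sum_p\lambda_{i_p}-\lambda_i}\ge(l-1)\Delta\ge\Delta$) shows each Hadamard step costs at most a factor $\norm{\tilde{F}_2}/\Delta$. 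Summing the $s^{m-1}$ terms gives $\norm{\tilde{f}(\mbt)}\le\lrb{s\norm{\tilde{F}_2}/\Delta}^{m-1}$ with the exact claimed constant, and no Poincar\'{e}-domain geometry beyond $\Delta>0$ is needed. With that replacement your assembly over forests goes through verbatim.
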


\begin{lemma}
Under Assumption~\ref{assump:linear_no_resonance_gap}, supposing $\tilde{F}_2 = Q^{-1}F_2 Q^{\otimes 2}$ is $s$-column-sparse, we have 
\begin{align}
\norm{\lrb{\tilde{V}^{-1}}_{i,j}} \le \binom{j-1}{i-1}\lrb{\frac{4s\norm{\tilde{F}_2}}{\Delta}}^{j-i},~&~\forall i \le j.
\end{align}
\label{lem:wij_norm_bound_poincare_domain}    
\end{lemma}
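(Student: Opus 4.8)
The plan is to mirror the proof of Lemma~\ref{lem:vij_norm_bound_poincare_domain}, replacing the forest expansion of $\tilde V$ into the operators $\tilde f(\mbf)$ by the forest expansion of $\tilde V^{-1}$ into the operators $\tilde g(\mbf)$ supplied by Proposition~\ref{prop:v_inv_ij_expression}. First I would write, using the triangle inequality together with the fact that $\tilde g$ of a forest is the tensor product of the $\tilde g$'s of its constituent trees,
\[
\norm{\lrb{\tilde V^{-1}}_{i,j}} \le \sum_{\mbf\in\mathcal{T}^i_j}\norm{\tilde g(\mbf)} = \sum_{(\mbt_1,\dots,\mbt_i)\in\mathcal{T}^i_j}\prod_{l=1}^i\norm{\tilde g(\mbt_l)}.
\]
This reduces the problem to a single-tree estimate: for a binary tree $\mbt$ with $m$ leaves, hence $m-1$ internal nodes, one wants $\norm{\tilde g(\mbt)}\le(s\norm{\tilde F_2}/\Delta)^{m-1}$. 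Since a forest in $\mathcal{T}^i_j$ consists of $i$ trees with $j$ leaves in total, it has exactly $j-i$ internal nodes, so each product over $l$ is at most $(s\norm{\tilde F_2}/\Delta)^{j-i}$; multiplying by the bound $\abs{\mathcal{T}^i_j}\le 4^{j-i}\binom{j-1}{i-1}$ from Lemma~\ref{lem:count_binary_forests} gives the stated $\binom{j-1}{i-1}(4s\norm{\tilde F_2}/\Delta)^{j-i}$.

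For the single-tree estimate I would start from the explicit formula $\tilde g(\mbt)=\sum_{h\in L(\mbt)}\alpha(\mbt,h)\gamma(\mbt,h)\ketbra{h(r)}{h(l_1),\dots,h(l_m)}$ (the operator $\tilde g$ lacks the clean recursion that $\tilde f$ enjoys, so the explicit form seems the right tool) and control its two scalar factors. The factor $\alpha(\mbt,h)$ is handled exactly as in Lemma~\ref{lem:vij_norm_bound_poincare_domain}: it satisfies $\abs{\alpha(\mbt,h)}\le\norm{\tilde F_2}^{m-1}$, and by $s$-column-sparsity of $\tilde F_2$, processing the internal nodes in reverse topological order shows that at most $s^{m-1}$ labelings $h$ consistent with a fixed leaf-configuration have $\alpha(\mbt,h)\neq 0$. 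The new ingredient is a uniform bound on $\gamma(\mbt,h)$: for any topological order $(v_1,\dots,v_{m-1})$ and any $1\le\ell\le m-1$, the prefix $S_\ell=\{v_1,\dots,v_\ell\}$ is ancestor-closed among the internal nodes, hence connected and containing the root, so $\abs{C(S_\ell)}=\ell+1$; the corresponding denominator $\sum_{w\in C(S_\ell)}\lambda_{h(w)}-\lambda_{h(r)}$ then has the form $\sum_j\alpha_j\lambda_j-\lambda_{h(r)}$ with $\sum_j\alpha_j=\ell+1\ge 2$, so Assumption~\ref{assump:linear_no_resonance_gap} lower-bounds its modulus by $\Delta\ell$. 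Consequently each summand of $\gamma(\mbt,h)$ is at most $\Delta^{-(m-1)}/(m-1)!$, and since there are at most $(\abs{V_I}-1)!=(m-2)!$ topological orders, $\abs{\gamma(\mbt,h)}\le\Delta^{-(m-1)}$. Finally, feeding the bounds $\abs{\alpha(\mbt,h)}\le\norm{\tilde F_2}^{m-1}$, the column-sparsity count, and $\abs{\gamma(\mbt,h)}\le\Delta^{-(m-1)}$ into the same spectral-norm argument used for $\tilde f(\mbt)$ in Lemma~\ref{lem:vij_norm_bound_poincare_domain} (which used only analogous inputs, with $\beta$ playing the role of $\gamma$) yields $\norm{\tilde g(\mbt)}\le(s\norm{\tilde F_2}/\Delta)^{m-1}$.

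The hard part will be the $\gamma$-estimate. Whereas $\beta(\mbt,h)$ in the analysis of $\tilde f$ is a single product of $m-1$ resonance denominators, $\gamma(\mbt,h)$ is a sum over all topological orders of $\mbt$, of which there may be $(m-2)!$, so a naive bound threatens a factorial blow-up. What saves the argument is the exact cancellation of this $(m-2)!$ against the $(m-1)!$ produced by the denominators $\Delta\cdot 1,\Delta\cdot 2,\dots,\Delta\cdot(m-1)$, and the point one must be careful about is that the $\ell$-th denominator is bounded below by $\Delta\ell$ rather than merely by $\Delta$; this improvement hinges on the combinatorial identity $\abs{C(S_\ell)}=\ell+1$ for prefixes of a topological order, which in turn uses that $\mbt$ is a \emph{full} binary tree. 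A secondary point, inherited from Lemma~\ref{lem:vij_norm_bound_poincare_domain}, is that the per-tree bound must be a genuine spectral-norm bound, not merely an entrywise or column-$\ell_1$ bound; ensuring this is where the $s$-column-sparsity of $\tilde F_2$ really enters and must be propagated carefully through the tree contraction.
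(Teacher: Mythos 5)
Your proposal is correct and follows essentially the same route as the paper's proof: the forest expansion from Proposition~\ref{prop:v_inv_ij_expression} plus the counting bound of Lemma~\ref{lem:count_binary_forests} reduce everything to the per-tree estimate $\norm{\tilde g(\mbt)}\le(s\norm{\tilde F_2}/\Delta)^{m-1}$, which the paper also obtains by combining the $s$-fold splitting of $\tilde F_2$ into $1$-column-sparse pieces with the Hadamard-product norm bound and exactly your $\gamma$-estimate (connected prefixes of a topological order give $\abs{C(S_\ell)}=\ell+1$, hence denominators $\ge\Delta\ell$, so the $(m-2)!$ orders cancel against $(m-1)!$). You have correctly identified both the key new ingredient and the point where column-sparsity must be propagated to get a genuine spectral-norm bound.
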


The proofs of these key Lemmas~\ref{lem:vij_norm_bound_poincare_domain} and \ref{lem:wij_norm_bound_poincare_domain} are presented in 
Appendix~\ref{app:proof_norm_bound_vij_v_inv_ij_poincare_domain}. In particular,  Lemma~\ref{lem:wij_norm_bound_poincare_domain} relies crucially on the binary forest representation of $\tilde{V}^{-1}_{i,j}$.

Equipped with these bounds on the norms of the blocks of $\tilde{V}$ and $\tilde{V}^{-1}$, we are now ready to derive an upper bound on the error of truncated Carleman linearization for the system of interest.

\begin{thm}[Carleman error bound for nonresonant systems in the Poincar\'{e} domain]
\label{thm:non-resonant_Poincare}
Let $x(t)$ be the solution to the $N$-dimensional quadratic ODE system
    \begin{align}
        \dot{x}(t) = F_1 x(t) + F_2 x(t)^{\ot 2},
    \end{align}
and
    \begin{align}
        \dot{y}(t) = A y(t),\qquad y(0)=[x(0),x(0)^{\ot 2},\dots,x(0)^{\ot k}]^T,
        \label{eq:carleman_linearized_system}
    \end{align}
be the Carleman linearization of the system truncated at level $k$, see Eq.~\eqref{eq:CarlemanODE}. Assume $F_1=Q\Lambda Q^{-1}$, where $\Lambda=\diag{\lambda_1, \lambda_2, \dots, \lambda_N}$ with $\realpart{\lambda_i} \le 0$ for all $i \in [N]$. Suppose $(\lambda_1, \lambda_2, \dots, \lambda_N)$ is $\Delta$-nonresonant for some $\Delta>0$, and $\tilde{F_2}=Q^{-1}F_2Q^{\otimes 2}$ is $s$-column-sparse. Let 
$\norm{\tilde{x}_{\rm max}}=\max_{t \in [0,T]} \norm{Q^{-1}x(t)}$, and define
\begin{align}
    R_{\Delta} \defeq \frac{8 s \norm{\tilde{F}_2} \norm{\tilde{x}_{\rm max}}}{\Delta}.
\end{align} 
Then the Carleman error vectors $\eta_i(t) = x(t)^{\ot i}-y^{[i]}(t)$ satisfy 
\begin{align}
        \norm{\eta_i(t)} \le k t  \norm{Q}^i \norm{\tilde{F}_2} \norm{\tilde{x}_{\rm max}}^{i+1} R_{\Delta}^{k-i}\binom{k-1}{i-1} ,
        \label{eq:error_bound_ith_block_poincare_domain}
\end{align}
for all $i\in [k]$ and $t \in [0, T]$. In particular, for $i=1$, we have
\begin{align}
\norm{\eta_1(t)} 
\le k t   \norm{Q}\norm{\tilde{F}_2} \norm{\tilde{x}_{\rm max}}^2 R_{\Delta}^{k-1},    
\label{eq:error_bound_1st_block_poincare_domain}
\end{align}
for all $t \in [0, T]$.
\label{thm:error_bound_nonresonant_poincare_domain}
\end{thm}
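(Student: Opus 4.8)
The plan is to pass to the diagonalizing coordinates, solve the error ODE by Duhamel's formula, and control the propagator $e^{\tilde A\tau}$ block by block using the explicit diagonalization from Section~\ref{subsec:carleman_matrix_diagonalization} together with the norm estimates of Lemmas~\ref{lem:vij_norm_bound_poincare_domain} and \ref{lem:wij_norm_bound_poincare_domain}. First I would reduce to the system $\dot{\tilde x}=\Lambda\tilde x+\tilde F_2\tilde x^{\otimes 2}$ with $\tilde x=Q^{-1}x$ and $\tilde F_2=Q^{-1}F_2Q^{\otimes 2}$, whose order-$k$ Carleman matrix $\tilde A$ satisfies $A_{i,j}=Q^{\otimes i}\tilde A_{i,j}(Q^{-1})^{\otimes j}$ and hence $y^{[i]}(t)=Q^{\otimes i}\tilde y^{[i]}(t)$, so that $\|\eta_i(t)\|\le\|Q\|^i\,\|\tilde\eta_i(t)\|$ with $\tilde\eta_i=\tilde x^{\otimes i}-\tilde y^{[i]}$. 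As derived in Section~\ref{sec:carlemanlinearizationscheme}, the error vector $\tilde\eta=[\tilde\eta_1,\dots,\tilde\eta_k]^T$ obeys the linear inhomogeneous ODE $\dot{\tilde\eta}=\tilde A\tilde\eta+\tilde\zeta$ with $\tilde\eta(0)=0$ and source $\tilde\zeta(t)=[0,\dots,0,\tilde A_{k,k+1}\tilde x(t)^{\otimes(k+1)}]^T$ supported only on the top block, so Duhamel's formula gives
\begin{align}
\tilde\eta_i(t)=\int_0^t \bigl[e^{\tilde A(t-s)}\bigr]_{i,k}\,\tilde A_{k,k+1}\,\tilde x(s)^{\otimes(k+1)}\,ds,
\end{align}
where $[\,\cdot\,]_{i,k}$ denotes the $(i,k)$ block.

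Next I would use $\tilde A=\tilde V D\tilde V^{-1}$, together with the block-triangular structure of $\tilde V$, $D$, $\tilde V^{-1}$, to write $\bigl[e^{\tilde A\tau}\bigr]_{i,k}=\sum_{l=i}^{k}\tilde V_{i,l}\,e^{\Lambda_l\tau}\,(\tilde V^{-1})_{l,k}$. Since every diagonal entry of $D$ is a sum $\sum_r\lambda_{i_r}$ with nonpositive real part (using $\realpart{\lambda_j}\le 0$), we have $\|e^{\Lambda_l\tau}\|\le 1$ for $\tau\ge 0$, and hence by Lemmas~\ref{lem:vij_norm_bound_poincare_domain} and \ref{lem:wij_norm_bound_poincare_domain},
\begin{align}
\bigl\|[e^{\tilde A\tau}]_{i,k}\bigr\|\le\sum_{l=i}^k\binom{l-1}{i-1}\binom{k-1}{l-1}\Bigl(\tfrac{4s\|\tilde F_2\|}{\Delta}\Bigr)^{k-i}.
\end{align}
Applying the identity $\binom{l-1}{i-1}\binom{k-1}{l-1}=\binom{k-1}{i-1}\binom{k-i}{l-i}$ and summing the binomial coefficients gives $\sum_{l=i}^k\binom{l-1}{i-1}\binom{k-1}{l-1}=\binom{k-1}{i-1}2^{k-i}$, so $\|[e^{\tilde A\tau}]_{i,k}\|\le\binom{k-1}{i-1}\bigl(8s\|\tilde F_2\|/\Delta\bigr)^{k-i}$ uniformly in $\tau\ge 0$.

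Finally I would bound the source term: $\tilde A_{k,k+1}=\sum_{l=0}^{k-1}I^{\otimes l}\otimes\tilde F_2\otimes I^{\otimes(k-l-1)}$ is a sum of $k$ terms of norm $\|\tilde F_2\|$, and $\|\tilde x(s)^{\otimes(k+1)}\|\le\|\tilde x_{\rm max}\|^{k+1}$, so $\|\tilde A_{k,k+1}\tilde x(s)^{\otimes(k+1)}\|\le k\|\tilde F_2\|\|\tilde x_{\rm max}\|^{k+1}$. Substituting into the Duhamel integral and integrating the time-independent bound over $[0,t]$ yields
\begin{align}
\|\tilde\eta_i(t)\|\le kt\binom{k-1}{i-1}\|\tilde F_2\|\,\|\tilde x_{\rm max}\|^{k+1}\Bigl(\tfrac{8s\|\tilde F_2\|}{\Delta}\Bigr)^{k-i};
\end{align}
writing $\bigl(8s\|\tilde F_2\|/\Delta\bigr)^{k-i}=R_\Delta^{k-i}\|\tilde x_{\rm max}\|^{-(k-i)}$ collapses this to $\|\tilde\eta_i(t)\|\le kt\binom{k-1}{i-1}\|\tilde F_2\|\,\|\tilde x_{\rm max}\|^{i+1}R_\Delta^{k-i}$, and multiplying by $\|Q\|^i$ produces Eq.~\eqref{eq:error_bound_ith_block_poincare_domain}, with the $i=1$ case following by $\binom{k-1}{0}=1$. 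I expect the assembly above to be routine; the genuine work lies in the two norm bounds it invokes — in particular Lemma~\ref{lem:wij_norm_bound_poincare_domain}, whose proof rests on the binary-forest representation of $(\tilde V^{-1})_{i,j}$ from Proposition~\ref{prop:v_inv_ij_expression} and the Poincar\'e-domain estimate $\bigl|\sum_{w\in C(S)}\lambda_{h(w)}-\lambda_{h(r)}\bigr|\ge |S|\,\Delta$ for the denominators in $\gamma(\mbt,h)$, where bounding the number of topological orders and feasible labelings is the main obstacle.
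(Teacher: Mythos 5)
Your proposal is correct and follows essentially the same route as the paper's proof: reduce to the diagonal coordinates, apply Duhamel's formula to the error ODE with the top-block source, expand the propagator block $[e^{\tilde A\tau}]_{i,k}=\sum_{l=i}^k\tilde V_{i,l}e^{\Lambda_l\tau}(\tilde V^{-1})_{l,k}$, invoke Lemmas~\ref{lem:vij_norm_bound_poincare_domain} and \ref{lem:wij_norm_bound_poincare_domain} together with $\|e^{\Lambda_l\tau}\|\le 1$, and collapse the binomial sum via the Vandermonde-type identity to get the factor $2^{k-i}$. Your identification of the two norm lemmas (especially the binary-forest bound on $(\tilde V^{-1})_{i,j}$) as the genuine content is also exactly where the paper places the weight.
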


\begin{proof}
Recall that $\tilde{x}(t)=Q^{-1}x(t)$ is governed by the ODE system:
\begin{align}
\dot{\tilde{x}}(t)  = \Lambda \tilde{x}(t) + \tilde{F}_2 \tilde{x}(t)^{\otimes 2},   
\end{align}
where $\tilde{F}_2 = Q^{-1}F_2 Q^{\otimes 2}$. The order-$k$ Carleman linearization for this system is  
\begin{align}
\dot{\tilde{y}}(t) =\tilde{A}\tilde{y}(t)    
\end{align}
 where $\tilde{y}(t)=[\tilde{y}^{[1]}(t), \tilde{y}^{[2]}(t),\dots,\tilde{y}^{[k]}(t)]^T$ with $\tilde{y}^{[i]}(t) \in \myC^{N^i}$ for $i \in [k]$. We have found $\tilde{V}$ such that $\tilde{A}=\tilde{V}D\tilde{V}^{-1}$, where $D=\diag{\Lambda_1, \Lambda_2, \dots, \Lambda_k}$, 
with
\begin{align}
\Lambda_i=\diag{\lambda_{j_1}+\lambda_{j_2}+\dots+\lambda_{j_i}:~j_1,j_2,\dots,j_i \in [N]}, ~~~\forall i \in [k].
\end{align}
We will first establish an upper bound on $\norm{\tilde{x}(t)^{\otimes i}-\tilde{y}^{[i]}(t)}$, and then convert it into an upper bound on $\norm{x(t)^{\otimes i}-y^{[i]}(t)}$, for each $i \in [k]$.

Let $\tilde{b}(t)=[\tilde{b}_1(t),\tilde{b}_2(t),\dots,\tilde{b}_k(t)]^T$, where $\tilde{b}_i(t)=\vec 0 \in \myC^{N^i}$ for $i \in [k-1]$, and 
\begin{align}
\tilde{b}_k(t)=\tilde{A}_{k,k+1}\tilde{x}(t)^{\otimes (k+1)}.    
\end{align}
Moreover, let $\tilde{\eta}_i(t)=\tilde{x}(t)^{\otimes i}-\tilde{y}^{[i]}(t)$ for $i \in [k]$, and let $\tilde{\eta}(t)=[\tilde{\eta}_1(t), \tilde{\eta}_2(t), \dots, \tilde{\eta}_k(t)]^T$. Then $\tilde{\eta}(t)$ is governed by the ODE system:
\begin{align}
    \dot{\tilde{\eta}}(t) = \tilde{A} \tilde{b}(t).
\end{align}
As a consequence, we have
\begin{align}
\tilde{\eta}(t) =\int_0^t \tilde{V}e^{D(t-\tau)}\tilde{V}^{-1}\tilde{b}(\tau)d\tau.
\end{align}
In other words, for each $i \in [k]$, we have
\begin{align}
\tilde{\eta}_i(t) = \sum_{j=i}^k \int_0^t \tilde{V}_{i,j} e^{\Lambda_j(t-\tau)} \lrb{\tilde{V}^{-1}}_{j,k} \tilde{b}_k(\tau) 
d\tau.
\end{align}
It follows that
\begin{align}
\norm{\tilde{\eta}_i(t)} \le \sum_{j=i}^k \int_0^t \norm{\tilde{V}_{i,j}} \norm{e^{\Lambda_j(t-\tau)}} \norm{\lrb{\tilde{V}^{-1}}_{j,k}} \norm{\tilde{b}_k(\tau)} 
d\tau.
\end{align}

Note that $\Lambda_j$ is a diagonal matrix whose diagonal entries have non-positive real parts. Thus, we have $\norm{e^{\Lambda_j r}} \le 1$ for all $r \in \R$. Then using the definition of $\tilde{A}_{k, k+1}$, $\tilde{x}(t)$ and $\norm{\tilde{x}_{\rm max}}$, we obtain 
\begin{align}
    \norm{\tilde{b}_k(\tau)} \le \norm{\tilde{A}_{k, k+1}} \norm{\tilde{x}(\tau)}^{k+1}
    \le k \norm{\tilde{F}_2} \cdot \norm{\tilde{x}_{\rm max}}^{k+1}, ~~~\forall \tau \in [0, T]. 
\end{align}
Finally, since the given system satisfies Assumption~\ref{assump:linear_no_resonance_gap} and $\tilde{F}_2$ is $s$-column-sparse, Lemma~\ref{lem:vij_norm_bound_poincare_domain} provides an upper bound on $\norm{\tilde{V}_{i,j}}$:
\begin{align}
   \norm{\tilde{V}_{i,j}} \le \binom{j-1}{i-1} \lrb{\frac{4s\norm{\tilde{F}_2}}{\Delta}}^{j-i},
\end{align}
while Lemma~\ref{lem:wij_norm_bound_poincare_domain} provides an upper bound on $\norm{\lrb{\tilde{V}^{-1}}_{j,k}}$:
\begin{align}
    \norm{\lrb{\tilde{V}^{-1}}_{j,k}} \le \binom{k-1}{j-1}\lrb{\frac{4s\norm{\tilde{F}_2}}{\Delta}}^{k-j}.
\end{align}
Combining the above facts yields
\begin{align}
\norm{\tilde{\eta}_i(t)} &\le \sum_{j=i}^k \int_0^t \norm{\tilde{V}_{i,j}} \norm{\lrb{\tilde{V}^{-1}}_{j,k}} \norm{\tilde{b}_k(\tau)} 
d\tau \\
& \le \sum_{j=i}^k k t \norm{\tilde{F}_2} \norm{\tilde{x}_{\rm max}}^{k+1} 
    \binom{k-1}{j-1}\binom{j-1}{i-1}\lrb{\frac{4s\norm{\tilde{F}_2}}{\Delta}}^{k-i} \\
    & = k t \norm{\tilde{F}_2} \norm{\tilde{x}_{\rm max}}^{k+1} \binom{k-1}{i-1} \lrb{\frac{8 s\norm{\tilde{F}_2}}{\Delta}}^{k-i},
    \label{eq:tilde_eta_i_bound}
\end{align}
for all $i \in [k]$ and $t \in [0, T]$.

Now since $x(t)=Q\tilde{x}(t)$ and $y^{[i]}(t)=Q^{\otimes i} \tilde{y}^{[i]}(t)$, we have ${\eta}_i(t)={x}(t)^{\otimes i}-{y}^{[i]}(t)
=Q^{\otimes i} \tilde{\eta}_i(t)$, for all $i \in [k]$ and $t \ge 0$. Then, by Eq.~\eqref{eq:tilde_eta_i_bound}, we obtain
\begin{align}
\norm{\eta_i(t)}
&\le \norm{Q}^{i} \norm{\tilde{\eta}_i(t)} \\
&\le 
k t \norm{Q}^{i}\norm{\tilde{F}_2} \norm{\tilde{x}_{\rm max}}^{k+1} \binom{k-1}{i-1} \lrb{\frac{8 s\norm{\tilde{F}_2}}{\Delta}}^{k-i}\\
&= k t  \norm{Q}^i\norm{\tilde{F}_2} \norm{\tilde{x}_{\rm max}}^{i+1}\binom{k-1}{i-1} \lrb{\frac{8 s \norm{\tilde{F}_2}\norm{\tilde{x}_{\rm max}}}{\Delta}}^{k-i},
\end{align}
for all $i \in [k]$ and $t \in [0, T]$.

This completes the proof of the theorem.
\end{proof}

Theorem~\ref{thm:error_bound_nonresonant_poincare_domain} implies that, for nonresonant systems in the Poincaré domain, the solution of the Carleman linearized system converges to that of the original system as the truncation order $k$ increases, provided the system satisfies a suitable condition. 
Specifically, suppose the following condition holds:
\begin{align}
R_\Delta = \frac{8 s  \norm{\tilde{F}_2} \norm{\tilde{x}_{\rm max}}}{\Delta} < 1.
\end{align}
Then, from Eq.~\eqref{eq:error_bound_1st_block_poincare_domain}, it follows that 
\begin{align}
    \norm{x(t) - y^{[1]}(t)} \to 0, ~\textrm{as}~k \to +\infty.
\end{align}
Furthermore, we can derive an upper bound on $k$ to ensure that the approximation error remains below a prescribed threshold. Specifically, let 
\begin{align}
\beta = \frac{8s}{t\Delta \norm{Q} \norm{\tilde{x}_{\rm max}}}.
\end{align}
Then for any $\epsilon > 0$, by choosing  
\begin{align}
k \ge \left \lceil \frac{W_{-1}(\zeta)}{\myln{R_\Delta}} \right \rceil,
\end{align}
where $W_{-1}$ is the lower branch of the Lambert W function and 
\begin{align}
\zeta=\max(\beta\epsilon \myln{R_\Delta}, -1/e),    
\end{align}
we can guarantee that $\norm{x(t) - y^{[1]}(t)} \le \epsilon$. Similar bounds on $k$ can be derived to ensure that $ \norm{x(t)^{\otimes j} - y^{[j]}(t)} \le \epsilon$ for any given $j \in \myN$.

\subsubsection{Error analysis for systems in the Siegel domain}
\label{subsubsec:error_analysis_siegel_domain}

We now turn to the case where $(\lambda_1, \lambda_2, \dots, \lambda_N)$ is nonresonant and lies in the Siegel domain. In contrast to the Poincar\'{e} domain case, we do not establish a general Carleman convergence result in this setting. Instead, in Appendix~\ref{app:error_analysis_siegel_domain}, we derive an upper bound on the Carleman truncation error,  which may not vanish as the truncation order $k$ increases, and we discuss the technical challenges that hinder a convergence proof in the Siegel domain. Whether such a result can be obtained through alternative methods remains an open question.

However, under additional assumptions on the nonlinearity, one can establish the convergence of Carleman linearization for systems in the Siegel domain. The key idea is to decompose the original system into two subsystems, each residing within the Poincar\'{e} domain, provided that the quadratic term $F_2$ does not induce coupling between the subspaces associated with these subsystems. Under this structural constraint, convergence for the full system follows directly from the convergence of Carleman linearization applied independently to each subsystem. We consider this special case because the technical result derived below forms the foundation for efficient quantum algorithms targeting a class of nonlinear Schr\"{o}dinger equations, as discussed in Section~\ref{subsec:nonlinear_schrodinger_equations}. These equations, which are closely related to coupled oscillator systems, are shown to be BQP-hard. Consequently, we obtain a BQP-complete problem formulated in terms of nonlinear Schr\"{o}dinger dynamics.

Formally, we assume that the system satisfies the following structural condition:
\begin{assumption}
There exist subsets $S_+$ and $S_-$ of $[N]$ such that $[N]=S_+\sqcup S_-$ and:
\begin{itemize}
    \item For each $j \in S_+$, we have $\realpart{\lambda_j} < 0$ or $\imaginarypart{\lambda_j} > 0$;
    \item For each $j \in S_-$, we have $\realpart{\lambda_j} < 0$ or $\imaginarypart{\lambda_j} < 0$.    
    \item $\tilde{F}_2=Q^{-1} F_2 Q^{\otimes 2}$ satisfies the conditions:
    \begin{align}
        \tilde{F}_2 \ket{a}\ket{b} &= \tilde{F}_2 \ket{b}\ket{a} = 0, ~&\forall a \in S_+, ~\forall b \in S_-; \\
        \tilde{F}_2 \ket{a}\ket{b} &\in \myspan{\ket{c}:~c \in S_+}, ~&\forall a,b \in S_+;  \\    
        \tilde{F}_2 \ket{a}\ket{b} &\in 
        \myspan{\ket{c}:~c \in S_-}, ~&\forall a,b \in S_-.         
    \end{align} 
    Equivalently, $\tilde{F}_2=P_+\tilde{F}_2(P_+\otimes P_+) + P_-\tilde{F}_2(P_-\otimes P_-)$, where $P_{\pm}=\sum_{j \in S_{\pm}} \ket{j}\bra{j}$.
\end{itemize}
\label{assump:block_diagonal_f2}
\end{assumption}
Then, we can easily show the following.
\begin{lemma}
If $(\lambda_1,\lambda_2,\dots,\lambda_N)$ is nonresonant and Assumption~\ref{assump:block_diagonal_f2} holds, then there exists $\Delta > 0$ such that
both $(\lambda_j:~j \in S_+)$ and $(\lambda_j:j \in S_-)$ are $\Delta$-nonresonant.
\label{lem:delta_constant_diagonal_f2}
\end{lemma}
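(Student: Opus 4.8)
The plan is to reduce the claim to two elementary facts about each of the sub-tuples $(\lambda_j : j \in S_+)$ and $(\lambda_j : j \in S_-)$: that each is nonresonant, and that each lies in the Poincar\'{e} domain. Granting these, I would apply Lemma~\ref{lem:equivalence_linear_no_resonance_gap_poincare_domain} to each sub-tuple in place of the full spectrum, obtaining strictly positive normalized no-resonance gaps $\Delta_+>0$ and $\Delta_->0$. Then $\Delta \defeq \min(\Delta_+,\Delta_-)>0$ works: by the convention fixed in Assumption~\ref{assump:linear_no_resonance_gap}, a tuple whose gap equals $\Delta_\pm$ is $\Delta'$-nonresonant for every $\Delta' \in [0,\Delta_\pm]$, and in particular for $\Delta'=\Delta$ since $\Delta \le \Delta_\pm$. (If one of $S_+, S_-$ is empty the corresponding assertion is vacuous, so I may assume both are nonempty.)

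For the nonresonance of $(\lambda_j : j \in S_+)$ I would argue by contraposition: any relation $\lambda_i=\sum_{j\in S_+}\alpha_j\lambda_j$ with $i\in S_+$, $\alpha_j\in\myN_0$ and $\sum_{j\in S_+}\alpha_j\ge 2$ extends, by setting $\alpha_j=0$ for $j\notin S_+$, to a resonance relation $\lambda_i=\sum_{j=1}^N\alpha_j\lambda_j$ for the full spectrum $(\lambda_1,\dots,\lambda_N)$, contradicting the hypothesis that it is nonresonant. The same zero-padding argument handles $(\lambda_j : j \in S_-)$.

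For the Poincar\'{e}-domain property I would reuse the real-part argument underlying the remark following Lemma~\ref{lem:equivalence_linear_no_resonance_gap_poincare_domain}. Every $\lambda_j$ satisfies $\realpart{\lambda_j}\le 0$, and Assumption~\ref{assump:block_diagonal_f2} forces each $j\in S_+$ to satisfy $\realpart{\lambda_j}<0$ or $\imaginarypart{\lambda_j}>0$; in particular no $\lambda_j$ with $j \in S_+$ is $0$ or purely imaginary with non-positive imaginary part. If $0$ were a convex combination $\sum_{j\in S_+}t_j\lambda_j$ with $t_j\ge 0$ and $\sum_j t_j=1$, then taking real parts and using $\realpart{\lambda_j}\le 0$ would force $t_j=0$ whenever $\realpart{\lambda_j}<0$; the surviving indices would all have $\realpart{\lambda_j}=0$, hence $\imaginarypart{\lambda_j}>0$, so $\imaginarypart{\sum_j t_j\lambda_j}>0$, a contradiction. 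Hence $0\notin\operatorname{conv}\{\lambda_j : j \in S_+\}$, i.e. $(\lambda_j : j \in S_+)$ lies in the Poincar\'{e} domain, and the identical argument with $\imaginarypart{\lambda_j}<0$ in place of $\imaginarypart{\lambda_j}>0$ treats $S_-$.

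I do not expect a genuine obstacle here: the proof is essentially bookkeeping. The only points that need care are (i) confirming that nonresonance is inherited by subspectra via zero-padding, and (ii) checking that the hypothesis $\realpart{\lambda_j}\le 0$ required to invoke Lemma~\ref{lem:equivalence_linear_no_resonance_gap_poincare_domain} is automatically inherited by each of the two subspectra --- both of which are immediate.
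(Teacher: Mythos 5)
Your proof is correct and follows essentially the same route as the paper's: show each sub-tuple is nonresonant and lies in the Poincar\'{e} domain, invoke Lemma~\ref{lem:equivalence_linear_no_resonance_gap_poincare_domain} on each to get $\Delta_\pm>0$, and take the minimum. The paper simply asserts the two sub-tuple properties without proof, whereas you supply the (correct) zero-padding and convex-hull arguments explicitly.
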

\begin{proof}
First, note that $\vec\lambda_+\defeq (\lambda_j:~j \in S_+)$ is nonresonant and lies in the Poincar\'{e} domain. Then by Lemma~\ref{lem:equivalence_linear_no_resonance_gap_poincare_domain}, there exists $\Delta_1>0$ such that
for any $i$, $j_1$, $j_2$, $\dots$, $j_m \in S_+$, where $m \ge 2$, we have
\begin{align}
\abs{\lambda_i-\lambda_{j_1}-\lambda_{j_2}-\dots-\lambda_{j_m}} \ge (m-1)\Delta_1.    
\end{align}
Similarly, since $\vec\lambda_-\defeq (\lambda_j:~j \in S_-)$ is also nonresonant and belongs to the Poincar\'{e} domain, there exists $\Delta_2>0$ such that
for any $i$, $j_1$, $j_2$, $\dots$, $j_m \in S_-$ with $m \ge 2$, we have 
\begin{align}
\abs{\lambda_i-\lambda_{j_1}-\lambda_{j_2}-\dots-\lambda_{j_m}} \ge (m-1)\Delta_2.    
\end{align}
Setting $\Delta=\mymin{\Delta_1, \Delta_2}$ yields the desired result.     
\end{proof}

Now, under Assumption~\ref{assump:block_diagonal_f2}, we can obtain a bound on the error of truncated Carleman linearization, which closely resembles the one established in Theorem~\ref{thm:error_bound_nonresonant_poincare_domain}. The proof of the following theorem can be found in Appendix~\ref{app:error_bound_nonresonant_siegel_domain_decompose_f2}.

\begin{thm}
Let $x(t)$ be the solution to the $N$-dimensional quadratic ODE system
\begin{align}
    \dot{x}(t) = F_1 x(t) + F_2 x(t)^{\ot 2},
\end{align}
and
\begin{align}
    \dot{y}(t) = A y(t),\qquad y(0)=[x(0),x(0)^{\ot 2},\dots,x(0)^{\ot k}]^T,
\end{align}
be the Carleman linearization of the system truncated at level $k$, see Eq.~\eqref{eq:CarlemanODE}. Assume $F_1=Q\Lambda Q^{-1}$, where $\Lambda=\diag{\lambda_1, \lambda_2, \dots, \lambda_N}$ with  $\realpart{\lambda_i} \le 0$ for all $i \in [N]$. Suppose the system satisfies Assumption 
\ref{assump:block_diagonal_f2}, and 
let $\Delta > 0$ be as defined in Lemma~\ref{lem:delta_constant_diagonal_f2}.
Moreover, suppose
$\tilde{F_2}=Q^{-1}F_2Q^{\otimes 2}$ is $s$-column-sparse. Let 
$\norm{\tilde{x}_{\rm max}}=\max_{t \in [0,T]} \norm{Q^{-1}x(t)}$, and define
\begin{align}
    R_{\Delta} \defeq \frac{8 s \norm{\tilde{F}_2} \norm{\tilde{x}_{\rm max}}}{\Delta}.
\end{align}
Then the Carleman error vectors $\eta_i(t) = x(t)^{\ot i}-y^{[i]}(t)$ satisfy 
\begin{align}
\norm{\eta_i(t)} \le \sqrt{2} k t  \norm{Q}^i \norm{\tilde{F}_2} \norm{\tilde{x}_{\rm max}}^{i+1} R_{\Delta}^{k-i} \binom{k-1}{i-1} ,
\end{align}
for all $i\in [k]$ and $t \in [0, T]$.
In particular, for $i=1$, we have
\begin{align}
\norm{\eta_1(t)} 
\le \sqrt{2} k t \norm{Q}\norm{\tilde{F}_2} \norm{\tilde{x}_{\rm max}}^2 R_{\Delta}^{k-1},    
\end{align}
for all $t \in [0, T]$.
\label{thm:error_bound_nonresonant_siegel_domain_decompose_f2}
\end{thm}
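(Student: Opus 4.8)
The plan is to reduce the problem in the Siegel domain to two independent instances of the Poincar\'{e}-domain analysis already carried out in Theorem~\ref{thm:error_bound_nonresonant_poincare_domain}. Working in the eigenbasis of $F_1$, let $\tilde{x}(t)=Q^{-1}x(t)$, which satisfies $\dot{\tilde{x}}(t)=\Lambda\tilde{x}(t)+\tilde{F}_2\tilde{x}(t)^{\ot 2}$ with $\tilde{F}_2=Q^{-1}F_2Q^{\ot 2}$. Write $\tilde{x}(t)=\tilde{x}_+(t)+\tilde{x}_-(t)$ where $\tilde{x}_{\pm}(t)=P_{\pm}\tilde{x}(t)$ with $P_{\pm}=\sum_{j\in S_{\pm}}\ketbra{j}{j}$. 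The block-diagonal structure of $\tilde{F}_2$ in Assumption~\ref{assump:block_diagonal_f2} — namely $\tilde{F}_2=P_+\tilde{F}_2(P_+\ot P_+)+P_-\tilde{F}_2(P_-\ot P_-)$ — together with the fact that $\Lambda$ commutes with $P_{\pm}$, means the dynamics decouple: $\dot{\tilde{x}}_{\pm}(t)=\Lambda_{\pm}\tilde{x}_{\pm}(t)+(\tilde{F}_2)_{\pm}\tilde{x}_{\pm}(t)^{\ot 2}$, where $\Lambda_{\pm}$ and $(\tilde{F}_2)_{\pm}$ are the restrictions to $\myspan\{\ket{j}:j\in S_{\pm}\}$. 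Each restricted spectrum $(\lambda_j:j\in S_{\pm})$ is nonresonant and lies in the Poincar\'{e} domain, so by Lemma~\ref{lem:delta_constant_diagonal_f2} there is a common $\Delta>0$ making both $\Delta$-nonresonant. Crucially, the restricted linear term still has all eigenvalues with non-positive real parts.

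Next I would show that the Carleman linearization itself respects this decomposition. The tensor powers split as $\tilde{x}(t)^{\ot i}=\sum_{\text{strings }w\in\{+,-\}^i}\tilde{x}_{w_1}(t)\ot\cdots\ot\tilde{x}_{w_i}(t)$, and because $\tilde{F}_2$ never couples a $+$ index to a $-$ index, the Carleman matrix $\tilde{A}$ is block-diagonal with respect to this splitting of $\bigoplus_{i=1}^k \myC^{N^i}$ into sectors indexed by strings $w$. Consequently the truncated Carleman solution $\tilde{y}^{[i]}(t)$ decomposes componentwise along these sectors, and the truncation error $\tilde{\eta}_i(t)=\tilde{x}(t)^{\ot i}-\tilde{y}^{[i]}(t)$ likewise splits. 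For a sector $w$ that is "pure" (all $+$ or all $-$), the error contribution is exactly the error of the Carleman scheme for the corresponding subsystem, which Theorem~\ref{thm:error_bound_nonresonant_poincare_domain} bounds. For a "mixed" sector (some $+$ and some $-$ entries), the subtlety is that the source term $\tilde{A}_{k,k+1}\tilde{x}(t)^{\ot(k+1)}$ feeding into that sector is nonzero, but the key point is that the only non-vanishing source at truncation level $k$ must itself come from $\tilde{F}_2$ acting within a sector; running the same variation-of-constants argument as in the proof of Theorem~\ref{thm:error_bound_nonresonant_poincare_domain}, one gets the identical bound in each sector because all the block norm bounds from Lemmas~\ref{lem:vij_norm_bound_poincare_domain} and \ref{lem:wij_norm_bound_poincare_domain} apply with the same $\Delta$, $s$, and $\norm{\tilde{F}_2}$ (noting $\norm{(\tilde{F}_2)_{\pm}}\le\norm{\tilde{F}_2}$ and $s_{\pm}\le s$). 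Summing the two sector contributions (using $\norm{\tilde{x}_{\pm,\mathrm{max}}}\le\norm{\tilde{x}_{\mathrm{max}}}$ and $\norm{a+b}\le\sqrt{2}\sqrt{\norm{a}^2+\norm{b}^2}$, or more simply $\norm{a+b}\le\norm{a}+\norm{b}$ and absorbing the factor) produces the claimed bound with the extra $\sqrt{2}$ factor. Finally, pulling back via $\eta_i(t)=Q^{\ot i}\tilde{\eta}_i(t)$ gives the stated inequality with the $\norm{Q}^i$ prefactor, exactly as in Theorem~\ref{thm:error_bound_nonresonant_poincare_domain}.

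The main obstacle I anticipate is making the sector decomposition of the Carleman system fully rigorous — in particular verifying that the off-diagonal blocks $\tilde{A}_{i,i+1}$ genuinely preserve the string-labelled sectors (since $\tilde{A}_{i,i+1}=\sum_l I^{\ot l}\ot\tilde{F}_2\ot I^{\ot(i-l-1)}$ applies $\tilde{F}_2$ to two \emph{adjacent} copies, which fuses two adjacent labels $w_l,w_{l+1}$ into one, and this fusion is zero unless $w_l=w_{l+1}$), and tracking how the truncation source term $\tilde{b}_k(t)$ distributes across sectors so that the per-sector error bound from Theorem~\ref{thm:error_bound_nonresonant_poincare_domain} can be applied verbatim. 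Once the block structure is established, the remaining estimates are a routine repetition of the Poincar\'{e}-domain computation applied sector by sector, with the factor $\sqrt{2}$ arising purely from recombining the at-most-two nontrivial pure-sector contributions.
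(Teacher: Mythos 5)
Your overall strategy --- exploit the block structure of $\tilde{F}_2$ from Assumption~\ref{assump:block_diagonal_f2} to reduce the Siegel-domain problem to the Poincar\'{e}-domain analysis of Theorem~\ref{thm:error_bound_nonresonant_poincare_domain} --- is the same as the paper's, and you are right that the only real work lies in the sector structure of the Carleman space. You are in fact more careful than the paper on this point: the paper's proof simply asserts $\tilde{y}^{[i]}(t)=\tilde{y}^{[i]}_+(t)\oplus\tilde{y}^{[i]}_-(t)$ and combines the two pure-subsystem errors with a $\sqrt{2}$, which silently discards exactly the mixed sectors you flag. Already at $t=0$ one has $\tilde{x}(0)^{\ot 2}\neq \tilde{x}_+(0)^{\ot 2}+\tilde{x}_-(0)^{\ot 2}$ generically, and the truncation source $\tilde{A}_{k,k+1}\tilde{x}(\tau)^{\ot(k+1)}$ has nonzero components in mixed sectors, which propagate into $\tilde{\eta}_i(t)$ for $i\ge 2$ since the fusion maps and the diagonal evolution both preserve mixedness.

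Your own closing step, however, inherits the same problem: having correctly observed that mixed sectors carry error, you cannot recover the $\sqrt{2}$ by ``summing the two sector contributions''. At level $i$ there are $2^i$ string-labelled sectors, most of them mixed, so a sector-by-sector bound recombined in quadrature would cost $\sqrt{2^i}$, not $\sqrt{2}$. The clean fix --- for which your second paragraph already contains every ingredient --- is to drop the sector decomposition at the last step and run the proof of Theorem~\ref{thm:error_bound_nonresonant_poincare_domain} on the \emph{full} system. That proof needs only $\norm{e^{\Lambda_j r}}\le 1$ (true, since $\realpart{\lambda_i}\le 0$) and the bounds of Lemmas~\ref{lem:vij_norm_bound_poincare_domain} and~\ref{lem:wij_norm_bound_poincare_domain} on $\norm{\tilde{V}_{i,j}}$ and $\norm{(\tilde{V}^{-1})_{j,k}}$. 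These lemmas hold verbatim here because every labeling of a binary tree that is feasible with respect to $\tilde{F}_2$ assigns all nodes of that tree to a single sector $S_+$ or $S_-$ (propagate the block condition of Assumption~\ref{assump:block_diagonal_f2} down from the root), so every denominator appearing in $\tilde{f}(\mbt)$ and $\tilde{g}(\mbt)$ is an intra-sector combination controlled by the common $\Delta$ of Lemma~\ref{lem:delta_constant_diagonal_f2}; the entries of $N_l$ outside the support of the sparse factors never enter the Hadamard-product estimates. The variation-of-constants argument then yields the stated inequality for the full $\tilde{\eta}_i(t)$ directly --- indeed without the $\sqrt{2}$, which is merely harmless slack introduced by the paper's two-subsystem bookkeeping.
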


\subsection{Addressing a class of time-dependent nonlinear systems}
\label{subsec:certain_time_dependent_systems}

We next introduce a method for solving a class of differential equations with certain \emph{time-dependent} nonlinearities. The linear part of the system has spectrum which may lie in the Siegel domain and may also exhibit resonances. Our basic idea is to shift the spectrum vertically into the Poincar\'{e} domain via a simple change of variables, eliminating resonances simultaneously, and then apply the convergence results for Carleman linearization to the transformed system. In other words, we build a reduction from the more challenging case -- where the spectrum may lie in the Siegel domain and may be resonant -- to the more tractable case, where the spectrum lies in the Poincaré domain and is nonresonant. 

Specifically, consider the time-dependent ODE system: 
\begin{align}
\dot{x}(t) = F_1 x(t) + e^{\i \omega t} F_2 x(t)^{\otimes 2},  
\label{eq:original_ode_system_time_dependent}
\end{align}
where $\omega \in \mathbb{R}^+$ \footnote{A similar result holds for $\omega \in \mathbb{R}^-$ as well, in which case the condition on the input becomes $|\operatorname{Im}(\lambda_i)| \le -\omega/4$. This extension is straightforward, and we therefore leave its details to the reader.}, $x(t) \in \mathbb{C}^N$, $F_2 \in \mathbb{C}^{N \times N^2}$, and $F_1 = Q \Lambda Q^{-1}$ with $\Lambda = \operatorname{diag}(\lambda_1, \lambda_2, \dots, \lambda_N)$. We assume that $\operatorname{Re}(\lambda_i) \le 0$ and $|\operatorname{Im}(\lambda_i)| \le \omega/4$ for all $i \in [N]$. Note that the spectrum of $F_1$ may lie within the Siegel domain and may also be resonant.

We introduce the change of variables \( z(t) = e^{\i \omega t} x(t) \),  transforming the system into a time-independent form:
\begin{align}
    \dot{z}(t) = (F_1 + \i \omega I) z(t) + F_2  z(t)^{\otimes 2},
\label{eq:transformed_ode_system_time_independent}
\end{align}
where the eigenvalues of \( F_1 + \i \omega I \), denoted by \( \lambda_j' = \lambda_j + \i \omega \), satisfy \( \realpart{\lambda_j'} \le 0 \) and
\begin{align}
0 < \frac{3 \omega}{4} \le \imaginarypart{\lambda_j'} \le \frac{5 \omega}{4}, \quad \forall j \in [N].
\end{align}
Clearly, this implies that \( (\lambda_1', \lambda_2', \dots, \lambda_N') \) lies in the Poincaré domain. Furthermore, it follows that the shifted eigenvalues are nonresonant: for any \( i, j_1, j_2, \dots, j_l \in [N] \) with \( l \ge 2 \), we have
\begin{align}
\imaginarypart{\lambda_{j_1}' + \lambda_{j_2}' + \dots + \lambda_{j_l}'-\lambda_i'}
\ge l \cdot \frac{3 \omega}{4} - \frac{5 \omega}{4}
\ge (l - 1) \cdot \frac{\omega}{4}.
\end{align}
This inequality ensures that $( \lambda_1', \lambda_2', \dots, \lambda_N')$ is $\frac{\omega}{4}$-nonresonant.

Suppose \( \tilde{F}_2 = Q^{-1} F_2 Q^{\otimes 2} \) is \( s \)-column-sparse, and let $\norm{\tilde{x}_{\rm max}}=\max_{t \in [0, T]}\norm{Q^{-1}x(t)}$. Then, by Theorem~\ref{thm:error_bound_nonresonant_poincare_domain}, the solution of the Carleman linearization for the transformed system~\eqref{eq:transformed_ode_system_time_independent} converges to \( z(t) \) as the truncation order \( k \) increases, provided that 
\begin{align}
    8 s \norm{\tilde{F}_2} \norm{\tilde{x}_{\rm max}} < \frac{\omega}{4}.
\end{align}
Multiplying the resulting approximation of \( z(t) \) by \( e^{-\i \omega t} \), we obtain an accurate approximation of the solution \( x(t) \) to the original system~\eqref{eq:original_ode_system_time_dependent}. Note that we are not claiming convergence of Carleman linearization for the original time-dependent system~\eqref{eq:original_ode_system_time_dependent}, which remains an open question.

To summarize, we have the following result on solving a class of time-dependent differential equations:
\begin{thm}
\label{thm:resonant_oscillating_F2}
Let $x(t)$ be the solution to the $N$-dimensional quadratic ODE system
\begin{align}
    \dot{x}(t) = F_1 x(t) + e^{\i \omega t} F_2 x(t)^{\ot 2},
\end{align}
where $\omega \in \R^+$.
Define the shifted ODE system
\begin{align}
\dot{z}(t) = (F_1+\i \omega I) z(t)  + F_2 z(t)^{\otimes 2},  \qquad 
z(0)=x(0),
\end{align}
and let
\begin{align}
    \dot{y}(t) = A y(t),\qquad y(0)=[z(0),z(0)^{\ot 2},\dots,z(0)^{\ot k}]^T,
\end{align}
be the Carleman linearization of the system truncated at level $k$, where $y(t)=[y^{[1]}(t)$, $y^{[2]}(t)$, $\dots$, $y^{[k]}(t)]^T$ with $y^{[j]}(t) \in \myC^{N^j}$ for $j \in [k]$. Assume $F_1=Q\Lambda Q^{-1}$, where $\Lambda=\diag{\lambda_1, \lambda_2, \dots, \lambda_N}$ with $\realpart{\lambda_i} \le 0$ and $\abs{\imaginarypart{\lambda_i}} \le \omega/4$ for all $i \in [N]$. Suppose 
$\tilde{F_2}=Q^{-1}F_2Q^{\otimes 2}$ is $s$-column-sparse. Let $\norm{\tilde{x}_{\rm max}}=\max_{t \in [0,T]}\norm{Q^{-1}x(t)}$, and define
\begin{align}
    R_\omega \defeq  \frac{32 s \norm{\tilde{F}_2} \norm{\tilde{x}_{\rm max}}}{\omega}.
\end{align}
Then for all $j\in [k]$ and $t \in [0, T]$, we have the error bound:
\begin{align}
    \norm{x(t)^{\otimes j}-e^{-\i j \omega t}y^{[j]}(t)} \le k t \norm{Q}^j \norm{\tilde{F}_2} \norm{\tilde{x}_{\rm max}}^{j+1} R_\omega^{k-j} \binom{k-1}{j-1} .
\end{align}
In particular, for $j=1$, we have
\begin{align}
    \norm{x(t)-e^{-\i \omega t}y^{[1]}(t)} 
    \le k t \norm{Q} \norm{\tilde{F}_2} \norm{\tilde{x}_{\rm max}}^2 R_\omega^{k-1},    
\end{align}
for all $t \in [0, T]$.
\label{thm:error_bound_nonresonant_siegel_domain_time_dependent_f2}
\end{thm}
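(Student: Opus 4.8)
The plan is to reduce the time-dependent system to the time-independent, nonresonant, Poincar\'e-domain setting already covered by Theorem~\ref{thm:error_bound_nonresonant_poincare_domain}, via the substitution $z(t) = e^{\i \omega t} x(t)$. First I would check that $z(t)$ solves the shifted ODE: differentiating gives $\dot z(t) = \i \omega z(t) + e^{\i \omega t}\dot x(t) = \i\omega z(t) + F_1 z(t) + e^{2\i\omega t} F_2 x(t)^{\ot 2}$, and since $e^{2\i\omega t} F_2 x(t)^{\ot 2} = F_2\lrb{e^{\i\omega t} x(t)}^{\ot 2} = F_2 z(t)^{\ot 2}$, we obtain $\dot z(t) = (F_1 + \i\omega I) z(t) + F_2 z(t)^{\ot 2}$ with $z(0) = x(0)$, exactly the transformed system in the statement; hence $\dot y = Ay$ is precisely the order-$k$ Carleman system for this shifted ODE.

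Next I would analyze the spectrum of $F_1 + \i\omega I$. From $F_1 = Q\Lambda Q^{-1}$ we get $F_1 + \i\omega I = Q(\Lambda + \i\omega I)Q^{-1}$, so the eigenvalues are $\lambda_j' \defeq \lambda_j + \i\omega$ and the \emph{same} $Q$ diagonalizes it; in particular $\tilde F_2 = Q^{-1}F_2 Q^{\ot 2}$ is unchanged, and so is its $s$-column-sparsity. The hypotheses $\realpart{\lambda_j}\le 0$ and $\abs{\imaginarypart{\lambda_j}}\le \omega/4$ yield $\realpart{\lambda_j'}\le 0$ and $\tfrac{3\omega}{4}\le \imaginarypart{\lambda_j'}\le \tfrac{5\omega}{4}$, so all $\lambda_j'$ lie strictly above the real axis and $(\lambda_1',\dots,\lambda_N')$ lies in the Poincar\'e domain. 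For nonresonance, for any $i, j_1,\dots,j_l \in [N]$ with $l\ge 2$,
\[
\abs{\imaginarypart{\lambda_{j_1}'+\dots+\lambda_{j_l}'-\lambda_i'}} \ge l\cdot\frac{3\omega}{4} - \frac{5\omega}{4} \ge (l-1)\cdot\frac{\omega}{4},
\]
which, after dividing by $l-1 = \sum_j \alpha_j - 1$, shows that $(\lambda_1',\dots,\lambda_N')$ is $\tfrac{\omega}{4}$-nonresonant, i.e.\ Assumption~\ref{assump:linear_no_resonance_gap} holds with gap at least $\omega/4$.

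With these facts I would apply Theorem~\ref{thm:error_bound_nonresonant_poincare_domain} to the shifted system with $\Delta = \omega/4$. Since $\abs{e^{\i\omega t}} = 1$, we have $\norm{Q^{-1}z(t)} = \norm{Q^{-1}x(t)}$ for all $t$, so $\norm{\tilde x_{\rm max}} = \max_{t\in[0,T]}\norm{Q^{-1}z(t)}$ is the correct solution-norm bound for the transformed problem; plugging $\Delta = \omega/4$ into $R_\Delta = 8s\norm{\tilde F_2}\norm{\tilde x_{\rm max}}/\Delta$ gives precisely $R_\omega = 32 s\norm{\tilde F_2}\norm{\tilde x_{\rm max}}/\omega$. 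Theorem~\ref{thm:error_bound_nonresonant_poincare_domain} then yields, for each $j\in[k]$ and $t\in[0,T]$,
\[
\norm{z(t)^{\ot j} - y^{[j]}(t)} \le k t\,\norm{Q}^j\,\norm{\tilde F_2}\,\norm{\tilde x_{\rm max}}^{j+1}\,R_\omega^{k-j}\binom{k-1}{j-1}.
\]
Finally, undoing the substitution, $z(t)^{\ot j} = e^{\i j\omega t} x(t)^{\ot j}$, so $e^{-\i j\omega t}\lrb{z(t)^{\ot j} - y^{[j]}(t)} = x(t)^{\ot j} - e^{-\i j\omega t}y^{[j]}(t)$, and taking norms (again using $\abs{e^{-\i j\omega t}} = 1$) gives the claimed bound, with $j=1$ being the stated special case.

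The argument is essentially a bookkeeping reduction once Theorem~\ref{thm:error_bound_nonresonant_poincare_domain} is in hand, so there is no deep difficulty. The one place needing care is the spectral estimate: one must verify that the hypothesis $\abs{\imaginarypart{\lambda_i}}\le\omega/4$ is exactly tight enough to force \emph{both} the Poincar\'e-domain property and the $\tfrac{\omega}{4}$-nonresonance after the normalization by $\sum_j\alpha_j - 1$. One should also be clear on the conceptual point that this proves convergence for the transformed (and then back-transformed) problem, not a Carleman convergence statement for the original time-dependent system directly.
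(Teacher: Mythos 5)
Your proposal is correct and follows essentially the same route as the paper: the change of variables $z(t)=e^{\i\omega t}x(t)$, the verification that the shifted spectrum $\lambda_j+\i\omega$ lies in the Poincar\'e domain and is $\tfrac{\omega}{4}$-nonresonant, and the application of Theorem~\ref{thm:error_bound_nonresonant_poincare_domain} with $\Delta=\omega/4$ followed by undoing the phase. Your closing caveat also matches the paper's own remark that no convergence claim is made for the original time-dependent system directly.
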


In Section~\ref{subsec:quantum_algorithm_nonresonant_systems}, we employ  Theorem~\ref{thm:error_bound_nonresonant_siegel_domain_time_dependent_f2} to develop quantum algorithms for a class of time-dependent nonlinear differential equations, including certain nonlinear Schr\"{o}dinger equations. In Section~\ref{subsubsec:coupled_nonlinear_oscillators}, we apply these algorithms to efficiently simulate the dynamics of a class of coupled nonlinear oscillators, thereby establishing the BQP-completeness of the associated computational problem.

\newpage 

\section{Quantum algorithms for nonlinear dynamical systems}
\label{sec:quantum_algorithms_nonlinear_systems}

In this section, we propose quantum algorithms for solving several classes of nonlinear differential equations, including stable, conservative, and nonresonant systems -- as well as certain resonant cases. These algorithms build upon the error analysis of the Carleman scheme developed in Sections~\ref{sec:stable}, \ref{sec:conservative} and \ref{sec:nonresonant_systems}, and they serve as the foundation for the BQP-completeness results presented in Section~\ref{sec:exponential_quantum_advantage}.

To simplify the exposition, we introduce notation that will be used throughout this section and in Section~\ref{sec:exponential_quantum_advantage}. For any matrix $A$, we define $\operatorname{cs}(A)$ as the column sparsity of a matrix $A$ when $A \neq 0$, and set $\operatorname{cs}(A)=1$ when $A=0$.  Additionally, for any vector $v=(v_1,v_2,\dots,v_N) \in \mathbb{C}^N$, we define $\ket{v}$ as the normalized quantum state proportional to $v$, i.e.,
\begin{align}
\ket{v} \defeq \frac{\sum_{j=1}^N v_j \ket{j}}{\norm{v}}.    
\end{align}

\subsection{Nonresonant and certain resonant nonlinear problems}
\label{subsec:quantum_algorithm_nonresonant_systems}

We begin by designing a quantum algorithm for preparing a quantum state approximately proportional to the solution of a nonresonant differential equation in the Poincar\'{e} domain. The algorithm builds on the Carleman linearization of nonlinear systems and leverages existing quantum algorithms for solving linear differential equations, but the analysis requires nontrivial additions that we shall discuss below. Our main result is stated in the following theorem, whose proof is given in Appendix~\ref{app:poincare_domain_algorithm}.
\begin{thm}
\label{thm:Poincarealgorithm}
Consider the $N$-dimensional quadratic ODE system
\begin{align}
\label{eq:Poincaresystem}
    \dot{x}(t) = F_1 x(t) + F_2 x(t)^{\ot 2},
\end{align}
where $F_1=Q\Lambda Q^{-1}$ with $\Lambda=\diag{\lambda_1, \lambda_2, \dots, \lambda_N}$ and $\realpart{\lambda_i} \le 0$ for all $i \in [N]$. Suppose $(\lambda_1, \lambda_2, \dots, \lambda_N)$ is $\Delta$-nonresonant for some $\Delta>0$. Let $\tilde{F}_2=Q^{-1}F_2Q^{\otimes 2}$, and assume that 
\begin{align}
8\rho s \norm{Q^{-1}} \norm{\tilde{F}_2} \le c\Delta,   \end{align}
where $\rho \ge \max_{t \in [0, T]} \norm{x(t)}$, $s=\operatorname{cs}(\tilde{F}_2)$, and $c \in (0,1/2)$ is a constant. 

Let $U_i$ be an $(\alpha_i, a_i, 0)$-block-encoding of $F_i$, for $i=1,2$, and let $V_0$ be a unitary that prepares the normalized initial state $\ket{x(0)}$. Then for any $\epsilon \in (0, 1)$, we can prepare a quantum state that is $\epsilon$-close (in Euclidean norm) to $\ket{x(T)}$ using 
$$\mytO{\alpha g T  2^k \kappa_Q^k \mylog{1/\epsilon}}$$
queries to (controlled-) $U_1$, $U_2$, $V_0$ and their inverses, where 
\begin{align}
g=\frac{\max_{t \in [0, T]} {\norm{x(t)}}}{\min_{t \in [0, T]} {\norm{x(t)}}}, \quad
\kappa_Q=\norm{Q}\cdot \norm{Q^{-1}}, \quad
\alpha=\alpha_1+{\rho\kappa_Q\alpha_2}, 
\end{align} 
and
\begin{align}
k=\myO{\max(\mylog{ g  \kappa_Q T\Delta/(s\epsilon)}, 1)}.    
\end{align}
In addition to these queries, the algorithm uses
$$\mytO{\alpha g T 2^k \kappa_Q^k \mylog{1/\epsilon}\cdot \mypoly{\mylog{N}}}$$
elementary quantum gates.
\label{thm:algorithm_complexity_poincare_domain_case}
\end{thm}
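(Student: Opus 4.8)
The strategy is to combine the Carleman truncation error bound of Theorem~\ref{thm:error_bound_nonresonant_poincare_domain} with the linear ODE solver of Ref.~\cite{jennings2023cost}, applied to the truncated Carleman system, and then to carefully track how the various norms (condition number, solution norm ratio, block-encoding normalization) propagate through the reduction. First I would fix the truncation order $k$ so that the Carleman error $\norm{x(T)-y^{[1]}(T)}$ is at most $\epsilon/2$ of the relevant norm scale. By Theorem~\ref{thm:error_bound_nonresonant_poincare_domain}, $\norm{\eta_1(T)} \le kT\norm{Q}\norm{\tilde F_2}\norm{\tilde x_{\rm max}}^2 R_\Delta^{k-1}$ with $R_\Delta = 8s\norm{\tilde F_2}\norm{\tilde x_{\rm max}}/\Delta$; the hypothesis $8\rho s \norm{Q^{-1}}\norm{\tilde F_2}\le c\Delta$ together with $\norm{\tilde x_{\rm max}} \le \norm{Q^{-1}}\rho$ forces $R_\Delta \le c < 1/2$, so the bound decays geometrically and $k = O(\max(\log(g\kappa_Q T\Delta/(s\epsilon)),1))$ suffices; I would make this precise by solving the inequality $kTc^{k-1}(\cdots) \le \epsilon \cdot \|x(T)\|$ for $k$ using the Lambert-$W$ estimate already indicated after Theorem~\ref{thm:error_bound_nonresonant_poincare_domain}, absorbing polynomial-in-$k$ factors into the $\tilde O$.

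Next I would set up the linear ODE solver. The truncated Carleman system $\dot y = Ay$, $y(0) = [x(0),\dots,x(0)^{\ot k}]^T$, lives on $\bigoplus_{j=1}^k \mathbb{C}^{N^j}$, which embeds in $O(k\log N)$ qubits. I need three ingredients for Ref.~\cite{jennings2023cost}: (i) a block-encoding of $A$ with normalization $\omega$, (ii) a unitary preparing $\ket{y(0)}$, and (iii) a priori bounds on $C_{\max}=\max_{t\in[0,T]}\norm{e^{At}}$ and $\hat g = \max_t\norm{y(t)}/\norm{y(T)}$. For (i), $A=H_0+H_1+H_2$ with $H_1$ built from $F_1$ and $H_2$ from $F_2$ (and $H_0=0$ here since $F_0=0$); using the LCU/standard block-encoding of sums of tensor-padded operators, one gets $\omega = O(k(\alpha_1 + \rho\kappa_Q\alpha_2))$ — the $\rho\kappa_Q$ factor entering because the natural rescaling $\gamma$ needed to control the Carleman error rescales $F_2$ by $1/\gamma \sim \rho\kappa_Q$, matching the definition of $\alpha$ in the statement. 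For (ii), $\ket{y(0)}$ is a normalized superposition of $\ket{x(0)}^{\ot j}$ over $j\in[k]$, preparable from $V_0$ and $O(k)$-dimensional amplitude loading with $O(k)$ queries and $\mathrm{polylog}$ gates. For (iii), this is where the diagonalization framework of Section~\ref{subsec:carleman_matrix_diagonalization} is essential: $A = VDV^{-1}$ with $D$ diagonal with non-positive real parts, so $\norm{e^{At}} \le \norm{V}\norm{V^{-1}}$, and the blockwise norm bounds of Lemmas~\ref{lem:vij_norm_bound_poincare_domain} and \ref{lem:wij_norm_bound_poincare_domain} give $\norm{V},\norm{V^{-1}} = O(2^k \kappa_Q^k \mathrm{poly}(k))$ after accounting for the $Q^{\ot i}$ factors relating $V$ to $\tilde V$ and summing the binomial coefficients. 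Similarly $\hat g$ is controlled by $g^k$ times the condition-number factors, since $\norm{y(t)} \le \sum_j \norm{x(t)}^j$ and $\norm{y(T)} \ge \norm{x(T)}$.

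Then I would assemble the complexity: Ref.~\cite{jennings2023cost} gives $\tilde O(\omega C_{\max}\hat g T\log(1/\epsilon))$ queries, and substituting $\omega = O(k\alpha)$, $C_{\max} = O(2^k\kappa_Q^k)$, $\hat g = O(g^k \cdot \text{(lower-order)})$ — wait, I need to be careful that $\hat g$ does not itself carry a $g^k$ that would break the stated bound; in fact the relevant ratio is $\max_t\norm{y(t)}/\norm{y(T)}$ and with the rescaling chosen so that $\norm{x(0)}_{\text{rescaled}}<1$ the higher tensor powers are suppressed, so $\hat g = O(g)$ up to the $V,V^{-1}$ conditioning which is already counted in $C_{\max}$ — giving the claimed $\tilde O(\alpha g T 2^k\kappa_Q^k\log(1/\epsilon))$ queries, with the extra $\mathrm{poly}(\log N)$ gate overhead from implementing the block-encoding arithmetic on $O(k\log N)$ qubits. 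Finally I would verify the output state: the solver produces $\ket{y(T)}$, whose $j=1$ block is proportional to $y^{[1]}(T) \approx x(T)$; projecting onto the $j=1$ register (a $\mathrm{polylog}$-gate measurement succeeding with amplitude bounded below by the norm ratio, again folded into $\hat g$) and undoing the $\gamma$-rescaling (a classical relabeling, since rescaling commutes with normalization) yields a state $\epsilon$-close to $\ket{x(T)}$, after choosing constants in $k$ to make both the Carleman error and the linear-solver error at most $\epsilon/2$.

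\textbf{Main obstacle.} The crux is item (iii): obtaining the a priori bound $C_{\max}=O(2^k\kappa_Q^k)$ on $\norm{e^{At}}$. Unlike the dissipative case where $A$ has negative log-norm, here $A$ is merely diagonalizable with spectrum on the closed left half-plane, so $\norm{e^{At}}$ is governed by the conditioning of the diagonalizing transformation $V$, which is highly non-normal. The entire binary-forest machinery of Section~\ref{subsec:carleman_matrix_diagonalization} — and in particular the delicate norm bound on $(\tilde V^{-1})_{i,j}$ in Lemma~\ref{lem:wij_norm_bound_poincare_domain}, which uses the Poincar\'e-domain gap $\Delta$ to control the $N_l$ denominators — exists precisely to make this bound effective; the factor $2^k$ comes from $|\mathcal{T}^i_j| \le 4^{j-i}\binom{j-1}{i-1}$ summed over $j$, and the $\kappa_Q^k$ from the $Q^{\ot i}$ and $(Q^{-1})^{\ot j}$ dressings. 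Getting the powers of $k$ and the constants right in this accumulation, and confirming that $\hat g$ contributes no further exponential-in-$k$ blowup, is the technically demanding part of the proof.
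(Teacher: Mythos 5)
Your proposal follows essentially the same route as the paper's proof in Appendix~\ref{app:poincare_domain_algorithm}: rescale the system so that $\norm{\bar{x}(t)}\le c/\kappa_Q<1/2$ (which is what puts the $\rho\kappa_Q\alpha_2$ into $\alpha$ and keeps all Carleman blocks of comparable norm), choose $k$ logarithmically from the error bound of Theorem~\ref{thm:error_bound_nonresonant_poincare_domain} via the Lambert-$W$ estimate, block-encode the Carleman matrix and prepare the tensor-power initial state, invoke the solver of Ref.~\cite{jennings2023cost}, and post-select on the first block (the paper makes the success probability $\myOmega{1/k}$ explicit and amplifies with $\myO{\sqrt{k}}$ rounds, and gets $\hat g = \myO{\sqrt{k}\,g}$ — consistent with your accounting up to $\mypoly{k}$ factors absorbed in $\mytO{\cdot}$). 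You also correctly identify the crux as the a priori bound on $\norm{e^{At}}$.

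One quantitative slip in that crux step is worth fixing. You bound $C_{\rm max}$ by $\norm{V}\cdot\norm{V^{-1}}$ with each factor estimated separately as $\myO{2^k\kappa_Q^k\,\mypoly{k}}$; taken literally this yields $C_{\rm max}=\myO{4^k\kappa_Q^{2k}}$, not the $2^k\kappa_Q^k$ you substitute later (and which the theorem asserts). The problem is twofold: each of $\norm{V}$ and $\norm{V^{-1}}$ individually carries the full $\norm{Q}^k$ (resp.\ $\norm{Q^{-1}}^k$) and the full binomial sum, so the product squares both the $2^k$ and the $\kappa_Q^k$. The paper's Lemma~\ref{lem:bound_exp_At} avoids this by estimating $\norm{e^{\bar A t}x}$ blockwise: the output factor $\norm{Q}^i$ and the input factor $\norm{Q^{-1}}^l$ pair up into a single $\kappa_Q^l$ per input block, and the two binomials collapse via the Vandermonde convolution $\sum_{j}\binom{j-1}{i-1}\binom{l-1}{j-1}=2^{l-i}\binom{l-1}{i-1}$, so only one factor of $2^{l-i}$ survives. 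With that refinement (and using the hypothesis to ensure the ratio $8\rho s\norm{Q^{-1}}\norm{\tilde F_2}/(c\Delta)\le 1$ so the geometric factors stay bounded), one recovers $\norm{e^{\bar A t}}\le 2^k\kappa_Q^k/\sqrt{3}$ and the stated complexity. Everything else in your outline matches the paper's argument.
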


Let us now highlight why this result does not follow trivially from prior analysis. The analysis of the Carleman scheme presented in Section~\ref{subsec:error_analysis_nonresonant_systems} enables one to approximate the nonlinear ODE system~\eqref{eq:Poincaresystem} by a linear ODE system with controlled error. This linear system can then be solved using the quantum algorithm in Ref.~\cite{jennings2023cost}, which essentially builds on the algorithm of Ref.~\cite{berry2017quantum} while improving its original analysis. However, applying this result requires an \emph{a priori} bound on the norm of the exponential of the Carleman matrix $A$, which is crucial for bounding the condition number of the linear system that embeds the Carleman ODE. This norm bound is closely tied to the structure of the matrix $V$ and its inverse $V^{-1}$, which jointly diagonalize the Carleman matrix $A$, and it is established by leveraging Lemmas \ref{lem:vij_norm_bound_poincare_domain} and \ref{lem:wij_norm_bound_poincare_domain}, both of which are based on the formalism developed in Section~\ref{subsec:carleman_matrix_diagonalization}.

Another aspect that requires careful attention concerns the specific data structure obtained when solving the Carleman ODE system. In particular, it is necessary to ensure that a state close to $\ket{x(T)}$ can be efficiently extracted from the output of the ODE solver. This involves appropriately rescaling the original system \eqref{eq:Poincaresystem}, making use of the flexibility discussed in Section~\ref{sec:rescaling}. The remainder of the analysis is relatively standard, but for the reader’s convenience, we present it in full in Appendix~\ref{app:poincare_domain_algorithm}.

Theorem~\ref{thm:algorithm_complexity_poincare_domain_case} applies to nonresonant systems in the Poincar\'{e} domain and builds on the error bound established in Theorem~\ref{thm:error_bound_nonresonant_poincare_domain}. Similarly, certain systems in the Siegel domain can be addressed using the error bounds provided by Theorems~\ref{thm:error_bound_nonresonant_siegel_domain_decompose_f2} and~\ref{thm:error_bound_nonresonant_siegel_domain_time_dependent_f2}, both of which are derived from Theorem~\ref{thm:error_bound_nonresonant_poincare_domain}. These results lead to the following algorithmic consequences:

\begin{thm}
Consider the $N$-dimensional quadratic ODE system
\begin{align}
    \dot{x}(t) = F_1 x(t) + F_2 x(t)^{\ot 2},
\end{align}
where $F_1=Q\Lambda Q^{-1}$ with $\Lambda=\diag{\lambda_1, \lambda_2, \dots, \lambda_N}$ and $\realpart{\lambda_i} \le 0$ for all $i \in [N]$. Suppose there exist disjoint subsets $S_+$ and $S_-$ of $[N]$ such that $[N]=S_+\sqcup S_-$, and both $\lrb{\lambda_j:~j\in S_+}$ and $\lrb{\lambda_j:~j \in S_-}$ are $\Delta$-nonresonant for some $\Delta>0$. Moreover, let  $\tilde{F}_2=Q^{-1}F_2Q^{\otimes 2}$, and assume 
\begin{align}
\tilde{F}_2=P_+ \tilde{F}_2 (P_+\otimes P_+) + P_- \tilde{F}_2 (P_-\otimes P_-),
\end{align}
where $P_{\pm}=\sum_{j \in S_\pm}\ket{j}\bra{j}$, and that
\begin{align}
8\rho s \norm{Q^{-1}} \norm{\tilde{F}_2} \le c\Delta,    \end{align}
where $\rho \ge \max_{t \in [0, T]} \norm{x(t)}$, $s=\operatorname{cs}(\tilde{F}_2)$, and $c \in (0,1/2)$ is a constant.

Let $U_i$ be a $(\alpha_i, a_i, 0)$-block-encoding of $F_i$, for $i=1,2$, and let $V_0$ be a unitary that prepares the normalized initial state $\ket{x(0)}$. Then for any $\epsilon \in (0, 1)$, we can prepare a quantum state that is $\epsilon$-close (in Euclidean norm) to $\ket{x(T)}$ using 
$$\mytO{\alpha g T  2^k \kappa_Q^k\mylog{1/\epsilon}}$$
queries to (controlled-) $U_1$, $U_2$, $V_0$ and their inverses, where 
\begin{align}
g=\frac{\max_{t \in [0, T]} {\norm{x(t)}}}{\min_{t \in [0, T]} {\norm{x(t)}}},~
\kappa_Q=\norm{Q}\cdot \norm{Q^{-1}},~
\alpha=\alpha_1+{\rho\kappa_Q\alpha_2},~ 
\end{align} 
and
\begin{align}
k=\myO{\max(\mylog{ g  \kappa_Q T\Delta/(s\epsilon)}, 1)}.    
\end{align} 
In addition to these queries, the algorithm uses
$$\mytO{\alpha g T 2^k \kappa_Q^k \mylog{1/\epsilon}\cdot \mypoly{\mylog{N}}}$$
elementary quantum gates.
\label{thm:algorithm_complexity_decomposable_f2_case}
\end{thm}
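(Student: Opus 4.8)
The plan is to reduce Theorem~\ref{thm:algorithm_complexity_decomposable_f2_case} to Theorem~\ref{thm:algorithm_complexity_poincare_domain_case} by exploiting the block-diagonal structure of $\tilde{F}_2$ under Assumption~\ref{assump:block_diagonal_f2}. First I would change variables to $\tilde{x}(t) = Q^{-1}x(t)$, so that $\tilde{x}$ solves $\dot{\tilde{x}} = \Lambda \tilde{x} + \tilde{F}_2 \tilde{x}^{\otimes 2}$, and then split $\tilde{x}(t) = \tilde{x}_+(t) + \tilde{x}_-(t)$ with $\tilde{x}_{\pm} = P_{\pm}\tilde{x}$. The hypothesis $\tilde{F}_2 = P_+\tilde{F}_2(P_+\otimes P_+) + P_-\tilde{F}_2(P_-\otimes P_-)$ ensures that the two subsystems decouple completely: $\dot{\tilde{x}}_{\pm} = \Lambda_{\pm}\tilde{x}_{\pm} + \tilde{F}_{2,\pm}\tilde{x}_{\pm}^{\otimes 2}$ where $\Lambda_{\pm}$ and $\tilde{F}_{2,\pm}$ are the restrictions of $\Lambda$ and $\tilde{F}_2$ to $S_{\pm}$. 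By Lemma~\ref{lem:delta_constant_diagonal_f2}, each of $(\lambda_j: j\in S_+)$ and $(\lambda_j: j\in S_-)$ is $\Delta$-nonresonant and lies in its own Poincar\'{e} domain, so each subsystem individually falls within the scope of Theorem~\ref{thm:algorithm_complexity_poincare_domain_case}.

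The key technical input is Theorem~\ref{thm:error_bound_nonresonant_siegel_domain_decompose_f2}, which already provides the Carleman truncation error bound $\norm{\eta_i(t)} \le \sqrt{2}\,kt\,\norm{Q}^i\norm{\tilde{F}_2}\norm{\tilde{x}_{\rm max}}^{i+1}R_\Delta^{k-i}\binom{k-1}{i-1}$ for the full system, obtained (as its proof shows) by running the Carleman scheme separately on the two subsystems and recombining; the $\sqrt{2}$ is simply the cost of merging the two blocks. Given this, the algorithmic half of the argument mirrors the proof of Theorem~\ref{thm:algorithm_complexity_poincare_domain_case} in Appendix~\ref{app:poincare_domain_algorithm}: choose the truncation order $k = \myO{\max(\mylog{g\kappa_Q T\Delta/(s\epsilon)},1)}$ so that the Carleman error on the $j=1$ block is below $\epsilon/2$; establish the \emph{a priori} bound $C_{\max} = \max_{t\in[0,T]}\norm{e^{At}}$ on the Carleman matrix exponential using the block-structured diagonalization $A = VDV^{-1}$ together with the norm bounds on $V_{i,j}$ and $(V^{-1})_{i,j}$ from Lemmas~\ref{lem:vij_norm_bound_poincare_domain} and~\ref{lem:wij_norm_bound_poincare_domain} (applied blockwise within $S_+$ and $S_-$); feed the resulting linear ODE system into the solver of Ref.~\cite{jennings2023cost}; and finish with the rescaling step (using the freedom from Section~\ref{sec:rescaling}) that converts the solver's output into a state $\epsilon$-close to $\ket{x(T)}$. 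The block-encoding bookkeeping is identical: $U_1,U_2$ block-encode $F_1,F_2$ with subnormalizations $\alpha_1,\alpha_2$, the rescaled Carleman matrix has a block-encoding with subnormalization $\myO{\alpha}$ where $\alpha = \alpha_1 + \rho\kappa_Q\alpha_2$, and the condition number of the embedded linear system scales like $C_{\max}\hat g T$, which the norm bounds control as $2^k\kappa_Q^k$ times polynomial factors.

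The main obstacle I anticipate is establishing the $C_{\max}$ bound in the block-decomposable setting, i.e.\ confirming that the diagonalizing similarity $V$ for the full Carleman matrix $A$ inherits the blockwise norm bounds of the Poincar\'{e} case even though the \emph{ambient} spectrum $(\lambda_1,\dots,\lambda_N)$ is not itself in a single Poincar\'{e} domain. The resolution is structural: because $\tilde{F}_2$ never couples $S_+$ to $S_-$, every binary forest contributing to $\tilde{V}_{i,j}$ or $(\tilde{V}^{-1})_{i,j}$ has all its internal $\tilde{F}_2$-vertices acting within a single sector, so each factor $N_l$ appearing in $\tilde f$ or $\tilde g$ only ever involves eigenvalue combinations drawn from one sector, where the gap $\Delta$ from Lemma~\ref{lem:delta_constant_diagonal_f2} applies. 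Hence the bounds of Lemmas~\ref{lem:vij_norm_bound_poincare_domain} and~\ref{lem:wij_norm_bound_poincare_domain} go through verbatim with this common $\Delta$, and $\norm{e^{At}} \le \norm{V}\,\norm{e^{Dt}}\,\norm{V^{-1}} \le \norm{V}\norm{V^{-1}}$ (since $D$ has non-positive real diagonal) is bounded by $\mypoly(k)\cdot 2^k\kappa_Q^k$ exactly as before. Everything else is a routine transcription of the Poincar\'{e}-domain proof with the $\sqrt{2}$ from Theorem~\ref{thm:error_bound_nonresonant_siegel_domain_decompose_f2} carried along, which affects only constants absorbed into the $\mytO{\cdot}$ notation.
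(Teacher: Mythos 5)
Your proposal is correct and matches the paper's intended route: the paper explicitly omits this proof, stating only that it closely follows Theorem~\ref{thm:algorithm_complexity_poincare_domain_case}, and your reduction via Theorem~\ref{thm:error_bound_nonresonant_siegel_domain_decompose_f2} plus a transcription of the Appendix~\ref{app:poincare_domain_algorithm} argument is exactly that. You also correctly fill in the one point the omission leaves implicit — that the sector-preserving structure of $\tilde{F}_2$ forces every nonzero entry surviving the Hadamard products with the $N_l$ matrices to involve eigenvalue combinations from a single $\Delta$-nonresonant sector, so Lemmas~\ref{lem:vij_norm_bound_poincare_domain} and~\ref{lem:wij_norm_bound_poincare_domain} and hence the $C_{\max}=\myO{2^k\kappa_Q^k}$ bound of Lemma~\ref{lem:bound_exp_At} carry over unchanged.
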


\begin{thm}
Consider the $N$-dimensional quadratic ODE system
\begin{align}
    \dot{x}(t) = F_1 x(t) + e^{\i \omega t}F_2 x(t)^{\ot 2},
\end{align}
where $\omega \in \R^+$ \footnote{A similar result holds for $\omega \in \R^-$ as well. This extension is straightforward and hence we leave the details to the reader.}, 
$F_1=Q\Lambda Q^{-1}$ with $\Lambda=\diag{\lambda_1, \lambda_2, \dots, \lambda_N}$, $\realpart{\lambda_i} \le 0$ and $\abs{\imaginarypart{\lambda_i}} \le \omega/4$, for all $i \in [N]$. Let $\tilde{F}_2=Q^{-1}F_2Q^{\otimes 2}$, and assume that
\begin{align}
32\rho s \norm{Q^{-1}} \norm{\tilde{F}_2} \le c\omega,    
\end{align}
where $\rho \ge \max_{t \in [0, T]} \norm{x(t)}$, $s=\operatorname{cs}(\tilde{F}_2)$, and 
$c \in (0,1/2)$ is a constant.

Let $U_i$ be a $(\alpha_i, a_i, 0)$-block-encoding of $F_i$, for $i=1,2$, and let $V_0$ be a unitary that prepares the normalized initial state $\ket{x(0)}$. Then for any $\epsilon \in (0, 1)$, we can prepare a quantum state that is $\epsilon$-close (in Euclidean norm) to $\ket{x(T)}$ using 
$$\mytO{\alpha g T  2^k \kappa_Q^k\mylog{1/\epsilon}}$$
queries to (controlled-) $U_1$, $U_2$, $V_0$ and their inverses, where 
\begin{align}
g=\frac{\max_{t \in [0, T]} {\norm{x(t)}}}{\min_{t \in [0, T]} {\norm{x(t)}}},~
\kappa_Q=\norm{Q}\cdot \norm{Q^{-1}},~
\alpha=\alpha_1+{\rho\kappa_Q\alpha_2},~ 
\end{align} 
and
\begin{align}
k=\myO{\max(\mylog{ g  \kappa_Q T\omega/(s\epsilon)}, 1)}.    
\end{align}
In addition to these queries, the algorithm uses
$$\mytO{\alpha g T 2^k \kappa_Q^k \mylog{1/\epsilon}\cdot \mypoly{\mylog{N}}}$$
elementary quantum gates.
\label{thm:algorithm_complexity_time_dependent_f2_case}
\end{thm}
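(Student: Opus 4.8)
\textbf{Proof plan for Theorem~\ref{thm:algorithm_complexity_time_dependent_f2_case}.}

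The plan is to reduce the time-dependent problem to the time-independent nonresonant case already handled in Theorem~\ref{thm:algorithm_complexity_poincare_domain_case}, via the same change of variables used in the proof of Theorem~\ref{thm:error_bound_nonresonant_siegel_domain_time_dependent_f2}. First I would set $z(t) = e^{\i\omega t} x(t)$, so that $z(t)$ obeys the time-\emph{independent} quadratic ODE
\begin{align}
\dot{z}(t) = (F_1 + \i\omega I)\, z(t) + F_2\, z(t)^{\otimes 2}, \qquad z(0) = x(0).
\end{align}
The shifted linear term $F_1' \defeq F_1 + \i\omega I = Q\Lambda' Q^{-1}$ has the same diagonalizing matrix $Q$, hence the same $\kappa_Q$, and the same $\tilde{F}_2 = Q^{-1}F_2 Q^{\otimes 2}$ with the same column sparsity $s$. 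Using $\realpart{\lambda_i}\le 0$ and $\abs{\imaginarypart{\lambda_i}}\le\omega/4$, the shifted eigenvalues $\lambda_j' = \lambda_j + \i\omega$ satisfy $\realpart{\lambda_j'}\le 0$ and $3\omega/4 \le \imaginarypart{\lambda_j'}\le 5\omega/4$, so they lie in the Poincar\'{e} domain and — by the telescoping estimate in the proof of Theorem~\ref{thm:error_bound_nonresonant_siegel_domain_time_dependent_f2} — the tuple $(\lambda_1',\dots,\lambda_N')$ is $\tfrac{\omega}{4}$-nonresonant. Thus Assumption~\ref{assump:linear_no_resonance_gap} holds for the shifted system with gap $\Delta = \omega/4$.

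Next I would apply Theorem~\ref{thm:algorithm_complexity_poincare_domain_case} to the shifted system. The hypothesis $32\rho s\norm{Q^{-1}}\norm{\tilde{F}_2}\le c\omega$ is exactly $8\rho s\norm{Q^{-1}}\norm{\tilde{F}_2}\le c\Delta$ with $\Delta=\omega/4$, so the smallness condition required by that theorem is met; moreover $\norm{z(t)}=\norm{x(t)}$ for all $t$, so $\rho\ge\max_{t\in[0,T]}\norm{z(t)}$ and the ratio $g$ and the quantity $\norm{\tilde{x}_{\rm max}}=\max_{t}\norm{Q^{-1}z(t)}=\max_t\norm{Q^{-1}x(t)}$ are unchanged. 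The block-encoding inputs also transfer: a block-encoding of $F_1'=F_1+\i\omega I$ with parameter $\alpha_1'=\alpha_1+\omega$ (or $\alpha_1' = \myO{\alpha_1}$ when $\omega = \myO{\alpha_1}$, which we may assume since otherwise the condition on $\tilde F_2$ forces a rescaling) is obtained from $U_1$ by a standard linear-combination-of-block-encodings construction, at the cost of a constant factor; $U_2$ and $V_0$ are used as-is since $z(0)=x(0)$ and the quadratic term is unchanged. Theorem~\ref{thm:algorithm_complexity_poincare_domain_case} then produces, for any $\epsilon'\in(0,1)$, a state $\epsilon'$-close to $\ket{z(T)}$ using $\mytO{\alpha g T 2^k \kappa_Q^k \log(1/\epsilon')}$ queries with $k=\myO{\max(\log(g\kappa_Q T\Delta/(s\epsilon')),1)} = \myO{\max(\log(g\kappa_Q T\omega/(s\epsilon')),1)}$, and $\mypoly{\log N}$ times as many elementary gates.

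Finally I would translate back. Since $z(T) = e^{\i\omega T} x(T)$ differs from $x(T)$ only by a global phase, the \emph{normalized} states coincide: $\ket{z(T)} = e^{\i\omega T}\ket{x(T)}$, which represents the same physical state. Hence the prepared state is $\epsilon'$-close (in Euclidean norm, up to the irrelevant global phase; taking $\epsilon'=\epsilon$) to $\ket{x(T)}$, and no additional rescaling or post-processing is needed because we only ask for the final-time state rather than a history state. The stated query and gate counts then follow verbatim from those of Theorem~\ref{thm:algorithm_complexity_poincare_domain_case} with $\Delta = \omega/4$. The one point needing a little care — and the main (minor) obstacle — is bookkeeping the block-encoding of the shifted generator $F_1+\i\omega I$: one must check that adding the scalar $\i\omega I$ inflates the normalization $\alpha_1$ only by an additive $\myO{\omega}$, which is absorbed into $\alpha$ without changing the asymptotic form of the complexity, given the smallness assumption that already ties $\omega$ to the other parameters.
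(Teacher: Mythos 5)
Your proposal is correct and is exactly the route the paper intends: the paper omits the proof of Theorem~\ref{thm:algorithm_complexity_time_dependent_f2_case}, stating only that it "closely follows" Theorem~\ref{thm:algorithm_complexity_poincare_domain_case}, and the reduction you describe — the change of variables $z(t)=e^{\i\omega t}x(t)$ yielding a time-independent, $\tfrac{\omega}{4}$-nonresonant system in the Poincar\'{e} domain, as set up in Section~\ref{subsec:certain_time_dependent_systems} and Theorem~\ref{thm:error_bound_nonresonant_siegel_domain_time_dependent_f2} — is precisely that argument. The one caveat you flag yourself is real but inherited from the theorem statement rather than your proof: the block-encoding of $F_1+\i\omega I$ (and of the corresponding Carleman matrix) genuinely picks up an additive $\myO{\omega}$ in its normalization, and the hypothesis $32\rho s\norm{Q^{-1}}\norm{\tilde{F}_2}\le c\omega$ bounds $\omega$ only from below, so the claim that this is absorbed into $\alpha=\alpha_1+\rho\kappa_Q\alpha_2$ is not fully justified without an additional assumption such as $\omega=\myO{\alpha_1}$.
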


The proofs of Theorems~\ref{thm:algorithm_complexity_decomposable_f2_case} and \ref{thm:algorithm_complexity_time_dependent_f2_case} closely follow that of Theorem~\ref{thm:algorithm_complexity_poincare_domain_case} and are thus omitted here.

Under the assumption $\kappa_Q = \myO{1}$, the query and gate complexities of the quantum algorithms in Theorems~\ref{thm:algorithm_complexity_poincare_domain_case}, \ref{thm:algorithm_complexity_decomposable_f2_case}, and \ref{thm:algorithm_complexity_time_dependent_f2_case} scale polynomially with $T$, $1/\epsilon$ and other relevant parameters. Notably, this includes the case where $F_1$ is normal, thereby encompassing certain nonlinear Schr\"{o}dinger equations as a special case.

However, the complexities of these algorithms contain the undesirable factor $2^k\kappa_Q^k$, which arises from our upper bound on the norm of the exponential of the associated Carleman matrix. Even when $\kappa_Q=\myO{1}$, this factor could lead to unfavorable polynomial dependence on $T/\epsilon$, though this is not always the case. It remains unclear whether such a factor is inherent to the problem or merely a consequence of our proof technique. We leave it as an open question whether a refined analysis could reduce or eliminate this factor. In the next section, we shall see that a reduction is possible at the price of introducing a new oracle.

\subsection{Nonlinear stable and conservative problems via quantum Lyapunov transform}
\label{subsec:quantum_algorithm_stable_conservative_systems}

We shall now turn to quantum algorithms for nonlinear stable systems, under the settings of the convergence results in Theorem~\ref{thm:Carlemanstable}. Here, the issue is, similarly to the previous case, to get bounds on the condition number of the linear system of equation resulting from Carleman linearization. It is useful to recall that this condition number can be upper bounded by a quantity involving $\max_{t \in [0,T]} \| e^{At}\|$. Since $P$ is a Lyapunov matrix for $F_1$, one can take $\mathcal{P} = \oplus_{l=1}^k P^{\otimes l}$ and prove by continuity that it is a Lyapunov matrix for the Carleman system. Under these conditions, the bound $\|e^{A t}\| \leq \kappa_{\mathcal{P}} e^{\mu_{\mathcal{P}}t}$ holds, with  $\kappa_{\mathcal{P}}$ the condition number of $\mathcal{P}$ and $\mu_{\mathcal{P}}$ a $\mathcal{P}$-log norm which turns negative for small enough nonlinearity. Even leaving aside the issue of how small the nonlinearity ought to be in order to have $\mu_{\mathcal{P}}<0$, we also have that $\kappa_{\mathcal{P}} = \kappa_P^k$, which means that even in the optimistic setting where $\kappa_{P} = \myO{1}$ we have an undesired polynomial scaling in $T/\epsilon$, in analogy with the algorithm in Section~\ref{subsec:quantum_algorithm_nonresonant_systems}.

Given these considerations, here we shall introduce a different algorithm. The algorithm requires an extra oracle -- access to a block-encoding of $Q$, the square root of $P$:
\begin{align}
    P = Q^\dag Q.
\end{align}
Note that this implicitly replaces $Q$ with $Q^{-1}$ relative to earlier sections, but we do so for notational simplicity in this subsection.

At high level, the idea is that the extra oracle is used to encode the problem in a coordinate system where the problem is stable with respect to the identity matrix. Given the data structure resulting from solving the Carleman ODE system with a quantum ODE solver, we shall rotate only the relevant components (corresponding to the first Carleman block) back to the original frame, and amplitude amplify them. 

Note that the most natural setting is perhaps one where one has access to a block-encoding of $P$, rather than to its square root. However, if a block-encoding of $P$ is available, the square root of $P$ can be constructed using known techniques, such as phase estimation~\cite{babbush2023exponential} or quantum signal processing~\cite{gilyen2019quantum}, which only gives an extra complexity factor of $\sqrt{\kappa_P}$ (and potentially an extra factor $1/\epsilon$ if we used phase estimation). This does not change the fact that, under these stronger assumptions, we avoid the undesired scaling with $\kappa_P^k = \myO{(T/\epsilon)^{\log \sqrt{\kappa_P}}}$, rather getting $\myO{\kappa_P^2}$ overheads which are $\myO{1}$ under the previously mentioned optimistic setting where $\kappa_P = \myO{1}$.


\begin{thm}
\label{thm:algorithmstable}
    Consider the setting and assumptions of Theorem~\ref{thm:Carlemanstable}. Assume access to an $\alpha_{F_i}$ block-encoding of $F_i$ with $i=0,1,2$, and an $\alpha_Q = \myO{\|Q\|}$ block-encoding of a matrix $Q$ such that $Q^\dag Q = P$. Finally, assume access to a state preparation for $\ket{\tilde{x}(0)}$, where $\tilde{x}(0) = Q x(0)$.
    Setting $h = k \left(\alpha_{F_0} \alpha_Q + \alpha_{F_1} \kappa_Q + \alpha_{F_2} \kappa_Q^2/\alpha_Q\right)$ with $k = \myO{\log(T/\epsilon)}$, there is a quantum algorithm that outputs a quantum state $\epsilon$-close to the \emph{ history state}
\begin{align}
    \ket{\psi_1} \propto \sum_{m=1}^{\lceil T/h \rceil} \| x(mh)\| \ket{x(m h)},
\end{align}
    with cost 
\footnotesize
\begin{align}
\myO{\|Q^{-1}\|(\alpha_{F_0} \alpha_Q + \alpha_{F_1} \kappa_Q + \frac{\alpha_{F_2} \kappa_Q^2}{\alpha_Q})\log(1/\epsilon) \log^2(T/\epsilon)} & \quad \textrm{calls to each of the block-encodings of the } F_i, \nonumber \\
\myO{\|Q^{-1}\|(\alpha_{F_0} \alpha_Q + \alpha_{F_1} \kappa_Q + \frac{\alpha_{F_2} \kappa_Q^2}{\alpha_Q})\log^3(1/\epsilon) \log^2(T/\epsilon) \kappa^3_Q} & \quad  \textrm{calls to the block-encoding of } Q, \nonumber \\
 \myO{\kappa_Q (\alpha_{F_0} \alpha_Q + \alpha_{F_1} \kappa_Q + \frac{\alpha_{F_2} \kappa_Q^2}{\alpha_Q})\log(1/\epsilon)\log^2(T/\epsilon)} & \quad \textrm{calls to the state preparation of $\ket{x(0)}$}. 
\end{align}
\normalsize
If $Q$ is unitary then $\kappa_Q = \alpha_Q = \|Q\|^{-1} =1$ and the $\log(1/\epsilon)$ factors drop. If $\kappa_P = \myO{1}$ these reduce to 
\footnotesize
\begin{align}
\myO{(\alpha_{F_0}  + \alpha_{F_1}  + \alpha_{F_2})\log(1/\epsilon) \log^2(T/\epsilon)} & \quad \textrm{calls to each of the block-encodings of the } F_i, \nonumber \\
\myO{(\alpha_{F_0}  + \alpha_{F_1}  + \alpha_{F_2} ) \log^3(1/\epsilon) \log^2(T/\epsilon)} & \quad  \textrm{calls to the block-encoding of } Q, \nonumber \\
 \myO{(\alpha_{F_0}  + \alpha_{F_1}  + \alpha_{F_2})\log(1/\epsilon)\log^2(T/\epsilon)} & \quad \textrm{calls to the state preparation of $\ket{x(0)}$}. 
\end{align}
\normalsize
\end{thm}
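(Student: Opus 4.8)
The plan is to combine the Carleman convergence result of Theorem~\ref{thm:Carlemanstable} with the quantum ODE solver of Ref.~\cite{jennings2023cost}, mediated by a change of coordinates implemented via block-encodings of $Q$. First I would set up the Lyapunov-transformed coordinates $\tilde{x}(t) = Q x(t)$, in which the dynamics becomes $\dot{\tilde{x}} = \tilde{F}_0 + \tilde{F}_1 \tilde{x} + \tilde{F}_2 \tilde{x}^{\otimes 2}$ with $\tilde{F}_0 = Q F_0$, $\tilde{F}_1 = Q F_1 Q^{-1}$, $\tilde{F}_2 = Q F_2 (Q^{-1}\otimes Q^{-1})$. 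By the Lyapunov inequality $P F_1 + F_1^\dagger P < 0$ with $P = Q^\dagger Q$, the transformed linear term satisfies $\tilde{F}_1 + \tilde{F}_1^\dagger < 0$, i.e.\ $\mu(\tilde{F}_1) < 0$: the system is \emph{purely dissipative} in the $\tilde{x}$ frame, with $\mu(\tilde{F}_1) = \mu_P(F_1)$. Writing the order-$k$ Carleman matrix $\tilde{A}$ in these coordinates, one then argues as in step (b) of the prior literature's strategy that the rescaling guaranteed by $R_P < 1$ (Theorem~\ref{thm:Carlemanstable}: there is a $\gamma>0$ with $\|x(0)\|_P < 1$ and $4\mu_P(F_1) + 5\|F_0\|_P + 3\|F_2\|_P < 0$) forces $\mu(\tilde{A}) < 0$ — crucially this \emph{does} hold here, unlike the counterexample flagged after the "Comparison with prior results" discussion, because we are working in the $P$-adapted frame where the relevant log-norm is the one that appears in $R_P$. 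Hence $\|e^{\tilde{A}t}\| \le 1$ for all $t \ge 0$, giving $C_{\max} = 1$ for the transformed Carleman ODE and killing the undesirable $\kappa_P^k$ factor.

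Next I would assemble the block-encoding of $\tilde{A}$. The Carleman matrix is $\tilde{A} = \tilde{H}_0 + \tilde{H}_1 + \tilde{H}_2$ where $\tilde{H}_i$ is the band built from $\tilde{A}_{j,j\mp 1}$ or $\tilde{A}_{j,j}$ via Eqs.~\eqref{eq:A_F0}--\eqref{eq:A_F2}; each band-block is a sum of $\le k$ shifted tensor copies of $\tilde{F}_i \otimes I^{\otimes(\cdot)}$. A block-encoding of $\tilde{F}_1 = Q F_1 Q^{-1}$ is obtained by composing the block-encodings of $Q$, $F_1$, and $Q^{-1}$ (the latter via the standard inversion-of-a-block-encoding primitive, costing $\widetilde{O}(\kappa_Q)$ calls to $U_Q$ and contributing the $\kappa_Q^3$ and $\log^3(1/\epsilon)$ overhead in the $Q$-call line), with normalization $\alpha_{F_1}\kappa_Q$; similarly $\tilde{F}_2$ gets normalization $\alpha_{F_2}\kappa_Q^2/\alpha_Q$ and $\tilde{F}_0 = QF_0$ gets $\alpha_{F_0}\alpha_Q$. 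Summing over the (at most $k$) shifts and the three bands via linear-combination-of-unitaries gives an $(h,\cdot,0)$-block-encoding of $\tilde{A}$ with $h = k(\alpha_{F_0}\alpha_Q + \alpha_{F_1}\kappa_Q + \alpha_{F_2}\kappa_Q^2/\alpha_Q)$, as claimed. The initial state $\ket{\tilde{y}(0)} \propto [\tilde{x}(0), \tilde{x}(0)^{\otimes 2}, \dots, \tilde{x}(0)^{\otimes k}]$ is prepared from the given state preparation of $\ket{\tilde{x}(0)}$ and $\le k$ copies, with the $\tilde{x}$-components now having norms $\le 1$ by the rescaling, so the $\hat g$ factor in the solver is controlled. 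Feeding all this into Ref.~\cite{jennings2023cost} with $C_{\max}=1$, plus the fast-forwarding available for dissipative generators, yields a history state $\ket{\tilde\psi}\propto \sum_m \|\tilde x(mh)\|\ket{\tilde x(mh)}$ up to error $\epsilon$ in $\widetilde O(h\log(1/\epsilon)\log(T/\epsilon))$ queries — the extra $\log(T/\epsilon)$ from the history-state / time-grid overhead and the Carleman truncation $k = O(\log(T/\epsilon))$ supplying the $\log^2(T/\epsilon)$.

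Then I would execute the \textbf{back-rotation and filtering} stage: given the history state in the $\tilde x$-frame, apply $Q^{-1}$ (block-encoded, cost $\widetilde O(\kappa_Q)$) on the $j=1$ Carleman register to map $\ket{\tilde x(mh)} \mapsto$ (subnormalized) $\ket{x(mh)}$, after first projecting onto the $j=1$ block; this subnormalization, together with amplitude amplification to restore unit norm, produces the $\kappa_Q$ factors in the state-preparation line and the overall $\|Q^{-1}\|$ prefactor (the amplitude lost in filtering to block $j=1$ and in applying the non-unitary $Q^{-1}$ scales like $1/(\kappa_Q \cdot \text{const})$). Finally, I would specialize: if $Q$ is unitary then $\alpha_Q = \|Q\| = \kappa_Q = 1$, block-encoding $Q^{-1} = Q^\dagger$ is exact so all $\log(1/\epsilon)$ factors coming from approximate inversion disappear; and if merely $\kappa_P = O(1)$ (so $\kappa_Q = \sqrt{\kappa_P} = O(1)$, $\alpha_Q = O(1)$) the stated simplified complexities follow by substitution.

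\textbf{Main obstacle.} The hard part is the back-rotation/filtering step and its interaction with the history-state structure: one must argue that applying the \emph{non-unitary} $Q^{-1}$ only to the $j=1$ sub-block of each time-slice of the history state (a) can be done coherently across all $\lceil T/h\rceil$ slices simultaneously without destroying the relative weights $\|x(mh)\|$, and (b) does not blow up the failure probability beyond what $O(\kappa_Q)$ rounds of amplitude amplification can fix — in particular one needs the Carleman truncation error bound from Theorem~\ref{thm:Carlemanstable} (the $\|\eta_j(t)\| \le \frac{1}{-\xi_P} k\|F_2\|_P\|P^{-1}\|^{j/2}\|x(0)\|_P^{k+1}$ estimate, which already carries the $\|P^{-1}\|^{j/2} = \|Q^{-1}\|^j$ growth in $j$) to be propagated correctly through the $Q^{-1}$ map so that the final error in $\ket{x(T)}$ (or in the history state) is genuinely $\epsilon$ rather than $\epsilon$ amplified by an uncontrolled power of $\kappa_Q$. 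Verifying that $\mu(\tilde A) < 0$ from the rescaled inequality $4\mu_P(F_1) + 5\|F_0\|_P + 3\|F_2\|_P < 0$ — rather than the naive $\mu_P(F_1) + \|F_0\|_P + \|F_2\|_P < 0$ which, as the remark after Theorem~\ref{thm:Carlemanstable} shows, is \emph{not} sufficient — is a second delicate point, and is exactly where the strengthened constants in Theorem~\ref{thm:Carlemanstable} (the $4$, $5$, $3$) are needed.
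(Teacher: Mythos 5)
Your proposal is correct and follows essentially the same route as the paper's proof: Lyapunov change of coordinates $\tilde{x}=Qx$ so that $\mu(\tilde{F}_1)=\mu_P(F_1)<0$ and $\|\tilde{x}(0)\|<1$ after rescaling, block-encodings of $\tilde{F}_0,\tilde{F}_1,\tilde{F}_2$ with normalizations $\alpha_{F_0}\alpha_Q$, $\alpha_{F_1}\kappa_Q$, $\alpha_{F_2}\kappa_Q^2/\alpha_Q$ assembled via LCU into the Carleman generator with rescaling $h$, a truncated-Taylor-series linear system solved with $\kappa_L=\myO{1}$ (equivalently $C_{\max}=\myO{1}$), and finally filtering to the $j=1$ block (cheap because higher blocks have exponentially small norm) followed by the non-unitary back-rotation $Q^{-1}/\|Q^{-1}\|$ with $\myO{\|Q^{-1}\|}$ rounds of amplitude amplification. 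One small imprecision: your first-paragraph justification that $\mu(\tilde{A})<0$ holds ``because we are working in the $P$-adapted frame'' is not the right reason — the paper's counterexample with $\mu(A)>0$ is already one-dimensional with $P=I$ — but you correctly identify in your closing paragraph that the actual reason is the strengthened rescaled inequality $4\mu_P(F_1)+5\|F_0\|_P+3\|F_2\|_P<0$ from the Gershgorin analysis in the proof of Theorem~\ref{thm:Carlemanstable}.
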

The proof of this theorem is provided in Appendix \ref{app:proof_thm_algorithmstable}.

\begin{rmk}[Partial Differential Equations]
    Finally note that some care needs to be taken when evaluating complexity for settings where the number of variables scales with some parameter $N$, as is the case when the nonlinear ODE problem at hand results from the discretization of a partial differential equation. Then, the time discretization $h$ normally gets a dependence from $N$ (since we set $\|A\| h \leq 1$), so that $N$ and the number of time steps $M$ are linked. We then end up with a family of ODEs parameterized by $N$, which leads to some obstacles when evaluating the asymptotic complexity. First, convergence of the algorithm is typically ensured by the $R$ number conditions only up to finite $N$. Second, we can have situations where for fixed $N$ the scaling of the complexity with $M$ is $\log(M)$, but as $N \rightarrow \infty$ the log-norm of the system goes to zero, and we transition from stable to marginally stable behavior, hence from $O(\log T)$ to $O(T)$ scaling. In some settings it may be clearer to assess the performance with non-asymptotic query complexities such as those developed in Ref.~\cite{jennings2023cost} (updated with the tightest available constant factor upper bounds).
\end{rmk}

\begin{rmk}[Conservative systems]
    Note that the above result also applies to conservative systems if we take $Q$ to be the matrix diagonalizing $F_1$. The condition number will introduce an extra scaling with $T$ compared to the scalings presented above. We shall see this directly in action in Section~\ref{subsec:bqphard}.
\end{rmk}

\newpage 

\section{Exponential quantum advantage for nonlinear problems}
\label{sec:exponential_quantum_advantage}

In Section~\ref{subsec:coupled_oscillators}, we apply the algorithms from the previous section to a class of coupled nonlinear oscillator problems and establish their BQP-completeness. In Section~\ref{subsec:nonlinear_schrodinger_equations}, we apply the same framework to a class of nonlinear Schr\"{o}dinger equations and show that these problems are also BQP-complete. In both cases, BQP-completeness leverages the BQP-hardness recently established for systems consisting of exponentially many coupled linear oscillators~\cite{babbush2023exponential}. Finally, in Section~\ref{subsec:bqphard}, we go slightly further by presenting a BQP-complete family of problems that includes inherently nonlinear BQP-hard instances. These instances involve only limited nonlinearity, with the nonlinearity strength-measured in operator norm-scaling inverse-polynomially with the number of qubits.

\subsection{Coupled oscillator systems}
\label{subsec:coupled_oscillators}

In this subsection, we consider the problem of simulating the dynamics of coupled linear and nonlinear oscillators. We begin by reviewing the result of Ref.~\cite{babbush2023exponential}, which establishes the BQP-completeness of simulating coupled linear oscillator dynamics. We then extend this result to a nonlinear setting, which encompasses the linear case as a special instance, and show that -- under reasonable assumptions -- the problem remains efficiently solvable and is therefore also BQP-complete. 

\subsubsection{Linear oscillators}
\label{subsubsec:coupled_linear_oscillators}

We begin with a brief overview of the results in Ref.~\cite{babbush2023exponential} on simulating the dynamics of coupled linear oscillators.

Consider a classical system of coupled harmonic oscillators, consisting of \( N \) point masses $m_1$, $m_2$, $\dots$, $m_N$ connected by springs. For \( 1 \le i < j \le N \), let \( \kappa_{i,j} = \kappa_{j,i} \ge 0 \) denote the spring constant coupling the \( i \)-th and \( j \)-th masses. If there is no spring connecting masses \( i \) and \( j \), we set \( \kappa_{i,j} = \kappa_{j,i} = 0 \). In addition, each mass may be connected to a fixed wall by a spring, and we let \( \kappa_{i,i} \ge 0 \) denote the spring constant of the spring attached to the \( i \)-th mass for all \( i \in [N] \).

The dynamics of this system is governed by the differential  equations
\begin{align}
m_i \ddot{x}_i(t) = \sum_{j \ne i} \kappa_{i,j} \left[ x_j(t) - x_i(t) \right] - \kappa_{i,i} x_i(t),    
\end{align}
for each \( i \in [N] \), where \( x_i(t) \) denotes the position of the \( i \)-th mass at time \( t \). 

The coupling structure of the system can be concisely described using graph theory. We represent the network of springs as a weighted graph $G$, where the vertex set is $[N]$ and the edge weights are given by $\lrcb{\kappa_{i,j}:~1\le i\le j\le N}$. Note that $G$ may contain self-loops, corresponding to springs that connect individual masses to fixed walls. The \emph{Laplacian} matrix of $G$ is defined as
\begin{align}
L_G = \sum_{i=1}^N \lrb{\sum_{j=1}^N \kappa_{i,j}} \ket{i}\bra{i} - \sum_{1<i<j\le N} \kappa_{i,j} \lrb{\ket{i}\bra{j}+\ket{j}\bra{i}}.
\label{eq:laplacian_matrix_def}
\end{align}

Now let \( x(t)=(x_1(t), x_2(t), \dots, x_N(t))^T \in \mathbb{R}^N \) denote the vector of all positions at time \( t \), and define the mass matrix \( M = \sum_{i=1}^N m_i \ket{i}\bra{i} \). With the above  definitions, the equations of motion can be written compactly as
\begin{align}
M \ddot{x}(t) = -L_G x(t).    
\label{eq:coupled_linear_oscillators_equation}
\end{align}

The kinetic energy of the masses at time $t$ is given by
\begin{align}
K(t) \coloneqq \frac{1}{2} \sum_{i=1}^N m_i \dot{x}_i(t)^2,
\end{align}
and the potential energy stored in the springs at time $t$ is
\begin{align}
U(t) \coloneqq \frac{1}{2} \sum_{1\le i<j\le N} \kappa_{i,j} \left( x_i(t) - x_j(t) \right)^2+\frac{1}{2}\sum_{i=1}^N\kappa_{i,i}x_i(t)^2.
\end{align}
The total mechanical energy, defined as \( E = K(t) + U(t) \), is conserved over time.

Suppose the masses $\{ m_i : 1 \le i \le N \}$, coupling constants $\{ \kappa_{i,j} : 1 \le i \le j \le N \}$, and the initial positions $x(0) \in \mathbb{R}^N$ and velocities $\dot{x}(0) \in \mathbb{R}^N$ are provided in a suitable format. We are interested in estimating the kinetic energy of a specified subset of masses at a given time $T$. Specifically, for a subset $V \subseteq [N]$, let
\begin{align}
K_V(t) = \frac{1}{2} \sum_{i \in V} m_i \dot{x}_i(t)^2
\end{align}
denote the kinetic energy of the masses in $V$ at time $t$. The goal is to compute an estimate $\hat{k}_V$ such that
\begin{align}
\abs{\hat{k}_V - \frac{K_V(T)}{E}} \le \epsilon,
\end{align}
for given parameters $T > 0$ and $\epsilon > 0$, where $E$ is the conserved total energy of the system. (Alternatively, we may wish to estimate the ratio of the potential energy stored in a subset $V \subseteq [N] \times [N]$ of the springs to the total energy $E$
at time $T$, up to an additive error $\epsilon$. Similar results hold for this alternative problem as well.)

Ref.~\cite{babbush2023exponential} proves that a decision version of this problem is BQP-complete under suitable input encodings and parameter regimes. Specifically, the decision problem is defined as follows:
\begin{definition}[Coupled Linear Oscillator Problem (CLOP)]
Let $K=(\kappa_{i,j})_{1\le i,j\le N}$ be the $N\times N$ matrix of spring constants $\kappa_{i,j} \ge 0$ and assume it is $d$-sparse. Let $M=\diag{m_1,m_2,\dots,m_N}$ be the $N \times N$ diagonal matrix of masses $m_i>0$. Define the state 
\begin{align}
\ket{\psi(t)}=\frac{1}{\sqrt{2E}}\begin{pmatrix}
    \sqrt{M}\dot{x}(t) \\
    \i \mu(t)
\end{pmatrix},
\end{align}
where $\mu(t) \in \mathbb{R}^{\frac{N(N+1)}{2}}$ is a vector consisting of $N$ entries of the form $\sqrt{\kappa_{i,i}}x_i(t)$ and $N(N-1)/2$ entries  of the form $\kappa_{j,k}(x_j(t)-x_k(t))$. Suppose we are given access to oracles that compute the entries of $K$ and $M$, along with a unitary $\mathcal{W}$ that prepares the initial state $\ket{\psi(0)}$. 
Let $V \subseteq [N]$ be a subset for which there exists an oracle that decides its membership. Given a time $T>0$ and real numbers $0\le a < b \le 1$, the goal is to decide whether $K_V(T)/E < a$ or $K_V(T)/E > b$, promised that one of these holds.
\end{definition}

\begin{definition}[Poly-CLOP]
The setup follows that of CLOP, with additional constraints on the input encodings and parameter settings. Specifically, we require that $N=2^n$ and 
\begin{align}
d, T, \kappa_{\max}, m_{\max}, \frac{1}{m_{\min}} = \myO{\operatorname{poly}(n)},
\end{align}
where $m_{\max} = \max_{i \in [N]} m_i$, $m_{\min} = \min_{i \in [N]} m_i$ and $\kappa_{\max} = \max_{i,j \in [N]} \kappa_{i,j}$. In addition, we assume that the oracles for $K$, $M$, $V$, and the unitary $\mathcal{W}$ that prepares $\ket{\psi(0)}$ can be implemented by quantum circuits of  $\myO{\operatorname{poly}(n)}$ sizes. Finally, we require that the promise gap satisfies 
\begin{align}
b - a = \myOmega{\frac{1}{\operatorname{poly}(n)}}.    
\end{align}
\end{definition}

\begin{thm}[Ref.~\cite{babbush2023exponential}]
Poly-CLOP is BQP-complete. In fact, the problem remains BQP-complete even if we further require that $\kappa_{\rm max}\le 4$,
$m_{\rm max}=m_{\rm min}=1$, $d \le 4$, $x(0)=(0,0,\dots,0)^T$ and $\dot{x}(0)=(1, -1, 0, \dots, 0)^T$.
\end{thm}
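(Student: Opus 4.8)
\emph{Proof plan.} The statement has two halves — containment in BQP and BQP-hardness — and I would treat them separately. For \emph{containment}, the plan is to recast the Newtonian system $M\ddot x(t)=-L_G x(t)$ as a Schr\"odinger equation by extracting a ``square root'' of the positive semidefinite operator $M^{-1/2}L_G M^{-1/2}$. Writing $L_G=S^\dagger \operatorname{diag}(\kappa)\,S$, where $S$ is the weighted incidence-type map recording which masses each spring connects, set $B=\operatorname{diag}(\sqrt{\kappa})\,S\,M^{-1/2}$, so that $B^\dagger B=M^{-1/2}L_G M^{-1/2}$. With $y(t)=\sqrt{M}\,x(t)$ and $\ket{\psi(t)}=\tfrac{1}{\sqrt{2E}}\big(\dot y(t),\,\i B y(t)\big)$ one checks directly that
\[
\i\frac{d}{dt}\ket{\psi(t)}=H\ket{\psi(t)},\qquad H=\begin{pmatrix}0 & -B^\dagger\\ -B & 0\end{pmatrix},
\]
and $\norm{\ket{\psi(t)}}=1$ by conservation of $E=K(t)+U(t)$. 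Because $K$ is $d$-sparse with entries bounded by $\kappa_{\max}$ and the masses are polynomially bounded above and below, $H$ is $\myO{\operatorname{poly}(n)}$-sparse, has polynomially bounded entries and norm, and its entries are efficiently computable from the given oracles. Sparse Hamiltonian simulation then produces a state $\epsilon$-close to $e^{-\i HT}\ket{\psi(0)}=\ket{\psi(T)}$ in $\operatorname{poly}(n,\log(1/\epsilon))$ gates, and $K_V(T)/E$ is exactly the probability that a computational-basis measurement of $\ket{\psi(T)}$ lands in the ``velocity-on-$V$'' subspace (decidable via the oracle for $V$); repeated measurement (or amplitude estimation) distinguishes $K_V(T)/E<a$ from $K_V(T)/E>b$ with bounded error, given the $\myOmega{1/\operatorname{poly}(n)}$ promise gap. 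Hence Poly-CLOP $\in$ BQP.

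For \emph{BQP-hardness}, the plan is to reduce from the BQP-complete problem of deciding, for an efficiently-computable $\operatorname{poly}(n)$-sparse Hamiltonian $\tilde H$ of polynomially bounded norm with a bipartite (purely off-diagonal) block structure, whether $\lvert\langle f\lvert e^{-\i\tilde H\tilde T}\rvert i\rangle\rvert^2$ is small or large; universality of such Hamiltonians for BQP follows from standard gadget/clock constructions that encode a polynomial-size circuit into a sparse bipartite ``singular-value''/quantum-walk Hamiltonian. Given $\tilde H=\begin{pmatrix}0 & -\tilde B^\dagger\\ -\tilde B & 0\end{pmatrix}$, I would engineer a spring network whose Hamiltonian $H$ from the containment direction reproduces (a padded copy of) $\tilde H$: this requires realizing $\tilde B$ as $\operatorname{diag}(\sqrt{\kappa})\,S\,M^{-1/2}$ for an actual incidence structure, so that $\tilde B^\dagger\tilde B$ carries the Laplacian sign pattern (nonpositive off-diagonal, row-sum diagonal), has bounded degree, and uses bounded spring constants and unit masses. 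One then picks the output set $V$ so that $K_V(\tilde T)/E$ equals the transition probability $\lvert\langle f\lvert e^{-\i\tilde H\tilde T}\rvert i\rangle\rvert^2$ to be decided, and absorbs the (in general nontrivial) input state $\lvert i\rangle$ into the construction: a preparation gadget appended to the circuit evolves the fixed unit vector $\ket{\psi(0)}\propto(\dot x(0),0)$ with $\dot x(0)=(1,-1,0,\dots)^T$, $x(0)=0$, into $\lvert i\rangle$ before the simulated computation. The normalization constraints $\kappa_{\max}\le 4$, $m_{\max}=m_{\min}=1$, $d\le 4$ are met by starting from a bounded-degree, bounded-weight universal Hamiltonian family and rescaling time.

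The \textbf{main obstacle} is the hardness direction, and specifically the simultaneous satisfaction of all the structural constraints: an arbitrary bipartite sparse BQP-universal Hamiltonian $\tilde H$ must be realized \emph{exactly} as the block operator arising from a genuine mechanical network, which forces $\tilde B^\dagger\tilde B$ to be not merely positive semidefinite but a graph Laplacian with the correct diagonal and sign structure, while keeping the degree $\le 4$, the spring constants $\le 4$, and the masses equal to $1$, and while reducing the desired arbitrary input state to the fixed canonical initial data $x(0)=0$, $\dot x(0)=(1,-1,0,\dots)^T$. Designing gadgets that encode universal dynamics under all of these restrictions at once — rather than any one of them in isolation — is the crux; by contrast, the containment direction and the energy/measurement bookkeeping are comparatively routine given sparse Hamiltonian simulation and amplitude estimation.
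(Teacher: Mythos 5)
Your containment argument is essentially identical to the paper's: the paper also factors the Laplacian through the signed incidence matrix ($B_G B_G^\dagger = L_G$, $\bar B_G = \sqrt{M}^{-1}B_G$), forms the normalized state $\ket{\psi(t)} = \tfrac{1}{\sqrt{2E}}(\dot y(t), \i \bar B_G^\dagger y(t))$, observes that it obeys a Schr\"odinger equation with the off-diagonal block Hamiltonian, simulates it with a sparse/block-encoded Hamiltonian simulation algorithm, and estimates $\bra{\psi(T)}\Pi_V\ket{\psi(T)} = K_V(T)/E$ by measurement or amplitude estimation, using the $\myOmega{1/\mypoly{n}}$ promise gap. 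Your transposition conventions differ trivially from the paper's but the construction and the complexity accounting are the same, so this half is fine.

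For the hardness half, however, what you have written is a plan, not a proof, and you say so yourself. The paper does not prove this half either: it explicitly attributes the theorem to Ref.~\cite{babbush2023exponential} and states only that the hardness there comes from a reduction encoding polynomial-size circuits over the gate set $\{$Hadamard, Pauli-$X$, Toffoli$\}$ into the oscillator network. The concrete content you are missing is exactly that encoding: an explicit spring-constant matrix (the matrix the present paper later calls $D$, ``in the form of Eq.~(C4) of Ref.~\cite{babbush2023exponential}'') whose Laplacian structure, degree bound $d\le 4$, weight bound $\kappa_{\max}\le 4$, unit masses, and compatibility with the fixed initial data $x(0)=0$, $\dot x(0)=(1,-1,0,\dots)^T$ are all verified simultaneously, together with the choice of $V$ that turns the circuit's acceptance probability into $K_V(T)/E$. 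You correctly identify this gadget construction as the crux, but identifying the crux is not the same as resolving it; as it stands your hardness argument would have to be replaced by (or reduced to) a citation of the construction in Ref.~\cite{babbush2023exponential}, which is precisely what the paper does.
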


Next, we briefly describe the quantum algorithm in Ref.~\cite{babbush2023exponential} for solving Poly-CLOP in $\myO{\mypoly{n}}$ time. This serves as the foundation for our generalization to the nonlinear setting. Ref.~\cite{babbush2023exponential} also proves that Poly-CLOP is BQP-hard by constructing a reduction from simulating polynomial-size quantum circuits over the gate set consisting of Hadamard, Pauli-$X$, and Toffoli gates to this problem. However, this aspect is not the main focus of our work.

The key idea of Ref.~\cite{babbush2023exponential} is to embed the dynamics of coupled linear oscillators into a Hamiltonian simulation problem. Before presenting this construction, it is useful to first make some observations about the Laplacian matrix
$L_G$. Clearly, it is a real symmetric matrix. We claim that the eigenvalues of $L_G$ lie in the interval  $[\tilde{\kappa}_{\rm min}, 2\tdeg{G}]$,
where $\tilde{\kappa}_{\rm min} \defeq \min_{i \in [N]} \kappa_{i,i}$, and
\begin{align}
\tdeg{G} \defeq \max_{i \in [N]} \lrb{\sum_{j\neq i} \kappa_{i,j}+\frac{\kappa_{i,i}}{2}}  \le d\kappa_{\rm max}.  
\end{align}
To see this, note that for any vector $x \in \R^N$, we have
\begin{align}
x^T L_G x &= \sum_{1\le i<j \le N} \kappa_{i,j}(x_i-x_j)^2 + \sum_{i=1}^N \kappa_{i,i}x_i^2 
\ge \sum_{i=1}^N \kappa_{i,i}x_i^2 
\ge \tilde{\kappa}_{\mathrm{min}} x^Tx,
\end{align}
and
\begin{align}
x^T L_G x &= \sum_{1\le i<j \le N} \kappa_{i,j}(x_i-x_j)^2 + \sum_{i=1}^N \kappa_{i,i}x_i^2 \\
& \le 
2\sum_{1\le i<j \le N} \kappa_{i,j}(x_i^2+x_j^2)
+\sum_{i=1}^N \kappa_{i,i}x_i^2 \\
&=\sum_{i=1}^N \lrb{2\sum_{j\neq i}\kappa_{i,j}+\kappa_{i,i}} x_i^2 \\
&\le 2 \tdeg{G} \cdot x^Tx.
\end{align}

The Laplacian matrix $L_G$ admits a convenient factorization into two rectangular matrices that are transposes of each other\footnote{This decomposition of $L_G$ has been leveraged to develop several algorithms, including the ones presented in Refs.~\cite{wang2017efficient} and \cite{chakraborty2019thepower}.}. Specifically, the \emph{signed incidence} matrix of $G$ is an $N \times \frac{N(N+1)}{2}$ matrix defined by
\begin{align}
B_G = \sum_{1\le i<j\le N} \sqrt{\kappa_{i,j}} \left( \ket{i} - \ket{j} \right) \bra{i,j} + \sum_{i=1}^N \sqrt{\kappa_{i,i}}\ket{i}\bra{i,i}.
\label{eq:incidence_matrix_def}
\end{align}
It is straightforward to verify that 
\begin{align}
B_G B_G^\dagger = L_G.    
\end{align}
This implies that the singular values of $B_G$ lie in the interval $[\sqrt{\tilde{\kappa}_{\mathrm{min}}}, \sqrt{2\tdeg{G}}]$. 

Next, we define $\bar{L}_G=\sqrt{M}^{-1}L_G \sqrt{M}^{-1}$ and 
$\bar{B}_G=\sqrt{M}^{-1}B_G$, and obtain
\begin{align}
\bar{B}_G \bar{B}_G^{\dagger}=\bar{L}_G.  
\end{align}
Note that the singular values of $\bar{B}_G$ lie in the interval $[\sqrt{\tilde{\kappa}_{\mathrm{min}}/m_{\mathrm{max}}}, \sqrt{2\tdeg{G}/m_{\mathrm{min}}}]$.

We now introduce a change of variables by setting $y(t)=\sqrt{M} x(t)$. In terms of $y(t)$, the system obeys the differential equation:
\begin{align}
  \frac{d}{dt}  \begin{pmatrix}
\dot{y}(t) \\
\i \bar{B}_G^{\dagger} y(t)
    \end{pmatrix}
    = \i \begin{pmatrix}
        0 & \bar{B}_G \\
        \bar{B}_G^{\dagger} & 0
    \end{pmatrix}
    \begin{pmatrix}
\dot{y}(t) \\
\i \bar{B}_G^{\dagger} y(t)
    \end{pmatrix}.
    \label{eq:embedding_linear_oscillator_schrodinger_equation}
\end{align}
Note that \( \dot{y}(t) \) encodes the kinetic energy of the masses, while \( \i \bar{B}_G^{\dagger} y(t) \) encodes the potential energy stored in the springs. Specifically, the kinetic energy at time $t$ is given by
\begin{align}
K(t) = \frac{1}{2} \| \dot{y}(t) \|^2 = \frac{1}{2} \sum_{i=1}^N m_i \dot{x}_i(t)^2,
\end{align}
and the potential energy at time $t$ is
\begin{align}
U(t) = \frac{1}{2} \| \i \bar{B}_G^{\dagger} y(t) \|^2 = \frac{1}{2} \sum_{1\le i<j\le N} \kappa_{i,j} \left( x_i(t) - x_j(t) \right)^2+\frac{1}{2}\sum_{i=1}^N\kappa_{i,i}x_i(t)^2.
\end{align}
Since the total energy \( E = K(t) + U(t) \) is conserved, it is valid to introduce the normalized state vector
\begin{align}
\ket{\psi(t)} = \frac{1}{\sqrt{2E}} \begin{pmatrix}
\dot{y}(t) \\
\i \bar{B}_G^{\dagger} y(t)
\end{pmatrix}.
\end{align}
Moreover, define the Hermitian operator
\begin{align}
H = -\begin{pmatrix}
0 & \bar{B}_G \\
\bar{B}_G^{\dagger} & 0
\end{pmatrix},
\end{align}
which can be interpreted as a Hamiltonian.  
The dynamics then satisfies
\begin{align}
\frac{d}{dt} \ket{\psi(t)} = -\i H \ket{\psi(t)},
\label{eq:schrodinger_equation_for_linear_oscillators}
\end{align}
indicating that the evolution of the coupled linear oscillators is embedded into a Schr\"{o}dinger equation. Note that we have the following upper bound on the spectral norm of $H$:
\begin{align}
\norm{H} \le \sqrt{2\tdeg{G}/m_{\mathrm{min}}} \le \sqrt{2 d\kappa_{\rm max}/m_{\mathrm{min}}}.    
\end{align}

To solve Poly-CLOP, we apply an existing Hamiltonian simulation algorithm to Eq.~\eqref{eq:schrodinger_equation_for_linear_oscillators} and obtain a sufficiently accurate approximation of $\ket{\psi(T)}$. Then, letting $\Pi_V= \ket{0}\bra{0}\otimes \sum_{j \in V} \ket{j}\bra{j}$, we can estimate
\begin{align}
   \bra{\psi(T)} \Pi_V \ket{\psi(T)} =  \frac{K_V(T)}{E}
\end{align}
by repeatedly performing the projective measurement
$\lrcb{\Pi_V, I-\Pi_V}$ on $\ket{\psi(T)}$ and recording the frequency of observing the outcome corresponding to $\Pi_V$. The projective measurement $\lrcb{\Pi_V, I - \Pi_V}$ can be implemented using one query to the $V$-membership oracle. This number of repetitions can be reduced quadratically using amplitude estimation.

It remains to analyze the cost of the above algorithm. We need to prepare the state $\ket{\psi(T)}$ within $\myO{\epsilon}$ precision to ensure that the final estimate is $\epsilon$-close to $K_V(T)/E$. Using Ref.~\cite{low2017optimal}, the Hamiltonian evolution can be simulated with 
$$\myO{T\Lambda + \log(1/\epsilon)}$$ calls to a $(\Lambda, m, \epsilon')$-block-encoding of $H$, where $\epsilon' = \mytO{\epsilon / (T\Lambda)}$. Ref.~\cite{babbush2023exponential} demonstrates how to construct such a block-encoding of $H$ with a rescaling factor 
\begin{align}
\Lambda = \sqrt{2d\kappa_{\rm max}/m_{\rm min}}    
\end{align}
using $\myO{1}$ queries to the oracles for $K$ and $M$, along with $\myO{\log^2(N T\Lambda m_{\rm max} / (\epsilon m_{\rm min}))}$ elementary gates. In addition, the Hamiltonian simulation also requires one query to the unitary $\mathcal{W}$ that prepares the initial state $\ket{\psi(0)}$. Finally, amplitude estimation requires $\myO{1/\sqrt{\epsilon}}$ repetitions of the Hamiltonian simulation stage to achieve $\myO{\epsilon}$ accuracy in estimating the target quantity, and each repetition entails one query to the $V$-membership oracle. 

In Poly-CLOP, since the promise gap is $\myOmega{1/\mypoly{n}}$, it suffices to set $\epsilon = \myTheta{1/\mypoly{n}}$. Moreover, under the given parameter regime, we have $\Lambda = \myO{\mypoly{n}}$ and $m_{\rm max}/m_{\rm min}=\myO{\mypoly{n}}$. Therefore, it is straightforward to verify that the above algorithm requires a total of $\myO{\mypoly{n}}$ calls to the oracles for $K$, $M$, $V$ and  $\ket{\psi(0)}$, in addition to $\myO{\mypoly{n}}$ elementary gates. Since each of these oracles can be implemented in $\myO{\mypoly{n}}$ time, we conclude that Poly-CLOP is in BQP.

\subsubsection{Nonlinear oscillators}
\label{subsubsec:coupled_nonlinear_oscillators}

Next, we consider the simulation of coupled nonlinear oscillators governed by the equation:
\begin{align}
M\ddot{x}(t) = -L_G x(t)+ e^{\i \omega t}\lrsb{Q_1 (\dot{x}(t) \otimes \dot{x}(t)) + Q_2 (\dot{x}(t) \otimes x(t)) + Q_3 (x(t)\otimes x(t))
},
\label{eq:coupled_nonlinear_oscillators_equation}
\end{align}
where $\omega \in \R^+$ \footnote{A similar result holds for $\omega \in \R^-$ as well. Since this extension is straightforward, we leave the details to the reader.} denotes a driving frequency that modulates the nonlinear interaction terms. Compared to Eq.~\eqref{eq:coupled_linear_oscillators_equation} for coupled linear oscillators, this equation incorporates additional nonlinearities through the terms $\dot{x}(t)\otimes \dot{x}(t)$, $\dot{x}(t)\otimes x(t)$ and $x(t)\otimes x(t)$. The matrices $Q_1, Q_2, Q_3 \in \R^{N \times N^2}$ specify the coefficients associated with these respective nonlinear interactions. Here,  $M=\diag{m_1,m_2,\dots,m_N}$ is the same mass matrix as in Section~\ref{subsubsec:coupled_linear_oscillators}, and $L_G$, introduced in Eq.~\eqref{eq:laplacian_matrix_def}, is the Laplacian matrix of the network $G$ that encodes the linear interactions.

Due to the presence of the factor $e^{\i \omega t}$ in the quadratic terms, the solution $x(t)$ to Eq.~\eqref{eq:coupled_nonlinear_oscillators_equation} is generally complex-valued, even when the initial condition $x(0)$ is a real vector. Consequently, we interpret $x(t)$ as the position vector of $N$ coupled \emph{complex} nonlinear oscillators. Equations of this type are often used as models in contexts where oscillatory driving interacts with nonlinear effects, including nonlinear circuits, dispersive wave propagation, and nonlinear optics.

To facilitate the implementation of quantum algorithm for this problem, we impose the following structure on the matrices $Q_1$, $Q_2$, $Q_3$:
\begin{align}
Q_1 &= \sqrt{M} G_1 \lrb{\sqrt{M}\otimes \sqrt{M}},
\label{eq:def_q1}\\
Q_2 &= -\sqrt{M} G_2 \lrb{\sqrt{M} \otimes B_G^\dagger}, 
\label{eq:def_q2} \\
Q_3 &= -\sqrt{M} G_3 \lrb{B_G^\dagger \otimes B_G^\dagger}, 
\label{eq:def_q3}
\end{align}
where $B_G$ is the signed incidence matrix of the network $G$, as defined in Eq.~\eqref{eq:incidence_matrix_def}, and
$G_1 \in \R^{N \times N^2}$, $G_2 \in \R^{N \times \frac{N^2(N+1)}{2}}$ and
$G_3 \in \R^{N \times \frac{N^2(N+1)^2}{4}}$ admit efficient block-encodings.

We now introduce the variables $\alpha(t)=\sqrt{M}\dot{x}(t)$
and $\beta(t)=\i B_G^\dagger x(t)$.
Then, using Eqs.~\eqref{eq:coupled_nonlinear_oscillators_equation}, \eqref{eq:def_q1}, \eqref{eq:def_q2} and \eqref{eq:def_q3}, one can  verify that the pair $(\alpha(t), \beta(t))$  evolves according to the equation:
\begin{align}
    \begin{pmatrix}
        \dot{\alpha}(t) \\
        \dot{\beta}(t)        
    \end{pmatrix}
    =\i \begin{pmatrix}
0 & \bar{B}_G \\
\bar{B}_G^{\dagger} & 0
\end{pmatrix} \begin{pmatrix}
        \alpha(t) \\
        \beta(t)        
    \end{pmatrix}
    +e^{\i \omega t}\begin{pmatrix}
    G_1 & \i G_2 & G_3 \\
    0 & 0 & 0
\end{pmatrix} \begin{pmatrix}
    \alpha(t) \otimes \alpha(t) \\
    \alpha(t) \otimes \beta(t) \\
    \beta(t) \otimes \beta(t)
\end{pmatrix},
\end{align}
where $\bar{B}_G=\sqrt{M}^{-1}B_G$, as before.

Letting $z(t)=\begin{pmatrix}
    \alpha(t) \\
    \beta(t)
\end{pmatrix}$, the system can be written compactly as:
\begin{align}
\dot{z}(t) = -\i H z(t) + e^{\i \omega t} F_2 \lrb{z(t)\otimes z(t)},    
\end{align}
where
\begin{align}
H = -\begin{pmatrix}
0 & \bar{B}_G \\
\bar{B}_G^{\dagger} & 0
\end{pmatrix},~&~
F_2=\begin{pmatrix}
    G_1 & \i G_2 & 0 & G_3 \\
    0 & 0 & 0 & 0
\end{pmatrix}. 
\end{align}
Therefore, the dynamics of the coupled nonlinear oscillators has been reformulated as a nonlinear Schr\"{o}dinger-type equation. In particular, when the nonlinear interaction terms $G_1$, $G_2$ and $G_3$ vanish, the equation reduces to a linear Schr\"{o}dinger equation.

The vectors $\alpha(t)$ and $\beta(t)$ encode the kinetic and potential energy information of the oscillators at time $t$, respectively. Specifically, the kinetic energy of the system at time $t$ is given by
\begin{align}
    K(t) \defeq \frac{1}{2}\norm{\alpha(t)}^2= \frac{1}{2} \dot{x}(t)^\dagger M \dot{x}(t) = \frac{1}{2}\sum_{i=1}^N m_i \abs{\dot{x}_i(t)}^2,
\end{align}
while the potential energy is given by
\begin{align}
    U(t) \defeq \frac{1}{2}\norm{\beta(t)}^2= \frac{1}{2} x(t)^\dagger L_G x(t)=\frac{1}{2}\sum_{1\le i<j\le N}\kappa_{i,j}\abs{x_i(t)-x_j(t)}^2 + \frac{1}{2}\sum_{i=1}^N \kappa_{i,i}\abs{x_i(t)}^2.
\end{align}
Thus, the total energy of the system at time $t$ is 
\begin{align}
    E(t) \defeq  \frac{1}{2}\lrb{\norm{\alpha(t)}^2 + \norm{\beta(t)}^2}
    =\frac{1}{2}\dot{x}(t)^\dagger M \dot{x}(t) + \frac{1}{2}x(t)^\dagger L_G x(t).
    \label{eq:total_energy_nonlinear_setting}
\end{align}
In contrast to the linear setting, the total energy $E(t)$ is not necessarily conserved due to the presence of nonlinear driving terms.

Given the initial condition $(x(0), \dot{x}(0))$, a time $T$ and a subset $V \subset [N]$, we aim to compute the ratio of the kinetic energy localized in the subset $V$ to the total energy at time $T$. Precisely, let
\begin{align}
    K_V(t) \defeq \frac{1}{2}\sum_{i \in V} m_i \abs{\dot{x}_i(t)}^2
\end{align}
denote the kinetic energy associated with the subset $V$ at time $t$. Our goal is to estimate the ratio $K_V(T)/E(T)$ within a specified additive error $\epsilon>0$. (Alternatively, one may wish to estimate the ratio of the potential energy stored in a subset $V \subseteq [N] \times [N]$ of the springs to the total energy at time $T$, up to an additive error $\epsilon$. Similar results hold for this alternative problem as well.)
Next, we demonstrate that, under appropriate input encodings and parameter settings, a decision version of the above problem is BQP-complete. Specifically, the decision problem is defined as follows:

\begin{definition}[Coupled Nonlinear Oscillator Problem (CNOP)]
Let $M=\diag{m_1,m_2,\dots,m_N}$ be an $N \times N$ diagonal matrix of masses $m_i>0$, and let $K=(\kappa_{i,j})_{1\le i,j\le N}$ be the $N\times N$ matrix of spring constants $\kappa_{i,j} \ge 0$, assumed to be $d$-sparse. Consider a network of coupled nonlinear oscillators governed by the equation:
\begin{align}
M\ddot{x}(t) = -L_G x(t)+ e^{\i \omega t}\lrsb{Q_1 (\dot{x}(t) \otimes \dot{x}(t)) + Q_2 (\dot{x}(t) \otimes x(t)) + Q_3 (x(t)\otimes x(t))
},
\end{align}
where $\omega \in \R^+$ and $L_G$ is defined as in Eq.~\eqref{eq:laplacian_matrix_def}. Suppose
\begin{align}
Q_1 = \sqrt{M} G_1 \lrb{\sqrt{M}\otimes \sqrt{M}},
~Q_2 = -\sqrt{M} G_2 \lrb{\sqrt{M} \otimes B_G^\dagger}, 
~Q_3 = -\sqrt{M} G_3 \lrb{B_G^\dagger \otimes B_G^\dagger},   
\end{align}
where $B_G$ is defined as in Eq.~\eqref{eq:incidence_matrix_def}, and
$G_1 \in \R^{N \times N^2}$, $G_2 \in \R^{N \times \frac{N^2(N+1)}{2}}$,
$G_3 \in \R^{N \times \frac{N^2(N+1)^2}{4}}$ are arbitrary.
Define the normalized state 
\begin{align}
\ket{\psi(t)}=\frac{1}{\sqrt{2E(t)}}\begin{pmatrix}
    \sqrt{M}\dot{x}(t) \\
    \i B_G^\dagger x(t)
\end{pmatrix},
\end{align} 
where $E(t)=\frac{1}{2}\dot{x}(t)^\dagger M \dot{x}(t)+\frac{1}{2}x(t)^\dagger L_G x(t)$. Suppose we are given access to oracles that compute the entries of $K$ and $M$, block-encodings of $G_1$, $G_2$ and $G_3$, along with a unitary $\mathcal{W}$ that prepares $\ket{\psi(0)}$. 
Let $V \subseteq [N]$ be a subset for which there exists an oracle that decides its membership. Given a time $T>0$ and real numbers $0\le a < b \le 1$, the goal is to decide whether $K_V(T)/E(T) <a$ or $K_V(T)/E(T) > b$, promised that one of these holds.    
\end{definition}

\begin{definition}[Poly-CNOP]
The setup follows that of CNOP, with additional constraints on the input encodings and parameter settings. Specifically, we require that $N=2^n$,  
\begin{align}
d, T, \kappa_{\max}, m_{\max}, \frac{1}{m_{\min}}, E_{\rm max}, \frac{1}{E_{\rm min}} = \myO{\operatorname{poly}(n)}, 
\end{align}
where $m_{\max} = \max_{i \in [N]} m_i$, $m_{\min} = \min_{i \in [N]} m_i$, $\kappa_{\max} = \max_{i,j \in [N]} \kappa_{i,j}$, $E_{\rm max}=\max_{t \in [0, T]} E(t)$
and
$E_{\rm min}=\min_{t \in [0, T]} E(t)$, and
\begin{align}
b - a = \myOmega{\frac{1}{\operatorname{poly}(n)}}.    
\end{align}
 
Moreover, let $\rho=\myO{\mypoly{n}}$ be a known upper bound on $E_{\rm max}$. Suppose $\bar{B}_G$ has the singular value decomposition $\bar{B}_G=U \Gamma W^\dagger$, where $\Gamma=\sum_{j=1}^N \gamma_j \ket{j}\bra{j}$ is diagonal, and $U=\sum_{j=1}^N \ket{u_j}\bra{j}$ and $W=\sum_{j=1}^{N} \ket{v_j}\bra{j}$ are unitary \footnote{Technically, $\bar{B}_G$ is a rectangular matrix and $W$ is an isometry, as each $\ket{v_j}$ resides in a $\frac{N(N+1)}{2}$-dimensional space. However, the system's state always lies within the $N$-dimensional subspace spanned by $\ket{v_1}, \ket{v_2}, \dots, \ket{v_N}$. Therefore, we restrict our attention to this subspace, effectively treating $\bar{B}_G$ as a square matrix and $W$ as a unitary.}. Define
\begin{align}
\tilde{G}_1=U^\dagger G_1 (U \otimes U),~
\tilde{G}_2=U^\dagger G_2 (U \otimes W),~
\tilde{G}_3=U^\dagger G_3 (W \otimes W).
\end{align}
Let $s_i=\operatorname{cs}(\tilde{G}_i)$, for $i=1,2,3$. We assume that
\begin{align}
    64 \rho (s_1+s_2+s_3)\sqrt{\norm{\tilde{G}_1}^2+\norm{\tilde{G}_2}^2+\norm{\tilde{G}_3}^2}\le c \omega
\end{align}
where $c \in (0,1/2)$ is a constant. In addition, we assume that 
\begin{align}
\omega \ge 4\sqrt{2d\kappa_{\rm max}/m_{\rm min}}, \quad \omega = \myO{(s_1+s_2+s_3) \cdot\mypoly{n}}.
\end{align}

Furthermore, we assume that the oracles for $K$, $M$, $V$, and the unitary $\mathcal{W}$ that prepares $\ket{\psi(0)}$ can be implemented by quantum circuits of $\myO{\operatorname{poly}(n)}$ sizes. In addition, for each $i=1,2,3$, 
a block-encoding of $G_i$ with a rescaling factor $\myO{\mypoly{n}}$ can be implemented using $\myO{\mypoly{n}}$ elementary gates. 

\end{definition}

\begin{thm}
Poly-CNOP is BQP-complete.
\end{thm}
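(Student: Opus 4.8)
The plan is to establish Poly-CNOP $\in$ BQP and Poly-CNOP is BQP-hard separately, with the bulk of the work residing in the containment direction since BQP-hardness will follow almost immediately by reduction to the linear case.

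\textbf{BQP-hardness.} First I would observe that Poly-CNOP contains Poly-CLOP as the special case $G_1 = G_2 = G_3 = 0$, for which the factor $e^{\i\omega t}$ becomes irrelevant. When all nonlinear coefficient matrices vanish, the defining equation reduces exactly to $M\ddot{x}(t) = -L_G x(t)$, the energy $E(t)$ is conserved (so $E_{\rm max} = E_{\rm min} = E$), the state $\ket{\psi(t)}$ coincides with the Poly-CLOP state, and the condition $K_V(T)/E(T) \lessgtr a,b$ reduces to $K_V(T)/E \lessgtr a,b$. The parameter constraints of Poly-CNOP are satisfied: the nonlinearity bound $64\rho(s_1+s_2+s_3)\sqrt{\norm{\tilde G_1}^2 + \norm{\tilde G_2}^2 + \norm{\tilde G_3}^2} \le c\omega$ holds trivially since the left side is $0$ (with the convention $\operatorname{cs}(0)=1$), and we may choose $\omega$ to be any value with $\omega \ge 4\sqrt{2d\kappa_{\rm max}/m_{\rm min}}$ and $\omega = \myO{\mypoly{n}}$, which is compatible with the $d,\kappa_{\rm max},m_{\rm min}$ constraints inherited from Poly-CLOP. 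Hence the BQP-hardness of Poly-CLOP (from Ref.~\cite{babbush2023exponential}) immediately implies the BQP-hardness of Poly-CNOP.

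\textbf{BQP-containment.} For the harder direction I would proceed in the following steps. (1) Reformulate the dynamics: using the change of variables $\alpha(t) = \sqrt{M}\dot{x}(t)$, $\beta(t) = \i B_G^\dagger x(t)$ and $z(t) = (\alpha(t),\beta(t))^T$, rewrite the equation of motion as the nonlinear Schr\"{o}dinger-type equation $\dot{z}(t) = -\i H z(t) + e^{\i\omega t} F_2(z(t)\otimes z(t))$ with $H$ and $F_2$ as given in the excerpt. (2) Diagonalize the linear part: since $\bar{B}_G = U\Gamma W^\dagger$, the matrix $H = -\begin{pmatrix} 0 & \bar{B}_G \\ \bar{B}_G^\dagger & 0\end{pmatrix}$ is diagonalized (in the relevant $2N$-dimensional subspace) by a unitary $Q$ built from $U$ and $W$, with eigenvalues $\pm\i\gamma_j$; these are purely imaginary, so they lie in the Siegel domain, and the assumption $\omega \ge 4\sqrt{2d\kappa_{\rm max}/m_{\rm min}} \ge 4\max_j \gamma_j$ ensures $|\operatorname{Im}(\lambda_i)| \le \omega/4$ for every eigenvalue $\lambda_i$ of $F_1 := -\i H$. (3) Invoke Theorem~\ref{thm:error_bound_nonresonant_siegel_domain_time_dependent_f2} (equivalently \ref{thm:resonant_oscillating_F2}): after the vertical spectral shift $z \mapsto e^{\i\omega t} z$, the shifted spectrum is nonresonant and in the Poincar\'{e} domain, and the nonlinearity condition $64\rho(s_1+s_2+s_3)\sqrt{\sum_i\norm{\tilde G_i}^2} \le c\omega$ translates (after absorbing the $Q$-conjugation into $\tilde F_2 = Q^{-1}F_2 Q^{\otimes 2}$, using $\kappa_Q = \myO{1}$ since $Q$ is unitary) into $32 s\norm{\tilde F_2}\norm{\tilde x_{\rm max}} \le c'\omega$, so Carleman linearization truncated at $k = \myO{\mylog{gT\omega/(s\epsilon)}} = \myO{\mypoly{n}\text{-ish, actually }\mylog{\mypoly{n}}}$ converges. (4) Apply the quantum algorithm of Theorem~\ref{thm:algorithm_complexity_time_dependent_f2_case}: since $Q$ is unitary ($\kappa_Q = 1$), the factor $2^k\kappa_Q^k = 2^k$ with $k = \myO{\mylog{gT\omega/(s\epsilon)}}$ is $\mypoly$ in the relevant parameters; combined with $\alpha = \myO{\mypoly{n}}$ (from the block-encoding rescaling factors for $H$ and the $G_i$), $g = \myO{\mypoly{n}}$ (from $E_{\rm max}/E_{\rm min} = \myO{\mypoly{n}}$ together with a lower bound on $\min_t\norm{z(t)}$), $T = \myO{\mypoly{n}}$, $\omega = \myO{(s_1+s_2+s_3)\mypoly{n}}$, and $1/\epsilon = \myO{\mypoly{n}}$ (since the promise gap $b-a = \myOmega{1/\mypoly{n}}$ lets us take $\epsilon = \myTheta{1/\mypoly{n}}$), the algorithm runs in $\myO{\mypoly{n}}$ time and queries. (5) Estimate the target quantity: prepare a state $\epsilon$-close to $\ket{\psi(T)} = z(T)/\norm{z(T)}$, use the $V$-membership oracle to implement the projector $\Pi_V = \ket{0}\bra{0}\otimes\sum_{j\in V}\ket{j}\bra{j}$, and apply amplitude estimation to estimate $\bra{\psi(T)}\Pi_V\ket{\psi(T)} = K_V(T)/E(T)$ to additive error $\epsilon$ with $\myO{1/\epsilon} = \myO{\mypoly{n}}$ repetitions; this decides whether $K_V(T)/E(T) < a$ or $> b$.

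\textbf{Main obstacle.} The delicate step will be (4)--(5): controlling all the complexity parameters simultaneously and rescaling-consistently. In particular, the quantity $g = \max_t\norm{z(t)}/\min_t\norm{z(t)}$ — and relatedly $\norm{\tilde x_{\rm max}} = \max_t\norm{Q^{-1}x(t)}$ appearing in $R_\omega$ — must be shown to be $\myO{\mypoly{n}}$; the upper bound follows from $E_{\rm max} = \myO{\mypoly{n}}$ and $\rho \ge \max_t E_{\rm max}$ being known, but the lower bound on $\min_t\norm{z(t)}$ needs $E_{\rm min} = \myOmega{1/\mypoly{n}}$, which is exactly the extra parameter assumption built into Poly-CNOP precisely for this reason. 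A second subtlety is bookkeeping the block-encoding normalization $\alpha = \alpha_1 + \rho\kappa_Q\alpha_2$ of $F_2$: since $\tilde F_2$ involves $\tilde G_1,\tilde G_2,\tilde G_3$ of the prescribed forms and the $G_i$ admit block-encodings with $\mypoly{n}$ rescaling factors, one must verify that assembling these (together with the incidence-matrix factors $B_G^\dagger$ and $\sqrt{M}$, whose norms are $\myO{\mypoly{n}}$ under the parameter constraints) yields $\alpha_2 = \myO{\mypoly{n}}$; this is routine linear-algebra bookkeeping of norms but must be done carefully to ensure nothing blows up exponentially in $n$. Everything else — the reformulation, the diagonalization via the SVD of $\bar B_G$, the appeal to the Carleman convergence theorem, and the final amplitude-estimation readout — is a direct assembly of results already established in the excerpt.
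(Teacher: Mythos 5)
Your proposal is correct and follows essentially the same route as the paper: BQP-hardness by specializing to $G_1=G_2=G_3=0$ (recovering Poly-CLOP), and containment by embedding the dynamics into the nonlinear Schr\"{o}dinger equation $\dot{z} = -\i H z + e^{\i\omega t}F_2 z^{\otimes 2}$, verifying the hypotheses of Theorem~\ref{thm:algorithm_complexity_time_dependent_f2_case} (unitary diagonalizing $Q$, the bounds $\operatorname{cs}(\tilde F_2)\le 2(s_1+s_2+s_3)$ and $\norm{\tilde F_2}\le\sqrt{\sum_i\norm{\tilde G_i}^2}$, and $|\imaginarypart{\lambda_i}|\le\omega/4$), and reading out $K_V(T)/E(T)=\bra{\psi(T)}\Pi_V\ket{\psi(T)}$ via the $V$-membership oracle and amplitude estimation. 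The subtleties you flag (the role of $E_{\rm min}$ in bounding $g$, and the $\mypoly{n}$ block-encoding normalizations) are exactly the points the paper's proof addresses.
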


\begin{proof}
Since Poly-CNOP generalizes Poly-CLOP -- corresponding to the special case where $G_1$, $G_2$ and $G_3$ are all zero -- and the latter has been shown to be BQP-hard in Ref.~\cite{babbush2023exponential}, it follows that Poly-CNOP is also BQP-hard.

It remains to prove that Poly-CNOP can be solved in $\myO{\mypoly{n}}$ quantum time. This is accomplished by embedding the dynamics of the coupled nonlinear oscillators into a nonlinear Schr\"{o}dinger equation and applying the quantum algorithm from Theorem~\ref{thm:algorithm_complexity_time_dependent_f2_case} to solve it.

Specifically, as mentioned earlier, the vector
\begin{align}
z(t)=\begin{pmatrix}
    \sqrt{M}\dot{x}(t)\\
    \i B_G^\dagger x(t)
\end{pmatrix} 
\end{align}
evolves according to the differential equation
\begin{align}
\dot{z}(t) = -\i H z(t) + e^{\i \omega t} F_2 z(t)^{\otimes 2},    
\label{eq:nonlinear_schrodinger_equation_for_nonlinear_oscillators}
\end{align}
where
\begin{align}
H = -\begin{pmatrix}
0 & \bar{B}_G \\
\bar{B}_G^{\dagger} & 0
\end{pmatrix},~&~
F_2=\begin{pmatrix}
    G_1 & \i G_2 & 0 & G_3 \\
    0 & 0 & 0 & 0
\end{pmatrix}. 
\end{align}

Next, we show that this nonlinear Schr\"{o}dinger equation satisfies all the conditions required by Theorem~\ref{thm:algorithm_complexity_time_dependent_f2_case}:
\begin{itemize}
\item First, in this problem, we have $F_1=-\i H$, where $H$ is Hermitian and has
eigenvalues $\pm \gamma_1, \pm \gamma_2$, $\dots$, $\pm \gamma_N$. As shown in Section~\ref{subsubsec:coupled_linear_oscillators}, each $\gamma_j$ satisfies $\gamma_j\le \sqrt{2 d \kappa_{\rm max} / m_{\rm min}}$. Thus, $F_1$ is anti-Hermitian, and all eigenvalues of $F_1$ have zero real parts, and imaginary parts lying in the interval $[-\omega/4, \omega/4]$.
\item Second, $F_1=-\i H$ admits the spectral decomposition
\begin{align}
    F_1 = \i\sum_{j=1}^N \gamma_j \lrb{\ket{+_j}\bra{+_j} - \ket{-_j}\bra{-_j}},    
\end{align}
where the eigenvectors are given by
\begin{align}
    \ket{\pm_j}=\frac{1}{\sqrt{2}}\lrb{\ket{0}\ket{u_j}\pm \ket{1}\ket{v_j}}.
\end{align}
Equivalently, $F_1$ can be written as 
\begin{align}
    F_1 = Q \Lambda Q^\dagger,
\end{align}
where
\begin{align}
Q &= 
\sum_{j=1}^N \ket{+_j}\bra{0}\bra{j}
+
\sum_{j=1}^N \ket{-_j}\bra{1}\bra{j} \\
&=\frac{1}{\sqrt{2}}\lrb{\ket{0}\bra{0}\otimes U+\ket{1}\bra{0}\otimes W+\ket{0}\bra{1}\otimes U-\ket{1}\bra{1}\otimes W}
\end{align}
is unitary and
\begin{align}
\Lambda &= \lrb{\ket{0}\bra{0} - \ket{1}\bra{1}} \otimes \i 
 \sum_{j=1}^N \gamma_j \ket{j}\bra{j}        
\end{align}
is diagonal. Meanwhile, observe that
\begin{align}
F_2 = \ket{0}\bra{0}\bra{0}\otimes G_1
    +\i\ket{0}\bra{0}\bra{1}\otimes G_2
    +\ket{0}\bra{1}\bra{1}\otimes G_3.    
\end{align}
Thus, we have
\begin{align}
\tilde{F}_2 & = Q^\dagger F_2 Q^{\otimes 2} \\
&=\ket{+}\bra{+}\bra{+} \otimes U^\dagger G_1 (U \otimes U)  +\ket{+}\bra{+}\bra{-} \otimes \i U^\dagger G_2 (U \otimes W) \nonumber\\
&\quad +\ket{+}\bra{-}\bra{-} \otimes U^\dagger G_3 (W \otimes W) \\
& 
=\ket{+}\bra{+}\bra{+} \otimes \tilde{G}_1 +\ket{+}\bra{+}\bra{-} \otimes \i \tilde{G}_2 +\ket{+}\bra{-}\bra{-} \otimes \tilde{G}_3,
\end{align}
where $\ket{\pm}=\frac{1}{\sqrt{2}}(\ket{0} \pm \ket{1})$. Then,
since $\tilde{G}_i$ is $s_i$-column-sparse, for $i=1,2,3$, it follows that $\tilde{F}_2$ is at most $2(s_1+s_2+s_3)$-column-sparse. Furthermore, it is straightforward to verify that
\begin{align}
    \norm{\tilde{F}_2} \le \sqrt{\norm{\tilde{G}_1}^2+\norm{\tilde{G}_2}^2+\norm{\tilde{G}_3}^2}. 
\end{align}
Then we have
\begin{align}
    32 \rho \cdot \operatorname{cs}(\tilde{F}_2) \cdot \norm{\tilde{F}_2} \le c \omega,
\end{align}
as required by Theorem~\ref{thm:algorithm_complexity_time_dependent_f2_case}.
\end{itemize}

To solve Poly-CNOP, we first apply the quantum algorithm from Theorem~\ref{thm:algorithm_complexity_time_dependent_f2_case} to the nonlinear Schr\"{o}dinger equation \eqref{eq:nonlinear_schrodinger_equation_for_nonlinear_oscillators} and obtain a state that is $\frac{b-a}{4}$-close to $\ket{\psi(T)}$. Next, define the projector $\Pi_V= \ket{0}\bra{0}\otimes \sum_{j \in V} \ket{j}\bra{j}$. We then estimate
\begin{align}
   \bra{\psi(T)} \Pi_V \ket{\psi(T)} = \frac{K_V(T)}{E(T)}
\end{align}
within additive error $\frac{b-a}{2}$ by repeatedly performing the projective measurement
$\lrcb{\Pi_V, I-\Pi_V}$ on $\ket{\psi(T)}$ and recording the frequency of observing the outcome corresponding to $\Pi_V$. This projective measurement can be implemented using a single query to the $V$-membership oracle. Moreover, the number of repetitions required for estimating the expectation value can be quadratically reduced using amplitude estimation.

It remains to analyze the cost of the above algorithm. As shown in Section~\ref{subsubsec:coupled_linear_oscillators}, a block-encoding of $F_1=-\i H$ with a rescaling factor $\myO{\mypoly{n}}$ can be implemented using $\myO{1}$ queries to the oracles for $K$ and $M$, along with $\myO{\mypoly{n}}$ elementary gates. 

Meanwhile, a block-encoding of $F_2$ can be constructed from block-encodings of $G_1$, $G_2$ and $G_3$ using standard linear combination of unitaries (LCU) technique. Since each of these block-encodings has a rescaling factor $\myO{\mypoly{n}}$ and can be implemented using $\myO{\mypoly{n}}$ elementary gates, it follows that 
a block-encoding of $F_2$ with a rescaling factor $\myO{\mypoly{n}}$ can also be implemented using $\myO{\mypoly{n}}$ elementary gates. 

Moreover, since $b-a=\myOmega{\frac{1}{\mypoly{n}}}$, it suffices to prepare $\ket{\psi(T)}$ within $\myTheta{\frac{1}{\mypoly{n}}}$ precision and to repeat this procedure $\myO{\mypoly{n}}$ times. 

Putting everything together, and invoking Theorem~\ref{thm:algorithm_complexity_time_dependent_f2_case}, we conclude that the quantum algorithm for Poly-CNOP requires $\myO{\mypoly{n}}$ calls to the oracles for $K$, $M$ and $V$, to the block-encodings of $G_1$, $G_2$ and $G_3$ with $\myO{\mypoly{n}}$ rescaling factors, and to the unitary $\mathcal{W}$ that prepares $\ket{\psi(0)}$, in addition to $\myO{\mypoly{n}}$ elementary gates. Since each of the aforementioned unitaries can be implemented in $\myO{\mypoly{n}}$ time, we conclude that the above algorithm runs in $\myO{\mypoly{n}}$ quantum time. Hence, Poly-CNOP is in BQP. This completes the proof of the theorem.
\end{proof}

\subsection{Nonlinear Schr\"{o}dinger equations}
\label{subsec:nonlinear_schrodinger_equations}

Finally, we consider a class of nonlinear Schr\"{o}dinger equations motivated by the coupled oscillator problem and demonstrate that solving these equations is BQP-complete.

Suppose the variables $\alpha(t) \in \mathbb{C}^N$ and $\beta(t) \in \mathbb{C}^K$, where $N \le K$,  evolve according to the following equations: 
\begin{align}
    \dot{\alpha}(t)&=\i A \beta(t) + C\alpha(t)\otimes \alpha(t) 
    +
    D\beta(t)\otimes \beta(t), \\ 
    \dot{\beta}(t)&=\i A^\dagger \alpha(t) + E\alpha(t) \otimes \beta(t) +
    F\beta(t) \otimes \alpha(t),  
\end{align}
where $A\in \mathbb{C}^{N \times K}$ has rank $N$, and  $C \in \mathbb{C}^{N \times N^2}$, $D \in \mathbb{C}^{N \times K^2}$, and $E, F \in \mathbb{C}^{K \times NK}$ are matrices to be specified later.

Let $z(t)=\begin{pmatrix}
    \alpha(t) \\
    \beta(t)
\end{pmatrix}$. Then the system can be written in the compact form:
\begin{align}
\dot{z}(t) = -\i H z(t) + F_2 z(t)^{\otimes 2},    
\label{eq:quadratic_nonlinear_schrodinger_equation}
\end{align}
where
\begin{align}
H = -\begin{pmatrix}
0 & A \\
A^{\dagger} & 0
\end{pmatrix},~&~
F_2=\begin{pmatrix}
    C & 0 & 0 & D \\
    0 & E & F & 0
\end{pmatrix}.    
\end{align}

In particular, the Schr\"{o}dinger equation \eqref{eq:embedding_linear_oscillator_schrodinger_equation}, which encodes the dynamics of coupled linear oscillators, corresponds to a special case of the above system with $A=\bar{B}_G$, $C=0$, $D=0$, and $E=F=0$. In this case,  $\alpha(t)=\sqrt{M}\dot{x}(t)$, $\beta(t)=\i B_G^\dagger x(t)$, where $x(t)$ denotes the vector of mass positions at time $t$.
We also assume that $\kappa_{i,i}>0$ for each $i \in [N]$, ensuring that $\bar{B}_G$ has full rank $N$. Note that the instances of Poly-CLOP that are proven to be BQP-hard in Ref.~\cite{babbush2023exponential} satisfy this condition.

Compared to Eq.~\eqref{eq:embedding_linear_oscillator_schrodinger_equation}, Eq.~\eqref{eq:quadratic_nonlinear_schrodinger_equation}
allows the evolution of $\alpha(t)$ and $\beta(t)$ to depend not only on linear terms but also on quadratic terms in these variables. One might interpret Eq.~\eqref{eq:quadratic_nonlinear_schrodinger_equation} as describing a system of coupled nonlinear oscillators, although $\alpha(t)$ and $\beta(t)$ no longer retain their original interpretations as the vectors encoding kinetic and potential energies, respectively.

Given the coefficient matrices $A, C, D, E, F$, an initial condition $z(0)$, a projector $\Pi$,
a time $T>0$, and a precision parameter $\epsilon>0$,
the goal is to estimate the quantity
$$\frac{z(T)^\dagger \Pi z(T)}{z(T)^\dagger z(T)}$$
within additive error $\epsilon$.

We aim to solve this problem by applying the quantum algorithm presented in Theorem~\ref{thm:algorithm_complexity_decomposable_f2_case}. To this end, the matrices $C$, $D$, $E$ and $F$ must satisfy certain constraints. Specifically, suppose $A$ has singular value decomposition $A=U \Gamma W^\dagger$, where $\Gamma=\sum_{j=1}^N \gamma_j \ket{j}\bra{j}$ is diagonal, and
$U=\sum_{j=1}^N \ket{u_j}\bra{j}$ and $W=\sum_{j=1}^{N} \ket{v_j}\bra{j}$ are unitary. Then $F_1=-\i H$ has spectral decomposition 
\begin{align}
    F_1 = Q \Lambda Q^\dagger = \i\sum_{j=1}^N \gamma_j \lrb{\ket{+_j}\bra{+_j} - \ket{-_j}\bra{-_j}},
\end{align}
where     
\begin{align}
\ket{\pm_j}&=\frac{1}{\sqrt{2}}\lrb{\ket{0}\ket{u_j}\pm \ket{1}\ket{v_j}},    \\
Q &= 
\sum_{j=1}^N \ket{+_j}\bra{0}\bra{j}
+
\sum_{j=1}^N \ket{-_j}\bra{1}\bra{j} \\
&=\frac{1}{\sqrt{2}}\lrb{\ket{0}\bra{0}\otimes U+\ket{1}\bra{0}\otimes W+\ket{0}\bra{1}\otimes U-\ket{1}\bra{1}\otimes W},\\
\Lambda &= \lrb{\ket{0}\bra{0} - \ket{1}\bra{1}} \otimes \i 
 \sum_{j=1}^N \gamma_j \ket{j}\bra{j}.        
\end{align}
Now we partition the eigenvalues of $F_1$ into two groups $\lrcb{\i\gamma_j:~j \in [N]}$ and $\lrcb{-\i\gamma_j:~j\in [N]}$. Correspondingly, we define the projectors 
$P_+=\ket{0}\bra{0}\otimes I$
and
$P_-=\ket{1}\bra{1}\otimes I$.

Meanwhile, note that
\begin{align}
    F_2&=\ket{0}\bra{0}\bra{0}\otimes C + \ket{0}\bra{1}\bra{1}\otimes D + \ket{1}\bra{0}\bra{1}\otimes E +
    \ket{1}\bra{1}\bra{0}\otimes F.
\end{align}
It follows that
\begin{align}
    \tilde{F}_2&=Q^\dagger F_2 Q^{\otimes 2} \\
    &=\ket{+}\bra{+}\bra{+}\otimes U^\dagger C (U \otimes U)
    + \ket{+}\bra{-}\bra{-}\otimes U^\dagger D (W \otimes W) \\
    &\quad 
    +\ket{-}\bra{+}\bra{-}\otimes W^\dagger E (U \otimes W) 
    + \ket{-}\bra{-}\bra{+}\otimes W^\dagger F (W \otimes U)
    \\
    &=\ket{+}\bra{+}\bra{+}\otimes \tilde{C}
    + \ket{+}\bra{-}\bra{-}\otimes \tilde{D} \\
    &\quad +\ket{-}\bra{+}\bra{-}\otimes \tilde{E} 
    + \ket{-}\bra{-}\bra{+}\otimes \tilde{F}, 
\end{align}
where 
\begin{align}
\tilde{C}&=U^\dagger C (U \otimes U), \quad    
\tilde{D}=U^\dagger D (W \otimes W), \\
\tilde{E}&=W^\dagger E (U \otimes W), \quad
\tilde{F}=W^\dagger F (W \otimes U).
\end{align}

We would like $\tilde{F}_2$ to satisfy 
\begin{align}
    \tilde{F}_2 = P_+ \tilde{F}_2 (P_+\otimes P_+)
    + P_- \tilde{F}_2 (P_-\otimes P_-),
\end{align}
and one can verify by direct calculation that this condition holds if and only if
\begin{align}
  \tilde{C}=\tilde{D}=\tilde{E}=\tilde{F}.  
\end{align}
Assume this condition holds from now on. Then we have
\begin{align}
\norm{\tilde{F}_2}=\sqrt{2}\norm{\tilde{C}}=\sqrt{2}\norm{C}.
\end{align}
Moreover, the column sparsity of $\tilde{F}_2$ equals that of $\tilde{C}$. 

Now suppose $\lrb{\gamma_j:~j\in [N]}$ is $\Delta$-nonresonant, and let $\rho$ be an upper bound on $\max_{t \in [0,t]}\norm{z(t)}$. Furthermore, suppose that $C$ satisfies 
\begin{align}
    8 \sqrt{2} \rho  \cdot \operatorname{cs}(\tilde{C}) \cdot  \norm{C} \le c \Delta,
\end{align}
for some constant $c \in (0, 1/2)$. Then it follows that
\begin{align}
    8 \rho \cdot \operatorname{cs}(\tilde{F}_2)\cdot \norm{\tilde{F}_2} \le c \Delta.
\end{align}
This ensures that $\tilde{F}_2$ satisifes the conditions required by Theorem~\ref{thm:algorithm_complexity_decomposable_f2_case}.

Now suppose $N=2^n$ and we can construct block-encodings of $A$, $C$, $D$, $E$ and $F$ with rescaling factors $\myO{\mypoly{n}}$ in $\myO{\mypoly{n}}$ time.
Then we can use them to build block-encodings of $F_1$ and $F_2$ with rescaling factors $\myO{\mypoly{n}}$ in $\myO{\mypoly{n}}$ time. Moreover, suppose we have $T=\myO{\mypoly{n}}$, and
\begin{align}
\max_{t\in [0, T]} \norm{z(t)}=\myO{\mypoly{n}}, ~    
\min_{t\in [0, T]} \norm{z(t)}=\frac{1}{\myOmega{\mypoly{n}}}.
\end{align} 
Furthermore, assume the initial state $\ket{z(0)}$ can be prepared in $\myO{\mypoly{n}}$ time. 

Then by Theorem~\ref{thm:algorithm_complexity_decomposable_f2_case}, we can prepare a state that is $\myO{\frac{1}{\mypoly{n}}}$-close to $\ket{z(T)}$ in $\myO{\mypoly{n}}$ time. We can then estimate the target quantity
\[
\bra{z(T)}\Pi \ket{z(T)}=\frac{z(T)^\dagger \Pi z(T)}{ z(T)^\dagger z(T)}    
\]
within $\myO{\frac{1}{\mypoly{n}}}$ accuracy by performing the measurement $\lrcb{\Pi, I-\Pi}$ on the prepared state and repeating this procedure $\myO{\mypoly{n}}$ times. Alternatively, amplitude estimate can be used to reduce the number of repetitions quadratically. Assuming the measurement $\lrcb{\Pi, I-\Pi}$ can be implemented in $\myO{\mypoly{n}}$ time, this entire algorithm runs in $\myO{\mypoly{n}}$ time.

We summarize the above argument in the following result. To begin, we formally define the associated decision problem:

\begin{definition}[NLSE] 
Let $N \le K$ be arbitrary. Let $A \in \mathbb{C}^{N \times K}$ has singular value decomposition $A=U \Gamma W^\dagger$, where $\Gamma=\sum_{j=1}^N \gamma_j \ket{j}\bra{j}$ with $\gamma_j>0$ for all $j \in [N]$. Let $B \in \mathbb{C}^{N \times N^2}$ be arbitrary. Consider the nonlinear Schr\"{o}dinger equation:
\begin{align}
    \dot{z}(t)=-\i H z(t) + F_2 z(t)^{\otimes 2},
\end{align}
where $z(t) \in \mathbb{C}^{N+K}$, 
\begin{align}
H = -\begin{pmatrix}
0 & A \\
A^{\dagger} & 0
\end{pmatrix}, ~
F_2 =\begin{pmatrix}
    C & 0 & 0 & D \\
    0 & E & F & 0
\end{pmatrix},    
\end{align}
with 
\begin{align}
    C&=UB(U^\dagger \otimes U^\dagger),\quad 
    D=UB(W^\dagger \otimes W^\dagger), \\
    E&=WB(U^\dagger \otimes W^\dagger),\quad
    F=WB(W^\dagger \otimes U^\dagger).
\end{align}
Suppose we are given access to block-encodings of $A$, $C$, $D$, $E$, $F$ and to a unitary $\mathcal{W}$ that prepares the normalized initial state $\ket{z(0)}$. Given a projector $\Pi \in \mathbb{C}^{(N+K) \times (N+K)}$, a time $T>0$ and real numbers $0<a<b\le 1$, the goal is to decide whether 
$$\frac{z(T)^\dagger\Pi z(T)}{z(T)^\dagger z(T)}<a ~~~\mathrm{or}~~~\frac{z(T)^\dagger\Pi z(T)}{z(T)^\dagger z(T)}>b,$$
promised that one of these holds.
    
\end{definition}

\begin{definition}[Poly-NLSE]
The setup follows that of NLSE, with additional constraints on the input encodings and parameter settings. Specifically, we require that $N=2^n$, and
\begin{align}
T, \max_{t\in [0, T]} \norm{z(t)}
&=\myO{\mypoly{n}}, \\    
b-a, \min_{t\in [0, T]} \norm{z(t)}&=\frac{1}{\myOmega{\mypoly{n}}}.
\end{align}

Suppose $\lrb{\gamma_j:~j\in [N]}$ is $\Delta$-nonresonant, where $\Delta=\myO{\mypoly{n} \cdot \operatorname{cs}(B)}$. Let $\rho=\myO{\mypoly{n}}$ be a known upper bound on $\max_{t \in [0, T]} \norm{z(t)}$. Assume that
\begin{align}
    8 \sqrt{2} \rho \cdot \operatorname{cs}(B)\cdot \norm{B} \le c\Delta
\end{align}
for some constant $c \in (0, 1/2)$.

Finally, suppose block-encodings of $A$, $C$, $D$, $E$ and $F$ with resacling factors $\myO{\mypoly{n}}$ can be implemented by quantum circuits of $\myO{\mypoly{n}}$ sizes. Assume the measurement $\lrcb{\Pi, I-\Pi}$ and the unitary $\mathcal{W}$ that prepares $\ket{z(0)}$ can be  also implemented using $\myO{\mypoly{n}}$ elementary gates.
\end{definition}

Then the above argument establishes the following result:
\begin{thm}
Poly-NLSE is BQP-complete.    
\end{thm}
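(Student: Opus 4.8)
The proof of BQP-completeness for Poly-NLSE proceeds along the same two-part structure as the proof for Poly-CNOP: establishing BQP-hardness by reduction from a known hard problem, and establishing membership in BQP by exhibiting an efficient quantum algorithm. The plan is as follows.

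\textbf{BQP-hardness.} For the hardness direction, I would exhibit Poly-CLOP as a special case of Poly-NLSE. Recall from Section~\ref{subsec:nonlinear_schrodinger_equations} that the Schr\"{o}dinger equation \eqref{eq:embedding_linear_oscillator_schrodinger_equation} encoding the dynamics of coupled linear oscillators is precisely Eq.~\eqref{eq:quadratic_nonlinear_schrodinger_equation} with $A=\bar{B}_G$ and $B=0$ (so that $C=D=E=F=0$), $z(0)=\ket{\psi(0)}$, and $\Pi=\Pi_V=\ket{0}\bra{0}\otimes \sum_{j\in V}\ket{j}\bra{j}$. With $B=0$ the nonresonance condition is vacuous and the constraint $8\sqrt{2}\rho\cdot\operatorname{cs}(B)\cdot\norm{B}\le c\Delta$ is trivially satisfied. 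I would then verify that the restricted BQP-hard instances of Poly-CLOP from Ref.~\cite{babbush2023exponential} -- with $\kappa_{i,i}>0$ for all $i$ (so $\bar{B}_G$ has full rank $N$, as noted in the text), $d\le 4$, $\kappa_{\rm max}\le 4$, $m_{\rm max}=m_{\rm min}=1$, and the stated initial conditions -- satisfy all the polynomial-scaling promises of Poly-NLSE: $T=\myO{\mypoly n}$, $\max_t\norm{z(t)}=\sqrt{2E}=\myO{\mypoly n}$, $\min_t\norm{z(t)}=\sqrt{2E}=\myOmega{1/\mypoly n}$ (since $\norm{z(t)}$ is constant in the linear case by energy conservation), $b-a=\myOmega{1/\mypoly n}$, and that the oracles for $A=\bar B_G$, $C=D=E=F=0$, $\mathcal W$ and the measurement $\{\Pi_V,I-\Pi_V\}$ are all implementable by $\mypoly n$-size circuits. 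Since Poly-CLOP is BQP-hard, so is Poly-NLSE.

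\textbf{Membership in BQP.} For the containment direction, I would follow the algorithm sketched in Section~\ref{subsec:nonlinear_schrodinger_equations} verbatim and supply the cost accounting. The key structural input is that under the hypothesis $C=UB(U^\dagger\otimes U^\dagger)$, $D=UB(W^\dagger\otimes W^\dagger)$, $E=WB(U^\dagger\otimes W^\dagger)$, $F=WB(W^\dagger\otimes U^\dagger)$, the conjugated nonlinearity $\tilde F_2=Q^\dagger F_2 Q^{\otimes 2}$ takes the block form $\ket{+}\bra{+}\bra{+}\otimes\tilde C + \ket{+}\bra{-}\bra{-}\otimes\tilde D + \ket{-}\bra{+}\bra{-}\otimes\tilde E + \ket{-}\bra{-}\bra{+}\otimes\tilde F$ with $\tilde C=\tilde D=\tilde E=\tilde F = U^\dagger UB(U^\dagger U\otimes U^\dagger U)=B$ (after the change of variables identifying the $+$ and $-$ sectors), so that $\tilde F_2 = P_+\tilde F_2(P_+\otimes P_+)+P_-\tilde F_2(P_-\otimes P_-)$ with $P_\pm=\ket{\pm}\bra{\pm}\otimes I$ -- exactly Assumption~\ref{assump:block_diagonal_f2}. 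Since $F_1=-\i H$ is anti-Hermitian, all its eigenvalues $\pm\i\gamma_j$ are purely imaginary; the partition into $S_+=\{+_j\}$ and $S_-=\{-_j\}$ places each sector on one half of the imaginary axis, and $\Delta$-nonresonance of $(\gamma_j)$ gives $\Delta$-nonresonance of each of $(\lambda_j:j\in S_+)$ and $(\lambda_j:j\in S_-)$ via Lemma~\ref{lem:delta_constant_diagonal_f2}. Also $\norm{\tilde F_2}=\sqrt 2\norm{\tilde C}=\sqrt 2\norm{B}$ and $\operatorname{cs}(\tilde F_2)=\operatorname{cs}(B)$, so the Poly-NLSE hypothesis $8\sqrt 2\rho\cdot\operatorname{cs}(B)\cdot\norm{B}\le c\Delta$ yields $8\rho\cdot\operatorname{cs}(\tilde F_2)\cdot\norm{\tilde F_2}\le c\Delta$ -- precisely the hypothesis of Theorem~\ref{thm:algorithm_complexity_decomposable_f2_case}. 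Since $Q$ is unitary, $\kappa_Q=1$, so the $2^k\kappa_Q^k$ factor in that theorem is just $2^k$ with $k=\myO{\log(gT\Delta/(s\epsilon))}=\myO{\log(\mypoly n)}=\myO{\log n}$ under the polynomial promises, giving $2^k=\myO{\mypoly n}$. Block-encodings of $F_1$ (from the block-encoding of $A$) and $F_2$ (by LCU from block-encodings of $C,D,E,F$) have $\myO{\mypoly n}$ rescaling factors and $\myO{\mypoly n}$ gate cost; with $g=\max_t\norm{z(t)}/\min_t\norm{z(t)}=\myO{\mypoly n}$ and $T=\myO{\mypoly n}$ and target precision $\epsilon=\myTheta{1/\mypoly n}$ (sufficient because $b-a=\myOmega{1/\mypoly n}$), Theorem~\ref{thm:algorithm_complexity_decomposable_f2_case} prepares a state $\myO{1/\mypoly n}$-close to $\ket{z(T)}$ using $\myO{\mypoly n}$ queries and gates. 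Finally, estimating $\bra{z(T)}\Pi\ket{z(T)}$ to additive error $\myTheta{1/\mypoly n}$ requires $\myO{\mypoly n}$ repetitions of the measurement $\{\Pi,I-\Pi\}$ (or quadratically fewer with amplitude estimation), each costing $\myO{\mypoly n}$ gates. Hence Poly-NLSE is in BQP, and combined with the hardness direction, Poly-NLSE is BQP-complete.

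\textbf{Main obstacle.} The routine cost-accounting is straightforward given the earlier theorems; the one point demanding genuine care is verifying that the constraints imposed on $C,D,E,F$ in the \emph{definition} of NLSE (namely their being the four conjugates of a single matrix $B$) correctly reproduce the block-diagonality condition of Assumption~\ref{assump:block_diagonal_f2} after the similarity transformation by the specific unitary $Q$ built from the singular vectors of $A$ -- i.e.\ that the change of basis mixing the $\ket 0,\ket 1$ ancilla into $\ket\pm$ together with the $U,W$ rotations on the second register really does send $\tilde C=\tilde D=\tilde E=\tilde F$, and that this common value has the claimed sparsity and norm controlled by $B$. This bookkeeping (already carried out in the text preceding the NLSE definition) is the crux; once it is in place, everything else is an instantiation of Theorem~\ref{thm:algorithm_complexity_decomposable_f2_case} and the measurement/amplitude-estimation wrapper used in the Poly-CNOP proof.
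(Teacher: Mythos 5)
Your proposal is correct and follows essentially the same route as the paper: hardness by observing that the $B=0$ case recovers the BQP-hard Poly-CLOP instances (with $\kappa_{i,i}>0$ guaranteeing full rank of $\bar{B}_G$), and containment by verifying that $\tilde{C}=\tilde{D}=\tilde{E}=\tilde{F}=B$ reproduces Assumption~\ref{assump:block_diagonal_f2} so that Theorem~\ref{thm:algorithm_complexity_decomposable_f2_case} applies with $\kappa_Q=1$, followed by the standard measurement/amplitude-estimation wrapper. The only cosmetic quibble is that the transfer of $\Delta$-nonresonance from $(\gamma_j)$ to $(\pm\i\gamma_j)$ follows directly from $\abs{\i z}=\abs{z}$ rather than from Lemma~\ref{lem:delta_constant_diagonal_f2}, which only asserts existence of some gap; this does not affect the argument.
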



\subsection{Robustness of BQP-completeness result}
\label{subsec:bqphard}

In the previous subsections, we presented two families of nonlinear BQP-complete problems. To prove their BQP-hardness, we relied on the fact that these problems encompass the linear systems shown to be BQP-hard in Ref.~\cite{babbush2023exponential}, which involve simulating exponentially many coupled harmonic oscillators. It is also known that small dissipative perturbations preserve BQP-completeness~\cite{krovi2024quantum}.

A natural question is whether BQP-completeness still holds when nonlinearity is introduced directly into the original setting of Ref.~\cite{babbush2023exponential}. In our constructions, the nonlinear terms are tunable, and the original linear system is recovered when these terms are set to zero. BQP-hardness then follows from the fact that the linear instance is explicitly contained within the problem family. However, one may ask whether BQP-hardness persists in settings where the problems are inherently nonlinear -- that is, when the linear system is not explicitly embedded.

The theorem below addresses this question through a different approach. Rather than relying on the presence of a BQP-hard linear subsystem, it considers nonlinear systems obtained by slightly perturbing the BQP-hard linear system, and shows that BQP-hardness is preserved by demonstrating that the solution to the perturbed system remains sufficiently close to that of the original one. Moreover, these nonlinear problems can still be efficiently solved using our Carleman analysis for conservative systems, in combination with existing quantum algorithms for linear differential equations.

\begin{thm}
\label{thm:bqp}
    Consider the following dynamical system
\begin{align}
\label{eq:bqpproblem}
    \ddot{q} + R \dot{q} + D q + W \dot{q} \otimes \dot{q} =0, 
\end{align}
with:
\begin{itemize}
    \item Linear couplings: $D$ is an $N \times N$ matrix which is (a) symmetric, (b) {at most $\Theta(1)$-sparse, (c) its eigenvalues $\lambda_{i}(D)$ satisfy $ 0< \lambda_{\mathrm{min}} \leq \lambda_{i}(D) = O(1)$.}
    \item Initial condition: $q(0) = 0$ and $\dot{q}(0) = (1,-1,0,\dots, 0)$.
\end{itemize}  

Then, there exists a choice of $R$ and $W$ such that:
\begin{enumerate}
    \item $R$ is an $N \times N$ matrix with support on  $N' \leq N$ oscillators and $\|R\| = \Theta(1/\mathrm{polylog}(N))$;
    \item $W$ is an $N \times N^2$ matrix with $\|W\| = \Theta(1/\mypolylog{N})$, supported on the same oscillators on which $R$ is supported;
\end{enumerate}
such that the problem of determining whether 
\begin{align}
 \frac{1}{2} \dot{q}_1(t_\star)^2, 
\end{align}
is at least $1/\mathrm{polylog}(N)$ or at most $1/e^{\sqrt{\log N}}$ after time $t_\star= \mypolylog{N}$, promised that one of these holds, is BQP-complete.
\end{thm}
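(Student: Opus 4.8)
The plan is to reduce the nonlinear problem \eqref{eq:bqpproblem} to the BQP-hard linear oscillator problem of Ref.~\cite{babbush2023exponential} by exhibiting it as a small, controlled perturbation of a BQP-hard linear instance, and then to argue (i) that the perturbed solution stays close enough to the linear one that the promise gap is preserved (BQP-hardness), and (ii) that the perturbed problem is itself in BQP via the conservative Carleman machinery of Theorem~\ref{thm:Carleman_cons} combined with the quantum algorithm of Theorem~\ref{thm:algorithmstable} (containment). First I would fix $R=0$ and $W=0$ and recall that, by Ref.~\cite{babbush2023exponential} (and the wall-spring normalization with $\kappa_{i,i}>0$ ensuring $D$ is positive definite with $\lambda_{\min}>0$), the pure linear system $\ddot q + Dq = 0$ with $q(0)=0$, $\dot q(0)=(1,-1,0,\dots,0)$ is BQP-hard for determining whether $\tfrac12\dot q_1(t_\star)^2$ is above $1/\mathrm{polylog}(N)$ or below $e^{-\sqrt{\log N}}$ at $t_\star=\mathrm{polylog}(N)$ (after rescaling $D$ so $\|D\|=O(1)$ and noting the total energy $E$ is $\Theta(1)$, so the ratio $K_V/E$ and the bare kinetic energy $\tfrac12\dot q_1^2$ differ by a known $\Theta(1)$ factor). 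The point of $R$ and $W$ with norms $\Theta(1/\mathrm{polylog}(N))$ is that they are large enough to make the problem genuinely nonlinear and genuinely dissipative (so it is \emph{not} the literal BQP-hard linear instance embedded at zero coupling), yet small enough that over the short horizon $t_\star=\mathrm{polylog}(N)$ they perturb $\dot q_1(t_\star)$ by $o(1)$ relative to the promise gap.

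The core estimate is a Grönwall-type perturbation bound. Writing the linear solution as $q_0(t)$ and the perturbed solution as $q(t)=q_0(t)+\delta(t)$, I would cast both in first-order form $x=(q,\dot q)$, so $\dot x_0 = F_1^{(0)} x_0$ and $\dot x = F_1 x + F_2 x^{\otimes 2}$ with $F_1 - F_1^{(0)}$ carrying the $R$-block (norm $\Theta(1/\mathrm{polylog}(N))$) and $F_2$ carrying the $W$-block (norm $\Theta(1/\mathrm{polylog}(N))$). Since $F_1^{(0)}$ has purely imaginary spectrum, $\|e^{F_1^{(0)} t}\|$ is bounded uniformly (the linear oscillator dynamics conserves energy, so the flow is norm-nonexpansive in the energy norm, giving $\|e^{F_1^{(0)}t}\|=O(1)$ up to the condition number of the change to the energy coordinates, which is $\Theta(1)$ here). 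Duhamel's formula then gives $\|\delta(t)\| \le \int_0^t \|e^{F_1^{(0)}(t-\tau)}\|\,\big(\|R\|\,\|\dot q(\tau)\| + \|W\|\,\|\dot q(\tau)\|^2\big)\,d\tau$, and since $\|x(\tau)\|$ is itself $O(1)$ on $[0,t_\star]$ (a short bootstrap: the perturbations are dissipative/small so the energy cannot grow, which I would make rigorous by the same Grönwall argument or by noting $R\ge 0$), one gets $\|\delta(t_\star)\| = O(t_\star\,(\|R\|+\|W\|)) = O(\mathrm{polylog}(N)\cdot 1/\mathrm{polylog}(N))$. Choosing the implied constants in $\|R\|,\|W\|=\Theta(1/\mathrm{polylog}(N))$ with a sufficiently large power of $\log N$ in the denominator makes $\|\delta(t_\star)\|$ smaller than, say, one quarter of the linear promise gap $1/\mathrm{polylog}(N)$, and in particular much larger than $e^{-\sqrt{\log N}}$; hence the YES/NO answer for $\tfrac12\dot q_1(t_\star)^2$ is unchanged. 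This establishes BQP-hardness.

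For BQP-membership I would verify that the perturbed system falls under Theorem~\ref{thm:Carleman_cons} (or Corollary~\ref{cor:Carleman_cons_extension}) via Remark~\ref{rmk:oscillating}: with $R\ne 0$ the linear part $F_1$ has eigenvalues with strictly negative real parts of magnitude $\Theta(1/\mathrm{polylog}(N))$ except possibly on the subspace untouched by $R$, where they are purely imaginary; on that imaginary subspace $F_2$ must annihilate the corresponding left-eigenvectors, which I would arrange by supporting $W$ only on the $N'$ oscillators where $R$ acts, so the marginally-stable directions carry no nonlinearity. The diagonalizing matrix $Q$ of $F_1$ has condition number $\kappa_Q=\mathrm{poly}(\mathrm{polylog}(N))=\mathrm{polylog}(N)$ (perturbation of an orthogonally diagonalizable $F_1^{(0)}$ by a $\Theta(1/\mathrm{polylog}(N))$ term, with eigenvalue gaps $\Omega(1/\mathrm{polylog}(N))$ coming from $\|R\|$), and $\|\tilde x_{\max}\|=O(\mathrm{polylog}(N))$, while $\delta(F_1)=\Theta(1/\mathrm{polylog}(N))$, so the $R_\delta<1$ condition holds once $\|W\|$ is a small enough $\Theta(1/\mathrm{polylog}(N))$; then Theorem~\ref{thm:Carleman_cons} gives Carleman convergence at truncation order $k=\mathrm{polylog}(N)$, and Theorem~\ref{thm:algorithmstable} (Remark on conservative systems) solves the truncated linear ODE in $\mathrm{poly}(n)$ time given the $\Theta(1)$-sparse, efficiently block-encodable $D$, $R$, $W$; finally one reads off $\tfrac12\dot q_1(t_\star)^2$ by projecting onto the first kinetic coordinate and applying amplitude estimation. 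The main obstacle I anticipate is controlling $\kappa_Q$ and $\|\tilde x_{\max}\|$ simultaneously with the smallness of $R_\delta$: shrinking $\|R\|$ hurts $\delta(F_1)$ and the spectral gaps (inflating $\kappa_Q$), while shrinking $\|W\|$ helps $R_\delta$ but both must scale as $\Theta(1/\mathrm{polylog}(N))$ with mutually compatible constants; I would resolve this by picking $\|W\|$ polynomially smaller in $\log N$ than $\|R\|$ — both still $\Theta(1/\mathrm{polylog}(N))$ as required — so that $R_\delta = O(\|W\|\,\kappa_Q^2\,\|\tilde x_{\max}\|/\delta(F_1)) = O(\|W\|\cdot\mathrm{polylog}(N)) < 1$, and the perturbation bound of the previous paragraph continues to hold a fortiori.
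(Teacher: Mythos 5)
Your overall strategy is the same as the paper's: treat \eqref{eq:bqpproblem} as a small perturbation of the BQP-hard linear oscillator instance of Ref.~\cite{babbush2023exponential}, prove hardness by a Duhamel/Gr\"onwall bound showing $\|x(t_\star)-x_{\mathrm{L}}(t_\star)\|=O(t_\star(\|R\|+\|W\|))$ stays below the promise gap, and prove membership via the conservative Carleman theorem (with $W$ co-supported with $R$ so that the marginally stable left-eigenvectors of $F_1$ annihilate $F_2$, exactly as Remark~\ref{rmk:oscillating} requires) followed by a quantum linear ODE solver. The hardness half is essentially correct. The membership half, however, contains two quantitative missteps that, as stated, prevent the complexity accounting from closing. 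First, you estimate $\kappa_Q=\mathrm{polylog}(N)$ by dividing the perturbation size $\Theta(1/\mathrm{polylog}(N))$ by a spectral gap you also take to be $\Theta(1/\mathrm{polylog}(N))$; but the relevant gap is the separation between the two eigenvalues $\tfrac12(-r\pm\sqrt{r^2-4d_i})$ of each decoupled $2\times2$ oscillator block, which is $\approx 2\sqrt{d_i}=\Theta(1)$, so the block diagonalizers have condition number $1+O(r/\sqrt{d_{\min}})=1+O(1/\mathrm{polylog}(N))$ and $\kappa_Q=O(1)$. Second, the conservative error bound decays like $R_\delta^{k+1}$ with $R_\delta\le 1/2$, so reaching accuracy $\epsilon=1/\mathrm{polylog}(N)$ needs only $k=O(\log(1/\epsilon))=O(\log\log N)$, not $k=\mathrm{polylog}(N)$. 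These matter because the solver's cost is governed by $\max_t\|e^{At}\|\lesssim \kappa_Q^{k}$: with the correct values this is $(1+o(1))^{O(\log\log N)}=\mathrm{polylog}(N)$, whereas with your values $\kappa_Q^{k}=\mathrm{polylog}(N)^{\mathrm{polylog}(N)}$ is superpolynomial in $n=\log N$, and the remark extending Theorem~\ref{thm:algorithmstable} to conservative systems does not rescue this since it explicitly warns of extra $T$-dependence through the condition number. So you need the sharp $O(1)$ bound on $\kappa_Q$, not just ``some polylog.''

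A second, smaller gap is the solution-norm bootstrap. You assert $\|x(\tau)\|=O(1)$ on $[0,t_\star]$ ``by noting $R\ge 0$,'' but $R\ge 0$ only controls the linear friction; the cubic power $-c\,w\sum_i \dot q_i'^3$ injected by $W$ has no sign and can pump energy. One must either exploit the co-support of $W$ and $R$ and require $|w|$ small relative to $r$ so that $\tfrac{d}{dt}\|x\|^2=-2\sum_i(cw\dot q_i'+r)\dot q_i'^2\le 0$ (this is why the paper ties $|w|$ to $r$ by an explicit constant), or run a Riccati-type comparison $\dot u\le \|R\|u+c\|W\|u^2$ and enforce $\|W\|\,t_\star\ll 1$; your Gr\"onwall remark gestures at the latter but does not impose the needed relation between $\|W\|$, $t_\star$, and $\|x(0)\|$. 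Both fixes are compatible with $\|R\|,\|W\|=\Theta(1/\mathrm{polylog}(N))$, so the theorem survives, but the step as written is not yet a proof.
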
 

The proof can be found in Appendix~\ref{sec:proofbqpcompleteness}.


\section{Outlook}
\label{sec:outlook}

This work advances the theoretical foundations for quantum simulation of nonlinear dynamical systems by rigorously establishing the convergence of Carleman linearization under broad and physically meaningful conditions, and then analyzing how to construct quantum algorithms that output encodings of these dynamics. 
In particular, it substantially expands the class of nonlinear dynamics amenable to efficient quantum algorithms, extending beyond purely dissipative systems with negative log-norm to encompass general stable, conservative, and nonresonant dynamics under well-defined constraints. Along the way, we identified and addressed technical gaps present in previous works. We further establish the BQP-completeness of a class of nonlinear oscillator systems and demonstrate that the exponential quantum speedup available for certain linear oscillators can be robustly extended to the nonlinear regime.

While we have made several contributions in our analysis, important directions remain open. Our current convergence theory does not fully extend to general non-autonomous systems~\cite{wiggins2003introduction}, and developing a robust Carleman framework for such systems remains a valuable challenge, both for theoretical understanding and practical implementation. Addressing this may require exploiting variants of the Carleman method tailored to the problem at hand~\cite{kowalski1991nonlinear}, and much work remains. Notably, our results do not yet cover a range of natural nonlinear problems, such as in the context of nonlinear Hamiltonian dynamics~\cite{arnol2013mathematical}, and so it is important to extend our techniques to cover such systems. 

Several technical aspects of our results may also be refined. For instance, can the convergence condition for Carleman linearization of nonresonant systems in the Poincar\'{e} domain be strengthened? In particular, is the dependence on the column sparsity of $\tilde{F}_2=Q^{-1}F_2Q^{\otimes 2}$ truly necessary? Furthermore, the quantum algorithms for nonresonant (and certain resonant) differential equations are efficient only when the matrix $Q$ that diagonalizes $F_1$ has condition number $\myO{1}$. Extending these algorithms to cases where $F_1$ is far from normal, and understanding the impact of non-normality on quantum performance, remains an open challenge. In addition, although we establish convergence for nonresonant systems in the Siegel domain in certain special cases, the general case remains unresolved.

Finally, our results provide tools to construct a broader bridge between theory and applications linking the Carleman approach to real-world models where nonlinearities are both ubiquitous and computationally challenging.  These may include fluid dynamics~\cite{batchelor2000introduction, pozrikidis2009fluid}, chemical reaction networks~\cite{feinberg2019foundations}, plasma physics~\cite{tajima2018computational}, or complex ecological and neural systems~\cite{meron2015nonlinear}. Establishing such connections will be essential for demonstrating practical quantum advantage and guiding the development of quantum algorithms that address the structure and constraints of real-world nonlinear dynamics. In this context, it will be important to explore whether quantum algorithms for nonlinear differential equations can be tailored to the constraints of early generations of fault-tolerant quantum computers \cite{lin2022heisenberg, dong2022groundstate, zhang2022computingground, wang2022statepreparation, wang2023quantumalgorithm, katabarwa2024early, wang2025efficient}. Such devices will likely access limited computational resources, necessitating highly resource-efficient algorithms. Identifying low-depth methods that leverage partial Carleman truncations or problem-specific structure could provide a practical path toward early demonstrations of quantum advantage in nonlinear dynamics.

\bigskip

{\textbf{Authors contributions:} Authors are listed alphabetically. DJ helped develop the research direction connecting with Normal Forms, and developed the polynomial extension to conserved linear observables in Section~\ref{sec:conservative}. KK developed the theory behind conservative systems, derived most of the results in Section~\ref{sec:conservative}, and helped proving the robustness of BQP-completeness result and designing examples for stable systems. ML developed the theory for stable systems in Sections~\ref{sec:stable}, quantum algorithms for stable and conservative systems in Sec.~\ref{subsec:quantum_algorithm_stable_conservative_systems}, and the BQP-completeness results in Section~\ref{subsec:bqphard}. GW developed the framework for diagonalizing nonresonant Carleman matrices and derived the error bounds based on this framework in Section~\ref{sec:nonresonant_systems}. He also designed quantum algorithms for nonresonant and certain resonant systems in Section \ref{subsec:quantum_algorithm_nonresonant_systems}, proved the BQP-completeness results in Sections \ref{subsec:coupled_oscillators} and \ref{subsec:nonlinear_schrodinger_equations}, and contributed to the proofs of Lemmas \ref{lem:fusion_sums} and \ref{lem:fusion_sums_general} in Appendix~\ref{app:thm_cons}. DJ, KK, ML, and GW wrote the paper and YS contributed to structuring and reviewing the article. ATS helped structure the research direction to NNSA mission needs and reviewed the article.
\bigskip

{\textbf{Acknowledgements:} ATS and YS are supported by the NNSA’s Advanced
Simulation and Computing Beyond Moore’s Law Program at Los
Alamos National Laboratory. Los Alamos
National Laboratory is operated by Triad National Security, LLC, for
the National Nuclear Security Administration of the U.S.
Department of Energy (Contract No. 89233218CNA000001). Many thanks to Bjorn Berntson for useful comments on a draft of this work. DJ, KK, ML, and GW are thankful for collaboration with our colleagues at PsiQuantum on other related topics.}

\newpage
\appendix


\section{Issues with some of the proofs in previous works}
\label{app:issue_with_previous_proofs}
Here we point out errors we identified in previous proofs, which we believe help shed light on some of the underlying complexities of the topic. Despite our best efforts, some  proofs or statements in the present work may also contain mistakes. We warmly welcome corrections!  :)


\subsection{Issue with the proof in \texorpdfstring{Ref.~\cite{liu2021efficient}}{arXiv:2011.03185}}
\label{app:issueliu}

Lemma 2 of Ref.~\cite{liu2021efficient}, is the first result providing analytical guarantees on the convergence of the Carleman method used in a quantum algorithm. We showed in Section~\ref{sec:solutionnormbounds} that the proof require solution norm bounds that implicitly assume the system has negative log-norm, $\mu(F_1)<0$. Here we hence assume this stronger assumption has been made.

We note that [Eq.~(4.17) \cite{liu2021efficient}] claims that for all $j=1,\dots,k$ the error vectors $\eta_j$ defined in Eq.~\eqref{eq:errorvectorcomponents} satisfy
\begin{align}
   \eta^\dag_j (A_{j,j} + (A_{j,j})^\dag) \eta_j \leq 2j \alpha(F_1) \|\eta\|^2.
\end{align}
We have already discussed in Remark~\ref{issue1liu} that this inequality does not hold for the case $j=1$, so clearly it does not hold for all $j=1,\dots,k$. This inequality can be replaced with the weaker inequality 
\begin{align}
   \eta^\dag_j (A_{j,j} + (A_{j,j})^\dag) \eta_j \leq 2j \mu(F_1) \|\eta\|^2,
\end{align}
which then forces us to invoke $\mu(F_1)<0$ in the rest of the proof, which is what we shall do. 

Now we describe the core proof idea in Ref.~\cite{liu2021efficient}. A rescaling $\gamma$ is fixed such that $\mu(F_1) + \| F_0\| + \| F_2\| < 0$ and  $\|x(0)\|<1$. With this, it is claimed that
\begin{align}
\label{eq:coreinequality}
    \alpha(A) <0.
\end{align}
In Ref.~\cite{liu2021efficient}, \eqref{eq:coreinequality} is proved in two steps. Firstly, the bound is derived [Eq.~(4.18) \cite{liu2021efficient}]
\begin{align}
   \eta^\dag (A + A^\dag)\eta \leq  \eta_G^\dag (G + G^\dag) \eta_G.
\end{align}
where $\eta_G = ( \|\eta_1\|,  \|\eta_2\|, \dots, )$, and the matrix $G$ is defined by $G_{j-1,j} = j\| F_0\|$ for $j=2,\dots, k$, $G_{j,j} = j \mu(F_1)$ for $j=1,\dots, k$, $G_{j+1,j} = j\|F_2\|$ for $j=1,\dots, k-1$, and all other elements equal to zero. 
Secondly, it is argued that [Eq.~(4.19) \cite{liu2021efficient}]
\begin{align}
\label{eq:incorrectbound}
   \eta_G^\dag (G + G^\dag) \eta_G \leq 2 \alpha(G) \| \eta_G\|^2. 
\end{align}
The core inequality \eqref{eq:coreinequality}  then follows by proving that $\alpha(G)<0$.

Now, the chain of implications is broken by the fact that \eqref{eq:incorrectbound} does not hold. This is not immediately a consequence of Remark~\ref{issue1liu}, since the matrix $G$ has nontrivial structure to it. Nonetheless, simple counterexamples can be constructed. Take $k=2$, $\| F_0\| = 1/2$, $\|F_2\| = 1/4$, $\mu(F_1) =-1$, which satisfies $\mu(F_1) + \| F_0\| + \| F_2\| <0$. With these choices, $G = \big[\begin{smallmatrix}
  -1 & 1\\
  1/4 & -2
\end{smallmatrix}\big]$. Take \mbox{$\eta_G = [(4+\sqrt{41})/5,1]^T $}. Then we have
\begin{align}
\nonumber
\eta_G^\dag (G+G^\dag) \eta_G & = \frac{-328-7\sqrt{41}}{50}  > 2 \alpha(G) \| \eta_G\|^2  = \frac{2}{25} \left(\sqrt{2}-3\right) \left(4 \sqrt{41}+41\right).
\nonumber
\end{align}
The inequality~\eqref{eq:incorrectbound} can be corrected as
\begin{align}
   \eta_G^\dag (G + G^\dag) \eta_G \leq 2 \mu(G) \| \eta_G\|^2,
\end{align}
which holds by definition of log-norm. For the Carleman convergence proof to go through with these corrections, we then need to prove
\begin{align}
    \mu(A)<0.
\end{align}
instead of~\eqref{eq:coreinequality}, from the assumption $\mu(F_1) + \| F_0\| + \| F_2\| <0$. However this also does not hold, as shown by means of a counterexample in Section~\ref{sec:comparisons}.


\subsection{Issue with the proof in \texorpdfstring{Ref.~\cite{krovi2023improved}}{arXiv:2202.01054}}
\label{app:issuekrovi}

A second proof that $\mu(F_1) + \| F_0\| + \| F_2\| <0$ implies $\mu(A)<0$ was given in Ref.~\cite{krovi2023improved}, see Lemma~16 from Eq.~(7.17). The critical problem with the proof is Eq.~(7.19), which claims
\begin{align}
    \mu(H_1) = k \mu(F_1),
\end{align}
with  $H_1$ defined as in Eq.~\eqref{eq:CarlemannMatrix2}, namely $H_1 = \sum_{j=1}^{k} A_{j,j}$, where recall that each $A_{j,j}$ acts on a distinct space of dimension $d^j$. An important comment about this equation is that, while a tensor product notation has been used in Ref.~\cite{krovi2023improved}, 
\begin{align}
    H_1 = \sum_j \ketbra{j}{j} \otimes A_{j,j},
\end{align}
formally $H_1$ does not have this form, since each $A_{j,j}$ acts on a vector space of a different dimension.  In fact, as one can see from Eq.~(7.3) of Ref.~\cite{krovi2023improved}, the above notation is used as a shortcut for matrices of the form (example for $k =3)$
\begin{align}
    H_1 = \begin{bmatrix} 
    A_{1,1} & 0 & 0  \\ 
    0 & A_{2,2} & 0  \\
    0 & 0 & A_{3,3}
    \end{bmatrix} = \bigoplus_{j=1}^{k} A_{j,j},
\end{align}
where each $A_{j,j}$ has a different dimension. 

With these comments in place, we shall now see that
\begin{align}
    \mu(H_1) = \mu(F_1).
\end{align}

In fact, denoting by $\lambda_1(X)$ the largest eigenvalue of a matrix $X$
    \begin{align}
    \mu(H_1):=\mu \left ( \oplus_{j=1}^{k} A_{j,j}\right) &:= \lambda_1 \left ( \oplus_{j=1}^{k} \frac{1}{2}(A_{j,j}+(A_{j,j})^\dagger)\right) \nonumber 
    = \max_{1\le j \le k} \lambda_1 \left (  \frac{1}{2}A_{j,j}+(A_{j,j})^\dagger \right) \nonumber \\
& = \max_{1\le j \le k} \mu(A_{j,j})
= \max_{1\le j \le k} \mu(F_1 \otimes I^{\otimes (j-1)} + \mathrm{shifts})
 \nonumber  \\ & = \max_{1\le j \le k} \sum_{k=1}^j \mu(I^{\otimes (k-1)} \otimes F_1 \otimes I^{\otimes (j-k)})
     =  \max_{1\le j \le k} j\mu \left ( F_1\right)
          = \mu \left ( F_1\right).
\end{align}
For an explicit example, take $F_1 =  \left[\begin{smallmatrix}
 -1 & 1/2 \\ 1/4 & -1
\end{smallmatrix}\right]$ and $k =2$. Then,
\begin{align}
    H_1 =  \left[\begin{smallmatrix}
 -1 & 1/2 & 0 & 0 &0 &0  \\ 
 1/4 & -1  & 0 & 0 &0 &0  \\
 0 & 0   & -2 & 1/2 & 1/2 & 0 \\
  0 & 0  & 1/4 & -5/2 & 0 & 1/2 \\
    0 & 0 & 1/4 & 0 & -5/2 & 1/2 \\
      0 & 0 & 0 & 1/4 & 1/4 & -3.
\end{smallmatrix}\right].
\end{align}
We have $\mu(A^1_1) = \mu(F_1) = -5/8$, $\mu(A^2_2) = -5/4 = 2 \times  \mu(A^1_1)$, but $\mu(H_1) = -5/8$. 

With this, the bound on $\mu(A)$ in Ref.~\cite{krovi2023improved} is weakened to
\begin{align}
    \mu(A) \leq \mu(F_1) + k \|F_0\| + k \| F_2\|,
\end{align}
which does not allow us to prove $\mu(A) <0$ independently of $k$. In fact, the condition $\mu(F_1) + \| F_2\| + \|F_0\| <0 $ only allows us to prove $ \mu(A) < 0$ for $k = 1$, and is hence insufficient to prove the convergence of Carleman errors.


\subsection{Issue with the proof in \texorpdfstring{Ref.~\cite{wu2024quantum}}{arXiv:2405.12714}}
\label{app:issuenonresonant}

There has been one prior paper on Carleman criteria for non-dissipative dynamics. The results are similar to those we obtain in Section~\ref{sec:nonresonant_systems}, however our analysis was more involved and required a different handling of the $V^{-1}$ blocks. One problem arises with the approximation of the binomial term $\binom{2j-k}{j-k}$ which in Ref.~\cite{wu2024quantum} is upper bounded with $(2e)^{j-k}$. This can be seen to be incorrect if, e.g., one sets $j-k=1$ so that $\binom{2j-k}{j-k}= k+2$, which grows with $k$, and so the bound used is not correct for $j-1=k\geq 4$. Similar violations occur for $j-k=c$ a constant, and $k$ increasing.

A second, more pressing reason for us using the binary forest formalism is that the norm-bound analysis of Ref.~\cite{wu2024quantum} for $V^{-1}$ leads to a $k!$ term that prohibits obtaining a Carleman convergence criterion. We can illustrate this with a one-dimensional system.

Consider the following differential equation:
\begin{align}
    \dot{x}(t) = a x(t) + b x(t)^2, 
\end{align}
where $x(t) \in \myC$, $a \in \myC\setminus \lrcb{0}$ with $\realpart{a} \le 0$, and $b \in \myC$. This corresponds to the case $N=1$, $F_1 = a$, and
$F_2 = b$. In this setting, we have $Q=1$, $s=1$ and $\Delta = \abs{a}$.

The order-$k$ Carleman linearization for this equation is
\begin{align}
\frac{d}{dt}\begin{pmatrix}
    y^{[1]} \\
    y^{[2]} \\
    y^{[3]} \\
    \vdots \\
    y^{[k-1]} \\
    y^{[k]}
    \end{pmatrix}=\begin{pmatrix}
    a & b & & & &\\
     & 2a & 2b & & &\\
            &  & 3a & 3b &       & \\
            &         & \ddots   & \ddots   & \ddots & \\
            &         &         &  & (k-1)a  & (k-1)b\\      
            &         &         &         &   & ka
    \end{pmatrix}\begin{pmatrix}
    y^{[1]} \\
    y^{[2]} \\
    y^{[3]} \\
    \vdots \\
    y^{[k-1]} \\
    y^{[k]}
    \end{pmatrix},      
\end{align}
where $y^{[i]}(t) \in \myC$ approximates $x(t)^i$ for $i \in [k]$.  In this case, the order-$k$ Carleman matrix $A$ satisfies $A_{j,j}=ja$ for $j \in [k]$ and $A_{j,j+1}=jb$ for $j \in [k-1]$. The initial conditions are given by $y^{[i]}(0)=x(0)^i$ for all $i \in [k]$. 

By Lemmas~\ref{lem:vij_norm_bound_poincare_domain}
and \ref{lem:wij_norm_bound_poincare_domain}, the matrix $V$ that diagonalizes $A$ satisfies
\begin{align}
\abs{V_{i,j}}, \abs{(V^{-1})_{i,j}} \le \binom{j-1}{i-1} \lrb{\frac{4\abs{b}}{\abs{a}}}^{j-i}, \quad \forall 1\le i\le j \le k. 
\label{eq:vij_v_inv_ij_abs_bounds_1d}
\end{align}

Furthermore, let $\abs{x_{\rm max}}=\max_{t \in [0,T]}\abs{x(t)}$. Then, by Theorem~\ref{thm:error_bound_nonresonant_poincare_domain}, we obtain the following error bound for any $i \in [k]$ and $t \in [0, T]$:
\begin{align}
    \abs{x(t)^{i}-y^{[i]}(t)} \le k t  \abs{x_{\rm max}}^{i+1} \abs{b} \binom{k-1}{i-1} \lrb{\frac{8 \abs{b}\abs{x_{\rm max}}}{\abs{a}}}^{k-i}.
\end{align}
In particular, for $i=1$, we have
\begin{align}
    \abs{x(t) -y^{[1]}(t)} \le k t  \abs{x_{\rm max}}^{2} \abs{b}  \lrb{\frac{8 \abs{b} \abs{x_{\rm max}}}{\abs{a}}}^{k-1}.
\end{align}
Thus, if $8\abs{b}\abs{x_{\rm max}}<\abs{a}$, then we have $|y^{[1]}(t) - x(t)| \to 0$ as $k \to +\infty$. This holds for all $t \ge 0$.

Note that in this setting, one may obtain an alternative upper bound on $\abs{(V^{-1})_{i,j}}$  using the identity: 
\begin{align}
({V}^{-1})_{i,j} = \sum_{p=1}^{j-i} \sum_{i<i_2<\dots<i_p<j} (-1)^p {V}_{i,i_2}{V}_{i_2,i_3}\dots {V}_{i_p,j}.
\label{eq:wij_in_terms_of_vij}
\end{align}
However, the resulting bound is significantly looser than the one provided by Lemma~\ref{lem:wij_norm_bound_poincare_domain} in the regime where $i \ll j$. Specifically, applying triangle inequality to Eq.~\eqref{eq:wij_in_terms_of_vij} yields
\begin{align}
\abs{({V}^{-1})_{i,j}} \le \sum_{p=1}^{j-i} \sum_{i<i_2<\dots<i_p<j} \abs{{V}_{i,i_2}}\abs{{V}_{i_2,i_3}}\dots \abs{{V}_{i_p,j}}.
\end{align}
This is the approach followed in Ref.~\cite{wu2024quantum}. The right-hand side of this equation is at least 
$$\abs{{V}_{i,i+1}}\abs{{V}_{i+1,i+2}}\dots \abs{{V}_{j-1,j}}.$$ 
Meanwhile, one can verify by direct calculation that
\begin{align}
V_{m,m+1}=\frac{mb}{a}, \qquad \forall m \in [k-1].
\end{align}
Thus, we obtain a lower bound of
$$\frac{(j-1)!}{(i-1)!} \cdot \lrb{\frac{\abs{b}}{\abs{a}}}^{j-i},$$
which is considerably larger than the bound from Lemma~\ref{lem:wij_norm_bound_poincare_domain},
$$\binom{j-1}{i-1}\lrb{\frac{4\abs{b}}{\abs{a}}}^{j-i},$$
when $j-i \gg 4e$. In particular, the best possible bound on $\abs{(V^{-1})_{1,k}}$ obtained via Eq.~\eqref{eq:wij_in_terms_of_vij}
is at least 
$$(k-1)! \lrb{\frac{\abs{b}}{\abs{a}}}^{k-1},$$
while Lemma~\ref{lem:wij_norm_bound_poincare_domain} gives a much tighter bound
$$ \lrb{\frac{4\abs{b}}{\abs{a}}}^{k-1}.$$
If we were to use the looser bound to estimate $\abs{x(t)-y^{[1]}(t)}$, we would not obtain anything better than
\begin{align}
    \abs{x(t)-y^{[1]}(t)} \le k! \cdot t \abs{x_{\rm max}}^{2} \abs{b} \lrb{\frac{\abs{b}\abs{x_{\rm max}}}{\abs{a}}}^{k-1},
\end{align}
which fails to yield convergence as $k \to +\infty$
regardless of the value of $b$ (unless $b=0$). 

This example illustrates the limitation of proof strategies that solely rely on the blockwise structure of $V$, and highlights the necessity of deriving blockwise expressions for $V^{-1}$, as we do in Section~\ref{subsec:expression_for_v_inv_ij_diagonal_f1}.


\section{Proofs of Lemma~\ref{lem:normdecreasegeneral} and Theorem~\ref{thm:Carlemanstable}}
\label{sec:proofCarlemanstable}

First we prove Lemma~\ref{lem:normdecreasegeneral}, and then proceed to the main result in Theorem~\ref{thm:Carlemanstable} for the special case where $P=I$, which means that $F_1$ has negative log-norm. The general proof will involve a reduction to this case. The basic strategy is to show that if $F_1$ has negative log-norm, then under a $R_\mu<1$ condition there is a rescaling for which (a) the Carleman matrix has negative log-norm (b) $\|x(0)\|<1$. These two conditions then imply that the error in the Carleman vector is exponentially suppressed with the truncation order~$k$. The result can then be extended to $P \neq I$ by an appropriate change of coordinates.


\subsection{Proof of Lemma~\ref{lem:normdecreasegeneral}}

\begin{proof}
 
Define the Lyapunov function $V(x):=\tfrac12\|x\|_P^2$. Along any solution
\begin{align*}
\dot V(t)& =\mathrm{Re} \langle P^{1/2} x(t), P^{1/2} \dot x(t)\rangle
            \\ & = \mathrm{Re} \langle P^{1/2} x(t), P^{1/2} F_2(x(t)\otimes x(t))\rangle
           + \mathrm{Re} \langle P^{1/2} x(t), P^{1/2} F_1 x(t)\rangle
           + \mathrm{Re} \langle P^{1/2} x(t), P^{1/2} F_0\rangle \\
           & \leq \|F_2\|_P \| x(t)\|_P^3 + \mu_P(F_1) \| x(t)\|_P^2 + \|F_0\|_P \| x(t)\|_P,
\end{align*}
where we used Lemma 3.31 of Ref.~\cite{plischke2005transient} to introduce $\mu_P(F_1)$. Define the  sphere in the norm induced by $P$: $\mathcal{S} = \{ x: \|x\|_P = \|x(0)\|_P\}$. For any $t\geq 0$ such that $x(t) \in \mathcal{S}$ we have
\begin{align}
    \dot{V}(t) \leq (-\mu_P(F_1)) \|x(0)\|^2_P  (R_P -1) \leq 0. 
\end{align}
This implies for any $t$ such that the solution is at $\mathcal{S}$, $\frac{d}{dt} \|x(t)\|_P^2 \leq 0$, which means the solution cannot exit $\mathcal{S}$. Equivalently, $\|x(t)\|_P \leq \|x(0)\|_P$ for all $t \geq 0$. 
     
\end{proof}

\subsection{Error bound in terms of matrix of norms \texorpdfstring{$\hh{G}$ }{}for the case \texorpdfstring{$P=I$}{P=I}}

In the first part of the proof we shall consider the special case $P=I$, which means $F_1$ has negative log-norm. For this case, we shall construct a $k \times k$-dimensional symmetric matrix $\hh{G}$, whose entries only involve $\|F_0\|$, $\|F_2\|$ and $\mu(F_1)$, and whose largest eigenvalue of $\lambda_1(\hh{G})$ upper bounds the log-norm of the Carleman matrix. 
\begin{align}
\label{eq:Carlemanerrorboundproof}
    \| \eta(t)\| \leq \frac{1- e^{\lambda_1(\hh{G}) t/2}}{-\lambda_1(\hh{G})/2 } k \| F_2\| \| x(0)\|^{k+1}.
\end{align}

For convenience, let us start by recalling Eq.~\eqref{eq:error_evolution}:
\begin{equation}
    \dot{\eta}(t) = A \eta(t) + \zeta(t).
\end{equation}
We can then consider the equation for the norm $\| \eta(t)\|$:
\begin{align}
    \frac{d\| \eta(t)\|^2}{dt} & = \frac{d\eta^\dag(t)}{dt} \eta(t)+ \eta^\dag(t)\frac{d\eta(t)}{dt} \\
    &  =\eta^\dag(t) (A^\dag + A) \eta(t) +  \eta^\dag(t)\zeta(t)+ \zeta^\dag(t) \eta(t). 
\end{align}
First, analyze
\begin{align}
 \eta^\dag(t) (A^\dag + A) \eta(t)
   & = \sum_{j=1}^{k}  \eta_j^\dag(t)(A^{\dag}_{j,j} + A_{j,j}) \eta_j(t) \\ & + \sum_{j=2}^{k}  \eta_{j-1}^\dag(t) A^{\dag}_{j,j-1} \eta_{j}(t) + \eta_{j}^\dag(t) A_{j,j-1} \eta_{j-1}(t) 
   \\ & + \sum_{j=1}^{k-1}  \eta_{j+1}^\dag(t) A^{\dag}_{j,j+1} \eta_{j}(t) + \eta_{j}^\dag(t) A_{j, j+1} \eta_{j+1}(t).
\end{align}
For the first line,
\begin{align}
     \sum_{j=1}^{k}  \eta_j^\dag(t)(A^{\dag}_{j,j} + A_{j,j}) \eta_j(t)  & \leq \sum_{j=1}^{k}  \eta_j^\dag(t)\sum_{p=1}^j ( (F^\dag_1+F_1) \otimes I^{\otimes (j-1)} + \textrm{shifts } ) \eta_j(t) \\ 
     \leq \sum_{j=1}^{k} 2j \mu(F_1)  \| \eta_j(t)\|^2. 
   \end{align}
For the second and third lines,
\small
\begin{align}
\nonumber
    \sum_{j=2}^{k}  \eta_{j-1}^\dag(t) A^{\dag}_{j,j-1} \eta_{j}(t) + \eta_{j}^\dag(t) A_{j,j-1} \eta_{j-1}(t) \leq \sum_{j=2}^{k} 2 \| A_{j-1}^j\| \| \eta_j(t)\| \| \eta_{j-1}(t)\| \le \sum_{j=2}^{k} 2 j \| F_0\| \| \eta_j(t)\| \| \eta_{j-1}(t)\|.
\end{align}
\begin{align}
\nonumber
    \sum_{j=1}^{k-1}  \eta_{j+1}^\dag(t) A^{\dag}_{j,j+1} \eta_{j}(t) + \eta_{j}^\dag(t) A_{j, j+1} \eta_{j+1}(t) & \leq \sum_{j=1}^{k-1} 2 \| A_{j+1,j}\| \| \eta_j(t)\| \| \eta_{j+1}(t)\| \le \sum_{j=1}^{k-1} 2 j \| F_2\| \| \eta_j(t)\| \| \eta_{j+1}(t)\|.
\end{align}
\normalsize
So, setting for convenience $\eta_{G,j} = \| \eta_j\|$, we obtain the bound
   \begin{align}
    \eta^\dag(t) (A^\dag + A) \eta(t) \leq \sum_{j=1}^{k} 2j \mu(F_1)  \eta_{G,j}^2 + \sum_{j=2}^{k}   2 j \| F_0\|\eta_{G,j} \eta_{G,j-1} + \sum_{j=1}^{k-1}  2 j \| F_2\| \eta_{G,j} \eta_{G,j+1}.
   \end{align}
If we define the $k \times k$ matrix $G$ with $G_{j,j} = 2j\mu(F_1) $ for $1 \leq j \leq k$, $G_{j,j-1} = 2j \| F_0\|$ for $2 \leq j \leq k$, $G_{j,j+1} = 2j \|F_2\|$ for $1 \leq j \leq k-1$, and zero otherwise, we can rewrite the above bounds as
\begin{align}
    \eta^\dag(t) (A^\dag + A) \eta(t) \leq \eta_G^\dag G \eta_G. 
\end{align}
Indeed,
\begin{align}
    \eta_G^\dag G \eta_G & = \sum_{j=1}^{k} G_{jj} \eta^2_{G,j} + \sum_{j=2}^{k} G_{j,j-1} \eta_{G,j} \eta_{G,j-1} + \sum_{j=1}^{k-1} G_{j,j+1}  \eta_{G,j} \eta_{G,j+1}\\ & 
    = \sum_{j=1}^{k} 2\mu(F_1) j \eta_{G,j}^2 + \sum_{j=2}^{k} 2 j \| F_0\| \eta_{G,j} \eta_{G,j-1} + \sum_{j=1}^{k-1} 2j \| F_2\| \eta_{G,j} \eta_{G,j+1}. 
\end{align}
Note that
\begin{align}
     \eta_G^\dag G^\dag \eta_G & = \sum_{j=1}^{k} G_{jj} \eta^2_{G,j} + \sum_{j=2}^{k} G_{j-1,j} \eta_{G,j-1} \eta_{G,j} + \sum_{j=1}^{k-1} G_{j+1,j} \eta_{G,j+1} \eta_{G,j} \\
     & = \sum_{j=1}^{k} G_{jj} \eta^2_{G,j} + \sum_{j=1}^{k-1} G_{j,j+1} \eta_{G,j} \eta_{G,j+1} + \sum_{j=2}^{k} G_{j,j-1} \eta_{G,j} \eta_{G,j-1} =  \eta_G^\dag G \eta_G,
\end{align}
so
\begin{align}
    \label{eq:ReGinequality}
    \eta^\dag(t) (A^\dag + A) \eta(t) \leq  \eta_G^\dag \hh{G} \eta_G \leq \lambda_1 (\hh{G}) \| \eta_G\|^2 = \lambda_1 (\hh{G}) \| \eta(t)\|^2,
\end{align}
where $\hh{G}:= (G+G^\dag)/2$ is a hermitization of $G$ and $\lambda_1 (\hh{G})$ denotes its largest eigenvalue. 

Now we proceed with the analysis of the term $\eta^\dag(t)\zeta(t)+ \zeta^\dag(t) \eta(t)$. 
We have
\begin{align}
    \eta^\dag(t)\zeta(t)+ \zeta^\dag(t) \eta(t) & \leq 2 \| A^{k,k+1}\| \| w_{k+1}\| \| \eta(t)\| \\ & \leq 2 k \| F_2\| \|x(t)^{\otimes k+1}\| \| \eta(t)\| \\ & \leq 2 k \| F_2\| \|x(0)^{\otimes k+1}\|  \| \eta(t)\|,
\end{align}
where in the last step we assumed $\|x(t) \| \le \|x(0)\|$ for all $t\ge 0$, which follows from
 Lemma~\ref{lem:normdecrease} under the assumption that $R<1$. 

Putting everything together
\begin{align}
    2 \| \eta(t)\| \frac{d \| \eta(t)\|}{dt} = \frac{d\| \eta(t)\|^2}{dt} \leq  \lambda_1 ( \hh{G}) \| \eta(t)\|^2 +  2 k \| F_2\| \|x_0^{\otimes k+1}\|  \| \eta(t)\|,
\end{align}
and so, assuming $\| \eta(t)\| >0$,
\begin{align}
     \frac{d \| \eta(t)\|}{dt}  \leq \frac{\lambda_1 ( \hh{G})}{2} \| \eta(t)\| +   k \| F_2\| \|x(0)\|^{k+1}.
\end{align}
This inequality also holds for $\|\eta(t)\| = 0$, since the derivative vanishes in such cases.
Solving the above with equality and using Gr\"onwall's lemma we obtain inequality~\eqref{eq:Carlemanerrorboundproof}.


\subsection{Negativity of \texorpdfstring{$\lambda_1(\hh{G})$}{the largest eigenvalue} under appropriate rescaling for the case \texorpdfstring{$P=I$}{P=I}} 

In the second part of the proof we analyze $\hh{G}$. We will use Gershgorin's circle theorem to obtain a bound on $\lambda_1(\hh{G})$. We shall then prove that, under the condition $R<1$, under an appropriate choice of rescaling we can always make sure that $\lambda_1(\hh{G})<0$, while also ensuring \mbox{$\|x(0)\|\leq 1$}. Meeting these two conditions simultaneously ensures that the Carleman error converges exponentially to zero as we increase the truncation scale $k$.

\bigskip

\textbf{Gershgorin's circle bound on $\lambda_1(\hh{G})$.}    If we look at a non-boundary row of $\hh{G}$ ($2 \leq j \leq k-1$)
 \begin{align}
     \hh{G}_{j,j-1} &= \frac{1}{2}(G_{j,j-1}+G_{j-1,j}) = j \| F_0\| + (j-1)\|F_2\|, \\ \hh{G}_{j,j} &= 2\mu(F_1) j, \\ \hh{G}_{j,j+1} & = \frac{1}{2}(G_{j,j+1}+G_{j+1,j}) = (j+1) \| F_0\| + j \|F_2\|,
 \end{align}
all other elements in the row being zero.

For $j=1$, the row has nonzero elements 
\begin{align}
    \hh{G}_{1,1} = 2 \mu(F_1), \quad \hh{G}_{1,2} = 2 \| F_0\| + \| F_2\|.
\end{align} 

For $j=k$, we similarly have 
\begin{align}
    \hh{G}_{k,k} = 2 k \mu(F_1), \quad \hh{G}_{k,k-1} =  k \| F_0\| + (k-1) \| F_2\|.
\end{align} 

By applying the Gershgorin's circle theorem to the $\hh{G}$ matrix, and using the fact that it is symmetric, it then follows that all its eigenvalues are contained in 
\begin{align}
    \bigcup_{j=1}^{k} [c_j - r_j, c_j + r_j],
\end{align}
where
\begin{align}
    c_j &= 2 j \mu(F_1), \quad \textrm{for } j=1,\dots, k \\ 
    r_j & =  (2j+1) \| F_0\| + (2j-1) \| F_2\|, \quad \textrm{for } j=2,\dots, k -1, \\ r_1 & = 2 \|F_0 \| + \|F_2\|, \quad r_{k} = k \|F_0\| + (k-1) \|F_2\|. 
\end{align}
For the largest eigenvalue we then have
\begin{align}
     \lambda_1(\hh{G}) \leq \max_{j=1, \dots, k} (c_j + r_j). 
\end{align}

Under the assumption $R<1$ it follows by continuity there exists a rescaling such that \mbox{$\| x(0)\| <1$} and
\begin{align}
\label{eq:littler}
   \bar{r} := \mu(F_1) + \| F_0\| + \| F_2\| < 0.
\end{align} 

We shall assume we fixed such rescaling. Then, 

\begin{align}
 \max_{j=2, \dots, k-1}  (c_j + r_j) & =  \max_{j=2, \dots, k}  2j( \mu(F_1) + \| F_0\| + \| F_2 \|) +  \| F_0\| - \| F_2 \| \\
 & =  \max_{j=2, \dots, k}  2j \bar{r} +  \| F_0\| - \| F_2 \|
 \\
 & \stackrel{j=2}{=} 4 \bar{r} + \| F_0\| - \| F_2\|. 
\end{align}
Also note that for $k\geq 3$
\begin{equation}
    c_{k} + r_{k} = 2 k \bar{r} - k\| F_0\|
    -    
    (k+1)\| F_2\| \leq 4 \bar{r} + \| F_0\| - \| F_2\| = c_2 + r_2
\end{equation}
whereas for $j=1$,
\begin{align}
    c_1 + r_1 = 2\mu(F_1) + 2\| F_0\| + \| F_2 \| = 2 \bar{r} - \| F_2\|. 
\end{align}
In conclusion, under the assumption $R<1$,
\begin{align}
\label{eq:ReGmaxeigenvaluemaxbound}
     \lambda_1(\hh{G}) \leq \max \{ 4 \bar{r} + \| F_0\| - \| F_2\|, 2 \bar{r} - \| F_2\|\}.
\end{align}

\bigskip

\textbf{There exists rescaling choices that  $\lambda_1(\hh{G})<0$ and $\|x(0)\|<1$.} Replacing back the expression for $\bar{r}$ from above, we find a sufficient set of conditions for \mbox{$\lambda_1(\hh{G})<0$}:
\begin{align}
    4 \mu + 5 \| F_0\| + 3 \| F_2\| <0 
\end{align}
and
\begin{align}
       2 \mu + 2 \| F_0\| +  \| F_2\| <0, 
\end{align}
where we note that the second one is implied by the first one. To this, we need to add that the rescaling parameter $\gamma$ should obey
\begin{align}
   0< \gamma < 1/ \| x(0)\|,
\end{align}
and 
\begin{align}
    \mu + \| F_0\| + \|F_2\| < 0,
\end{align}
the latter one having being used in the derivation.

First note that, if $\|F_0\| = 0$, it suffices to satisfy
\begin{align}
   \mu + \| F_2\|  = \bar{r} < 0
\end{align}
notwithstanding the condition $0< \gamma < 1/\| x(0)\|$. We discussed above that a scaling ensuring these condition can be found, by continuity. 

So we can now consider the case $\|F_0\| \neq 0$. Solving the above set of inequalities we find that the following are sufficient:
\begin{align}
\label{eq:gammainequality}
    0 < \gamma < \min \left\{\frac{|\mu| }{2\| F_0\|}, \frac{1}{\| x(0)\|}\right\},
\end{align}
and 
\begin{align}
    \| F_2\| < | \mu| \gamma -  \| F_0\| \gamma^2. 
\end{align}

Let us analyze the situation depending on the value of $\| x(0)\|$.
\begin{itemize}
    \item[Case (a)] $0 < \| x(0)\| < \frac{\| F_0\|}{|\mu|}$
\end{itemize}
One can immediately check this condition implies
\begin{align}
    R \geq  \frac{\| F_0\|}{|\mu| \| x(0)\|}>1,
\end{align}
and so it is incompatible with $R<1$. Hence we do not need to consider this possibility.
\begin{itemize}
    \item[Case (b)] $\frac{\| F_0\|}{|\mu|} < \| x(0)\| < \frac{2 \| F_0\|}{|\mu|}$
\end{itemize}
Under this condition, inequality~\eqref{eq:gammainequality} simplifies to
\begin{align}
\label{eq:gammainequalitysimpler}
    0 < \gamma < \frac{|\mu| }{2\| F_0\|}.
\end{align}
Furthermore, note that $R<1$ implies
\begin{align}
    \mu^2 > \left( \frac{\|F_0\|}{\| x(0)\|} + \| F_2\| \| x(0)\|\right)^2
\end{align}
Note that
\begin{align}
    \left( \frac{\|F_0\|}{\| x(0)\|} + \| F_2\| \| x(0)\|\right)^2 - 4 \| F_0\| \| F_2\| = \frac{(\| F_0\| - \| F_2\| \| x(0)\|^2)^2}{\| x(0)\|^2}\geq 0 
\end{align}
and so
\begin{align}
\label{eq:modifiedR}
  \mu^2 > 4 \| F_0\| \|F_2\| 
\end{align}
Let us choose the rescaling 
\begin{align}
    \gamma = \frac{|\mu|}{2\| F_0\|} - \epsilon,
\end{align} 
with $\epsilon>0$, which is a valid rescaling for $\epsilon$ small enough. The $\|F_2\|$ condition becomes
\begin{align}
\label{eq:modifiedRepsilon}
       \| F_2\| < | \mu| \left(\frac{|\mu|}{2\| F_0\|} - \epsilon\right) -  \| F_0\| \left(\frac{|\mu|}{2\| F_0\|} - \epsilon\right)^2. 
\end{align}
which for $\epsilon = 0$ reads
\begin{align}
    \| F_2\| < \frac{\mu^2}{4 \| F_0\|},
\end{align}
that is just \eqref{eq:modifiedR}. Since the latter is satisfied, by continuity there is $\epsilon>0$ small enough which satisfies \eqref{eq:modifiedRepsilon}, with a valid rescaling.

\begin{itemize}
    \item[Case (c)] $\| x(0)\| \geq \frac{2 \| F_0\|}{|\mu|}$
\end{itemize}
The condition on $\gamma$ becomes
\begin{align}
    0 < \gamma < \frac{1 }{\| x(0)\|}.
\end{align}

We shall then choose the rescaling
\begin{align}
    \gamma = \frac{1}{ \| x(0)\|}-\epsilon.
\end{align}
Again, this condition ensures $\gamma \| x(0)\| < 1$, and $\gamma>0$ for $\epsilon$ small enough, so it is a valid rescaling. 
The $\|F_2\|$ inequality becomes
\begin{align}
\label{eq:Rmodified2}
    \| F_2\| < \frac{- \| F_0\| + 2 \| F_0\| \| x(0)\| \epsilon - \| F_0\| \| x(0)\|^2 \epsilon^2 - \| x(0)\| \mu + \| x(0)\|^2 \mu \epsilon}{\|x_0\|^2}
\end{align}
For $\epsilon = 0$
\begin{align}
   \|F_2\| < \frac{|\mu| \| x(0)\| - \| F_0\|}{\| x(0)\|^2}
\end{align}
which can be rearranged as $R<1$. It follows by continuity that Eq.~\eqref{eq:Rmodified2} follows from $R<1$ for $\epsilon>0$ small enough under a valid choice of rescaling.

This completes the proof for $P=I$.

\subsection{Extending to \texorpdfstring{$P \neq I$}{P/=I}}
 Let us now consider the general case and show it can be reduced to the case $P=I$ that we solved in the previous section. We start with the general quadratic problem
\begin{align}
\label{eq:xsystem}
    \dot{x} = F_0 +  F_1 x  + F_2 x \otimes x,
\end{align}
and perform the change of coordinates $y = Q x$, where $Q^\dag Q = P$. Setting
\begin{align}
\label{eq:Ftildetransformation}
    \tilde{F}_2 = Q F_2 Q^{-1} \otimes Q^{-1}, \quad \tilde{F}_1 = Q F_1 Q^{-1}, \quad \tilde{F}_0 = Q F_0,
\end{align}
we get
\begin{align}
\label{eq:ysystem}
    \dot{y} =  \tilde{F}_0 + \tilde{F}_1 y + \tilde{F}_2 y \otimes y.
\end{align}
We assumed $\mu_P(F_1) <0$. Then, one can show [Lemma 3.31 \cite{plischke2005transient}] that
\begin{align}
\label{eq:connectingtoPlognorm}
\mu(\tilde{F_1}) = \mu(Q F_1 Q^{-1}) = \mu_P(F_1) 
\end{align} 
and so we get $\mu(\tilde{F_1})<0$. 
Furthermore, 
\begin{align*}
   \tilde{R}_{I} = \frac{1}{-\mu(\tilde{F}_{1})} \left (  \|\tilde{F}_{2}\| \|y(0)\| + \frac{\|\tilde{F}_0\|}{ \|y(0)\|} \right )  = \frac{1}{-\mu_P(F_{1})} \left (  \|F_{2}\|_P \|x(0)\|_P + \frac{\|F_0\|_P}{ \|x(0)\|_P} \right ) = R_P <1,
\end{align*}
since
\begin{align}
    \| y(0) \| = \sqrt{ x(0)^\dag Q^\dag Q x(0)} = \sqrt{x(0)^\dag P  x(0)} =   \| x(0)\|_P,
\end{align}
and similarly $\| \tilde{F}_0\| = \| F_0\|_P$; furthermore, using the fact that $Q = U P^{1/2}$ for some unitary $U$ and that the operator norm is invariant under unitaries 
\begin{align}
\label{eq:connectingtoPoperatornorm}
    \| \tilde{F}_2\| = \| F_2\|_P.
\end{align}
Finally, note that from Lemma~\ref{lem:normdecreasegeneral} and $\| x\|_P = \| y\|$ we have $\| y(t)\| \leq \| y(0)\|$. Hence the system in Eq.~\eqref{eq:ysystem} respects all conditions of the special case proved above.
It then follows that the Carleman error vector
\begin{align}
    \tilde{\eta}_j = y^{\otimes j} - \tilde{z}_j,
\end{align}
where $\tilde{z}$ is the solution to the Carleman ODE system constructed from Eq.~\eqref{eq:ysystem}, satisfies
 \begin{align}
    \| \tilde{\eta}(t)\| \leq \frac{1- e^{ \xi t}}{-\xi } k \| \tilde{F}_2\| \| y(0)\|^{k+1},
\end{align}
with $\| y(0) \|<1$ and $
\xi \leq  4 \mu(\tilde{F}_1) + 5 \| \tilde{F}_0\| + 3 \| \tilde{F}_2\| <0$. 
Equivalently, if $\eta_j = x^{\otimes j} - z_j$, with $z = P_{k}^{-1/2} \tilde{z}$, where $P_{k} = \oplus_{j=1}^{k} P^{\otimes j}$, then we have
 \begin{align}
 \label{eq:dissipativeerrorbound}
    \| \eta(t)\|_{P_{k}} \leq \frac{1- e^{ \xi t}}{-\xi } k \|F_2\|_P \| x(0)\|_P^{k+1},
\end{align}
with $\| x(0) \|_P<1$ 
and
\begin{align}
\xi_P \leq  4 \mu(F_1) + 5 \| F_0\|_P + 3 \| F_2\|_P <0.
\end{align}
To complete the proof, we only need to verify that knowing that $\tilde{z}$ is the solution of the truncated Carleman ODE system for \eqref{eq:ysystem}, $z = P_{k}^{-1/2} \tilde{z}$ is the solution of the truncated Carleman ODE system for \eqref{eq:xsystem}:
\begin{align}
\dot{z} = A z + \zeta,
\end{align}
with $A$ the Carleman matrix~\eqref{eq:CarlemannMatrix2} constructed from $F_0$, $F_1$ and $F_2$, and $\zeta = [0, \dots, 0, A_{k,k+1} x^{\otimes k+1}]$.  

To see this, note that by construction $z$ satisfies
\begin{align}
    \dot{z} = P_{k}^{-1/2} \tilde{A} P_{k} z + P_{k} \tilde{\zeta},
\end{align}
where by construction $\tilde{A}$ satisfies the tilde version of Eq.~\eqref{eq:CarlemannMatrix2}. With the slight abuse of tensor product notation flagged before,
\begin{align*}
    P_{k}^{-1/2} \tilde{A} P_{k}^{1/2} & = \sum_{j=2}^{k} \ketbra{j}{j-1} \otimes (P^{-1/2})^{\otimes j} \tilde{A}_{j,j-1} (P^{1/2})^{\otimes (j-1)} 
 + \sum_{j=1}^{k} \ketbra{j}{j} \otimes (P^{-1/2})^{\otimes j}  \tilde{A}_{j,j} (P^{1/2})^{\otimes j}  \\ & + \sum_{j=1}^{k-1} \ketbra{j}{j+1} \otimes (P^{-1/2})^{\otimes j}  \tilde{A}_{j,j+1}  (P^{1/2})^{\otimes (j+1)} \\
 & = \sum_{j=2}^{k} \ketbra{j}{j-1} \otimes A_{j,j-1} 
 + \sum_{j=1}^{k} \ketbra{j}{j} \otimes  A_{j,j}  + \sum_{j=1}^{k-1} \ketbra{j}{j+1} \otimes  A_{j,j+1}, 
\end{align*}
where in the third line we used Eq.~\eqref{eq:Ftildetransformation}. Similarly, since $\tilde{\zeta} = [0, \dots, 0, \tilde{A}_{k,k+1} y^{\otimes k+1}] $ we have $P_{k}^{-1/2} \tilde{\zeta} = \zeta$. Hence, we proved that $z$ satisfies $\dot{z} = A z + \zeta$.

Finally, we have
\begin{align}
\sqrt{\lambda_{\mathrm{min}} (P^{\otimes j}) }\sqrt{\eta_j^\dag \eta_j} \leq  \sqrt{\eta_j^\dag P^{\otimes j} \eta_j} = \| \eta_j(t)\|_{P^{\otimes j}} \leq \| \eta(t)\|_{P_{k}},
\end{align}
and so, using that $\lambda_{\mathrm{min}} (P^{\otimes j}) = \lambda_{\mathrm{min}}(P)^j = \|P^{-1}\|^{-j}$ as well as Eq.~\eqref{eq:dissipativeerrorbound}
\begin{align}
  \|\eta_j(t)\|  \leq \frac{1- e^{ \xi t}}{-\xi } k \|F_2\|_P \| P^{-1}\|^{j/2} \| x(0)\|_P^{k+1}.
\end{align}


\section{Proof of Theorem~\ref{thm:Carleman_cons}}
\label{app:thm_cons}

In what follows, we first derive a bound for the Carleman error $\|\eta_j\|$ of the $j$-th block, expressed with the help of a combinatorial quantity $S_k^{j}$ that we call a weighted fusion sum. Next, we explain how to relate $S_k^{1}$ to the number of binary trees with $k+1$ leaves, and show that it is equal to the Catalan numbers~$C_k$. As a result, we give a closed-form bound on the error $\|\eta_1\|$ of the first block, showing that when $R_\delta<1$, it converges to 0 as $k$ grows. We then generalize our reasoning, relating $S_k^{j}$ to the number of binary forests with $j$ trees and a total of $k+1$ leaves, and show that it is equal to the $j$-th fold convolution $C^{j}_k$ of the Catalan numbers. This, in turn, allows us to derive a closed-form bound on the errors $\|\eta_j\|$ of the general $j$-th block, showing that when $R_\delta<1$, they all converge to 0 as~$k$ grows, and so does the total Carleman error $\|\eta\|$.


\subsection{Carleman error bounds using weighted fusion sums}

In the proof of Proposition~5.3 of Ref.~\cite{forets2017explicit}, the authors show that the $j$-th component of the Carleman error vector for a quadratic system without driving and at time $s_{j-1}=t$ can be expressed exactly as the following nested integral:
\begin{align}
\begin{split}
     \label{eq:nested_int}
     \eta_j(s_{j-1}) &=  \int_{0}^{s_{j-1}}  e^{A_{j,j}(s_{j-1}-s_{j})} A_{j,j+1}  \ ds_j \cdots \int_{0}^{s_{k-1}} e^{A_{k,k}(s_{k-1}-s_k)} A_{k,k+1} \ x^{\otimes k+1}(s_k) \  ds_k,
\end{split}
\end{align}
where $A_{j,j}$ and $A_{j,j+1}$ denote the blocks of the Carleman matrix defined in Eqs.~\eqref{eq:A_F1}-\eqref{eq:A_F2}. We will now transform the above to a more amenable form by diagonalizing the exponential terms using $Q$:
\begin{align}
\begin{split}
    \label{eq:nested_int_diag}
    \eta_j(s_{j-1}) &=  Q^{\ot j} \int_{0}^{s_{j-1}}  e^{\tilde{A}_{j,j}(s_{j-1}-s_{j})} \tilde{A}_{j,j+1} \ ds_j \cdots   \int_{0}^{s_{k-1}} e^{\tilde{A}_{k,k}(s_{k-1}-s_k)} \tilde{A}_{k,k+1} \tilde{x}^{\otimes k+1}(s_k) ds_k,
\end{split}
\end{align}
where
\begin{equation}
    \tilde{A}_{j,j}= (Q^{-1})^{\otimes j} A_{j,j} Q^{\otimes j},\qquad \tilde{A}_{j,j+1} = (Q^{-1})^{\otimes j} A_{j,j+1} Q^{\otimes (j+1)}.
\end{equation}
Clearly, $\tilde{A}_{j,j}$ are diagonal matrices given by
\begin{align}
    \tilde{A}_{j,j}  &:= \Lambda \otimes I^{\otimes (j-1)}  + \mbox{shifts},
\end{align}
whereas
\begin{equation}
    \tilde{A}_{j,j+1} = \tilde{F}_2\otimes I^{\otimes (j-1)}+ \mbox{shifts} =: \sum_{i=1}^j \tilde{F}_2(i),
\end{equation}
where we have introduced a convenient notation with $\tilde{F}_2(i)$ acting non-trivially as $\tilde{F}_2$ on subsystems $i$ and $i+1$, and trivially as identity on all other subsystems. Using this notation, $\eta_j(s_{j-1})$ becomes:
\begin{align}
\begin{split}
    \label{eq:nested_int_F2}
    \eta_j(s_{j-1}) = &  Q^{\ot j} \sum_{l_j=1}^{j} \cdots \sum_{l_k=1}^k \int_{0}^{s_{j-1}} ds_{j} \cdots \int_{0}^{s_{k-1}} ds_k \\
    &\qquad  \times e^{\tilde{A}_{j,j}(s_{j-1}-s_{j})}  \tilde{F}_2(l_j) \dots e^{\tilde{A}_{k,k}(s_{k-1}-s_k)} \tilde{F}_2(l_k) \tilde{x}^{\otimes k+1}(s_k).
\end{split}
\end{align}

Now, we assume that the norm of the transformed solution is bounded for all times,
\begin{equation}
    \forall t:~ \|\tilde{x}(t)\| \leq \|\tilde{x}_{\mathrm{max}}\|,
\end{equation}
and introduce a subnormalized $\tilde{x}$ state:
\begin{equation}
    \tilde{\xi}(t) = \frac{\tilde{x}(t)}{\|\tilde{x}_{\mathrm{max}}\|},
\end{equation}
as well as normalized $\tilde{F}_2(l)$ matrices:
\begin{equation}
    \tilde{f}_2(l) = \frac{\tilde{F}_2(l)}{\|\tilde{F}_2\|}.
\end{equation}
Then, taking the norm of both sides of Eq.~\eqref{eq:nested_int_F2} and using the triangle inequality, we get
\begin{align}
\begin{split}
    \label{eq:error_cons_1}
    \|\eta_j(s_{j-1})\| \leq & \|Q\|^j \|\tilde{x}_{\mathrm{max}}\|^{k+1} \|\tilde{F}_2\|^{k-j+1}   \sum_{l_j=1}^{j} \cdots \sum_{l_k=1}^k \int_{0}^{s_{j-1}} ds_{j} \cdots \int_{0}^{s_{k-1}} ds_k \\ 
    &  \qquad\times \left\| e^{\tilde{A}_{j,j}(s_{j-1}-s_{j})} \tilde{f}_2(l_j)  \dots e^{\tilde{A}_{k,k}(s_{k-1}-s_k)} \tilde{f}_2(l_k) \tilde{\xi}^{\otimes k+1}(s_k)\right\|.
\end{split}
\end{align}

To bound the integrand of the above equation, let us introduce a convenient fusion formalism. First, note that as we sequentially apply matrices $\tilde{f}_2(l_i)$ and $e^{\tilde{A}_{i,i}(s_{i-1}-s_{i})}$ to the state $\tilde{\xi}^{\otimes k+1}$, at each step $i\in\{k,\dots,j\}$ the resulting vector is still a product state. The application of $\tilde{f}_2(l_i)$ \emph{fuses} two subsystems $l_i$ and $l_i+1$ into a single subsystem state, thus reducing the number of subsystems by 1. Such a fusion does not increase the norm as $\|\tilde{f}_2(l_i)\|=1$, and the resulting state of the $l_i$-th subsystem has no support on the first $M$ computational basis states due to Eq.~\eqref{eq:f2_cons}. We will call such subsystems \emph{excited}, as opposed to \emph{unexcited} subsystems that may have the support on the first $M$ basis states. Such terminology is justified by the action of $e^{\tilde{A}_{i,i}(s_{i-1}-s_{i})}$ on the system state, which is in fact the application of $e^{\tilde{F}_1(s_{i-1}-s_{i})}$ to each of the $i$ subsystems. Namely, its action cannot increase the norm when acting on an unexcited system (since $\tilde{F}_1$ is a diagonal matrix with entries having non-positive real parts), but when acting on an excited subsystem, it reduces the norm by at least a factor of $e^{-\delta(F_1)(s_{i-1}-s_{i})}$. We can then bound the integrand by analyzing subsequent fusions and keeping track of how many subsystems are excited at every step. More precisely, let us introduce a function $\Xi_i(l_i,\dots,l_k)$ that counts the number of excited subsystems obtained from the initial $k+1$ unexcited subsystems through a sequence of fusions corresponding to applications of $\tilde{f}_2(l_k),\dots \tilde{f}_2(l_i)$. Then, we have the following bound
\begin{equation}
    \label{eq:Xi_bound}
    \left\| e^{\tilde{A}_{j,j}(s_{j-1}-s_{j})} \tilde{f}_2(l_j) \dots e^{\tilde{A}_{k,k}(s_{k-1}-s_k)} \tilde{f}_2(l_k) \tilde{\xi}^{k+1}(s_k)\right\| \leq \prod_{i=j}^k e^{-\Xi_i(l_i,\dots,l_k)(s_{i-1}-s_{i})\delta(F_1)}.
\end{equation}

Next, for a fixed $(l_i,\dots,l_k)$, we will bound the nested integral term appearing in the expression for $\|\eta_j(s_{j-1})\|$ in Eq.~\eqref{eq:error_cons_1}. Using Eq.~\eqref{eq:Xi_bound}, we get
\begin{align}
    &\int_{0}^{s_{j-1}} ds_{j} \cdots \int_{0}^{s_{k-1}} ds_k \left\| e^{\tilde{A}_{j,j}(s_{j-1}-s_{j})} \tilde{f}_2(l_j) \dots e^{\tilde{A}_{k,k}(s_{k-1}-s_k)} \tilde{f}_2(l_k) \tilde{\xi}^{k+1}(s_k)\right\| \\
     &\qquad \leq \int_{0}^{s_{j-1}} e^{-\Xi_j(l_j,\dots,l_k)(s_{j-1}-s_j)\delta(F_1)} ds_{j} \cdots \int_{0}^{s_{k-1}} e^{-\Xi_k(l_k)(s_{k-1}-s_k)\delta(F_1)} ds_k \\
     &\qquad \leq \frac{1}{\delta(F_1)^{k-j+1}} \left(\prod_{i=j}^k \Xi_i(l_i,\dots l_k)\right)^{-1},
\end{align}
where in the last line we used the following inequality that holds for $a>0$:
\begin{equation}
    \int_0^{s_{i-1}} e^{-a(s_{i-1}-s_i)} \ ds_{i} = \frac{\left(1-e^{-a s_{i-1}}\right)}{a} \leq a^{-1}.
\end{equation}
Then, introducing the weighted fusion sum,
\begin{equation}   
    \label{eq:fusion_sum}
    S^{j}_k := \sum_{l_j=1}^j\cdots \sum_{l_k=1}^k \left(\prod_{i=j}^k \Xi_i(l_i,\dots l_k)\right)^{-1},
\end{equation}
Eq.~\eqref{eq:error_cons_1} can be rewritten as
\begin{equation}
    \label{eq:error_cons_2}
    \|\eta_j(s_{j-1})\| \leq  \|Q\|^j \|\tilde{x}_{\mathrm{max}}\|^{k+1} \left(\frac{\|\tilde{F}_2\|}{\delta(F_1)}\right)^{k-j+1} S^{j}_k.
\end{equation}


\subsection{Weighted fusion sums\texorpdfstring{ $S_k^{1}$}{} and bounds for \texorpdfstring{$\|\eta_1\|$}{the first component of the Carleman error}}
\label{app:weighted_fusion_for_tree}

Recall that Catalan numbers $C_k$ are given by
\begin{equation}
    C_k := \frac{1}{k+1} \binom{2k}{k}.
\end{equation}
We then have the following crucial lemma, the proof of which we postpone for a moment.
\begin{lemma}
    \label{lem:fusion_sums}
    The weighted fusion sum $S_k^{1}$, defined in Eq.~\eqref{eq:fusion_sum}, is given by:
    \begin{equation}
        S_k^{1} = C_k.
    \end{equation}
\end{lemma}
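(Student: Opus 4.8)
The plan is to give a bijective / generating-function argument identifying the weighted fusion sum $S_k^{1}$ with the Catalan number $C_k$. The key observation is that a sequence of fusions $(\tilde f_2(l_k), \tilde f_2(l_{k-1}), \dots, \tilde f_2(l_1))$ acting on $k+1$ initial subsystems, each fusion merging two adjacent subsystems into one, is exactly the data of a binary tree with $k+1$ leaves built bottom-up: the $k+1$ leaves are the initial (unexcited) subsystems, and each of the $k$ fusions creates one internal node. The quantity $\Xi_i(l_i,\dots,l_k)$ counts how many subsystems are excited after performing the fusions indexed $k, k-1, \dots, i$; in tree language, after creating internal nodes in this (reverse-topological, bottom-up) order, $\Xi_i$ is the number of nodes currently present that are internal nodes (equivalently, that are roots of already-built non-trivial subtrees). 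So the summand $\bigl(\prod_{i=1}^{k}\Xi_i\bigr)^{-1}$ is a weight attached to a pair (binary tree $\mathbf t$ with $k+1$ leaves, topological/creation order of its $k$ internal nodes). First I would make this correspondence precise.

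Next I would evaluate $S_k^{1}$ by summing the weight over all such pairs. There are two natural routes. The cleaner one: fix a binary tree $\mathbf t$ with $k+1$ leaves and sum over all valid creation orders of its internal nodes. A creation order is a linear extension of the poset on internal nodes in which a node precedes its parent (children created before parents, since fusion builds upward) — i.e. a \emph{reverse} topological order in the sense of Section~\ref{subsec:expression_for_v_inv_ij_diagonal_f1}. When we create the $i$-th internal node in such an order, the current number of internal-node roots $\Xi_{k+1-i}$ equals $i$ if and only if none of the previously created internal nodes is a descendant of... — more carefully, I would argue that summing $\prod 1/\Xi$ over all linear extensions of a forest poset telescopes: there is a classical identity that for a rooted forest poset $P$ on $n$ elements, $\sum_{\text{linear extensions}} \prod_{i=1}^n \frac{1}{(\text{number of elements among positions} \le i \text{ that are "available"})}$ collapses to a product of hook-length-type factors. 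In fact the plan is to use the simplest such instance: I claim $S_k^{1}$ satisfies the Catalan recursion $S_k^{1} = \sum_{a+b=k-1} S_a^{1} S_b^{1}$ with $S_0^{1}=1$. This comes from conditioning on the \emph{last} fusion $l_1$, which is the root of the tree: it splits the remaining fusions into those building the left subtree ($a$ internal nodes, $a+1$ leaves) and those building the right subtree ($b$ internal nodes), and the factor $1/\Xi_1 = 1/1$ contributes trivially because at the final step exactly one internal-node root remains after... wait, $\Xi_1$ counts excited subsystems \emph{before} the last fusion has been reduced, so $\Xi_1 = 2$ in general — I would need to check this constant and absorb it; the correct bookkeeping is that the root-fusion step contributes $1/\Xi_1$ where $\Xi_1$ is the count at that step, and interleaving the left- and right-subtree fusion sequences is free (they act on disjoint subsystems, so the $\Xi$ values at each step are just the running totals from the two sides). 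This interleaving freedom is precisely what makes the convolution work, possibly up to a correcting binomial factor that I expect to cancel against the $1/\Xi_i$ weights — verifying that cancellation is the crux.

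Concretely, the main obstacle I anticipate is getting the interleaving bookkeeping exactly right: when one merges a left-subtree fusion sequence of length $a$ with a right-subtree sequence of length $b$, there are $\binom{a+b}{a}$ interleavings, and for each the product $\prod_i \Xi_i$ over the combined sequence is \emph{not} simply the product of the two separate products — the $\Xi$ at a given global step is the sum of the two local excited-counts. I would need the identity
\begin{equation}
\sum_{\text{interleavings}} \prod_{\text{steps}} \frac{1}{\Xi^{\text{left}} + \Xi^{\text{right}}} \;=\; \Bigl(\prod \tfrac{1}{\Xi^{\text{left}}}\Bigr)\Bigl(\prod \tfrac{1}{\Xi^{\text{right}}}\Bigr),
\end{equation}
which is a known "shuffle" identity for products of reciprocals of partial sums (it is the multiplicative analogue of the beta-function interleaving identity, provable by induction on $a+b$ via a "last-element" split). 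Granting this, the convolution $S_k^{1}=\sum_{a+b=k-1}S_a^{1}S_b^{1}$ follows immediately, and with $S_0^{1}=1$ this is the defining recurrence of the Catalan numbers, giving $S_k^{1}=C_k$. An alternative, perhaps safer, route that avoids the shuffle identity is a direct generating-function computation: set $f(z)=\sum_{k\ge0} S_k^{1} z^k$, derive from the recursion $f = 1 + z f^2$, and solve to get $f(z) = \frac{1-\sqrt{1-4z}}{2z}$, the Catalan generating function. I would present whichever of these two comes out cleanest after checking the $\Xi_1$ constant on the small cases $k=1,2$ (where $S_1^{1}$ should equal $C_1=1$ and $S_2^{1}=C_2=2$), using those checks to pin down any stray normalization.
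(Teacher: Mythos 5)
Your argument is correct, but it is organized differently from the paper's. The paper fixes a binary tree $\mbt$ with $k+1$ leaves and proves the stronger statement that the weighted sum over all fusion paths producing that tree equals $1$ (intuitively, the number of ways to grow the tree at step $i$ equals $\Xi_i$, so multiplicity cancels weight; this is made rigorous by induction on the number of leaves, deleting a ``cherry'' and using a telescoping product identity), and then simply counts binary trees to get $C_k$. You instead derive the Catalan recursion $S^1_k=\sum_{a+b=k-1}S^1_aS^1_b$ by splitting at the root and invoking a shuffle identity. Both of your flagged worries resolve favorably: first, by the paper's definition $\Xi_i$ is the excited count \emph{after} the fusions down to index $i$, so $\Xi_1=1$ always (not $2$), and the root step contributes a harmless factor $1$; second, your shuffle identity
\begin{equation}
\sum_{\text{interleavings}}\ \prod_{i=1}^{a+b}\frac{1}{\Xi^{L}_{j_L(i)}+\Xi^{R}_{j_R(i)}}=\prod_{j=1}^{a}\frac{1}{\Xi^{L}_j}\cdot\prod_{j=1}^{b}\frac{1}{\Xi^{R}_j}
\end{equation}
(with $\Xi^{L}_0=\Xi^{R}_0=0$) holds for \emph{arbitrary} nonzero sequences, by induction on $a+b$ conditioning on whether the last letter of the interleaving is $L$ or $R$: the common final factor $1/(\Xi^L_a+\Xi^R_b)$ factors out and the two inductive contributions recombine as $\Xi^L_a+\Xi^R_b$. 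Your route is arguably more self-contained (it never needs the count of binary trees as an external fact, since the recursion with $S^1_0=1$ \emph{is} the Catalan recursion), while the paper's per-tree identity is the more useful intermediate statement because it generalizes directly to forests and yields the $j$-fold Catalan convolution needed for $S^j_k$ in the companion Lemma; if you wanted to extend your approach to $S^j_k$ you would have to iterate the shuffle identity across $j$ trees, which works but is less immediate.
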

Noting that
\begin{equation}
    \frac{C_{k+1}}{C_k}=\frac{4(k+\frac{1}{2})}{k+2} \leq 4
\end{equation}
and using $C_1=1$, one gets a simple bound for the $k$-th Catalan number for $k\geq 1$:
\begin{equation}
    C_k\leq 4^{k-1}.
\end{equation}
Using Lemma~\ref{lem:fusion_sums} with the above in Eq.~\eqref{eq:error_cons_2} for $j=1$, we obtain
\begin{equation}
    \label{eq:error_cons_3}
    \|\eta_1(s_0)\| \leq  \left(\frac{\|\tilde{x}_{\mathrm{max}}\| \|Q\|}{4}\right) \left(\frac{4\|\tilde{x}_{\mathrm{max}}\| \|\tilde{F}_2\|}{\delta(F_1)} \right)^k.
\end{equation}
Now, we note that employing the rescaling degree of freedom (recall Eq.~\eqref{eq:rescalingrelations}), the first factor can always be made smaller than the second one (which is invariant under rescaling). More precisely, choosing 
\begin{equation}
    \gamma = p \gamma_0,\qquad \gamma_0=\frac{16\|\tilde{F}_2\|}{\delta(F_1)\|Q\|},\qquad p\in(0,1],
\end{equation}
we arrive at
\begin{equation}
    \|\eta_1(s_0)\| \leq  p \left(\frac{4\|\tilde{x}_{\mathrm{max}}\| \|\tilde{F}_2\|}{\delta(F_1)} \right)^{k+1},
\end{equation}
which is the claimed result.

We now proceed to the proof of Lemma~\ref{lem:fusion_sums}. The main idea behind the proof is as follows. First, we will explain how every fusion path from $k+1$ subsystems to one subsystem can be mapped to a binary tree with $k+1$ leaves. Then, we will argue that the fusion paths corresponding to a given binary tree can be seen as ``growing from excitations''. As a result, the sum over fusion paths corresponding to a given tree, weighted by the inverse product of excitations, is always equal to 1. Thus, $S^{1}_k$ is given by the number of binary trees with $k+1$ leaves, which is known to be the $k$-th Catalan number (see, e.g., 1.5.1~Theorem of Ref.~\cite{stanley2015catalan}, where what we call binary trees is referred to as planar trees).

We start by introducing a graphical formalism for representing fusion paths, choosing $k=3$ as an example. We start with $k+1$ unexcited subsystems represented by '$\bt$', and at each time step $i\in\{k,\dots,1\}$ ($i$ tells us how many systems will be there after that time step) we can choose two neighboring subsystems and fuse them into one excited subsystem denoted by '$\btx$', while leaving other subsystems unchanged. We will denote the set of such $k!$ fusion paths from $k+1$ to 1 subsystem by $\F^{1}_{k+1}$, with $\F^1_4$ given by:
\begingroup
\tikzset{
    level distance=32pt,
    grow'=up,
    every tree node/.style = {align=center, anchor=north,nodecolor},
    edge from parent/.style = {draw,edgecolor,thick}
}
\begin{equation}
    \F^{1}_4=
    \left\{\raisebox{-4em}{
        \begin{tikzpicture}[scale=0.8]
            \Tree [.$\btx$ [.$\btx$ [.$\btx$ [.$\bt$ ] [.$\bt$ ] ]
            [.$\bt$ $\bt$ ] ]      
            [.$\bt$ [.$\bt$ $\bt$ ] ] ]
        \end{tikzpicture},~~
        \begin{tikzpicture}[scale=0.8]
            \Tree [.$\btx$ [.$\btx$ [.$\bt$ [.$\bt$ ] ]
            [.$\btx$ [.$\bt$ ] [.$\bt$ ] ] ]       
            [.$\bt$ [.$\bt$ $\bt$ ] ] ]         ] 
        \end{tikzpicture},~~
        \begin{tikzpicture}[scale=0.8]
            \Tree [.$\btx$ [.$\btx$ [.$\bt$ $\bt$ ] [.$\bt$  $\bt$ ] ]
            [.$\btx$ [.$\btx$ [.$\bt$ ] [.$\bt$ ] ] ] ]          
        \end{tikzpicture},~~
        \begin{tikzpicture}[scale=0.8]
            \Tree [.$\btx$ [.$\btx$ [.$\btx$ [.$\bt$ ] [.$\bt$ ] ] ]
            [.$\btx$ [.$\bt$ $\bt$ ] [.$\bt$ $\bt$ ] ] ]          
        \end{tikzpicture},~~
        \begin{tikzpicture}[scale=0.8]
            \Tree [.$\btx$ [.$\bt$ [.$\bt$ [.$\bt$ ] ] ]
            [.$\btx$ [.$\btx$ [.$\bt$ ] [.$\bt$ ] ]
            [.$\bt$ [.$\bt$ ] ] ] ]         
        \end{tikzpicture},~~
        \begin{tikzpicture}[scale=0.8]
            \Tree [.$\btx$ [.$\bt$ [.$\bt$ [.$\bt$ ] ] ]
            [.$\btx$ [.$\bt$ [.$\bt$ ] ]
            [.$\btx$ [.$\bt$ ] [.$\bt$ ] ] ] ]              
        \end{tikzpicture}} 
    \right\}.
\end{equation}
\endgroup
Using this notation, we can rewrite Eq.~\eqref{eq:fusion_sum} for $j=1$ as
\begin{equation}
    \label{eq:fusion_sum2}
    S^{1}_k = \sum_{\v{f}\in\F^{1}_{k+1}} \left(\prod_{i=1}^k \Xi_i(\v{f})\right)^{-1},
\end{equation}
where, with a slight abuse of notation, $\Xi_i(\v{f})$ denotes the number of excitations at time step $i$ of a fusion path $\v{f}$. For example, the first two and last two fusion paths in $\F^{1}_4$ have $\Xi_1=\Xi_2=\Xi_3=1$, whereas the middle two fusion paths have $\Xi_1=\Xi_3=1$ and $\Xi_2=2$.

Now, in order to map each fusion path to a binary tree, we contract all the trivial links, i.e., the ones that just push the subsystem to the next time step, without fusing it. For the considered example with $k=3$, we illustrate the links to be contracted in gray:
\begingroup
\tikzset{
    level distance=32pt,
    grow'=up,
    every tree node/.style = {align=center, anchor=north,nodecolor},
    edge from parent/.style = {draw,edgecolor,thick}
}
\begin{equation}
    \left\{\raisebox{-4em}{
        \begin{tikzpicture}[scale=0.8]
        \Tree [.$\btx$ [.$\btx$ [.$\btx$ [.$\bt$ ] [.$\bt$ ] ]
                      [.$\bt$ \edge[draw=mygray]; \textcolor{mygray}{$\bt$} ] ]      
                  [.$\bt$ \edge[draw=mygray]; [.\textcolor{mygray}{$\bt$} \edge[draw=mygray]; \textcolor{mygray}{$\bt$} ] ] ]
        \end{tikzpicture},~~
        \begin{tikzpicture}[scale=0.8]
        \Tree [.$\btx$ [.$\btx$ [.$\bt$ \edge[draw=mygray]; [.\textcolor{mygray}{$\bt$} ] ]
                      [.$\btx$ [.$\bt$ ] [.$\bt$ ] ] ]       
                  [.$\bt$ \edge[draw=mygray]; [.\textcolor{mygray}{$\bt$} \edge[draw=mygray]; \textcolor{mygray}{$\bt$} ] ] ]         ] 
        \end{tikzpicture},~~
        \begin{tikzpicture}[scale=0.8]
        \Tree [.$\btx$ [.$\btx$ [.$\bt$ \edge[draw=mygray]; \textcolor{mygray}{$\bt$} ] [.$\bt$ \edge[draw=mygray];  \textcolor{mygray}{$\bt$} ] ]
                  [.$\btx$ \edge[draw=mygray]; [.\textcolor{mygray}{$\btx$} [.$\bt$ ] [.$\bt$ ] ] ] ]          
        \end{tikzpicture},~~
        \begin{tikzpicture}[scale=0.8]
        \Tree [.$\btx$ [.$\btx$ \edge[draw=mygray];[.\textcolor{mygray}{$\btx$} [.$\bt$ ] [.$\bt$ ] ] ]
                  [.$\btx$ [.$\bt$ \edge[draw=mygray]; \textcolor{mygray}{$\bt$} ] [.$\bt$ \edge[draw=mygray]; \textcolor{mygray}{$\bt$} ] ] ]        \end{tikzpicture},~~
        \begin{tikzpicture}[scale=0.8]
        \Tree [.$\btx$ [.$\bt$ \edge[draw=mygray]; [.\textcolor{mygray}{$\bt$} \edge[draw=mygray]; [.\textcolor{mygray}{$\bt$} ] ] ]
                  [.$\btx$ [.$\btx$ [.$\bt$ ] [.$\bt$ ] ]
                      [.$\bt$ \edge[draw=mygray]; [.\textcolor{mygray}{$\bt$} ] ] ] ]         
        \end{tikzpicture},~~
        \begin{tikzpicture}[scale=0.8]
        \Tree [.$\btx$ [.$\bt$ \edge[draw=mygray]; [.\textcolor{mygray}{$\bt$} \edge[draw=mygray]; [.\textcolor{mygray}{$\bt$} ] ] ]
                  [.$\btx$ [.$\bt$ \edge[draw=mygray]; [.\textcolor{mygray}{$\bt$} ] ]
                      [.$\btx$ [.$\bt$ ] [.$\bt$ ] ] ] ]              
        \end{tikzpicture}} 
    \right\}
\end{equation}
\endgroup
This way, one obtains all binary trees with $k+1$ leaves, the set of which we will denote by $\T_{k+1}$, potentially with many different fusion paths resulting in the same binary tree. For example, when $k=3$, the third and fourth fusion path get mapped to the same binary tree, so $\T_4$ is given by
\begingroup
\tikzset{
    level distance=32pt,
    grow'=up,
    every tree node/.style = {align=center, anchor=north,nodecolor},
    edge from parent/.style = {draw,edgecolor,thick}
}
\begin{equation}
    \T_4=
    \left\{\raisebox{-4em}{
        \begin{tikzpicture}[scale=0.8]
        \Tree [.$\btx$ [.$\btx$ [.$\btx$ [.$\bt$ ] [.$\bt$ ] ]
                      [.$\bt$ ] ]      
                  [.$\bt$ ] ]
        \end{tikzpicture},~~
        \begin{tikzpicture}[scale=0.8]
        \Tree [.$\btx$ [.$\btx$ [.$\bt$ ]
                      [.$\btx$ [.$\bt$ ] [.$\bt$ ] ] ]       
                  [.$\bt$ ] ] ] 
        \end{tikzpicture},~~
        \begin{tikzpicture}[scale=0.8]
        \Tree [.$\btx$ [.$\btx$ [.$\bt$ ] [.$\bt$ ] ]
                  [.$\btx$  [.$\bt$ ] [.$\bt$ ] ] ]          
        \end{tikzpicture},~~
        \begin{tikzpicture}[scale=0.8]
        \Tree [.$\btx$ [.$\bt$ ]
                  [.$\btx$ [.$\btx$ [.$\bt$ ] [.$\bt$ ] ]
                      [.$\bt$ ] ] ]         
        \end{tikzpicture},~~
        \begin{tikzpicture}[scale=0.8]
        \Tree [.$\btx$ [.$\bt$ ]
                  [.$\btx$ [.$\bt$ ]
                      [.$\btx$ [.$\bt$ ] [.$\bt$ ] ] ] ]              
        \end{tikzpicture}}
    \right\}.
\end{equation}
\endgroup
Note that the excited subsystems from the fusion path are mapped to internal nodes of the tree, whereas unexcited subsystems become leaves. For a given binary tree $\v{t}$, let us then denote by $\Phi(\v{t})$ the set of all fusion paths corresponding to $\v{t}$. For the considered example, we have
\begingroup
\tikzset{
    level distance=32pt,
    grow'=up,
    every tree node/.style = {align=center, anchor=north,nodecolor},
    edge from parent/.style = {draw,edgecolor,thick}
}
\begin{equation}
    \Phi\left(
        \raisebox{-2.8em}{
            \begin{tikzpicture}[scale=0.8]
            \tikzset{grow'=up,level distance=32pt}
              \Tree [.$\btx$ [.$\btx$ [.$\bt$ ] [.$\bt$ ] ]
                      [.$\btx$ [.$\bt$ ] [.$\bt$ ]  ] ]  
            \end{tikzpicture}
        }
    \right) 
    = 
    \left\{
        \raisebox{-4em}{
            \begin{tikzpicture}[scale=0.8]
            \tikzset{grow'=up,level distance=32pt}
              \Tree [.$\btx$ [.$\btx$ [.$\bt$ $\bt$ ] [.$\bt$  $\bt$ ] ]
                      [.$\btx$ [.$\btx$ [.$\bt$ ] [.$\bt$ ] ] ] ]               
            \end{tikzpicture},~~
            \begin{tikzpicture}[scale=0.8]
            \tikzset{grow'=up,level distance=32pt}
              \Tree [.$\btx$ [.$\btx$ [.$\btx$ [.$\bt$ ] [.$\bt$ ] ] ]
                      [.$\btx$ [.$\bt$ $\bt$ ] [.$\bt$ $\bt$ ] ] ] 
            \end{tikzpicture}
        }
    \right\}
\end{equation}
\endgroup
and the remaining four sets contain each a single fusion path. Using $\Phi$, we can rewrite Eq.~\eqref{eq:fusion_sum2} as
\begin{equation}
    \label{eq:fusion_sum3}
    S^{1}_k = \sum_{\v{t}\in \T_{k+1}} \sum_{\v{f}\in \Phi(\v{t})} \left(\prod_{i=1}^k \Xi_i(\v{f})\right)^{-1}.
\end{equation}

So far we have discussed how, given $\v{f}\in\F^{1}_{k+1}$, to obtain $\v{t}\in\T_{k+1}$ such that $\v{f}\in\Phi(\v{t})$. Now, we will explain the reverse process, i.e., how, given a tree $\v{t}$, find all fusion paths belonging to $\Phi(\v{t})$. One starts at the excited root, and at each time step $i$ grows one of the internal nodes (i.e., one of '$\btx$') into two of its children, while trivially evolving all other nodes (i.e., just pushing them to the next time step, without changing). Crucially, at every time step $i$, the number of different ways to grow a tree (i.e., a number of ways to create distinct fusion paths) is equal to the number of internal nodes, so the number of excited states $\Xi_i(\v{f})$ at this time step. Therefore, when summing over all fusion paths corresponding to a given tree, the number of new ways to construct fusions at each step $i$ exactly cancels out the weight $\Xi_i(\v{f})^{-1}$, and so:
\begin{equation}
    \sum_{\v{f}\in \Phi(\v{t})} \left(\prod_{i=1}^k \Xi_i(\v{f})\right)^{-1} = 1.
    \label{eq:identity_fusion_paths_for_trees}
\end{equation}
As a result, Eq.~\eqref{eq:fusion_sum3} simplifies to
\begin{equation}
    \label{eq:fusion_sum4}
    S_k^{1} = \sum_{\v{t}\in \T_{k+1}} 1 = C_k,
\end{equation}
where we used the known fact concerning the number of binary trees with $k+1$ leaves~\cite{stanley2015catalan}.

Note that one can prove the crucial identity from Eq.~\eqref{eq:identity_fusion_paths_for_trees} in a more formal and rigorous way using the concepts appearing in Section~\ref{subsec:expression_for_v_inv_ij_diagonal_f1}. More precisely, given a binary tree $\mbt$, we use $V_I=V_I(\mbt)$ and $V_L=V_L(\mbt)$ to denote the set of its internal nodes and leaves, respectively, and use $O(\mbt)$ to denote the set of topological orders of the nodes in $V_I$. For any $v \in V_I$, let $C(v)$ be the set of its children. Then for any $S \subseteq V_I$, define 
\begin{align}
C(S)=\bigcup\limits_{v \in S} C(v) \setminus S, \quad    
E(S)=C(S)\cap V_I.
\end{align}

A topological order of a binary tree is a linear ordering of its internal nodes such that every node comes before its descendants in this order. Observe that for any $\mbt \in \mathcal{T}_{k+1}$, there is a one-to-one correspondence between the set of associated fusion paths $\Phi(\mbt)$
and the set of topological orders $O(\mbt)$. Furthermore, suppose a fusion path $\v{f}$ corresponds to a topological order $\vec v=(v_1, v_2, \dots, v_k)$. Then for all $i \in [k-1]$,  we have
\begin{align}
    \Xi_{i+1}(\v{f}) = |E(\lrcb{v_1, v_2,\dots, v_i})|.
\end{align}
Meanwhile, we always have $\Xi_{1}(\v{f})=1$. Therefore, proving Eq.~\eqref{eq:identity_fusion_paths_for_trees} amounts to showing that
\begin{align}
    \sum_{\vec v \in O(\mbt)} h(\vec v) = 1,
    \label{eq:identity_about_topological_order}
\end{align}
where
\begin{align}
h(v_1,v_2,\dots,v_{k}) = \lrb{\prod_{i=1}^{k-1} |E(\lrcb{v_1, v_2, \dots, v_i})|}^{-1}.
\end{align}

We prove Eq.~\eqref{eq:identity_about_topological_order} by induction on the number of leaves in $\mbt$. In the base case, $\mbt$ consists of a root and two leaves; in this case, both sides of Eq.~\eqref{eq:identity_about_topological_order} evaluate to $1$, and the identity holds.

Now suppose that Eq.~\eqref{eq:identity_about_topological_order} holds for all binary trees with at most $n+1$ leaves, where $n \ge 1$. Let $\mbt \in \mathcal{T}_{n+2}$ be arbitrary. Since $n+2 \ge 3$, the tree has depth at least $2$. Therefore, there exists an internal node $v$ with a parent $u$, and two children $w_1$ and $w_2$, both of which are leaves. 

Let $\mbt'$ be the binary tree obtained from $\mbt$ by deleting the nodes $w_1$, $w_2$, along with their adjacent edges $(v, w_1)$, $(v, w_2)$. Then, $\mbt'$ is a binary tree with $n+1$ leaves. So, by the inductive assumption, we have
\begin{align}
    \sum_{\vec v \in O(\mbt')} h'(\vec v)=1,
    \label{eq:inductive_assumption_topological_order}
\end{align}
where 
\begin{align}
h'(v_1,v_2,\dots,v_n) = \lrb{\prod_{i=1}^{n-1} |E'(\lrcb{v_1, v_2, \dots, v_i})|}^{-1},
\end{align}
with $E'$ being defined analogously to $E$ but with respect to the reduced tree $\mbt'$. 

Given any $\vec u =(u_1, u_2, \dots, u_n) \in O(\mbt')$, we map it to one or more elements of $O(\mbt)$. This mapping induces a partition of $O(\mbt)$. Specifically, since $\lrcb{u_1, u_2, \dots, u_n}$ contains all internal nodes of $\mbt'$, which includes $u$, there must exist $m \in [n]$ such that $u_m=u$. Then for $j\in\{m, \dots, n\}$, we define
\begin{align}
    \vec u^{(j)}=(u_1, u_2, \dots, u_j, v, u_{j+1}, \dots, u_n).
\end{align}
That is, $\vec u^{(j)}$ is obtained from $\vec u$ by inserting $v$ right after $u_j$. Clearly, $\vec u^{(j)}$ is a valid topological order of $\mbt$. Furthermore, let
\begin{align}
    \mathcal{Z}(\vec u) = \lrcb{\vec u^{(j)}:~j\in \lrcb{m, m+1, \dots, n}}.
\end{align}
Then one can verify that
\begin{align}
    O(\mbt) = \bigsqcup_{\vec u \in O(\mbt')} \mathcal{Z}(\vec u).
    \label{eq:partition_topological_order_set}
\end{align}
In other words, every topological order of $\mbt$ can be generated by extending a corresponding topological order of $\mbt_i'$ in the above manner.

Now for $i \in [n-1]$, let $x_i=|E'(\lrcb{u_1, u_2, \dots, u_i})|$.    
Moreover, for $j\in\{m, \dots, n\}$ and $l \in [n]$, let  
$\vec u^{(j)}_{1:l}$ be the set consisting of the first $l$ elements of $\vec u^{(j)}$.
Then, one can verify that for $j\in\{m,\dots, n-1\}$, 
\begin{align}
    |E(\vec u^{(j)}_{1:l})| = \begin{cases}
    x_l, \quad &\textrm{if}~1 \le l < m; \\
    x_l+1, \quad &\textrm{if}~m \le l \le j; \\
    x_{l-1}, \quad &\textrm{otherwise},        
    \end{cases}
\end{align}
while for $j=n$,
\begin{align}
    |E(\vec u^{(n)}_{1:l})| = \begin{cases}
    x_l, \quad &\textrm{if}~1 \le l < m; \\
    x_l+1, \quad &\textrm{if}~m \le l \le n-1; \\
    1, \quad &\textrm{otherwise}.       
    \end{cases}
\end{align}
Here $E$ is defined with respect to the original tree $\mbt$. The difference of $1$ in the middle case arises from the fact that $v$ is an internal node of $\mbt$, but not of $\mbt'$.

It follows that if $m<n$, then
\begin{align}
\sum_{\vec v \in \mathcal{Z}(\vec u)} h(\vec v)
&=\sum_{j=m}^{n} \lrb{\prod_{l=1}^n |E(\vec u^{(j)}_{1:l})|}^{-1}\\
&=    
\lrb{\prod_{l=1}^{m-1} \frac{1}{x_l}} \cdot 
\left [\sum_{j=m}^{n-1} 
\lrb{\prod_{l=m}^j \frac{1}{x_l+1} \cdot \prod_{l=j}^{n-1} \frac{1}{x_l}}  +\prod_{l=m}^{n-1}\frac{1}{x_l+1} \right ].
\end{align}
Using the identity, valid for arbitrary $y_1, y_2, \dots, y_q \in \R \setminus \lrcb{0, -1}$,
\begin{align}
\frac{1}{y_1} \frac{1}{y_2} \cdots \frac{1}{y_q}
=
\frac{1}{y_1+1} \frac{1}{y_2+1} \cdots \frac{1}{y_q+1}
+\sum_{j=1}^{q} \frac{1}{y_1} \frac{1}{y_2} \cdots \frac{1}{y_j}
\frac{1}{y_j+1}\frac{1}{y_{j+1}+1} \cdots \frac{1}{y_q+1},
\end{align}
which can be proven by induction on $q$, we conclude that
\begin{align}
\sum_{\vec v \in \mathcal{Z}(\vec u)} h(\vec v)
=\prod_{l=1}^{n-1} \frac{1}{x_l} 
=h'(\vec u).
\end{align}
On the other hand, if $m=n$, then 
\begin{align}
\mathcal{Z}(\vec u) = \lrcb{(u_1, u_2, \dots, u_n=u, v)}
\end{align}
and therefore we also have
\begin{align}
\sum_{\vec v \in \mathcal{Z}(\vec u)} h(\vec v)
=\prod_{l=1}^{n-1} \frac{1}{x_l} 
=h'(\vec u).
\end{align}
Then it follows from Eqs.~\eqref{eq:inductive_assumption_topological_order} and \eqref{eq:partition_topological_order_set} that
\begin{align}
    \sum_{\vec v \in O(\mbt)} h(\vec v)
    =   \sum_{\vec u \in O(\mbt')} \sum_{\vec v \in \mathcal{Z}(\vec u)} h(\vec v)
    =   \sum_{\vec u \in O(\mbt')}  h'(\vec u)
    = 1.
\end{align}
Thus, Eq.~\eqref{eq:identity_about_topological_order} also holds for the binary tree $\mbt$ with $n+2$ leaves.


\subsection{Weighted fusion sums\texorpdfstring{ $S_k^{j}$}{} and bounds for \texorpdfstring{$\|\eta_j\|$}{higher components of the Carleman error}}

We will now show how to generalize the reasoning from the previous section to calculate $S_k^{j}$ and use it to bound $\|\eta_j\|$ for general $j$. First, let us introduce the Catalan $j$-fold convolution $C_k^{j}$:
\begin{equation}
    C_k^{j} := \sum_{\substack{k_1+\dots+k_j = k \\ k_1,\dots, k_j \geq 0 }} C_{k_1} \dots C_{k_j}. 
\end{equation}
It is known (see, e.g., Eq.~(14) of Ref.~\cite{larcombe2003catalan}) that the above can be expressed via the following simple formula:
\begin{equation}
    C_k^{j} = \frac{j}{k+j} \binom{2k+j-1}{k}.
\end{equation}
We then have the following lemma that generalizes Lemma~\ref{lem:fusion_sums}.
\begin{lemma}
    \label{lem:fusion_sums_general}
    The weighted fusion sums $S_k^{j}$, defined in Eq.~\eqref{eq:fusion_sum}, are given by:
    \begin{equation}
        S_k^{j} = C^{j}_{k-j+1} = \frac{j}{k+1} \binom{2k-j+1}{k-j+1}.
    \end{equation}
\end{lemma}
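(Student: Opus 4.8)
\textbf{Proof plan for Lemma~\ref{lem:fusion_sums_general}.}

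The plan is to generalize the combinatorial bijection developed in Appendix~\ref{app:weighted_fusion_for_tree} for $S_k^{1}$, replacing binary trees by binary \emph{forests}. Recall from Eq.~\eqref{eq:fusion_sum} that $S_k^{j}$ is a sum over all fusion paths that start with $k+1$ unexcited subsystems and, through $k-j+1$ fusion steps (at time steps $i=k, k-1, \dots, j$), end with $j$ subsystems; each fusion path is weighted by the inverse of the product of the number of excited subsystems present after each fusion. Just as in the $j=1$ case, I would first contract all trivial (non-fusing) links in each fusion path. Since the process now terminates with $j$ (rather than one) subsystems, contracting trivial links produces a binary \emph{forest} consisting of $j$ binary trees whose total number of leaves equals $k+1$ — that is, an element of $\mathcal{T}^j_{k+1}$ in the notation of Section~\ref{subsec:expression_for_v_inv_ij_diagonal_f1}. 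The excited subsystems map to internal nodes (one per fusion) and the unexcited ones to leaves; the total number of internal nodes is $k+1-j$, matching the number of fusions.

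The key step is the forest analogue of the identity Eq.~\eqref{eq:identity_fusion_paths_for_trees}: for a fixed binary forest $\mathbf{T}=(\mathbf{t}_1,\dots,\mathbf{t}_j)\in\mathcal{T}^j_{k+1}$, the sum over all fusion paths $\mathbf{f}$ consistent with $\mathbf{T}$ of the inverse product of excitation counts equals $1$. I would argue this by the same "growing from excitations" picture: starting from the $j$ roots of the forest (so the initial excitation count is the number of nontrivial trees, but more carefully one tracks excitations step by step), at each fusion step the number of ways to extend the fusion path equals precisely the number of currently-excited subsystems available to grow, which is exactly the weight that gets inverted. A clean way to make this rigorous, paralleling the end of Appendix~\ref{app:weighted_fusion_for_tree}, is to establish that fusion paths consistent with $\mathbf{T}$ are in bijection with tuples of interleaved topological orders of the constituent trees, and then prove $\sum_{\mathbf{f}\in\Phi(\mathbf{T})}\bigl(\prod_i \Xi_i(\mathbf{f})\bigr)^{-1}=1$ by induction on the number of internal nodes. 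The inductive step removes a lowest internal node (whose two children are leaves) from whichever tree contains it, and uses the same telescoping identity for $\frac{1}{y_1}\cdots\frac{1}{y_q}$ already proven in the single-tree case; the only new bookkeeping is that one must track in which of the $j$ trees the removal occurs, but since the excitation count is global (summed across all trees), the argument goes through unchanged.

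Granting this per-forest identity, $S_k^{j}$ collapses to $|\mathcal{T}^j_{k+1}|$, the number of binary forests with $j$ trees and $k+1$ leaves total. This is a standard count: a binary forest of $j$ trees with $k+1$ leaves decomposes as an ordered $j$-tuple of binary trees with $k_1,\dots,k_j$ leaves where $k_1+\cdots+k_j=k+1$, so
\begin{align}
|\mathcal{T}^j_{k+1}| = \sum_{\substack{k_1+\cdots+k_j=k+1\\ k_l\ge 1}} C_{k_1-1}\cdots C_{k_j-1} = \sum_{\substack{m_1+\cdots+m_j=k+1-j\\ m_l\ge 0}} C_{m_1}\cdots C_{m_j} = C^{j}_{k+1-j},
\end{align}
using that a binary tree with $m+1$ leaves is counted by $C_m$ (as recalled in Appendix~\ref{app:weighted_fusion_for_tree}). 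Finally I substitute the closed form $C^{j}_n = \frac{j}{n+j}\binom{2n+j-1}{n}$ with $n=k+1-j$, which simplifies to $\frac{j}{k+1}\binom{2k-j+1}{k-j+1}$, giving the claimed formula.

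\textbf{Main obstacle.} The routine parts are the forest-counting at the end and the substitution into the closed form. The real work — and the step most likely to require care — is the forest version of the identity $\sum_{\mathbf{f}\in\Phi(\mathbf{T})}\bigl(\prod_i\Xi_i(\mathbf{f})\bigr)^{-1}=1$: one must set up the bijection between fusion paths and interleavings of topological orders of the $j$ trees correctly, and verify that the inductive telescoping argument from the single-tree case is unaffected by the extra freedom of choosing which tree to grow at each step. I expect this to reduce cleanly to the already-established single-tree identity, but writing it so that the induction hypothesis is applied to the right sub-forest is the delicate point.
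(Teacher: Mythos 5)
Your proposal follows essentially the same route as the paper's proof: contract trivial links to map fusion paths onto binary forests with $j$ trees and $k+1$ leaves, establish the per-forest identity $\sum_{\mathbf{f}\in\Phi(\mathbf{T})}\bigl(\prod_i\Xi_i(\mathbf{f})\bigr)^{-1}=1$ by the "growing from excitations" argument (made rigorous via topological orders and the same telescoping induction as in the single-tree case), and then count forests via the $j$-fold Catalan convolution. The paper likewise reduces the forest identity to the tree case and omits the details, so your identification of that step as the delicate point and your plan for handling it match the paper's treatment.
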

Postponing for a moment the proof, let us upper bound $S_k^{j}$ using the following known bound for the binomial coefficient:
\begin{equation}
   \binom{n}{k} \leq \left(\frac{e n}{k}\right)^k.
\end{equation}
Employing it, we have
\begin{equation}
    S_k^{j} \leq \frac{ e^{k+1-j} j}{k+1} \left(\frac{2k+1-j}{k+1-j}\right)^{k+1-j} = \frac{ e^{k+1-j} j}{k+1} \left[\left(1 + \frac{1}{\frac{k+1-j}{k}}\right)^{\frac{k+1-j}{k}}\right]^k \leq \frac{ 2^k e^{k+1-j} j}{k+1}.
\end{equation}
Using Lemma~\ref{lem:fusion_sums_general} with the above in Eq.~\eqref{eq:error_cons_2}, we arrive at
\begin{align}
    \|\eta_j(s_{j-1})\| &\leq  2^k \|Q\|^j  \|\tilde{x}_{\mathrm{max}}\|^{k+1} \left(\frac{e\|\tilde{F}_2\|}{\delta(F_1)}\right)^{k+1-j}\frac{j}{k+1}\\
    & = \left( 2 \|\tilde{x}_{\mathrm{max}}\| \|Q\|  \right)^j \left(\frac{2e\|\tilde{x}_{\mathrm{max}}\|\|\tilde{F}_2\|}{\delta(F_1)}\right)^{k+1-j}\frac{j}{2(k+1)}.
\end{align}
Now, we note that employing the rescaling degree of freedom (recall Eq.~\eqref{eq:rescalingrelations}), the first term can always be made smaller than the second one (which is invariant under rescaling). More precisely, choosing 
\begin{equation}
    \gamma = p \gamma_0,\qquad \gamma_0=\frac{e\|\tilde{F}_2\|}{\delta(F_1)\|Q\|},\qquad p\in(0,1],
\end{equation}
we obtain
\begin{align}
    \label{eq:cons_err_final}
    \|\eta_j(s_{j-1})\| &\leq \frac{j}{2(k+1)}p^j R_\delta^{k+1},
\end{align}
where
\begin{equation}
    R_\delta :=\frac{2e\|\tilde{x}_{\mathrm{max}}\|\|\tilde{F}_2\|}{\delta(F_1)}.
\end{equation}
Clearly, if $R_\delta<1$, the Carleman error in every block $j$ can be made arbitrarily small, and also the total Carleman error $\|\eta\|$ goes to zero as $k\to\infty$.

We now proceed to the proof of Lemma~\ref{lem:fusion_sums_general}. The proof is analogous to the proof of Lemma~\ref{lem:fusion_sums}. First, we will explain how every fusion path from $k+1$ subsystems to $j$ subsystems can be mapped to a binary forest consisting of $j$ binary trees with a total number of $k+1$ leaves. Then, we will argue that the fusion paths corresponding to a given binary forest can be seen as growing from excitations. As a result, the sum over fusion paths corresponding to a given forest, weighted by the product of excitations, is always equal to 1. Thus, $S^{j}_k$ is given by the number of binary forests with $j$ trees and a total of $k+1$ leaves, which is known to be the $j$-th fold convolution of the Catalan sequence evaluated at $k-j+1$ (see, e.g., Ref.~\cite{dimitrov2024simple}).

We will employ the graphical formalism introduced in the previous section, choosing $k=3$ and $j=2$ as an example. We start with $k+1$ unexcited subsystems and at each time step $i\in\{k,\dots,j\}$ we can fuse two neighboring subsystems into one excited subsystem. We will denote the set of such $k!/(j-1)!$ fusion paths from $k+1$ to $j$ subsystems by $\F^{j}_{k+1}$, with $\F^{2}_4$ given by:
\begingroup
\tikzset{
    level distance=32pt,
    grow'=up,
    every tree node/.style = {align=center, anchor=north,nodecolor},
    edge from parent/.style = {draw,edgecolor,thick}
}
\begin{equation}
    \F^{2}_4=
    \left\{\raisebox{-2.8em}{
        \begin{tikzpicture}[scale=0.8]
            \Tree [.$\btx$ [.$\btx$ [.$\bt$ ] [.$\bt$ ] ]
                          [.$\bt$ $\bt$ ] ] 
        \end{tikzpicture}
        \begin{tikzpicture}[scale=0.8]
            \Tree   [.$\bt$ [.$\bt$ $\bt$ ] ] ]
        \end{tikzpicture},~~
        \begin{tikzpicture}[scale=0.8]
            \Tree [.$\btx$ [.$\bt$ [.$\bt$ ] ]
                          [.$\btx$ [.$\bt$ ] [.$\bt$ ] ] ]  
        \end{tikzpicture}
        \begin{tikzpicture}[scale=0.8]
            \Tree  [.$\bt$ [.$\bt$ $\bt$ ] ] ]      
        \end{tikzpicture},~~
        \begin{tikzpicture}[scale=0.8]
            \Tree  [.$\btx$ [.$\bt$ $\bt$ ] [.$\bt$  $\bt$ ] ]      
        \end{tikzpicture}
        \begin{tikzpicture}[scale=0.8]
            \Tree  [.$\btx$ [.$\btx$ [.$\bt$ ] [.$\bt$ ] ] ]        
        \end{tikzpicture},~~
        \begin{tikzpicture}[scale=0.8]
            \Tree [.$\btx$ [.$\btx$ [.$\bt$ ] [.$\bt$ ] ] ]       
        \end{tikzpicture}
        \begin{tikzpicture}[scale=0.8]
            \Tree [.$\btx$ [.$\bt$ $\bt$ ] [.$\bt$ $\bt$ ] ]           
        \end{tikzpicture},~~
        \begin{tikzpicture}[scale=0.8]
            \Tree [.$\bt$ [.$\bt$ [.$\bt$ ] ] ]    
        \end{tikzpicture}
        \begin{tikzpicture}[scale=0.8]
            \Tree [.$\btx$ [.$\btx$ [.$\bt$ ] [.$\bt$ ] ]
                          [.$\bt$ [.$\bt$ ] ] ]        
        \end{tikzpicture},~~
        \begin{tikzpicture}[scale=0.8]
            \Tree [.$\bt$ [.$\bt$ [.$\bt$ ] ] ]
        \end{tikzpicture}
        \begin{tikzpicture}[scale=0.8]
            \Tree [.$\btx$ [.$\bt$ [.$\bt$ ] ]
                          [.$\btx$ [.$\bt$ ] [.$\bt$ ] ] ] 
        \end{tikzpicture}
        } 
    \right\}.
\end{equation}
\endgroup
Using this notation, we can rewrite Eq.~\eqref{eq:fusion_sum} for general $j$ as
\begin{equation}
    \label{eq:fusion_sum2_j}
    S^{j}_k = \sum_{\v{f}\in\F^{j}_{k+1}} \left(\prod_{i=j}^k \Xi_i(\v{f})\right)^{-1},
\end{equation}
where again $\Xi_i(\v{f})$ denotes the number of excitations at time step $i$ of a fusion path $\v{f}$. For example, the first two and last two fusion paths in $\F^{2}_4$ have $\Xi_2=\Xi_3=1$, whereas the middle two fusion paths have $\Xi_3=1$ and $\Xi_2=2$.

Now, in order to map each fusion path to a binary forest, we contract all the trivial links, i.e., the ones that just push the subsystem to the next time step, without fusing it. For the considered example with $k=3$ and $j=2$, we illustrate the links to be contracted in gray:
\begingroup
\tikzset{
    level distance=32pt,
    grow'=up,
    every tree node/.style = {align=center, anchor=north,nodecolor},
    edge from parent/.style = {draw,edgecolor,thick}
}
\begin{equation}
    \left\{\raisebox{-2.8em}{
        \begin{tikzpicture}[scale=0.8]
            \Tree [.$\btx$ [.$\btx$ [.$\bt$ ] [.$\bt$ ] ]
                          [.$\bt$ \edge[draw=mygray]; \textcolor{mygray}{$\bt$} ] ]      
        \end{tikzpicture}
        \begin{tikzpicture}[scale=0.8]
            \Tree [.$\bt$ \edge[draw=mygray]; [.\textcolor{mygray}{$\bt$} \edge[draw=mygray]; \textcolor{mygray}{$\bt$} ] ] 
        \end{tikzpicture},~~
        \begin{tikzpicture}[scale=0.8]
            \Tree  [.$\btx$ [.$\bt$ \edge[draw=mygray]; [.\textcolor{mygray}{$\bt$} ] ]
                          [.$\btx$ [.$\bt$ ] [.$\bt$ ] ] ]       
        \end{tikzpicture}
        \begin{tikzpicture}[scale=0.8]
            \Tree [.$\bt$ \edge[draw=mygray]; [.\textcolor{mygray}{$\bt$} \edge[draw=mygray]; \textcolor{mygray}{$\bt$} ] ] ]         
        \end{tikzpicture},~~
        \begin{tikzpicture}[scale=0.8]
            \Tree [.$\btx$ [.$\bt$ \edge[draw=mygray]; \textcolor{mygray}{$\bt$} ] [.$\bt$ \edge[draw=mygray];  \textcolor{mygray}{o} ] ]        
        \end{tikzpicture}
        \begin{tikzpicture}[scale=0.8]
            \Tree [.$\btx$ \edge[draw=mygray]; [.\textcolor{mygray}{x} [.$\bt$ ] [.$\bt$ ] ] ]     
        \end{tikzpicture},~~
        \begin{tikzpicture}[scale=0.8]
            \Tree [.$\btx$ \edge[draw=mygray];[.\textcolor{mygray}{x} [.$\bt$ ] [.$\bt$ ] ] ]
        \end{tikzpicture}
        \begin{tikzpicture}[scale=0.8]
            \Tree [.$\btx$ [.$\bt$ \edge[draw=mygray]; \textcolor{mygray}{o} ] [.$\bt$ \edge[draw=mygray]; \textcolor{mygray}{$\bt$} ] ] ]      
        \end{tikzpicture},~~
        \begin{tikzpicture}[scale=0.8]
            \Tree [.$\bt$ \edge[draw=mygray]; [.\textcolor{mygray}{$\bt$} \edge[draw=mygray]; \textcolor{mygray}{$\bt$} ] ] 
        \end{tikzpicture}
        \begin{tikzpicture}[scale=0.8]
            \Tree [.$\btx$ [.$\btx$ [.$\bt$ ] [.$\bt$ ] ]
                      [.$\bt$ \edge[draw=mygray]; [.\textcolor{mygray}{$\bt$} ] ] ] 
        \end{tikzpicture},~~
        \begin{tikzpicture}[scale=0.8]
            \Tree [.$\bt$ \edge[draw=mygray]; [.\textcolor{mygray}{$\bt$} \edge[draw=mygray]; [.\textcolor{mygray}{$\bt$} ] ] ]       
        \end{tikzpicture}
        \begin{tikzpicture}[scale=0.8]
            \Tree [.$\btx$ [.$\bt$ \edge[draw=mygray]; [.\textcolor{mygray}{$\bt$} ] ]
                      [.$\btx$ [.$\bt$ ] [.$\bt$ ] ] ] 
        \end{tikzpicture}
    } \right\}
\end{equation}
\endgroup
This way, one obtains all binary forests with $j$ tress and a total of $k+1$ leaves, the set of which we will denote by $\T^{j}_{k+1}$, potentially with many different fusion paths resulting in the same binary forest. For example, when $k=3$ and $j=2$, the third and fourth fusion path get mapped to the same binary forest, so $\T^{2}_4$ is given by
\begingroup
\tikzset{
    level distance=32pt,
    grow'=up,
    every tree node/.style = {align=center, anchor=north,nodecolor},
    edge from parent/.style = {draw,edgecolor,thick}
}
\begin{equation}
    \T_4^{2}=
    \left\{\raisebox{-2.8em}{
        \begin{tikzpicture}[scale=0.8]
            \Tree [.$\btx$ [.$\btx$ [.$\bt$ ] [.$\bt$ ] ]
                      [.$\bt$ ] ]      
                    ]
            \end{tikzpicture}
        \begin{tikzpicture}[scale=0.8]
            \Tree [.$\bt$ ]
        \end{tikzpicture},~~
        \begin{tikzpicture}[scale=0.8]
            \Tree [.$\btx$ [.$\bt$ ]
                      [.$\btx$ [.$\bt$ ] [.$\bt$ ] ] ]       
        \end{tikzpicture}
        \begin{tikzpicture}[scale=0.8]
            \Tree [.$\bt$ ]
        \end{tikzpicture},~~
        \begin{tikzpicture}[scale=0.8]
            \Tree [.$\btx$ [.$\bt$ ] [.$\bt$ ] ]
        \end{tikzpicture}
        \begin{tikzpicture}[scale=0.8]
            \Tree [.$\btx$ [.$\bt$ ] [.$\bt$ ] ]
        \end{tikzpicture},~~
        \begin{tikzpicture}[scale=0.8]
            \Tree [.$\bt$ ]
        \end{tikzpicture}
        \begin{tikzpicture}[scale=0.8]
            \Tree [.$\btx$ [.$\btx$ [.$\bt$ ] [.$\bt$ ] ]
                      [.$\bt$ ] ]      
                     ]
        \end{tikzpicture},~~
        \begin{tikzpicture}[scale=0.8]
            \Tree [.$\bt$ ]
        \end{tikzpicture}
        \begin{tikzpicture}[scale=0.8]
            \Tree [.$\btx$ [.$\bt$ ]
                      [.$\btx$ [.$\bt$ ] [.$\bt$ ] ] ]       
        \end{tikzpicture}     
    }\right\}.
\end{equation}
\endgroup
Note that the excited subsystems from the fusion path are mapped to internal nodes of the trees, whereas unexcited subsystems become leaves. Also note that trivial trees with zero internal nodes and a single leaf are also included in a forest. For a given binary forest $\v{T}$, let us then denote by $\Phi(\v{T})$ the set of all fusion paths corresponding to $\v{T}$. For the considered example, we have
\begingroup
\tikzset{
    level distance=32pt,
    grow'=up,
    every tree node/.style = {align=center, anchor=north,nodecolor},
    edge from parent/.style = {draw,edgecolor,thick}
}
\begin{equation}
    \Phi\left(
        \raisebox{-1.4em}{
            \begin{tikzpicture}[scale=0.8]
                  \Tree [.$\btx$ [.$\bt$ ] [.$\bt$ ] ] 
            \end{tikzpicture}
            \begin{tikzpicture}[scale=0.8]
                  \Tree [.$\btx$ [.$\bt$ ] [.$\bt$ ] ] 
            \end{tikzpicture}
        }
        \right) 
        = 
        \left\{
            \raisebox{-2.8em}{
            \begin{tikzpicture}[scale=0.8]
                \Tree  [.$\btx$ [.$\bt$ $\bt$ ] [.$\bt$  $\bt$ ] ]      
            \end{tikzpicture}
            \begin{tikzpicture}[scale=0.8]
                \Tree  [.$\btx$ [.$\btx$ [.$\bt$ ] [.$\bt$ ] ] ]        
            \end{tikzpicture},~~
            \begin{tikzpicture}[scale=0.8]
                \Tree [.$\btx$ [.$\btx$ [.$\bt$ ] [.$\bt$ ] ] ]       
            \end{tikzpicture}
            \begin{tikzpicture}[scale=0.8]
                \Tree [.$\btx$ [.$\bt$ $\bt$ ] [.$\bt$ $\bt$ ] ]           
            \end{tikzpicture}
            }
        \right\}
\end{equation}
\endgroup
and the remaining four sets contain each a single fusion path. Using~$\Phi$, we can rewrite Eq.~\eqref{eq:fusion_sum2_j} as
\begin{equation}
    \label{eq:fusion_sum3_j}
    S^{j}_k = \sum_{\v{T}\in \T^{j}_{k+1}} \sum_{\v{f}\in \Phi(\v{T})} \left(\prod_{i=j}^k \Xi_i(\v{f})\right)^{-1}.
\end{equation}

So far we have discussed how, given $\v{f}\in\F^{j}_{k+1}$, to obtain $\v{T}\in\T^{j}_{k+1}$ such that $\v{f}\in\Phi(\v{T})$. Now, we will explain the reverse process, i.e., how, given a forest $\v{T}$, find all fusion paths belonging to $\Phi(\v{T})$. One starts at the excited roots, and at each time step $i$ grows one of the internal nodes of one of the trees into two of its children, while trivially evolving all other nodes (i.e., just pushing them to the next time step, without changing). Crucially, at every time step $i$, the number of different ways to grow a forest (i.e., a number of ways to create distinct fusion paths) is equal to the number of internal nodes, so the number of excited states $\Xi_i(\v{f})$ at this time step. Therefore, when summing over all fusion paths corresponding to a given forest, the number of new ways to construct fusions at each step $i$ exactly cancels out the weight $\Xi_i(\v{f})^{-1}$, and so:
\begin{equation}
    \sum_{\v{f}\in \Phi(\v{T})} \left(\prod_{i=j}^k \Xi_i(\v{f})\right)^{-1} = 1.
    \label{eq:identity_fusion_paths_for_forests}
\end{equation}
As a result, Eq.~\eqref{eq:fusion_sum3_j} simplifies to
\begin{equation}
    \label{eq:fusion_sum4_j}
    S_k^{j} = \sum_{\v{T}\in \T^{j}_{k+1}} 1 = C^{j}_{k-j+1},
\end{equation}
where we used the known fact concerning the number of binary forests with $j$ trees and a total of $k+1$ leaves~\cite{dimitrov2024simple}.

Note that Eq.~\eqref{eq:identity_fusion_paths_for_forests} can be rigorously proved using an approach analogous to that of Eq.~\eqref{eq:identity_fusion_paths_for_trees}, as presented in Appendix~\ref{app:weighted_fusion_for_tree}. Specifically, the identity can be reformulated in terms of topological orders on the internal nodes of the forest, i.e., any linear ordering in which each node precedes all of its descendants. This reformulation enables a proof by induction on the total number of leaves in the forest, mirroring the strategy used in Appendix~\ref{app:weighted_fusion_for_tree}. As the argument closely parallels that of Eq.~\eqref{eq:identity_fusion_paths_for_trees}, we omit the details and leave the proof to the reader.


\section{Proof of Corollary~\ref{cor:Carleman_cons_extension}}
\label{app:thm_cons_ext}

First, denote by $\eta_j(t) = z(t)^{\ot j} - y^{[j]}(t)$ the Carleman truncation error vector and by $\eta_j'(t) = x(t)^{\ot j} - \D_j y^{[j]}(t) $ the error of approximating the solution of the original problem with the Carleman solution after discarding the last entry of each of $j$ subsystems. Then, from the triangle inequality, we clearly have $\|\eta_j'(t)\|\leq\|\eta_j(t)\|$. Hence, we will focus on bounding $\|\eta_j(t)\|$ to prove the claimed result.

Now, note that $G_1$ is just $F_1$ embedded in a space of dimension $N+1$, whereas $G_2$ is $F_2$ embedded in that space plus a term that transforms only the 1-dimensional extra subspace of the two copies of the extended system into the original $N$-dimensional subspace. Thus, $G_1$ is diagonalized by $Q\oplus 1$ and has the same spectrum as $F_1$ with an extra zero corresponding to the state $\ket{N+1}$. Moreover, this zero eigenvalue corresponds to a conserved quantity as $\bra{N+1}G_2=0$, and all conserved quantities of the original system are also conserved by the new system. What is more,
\begin{equation}
    \|(Q\oplus 1)^{-1} z(t)\| \leq \sqrt{\gamma^2\|Q^{-1} x(t)\|^2+\gamma^2 \upsilon^2} \leq \gamma \sqrt{\|\tilde{x}_{\mathrm{max}}\|^2+\upsilon^2},
\end{equation}
where here and in what follows we add explicit dependence on the rescaling parameter $\gamma$ to improve the clarity of the derivation. Therefore, we are exactly in the scenario described by Theorem~\ref{thm:Carleman_cons} with $F_0=0$, but with $G_1$, $G_2$ and $V\oplus 1$ instead of $F_1$, $F_2/\gamma$ and $Q$, and with $\gamma\|\tilde{x}_{\mathrm{max}}\|$ replaced by $\gamma\sqrt{\|\tilde{x}_{\mathrm{max}}\|^2+ a^2}$. 

Making the above substitutions in Eq.~\eqref{eq:cons_err_final} and choosing $\gamma=p\gamma_0$ as in the statement of the theorem, we get
\begin{equation}
    \label{eq:error_cons_4}
    \|\eta_j'(t)\|\leq \|\eta_j(t)\| \leq   \frac{j}{2(k+1)}p^j \left(\frac{2e\gamma \sqrt{\|\tilde{x}_{\mathrm{max}}\|^2+ a^2} \|\tilde{G}_2\|}{\delta(G_1)} \right)^{k+1},
\end{equation}
where $\tilde{G}_2:=(Q\oplus 1)^{-1} G_2(Q\oplus 1)^{\otimes 2}$. We can now use the fact that $F_2$, and so $G_2$, can always be chosen to be symmetric under the permutation of the two systems it acts upon, which results in
\begin{equation}
    \|\tilde{G}_2\|\ = \max_{\|\xi\|=1}  \|\tilde{G}_2 \xi^{\otimes 2}\|. 
\end{equation}
Then, writing $\xi=[\zeta,b]$, where $\zeta$ is an $N$-dimensional vector and $b$ is a complex number, we have
\begin{align}
    \|\tilde{G}_2\|\ &= \max_{\|\zeta\|=\sqrt{1-|b|^2}}  \left\| \frac{\tilde{F}_2}{\gamma} \zeta^{\otimes 2} + \frac{b^2}{\gamma^2 \upsilon^2} \gamma \tilde{F}_0\right\| \leq (1-|b|^2) \frac{\|\tilde{F}_2\|}{\gamma} + \frac{|b|^2}{\gamma \upsilon^2}\|\tilde{F}_0\| \leq \frac{1}{\gamma}\max\left\{\|\tilde{F}_2\|,\frac{\|\tilde{F}_0\|}{\upsilon^2}\right\}. 
\end{align}
Combining the above with the fact that $\delta(G_1)=\delta(F_1)$, we get 
\begin{equation}
     \|\eta_j'(t)\|\leq  \frac{j}{2(k+1)}p^j \left(\frac{2e\sqrt{\|\tilde{x}_{\mathrm{max}}\|^2+\upsilon^2} \max\left\{\|\tilde{F}_2\|,\frac{\|\tilde{F}_0\|}{\upsilon^2}\right\}}{\delta(F_1)} \right)^{k+1},
\end{equation}
which is the claimed result. Moreover, since we have a freedom to choose $\upsilon$, we can make an optimal choice of $\upsilon=\sqrt{\|\tilde{F}_0\|/\|\tilde{F}_2\|}$ to minimize the upper bound, which results in
\begin{equation}
    \|\eta_j'(t)\| \leq   \frac{j}{2(k+1)}p^j \left(\frac{2e\sqrt{\|\tilde{x}_{\mathrm{max}}\|^2+\frac{\|\tilde{F}_0\|}{\|\tilde{F}_2\|}} \|\tilde{F}_2\|}{\delta(F_1)} \right)^{k+1}.
\end{equation}

\section{Proof of Theorem~\ref{thm:Carleman_poly_cons}}
\label{apps:Carleman_poly_cons}
We first show that every set of conserved quantities that are either linear or quadratic in $x$ can be realized as linearly conserved quantities in a larger dimensional system with variable $\check{x}$. The situation where we have higher-order polynomial conserved quantities follows from this either by a simple generalization of the scheme, or (less efficiently) by iteratively applying the quadratic embedding to reduce the order of the terms. For this, we recall the terms $A_{j,j+1}$ and $A_{j,j}$ as defined in Section~\ref{sec:Core-Carleman}. 

Given that $\dot{x} = F_1 x + F_2 x^{\otimes 2}$, it follows that
\begin{equation}\label{eq:2nd-level-Carleman}
    \frac{d}{dt} \begin{bmatrix} x(t) \\ x(t)^{\otimes 2}\end{bmatrix} = \begin{bmatrix}
        A_{1,1} & A_{1,2} \\ 0 & A_{2,2}
    \end{bmatrix} \begin{bmatrix} x(t) \\ x(t)^{\otimes 2}\end{bmatrix} + \begin{bmatrix} 0 \\  A_{2,3} x^{\otimes 3} \end{bmatrix}.
\end{equation}
We consider the direct sum space $\mathcal{V}=\mathbb{C}^N \oplus \mathbb{C}^{N^2}$, and for any $x \in \mathbb{C}^N$ we can embed the vector into this space as $x \rightarrow \check{x} \in \mathcal{V}$, where
\begin{equation}
    \check{x} = x \oplus x^{\otimes 2} = \ket{1,x } + \ket{2,x^{\otimes 2}} = \sum_{i=1}^N x_i \ket{1,i} + \sum_{i,j=1}^N x_i x_j \ket{2,ij}.
\end{equation} 
Here, we label the two sectors by $1$ and $2$ inside the kets, since the full space is not a tensor product space. Next, observe that $\check{x}^{\otimes 2}$ can be written as
\begin{equation}
    \check{x}^{\otimes 2} = \ket{1,x }\otimes \ket{1,x } + \ket{1,x }\otimes \ket{2,x^{\otimes 2}} + \ket{2,x^{\otimes 2}}\otimes \ket{1,x} + \ket{2,x^{\otimes 2}}\otimes \ket{2,x^{\otimes 2}}.
\end{equation}
Note that the subspace $\mbox{span} \{\ket{1,x }\otimes \ket{2,x^{\otimes 2}} : x \in \mathbb{C}^N \} $ is isomorphic to $\mbox{span} \{x^{\otimes 3} : x \in \mathbb{C}^N\}$. Thus, letting $\mathcal{V}_3:=\mbox{span} \{\ket{1,i_1 }\otimes \ket{2,i_2i_3} : i_1,i_2,i_3 \in [N] \}$, we may define a linear operator $\check{F}_2$ on the full space $\mathcal{V}^{\otimes 2}$, so that it is zero on $\mathcal{V}_3^\perp$ and its action on $\mathcal{V}_3$ gives
\begin{equation}
    \check{F}_2  \check{x}^{\otimes 2} = \ket{2, A_{2,3} x^{\otimes 3}} \mbox{ for all } x \in \mathbb{C}^N.
\end{equation}
We also consider the initial conditions $\check{x}(0) = \ket{1,x(0)} + \ket{2,(x(0))^{\otimes 2}}$.
Given the above, we see that Eq.~\eqref{eq:2nd-level-Carleman} can be rewritten as
\begin{equation}\label{eq:lifted-dynamics}
    \frac{d}{dt} \check{x}(t) = \check{F}_1 \check{x} + \check{F}_2 \check{x}^{\otimes 2}, \mbox{ for all } t \ge 0,
\end{equation}
with initial conditions $\check{x}(0)$, $\check{F}_2$ defined as above, and $\check{F}_1$ defined as the linear action on $\mathcal{V}$ arising from the first matrix term in Eq.~\eqref{eq:2nd-level-Carleman}, which is given explicitly as
\begin{equation}
    \check{F}_1 \check{x} = \ket{1,A_{1,1}x} + \ket{1,A_{1,2}x^{\otimes 2}} + \ket{2,A_{2,2}x^{\otimes 2}}.
\end{equation}

Within this setting, we consider both linear quantities
\begin{equation}
    \mathcal{Q}^{(m)}_1 := \langle 1,q_1^{(m)}\ket{1,x}, \quad \ket{1, q_1^{(m)}} = \sum_{j=1}^Nq_{1,j}^{(m)} \ket{1,j},
\end{equation}
with $m=1,2,\dots, M_1$, and also \emph{quadratic-order} homogeneous quantities,
\begin{equation}
    \mathcal{Q}^{(m)}_2 := \langle 2,q_2^{(m)}| 2,x^{\otimes 2} \rangle, \quad \ket{2,q_2^{(m)}} = \sum_{i,j=1}^{N}q_{2,ij}^{(m)} \ket{2,ij},
\end{equation}
with $m=1,2,\dots, M_2$. Next, define
\begin{equation}
    \check{\mathcal{Q}}^{(m)}_1 := \left\{ 
    \begin{array}{ll}
        \langle 1,q_1^{(m)}| 1,x^{\otimes 2} \rangle,\quad& \mbox{ for } m=1,2,\dots M_1\\
        \langle 2,q_2^{(m-M_1)}| 2,x^{\otimes 2} \rangle,\quad&\mbox{ for } m= M_1+1, \dots M_1+M_2.
    \end{array}
    \right.
\end{equation}
Therefore, in the embedding space $\mathcal{V}$, the quantities $\check{\mathcal{Q}}_1^{(m)}$ correspond to $\check{M}_1:=M_1+M_2$ quantities that are \emph{linear} in the variable $\check{x}$. Given the prior theory on conserved linear observables in the main text, these quantities are conserved under the dynamics given by Eq.~\eqref{eq:lifted-dynamics} if 
\begin{aligns}
    \bra{1,q_1^{(m)}} \check{F}_1 \check{x} &=0 \mbox{ for } m=1,\dots, M_1  \\
        \bra{2, q_2^{(m)}} \check{F}_1 \check{x} &=0 \mbox{ for } m=1,\dots, M_2  \\
        \bra{1,q_1^{(m)}} \check{F}_2 \check{x}^{\otimes 2} &=0 \mbox{ for } m=1,\dots, M_1 \\
   \bra{2,q_2^{(m)}} \check{F}_2\check{x}^{\otimes 2} &=0 \mbox{ for } m=1,\dots, M_2,
\end{aligns}
for all $x \in \mathbb{C}^N$.

We now recast these conditions in the unembedded setting. The third condition holds by definition of the operator $\check{F}_2$, while the first condition holds for all $x$ if and only if $\bra{q_1^{(m)}} F_1 = 0$ and $\bra{q_1^{(m)}} F_2 = 0$. These recover the prior conditions for the linear conserved quantities. The additional constraints for the quadratic order terms are that $\bra{2,q_2^{(m)}} \check{F}_1 = 0 $ and $\bra{2,q_2^{(m)}} \check{F}_2 = 0 $. These, in turn, hold if and only if
\begin{aligns}
    \bra{q_2^{(m)}} A_{2,2}=\sum_{i,j=1}^N q_{2,ij}^{(m)*} \bra{ij} (F_1\otimes I + I\otimes F_1) &= 0, \\
      \bra{q_2^{(m)}} A_{2,3}=  \sum_{i,j=1}^N q_{2,ij}^{(m)*} \bra{ij} (F_2 \otimes I + I\otimes F_2) &= 0.
\end{aligns}
Given these conditions hold, we have that linear and quadratic homogeneous conserved quantities in the original ODE can be recast as purely linear conserved quantities in the embedded system. Therefore, the conditions for Carleman convergence can be obtained in this embedding system and convergence occurs if $\check{R}_\delta <1$.

To see that these conditions indeed hold, let us consider more generally and compactly $\ket{r, q_r^{(m)}}$ as an order-$r$ monomial conserved quantity for $m=1,\dots, M_r$ via
\begin{equation}
    \mathcal{Q}^{(m)}_r := \bra{r, q_r^{(m)}} r,x^{\otimes r} \rangle,
\end{equation}
and we have a range of such conserved quantities with orders $r=1,2,\dots ,r_{\max}$. Then, conservation under the autonomous, quadratic ODE requires that
\begin{equation}
    \bra{q_r^{(m)}} A_{r,r} = 0 \quad\mbox{ and }\quad\bra{q_r^{(m)}} A_{r,r+1} = 0 ,
\end{equation}
which in turn implies that the conserved quantities are in the left-nullspace of the Carleman matrix truncated to $k \ge r_{\max}$. Finally, if we have a polynomial conserved quantities $\mathcal{Q}^{(m)}_r$ of order $r$ then similarly $\mathcal{Q}^{(m)}_r = \bra{q_r^{(m)}} \check{x}$, where now the vector $\ket{q_r^{(m)}}$ has terms on each of the sectors of the direct sum from $j=1$ copies to $j=r$ copies, and a similar construction follows. More precisely, let the embedding space be the direct sum $\mathcal{V} = \oplus_{j=1}^r \mbox{span}\{\ket{j, i_1,\dots i_j}: i_1,i_2,\dots ,i_j \in [N]\}$, with notation as above for the quadratic case. We now have
\begin{equation}
    \mathcal{Q}^{(m)}_r = \sum_{j=1}^r\bra{j,q_j^{(m)}}j,x^{\otimes j}\rangle, 
\end{equation}
where $\ket{j,q_j^{(m)}}$ define the coefficients for the order $j$ terms in the polynomial. If we have $\bra{q_j^{(m)}} A_{j,j} =\bra{q_j^{(m)}} A_{j,j+1} =0 $ for each $j$ then it is readily seen that $\dot{\Q}_r^{(m)}=0$. The embedded dynamics is then given by the vector
\begin{equation}
    \check{x} := \sum_{j=1}^r\ket{j,x^{\otimes j}}, \mbox{ and initial condition } \check{x}(0) := \sum_{j=1}^r\ket{j,x(0)^{\otimes j}},
\end{equation}
with the associated linear observable $\check{\Q}_1^{(m)} = \bra{\check{q}^{(m)}} \check{x}$, where $\ket{\check{q}^{(m)}} := \sum_{j=1}^r\ket{j,q_j^{(m)}}$. The dynamics for $\check{x}(t)$ involves terms up to $\ket{r,x^{\otimes r}}$, defining an operator $\hat{F}_1$ on $\mathcal{V}$, plus a term involving $x^{\otimes r+1}$. By considering $\check{x}^{\otimes 2}$ we can define an operator $\check{F}_2$ acting on $\mathcal{V}^{\otimes 2}$ with support on the $\ket{1,x}\otimes \ket{r,x^{\otimes r}}$ sector, which realizes the order $r+1$ in the dynamics, exactly as happens for the quadratic case. Therefore, we realize the conserved polynomial observable as a linear-order term in $\check{x}$, with $\check{x}$ evolving under a quadratic ODE, as required. Hence the convergence criterion for systems with linear conserved observables can be applied. The dimension of $\mathcal{V}$ is seen to be $O(N^r)$.

\section{Proof of Proposition~\ref{prop:valid_vij}}
\label{app:proof_lemma_valid_vij}

Fix any $i<j$. To prove the claim, we need to show that 
 for arbitrary $a_1,a_2,\dots, a_i$, $b_1,b_2,\dots,b_j \in [N]$, 
\begin{align}
\lrb{\sum_{l=1}^j \lambda_{b_l}-
\sum_{l=1}^i \lambda_{a_l}}\bra{a_1,a_2,\dots, a_i}  \tilde{V}_{i,j}\ket{b_1,b_2,\dots,b_j} = \bra{a_1,a_2,\dots, a_i} \tilde{A}_{i,i+1} \tilde{V}_{i+1,j} \ket{b_1,b_2,\dots,b_j}.     
\label{eq:vij_recursive_relation}
\end{align}

For convenience, we introduce the following notation. Given any binary forest $\mbf=(\mbt_1,\mbt_2,\dots,\mbt_i)$ $\in \mathcal{T}^i_j$, we label the leaves of $\mbf$ sequentially from $1$ to $j$ and use $\mathcal{I}_l$ to denote the index set for the leaves in $\mathbf{t}_l$, for each $l \in [i]$. Moreover, let $\mathcal{A}(\mbf)$ and $\mathcal{B}(\mbf)=[i] \setminus \mathcal{A}(\mbf)$ contain the indices of trivial and nontrivial trees in $\mbf$, respectively. That is, for any $l \in [i]$, $\mathbf{t}_l$ is trivial if $l \in \mathcal{A}(\mbf)$, and nontrivial otherwise. For each \( \beta \in \mathcal{B}(\mbf) \), let \( \tilde{\mbf}_\beta \) denote the binary forest obtained from $\mbf$ by removing the root of $\mathbf{t}_{\beta}$ and splitting it into two binary trees. Clearly,  $\tilde{\mbf}_\beta \in \mathcal{T}^{i+1}_{j}$.  

In addition, for any binary forest \( \mbf = (\mathbf{t}_1, \mathbf{t}_2, \dots, \mathbf{t}_i, \mathbf{t}_{i+1}) \in \mathcal{T}^{i+1}_{j} \), let $\hat{\mbf}_l$ denote the binary forest obtained from $\mbf$ by combining $\mathbf{t}_l$ and $\mathbf{t}_{l+1}$ into a single binary tree, for each $l \in [i]$. Clearly,  $\hat{\mbf}_l \in \mathcal{T}^i_j$, for each $l \in [i]$.  

For any $\mbf \in \mathcal{T}^i_j$, define
\begin{align}
    \Omega(\mbf) \defeq \lrcb{(l, \mbf') \in [i]\times \mathcal{T}^{i+1}_j: \hat{\mbf}'_l = \mbf}.
\end{align}
One can easily verify that
\begin{align}
    \Omega(\mbf) = \lrcb{(\beta, \tilde{\mbf}_\beta): \beta \in \mathcal{B}(\mbf)}.
    \label{eq:omegaf_def}
\end{align}
Moreover, $\lrcb{\Omega(\mbf):~\mbf \in \mathcal{T}^i_j}$ forms a partition of $[i]\times \mathcal{T}^{i+1}_j$, i.e.
\begin{align}
[i]\times \mathcal{T}^{i+1}_{j} =  \bigsqcup_{\mbf \in \mathcal{T}^i_j} \Omega(\mbf).
\label{eq:omegaf_partition}
\end{align}

Finally, let $\tilde{F}_2^{(i,l)}=I^{\otimes l} \otimes \tilde{F}_2 \otimes I^{\otimes i-l-1}$ for any $l<i$. Then we have 
\begin{align}
\tilde{A}_{i,i+1}=\sum_{l=0}^{i-1} \tilde{F}_2^{(i,l)}.    
\end{align}

Now we are ready to prove Eq.~\eqref{eq:vij_recursive_relation}. Observe that \begin{align}
    \tilde{V}_{i,j}=\sum_{\mbf \in \mathcal{T}^i_j} \tilde{f}(\mbf),
    \label{eq:vij_ff}
\end{align}
and
\begin{align}
    \tilde{A}_{i,i+1}\tilde{V}_{i+1,j}=\sum_{l=1}^{i}\sum_{\mbf' \in \mathcal{T}^{i+1}_j} \tilde{F}_2^{(i,l-1)}\tilde{f}(\mbf').
    \label{eq:aiivij_ff}
\end{align}
Then, by Eq.~\eqref{eq:omegaf_partition}, it suffices to prove that 
for every $\mbf = (\mathbf{t}_1, \mathbf{t}_2, \dots, \mathbf{t}_i)
\in \mathcal{T}^i_j$, the following holds:
\begin{align}
\hspace{-15pt} \lrb{\sum_{l=1}^j \lambda_{b_l} - \sum_{l=1}^i \lambda_{a_l}}\bra{a_1,a_2,\dots, a_i}  \tilde{f}(\mbf)\ket{b_1,b_2,\dots,b_j} = \bra{a_1,a_2,\dots, a_i} \sum_{(l, \mbf') \in \Omega(\mbf)} \tilde{F}_2^{(i,l-1)} \tilde{f}(\mbf')\ket{b_1,b_2,\dots,b_j}.   
    \label{eq:ffff}
\end{align}

Let $\mathcal{A}=\mathcal{A}(\mbf)$ and $\mathcal{B}=\mathcal{B}(\mbf)$. For $l \in \mathcal{A}$, let $s_l$ denote the only element of $\mathcal{I}_l$. Since 
\begin{align}
\tilde{f}(\mbf) = \tilde{f}(\mbt_1) \otimes \tilde{f}(\mbt_2) \otimes \dots \otimes \tilde{f}(\mbt_i),   
\end{align}
where $\tilde{f}(\mbt_l)=I$ for $l \in \mathcal{A}$, it follows that
$\bra{a_1,a_2,\dots, a_i}  V_{i,j}\ket{b_1,b_2,\dots,b_j} \neq 0$ only if $a_l=b_{s_l}$ for all $l \in \mathcal{A}$. Meanwhile, by the definition of $\Omega(\mbf)$, we know that for every $(l, \mbf') \in \Omega(\mbf)$, $\tilde{F}_2^{(i,l-1)}$ acts trivially on the subsystem corresponding to $\mathbf{t}_l$ for each $l \in \mathcal{A}$. Thus,  we have that
$\bra{a_1,a_2,\dots, a_i}  \tilde{F}_2^{(i,l-1)} \tilde{f}(\mbf') \ket{b_1,b_2,\dots,b_j} \neq 0$ only if $a_l=b_{s_l}$ for all $l \in \mathcal{A}$.

Therefore, we only need to consider the case where $a_l=b_{s_l}$ for all $l \in \mathcal{A}$ from now on. Note that this condition implies 
\begin{align}
\sum_{l=1}^j \lambda_{b_l} - \sum_{l=1}^i \lambda_{a_l} 
=\sum_{\beta \in \mathcal{B}} \lrb{\sum_{l \in \mathcal{I}_\beta} \lambda_{b_l}-\lambda_{a_\beta}}.
\label{eq:eigenval_sum_diff}
\end{align}
Meanwhile, by the construction of $\tilde{f}(\mbf)$, we have
\begin{align}
\bra{a_1,a_2,\dots, a_i} \tilde{f}(\mbf) \ket{b_1,b_2,\dots,b_j}
= \frac{\bra{a_1,a_2,\dots, a_i} \tilde{F}_2^{(i, \beta-1)} \tilde{f}(\tilde{\mbf}_{\beta}) \ket{b_1,b_2,\dots,b_j}}{\sum_{l \in \mathcal{I}_\beta} \lambda_{b_l} - \lambda_{a_\beta}}.
\end{align}
It follows that
\begin{align}
\hspace{-15pt} 
\sum_{\beta \in \mathcal{B}} \lrb{\sum_{l \in \mathcal{I}_\beta} \lambda_{b_l} - \lambda_{a_\beta} }
\bra{a_1,a_2,\dots, a_i} \tilde{f}(\mbf) \ket{b_1,b_2,\dots,b_j}=\sum_{\beta \in \mathcal{B}}
    \bra{a_1,a_2,\dots, a_i} \tilde{F}_2^{(i, \beta-1)} f(\tilde{\mbf}_{\beta}) \ket{b_1,b_2,\dots,b_j}.
\end{align}
This is equivalent to Eq.~\eqref{eq:ffff}, as implied by Eqs.~\eqref{eq:omegaf_def} and \eqref{eq:eigenval_sum_diff}. The proposition is thus proved.

\section{Proof of Proposition~\ref{prop:v_inv_ij_expression}}
\label{app:proof_lemma_v_inv_ij_expression}

We restate Proposition~\ref{prop:v_inv_ij_expression} as follows. For any $a \le b$, 
\begin{align}
(\tilde{V}^{-1})_{a,b} &= (-1)^{b-a} \sum_{\mbf \in \mathcal{T}^a_b} \tilde{g}(\mbf)\\
&=(-1)^{b-a} \sum_{(\mbt_1,\mbt_2,\dots,\mbt_a) \in \mathcal{T}^a_b} \tilde{g}(\mbt_1) \otimes \tilde{g}(\mbt_2) \otimes \dots \otimes \tilde{g}(\mbt_a).    
\label{eq:v_inv_ij_expression}
\end{align}
Here we have replaced $i$, $j$ with $a$, $b$, respectively, as we will frequently use $i$, $j$ as summation indices (or loop indices) in the proof.

If $a=b$, then both sides of Eq.~\eqref{eq:v_inv_ij_expression} reduce to $I^{\otimes a}$, verifying the equation. Therefore, we assume $b > a \ge 1$ from now on.

Before proceeding, it is necessary to introduce the following notation about binary forests (with binary trees as a special case). 

Given two binary forests $\mbf_1$ and $\mbf_2$, we say that $\mbf_2$ is \emph{concatenable} with $\mbf_1$ if the number of trees in $\mbf_2$ matches the number of leaves in $\mbf_1$. Then suppose $\mbf_1$ is a binary forest with $p$ trees and $q$ leaves, and $\mbf_2$ is a binary forest with $q$ trees and $r$ leaves, where $p \le q \le r$ are arbitrary. Then we define their \emph{concatenation}, denoted $\mbf_1 \circ \mbf_2$, as the binary forest obtained by ``gluing'' the root of the $i$-th tree in $\mbf_2$ to the $i$-th leaf of $\mbf_1$, for $i=1,2,\dots,q$. Furthermore, this operation can be extended to a sequence of binary forests, $\mbf_1, \mbf_2, \dots, \mbf_m$, provided that the number of leaves in $\mbf_i$ equals the number of roots in $\mbf_{i+1}$ for $i=1,2,\dots,m-1$. The resulting binary forest is denoted as $$\mbf_1 \circ \mbf_2 \circ \dots \circ \mbf_m.$$ Note that the concatenation operation $\circ$ is associative, i.e. 
\begin{align}
(\mbf_1 \circ \mbf_2) \circ \mbf_3 = \mbf_1 \circ (\mbf_2 \circ \mbf_3).    
\end{align}
So the expression $\mbf_1 \circ \mbf_2 \circ \dots \circ \mbf_m$ is well-defined.

Given a binary forest $\mbf=(V, E)$ and an integer $p \ge 1$, a decomposition of $\mbf$ into $p$ layers is a tuple $(V_1, V_2, \dots, V_p) \in P(V)^p$ such that
\begin{itemize}
    \item $\cup_{i=1}^p V_i = V$;
    \item Let $V_I$ be the set of internal nodes of $\mbf$, and for each $v \in V_I$, let $C(v)=\lrcb{c_1(v), c_2(v)}$ be the set of its children. Then for each $i \in [p]$ and each $v \in V_i \cap V_I$, either $C(v)\subset V_i$ or
    $C(v) \cap V_i = \emptyset$. Furthermore, there exists at least one $v \in V_i \cap V_I$ such that $C(v) \subset V_i$. This condition ensures that $\mbf[V_i]$ is a valid binary forest that contains at least one nontrivial binary tree.
    \item For each $i \in [p]$, let $V_{R, i}$ denote the set of roots of the trees in $\mbf[V_i]$, and let $V_{L, i}$ denote the set of their leaves. 
    Then we have
    \begin{align}
        V_i \cap V_{i+1}=V_{L,i}=V_{R, i+1},\quad\forall~i \in [p-1].        
    \end{align}
    This condition ensures that $\mbf[V_{i+1}]$ is concatenable with $\mbf[V_i]$ for $i \in [p-1]$.

\end{itemize}
In other words, $\mbf$ is decomposed into $p$ layers
$\mbf[V_1], \mbf[V_2], \dots, \mbf[V_p]$, each of which is a binary forest containing at least one nontrivial binary tree, such that
\begin{align}
    \mbf = \mbf[V_1] \circ \mbf[V_2] \circ \dots \circ \mbf[V_p].
\end{align}
We will use $\mathcal{L}_p(\mbf)$ to denote the set of $p$-layer decompositions of $\mbf$.

Conversely, if $\mbf=\mbf_1\circ \mbf_2 \circ \dots \circ \mbf_p$, where each $\mbf_i$ contains at least one nontrivial tree, then $\mbf$ can be decomposed into $p$ layers such that the $i$-th layer corresponds to $\mbf_i$ for all $i \in [p]$.

A labeling of a binary forest $\mbf=(V, E)$ is a map $h: V \to [N]$, where $N$ is the dimension of the system of interest. Given a binary forest $\mbf$ and a labeling $h$ of its nodes, we refer to the pair $(\mbf, h)$ a \emph{labeled binary forest}. Suppose $(\mbf_1, h_1)$ and $(\mbf_2, h_2)$ are two labeled forests such that $\mbf_2$ is concatenable with $\mbf_1$. Let $l_1,l_2,\dots,l_m$ be the leaves of the trees in $\mbf_1$, and let 
$r_1,r_2,\dots,r_m$ be the roots of the trees in $\mbf_2$.  
We say that $(\mbf_2, h_2)$ is \emph{compatible} with $(\mbf_1, h_1)$ if $h_1(l_j)=h_2(r_j)$ for all $j \in [m]$. In this case, we can concatenate the two labeled forests and denote the resulting labled forest as 
\begin{align}
(\mbf_1, h_1) \circ (\mbf_2, h_2) = (\mbf_1 \circ \mbf_2, h_1 \circ h_2).    
\end{align}
Furthermore, given any sequence of labeled forests $(\mbf_1, h_1), (\mbf_2, h_2)$, $\dots$, $(\mbf_m, h_m)$, where $(\mbf_{i+1}, h_{i+1})$ is compatible with $(\mbf_i, h_i)$ for all $i \in [m-1]$, we can concatenate them and obtain the labeled forest
$$(\mbf_1, h_1) \circ (\mbf_2, h_2) \circ \dots \circ (\mbf_m, h_m)
=(\mbf_1 \circ \mbf_2  \circ \dots \circ \mbf_m, 
h_1 \circ h_2 \circ \dots \circ h_m).$$

Given a binary forest $\mbf=(V, E)$, a labeling $h$ of its nodes, and a $p$-layer decomposition $(V_1, V_2, \dots, V_p)$ of $\mbf$, let us consider the $p$ labeled binary forests $(\mbf[V_1], h[V_1])$, $(\mbf[V_2], h[V_2])$, $\dots$, $(\mbf[V_p], h[V_p])$, where $h[V_i]$ denotes the restriction of $h$ to $V_i$. They satisfy
\begin{align}
    (\mbf, h) = (\mbf[V_1], h[V_1]) \circ (\mbf[V_2], h[V_2]) \circ \dots \circ (\mbf[V_p], h[V_p]).
\end{align}

Conversely, if $(\mbf, h) = (\mbf_1, h_1) \circ (\mbf_2, h_2) \circ \dots \circ (\mbf_m, h_m)$, where each $\mbf_i$ contains at least one nontrivial tree, then there exists a $p$-layer decomposition of $\mbf$ such that the $i$-th layer of $(\mbf, h)$ corresponds to $(\mbf_i, h_i)$ for all $i \in [p]$.

The functions $\alpha(\mbt, h)$, $\beta(\mbt, h)$ and $\gamma(\mbt, h)$ have been defined for any labeled binary tree $(\mbt, h)$ in  Eqs.~\eqref{eq:alpha_t_h_def}, \eqref{eq:beta_t_h_def} and \eqref{eq:gamma_t_h_def}, respectively. Now we extend their definition to labeled binary forests. Let $\mbf=(\mbt_1,\mbt_2,\dots,\mbt_m)$ be a binary forest and let $h$ be a labeling of its nodes. We define
\begin{align}
    \alpha(\mbf, h) \defeq \alpha(\mbt_1, h)\alpha(\mbt_2, h) \dots 
    \alpha(\mbt_m, h),
\end{align}
\begin{align}
    \beta(\mbf, h) \defeq \beta(\mbt_1, h)\beta(\mbt_2, h) \dots 
    \beta(\mbt_m, h),
\end{align}
and
\begin{align}
    \gamma(\mbf, h) \defeq \gamma(\mbt_1, h)\gamma(\mbt_2, h) \dots 
    \gamma(\mbt_m, h).
\end{align}
Let $L(\mbf)$ denote the set of all possible labelings of $\mbf$. A labeling $h$ of $\mbf$ is feasible if and only if $\alpha(\mbf, h) \neq 0$. The number of feasible labelings of $\mbf$ could be much smaller than $|L(\mbf)|=N^{|V(\mbf)|}$ when $\tilde{F}_2$ is sparse.

Finally, for a binary forest $\mbf=(\mbt_1,\mbt_2,\dots,\mbt_m)$, the operator $\tilde{f}(\mbf)$ assigned to it satisfies
\begin{align}
    \tilde{f}(\mbf)&=\tilde{f}(\mbt_1) \otimes \tilde{f}(\mbt_2) \otimes \dots \otimes \tilde{f}(\mbt_m) \\
    &=\sum_{h \in L(\mbf)} \alpha(\mbf, h) \beta(\mbf, h) \ket{h(r_1),h(r_2),\dots,h(r_m))}\bra{h(l_1), h(l_2),\dots, h(l_k)},
\end{align}
where $r_1,r_2,\dots,r_m$ are the roots of the trees in $\mbf$ and $l_1,l_2,\dots,l_k$ are the leaves of these trees.

Now recall that for any $i \le j$, we have 
\begin{align}
\tilde{V}_{i,j}=\sum_{\mbf \in \mathcal{T}^i_j} \tilde{f}(\mbf),
\end{align}
where $\mathcal{T}^i_j$ denotes the set of binary forests consisting of $i$ binary trees with a total of $j$ leaves. For any $j_2, j_3, \dots, j_p$ satisfying $j_1\defeq a<j_2<j_3<\dots<j_p<j_{p+1}\defeq b$, we have
\begin{align}
\tilde{V}_{a,j_2} \tilde{V}_{j_2, j_3} \dots \tilde{V}_{j_p, b}
&=\sum_{\mbf_1 \in \mathcal{T}^{a}_{j_2}}
\sum_{\mbf_2 \in \mathcal{T}^{j_2}_{j_3}}
\dots
\sum_{\mbf_p \in \mathcal{T}^{j_p}_b}
\tilde{f}(\mbf_1)\tilde{f}(\mbf_2)\dots \tilde{f}(\mbf_p) \\
&=\sum_{\mbf_1 \in \mathcal{T}^{a}_{j_2}}
\sum_{\mbf_2 \in \mathcal{T}^{j_2}_{j_3}}
    \dots
\sum_{\mbf_p \in \mathcal{T}^{j_p}_{b}}
\sum_{h_1 \in L(\mbf_1)}
\sum_{h_2 \in L(\mbf_2)}
\dots
\sum_{h_p \in L(\mbf_p)} \nonumber \\
&\quad \alpha(\mbf_1, h_1)
\beta(\mbf_1, h_1)
\alpha(\mbf_2, h_2)
\beta(\mbf_2, h_2)
\dots 
\alpha(\mbf_p, h_p)
\beta(\mbf_p, h_p) \nonumber \\
&\quad \ket{h_1(r_{1,1}), h_1(r_{1,2}), \dots, h_1(r_{1,j})}
\bra{h_1(l_{1,1}), h_1(l_{1,2}), \dots, h_1(l_{1, j_2}))}
\cdot \nonumber \\
&\quad \ket{h_2(r_{2,1}), h_2(r_{2,2}), \dots, h_2(r_{2, j_2})}
\bra{h_2(l_{2,1}), h_2(l_{2,2}), \dots, h_2(l_{2, j_3}))}
\cdot \nonumber \\
&\quad \dots  \nonumber \\
& \quad 
\cdot\ket{h_p(r_{p,1}), h_p(r_{p,2}), \dots, h_p(r_{p,j_p})}
\bra{h_p(l_{p,1}), h_p(l_{p,2}), \dots, h_p(l_{p,k})},
\label{eq:product_vijs_expansion}
\end{align}
where $r_{i,j}$ is the root of the $j$-th binary trees in $\mbf_i$,
for $i \in [p]$ and $j \in [j_i]$, 
and $l_{i,j}$ is the $j$-th leaf of $\mbf_i$, 
for $i \in [p]$ and $j \in [j_{i+1}]$. Note that the summand on the right-hand side of Eq.~\eqref{eq:product_vijs_expansion} is nonzero only if 
\begin{align}
h_i(l_{i,j})=h_{i+1}(r_{i+1, j})    
\end{align}
for all $i \in [p-1]$ and $j \in [j_{i+1}]$. This is equivalent to the condition that $(\mbf_{i+1}, h_{i+1})$ is compatible with $(\mbf_i, h_i)$ for all $i \in [p-1]$. Under this condition, we can concatenate these labeled forests and obtain 
\begin{align}
(\mbf, h) = (\mbf_1, h_1) \circ (\mbf_2, h_2) \circ \dots \circ (\mbf_p, h_p),    
\end{align}
where $\mbf$ is a binary forest consisting of $a$ binary trees with a total of $b$ leaves, and $h$ is a labeling of $\mbf$. Therefore, Eq.~\eqref{eq:product_vijs_expansion} implies that
\begin{align}
\sum_{a<j_2<j_3<\dots<j_p<b}  \tilde{V}_{a,j_2} \tilde{V}_{j_2, j_3} \dots \tilde{V}_{j_p, b}
    &= \sum_{\mbf \in \mathcal{S}^{a}_{b}}
    \sum_{h \in L(\mbf)}
    \sum_{\Omega \in \mathcal{L}_{p}(\mbf)}  \bar{\alpha}(\mbf, \Omega, h)
\bar{\beta}(\mbf, \Omega, h) \nonumber \\
&\quad  \ket{h(r_{1}), h_1(r_{2}), \dots, h_1(r_{a})}
\bra{h(l_{1}), h_1(l_{2}), \dots, h_1(l_{b}))},
\label{eq:product_vijs_expansion2}
\end{align}
where for any 
$\Omega=(U_1, U_2, \dots, U_p) \in \mathcal{L}_p(\mbf)$, we define
\begin{align}
\bar{\alpha}(\mbf, \Omega, h) &\defeq \alpha(\mbf[U_1], h[U_1])
\alpha(\mbf[U_2], h[U_2])
\dots 
\alpha(\mbf[U_p], h[U_p]), 
\end{align}
and
\begin{align}
\bar{\beta}(\mbf, \Omega, h) &\defeq \beta(\mbf[U_1], h[U_1])
\beta(\mbf[U_2], h[U_2])
\dots 
\beta(\mbf[U_p], h[U_p]).
\end{align}
Here $r_1, r_2, \dots, r_a$ are the roots of the binary trees in $\mbf$, and $l_1, l_2, \dots, l_b$ are the leaves of these trees.

It then follows from Eqs.~\eqref{eq:wjk_def} and \eqref{eq:product_vijs_expansion2} that
\begin{align}
\lrb{\tilde{V}^{-1}}_{a,b} &= 
\sum_{\mbf \in \mathcal{T}^{a}_{b}}
    \sum_{h \in L(\mbf)}
    \sum_{p=1}^{b-a} (-1)^p
    \sum_{\Omega \in \mathcal{L}_{p}(\mbf)}   \bar{\alpha}(\mbf, \Omega, h) \bar{\beta}(\mbf, \Omega, h)  \nonumber \\
&\quad \ket{h(r_{1}), h_1(r_{2}), \dots, h_1(r_a)}
\bra{h(l_{1}), h_1(l_{2}), \dots, h_1(l_b))}.
\label{eq:wjk_exp1}
\end{align}

On the other hand, note that
\begin{align}
    &(-1)^{b-a} \sum_{(\mbt_1,\mbt_2,\dots,\mbt_a) \in \mathcal{T}^a_b} \tilde{g}(\mbt_1) \otimes \tilde{g}(\mbt_2) \otimes \dots \otimes \tilde{g}(\mbt_a) \nonumber \\
    &=(-1)^{b-a} \sum_{\mbf \in \mathcal{T}^{a}_{b}}
    \sum_{h \in L(\mbf)} \alpha(\mbf, h) \gamma(\mbf, h) 
 \ket{h(r_{1}), h_1(r_{2}), \dots, h_1(r_{a})}
\bra{h(l_{1}), h_1(l_{2}), \dots, h_1(l_{b}))}.    
\label{eq:wjk_exp2}
\end{align}

Now, if we can show that for every $\mbf \in \mathcal{T}^a_b$ and $h \in L(\mbf)$, the following holds:
\begin{align}
\sum_{p=1}^{b-a} (-1)^p
    \sum_{\Omega \in \mathcal{L}_{p}(\mbf)} 
 \bar{\alpha}(\mbf, \Omega, h)
\bar{\beta}(\mbf, \Omega, h)
= (-1)^{b-a} \alpha(\mbf, h) \gamma(\mbf, h).
\label{eq:key_claim_labeled_forest}
\end{align}
then we can apply Eqs.~\eqref{eq:wjk_exp1} and \eqref{eq:wjk_exp2} to obtain 
\begin{align}
\lrb{\tilde{V}^{-1}}_{a,b} = (-1)^{b-a} \sum_{(\mbt_1,\mbt_2,\dots,\mbt_a) \in \mathcal{T}^a_b} \tilde{g}(\mbt_1) \otimes \tilde{g}(\mbt_2) \otimes \dots \otimes \tilde{g}(\mbt_a),    
\end{align}
as desired.

It remains to prove Eq.~\eqref{eq:key_claim_labeled_forest}. Fix any $\mbf \in \mathcal{T}^a_b$ and $h \in L(\mbf)$. Observe that for any $\Omega \in \mathcal{L}_{p}(\mbf)$, we have
\begin{align}
 \bar{\alpha}(\mbf, \Omega, h)
=\prod_{v \in V_I(\mbf)} \bra{h(v)}\tilde{F}_2\ket{h(c_1(v)), h(c_2(v))}
=\alpha(\mbf, h),
\end{align}
where $V_I(\mbf)$ denotes the set of internal nodes of $\mbf$. Thus, we only need to show that 
\begin{align}
 \sum_{p=1}^{b-a} (-1)^p
    \sum_{\Omega \in \mathcal{L}_{p}(\mbf)} 
\bar{\beta}(\mbf, \Omega, h)
 = (-1)^{b-a} \gamma(\mbf, h).
\label{eq:key_fact_beta_gamma}
\end{align}

This will be achieved by induction on $b-a$. We begin with the base case $b-a=1$. Here, $\mbf$ consists of a binary tree with $3$ nodes and $b-2$ isolated nodes. Suppose the root of the $3$-node tree is $r$, and the leaves of this tree are $l_1$ and $l_2$.
Then both sides of Eq.~\eqref{eq:key_fact_beta_gamma} simplify  to 
\begin{align}
\frac{1}{\lambda_{h(r)}-\lambda_{h(l_1)}-\lambda_{h(l_2)}},   
\end{align}
Thus, Eq.~\eqref{eq:key_fact_beta_gamma} holds in this case. 

Now we assume that Eq.~\eqref{eq:key_fact_beta_gamma} holds for all $\mbf' \in \mathcal{T}^{a}_{a+z}$ and $h' \in L(\mbf')$ for some $z \ge 1$.  We will prove that Eq.~\eqref{eq:key_fact_beta_gamma} also holds for all $\mbf \in \mathcal{T}^{a}_{a+z+1}$ and $h \in L(\mbf)$.

Suppose $\mbf=(\mbt_1,\mbt_2,\dots,\mbt_a)$ is a binary forest, where $\mbt_i$ is the $i$-th tree in $\mbf$ for $i \in [a]$, and the total number of leaves in these trees is $a+z+1$, and $h$ is a labeling of $\mbf$. We will consider two cases separately.
\begin{itemize}
\item Case 1: There exists a tree in $\mbf$ which contains exactly $3$ nodes. Let $\mbt_i$ be such a tree, and let $v$ be its root, and let $w_1$ and $w_2$ be its leaves. Then let $V'=V(\mbf) \setminus \lrcb{w_1, w_2}$, and let
$\mbf'=\mbf[V']$ and $h'=h[V']$. Clearly, $\mbf' \in \mathcal{T}^{a}_{a+z}$, and $h' \in L(\mbf')$. Thus, by assumption, we have
\begin{align}
\sum_{p=1}^{z} (-1)^p
\sum_{\Omega' \in \mathcal{L}_{p}(\mbf')} 
\bar{\beta}(\mbf', \Omega', h')
= (-1)^{z} \gamma(\mbf', h').
\label{eq:induction_assumption}
\end{align}

Given any $\Omega'=(U_1', U_2', \dots, U_p') \in \mathcal{L}_{p}(\mbf')$, we map it to $p$ elements of $\mathcal{L}_p(\mbf)$ and $p+1$ elements of $\mathcal{L}_{p+1}(\mbf)$. These mappings induce a partition of $\cup_{p=1}^{z+1} \mathcal{L}_{p}(\mbf)$. 

Before describing these mappings, we first note that $v$ is an isolated node in $\mbf'$, and hence $v \in U_j'$ for all $j \in [p]$.

Now, for each $c \in [p]$, we construct a $p$-layer decomposition $(U_{c,1}, U_{c,2}, \dots, U_{c,p})$ of $\mbf$ as follows:
\begin{itemize}
    \item For $j=1,2,\dots,c-1$, set $U_{c,j}=U_j'$;        
    \item Set $U_{c,c}=U_c' \cup \lrcb{w_1, w_2}$;    
    \item For $j=c+1,c+2,\dots,p$, set $U_{c,j}=(U_j' \setminus \lrcb{v}) \cup \lrcb{w_1, w_2}$.
\end{itemize}
In other words, we add the edges $(v, w_1)$ and $(v, w_2)$ in one of the $p$ layers, while replacing $v$ with $w_1$ and $w_2$ in the layers beneath it. Let $\Omega_c = (U_{c,1}, U_{c,2}, \dots, U_{c,p})$. Then one can verify that $\Omega_c  \in \mathcal{L}_p(\mbf)$, and
\begin{align}
\bar{\beta}(\mbf, \Omega_c, h)
=\frac{1}{\lambda_{h(w_1)}+\lambda_{h(w_2)}-\lambda_{h(v)}} \cdot
\bar{\beta}(\mbf', \Omega', h'),
\label{eq:induction_relation1}
\end{align}
for each $c \in [p]$. 

Meanwhile, for each $d \in [p+1]$, we construct a $(p+1)$-layer decomposition $(\tilde{U}_{d,1}, \tilde{U}_{d,2}, \dots, \tilde{U}_{d,p+1})$
of $\mbf$ as follows: 
\begin{itemize}
    \item For $j=1,2,\dots,d-1$, set $\tilde{U}_{d,j}=U'_j$;   
    \item If $d \le p$, then set $\tilde{U}_{d,d}=V_R(\mbf'[U'_d]) \cup \lrcb{w_1, w_2}$, where $V_R(\mbf'[U'_d])$ denotes the set of roots of the trees in $\mbf'[U'_d]$.
    \item If $d \le p$, then for $j=d+1,d+2,\dots,p+1$, set $\tilde{U}_{d,j}=(U_{j-1}' \setminus \lrcb{v}) \cup \lrcb{w_1, w_2}$.    
    \item If $d=p+1$, then set $\tilde{U}_{p+1, p+1}=V_L(\mbf') \cup \lrcb{w_1, w_2}$, where $V_L(\mbf')$ denotes the set of leaves of $\mbf'$.
\end{itemize}
In other words, we create a new layer containing the edges $(v, w_1)$ and $(v, w_2)$, place it between the original layers, or above the top layer, or below the bottom layer, while replacing $v$ with $w_1$ and $w_2$ in the layers beneath it. Let $\tilde{\Omega}_d = (\tilde{U}_{d,1}, \tilde{U}_{d,2}, \dots, \tilde{U}_{d,p+1})$. Then one can verify that
$\tilde{\Omega}_d\in \mathcal{L}_{p+1}(\mbf)$, and 
\begin{align}
\bar{\beta}(\mbf, \tilde{\Omega}_d, h)
=\frac{1}{\lambda_{h(w_1)}+\lambda_{h(w_2)}-\lambda_{h(v)}} \cdot
\bar{\beta}(\mbf', \Omega', h')
\label{eq:induction_relation2}
\end{align}
for each $d \in [p+1]$. 

Now let $\mathcal{X}(\Omega')\defeq\lrcb{\Omega_c:~c \in [p]}$, 
and $\mathcal{Y}(\Omega')\defeq\lrcb{\tilde{\Omega}_d:~d \in [p+1]}$. One can see that
\begin{align}
\bigsqcup_{p=1}^{z+1}\mathcal{L}_{p}(\mbf) = 
\bigsqcup_{p=1}^z
\bigsqcup_{\Omega' \in \mathcal{L}_p(\mbf')}
\lrb{\mathcal{X}(\Omega')
\bigsqcup \mathcal{Y}(\Omega')}.
\label{eq:induction_partition}
\end{align}
Thus, all decompositions of $\mbf$ can be obtained by modifying a corresponding decomposition of $\mbf'$ in the manner described above.

Now we combine Eqs.~\eqref{eq:induction_assumption}, \eqref{eq:induction_relation1},
\eqref{eq:induction_relation2} and \eqref{eq:induction_partition} and get
\begin{align}
\sum_{p=1}^{z+1} (-1)^p
\sum_{\Omega \in \mathcal{L}_{p}(\mbf)} 
\bar{\beta}(\mbf, \Omega, h)   
&=
-\frac{1}{\lambda_{h(w_1)}+\lambda_{h(w_2)}-\lambda_{h(v)}}
\sum_{p=1}^{z} (-1)^p
\sum_{\Omega' \in \mathcal{L}_{p}(\mbf')} 
\bar{\beta}(\mbf', \Omega', h')  \\
&=(-1)^{z+1}\frac{1}{\lambda_{h(w_1)}+\lambda_{h(w_2)}-\lambda_{h(v)}} \gamma(\mbf', h') \\
&=(-1)^{z+1}\gamma(\mbf, h),
\end{align}
where 
the last step follows from 
\begin{align}
\gamma(\mbf, h) = \gamma(\mbf', h') \cdot \frac{1}{\lambda_{h(w_1)}+\lambda_{h(w_2)}-\lambda_{h(v)}},
\end{align}
since the binary tree containing $v, w_1, w_2$ is isolated from the other trees in $\mbf$. Therefore, Eq.~\eqref{eq:key_fact_beta_gamma} holds for the labeled forest $(\mbf, h)$.

\item Case 2: Every nontrivial tree in $\mbf$ contains at  least $5$ nodes. Let $\mbt_i$ be such a tree. Then there exists an internal node $v$ in $\mbt_i$ such that its two children, denoted $w_1$ and $w_2$, are both leaves of $\mbt_i$, and $v$ has a parent, denoted $u$, within $\mbt_i$.

Let $V'=V(\mbf) \setminus \lrcb{w_1, w_2}$, and let
$\mbf'=\mbf[V']$ and $h'=h[V']$. Then we have $\mbf' \in \mathcal{T}^{a}_{a+z}$, and $h' \in L(\mbf')$. Thus, by assumption, we have
\begin{align}
\sum_{p=1}^{z} (-1)^p
\sum_{\Omega' \in \mathcal{L}_{p}(\mbf')} 
\bar{\beta}(\mbf', \Omega', h')
= (-1)^{z} \gamma(\mbf', h').
\label{eq:induction_assumption2}
\end{align}

In addition, we also consider another labeling $h''$ of $\mbf'$, defined as $h''(v)=N+1$, and $h''(w)=h'(w)$ for all $w \in V'\setminus\lrcb{v}$. Correspondingly, we extend the system's dimension by one and introduce a new \emph{dummy} eigenvalue $\lambda_{N+1}\defeq \lambda_{h(w_1)}+\lambda_{h(w_2)}$. Note that this eigenvalue is introduced solely for the purpose of proving Eq.~\eqref{eq:key_fact_beta_gamma} and does not affect the differential equations to be solved. Then since $\mbf' \in \mathcal{T}^{a}_{a+z}$ and $h'' \in L(\mbf')$,
by assumption, we have
\begin{align}
\sum_{p=1}^{z} (-1)^p
\sum_{\Omega' \in \mathcal{L}_{p}(\mbf')} 
\bar{\beta}(\mbf', \Omega', h'')
= (-1)^{z} \gamma(\mbf', h'').
\label{eq:induction_assumption3}
\end{align}

As in the previous case, given any $\Omega'=(U_1', U_2', \dots, U_p') \in \mathcal{L}_{p}(\mbf')$, we map it to multiple elements of $\mathcal{L}_p(\mbf)$ and multiple  elements of $\mathcal{L}_{p+1}(\mbf)$. These mappings induce a partition of $\cup_{p=1}^{z+1} \mathcal{L}_{p}(\mbf)$. 

Before describing these mappings, we first note that there exists a unique $q \in [p]$ such that $U_q'$ contains both $u$ and $v$. Then we have $v \in U_j'$ and $u \not \in U_j'$ for $j=q+1,q+2,\dots,p$.

Now, for each $c \in \lrcb{q, q+1,\dots,p}$, we construct a $p$-layer decomposition $(U_{c,1}, U_{c,2}, \dots, U_{c,p})$ of $\mbf$ as follows:
\begin{itemize}
    \item For $j=1,2,\dots,c-1$, set $U_{c,j}=U_j'$;        
    \item Set $U_{c,c}=U_c' \cup \lrcb{w_1, w_2}$;    
    \item For $j=c+1,c+2,\dots,p$, set $U_{c,j}=(U_j' \setminus \lrcb{v}) \cup \lrcb{w_1, w_2}$.
\end{itemize}
In other words, we add the edges $(v, w_1)$ and $(v, w_2)$ in one of the $p$ layers, while replacing $v$ with $w_1$ and $w_2$ in the layers beneath it. Let $\Omega_c = (U_{c,1}, U_{c,2}, \dots, U_{c,p})$. Then one can verify that $\Omega_c  \in \mathcal{L}_p(\mbf)$.

Note that there exists a key difference between $\Omega_q$ and $\Omega_{c}$ for $q+1 \le c \le p$. That is, $\mbf[U_{q,q}]$ contains a binary tree in which $v$ is an internal node with parent $u$ and children $w_1, w_2$. In contrast, $\mbf[U_{c,c}]$ contains a $3$-node tree where $v$ is the root and $w_1, w_2$ are the leaves, for $q+1 \le c \le p$. Due to this difference, we will analyze $\bar{\beta}(\mbf, \Omega_q, h)$ and $\bar{\beta}(\mbf, \Omega_c, h)$ for $q+1\le c\le p$ separately. Precisely, one can verify that
\begin{align}
\bar{\beta}(\mbf, \Omega_c, h)
=\frac{1}{\lambda_{h(w_1)}+\lambda_{h(w_2)}-\lambda_{h(v)}} \cdot
\bar{\beta}(\mbf', \Omega', h'),
\label{eq:induction_relation3}
\end{align}
for each $c \in \lrcb{q+1,q+2,\dots,p}$, and
\begin{align}
\bar{\beta}(\mbf, \Omega_q, h)
=\frac{1}{\lambda_{h(w_1)}+\lambda_{h(w_2)}-\lambda_{h(v)}} \cdot
\bar{\beta}(\mbf', \Omega', h'').
\label{eq:induction_relation4}
\end{align}

Moreover, for each $d \in \lrcb{q+1,q+2,\dots,p+1}$, we construct a $(p+1)$-layer decomposition $(\tilde{U}_{d,1}$, $\tilde{U}_{d,2}$, $\dots$, $\tilde{U}_{d,p+1})$
of $\mbf$ as follows: 
\begin{itemize}
    \item For $j=1,2,\dots,d-1$, set $\tilde{U}_{d,j}=U'_j$;   
    \item If $d \le p$, then set $\tilde{U}_{d,d}=V_R(\mbf'[U'_d]) \cup \lrcb{w_1, w_2}$, where $V_R(\mbf'[U'_d])$ denotes the set of roots of the trees in $\mbf'[U'_d]$.
    \item If $d \le p$, then for $j=d+1,d+2,\dots,p+1$, set $\tilde{U}_{d,j}=(U_{j-1}' \setminus \lrcb{v}) \cup \lrcb{w_1, w_2}$.    
    \item If $d=p+1$, then set $\tilde{U}_{p+1, p+1}=V_L(\mbf') \cup \lrcb{w_1, w_2}$, where $V_L(\mbf')$ denotes the set of leaves of $\mbf'$.
\end{itemize}
In other words, we create a new layer containing the edges $(v, w_1)$ and $(v, w_2)$, place it between the original layers, or above the top layer, or below the bottom layer, while replacing $v$ with $w_1$ and $w_2$ in the layers beneath it. Let $\tilde{\Omega}_d = (\tilde{U}_{d,1}, \tilde{U}_{d,2}, \dots, \tilde{U}_{d,p+1})$. Then one can verify that
$\tilde{\Omega}_d\in \mathcal{L}_{p+1}(\mbf)$, and 
\begin{align}
\bar{\beta}(\mbf, \tilde{\Omega}_d, h)
=\frac{1}{\lambda_{h(w_1)}+\lambda_{h(w_2)}-\lambda_{h(v)}} \cdot
\bar{\beta}(\mbf', \Omega', h')
\label{eq:induction_relation5}
\end{align}
for each $d \in \lrcb{q+1,q+2,\dots,p+1}$.

Now let $\mathcal{X}(\Omega')\defeq\lrcb{\Omega_c:~c \in \lrcb{q,q+1,\dots,p}}$, 
and $\mathcal{Y}(\Omega')\defeq\lrcb{\tilde{\Omega}_d:~d \in \lrcb{q+1,q+2, \dots, p+1}}$.
One can verify that
\begin{align}
\bigsqcup_{p=1}^{z+1}\mathcal{L}_{p}(\mbf) = 
\bigsqcup_{p=1}^z
\bigsqcup_{\Omega' \in \mathcal{L}_p(\mbf')}
\lrb{\mathcal{X}(\Omega')
\bigsqcup \mathcal{Y}(\Omega')}.
\label{eq:induction_partition2}
\end{align}
Thus, all decompositions of $\mbf$ can be obtained by modifying a corresponding decomposition of $\mbf'$ in the manner described above.

Now we combine Eqs.~\eqref{eq:induction_assumption2},
\eqref{eq:induction_assumption3},
\eqref{eq:induction_relation3},
\eqref{eq:induction_relation4},
\eqref{eq:induction_relation5},
and \eqref{eq:induction_partition2} and obtain
\begin{align}
\sum_{p=1}^{z+1} (-1)^p
\sum_{\Omega \in \mathcal{L}_{p}(\mbf)} 
\bar{\beta}(\mbf, \Omega, h)   
&=
\frac{1}{\lambda_{h(w_1)}+\lambda_{h(w_2)}-\lambda_{h(v)}}
\sum_{p=1}^{z} (-1)^p
\sum_{\Omega' \in \mathcal{L}_{p}(\mbf')} 
\lrb{\bar{\beta}(\mbf', \Omega', h'')
-\bar{\beta}(\mbf', \Omega', h')}  \\
&=\frac{(-1)^{z}}{\lambda_{h(w_1)}+\lambda_{h(w_2)}-\lambda_{h(v)}} \lrb{\gamma(\mbf', h'') 
-\gamma(\mbf', h')}.
\end{align}
Next, we claim that 
\begin{align}
\frac{1}{\lambda_{h(w_1)}+\lambda_{h(w_2)}-\lambda_{h(v)}} \lrb{\gamma(\mbf', h'') -\gamma(\mbf', h')}
=-\gamma(\mbf, h).
\label{eq:gamma_difference}
\end{align}
This would imply
\begin{align}
\sum_{p=1}^{z+1} (-1)^p
\sum_{\Omega \in \mathcal{L}_{p}(\mbf)} 
\bar{\beta}(\mbf, \Omega, h)   
=(-1)^{z+1}\gamma(\mbf, h),
\end{align}
as desired.

It remain to prove Eq.~\eqref{eq:gamma_difference}. Suppose $\mbf=(\mbt_1, \mbt_2, \dots, \mbt_a)$ and $\mbf'=(\mbt_1', \mbt_2', \dots, \mbt_a')$. Recall that, by assumption, $u, v, w_1, w_2 \in V(\mbt_i)$. Then for each $j \in [a] \setminus \lrcb{i}$, we have $\mbt_j'=\mbt_j$, and the restrictions of $h$, $h'$ and $h''$ to this tree are identical. This implies that
\begin{align}
\prod_{j \in [a]\setminus\lrcb{i}}\gamma(\mbt_j, h)=
\prod_{j \in [a]\setminus\lrcb{i}}\gamma(\mbt_j', h') =
\prod_{j \in [a]\setminus\lrcb{i}}\gamma(\mbt_j', h'').
\end{align}
Therefore, it suffices to prove
\begin{align}
\frac{1}{\lambda_{h(w_1)}+\lambda_{h(w_2)}-\lambda_{h(v)}} \lrb{\gamma(\mbt_i', h'')-\gamma(\mbt_i', h')}
=-\gamma(\mbt_i, h).
\label{eq:gamma_difference2}
\end{align}
Note that $\mbt_i'$ is obtained from $\mbt_i$ by removing the nodes $w_1$ and $w_2$, along with the edges $(v, w_1)$ and $(v, w_2)$. Furthermore, the restrictions of $h$, $h'$ and $h''$ to $V(\mbt_i)\setminus\lrcb{w_1,w_2,v}$ are identical.

Now, for every $\vec u \defeq (u_1, u_2, \dots, u_n) \in O(\mbt_i')$, 
we map it to possibly multiple elements of $O(\mbt_i)$. This mapping induces a partition of $O(\mbt_i)$. Specifically, since $\lrcb{u_1, u_2, \dots, u_n}$ contains all internal nodes of $\mbt_i'$, which includes $u$, there must exist $m \in [n]$ such that $u=u_m$. Then for $j=m, m+1, \dots, n$, we define
\begin{align}
    \vec u^{(j)} \defeq (u_1, u_2, \dots, u_j, v, u_{j+1}, \dots, u_{n}).
\end{align}
That is, $\vec u^{(j)}$ is obtained from $\vec u$ by inserting $v$ right after $u_j$. Clearly, $\vec u^{(j)}$ is a valid topological order of $\mbt_i$. Furthermore, let
\begin{align}
    \mathcal{Z}(\vec u) = \lrcb{\vec u^{(j)}:~j\in \lrcb{m, m+1, \dots, n}}.
\end{align}
Then one can see that
\begin{align}
    O(\mbt_i) = \bigsqcup_{\vec u \in O(\mbt_i')} \mathcal{Z}(\vec u).
    \label{eq:Oti_partition}
\end{align}
In other words, every topological order of $\mbt_i$ can be generated by extending a corresponding topological order of $\mbt_i'$ in the above manner.

Now let $r$ denote the root of $\mbt_i$. Then $r$ is also the root of $\mbt_i'$. Moreover, let 
\begin{align}
\tilde{\Delta} \defeq \lambda_{h(w_1)}+\lambda_{h(w_2)}-\lambda_{h(v)}, 
\end{align}
and for any $c \in [n]$, let 
\begin{align}
\Delta'_{\vec u, c} &\defeq \sum_{w \in C(\lrcb{u_1,u_2,\dots,u_c})} \lambda_{h'(w)} -\lambda_{h'(r)},    \\
\Delta''_{\vec u, c} &\defeq \sum_{w \in C(\lrcb{u_1,u_2,\dots,u_c})} \lambda_{h''(w)} -\lambda_{h''(r)}, 
\end{align}
where $C(\cdot)$ is defined with respect to $\mbt_i'$\footnote{In fact, it does not matter if $C(\cdot)$ is defined with respect to $\mbt_i$ or $\mbt_i'$ here, as they yield the same results.}. In addition, for any $j \in \lrcb{m, m+1, \dots, n}$ and $c \in [n+1]$, let
\begin{align}
\Delta_{\vec u^{(j)}, c} &\defeq \sum_{w \in C\lrb{\lrcb{u^{(j)}_1,u^{(j)}_2,\dots,u^{(j)}_c}}} \lambda_{h(w)} -\lambda_{h(r)},
\end{align}
where $C(\cdot)$ is defined with respect to $\mbt_i$. Using the definitions of $h$, $h'$ and $h''$, along with the fact that $v$ is a child of $u$ and the parent of two leaves $w_1, w_2$ in $\mbt_i$, we get
\begin{align}
    \Delta''_{\vec u, c}&=\Delta'_{\vec u, c}, \qquad 1 \le c <m; \\
    \Delta''_{\vec u, c}&=\Delta'_{\vec u, c}+\tilde{\Delta}, \qquad m \le c \le n, 
\end{align}
and for each $j \in \lrcb{m, m+1, \dots, n}$, 
\begin{align}
\Delta_{\vec u^{(j)}, c} &=\Delta'_{\vec u, c}, \qquad 1 \le c \le j;\\    
\Delta_{\vec u^{(j)}, c} &=\Delta'_{\vec u, c-1}+\tilde{\Delta}, \qquad j < c \le n+1.
\end{align}
Now we utilize the identity, valid for any $\delta \neq 0$ and $x_1, x_2, \dots, x_l \in \R \setminus \lrcb{0, -\delta}$, 
\begin{align}
\frac{1}{\delta}\lrb{
\frac{1}{x_1}    
\frac{1}{x_2}
\dots
\frac{1}{x_l}
-
\frac{1}{x_1+\delta}    
\frac{1}{x_2+\delta}
\dots
\frac{1}{x_l+\delta}
}
=\sum_{b=1}^{l}
\frac{1}{x_1}    
\frac{1}{x_2}
\dots
\frac{1}{x_b}
\frac{1}{x_b+\delta}
\frac{1}{x_{b+1}+\delta}
\dots
\frac{1}{x_{l}+\delta},
\end{align}
which can be proved by induction on $l$, and obtain
\begin{align}
\frac{1}{\tilde{\Delta}} \cdot \lrb{
\frac{1}{\Delta'_{\vec u, 1}}
\frac{1}{\Delta'_{\vec u, 2}} \dots
\frac{1}{\Delta'_{\vec u, n}} 
-
\frac{1}{\Delta''_{\vec u, 1}}
\frac{1}{\Delta''_{\vec u, 2}} \dots
\frac{1}{\Delta''_{\vec u, n}} }
=\sum_{j=m}^n 
\frac{1}{\Delta_{\vec u^{(j)}, 1}}
\frac{1}{\Delta_{\vec u^{(j)}, 2}}\dots 
\frac{1}{\Delta_{\vec u^{(j)}, n+1}}.
\end{align}

From now on, we will write $m(\vec u)$ to explicitly indicate the dependence of $m$ on $\vec u$. Let 
\begin{align}
\kappa(\vec u, h') &\defeq \frac{1}{\Delta'_{\vec u, 1}}
\frac{1}{\Delta'_{\vec u, 2}} \dots
\frac{1}{\Delta'_{\vec u, n}}, \\
\kappa(\vec u, h'') &\defeq \frac{1}{\Delta''_{\vec u, 1}}
\frac{1}{\Delta''_{\vec u, 2}} \dots
\frac{1}{\Delta''_{\vec u, n}},   
\end{align}
and for each $j \in \lrcb{m(\vec u), m(\vec u)+1, \dots, n}$, let 
\begin{align}
\kappa(\vec u^{(j)}, h) &\defeq \frac{1}{\Delta_{\vec u^{(j)}, 1}}
\frac{1}{\Delta_{\vec u^{(j)}, 2}} \dots
\frac{1}{\Delta_{\vec u^{(j)}, n+1}}.
\end{align}
Then we have
\begin{align}
\frac{1}{\tilde{\Delta}} \lrb{\kappa(\vec u, h'') - \kappa(\vec u, h')}
=-\sum_{j=m(\vec u)}^n \kappa(\vec u^{(j)}, h).
\end{align}
Summing both sides over $\vec u \in O(\mbf')$ and applying Eq.~\eqref{eq:gamma_t_h_def} yields
\begin{align}
\frac{1}{\tilde{\Delta}} \lrb{\gamma(\mbt_i', h'')-\gamma(\mbt_i', h')}
&=\frac{1}{\tilde{\Delta}} \lrb{\sum_{\vec u \in O(\mbt_i')}\kappa(\vec u, h'') - \sum_{\vec u \in O(\mbt_i')} \kappa(\vec u, h')}\\
&=-\sum_{\vec u \in O(\mbt_i')}\sum_{j=m(\vec u)}^n \kappa(\vec u^{(j)}, h) \\
&=-\sum_{\vec u \in O(\mbt_i')}\sum_{\vec v \in \mathcal{Z}(\vec u)}  \kappa(\vec v, h) \\
&=-\sum_{\vec v \in O(\mbt_i)}  \kappa(\vec v, h) \\
&=-\gamma(\mbt_i, h),
\end{align}
where the second last step follows from Eq.~\eqref{eq:Oti_partition}. This completes the proof of Eq.~\eqref{eq:gamma_difference2}.
\end{itemize}
To summarize, we have shown that Eq.~\eqref{eq:key_fact_beta_gamma}  holds for all $\mbf \in \mathcal{T}^{a}_{a+z+1}$ and $h \in L(\mbf)$. This completes the proof of the proposition.

To illustrate the above proof, we present an example for each of the two cases.

\begin{itemize}
    \item Example of Case 1: Consider the binary forest $\mbf=(\mbt_1,\mbt_2)$ illustrated in Figure~\ref{fig:case1_example}. It has vertex set $V(\mbf)=\lrcb{v, w_1, w_2, s, t, x, y, z}$. The first binary tree $\mbt_1$ has vertex set $V(\mbt_1)=\lrcb{v, w_1, w_2}$ and topological order set $O(\mbt_1)=\lrcb{(v)}$, while the second binary tree $\mbt_2$ has vertex set $V(\mbt_2)=\lrcb{s, t, x, y, z}$ and topological order set $O(\mbt_2)=\lrcb{(s, x)}$.
    
    Let $V'=V\setminus\lrcb{w_1,w_2}=\lrcb{v, s, t, x, y, z}$, and define the restricted forest $\mbf'=\mbf[V']=(\mbt_1', \mbt_2)$, where $\mbt_1'$ has vertex set $V(\mbt_1')=\lrcb{v}$ and topological order set $O(\mbt_1')=\emptyset$ (as it has no internal nodes). 

    Let $h$ be any labeling of $\mbf$ and let
    $h'=h[V']$ be its restriction to $\mbf'$.

\begin{figure}[h!]
\centering
\includegraphics[width=0.5\textwidth]{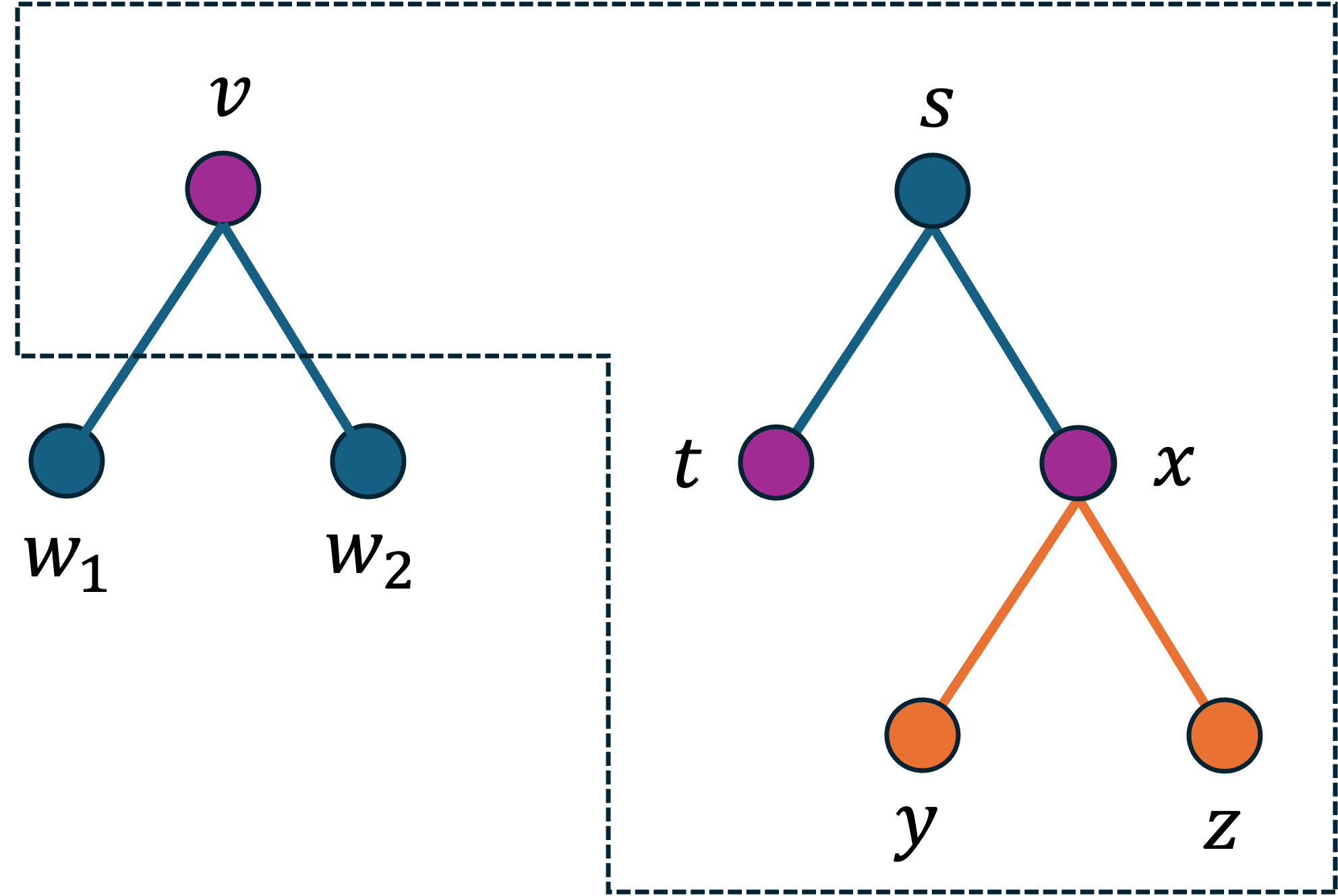}
\caption{An example of Case 1 in the proof of Proposition~\ref{prop:v_inv_ij_expression}. Here the binary forest $\mbf$ contains two binary trees, where the first tree includes three nodes $v, w_1, w_2$. A reduced forest $\mbf'$, enclosed by the dotted outline, is obtained by removing $w_1$ and $w_2$ from $\mbf$. A $2$-layer decomposition of $\mbf'$ is given by $(\lrcb{v, s, t, x}, \lrcb{v, t, x, y, z})$, which is highlighted in color in the figure. Note that this is just one of two possible decompositions of $\mbf'$.}
\label{fig:case1_example}
\end{figure}

$\mbf$ admits the following $1$-layer, $2$-layer and $3$-layer decompositions: 
\begin{align}
\mathcal{L}_1(\mbf)&=\lrcb{\Omega_1},    \\
\mathcal{L}_2(\mbf)&=\lrcb{\Omega_2, \Omega_3, \Omega_4, \Omega_5}, \\
\mathcal{L}_3(\mbf)&=\lrcb{\Omega_6, \Omega_7, \Omega_8},
\end{align}
where
\begin{align}
\Omega_1&=(\lrcb{v,w_1,w_2,s,t,x,y,z}), \\    
\Omega_2&=(\lrcb{v,w_1,w_2,s}, \lrcb{w_1,w_2,s,t,x,y,z}), \\
\Omega_3&=(\lrcb{v,s,t,x,y,z}, \lrcb{v,w_1,w_2,y,z}),\\
\Omega_4&=(\lrcb{v,s,t,x}, \lrcb{v,w_1,w_2,t,x,y,z}), \\
\Omega_5&=(\lrcb{v,w_1,w_2,s,t,x}, \lrcb{w_1,w_2,t,x,y,z}),\\
\Omega_6&=(\lrcb{v,w_1,w_2,s}, \lrcb{w_1,w_2,s,t,x}, \lrcb{w_1,w_2,t,x,y,z}),\\
\Omega_7&=(\lrcb{v,s,t,x}, \lrcb{v,w_1,w_2,t,x}, \lrcb{w_1,w_2,t,x,y,z}),\\
\Omega_8&=(\lrcb{v,s,t,x}, \lrcb{v,t,x,y,z}, \lrcb{v,w_1,w_2,y,z}).
\end{align}

Meanwhile, $\mbf'$ has the following $1$-layer and $2$-layer decompositions:
\begin{align}
\mathcal{L}_1(\mbf')&=\lrcb{\Omega_1'}, \\     
\mathcal{L}_2(\mbf')&=\lrcb{\Omega_2'},
\end{align}
where 
\begin{align}
\Omega_1'&=(\lrcb{v,s,t,x,y,z}), \\
\Omega_2'&=(\lrcb{v,s,t,x}, \lrcb{v,t,x,y,z}).
\end{align}

The decompositions of $\mbf$ and $\mbf'$ are related as follows:
\begin{align}
\mathcal{X}(\Omega_1')&=\lrcb{\Omega_1},    \\
\mathcal{Y}(\Omega_1')&=\lrcb{\Omega_2, \Omega_3}, \\
\mathcal{X}(\Omega_2')&=\lrcb{\Omega_4,\Omega_5},  \\
\mathcal{Y}(\Omega_2')&=\lrcb{\Omega_6,\Omega_7, \Omega_8}.
\end{align}

Note that
\begin{align}
\bigsqcup_{p=1}^3 \mathcal{L}_p(\mbf) = \bigsqcup_{p=1}^2 \bigsqcup_{\Omega' \in \mathcal{L}_p(\mbf')}\lrb{\mathcal{X}(\Omega') \bigsqcup \mathcal{Y}(\Omega')}.  
\end{align}

The $\bar{\beta}$ functions of $\mbf$ and $\mbf'$ with respect to their decompositions and labelings are given by:
\begin{align}
\bar{\beta}(\mbf,\Omega_i,h)&=
\frac{1}{\lambda_{h(w_1)}+\lambda_{h(w_2)}-\lambda_{h(v)}}
\cdot
\frac{1}{\lambda_{h(y)}+\lambda_{h(z)}-\lambda_{h(x)}} \cdot 
\frac{1}{\lambda_{h(t)}+\lambda_{h(y)}+\lambda_{h(z)}-\lambda_{h(s)}},~~\forall i \in \lrcb{1,2,3}, \\    
\bar{\beta}(\mbf,\Omega_j,h)&=
\frac{1}{\lambda_{h(w_1)}+\lambda_{h(w_2)}-\lambda_{h(v)}}
\cdot
\frac{1}{\lambda_{h(y)}+\lambda_{h(z)}-\lambda_{h(x)}} \cdot 
\frac{1}{\lambda_{h(t)}+\lambda_{h(x)}-\lambda_{h(s)}},~~\forall j\in \lrcb{4,5,6,7,8}, 
\end{align}
\begin{align}
\bar{\beta}(\mbf',\Omega_1',h')&=
\frac{1}{\lambda_{h(y)}+\lambda_{h(z)}-\lambda_{h(x)}} \cdot 
\frac{1}{\lambda_{h(t)}+\lambda_{h(y)}+\lambda_{h(z)}-\lambda_{h(s)}}, \\
\bar{\beta}(\mbf',\Omega_2',h')&=
\frac{1}{\lambda_{h(y)}+\lambda_{h(z)}-\lambda_{h(x)}} \cdot 
\frac{1}{\lambda_{h(t)}+\lambda_{h(x)}-\lambda_{h(s)}},    
\end{align}
and they satisfy the relationship
\begin{align}
\bar{\beta}(\mbf,\Omega_i,h)&=
\frac{1}{\lambda_{h(w_1)}+\lambda_{h(w_2)}-\lambda_{h(v)}} \cdot \bar{\beta}(\mbf',\Omega_1',h'),
~~\forall i \in \lrcb{1,2,3}, \\
\bar{\beta}(\mbf,\Omega_j,h)&=
\frac{1}{\lambda_{h(w_1)}+\lambda_{h(w_2)}-\lambda_{h(v)}} \cdot \bar{\beta}(\mbf',\Omega_2',h'),
~~\forall j \in \lrcb{4,5,6,7,8}.
\end{align}

Meanwhile, the $\gamma$ functions of $\mbf$ and $\mbf'$ with respect to their labelings are given by:
\begin{align}
\gamma(\mbf,h)&=\frac{1}{\lambda_{h(w_1)}+\lambda_{h(w_2)}-\lambda_{h(v)}}
\cdot \frac{1}{\lambda_{h(t)}+\lambda_{h(x)}-\lambda_{h(s)}} \cdot 
\frac{1}{\lambda_{h(t)}+\lambda_{h(y)}+\lambda_{h(z)}-\lambda_{h(s)}}, \\    
\gamma(\mbf',h')&=\frac{1}{\lambda_{h(t)}+\lambda_{h(x)}-\lambda_{h(s)}} \cdot 
\frac{1}{\lambda_{h(t)}+\lambda_{h(y)}+\lambda_{h(z)}-\lambda_{h(s)}},
\end{align}
and they satisfy the relationship
\begin{align}
\gamma(\mbf,h)=\frac{1}{\lambda_{h(w_1)}+\lambda_{h(w_2)}-\lambda_{h(v)}}
\cdot \gamma(\mbf',h').    
\end{align}

By direct calculation, one can verify that:
\begin{align}
\sum_{p=1}^{2} (-1)^p
\sum_{\Omega' \in \mathcal{L}_{p}(\mbf')} 
\bar{\beta}(\mbf', \Omega', h')   
=\gamma(\mbf', h'),
\end{align}

Then it follows that
\begin{align}
\sum_{p=1}^{3} (-1)^p
\sum_{\Omega \in \mathcal{L}_{p}(\mbf)} 
\bar{\beta}(\mbf, \Omega, h)  
&=-\frac{1}{\lambda_{h(w_1)}+\lambda_{h(w_2)}-\lambda_{h(v)}} \cdot
\sum_{p=1}^{2} (-1)^p
\sum_{\Omega' \in \mathcal{L}_{p}(\mbf')} 
\bar{\beta}(\mbf', \Omega', h')    \\
&=-\frac{1}{\lambda_{h(w_1)}+\lambda_{h(w_2)}-\lambda_{h(v)}} \cdot \gamma(\mbf', h') \\
&=-\gamma(\mbf, h),
\end{align}
as desired.

\item Example of case 2: Consider the binary forest $\mbf=(\mbt_1,\mbt_2)$ illustrated in Figure~\ref{fig:case2_example}. It has vertex set $V(\mbf)=\lrcb{u, v, k, w_1, w_2, s, t, x, y, z}$. The first binary tree $\mbt_1$ has vertex set $V(\mbt_1)=\lrcb{u, v, k, w_1, w_2}$, and topological order set $O(\mbt_1)=\lrcb{(u, v)}$, while the second binary tree $\mbt_2$ has vertex set $V(\mbt_2)=\lrcb{s, t, x, y, z}$ and topological order set $O(\mbt_2)=\lrcb{(s, x)}$.
    
Let $V'=V\setminus\lrcb{w_1,w_2}=\lrcb{u, v, k, s, t, x, y, z}$, and define the restricted forest $\mbf'=\mbf[V']=(\mbt_1', \mbt_2)$, where $\mbt_1'$ has vertex set $V(\mbt_1')=\lrcb{u,v,k}$ and topological order set $O(\mbt_1')=\lrcb{(u)}$. 

Let $h$ be any labeling of $\mbf$ and let $h'=h[V']$ be its restriction to $\mbf'$. Moreover, let $h''$ be a labeling of $\mbf'$ such that $h''(w)=h(w)$ for all $w \in V' \setminus \lrcb{v}$, and $\lambda_{h''(v)}=\lambda_{h(w_1)}+\lambda_{h(w_2)}$.

\begin{figure}[h!]
\centering
\includegraphics[width=0.5\textwidth]{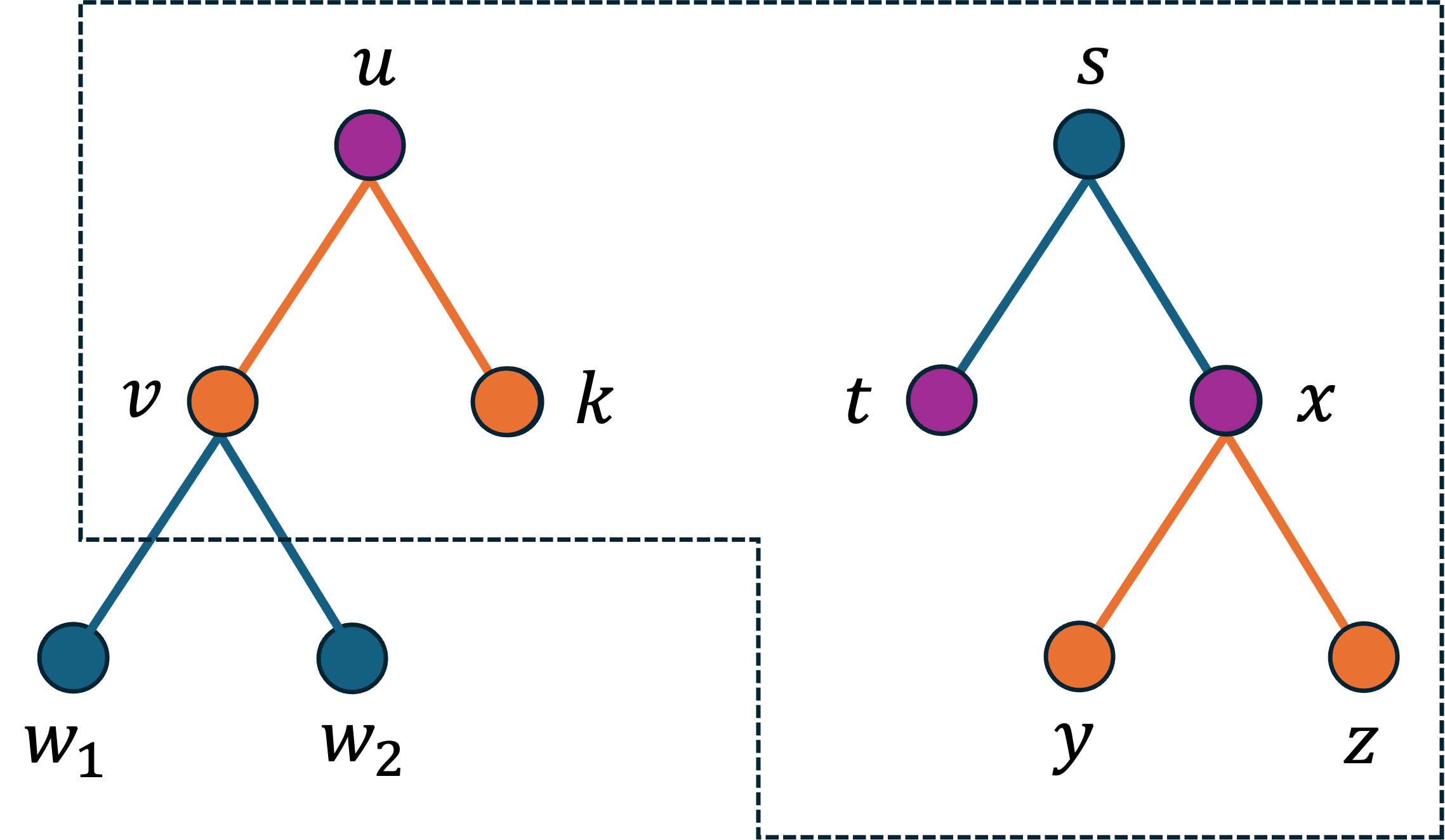}
\caption{An example of Case 2 in the proof of Proposition~\ref{prop:v_inv_ij_expression}. Here the binary forest $\mbf$ contains two binary trees, where the first tree has a node $v$ with parent $u$ and two children $w_1, w_2$. A reduced forest $\mbf'$, enclosed by the dotted outline, is obtained by removing $w_1$ and $w_2$ from $\mbf$. A $2$-layer decomposition of $\mbf'$ is given by $(\lrcb{u, s, t, x}, \lrcb{u, v, k, t, x, y, z})$, which is highlighted in color in the figure. Note that this is just one of eight possible decompositions of $\mbf'$.}
\label{fig:case2_example}
\end{figure}

$\mbf$ admits the following $1$-layer, $2$-layer, $3$-layer and $4$-layer decompositions: 

\begin{align}
\mathcal{L}_1(\mbf)&=\lrcb{\Omega_1},     \\
\mathcal{L}_2(\mbf)&=\lrcb{\Omega_2, \Omega_3, \Omega_4, \Omega_5, \Omega_6, \Omega_7, \Omega_8}, \\
\mathcal{L}_3(\mbf)&=\lrcb{\Omega_9, \Omega_{10}, \Omega_{11}, \Omega_{12}, \Omega_{13}, \Omega_{14}, \Omega_{15}, \Omega_{16}, \Omega_{17}, \Omega_{18}, \Omega_{19}, \Omega_{20}}, \\
\mathcal{L}_4(\mbf)&=\lrcb{ \Omega_{21}, \Omega_{22}, \Omega_{23},\Omega_{24}, \Omega_{25},\Omega_{26}},
\end{align}
where
\begin{align}
\Omega_1&=(\lrcb{u,v,k,w_1,w_2,s,t,x,y,z}), \\    
\Omega_2&=(\lrcb{u,v,k,w_1,w_2,s},\lrcb{w_1,w_2,k,s,t,x,y,z}), \\
\Omega_3&=(\lrcb{u,s,t,x,y,z},\lrcb{u,v,k,w_1,w_2,t,y,z}), \\
\Omega_4&=(\lrcb{u,v,k,s,t,x,y,z},\lrcb{v,w_1,w_2,k,t,y,z}), \\
\Omega_5&=(\lrcb{u,v,k,w_1,w_2,s,t,x}, \lrcb{w_1,w_2,k,t,x,y,z}), \\
\Omega_6&=(\lrcb{u,v,k,s,t,x}, \lrcb{v,w_1,w_2,k,t,x,y,z}), \\
\Omega_7&=(\lrcb{u,s,t,x}, \lrcb{u,v,k,w_1,w_2,t,x,y,z}), \\
\Omega_8&=(\lrcb{u,v,k,s}, \lrcb{v,k,w_1,w_2,s,t,x,y,z}), \\
\Omega_{9}&=(\lrcb{u,s,t,x}, \lrcb{u,t,x,y,z}, \lrcb{u,v,k,w_1,w_2,t,y,z}), \\
\Omega_{10}&=(\lrcb{u,v,k,s}, \lrcb{v,k,w_1,w_2,s}, \lrcb{w_1,w_2,k,s,t,x,y,z}), \\
\Omega_{11}&=(\lrcb{u,s,t,x,y,z}, \lrcb{u,v,k,t,y,z}, \lrcb{v,w_1,w_2,k,t,y,z}), \\
\Omega_{12}&=(\lrcb{u,v,k,s,t,x}, \lrcb{v,k,t,x,y,z,}, \lrcb{v,w_1,w_2,k,t,y,z}), \\
\Omega_{13}&=(\lrcb{u,s,t,x}, \lrcb{u,v,k,t,x,y,z}, \lrcb{v,w_1,w_2,k,t,y,z}), \\
\Omega_{14}&=(\lrcb{u,v,k,s}, \lrcb{v,k,s,t,x,y,z}, \lrcb{v,w_1,w_2,k,t,y,z}), \\
\Omega_{15}&=(\lrcb{u,v,k,w_1,w_2, s}, \lrcb{w_1,w_2,k,s,t,x}, \lrcb{w_1,w_2,k,t,x,y,z}), \\
\Omega_{16}&=(\lrcb{u,v,k,s,t,x}, \lrcb{v,w_1,w_2,k,t,x}, \lrcb{w_1,w_2,k,t,x, y,z}), \\
\Omega_{17}&=(\lrcb{u,v,k,s}, \lrcb{v,w_1,w_2,k,s,t,x}, \lrcb{w_1,w_2,k,t,x, y,z}), \\
\Omega_{18}&=(\lrcb{u,s,t,x}, \lrcb{u,v,k,w_1,w_2,t,x}, \lrcb{w_1,w_2,k,t,x, y,z}), \\
\Omega_{19}&=(\lrcb{u,s,t,x}, \lrcb{u,v,k,t,x}, \lrcb{v,w_1,w_2,k,t,x,y,z}), \\
\Omega_{20}&=(\lrcb{u,v,k,s}, \lrcb{v,k,s,t,x}, \lrcb{v,w_1,w_2,k,t,x,y,z}), \\
\Omega_{21}&=(\lrcb{u,s,t,x}, \lrcb{u,t,x,y,z}, \lrcb{u,v,k,t,y,z}, \lrcb{v,w_1,w_2,k,t,y,z}), \\
\Omega_{22}&=(\lrcb{u,s,t,x}, \lrcb{u,v,k,t,x}, \lrcb{v,k,t,x,y,z}, \lrcb{v,w_1,w_2,k,t,y,z}), \\
\Omega_{23}&=(\lrcb{u,v,k,s}, \lrcb{v,k,s,t,x}, \lrcb{v,k,t,x,y,z}, \lrcb{v,w_1,w_2,k,t,y,z}), \\
\Omega_{24}&=(\lrcb{u,v,k,s}, \lrcb{v,w_1,w_2,k,s}, \lrcb{w_1,w_2,k,s,t,x}, \lrcb{w_1,w_2,k,t,x,y,z}), \\
\Omega_{25}&=(\lrcb{u,v,k,s}, \lrcb{v,k,s,t,x}, \lrcb{v,w_1,w_2,k,t,x}, \lrcb{w_1,w_2,k,t,x,y,z}), \\
\Omega_{26}&=(\lrcb{u,s,t,x}, \lrcb{u,v,k,t,x}, \lrcb{v,w_1,w_2,k,t,x}, \lrcb{w_1,w_2,k,t,x,y,z}).
\end{align}

Meanwhile, $\mbf'$ has the following $1$-layer, $2$-layer and $3$-layer decompositions:
\begin{align}
\mathcal{L}_1(\mbf')&=\lrcb{\Omega_1'},     \\
\mathcal{L}_2(\mbf')&=\lrcb{\Omega_2', \Omega_3', \Omega_4', \Omega_5'}, \\
\mathcal{L}_3(\mbf')&=\lrcb{\Omega_6', \Omega_7', \Omega_8'},
\end{align}
where
\begin{align}
\Omega_1'&=(\lrcb{u,v,k,s,t,x,y,z}),    \\
\Omega_2'&=(\lrcb{u,s,t,x,y,z}, \lrcb{u,v,k,t,y,z}), \\
\Omega_3'&=(\lrcb{u,v,k,s,t,x}, \lrcb{v,k,t,x,y,z}), \\
\Omega_4'&=(\lrcb{u,s,t,x}, \lrcb{u,v,k,t,x,y,z}), \\
\Omega_5'&=(\lrcb{u,v,k,s}, \lrcb{v,k,s,t,x,y,z}), \\
\Omega_6'&=(\lrcb{u,s,t,x}, \lrcb{u,t,x,y,z}, \lrcb{u,v,k,t,y,z}), \\
\Omega_7'&=(\lrcb{u,s,t,x}, \lrcb{u,v,k,t,x}, \lrcb{v,k,t,x,y,z}), \\
\Omega_8'&=(\lrcb{u,v,k,s}, \lrcb{v,k,s,t,x}, \lrcb{v,k,t,x,y,z}).
\end{align}

The decompositions of $\mbf$ and $\mbf'$ are related as follows:
\begin{align}
\mathcal{X}(\Omega'_1)&=\lrcb{\Omega_1}, \\
\mathcal{Y}(\Omega'_1)&=\lrcb{\Omega_4}, \\
\mathcal{X}(\Omega'_2)&=\lrcb{\Omega_3}, \\
\mathcal{Y}(\Omega'_2)&=\lrcb{\Omega_{11}}, \\
\mathcal{X}(\Omega'_3)&=\lrcb{\Omega_5, \Omega_6}, \\
\mathcal{Y}(\Omega'_3)&=\lrcb{\Omega_{12}, \Omega_{16}}, \\
\mathcal{X}(\Omega'_4)&=\lrcb{\Omega_7}, \\
\mathcal{Y}(\Omega'_4)&=\lrcb{\Omega_{13}}, \\
\mathcal{X}(\Omega'_5)&=\lrcb{\Omega_2, \Omega_8}, \\
\mathcal{Y}(\Omega'_5)&=\lrcb{\Omega_{10}, \Omega_{14}}, \\
\mathcal{X}(\Omega'_6)&=\lrcb{\Omega_9}, \\
\mathcal{Y}(\Omega'_6)&=\lrcb{\Omega_{21}}, \\
\mathcal{X}(\Omega'_7)&=\lrcb{\Omega_{18}, \Omega_{19}}, \\
\mathcal{Y}(\Omega'_7)&=\lrcb{\Omega_{22}, \Omega_{26}}, \\
\mathcal{X}(\Omega'_8)&=\lrcb{\Omega_{15}, \Omega_{17}, \Omega_{20}}, \\
\mathcal{Y}(\Omega'_8)&=\lrcb{\Omega_{23}, \Omega_{24},\Omega_{25}}.
\end{align}

Note that
\begin{align}
\bigsqcup_{p=1}^4 \mathcal{L}_p(\mbf) = \bigsqcup_{p=1}^3 \bigsqcup_{\Omega' \in \mathcal{L}_p(\mbf')}\lrb{\mathcal{X}(\Omega') \bigsqcup \mathcal{Y}(\Omega')}.  
\end{align}

The $\bar{\beta}$ functions of $\mbf$ and $\mbf'$ with respect to their decompositions and labelings satisfy the following relationship:
\begin{align}
\bar{\beta}(\mbf, \Omega_1, h)
=\frac{1}{\lambda_{h(w_1)}+\lambda_{h(w_2)}-\lambda_{h(v)}} \cdot
\bar{\beta}(\mbf', \Omega'_1, h''),\\
\bar{\beta}(\mbf, \Omega_2, h)
=\frac{1}{\lambda_{h(w_1)}+\lambda_{h(w_2)}-\lambda_{h(v)}} \cdot
\bar{\beta}(\mbf', \Omega'_5, h''),\\
\bar{\beta}(\mbf, \Omega_3, h)
=\frac{1}{\lambda_{h(w_1)}+\lambda_{h(w_2)}-\lambda_{h(v)}} \cdot
\bar{\beta}(\mbf', \Omega'_2, h''),\\
\bar{\beta}(\mbf, \Omega_4, h)
=\frac{1}{\lambda_{h(w_1)}+\lambda_{h(w_2)}-\lambda_{h(v)}} \cdot
\bar{\beta}(\mbf', \Omega'_1, h'),\\
\bar{\beta}(\mbf, \Omega_5, h)
=\frac{1}{\lambda_{h(w_1)}+\lambda_{h(w_2)}-\lambda_{h(v)}} \cdot
\bar{\beta}(\mbf', \Omega'_3, h''),\\
\bar{\beta}(\mbf, \Omega_6, h)
=\frac{1}{\lambda_{h(w_1)}+\lambda_{h(w_2)}-\lambda_{h(v)}} \cdot
\bar{\beta}(\mbf', \Omega'_3, h'),\\
\bar{\beta}(\mbf, \Omega_7, h)
=\frac{1}{\lambda_{h(w_1)}+\lambda_{h(w_2)}-\lambda_{h(v)}} \cdot
\bar{\beta}(\mbf', \Omega'_4, h''),\\
\bar{\beta}(\mbf, \Omega_8, h)
=\frac{1}{\lambda_{h(w_1)}+\lambda_{h(w_2)}-\lambda_{h(v)}} \cdot
\bar{\beta}(\mbf', \Omega'_5, h'),\\
\bar{\beta}(\mbf, \Omega_9, h)
=\frac{1}{\lambda_{h(w_1)}+\lambda_{h(w_2)}-\lambda_{h(v)}} \cdot
\bar{\beta}(\mbf', \Omega'_6, h''),\\
\bar{\beta}(\mbf, \Omega_{10}, h)
=\frac{1}{\lambda_{h(w_1)}+\lambda_{h(w_2)}-\lambda_{h(v)}} \cdot
\bar{\beta}(\mbf', \Omega'_{5}, h'),\\
\bar{\beta}(\mbf, \Omega_{11}, h)
=\frac{1}{\lambda_{h(w_1)}+\lambda_{h(w_2)}-\lambda_{h(v)}} \cdot
\bar{\beta}(\mbf', \Omega'_2, h'),\\
\bar{\beta}(\mbf, \Omega_{12}, h)
=\frac{1}{\lambda_{h(w_1)}+\lambda_{h(w_2)}-\lambda_{h(v)}} \cdot
\bar{\beta}(\mbf', \Omega'_3, h'),\\
\bar{\beta}(\mbf, \Omega_{13}, h)
=\frac{1}{\lambda_{h(w_1)}+\lambda_{h(w_2)}-\lambda_{h(v)}} \cdot
\bar{\beta}(\mbf', \Omega'_4, h'),\\
\bar{\beta}(\mbf, \Omega_{14}, h)
=\frac{1}{\lambda_{h(w_1)}+\lambda_{h(w_2)}-\lambda_{h(v)}} \cdot
\bar{\beta}(\mbf', \Omega'_5, h'),\\
\bar{\beta}(\mbf, \Omega_{15}, h)
=\frac{1}{\lambda_{h(w_1)}+\lambda_{h(w_2)}-\lambda_{h(v)}} \cdot
\bar{\beta}(\mbf', \Omega'_8, h''),\\
\bar{\beta}(\mbf, \Omega_{16}, h)
=\frac{1}{\lambda_{h(w_1)}+\lambda_{h(w_2)}-\lambda_{h(v)}} \cdot
\bar{\beta}(\mbf', \Omega'_3, h'),\\
\bar{\beta}(\mbf, \Omega_{17}, h)
=\frac{1}{\lambda_{h(w_1)}+\lambda_{h(w_2)}-\lambda_{h(v)}} \cdot
\bar{\beta}(\mbf', \Omega'_8, h'),\\
\bar{\beta}(\mbf, \Omega_{18}, h)
=\frac{1}{\lambda_{h(w_1)}+\lambda_{h(w_2)}-\lambda_{h(v)}} \cdot
\bar{\beta}(\mbf', \Omega'_7, h''),\\
\bar{\beta}(\mbf, \Omega_{19}, h)
=\frac{1}{\lambda_{h(w_1)}+\lambda_{h(w_2)}-\lambda_{h(v)}} \cdot
\bar{\beta}(\mbf', \Omega'_7, h'),\\
\bar{\beta}(\mbf, \Omega_{20}, h)
=\frac{1}{\lambda_{h(w_1)}+\lambda_{h(w_2)}-\lambda_{h(v)}} \cdot
\bar{\beta}(\mbf', \Omega'_8, h'),\\
\bar{\beta}(\mbf, \Omega_{21}, h)
=\frac{1}{\lambda_{h(w_1)}+\lambda_{h(w_2)}-\lambda_{h(v)}} \cdot
\bar{\beta}(\mbf', \Omega'_6, h'), \\
\bar{\beta}(\mbf, \Omega_{22}, h)
=\frac{1}{\lambda_{h(w_1)}+\lambda_{h(w_2)}-\lambda_{h(v)}} \cdot
\bar{\beta}(\mbf', \Omega'_7, h'), \\
\bar{\beta}(\mbf, \Omega_{23}, h)
=\frac{1}{\lambda_{h(w_1)}+\lambda_{h(w_2)}-\lambda_{h(v)}} \cdot
\bar{\beta}(\mbf', \Omega'_8, h'), \\
\bar{\beta}(\mbf, \Omega_{24}, h)
=\frac{1}{\lambda_{h(w_1)}+\lambda_{h(w_2)}-\lambda_{h(v)}} \cdot
\bar{\beta}(\mbf', \Omega'_8, h'), \\
\bar{\beta}(\mbf, \Omega_{25}, h)
=\frac{1}{\lambda_{h(w_1)}+\lambda_{h(w_2)}-\lambda_{h(v)}} \cdot
\bar{\beta}(\mbf', \Omega'_8, h'), \\
\bar{\beta}(\mbf, \Omega_{26}, h)
=\frac{1}{\lambda_{h(w_1)}+\lambda_{h(w_2)}-\lambda_{h(v)}} \cdot
\bar{\beta}(\mbf', \Omega'_7, h').
\end{align}

Meanwhile, the $\gamma$ functions of $\mbf$ and $\mbf'$ with respect to their labelings are given by:
\begin{align}
\gamma(\mbf, h) &=
\frac{1}{\lambda_{h(v)}+\lambda_{h(k)}-\lambda_{h(u)}}
\cdot \frac{1}{\lambda_{h(w_1)}+\lambda_{h(w_2)}+\lambda_{h(k)}-\lambda_{h(u)}} \cdot 
\frac{1}{\lambda_{h(t)}+\lambda_{h(x)}-\lambda_{h(s)}} \nonumber\\ &\quad \cdot
\frac{1}{\lambda_{h(t)}+\lambda_{h(y)}+\lambda_{h(z)}-
\lambda_{h(s)}}, \\
    \gamma(\mbf', h')&= \frac{1}{\lambda_{h(v)}+\lambda_{h(k)}-\lambda_{h(u)}} \cdot 
\frac{1}{\lambda_{h(t)}+\lambda_{h(x)}-\lambda_{h(s)}} \cdot
\frac{1}{\lambda_{h(t)}+\lambda_{h(y)}+\lambda_{h(z)}-
\lambda_{h(s)}}, \\
\gamma(\mbf', h'')&= \frac{1}{\lambda_{h(w_1)}+\lambda_{h(w_2)}+\lambda_{h(k)}-\lambda_{h(u)}} \cdot 
\frac{1}{\lambda_{h(t)}+\lambda_{h(x)}-\lambda_{h(s)}} \cdot
\frac{1}{\lambda_{h(t)}+\lambda_{h(y)}+\lambda_{h(z)}-
\lambda_{h(s)}},
\end{align}
and they satisfy the relationship:
\begin{align}
\gamma(\mbf, h) = -\frac{1}{\lambda_{h(w_1)}+\lambda_{h(w_2)}-\lambda_{h(v)} } \cdot \lrb{\gamma(\mbf', h'')-\gamma(\mbf', h')}.
\end{align}

By direction calculation, one can verify that
\begin{align}
\sum_{p=1}^{3} (-1)^p
\sum_{\Omega' \in \mathcal{L}_{p}(\mbf')} 
\bar{\beta}(\mbf', \Omega', h')
= -\gamma(\mbf', h')
\end{align}
and
\begin{align}
\sum_{p=1}^{3} (-1)^p
\sum_{\Omega' \in \mathcal{L}_{p}(\mbf')} 
\bar{\beta}(\mbf', \Omega', h'')
= - \gamma(\mbf', h'').
\end{align}

Then it follows that
\begin{align}
\sum_{p=1}^{4} (-1)^p
\sum_{\Omega \in \mathcal{L}_{p}(\mbf)} 
\bar{\beta}(\mbf, \Omega, h)  
&=\frac{1}{\lambda_{h(w_1)}+\lambda_{h(w_2)}-\lambda_{h(v)}} \cdot
\sum_{p=1}^{3} (-1)^p
\sum_{\Omega' \in \mathcal{L}_{p}(\mbf')} 
\left (\bar{\beta}(\mbf', \Omega', h'') \right . \\
&\quad \left . -\bar{\beta}(\mbf', \Omega', h')\right )
\\
&=-\frac{1}{\lambda_{h(w_1)}+\lambda_{h(w_2)}-\lambda_{h(v)}} \cdot \lrb{\gamma(\mbf', h'')-\gamma(\mbf',h')} \\
&=\gamma(\mbf, h),
\end{align}
as desired.
\end{itemize}

\section{Proofs of Lemmas~\ref{lem:equivalence_linear_no_resonance_gap_poincare_domain},~\ref{lem:vij_norm_bound_poincare_domain} and \ref{lem:wij_norm_bound_poincare_domain}}
\label{app:proof_norm_bound_vij_v_inv_ij_poincare_domain}

The proof of the equivalence between \ref{assump:linear_no_resonance_gap} and the condition that $(\lambda_1, \lambda_2, \dots, \lambda_N)$ is nonresonant and lies in the Poincar\'{e} domain, is as follows.
\begin{proof}[Proof of Lemma \ref{lem:equivalence_linear_no_resonance_gap_poincare_domain}]
Suppose $\lambda_1, \lambda_2, \dots, \lambda_N$ satisfy Assumption~\ref{assump:linear_no_resonance_gap}. Then it is clear that $\vec\lambda=(\lambda_1, \lambda_2, \dots, \lambda_N)$ is nonresonant. Moreover, for any $\alpha_1, \alpha_3, \dots, \alpha_N \in \myN_0$ and $\alpha_2 \in \myN$,  Eq.~\eqref{eq:linear_no_resonance_gap} implies
\begin{equation}
    \abs{\frac{\lambda_1-\lambda_2}{\sum_{j=1}^N \alpha_j - 1} - \bar{\lambda}_{\vec \alpha}}  \ge \Delta >0,
    \label{eq:distance_to_point_in_convex_hull}
\end{equation}
where 
\begin{align}
\bar{\lambda}_{\vec \alpha} \defeq \frac{\sum_{j=1}^N (\alpha_j - \delta_{j,2}) \lambda_j}{\sum_{j=1}^N \alpha_j - 1}    
\end{align}
is a point in the convex hull of $\lambda_1,\lambda_2,\dots,\lambda_N$, denoted by $\operatorname{conv}(\vec \lambda)$. Moreover, by varying $\alpha_1$, $\alpha_2$, $\dots$, $\alpha_N$, we can generate a dense set of points in $\operatorname{conv}(\vec \lambda)$. Consequently, if the origin lies in $\operatorname{conv}(\vec \lambda)$, then
for arbitrary $K, \epsilon>0$, there exist  $\alpha^{(K,\epsilon)}_1, \alpha^{(K,\epsilon)}_3, \dots, \alpha^{(K,\epsilon)}_N \in \myN_0$ and $\alpha^{(K,\epsilon)}_2 \in \myN$ such that
\begin{align}
\sum_{j=1}^N \alpha^{(K,\epsilon)}_j - 1 \ge K \quad \mathrm{and}\quad \abs{\bar{\lambda}_{\vec \alpha^{(K,\epsilon)}}} \le \epsilon.   
\end{align}
But meanwhile, we have 
\begin{align}
\frac{\lambda_1-\lambda_2}{\sum_{j=1}^N \alpha_j^{(K,\epsilon)} - 1} \to 0, \quad \mathrm{as}\quad K \to +\infty.    
\end{align}
Therefore, we would have
\begin{align}
\abs{\frac{\lambda_1-\lambda_2}{\sum_{j=1}^N \alpha_j^{(K,\epsilon)} - 1} - \bar{\lambda}_{\vec \alpha^{(K,\epsilon)}}} \to 0, \quad \mathrm{as}\quad K \to +\infty, \quad \epsilon \to 0,     
\end{align}
which directly contradicts Eq.~\eqref{eq:distance_to_point_in_convex_hull}. Therefore, the origin must lie outside $\operatorname{conv}(\vec \lambda)$.

Conversely, suppose $\vec\lambda=(\lambda_1, \lambda_2, \dots, \lambda_N)$ is nonresonant and belongs to the Poincar\'{e} domain, i.e. 
$\vec{0} \not\in \operatorname{conv}(\vec \lambda)$. Then there exists a complex number $\omega$ with $\abs{\omega}=1$ such that 
$$f(\lambda_i)\defeq \realpart{\lambda_i^*\omega} > 0, $$
for all $i \in [N]$. Let $c_1=\mymin{f(\lambda_1),f(\lambda_2),\dots,f(\lambda_N)}$
and
$c_2=\mymax{f(\lambda_1),f(\lambda_2),\dots,f(\lambda_N)}$. Then we have $c_2 \ge c_1>0$. Let $k_0=\lceil c_2/c_1 \rceil + 1 \ge 2$, and define 
\begin{align}
\Delta_0=\frac{c_1k_0-c_2}{k_0-1}>0.    
\end{align}
Then for any $i\in [N]$ and  $\alpha_1,\alpha_2,\dots,\alpha_N \in \myN_0$ such that $\sum_{j=1}^N \alpha_N \ge k_0$, we have
\begin{align}
    f \lrb{\sum_{j=1}^N \alpha_j \lambda_{j} - \lambda_i}
    = \sum_{j=1}^N \alpha_j f(\lambda_{j}) - f(\lambda_i)
    \ge c_1 \sum_{j=1}^N \alpha_j - c_2 
    \ge \Delta_0\lrb{\sum_{j=1}^N \alpha_j-1}.
\end{align}
This implies that
\begin{align}
    \abs{\lambda_i-\sum_{j=1}^N \alpha_j\lambda_j} \ge \Delta_0\lrb{\sum_{j=1}^N \alpha_j-1}.    
\end{align}
Meanwhile, since $(\lambda_1, \lambda_2, \dots, \lambda_N)$ is nonresonant, we have 
\begin{align}
    \Delta_1 \defeq  \min_{i \in [N]} \inf_{\substack{\alpha_j \in \myN_0 \\ 2 \le \sum_{j=1}^N \alpha_j < k_0}} \frac{\abs{\lambda_i-\sum_{j=1}^N \alpha_j\lambda_j}}{\sum_{j=1}^N \alpha_j-1} > 0.
\end{align}
Combining these facts, we obtain
\begin{align}
    \Delta=\min_{i \in [N]} \inf_{\substack{\alpha_j \in \myN_0 \\ \sum_{j=1}^N \alpha_j \ge 2}} \frac{\abs{\lambda_i-\sum_{j=1}^N \alpha_j\lambda_j}}{\sum_{j=1}^N \alpha_j-1} \ge \mymin{\Delta_0, \Delta_1} > 0,    
\end{align}
as desired. 
\end{proof}

Before proceeding, we first establish two technical lemmas. The first provides an upper bound on the spectral norm of the Hadamard product between a sparse matrix and a matrix with small entries. The second lemma provides an upper bound on the number of binary forests with a specified number of roots and leaves.

\begin{lemma}
    Let $A,B \in \myC^{m \times n}$ be such that $A$ is $s$-column-sparse (or $s$-row-sparse) and $\norm{B}_{\mathrm{max}} =\max_{i,j} \abs{B_{i,j}} \le \delta$. Then we have $\norm{A \odot B} \le s\delta \norm{A}.$
\label{lem:hadamard_prod_norm_bound}
\end{lemma}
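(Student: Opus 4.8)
\textbf{Proof plan for Lemma~\ref{lem:hadamard_prod_norm_bound}.}

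The plan is to reduce the bound on $\norm{A \odot B}$ to a comparison with $\norm{A}$ via a factorization of the Hadamard product through diagonal matrices, exploiting sparsity to control the condition numbers involved. Concretely, I would use the following identity: for any matrices $A, B$ of the same size, and for any index $i$ of a nonzero column,
\begin{align}
    A \odot B = \sum_{(p,q):~A_{p,q}\neq 0} B_{p,q}\, E_{p,q} \odot A,
\end{align}
where $E_{p,q}$ is the matrix with a $1$ in position $(p,q)$ and $0$ elsewhere; but a cleaner route is the standard ``column grouping'' argument. Since $A$ is $s$-column-sparse, by a greedy coloring of the columns one can partition the column index set $[n]$ into at most $s$ groups $G_1, \dots, G_s$ such that within each group the columns of $A$ have disjoint supports (this is possible because the ``conflict graph'' on columns, where two columns are adjacent iff their supports intersect, has maximum degree at most... in fact one uses that each row contributes at most one nonzero per column, so one needs a slightly different combinatorial statement). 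I would instead use the simpler, robust fact that $\norm{A \odot B} \le s \delta \norm{A}$ follows from writing $A \odot B$ as a sum over the at most $s$ nonzero entries per column and using the triangle inequality together with the observation that selecting one nonzero entry per column yields a matrix with operator norm at most $\norm{A}$.

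The key steps, in order, are: (1) reduce to the column-sparse case (the row-sparse case follows by transposition, since $\norm{X^T}=\norm{X}$ and $(A\odot B)^T = A^T \odot B^T$); (2) for each column $j$, enumerate its nonzero positions; since there are at most $s$ of them, write $A = \sum_{r=1}^{s} A^{(r)}$ where $A^{(r)}$ picks out the $r$-th nonzero entry in each column (padding with empty columns where a column has fewer than $r$ nonzeros). Each $A^{(r)}$ is column-sparse with at most one nonzero per column, i.e. $A^{(r)} = P^{(r)} D^{(r)}$ where $D^{(r)}$ is diagonal and $P^{(r)}$ has at most one nonzero entry per column and per row... actually $P^{(r)}$ need only have at most one nonzero per column. (3) For such $A^{(r)}$, note $A^{(r)} \odot B = A^{(r)} \odot B = \tilde D^{(r)}_{\mathrm{row}} P^{(r)} \tilde D^{(r)}_{\mathrm{col}}$ where the $B$-values get absorbed into the diagonal scalings, each entry of which is bounded by $\delta \cdot (\text{the corresponding entry of }A^{(r)})$ in modulus; more precisely $\norm{A^{(r)} \odot B} \le \delta \norm{A^{(r)}}$ since Hadamard-multiplying a matrix with at most one nonzero per column by a matrix with entries bounded by $\delta$ scales each column by a factor of modulus at most $\delta$, and $\norm{\cdot}$ of a matrix is at most $\delta$ times $\norm{\cdot}$ of the original when each column is rescaled by $\le \delta$ (this uses $\norm{MX}\le \norm{M}\norm{X}$ with a suitable diagonal $M$). (4) Finally $\norm{A^{(r)}} \le \norm{A}$: this is the point requiring care, but it holds because $A^{(r)}$ is obtained from $A$ by zeroing out entries in a ``staircase'' pattern that respects the column structure; one can see $\norm{A^{(r)}} \le \norm{A}$ by noting that for any unit vector $v$, $\norm{A^{(r)} v} \le \norm{A \,\Pi^{(r)}_v v}$ for an appropriate partial isometry, or more simply by the majorization-type bound that deleting entries of a matrix so that the remaining support is a subset does not increase any singular value when the remaining support is ``column-wise a subset'' — I would verify this via $\norm{A^{(r)}}^2 = \norm{A^{(r)} (A^{(r)})^\dagger} \le \norm{A A^\dagger} = \norm{A}^2$ using that $A^{(r)}(A^{(r)})^\dagger$ is entrywise dominated in the PSD order... no, that is false in general. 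The correct and clean argument: $A \odot B = \sum_{r=1}^s A^{(r)} \odot B$, and for each $r$, write $A^{(r)} \odot B$ explicitly and bound $\norm{A^{(r)} \odot B} \le \delta \norm{A^{(r)}} \le \delta \norm{A}$ where the last inequality uses that $A^{(r)} = A \odot M^{(r)}$ for a $0/1$ mask $M^{(r)}$ with at most one $1$ per column, and such masked matrices satisfy $\norm{A \odot M^{(r)}} \le \norm{A}$ because $A\odot M^{(r)} = A \, \mathrm{diag}(\text{?})$...

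The main obstacle I anticipate is exactly step (4): cleanly proving that a $0/1$-masked version of $A$ — where the mask has at most one nonzero per column — has operator norm at most $\norm{A}$. The cleanest fix is to avoid it altogether and argue directly: since $A$ is $s$-column-sparse, for any vectors $u, v$,
\begin{align}
    \abs{u^\dagger (A \odot B) v}
    = \abs{\sum_{i,j} \bar u_i A_{i,j} B_{i,j} v_j}
    \le \delta \sum_{i,j} \abs{A_{i,j}}\, \abs{u_i}\, \abs{v_j},
\end{align}
and then apply the Cauchy--Schwarz / Schur-test argument: group the sum by columns, use that column $j$ has at most $s$ nonzero entries to bound $\sum_i \abs{A_{i,j}}\abs{u_i} \le \sqrt{s}\,(\sum_i \abs{A_{i,j}}^2 \abs{u_i}^2)^{1/2}$ is not quite it either. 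The genuinely clean path is: bound $\norm{A\odot B}$ by $\norm{A\odot B}_{\mathrm{Schur}}$-type estimates. Given the length constraints, I would in the final write-up use the decomposition $A = \sum_{r=1}^s A^{(r)}$ into matrices with at most one nonzero per column and the elementary fact $\norm{A^{(r)}\odot B}\le \delta\,\norm{A^{(r)}}$ together with $\norm{A^{(r)}}\le\norm{A}$, the latter justified by the observation that $A^{(r)}$ acts as $v\mapsto A^{(r)}v$ where $A^{(r)}v = A(D_r v)$ for a suitable diagonal projection... and since this last claim is the crux, I will spell it out: each column of $A^{(r)}$ equals the corresponding column of $A$ or is zero, so $A^{(r)} = A \,\mathrm{diag}(\chi^{(r)})$ for a $0/1$ indicator — no, that zeros whole columns, not what we want. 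The actual structure is that $A^{(r)}$ zeros \emph{entries within} columns. I will therefore instead invoke the known inequality $\norm{A \odot M} \le \norm{A}$ for any contraction-valued... and cite that Hadamard product with a matrix all of whose columns are standard basis vectors (or zero) is a contraction in operator norm — this is a special case of the fact that $\norm{A \odot B} \le \norm{A}\,\max_j \norm{B e_j}_\infty$-type bounds, which I will prove in two lines via the representation of Hadamard product as compression of the tensor product. This final technical point is routine but is where the real work lies.
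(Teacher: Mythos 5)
Your decomposition $A=\sum_{r=1}^{s}A^{(r)}$, with each $A^{(r)}$ collecting the $r$-th nonzero entry of every column, is exactly the paper's decomposition, and the target inequalities $\norm{A^{(r)}\odot B}\le\delta\norm{A^{(r)}}$ and $\norm{A^{(r)}}\le\norm{A}$ are the right ones. But the proposal never actually proves the second of these, and you say so yourself: you cycle through three justifications, correctly discard two as false ($A^{(r)}$ is \emph{not} $A$ times a diagonal $0/1$ matrix, and entrywise domination of $A^{(r)}(A^{(r)})^{\dagger}$ is not a PSD-order statement), and leave the third as an unproven appeal. That third route, as literally stated, also does not close the gap: compressing $A\otimes M^{(r)}$ to the Hadamard product only yields $\norm{A\odot M^{(r)}}\le\norm{A}\,\norm{M^{(r)}}$, and a $0/1$ mask with at most one $1$ per column can have operator norm as large as $\sqrt{n}$ (put all its ones in a single row). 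What you would actually need is the factorization-norm bound $\norm{A\odot M}\le\norm{A}\max_p\norm{x_p}\max_q\norm{y_q}$ for any factorization $M_{pq}=\langle x_p,y_q\rangle$, applied with $x_p=e_p$ and $y_q=e_{p(q)}$; this works but is neither stated nor proven in the proposal. So the crux step is a genuine gap as written.

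The paper closes it with a much more elementary observation that you circled but did not land on: because each column of $A^{(r)}$ contains at most one nonzero entry, any two distinct rows of $A^{(r)}$ have disjoint supports and are therefore orthogonal, so $A^{(r)}(A^{(r)})^{\dagger}$ is diagonal and
\begin{align}
\norm{A^{(r)}}=\max_{p}\bigl\|\text{$p$-th row of }A^{(r)}\bigr\|_2\le\max_{p}\bigl\|\text{$p$-th row of }A\bigr\|_2=\max_p\norm{e_p^{\dagger}A}\le\norm{A},
\end{align}
since each row of $A^{(r)}$ is an entrywise sub-vector of the corresponding row of $A$. The same orthogonal-rows fact applies to $A^{(r)}\odot B$ (also $1$-column-sparse) and gives $\norm{A^{(r)}\odot B}\le\delta\norm{A^{(r)}}$ directly, since each nonzero entry is multiplied by a scalar of modulus at most $\delta$. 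With that single observation both of your step-(3) and step-(4) difficulties disappear, and the triangle inequality over the $s$ pieces finishes the proof exactly as you outlined.
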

\begin{proof}
We will focus on the case where \( A \) is \( s \)-column-sparse. The case where \( A \) is \( s \)-row-sparse can be treated in a similar manner.

We can decompose $A$ as the sum of $s$ matrices $A_1, A_2, \dots, A_s$ such that each column of $A_j$ contains at most $1$ nonzero entry. This decomposition is achieved by allocating the $j$-th nonzero entry of each column of $A$ to $A_j$, for $j \in [s]$. 

Fix any $j \in [s]$. Since the rows of $A_j$ are orthogonal to each other, $\norm{A_j}$ equals the maximum norm of the rows of $A_j$. The same property  holds for $A_j \odot B$, as it is also $1$-column-sparse. Meanwhile, each nonzero entry of $A_j \odot B$ has magnitude at most $\delta$ times the magnitude of the corresponding entry of $A_j$. This implies that the norm of each row of $A_j \odot B$ is at most $\delta$ times the norm of the corresponding row of $A_j$. Consequently, we obtain 
\begin{align}
    \norm{A_j \odot B} \le \delta \norm{A_j} \le \delta \norm{A},    
\end{align}
where the last step follows from 
\begin{align}
    \norm{A_j}=\mathrm{maximum~norm~of~}A_j\mathrm{'s~rows} \le \mathrm{maximum~norm~of}A\mathrm{'s~rows} \le \norm{A}.    
\end{align}
It follows that
\begin{align}
    \norm{A \odot B} = \norm{\sum_{j=1}^s A_j \odot B}
    \le \sum_{j=1}^s \norm{A_j \odot B} \le s\delta \norm{A},
\end{align}
as claimed.
\end{proof}

\begin{lemma}
For any $i<j$, the number of binary forests consisting of $i$ binary trees with a total of $j$ leaves is at most $4^{j-i}\binom{j-1}{i-1}$.
\label{lem:count_binary_forests}
\end{lemma}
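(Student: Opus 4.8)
The plan is to prove Lemma~\ref{lem:count_binary_forests} by first counting binary trees (the $i=1$ case) and then lifting that bound to forests via a composition argument. Let $t_m$ denote the number of binary trees with exactly $m$ leaves. It is a classical fact that $t_m = C_{m-1}$, the $(m-1)$-th Catalan number, since a binary tree with $m$ leaves has $m-1$ internal nodes and is determined by the recursive left/right split of the leaves. As noted already in the excerpt (see the discussion following Lemma~\ref{lem:fusion_sums}), one has the simple bound $C_{m-1} \le 4^{m-1}$ for $m \ge 1$, which follows from $C_0 = 1$ and $C_{k+1}/C_k = 4(k+\tfrac12)/(k+2) \le 4$. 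Hence $t_m \le 4^{m-1}$.

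Next I would pass from trees to forests. A binary forest $\mathbf{T} \in \mathcal{T}^i_j$ consists of an ordered tuple of $i$ binary trees whose leaf counts $m_1, m_2, \dots, m_i$ are positive integers summing to $j$. Therefore
\begin{align}
|\mathcal{T}^i_j| = \sum_{\substack{m_1 + \dots + m_i = j \\ m_l \ge 1}} \prod_{l=1}^i t_{m_l} \le \sum_{\substack{m_1 + \dots + m_i = j \\ m_l \ge 1}} \prod_{l=1}^i 4^{m_l - 1} = 4^{j-i} \sum_{\substack{m_1 + \dots + m_i = j \\ m_l \ge 1}} 1,
\end{align}
where in the last step I used $\prod_l 4^{m_l - 1} = 4^{(\sum_l m_l) - i} = 4^{j-i}$, a constant independent of the particular composition. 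The number of compositions of $j$ into $i$ positive parts is the standard stars-and-bars count $\binom{j-1}{i-1}$, so $|\mathcal{T}^i_j| \le 4^{j-i}\binom{j-1}{i-1}$, as claimed.

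The argument is essentially bookkeeping, so there is no serious obstacle; the only point requiring a moment of care is the identification $t_m = C_{m-1}$ and the accompanying $C_{m-1} \le 4^{m-1}$ bound, both of which are already invoked elsewhere in the paper (in the proof of Lemma~\ref{lem:fusion_sums} and the surrounding Catalan-number discussion), so I would simply cite those facts rather than re-derive them. One subtlety worth flagging: the bound is a genuine overcount because it replaces each $t_{m_l}$ by $4^{m_l-1}$ rather than the exact Catalan value, but since the lemma only asserts an inequality this is harmless and in fact the looseness is what makes the product collapse to the clean factor $4^{j-i}$. If a tighter bound were ever needed one could instead keep the Catalan numbers and use their $j$-fold convolution identity $C^i_{j-i} = \frac{i}{j}\binom{2j-i-1}{j-i}$ (already used in the proof of Lemma~\ref{lem:fusion_sums_general}), but for the present purposes the crude estimate suffices.
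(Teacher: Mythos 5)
Your proof is correct and follows essentially the same route as the paper's: bound the number of binary trees with $m$ leaves by $C_{m-1}\le 4^{m-1}$, sum over the $\binom{j-1}{i-1}$ compositions of $j$ into $i$ positive parts, and observe that the product of per-tree bounds collapses to $4^{j-i}$ independently of the composition. The only cosmetic difference is that the paper uses the marginally sharper estimate $C_{n-1}\le 4^{n-1}/(\sqrt{\pi}(n-1)^{3/2}+1)$ before discarding the denominator, which changes nothing in the final bound.
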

\begin{proof}
Let us begin by counting the number of binary trees with \( n \) leaves, denoted by \( D_n \). Consider the number of leaves in the subtrees rooted at the left and right children of the root. There are \( n-1 \) possible pairings: \( (1, n-1), (2, n-2), \dots, (n-1, 1) \), where \( (a, b) \) means the left and right subtrees have \( a \) and \( b \) leaves, respectively. This leads to the following recurrence relation for \( D_n \):
\begin{align}
    D_n = \sum_{k=1}^{n-1} D_k D_{n-k},    
\end{align}
with initial condition \( D_1 = 1 \). It turns out that \( D_n \) is exactly the \((n-1)\)-th Catalan number, denoted \( C_n \), which satisfies the recurrence:
\begin{align}
    C_n = \sum_{k=1}^{n} C_{k-1} C_{n-k},    
\end{align}
with initial condition \( C_0 = 1 \). Using known results for $C_n$ \cite{stanley1999book}, we obtain
\begin{align}
    D_n=C_{n-1}=\frac{1}{2n-1}\binom{2n-1}{n-1} \le \frac{4^{n-1}}{\sqrt{\pi} (n-1)^{3/2}+1}.
\end{align}

Next, we count the number of forests consisting of \( i \) binary trees with a total of \( j \) leaves. Given a binary forest \( \mbf = (\mathbf{t}_1, \mathbf{t}_2, \dots, \mathbf{t}_i)\) of this form, let \( n_l \) denote the number of leaves in $l$-th tree $\mbt_l$, for $l=1,2,\dots,i$. These values must satisfy the constraints \( n_1, n_2, \dots, n_i \in \mathbb{Z}^+ \) and \( n_1 + n_2 + \dots + n_i = j \). The number of such sequence of numbers is given by $\binom{j-1}{i-1}$. Furthermore, for fixed \( n_1, n_2, \dots, n_i \), there are \( D_{n_l} \) possible structures for the \( l \)-th binary tree, for \( l = 1, 2, \dots, i \), and hence the number of binary  forests consisting of such trees is given by
\begin{align}
\prod_{l=1}^i D_{n_l} \le \prod_{l=1}^i \frac{4^{n_l-1}}{\sqrt{\pi} (n_l-1)^{3/2}+1} 
\le  4^{j-i}.   
\end{align}
The upper bound $4^{j-i}$ here is independent of $n_1, n_2, \dots, n_i$. As a consequence, the number of binary forests consisting of $i$ trees with a total of $j$ leaves is at most
$4^{j-i}\binom{j-1}{i-1}$, as claimed.
\end{proof}

\begin{proof}[Proof of Lemma~\ref{lem:vij_norm_bound_poincare_domain}]
If $i=j$, then $\tilde{V}_{j,j}=I^{\otimes j}$ has norm $1$, and the claim holds trivially. So we assume $i<j$ from now on.

To prove the desired bound on $\norm{\tilde{V}_{i,j}}$, it suffices to show that for any binary tree $\mbt$ with $m$ leaves, 
\begin{align}
 \norm{\tilde{f}(\mbt)} \le \lrb{\frac{s\norm{\tilde{F}_2}}{\Delta}}^{m-1}.   
 \label{eq:bound_on_f_tree_poincare_domain}
\end{align}
To see this, note that for any binary forest $(\mbt_1, \mbt_2, \dots, \mbt_i) \in \mathcal{T}^i_j$, let $m_1,m_2,\dots,m_i$ be the number of leaves in $\mbt_1, \mbt_2,\dots, \mbt_i$ respectively, so that $m_1+m_2+\dots+m_i=j$. Then,
Eq.~\eqref{eq:bound_on_f_tree_poincare_domain} implies
\begin{align}
\norm{\tilde{f}(\mbt_1)\otimes \tilde{f}(\mbt_2)\otimes \dots \otimes \tilde{f}(\mbt_i)}
&= \norm{\tilde{f}(\mbt_1)}\cdot
        \norm{\tilde{f}(\mbt_2)} \cdot
        \dots
         \cdot   \norm{\tilde{f}(\mbt_i)}  \\
&\le 
 \lrb{\frac{s\norm{\tilde{F}_2}}{\Delta}}^{\sum_{l=1}^i m_l -1} \\
 &= 
 \lrb{\frac{s\norm{\tilde{F}_2}}{\Delta}}^{j-i}.
\end{align}
Then, it follows that
\begin{align}
\norm{\tilde{V}_{i,j}} &= \left\|\sum_{(\mbt_1, \mbt_2, \dots, \mbt_i) \in \mathcal{T}^i_j}
    \tilde{f}(\mbt_1)\otimes \tilde{f}(\mbt_2)\otimes \dots \otimes \tilde{f}(\mbt_i)\right\| \\ 
    &\le \sum_{(\mbt_1, \mbt_2, \dots, \mbt_i) \in \mathcal{T}^i_j}
    \norm{\tilde{f}(\mbt_1)\otimes \tilde{f}(\mbt_2)\otimes \dots \otimes \tilde{f}(\mbt_i)} \\ 
    &\le |\mathcal{T}^i_j| \cdot \lrb{\frac{s\norm{\tilde{F}_2}}{\Delta}}^{j-i} \le \binom{j-1}{i-1}\lrb{\frac{4s\norm{\tilde{F}_2}}{\Delta}}^{j-i},
\end{align}
where the last step follows from Lemma~\ref{lem:count_binary_forests}.

Let us first prove Eq.~\eqref{eq:bound_on_f_tree_poincare_domain} in the special case where $\tilde{F}_2$ is $1$-column-sparse. Recall that \( \tilde{f}(\mbt) \) is generated by the following process. We start with \( m \) independent subsystems, each assigned an \( I \) matrix, and then perform \( m-1 \) iterations of the following procedure: 
\begin{itemize} 
\item Apply \( \tilde{F}_2 \) to the tensor product of the matrices assigned to two adjacent subsystems, and combine these subsystems; 
\item Take the Hadamard product of the resulting matrix with \( N_l \), where \( l \) is the number of leaves in the combined subsystem. 
\end{itemize} 

Note that the standard and tensor products of two \(1\)-column-sparse matrices are both \(1\)-column-sparse. Moreover, the Hadamard product of a \(1\)-column-sparse matrix with any matrix is still \(1\)-column-sparse. Then, since \( \tilde{F}_2 \) is \(1\)-column-sparse, the tensor product of the matrices for the subsystems has column-sparsity $1$ throughout the whole procedure.

Furthermore, Assumption~\ref{assump:linear_no_resonance_gap} implies that for any $i, j_1,j_2,\dots,j_l \in [N]$, where $l \ge 2$, we have 
\begin{align}
|\lambda_i - \lambda_{j_1}-\lambda_{j_2}-\dots-\lambda_{j_l}|\ge \Delta.    
\end{align}
Thus, every entry of \( N_l \) has magnitude at most \( 1/\Delta \), i.e., 
\begin{align}
\norm{N_l}_{\mathrm{max}} \le \frac{1}{\Delta}, \quad \forall l \ge 2.
\end{align}
Meanwhile, multiplying a matrix by $\tilde{F}_2$ can increase the spectral norm of the matrix by at most $\norm{\tilde{F}_2}$. Then, by Lemma~\ref{lem:hadamard_prod_norm_bound}, we have that at each iteration of the procedure, the product of the spectral norms of the matrices for the subsystems increases by at most \( \frac{ \| \tilde{F}_2 \|}{\Delta} \). Consequently, after $m-1$ iterations of the procedure, the final operator $\tilde{f}(\mbt)$ has spectral norm at most \(\left( \frac{\| \tilde{F}_2 \|}{\Delta} \right)^{m-1} \), as desired.

Now, we consider the general case where $\tilde{F}_2$ is $s$-column-sparse for $s \ge 1$. We can write $\tilde{F}_2$ as the sum of $s$ matrices $\tilde{F}_2^{(1)}, \tilde{F}_2^{(2)}, \dots, \tilde{F}_2^{(s)}$,
where each $\tilde{F}_2^{(l)}$ is $1$-column-sparse. More precisely, we construct $\tilde{F}_2^{(l)}$ by allocating the $l$-th nonzero entry of each column of $\tilde{F}_2$ to this matrix, for $l \in [s]$. Note that the rows of $\tilde{F}_2^{(l)}$ are orthogonal to each other, and hence $\norm{\tilde{F}_2^{(l)}}$ equals the maximum norm of $\tilde{F}_2^{(l)}$'s rows, which is no larger than the maximum norm of $\tilde{F}_2$'s rows, which in turn is upper bounded by $\norm{\tilde{F}_2}$. Thus, we have $\norm{\tilde{F}_2^{(l)}} \le \norm{\tilde{F}_2}$ for all $l \in [s]$. 

Meanwhile, note that in the expression for $\tilde{f}(\mbt)$, $\tilde{F}_2$ occurs $m-1$ times. Since
\begin{align}
\tilde{F}_2= \tilde{F}_2^{(1)} + \tilde{F}_2^{(2)} + \dots + \tilde{F}_2^{(s)},
\end{align}
we can rewrite $\tilde{f}(\mbf)$ as the sum of $s^{m-1}$ terms, each of which is labeled by an integer tuple $(l_1,l_2,\dots,l_{m-1}) \in [s]^{m-1}$, i.e., the $p$-th $\tilde{F}_2$ in the expression for $\tilde{f}(\mbf)$ is replaced with $\tilde{F}_2^{l_p}$, for $p \in [s]$. Now, since each $\tilde{F}_2^{l_p}$ is $1$-column-sparse and has spectral norm at most $\norm{\tilde{F}_2}$, the spectral norm of each term in the expression for $\tilde{f}(\mbt)$ is at most \(\left( \frac{\| \tilde{F}_2 \|}{\Delta} \right)^{m-1} \). Then, since $\tilde{f}(\mbt)$ is the sum of $s^{m-1}$ such terms, we get that 
\begin{align}
\norm{\tilde{f}(\mbt)} \le \left( \frac{s\| \tilde{F}_2 \|}{\Delta} \right)^{m-1},    
\end{align}
as claimed.

This lemma is thus proved.
\end{proof}

To illustrate the idea behind the proof of Lemma~\ref{lem:vij_norm_bound_poincare_domain}, we consider the following example.
Suppose $\mbt$ is a binary tree with $m=3$ leaves, and 
\begin{align}
    \tilde{f}(\mbt)=N_3 \odot \lrcb{\tilde{F}_2 \lrsb{\lrb{N_2 \odot \tilde{F}_2} \otimes I}},    
\end{align}
where $\tilde{F}_2$ is $2$-column-sparse. Then we can construct $1$-column-sparse matrices $\tilde{F}_2^{(1)}$ and $\tilde{F}_2^{(2)}$ such that $\tilde{F}_2=\tilde{F}_2^{(1)}+\tilde{F}_2^{(2)}$, and  
$\norm{\tilde{F}_2^{(1)}}, \norm{\tilde{F}_2^{(2)}} \le \norm{F_2}$. It follows that 
\begin{align}
\tilde{f}(\mbt)&=N_3 \odot \lrcb{\tilde{F}_2^{(1)} \lrsb{\lrb{N_2 \odot \tilde{F}_2^{(1)}} \otimes I}}   \nonumber \\
&\quad +N_3 \odot \lrcb{\tilde{F}_2^{(1)} \lrsb{\lrb{N_2 \odot \tilde{F}_2^{(2)}} \otimes I}} \nonumber \\
&\quad +N_3 \odot \lrcb{\tilde{F}_2^{(2)} \lrsb{\lrb{N_2 \odot \tilde{F}_2^{(1)}} \otimes I}} \nonumber \\
&\quad +N_3 \odot \lrcb{\tilde{F}_2^{(2)} \lrsb{\lrb{N_2 \odot \tilde{F}_2^{(2)}} \otimes I}}.
\end{align}
Each term on the right-hand side of this equation is $1$-column-sparse and has spectral norm at most $\left( \frac{\| \tilde{F}_2 \|}{\Delta} \right)^{2}$. Consequently, we get $\norm{\tilde{f}(\mbt)} \le 4\left( \frac{\| \tilde{F}_2 \|}{\Delta} \right)^{2}$, in agreement with our predicted bound.

\begin{proof} [Proof of Lemma~\ref{lem:wij_norm_bound_poincare_domain}]
To establish this lemma, it suffices to prove that for any binary tree $\mbt$ with $m$ leaves, the operator $\tilde{g}(\mbt)$ satisfies
\begin{align}
\norm{\tilde{g}(\mbt)} \le \lrb{\frac{s\norm{\tilde{F}_2}}{\Delta}}^{m-1}.
\label{eq:bound_norm_gt_poincare_domain}
\end{align}
To see this, recall that by Proposition~\ref{prop:v_inv_ij_expression}, 
\begin{align}
    \lrb{\tilde{V}^{-1}}_{i,j} = (-1)^{j-i} \sum_{(\mbt_1,\mbt_2,\dots,\mbt_i) \in \mathcal{T}^i_j} \tilde{g}(\mbt_1) \otimes \tilde{g}(\mbt_2) \otimes \dots \otimes \tilde{g}(\mbt_i).   
\end{align}
For any $(\mbt_1,\mbt_2,\dots,\mbt_i) \in \mathcal{T}^i_j$, let $m_l$ be the number of leaves in $\mbt_l$, for $l \in [i]$. Then we have $m_1+m_2+\dots+m_i=j$. Consequently,  Eq.~\eqref{eq:bound_norm_gt_poincare_domain} implies
\begin{align}
\norm{\tilde{g}(\mbt_1) \otimes \tilde{g}(\mbt_2) \otimes \dots \otimes \tilde{g}(\mbt_i)} &=\norm{\tilde{g}(\mbt_1)}\norm{\tilde{g}(\mbt_2)} \dots \norm{\tilde{g}(\mbt_i)} \\
&\le \lrb{\frac{s\norm{\tilde{F}_2}}{\Delta}}^{\sum_{l=1}^i (m_l-1)} \\
&=\lrb{\frac{s\norm{\tilde{F}_2}}{\Delta}}^{j-i}.
\end{align}
Meanwhile, by Lemma~\ref{lem:count_binary_forests}, there are at most $4^{j-i}\binom{j-1}{i-1}$ binary forests consisting of $i$ binary trees with a total of $j$ leaves, i.e. $|\mathcal{T}^i_j|\le 4^{j-i}\binom{j-1}{i-1}$. Then, it follows that
\begin{align}
\norm{\lrb{\tilde{V}^{-1}}_{i,j}} \le \binom{j-1}{i-1}\lrb{\frac{4s\norm{\tilde{F}_2}}{\Delta}}^{j-i}, 
\end{align}    
as desired.

It remains to prove Eq.~\eqref{eq:bound_norm_gt_poincare_domain}. To this end, we need to introduce the following notation. Let $r$ be the root of $\mbt$, $l_1, l_2, \dots, l_m$ the leaves of $\mbt$, and 
$v_1,v_2,\dots,v_{m-1}$ the internal nodes of $\mbt$. Since $\tilde{F}_2$ is $s$-column-sparse, we can rewrite $\tilde{F}_2$ as the sum of $s$ matrices $\tilde{F}_2^{(1)}, \tilde{F}_2^{(2)}, \dots, \tilde{F}_2^{(s)}$, where each $\tilde{F}_2^{(l)}$ is $1$-column-sparse and has spectral norm no larger than that of  $\tilde{F}_2$, i.e. $\norm{\tilde{F}_2^{(l)}} \le \norm{\tilde{F}_2}$. Then for any $\vec{q}=(q_1,q_2,\dots,q_{m-1}) \in [s]^{m-1}$ and any labeling $h$ of $\mbt$, define
\begin{align}
\alpha_{\vec q}(\mbt, h) \defeq \prod_{i=1}^{m-1} \bra{h(v_i)} \tilde{F}_2^{(q_i)} \ket{h(c_1(v_i), h(c_2(v_i)))}.
\end{align}
Then define
\begin{align}
\tilde{g}_{\vec q}(\mbt) \defeq \sum_{h \in L(\mbt)} \alpha_{\vec q}(\mbt, h) \gamma(\mbt, h) 
\ket{h(r)}\bra{h(l_1), h(l_2),\dots, h(l_m)}.
\end{align}
Clearly, we have
\begin{align}
    \tilde{g}(\mbt)=\sum_{\vec q \in [s]^{m-1}} \tilde{g}_{\vec q}(\mbt).
\end{align}

Next, we will prove that 
\begin{align}
\tilde{g}_{\vec q}(\mbt) \le \lrb{\frac{\norm{\tilde{F}_2}}{\Delta}}^{m-1},~&~\forall \vec q \in [s]^{m-1}.    
\end{align}
Assuming this is true, we will get 
\begin{align}
    \norm{\tilde{g}(\mbt)}\le \sum_{\vec q \in [s]^{m-1}} \norm{\tilde{g}_{\vec q}(\mbt)} 
    \le \lrb{\frac{s\norm{\tilde{F}_2}}{\Delta}}^{m-1},
\end{align}
as claimed.

Fix arbitrary $\vec q \in [s]^{m-1}$. Since $\tilde{F}_2^{(1)}, \tilde{F}_2^{(2)}, \dots, \tilde{F}_2^{(s)}$ are $1$-column-sparse, we have that for each $\vec c= (c_1,c_2,\dots,c_m) \in [N]^m$, there exists at most one labeling $h$ of $\mbt$ that satisfies $h(l_1)=c_1$, $h(l_2)=c_2$, $\dots$, $h(l_m)=c_m$, and 
$\alpha_{\vec q}(\mbt, h) \neq 0$. In other words, for any labeling of the leaves of $\mbt$, there exists at most one corresponding labeling of its internal nodes that contributes a nonzero term to $\tilde{g}_{\vec q}(\mbt)$. Let $h_{\vec q, \vec c}$ denote this labeling of $\mbt$ \footnote{In fact, $h_{\vec q, \vec c}$ also depends on $\mbt$, but here we omit the subscript $\mbt$ for better readability.}. Then we can rewrite $\tilde{g}_{\vec q}(\mbt)$ as 
\begin{align}
    \tilde{g}_{\vec q}(\mbt) = \sum_{\vec c \in [N]^m} \alpha_{\vec q}(\mbt, h_{\vec q, \vec c}) \gamma(\mbt, h_{\vec q, \vec c}) \ket{h_{\vec q, \vec c}(r)}\bra{c_1,c_2,\dots,c_m}.
\end{align}
Note that this equation implies that $\tilde{g}_{\vec q}(\mbt)$ is $1$-column-sparse. 

In fact, we can derive an alternative expression for $\tilde{g}_{\vec q}(\mbt)$ as follows.
Define an operator $\omega_{\vec q}(\mbt)$ by traversing the nodes of $\mbt$ in a reverse topological order and assigning an operator to each node:
\begin{itemize}
    \item Start with the leaves of $\mbt$. Each leaf is assigned the operator $I$.
    \item For each internal node $v_l$ of $\mbt$, if its children $c_1(v_l)$ and $c_2(v_l)$ have been assigned operators $\omega_{\vec q}(c_1(v_l))$ and 
    $\omega_{\vec q}(c_2(v_l))$, respectively, then we assign the operator $$\tilde{F}_2^{(q_l)} \lrsb{\omega_{\vec q}(c_1(v_l)) \otimes \omega_{\vec q}(c_2(v_l))}$$ 
    to $v_l$.
    \item Finally, $\omega_{\vec q}(\mbt)$ equals the operator assigned to the root $r$ of $\mbt$.
\end{itemize}
Meanwhile, define  
\begin{align}
    M_{\vec q}(\mbt) \defeq  \sum_{\vec c \in [N]^m} \gamma(\mbt, h_{\vec q, \vec c}) \ket{h_{\vec q, \vec c}(r)}\bra{c_1,c_2,\dots,c_m}.
\end{align}
Then one can verify that
\begin{align}
    \tilde{g}_{\vec q}(\mbt) = \omega_{\vec q}(\mbt) \odot M_{\vec q}(\mbt).
\end{align}
Since $\tilde{F}_2^{(1)}, \tilde{F}_2^{(2)}, \dots, \tilde{F}_2^{(s)}$ are $1$-column-sparse and have spectral norms at most $\norm{\tilde{F}_2}$, we know that $\omega_{\vec q}(\mbt)$ is 
$1$-column-sparse and has spectral norm
\begin{align}
    \norm{\omega_{\vec q}(\mbt)} \le \norm{\tilde{F}_2^{(q_1)}}\norm{\tilde{F}_2^{(q_2)}}\dots \norm{\tilde{F}_2^{(q_{m-1})}} \le \norm{\tilde{F}_2}^{m-1}.
\end{align}
On the other hand, for any labeling $h$ of $\mbt$, we can bound the magnitude of $\gamma(\mbt, h)$ as follows. If $m=1$, then $\gamma(\mbt, h)=1$. Otherwise, we have
\begin{align}
\abs{\gamma(\mbt, h)} &\le  \sum_{(v_1,v_2,\dots,v_{m-1}) \in O(\mbt)} \prod_{l=1}^{m-1}\frac{1}{\abs{\sum_{w \in C(\lrcb{v_1,v_2,\dots,v_l})} \lambda_{h(w)} -\lambda_{h(r)}}} \\
&\le (m-2)! \cdot \frac{1}{\Delta} \cdot \frac{1}{2\Delta}  \cdot \dots \cdot 
\frac{1}{(m-1)\Delta} \\
&=\frac{1}{(m-1)\Delta^{m-1}},
\end{align}
where the second step follows from the following observations:
\begin{itemize}
    \item There are at most $(m-2)!$ valid topological orders of the $m-1$ internal nodes of $\mbt$ (since they always start with the root $r$), i.e. $|O(\mbt)| \le (m-2)!$;
    \item For any $(v_1,v_2,\dots,v_{m-1}) \in O(\mbt)$ and any $l \in [m-1]$, the induced subgraph of $\mbt$ on $\lrcb{v_1,v_2,\dots,v_l}$ is connected. Thus, we have
    $|C(\lrcb{v_1,v_2,\dots,v_l})|=l+1$;
    \item Then by Assumption~\ref{assump:linear_no_resonance_gap}, we have 
    \begin{align}
        \abs{\sum_{w \in C(\lrcb{v_1,v_2,\dots,v_l})} \lambda_{h(w)} -\lambda_{h(r)}} \ge l \Delta.
    \end{align}
\end{itemize}
Overall, we always have
\begin{align}
    \abs{\gamma(\mbt, h)} \le \frac{1}{\Delta^{m-1}}
\end{align}
regardless of the value of $m$. This implies that each entry of $M_{\vec q}(\mbt)$ has magnitude at most $\Delta^{-(m-1)}$.

Now since $\omega_{\vec q}(\mbt)$ is $1$-column-sparse and has spectral norm at most $\norm{\tilde{F}_2}^{m-1}$, and the entries of $M_{\vec q}(\mbt)$ have magnitudes at most $\Delta^{-(m-1)}$, we invoke Lemma~\ref{lem:hadamard_prod_norm_bound} to conclude that 
\begin{align}
\tilde{g}_{\vec q}(\mbt) = \omega_{\vec q}(\mbt) \odot M_{\vec q}(\mbt)    
\end{align}
has spectral norm at most $\lrb{\frac{\norm{\tilde{F}_2}}{\Delta}}^{m-1}$, as desired. This completes the proof of the lemma.
\end{proof}


\section{Details on the error analysis for systems in the Siegel domain}
\label{app:error_analysis_siegel_domain}

The analysis of the truncation error of the Carleman embedding for nonresonant systems in the Siegel domain is more challenging than in the Poincar\'{e} domain, primarily due to the different behavior of the integer combinations of the eigenvalues of $F_1$. Specifically, note that if the origin lies in the convex hull of $\lambda_1, \lambda_2, \dots, \lambda_N$, then we have
\begin{align}
    \lim_{m \to +\infty} \min_{i \in [N]} \min_{\substack{\alpha_j \in \myN_0 \\ 2 \le \sum_{j=1}^N  \alpha_j \le m}} \abs{\lambda_i-\sum_{j=1}^N \alpha_j\lambda_j} = 0.
\end{align}
In other words, the quantity $\abs{\lambda_i-\sum_{j=1}^N \alpha_j\lambda_j}$ does not necessarily increase as the $\alpha_j$'s become large. In fact, it can approach zero for suitably chosen $i$ and $\alpha_j$'s. To characterize the rate of this convergence, we introduce the following notion:

\begin{definition}
    A point $\vec \lambda=(\lambda_1, \lambda_2, \dots, \lambda_N) \in \myC^N$ is said to be of type $(C, \nu)$, where $C>0$ and $\nu \in \R$, if for all $i \in [N]$ and $\alpha_1, \alpha_2, \dots, \alpha_N \in \myN_0$ such that $\sum_{j=1}^N \alpha_j \ge 2$,
    we have
    \begin{align}
        \abs{\lambda_i - \sum_{j=1}^N \alpha_j \lambda_j} \ge C \lrb{\sum_{j=1}^N \alpha_j - 1}^{-\nu}.
    \end{align}    
\end{definition}

Note that Assumption~\ref{assump:linear_no_resonance_gap} in Section~\ref{subsubsec:error_analysis_poincare_domain} can be rephrased as stating that $(\lambda_1, \lambda_2, \dots, \lambda_N)$ is of type $(\Delta, -1)$. Furthermore, Lemma~\ref{lem:equivalence_linear_no_resonance_gap_poincare_domain} in Section~\ref{subsubsec:error_analysis_poincare_domain} implies that if $(\lambda_1, \lambda_2, \dots, \lambda_N)$ is nonresonant and belongs to the Poincar\'{e} domain, then it is of type $(\Delta, -1)$ for some $\Delta > 0$. In contrast, if 
$(\lambda_1, \lambda_2, \dots, \lambda_N)$ is nonresonant and lies in the Siegel domain, then they can only be of type $(\Delta, \nu)$ for some
$\Delta, \nu > 0$.

In what follows, we will assume that the following condition holds:
\begin{assumption}
   $(\lambda_1, \lambda_2, \dots, \lambda_N)$ is of type $(\Delta, \nu)$ for some $\Delta, \nu > 0$. 
   \label{assump:gap_condition_siegel_domain}
\end{assumption}

\subsection{Norm bounds for the blocks of \texorpdfstring{$\tilde{V}$ and $\tilde{V}^{-1}$}{diagonalizing matrix and its inverse}}

We now derive upper bounds on the spectral norms of the individual blocks of $\tilde{V}$ and $\tilde{V}^{-1}$, which will play a crucial role in the subsequent error analysis. The proofs of the following lemmas closely follow the arguments used in the proofs of Lemmas~\ref{lem:vij_norm_bound_poincare_domain} and \ref{lem:wij_norm_bound_poincare_domain}, respectively. The key distinction arises from the different behaviors of the no-resonance gaps associated with eigenvalues in the Poincar\'{e} and Siegel domains.

\begin{lemma}
    Under Assumptions~\ref{assump:gap_condition_siegel_domain}, supposing $\tilde{F}_2 = Q^{-1}F_2 Q^{\otimes 2}$ is $s$-column-sparse, we have 
    \begin{align}
        \norm{\tilde{V}_{i,j}} \le \lrsb{(j-i)!}^\nu\binom{j-1}{i-1} \lrb{\frac{4s\norm{\tilde{F}_2}}{\Delta}}^{j-i},~&~\forall~i \le j.
    \end{align}
    \label{lem:vij_norm_bound_siegel_domain}
\end{lemma}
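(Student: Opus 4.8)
The plan is to follow closely the structure of the proof of Lemma~\ref{lem:vij_norm_bound_poincare_domain}, reducing the bound on $\norm{\tilde{V}_{i,j}}$ to a bound on $\norm{\tilde{f}(\mbt)}$ for a single binary tree $\mbt$ with $m$ leaves, and then accounting for the worse behavior of the no-resonance gap in the Siegel domain. Concretely, I would first observe that, as in the Poincar\'{e} case, it suffices to prove
\begin{align}
\norm{\tilde{f}(\mbt)} \le \lrsb{(m-1)!}^\nu \lrb{\frac{s\norm{\tilde{F}_2}}{\Delta}}^{m-1}
\label{eq:siegel_tree_bound_plan}
\end{align}
for every binary tree $\mbt$ with $m$ leaves. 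Indeed, given this, for any forest $(\mbt_1,\dots,\mbt_i) \in \mathcal{T}^i_j$ with $m_1+\dots+m_i=j$, one has $\sum_l (m_l-1)=j-i$ and $\prod_l (m_l-1)! \le (j-i)!$ (since the multinomial coefficient $\binom{j-i}{m_1-1,\dots,m_i-1}$ is a nonnegative integer), so that $\norm{\tilde{f}(\mbt_1)\otimes\dots\otimes\tilde{f}(\mbt_i)} \le \lrsb{(j-i)!}^\nu (s\norm{\tilde{F}_2}/\Delta)^{j-i}$; summing over the at most $4^{j-i}\binom{j-1}{i-1}$ forests in $\mathcal{T}^i_j$ (Lemma~\ref{lem:count_binary_forests}) yields the claimed bound.

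The main work is therefore establishing \eqref{eq:siegel_tree_bound_plan}. Here I would reuse the recursive-construction argument from the proof of Lemma~\ref{lem:vij_norm_bound_poincare_domain}: first handle the case where $\tilde{F}_2$ is $1$-column-sparse, where $\tilde{f}(\mbt)$ is built by $m-1$ iterations of (apply $\tilde{F}_2$ to two adjacent subsystems, then Hadamard-multiply by $N_l$), and the intermediate operators stay $1$-column-sparse throughout. The difference from the Poincar\'{e} case is the bound on $\norm{N_l}_{\max}$: when the subtree being merged at a given step has $l$ leaves, the relevant denominator is $\abs{\sum_{p=1}^l \lambda_{i_p} - \lambda_i}$, and Assumption~\ref{assump:gap_condition_siegel_domain} only gives $\abs{\sum_{p=1}^l \lambda_{i_p} - \lambda_i} \ge \Delta (l-1)^{-\nu}$, hence $\norm{N_l}_{\max} \le (l-1)^\nu/\Delta$. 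Applying Lemma~\ref{lem:hadamard_prod_norm_bound} at each step, the product of spectral norms grows by at most $(l-1)^\nu \norm{\tilde{F}_2}/\Delta$ when a subtree with $l$ leaves is formed. Since the sizes of the merged subtrees, over the course of the construction, are $l_1,l_2,\dots,l_{m-1}$ with $l_q \le m$ and — more to the point — the product $\prod_q (l_q-1)$ is bounded by $(m-1)!$ (each merge uses up leaves, and the worst case is the ``caterpillar'' tree where the sizes are $2,3,\dots,m$), we get $\prod_q (l_q-1)^\nu \le \lrsb{(m-1)!}^\nu$, giving exactly \eqref{eq:siegel_tree_bound_plan} for the $1$-column-sparse case. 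I would then lift to the $s$-column-sparse case exactly as before: write $\tilde{F}_2 = \sum_{l=1}^s \tilde{F}_2^{(l)}$ with each $\tilde{F}_2^{(l)}$ being $1$-column-sparse and of norm $\le \norm{\tilde{F}_2}$, expand $\tilde{f}(\mbt)$ into $s^{m-1}$ terms indexed by $(q_1,\dots,q_{m-1}) \in [s]^{m-1}$, bound each by $\lrsb{(m-1)!}^\nu (\norm{\tilde{F}_2}/\Delta)^{m-1}$, and sum to pick up the factor $s^{m-1}$.

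The step I expect to be the main obstacle — or at least the one requiring the most care — is verifying the combinatorial claim that $\prod_q (l_q - 1) \le (m-1)!$, where $l_1,\dots,l_{m-1}$ are the leaf-counts of the subtrees formed at the $m-1$ merge steps in building $\tilde{f}(\mbt)$. This should follow from the fact that, for any binary tree, if $V_I$ is the set of internal nodes and $B(v)$ the set of leaf-descendants of $v$, then $\prod_{v \in V_I} (\abs{B(v)} - 1) \le (m-1)!$; a clean way to see this is by induction on $m$, splitting at the root into subtrees with $a$ and $m-a$ leaves so that the product factors as $(m-1)$ times the analogous products for the two subtrees, which are $\le (a-1)!$ and $\le (m-a-1)!$ by induction, and then using $(m-1)(a-1)!(m-a-1)! \le (m-1)!$ since $\binom{m-2}{a-1}$ is a positive integer. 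I would want to double-check that this inductive bookkeeping matches exactly the quantity appearing in the norm estimate (i.e.\ that each $N_l$ invoked in the construction of $\tilde{f}(\mbt)$ indeed corresponds to an internal node $v$ with $\abs{B(v)}=l$), which is immediate from Definition~\ref{def:vij_expression}.
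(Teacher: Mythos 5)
Your proposal is correct and follows essentially the same route as the paper's proof: reduce to a per-tree bound $\norm{\tilde{f}(\mbt)} \le [(m-1)!]^\nu (s\norm{\tilde{F}_2}/\Delta)^{m-1}$, prove it in the $1$-column-sparse case via $\norm{N_l}_{\max}\le (l-1)^\nu/\Delta$ and Lemma~\ref{lem:hadamard_prod_norm_bound}, lift to $s$-sparsity by splitting $\tilde{F}_2$ into $1$-column-sparse pieces, and aggregate over forests using Lemma~\ref{lem:count_binary_forests}. Your inductive verification of $\prod_{v\in V_I}(\abs{B(v)}-1)\le (m-1)!$ and the multinomial argument for $\prod_l (m_l-1)!\le (j-i)!$ are both correct and in fact supply details the paper only asserts.
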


\begin{proof}
If $i=j$, then $\tilde{V}_{j,j}=I^{\otimes j}$ and the claim  holds trivially. So we assume $i<j$ from now on.

To prove the desired bound on $\norm{\tilde{V}_{i,j}}$, it suffices to show that for any binary tree $\mbt$ with $m$ leaves, 
\begin{align}
 \norm{\tilde{f}(\mbt)} \le [(m-1)!]^{\nu} \lrb{\frac{s\norm{\tilde{F}_2}}{\Delta}}^{m-1}.   
 \label{eq:bound_on_f_tree_siegel_domain}
\end{align}
To see this, note that for any binary forest $(\mbt_1, \mbt_2, \dots, \mbt_i) \in \mathcal{T}^i_j$, let $m_1,m_2,\dots,m_i$ be the number of leaves in $\mbt_1, \mbt_2,\dots, \mbt_i$ respectively, so that $m_1+m_2+\dots+m_i=j$. Then
Eq.~\eqref{eq:bound_on_f_tree_siegel_domain} implies 
\begin{align}
\norm{\tilde{f}(\mbt_1)\otimes \tilde{f}(\mbt_2)\otimes \dots \otimes \tilde{f}(\mbt_i)}
&= \norm{\tilde{f}(\mbt_2)}\cdot
        \norm{\tilde{f}(\mbt_2)} \cdot
        \dots
            \norm{\tilde{f}(\mbt_i)}  \\
&\le 
[(m_1-1)!(m_2-1)!\dots (m_i-1)!]^{\nu} \lrb{\frac{s\norm{\tilde{F}_2}}{\Delta}}^{j-1} \\
&\le 
[(j-1)!]^\nu
\lrb{\frac{s\norm{\tilde{F}_2}}{\Delta}}^{j-1}.
\end{align}
It follows that
\begin{align}
\norm{\tilde{V}_{i,j}} &\le \sum_{(\mbt_1, \mbt_2, \dots, \mbt_i) \in \mathcal{T}^i_j}
    \norm{\tilde{f}(\mbt_1)\otimes \tilde{f}(\mbt_2)\otimes \dots \otimes \tilde{f}(\mbt_i)} \\ 
    &\le |\mathcal{T}^i_j| \cdot [(j-1)!]^\nu \lrb{\frac{s\norm{\tilde{F}_2}}{\Delta}}^{j-1} \\
    & \le \lrsb{(j-i)!}^\nu\binom{j-1}{i-1} \lrb{\frac{4s\norm{\tilde{F}_2}}{\Delta}}^{j-i},
\end{align}
where the last step follows from Lemma~\ref{lem:count_binary_forests}.

Let us first prove Eq.~\eqref{eq:bound_on_f_tree_poincare_domain} in the special case where $\tilde{F}_2$ is $1$-column-sparse. Recall that \( \tilde{f}(\mbt) \) is generated by the following process. We start with \( m \) independent subsystems, each assigned an \( I \) matrix, and then perform \( m-1 \) iterations of the following procedure: 
\begin{itemize} 
\item Apply \( \tilde{F}_2 \) to the tensor product of the matrices assigned to two adjacent subsystems, and combine these subsystems; 
\item Take the Hadamard product of the resulting matrix with \( N_l \), where \( l \) is the number of leaves in the combined subsystem. 
\end{itemize} 

Note that the standard and tensor products of two \(1\)-column-sparse matrices are both \(1\)-column-sparse. Moreover, the Hadamard product of a \(1\)-column-sparse matrix with any matrix is still \(1\)-column-sparse. Then since \( \tilde{F}_2 \) is \(1\)-column-sparse, the tensor product of the matrices for the subsystems has column-sparsity $1$ throughout the whole procedure.

Furthermore, Assumption~\ref{assump:linear_no_resonance_gap} implies that for any $i, j_1,j_2,\dots,j_l \in [N]$, where $l \ge 2$, we have 
\begin{align}
|\lambda_i - \lambda_{j_1}-\lambda_{j_2}-\dots-\lambda_{j_l}|\ge  \frac{\Delta}{(l-1)^{\nu}}.    
\end{align}
Thus, every entry of \( N_l \) has magnitude at most \( (l-1)^\nu/\Delta \), i.e., 
\begin{align}
\norm{N_l}_{\mathrm{max}} \le \frac{(l-1)^\nu}{\Delta}, \quad \forall l \ge 2 .     
\end{align}
Meanwhile, multiplying a matrix by $\tilde{F}_2$ can increase the spectral norm of the matrix by at most $\norm{\tilde{F}_2}$. Then using Lemma~\ref{lem:hadamard_prod_norm_bound}, we obtain 
\begin{align}
\norm{ \tilde{f}(\mbt)} \le    \prod_{u \in V_I(\mbt)} (|B(u)|-1)^\nu \cdot \lrb{ \frac{\| \tilde{F}_2 \|}{\Delta} }^{m-1},
\end{align}
where $V_I(\mbt)$ is the set of internal nodes of $\mbt$, and $B(u)$ is the set of $u$'s descendants which are also leaves of $\mbt$. 
One can prove that every binary tree $\mbt$ with $m$ leaves satisfies
\begin{align}
\prod_{u \in V_I(\mbt)} (|B(u)|-1)  \le (m-1)!, 
\end{align}
where the equality holds if and only if $\mbt$ has height $m-1$. Therefore, we have
\begin{align}
\norm{ \tilde{f}(\mbt)} \le  [(m-1)!]^\nu \cdot \lrb{ \frac{\| \tilde{F}_2 \|}{\Delta} }^{m-1},
\end{align}
as desired.

Now we consider the general case where $\tilde{F}_2$ is $s$-column-sparse for $s \ge 1$. We can write $\tilde{F}_2$ as the sum of $s$ matrices $\tilde{F}_2^{(1)}, \tilde{F}_2^{(2)}, \dots, \tilde{F}_2^{(s)}$,
where each $\tilde{F}_2^{(l)}$ is $1$-column-sparse. More precisely, we construct $\tilde{F}_2^{(l)}$ by allocating the $l$-th nonzero entry of each column of $\tilde{F}_2$ to this matrix, for $l \in [s]$. Note that the rows of $\tilde{F}_2^{(l)}$ are orthogonal to each other, and hence $\norm{\tilde{F}_2^{(l)}}$ equals the maximum norm of $\tilde{F}_2^{(l)}$'s rows, which is no larger than the maximum norm of $\tilde{F}_2$'s rows, which in turn is upper bounded by $\norm{\tilde{F}_2}$. Thus, we have $\norm{\tilde{F}_2^{(l)}} \le \norm{\tilde{F}_2}$ for all $l \in [s]$. 

Meanwhile, note that in the expression for $\tilde{f}(\mbt)$, $\tilde{F}_2$ occurs $c-1$ times. Since
\begin{align}
\tilde{F}_2= \tilde{F}_2^{(1)} + \tilde{F}_2^{(2)} + \dots + \tilde{F}_2^{(s)},
\end{align}
we can rewrite $\tilde{f}(\mbf)$ as the sum of $s^{m-1}$ terms, each of which is labeled by an integer tuple $(l_1,l_2,\dots,l_{m-1}) \in [s]^{m-1}$, i.e. the $p$-th $\tilde{F}_2$ in the expression for $\tilde{f}(\mbf)$ is replaced with $\tilde{F}_2^{l_p}$, for $p \in [s]$. Now since each $\tilde{F}_2^{l_p}$ is $1$-column-sparse and has spectral norm at most $\norm{\tilde{F}_2}$, the spectral norm of each term in the expression for $\tilde{f}(\mbt)$ is at most \([(m-1)!]^\nu \left( \frac{\| \tilde{F}_2 \|}{\Delta} \right)^{m-1} \). Then since $\tilde{f}(\mbt)$ is the sum of $s^{m-1}$ such terms, we get that 
$\norm{\tilde{f}(\mbt)} \le [(m-1)!]^\nu  \left( \frac{s\| \tilde{F}_2 \|}{\Delta} \right)^{m-1}$, as claimed.

This lemma is thus proved.
\end{proof}

\begin{lemma}
Under Assumptions~\ref{assump:gap_condition_siegel_domain},
supposing $\tilde{F}_2 = Q^{-1}F_2 Q^{\otimes 2}$ is $s$-column-sparse, we have
\begin{align}
\norm{\lrb{\tilde{V}^{-1}}_{a,b}} \le \lrsb{(b-a)!}^{1+\nu}\binom{b-1}{a-1}\lrb{\frac{4s\norm{\tilde{F}_2}}{\Delta}}^{b-a},~&~\forall a\le b.
\end{align}
\label{lem:wij_norm_bound_siegel_domain}    
\end{lemma}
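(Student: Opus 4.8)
The plan is to mirror the proof of Lemma~\ref{lem:wij_norm_bound_poincare_domain} in Appendix~\ref{app:proof_norm_bound_vij_v_inv_ij_poincare_domain}, but substituting the Siegel-domain no-resonance gap estimates in place of the Poincar\'{e}-domain ones. Recall that by Proposition~\ref{prop:v_inv_ij_expression} we have
\begin{align}
    \lrb{\tilde{V}^{-1}}_{a,b} = (-1)^{b-a} \sum_{(\mbt_1,\mbt_2,\dots,\mbt_a) \in \mathcal{T}^a_b} \tilde{g}(\mbt_1) \otimes \tilde{g}(\mbt_2) \otimes \dots \otimes \tilde{g}(\mbt_a),
\end{align}
so, exactly as in the Poincar\'{e} case, it suffices to prove a bound on $\norm{\tilde{g}(\mbt)}$ for a single binary tree $\mbt$ with $m$ leaves, of the form $\norm{\tilde{g}(\mbt)}\le [(m-1)!]^{1+\nu}\lrb{s\norm{\tilde{F}_2}/\Delta}^{m-1}$. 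Indeed, if $\mbt_1,\dots,\mbt_a$ have $m_1,\dots,m_a$ leaves with $\sum_l m_l = b$, then $\prod_l [(m_l-1)!]^{1+\nu}\le [(b-a)!]^{1+\nu}$ and $|\mathcal{T}^a_b|\le 4^{b-a}\binom{b-1}{a-1}$ by Lemma~\ref{lem:count_binary_forests}, which combine to give the stated bound on $\norm{\lrb{\tilde{V}^{-1}}_{a,b}}$.

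To prove the single-tree bound I would follow the same decomposition as in the proof of Lemma~\ref{lem:wij_norm_bound_poincare_domain}: write $\tilde{F}_2 = \sum_{l=1}^s \tilde{F}_2^{(l)}$ with each $\tilde{F}_2^{(l)}$ being $1$-column-sparse and $\norm{\tilde{F}_2^{(l)}}\le \norm{\tilde{F}_2}$, and split $\tilde{g}(\mbt)=\sum_{\vec q\in [s]^{m-1}} \tilde{g}_{\vec q}(\mbt)$ accordingly, where in $\tilde{g}_{\vec q}(\mbt)$ the $l$-th internal node uses $\tilde{F}_2^{(q_l)}$. Because each $\tilde{F}_2^{(q_l)}$ is $1$-column-sparse, each $\tilde{g}_{\vec q}(\mbt)$ factorizes as a Hadamard product $\tilde{g}_{\vec q}(\mbt)=\omega_{\vec q}(\mbt)\odot M_{\vec q}(\mbt)$, where $\omega_{\vec q}(\mbt)$ is the $1$-column-sparse operator obtained by composing the $\tilde{F}_2^{(q_l)}$ along the tree (so $\norm{\omega_{\vec q}(\mbt)}\le \norm{\tilde{F}_2}^{m-1}$), and $M_{\vec q}(\mbt)$ carries the $\gamma(\mbt,h)$ coefficients. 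Then Lemma~\ref{lem:hadamard_prod_norm_bound} gives $\norm{\tilde{g}_{\vec q}(\mbt)}\le \norm{M_{\vec q}(\mbt)}_{\mathrm{max}}\cdot \norm{\tilde{F}_2}^{m-1}$, and summing over the $s^{m-1}$ tuples $\vec q$ yields $\norm{\tilde{g}(\mbt)}\le s^{m-1}\norm{M_{\vec q}(\mbt)}_{\mathrm{max}}\norm{\tilde{F}_2}^{m-1}$.

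The one place where the Siegel domain differs is the bound on $\abs{\gamma(\mbt,h)}$, i.e.\ on the entries of $M_{\vec q}(\mbt)$. From Eq.~\eqref{eq:gamma_t_h_def}, $\gamma(\mbt,h)$ is a sum over topological orders $(v_1,\dots,v_{m-1})\in O(\mbt)$ of products $\prod_{l=1}^{m-1}\lrb{\sum_{w\in C(\lrcb{v_1,\dots,v_l})}\lambda_{h(w)}-\lambda_{h(r)}}^{-1}$. In the Poincar\'{e} case each factor was bounded below in modulus by $l\Delta$, giving $\abs{\gamma(\mbt,h)}\le \frac{1}{(m-1)\Delta^{m-1}}$; under Assumption~\ref{assump:gap_condition_siegel_domain} the type-$(\Delta,\nu)$ condition only gives $\abs{\sum_{w\in C(\lrcb{v_1,\dots,v_l})}\lambda_{h(w)}-\lambda_{h(r)}}\ge \Delta\, l^{-\nu}$ since $|C(\lrcb{v_1,\dots,v_l})|=l+1$ (the induced subgraph on a prefix of a topological order is connected). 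Hence each product is bounded by $\Delta^{-(m-1)}\prod_{l=1}^{m-1} l^{\nu} = \Delta^{-(m-1)}[(m-1)!]^{\nu}$, and using $|O(\mbt)|\le (m-2)!\le (m-1)!$ one gets $\abs{\gamma(\mbt,h)}\le [(m-1)!]^{1+\nu}\Delta^{-(m-1)}$. (When $m=1$ we have $\gamma(\mbt,h)=1$ and the bound holds trivially.) Plugging this into the Hadamard estimate above completes the single-tree bound, hence the lemma. The main obstacle — really the only subtlety — is precisely this extra $(m-1)!$ factor arising from the non-monotone behavior of the no-resonance gaps in the Siegel domain; everything else is a verbatim transcription of the Poincar\'{e}-domain argument. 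It is worth noting that this extra factorial is exactly what prevents the resulting error bound from vanishing as $k\to\infty$ for general Siegel-domain systems, consistent with the discussion in Section~\ref{subsubsec:error_analysis_siegel_domain}.
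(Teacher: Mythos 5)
Your proposal is correct and follows essentially the same route as the paper's proof in Appendix~\ref{app:error_analysis_siegel_domain}: reduce to a single-tree bound $\norm{\tilde{g}(\mbt)}\le [(m-1)!]^{1+\nu}(s\norm{\tilde{F}_2}/\Delta)^{m-1}$ via Proposition~\ref{prop:v_inv_ij_expression}, establish it by the $1$-column-sparse decomposition of $\tilde{F}_2$ together with the Hadamard factorization $\tilde{g}_{\vec q}(\mbt)=\omega_{\vec q}(\mbt)\odot M_{\vec q}(\mbt)$ and Lemma~\ref{lem:hadamard_prod_norm_bound}, and replace the Poincar\'{e}-domain gap $l\Delta$ by $\Delta l^{-\nu}$ to get $\abs{\gamma(\mbt,h)}\le (m-2)!\,[(m-1)!]^{\nu}\Delta^{-(m-1)}\le [(m-1)!]^{1+\nu}\Delta^{-(m-1)}$. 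The assembly over forests via $\prod_l[(m_l-1)!]^{1+\nu}\le[(b-a)!]^{1+\nu}$ and $|\mathcal{T}^a_b|\le 4^{b-a}\binom{b-1}{a-1}$ is also exactly the paper's argument.
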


\begin{proof}
To establish this lemma, it suffices to prove that for any binary tree $\mbt$ with $m$ leaves, the operator $\tilde{g}(\mbt)$ satisfies
\begin{align}
\norm{\tilde{g}(\mbt)} \le [(m-1)!]^{1+\nu}\lrb{\frac{s\norm{\tilde{F}_2}}{\Delta}}^{m-1}.
\label{eq:bound_norm_gt_siegel_domain}
\end{align}
To see this, recall that by Proposition~\ref{prop:v_inv_ij_expression}, 
\begin{align}
    \lrb{\tilde{V}^{-1}}_{a,b} = (-1)^{b-a} \sum_{(\mbt_1,\mbt_2,\dots,\mbt_a) \in \mathcal{T}^a_b} \tilde{g}(\mbt_1) \otimes \tilde{g}(\mbt_2) \otimes \dots \otimes \tilde{g}(\mbt_a).        
\end{align}
For any $(\mbt_1,\mbt_2,\dots,\mbt_a) \in \mathcal{T}^a_b$, let $m_l$ be the number of leaves in $\mbt_l$, for $l \in [a]$. Then we have $m_1+m_2+\dots+m_a=b$. Consequently,  Eq.~\eqref{eq:bound_norm_gt_siegel_domain} implies
\begin{align}
\norm{\tilde{g}(\mbt_1) \otimes \tilde{g}(\mbt_2) \otimes \dots \otimes \tilde{g}(\mbt_a)}
&=\norm{\tilde{g}(\mbt_1)}\norm{\tilde{g}(\mbt_2)} \dots \norm{\tilde{g}(\mbt_a)} \\
&\le [(m_1-1)!(m_2-1)!\dots(m_a-1)!]^{1+\nu} \cdot
\lrb{\frac{s\norm{\tilde{F}_2}}{\Delta}}^{\sum_{l=1}^a (m_l-1)} \\
&=[(m_1-1)!(m_2-1)!\dots(m_a-1)!]^{1+\nu}\cdot \lrb{\frac{s\norm{\tilde{F}_2}}{\Delta}}^{b-a} \\
&\le [(b-a)!]^{1+\nu}\cdot \lrb{\frac{s\norm{\tilde{F}_2}}{\Delta}}^{b-a}.
\end{align}
Meanwhile, by Lemma~\ref{lem:count_binary_forests}, there are at most $4^{b-a}\binom{b-1}{a-1}$ binary forests consisting of $a$ binary trees with a total of $b$ leaves, i.e. $|\mathcal{T}^a_b|\le 4^{b-a}\binom{b-1}{a-1}$. Then it follows that
\begin{align}
\norm{\lrb{\tilde{V}^{-1}}_{a,b}} \le [(b-a)!]^{1+\nu}\binom{b-1}{a-1}\lrb{\frac{4s\norm{\tilde{F}_2}}{\Delta}}^{b-a}, 
\end{align}    
as desired.

It remains to prove Eq.~\eqref{eq:bound_norm_gt_siegel_domain}. To this end, we reuse the notation introduced in the proof of Lemma~\ref{lem:wij_norm_bound_poincare_domain}. Let $r$ be the root of $\mbt$, $l_1, l_2, \dots, l_m$ the leaves of $\mbt$, and 
$v_1,v_2,\dots,v_{m-1}$ the internal nodes of $\mbt$. Since $\tilde{F}_2$ is $s$-column-sparse, we can rewrite $\tilde{F}_2$ as the sum of $s$ matrices $\tilde{F}_2^{(1)}, \tilde{F}_2^{(2)}, \dots, \tilde{F}_2^{(s)}$, where each $\tilde{F}_2^{(l)}$ is $1$-column-sparse and has spectral norm no larger than that of  $\tilde{F}_2$, i.e. $\norm{\tilde{F}_2^{(l)}} \le \norm{\tilde{F}_2}$. Then for any $\vec{j}=(j_1,j_2,\dots,j_{m-1}) \in [s]^{m-1}$ and any labeling $h$ of $\mbt$, let
\begin{align}
\alpha_{\vec j}(\mbt, h) = \prod_{i=1}^{m-1} \bra{h(v_i)} \tilde{F}_2^{(j_i)} \ket{h(c_1(v_i), h(c_2(v_i)))}
\end{align}
and
\begin{align}
\tilde{g}_{\vec j}(\mbt) = \sum_{h \in L(\mbt)} \alpha_{\vec j}(\mbt, h) \gamma(\mbt, h) 
\ket{h(r)}\bra{h(l_1), h(l_2),\dots, h(l_m)}.
\end{align}
Then we have
\begin{align}
    \tilde{g}(\mbt)=\sum_{\vec j \in [s]^{m-1}} \tilde{g}_{\vec j}(\mbt).
\end{align}

Next, we will prove that 
\begin{align}
\tilde{g}_{\vec j}(\mbt) \le [(b-a)!]^{1+\nu}\lrb{\frac{\norm{\tilde{F}_2}}{\Delta}}^{m-1},~&~\forall \vec j \in [s]^{m-1}.    
\end{align}
Assuming this is true, we will get 
\begin{align}
    \norm{\tilde{g}(\mbt)}\le \sum_{\vec j \in [s]^{m-1}} \norm{\tilde{g}_{\vec j}(\mbt)} 
    \le [(b-a)!]^{1+\nu}\lrb{\frac{s\norm{\tilde{F}_2}}{\Delta}}^{m-1},
\end{align}
as claimed.

Fix arbitrary $\vec j \in [s]^{m-1}$. Since $\tilde{F}_2^{(1)}, \tilde{F}_2^{(2)}, \dots, \tilde{F}_2^{(s)}$ are $1$-column-sparse, we have that for each $\vec c= (c_1,c_2,\dots,c_m) \in [N]^m$, there exists at most one labeling $h$ of $\mbt$ that satisfies $h(l_1)=c_1$, $h(l_2)=c_2$, $\dots$, $h(l_m)=c_m$, and 
$\alpha_{\vec j}(\mbt, h) \neq 0$. In other words, for any labeling of the leaves of $\mbt$, there exists at most one corresponding labeling of its internal nodes that contributes a nonzero term to $\tilde{g}_{\vec j}(\mbt)$. Let $h_{\vec j, \vec c}$ denote this labeling of $\mbt$ \footnote{Although $h_{\vec j, \vec c}$ also depends on $\mbt$, we omit the subscript $\mbt$ for better readability.}. Then we can rewrite $\tilde{g}_{\vec j}(\mbt)$ as 
\begin{align}
    \tilde{g}_{\vec j}(\mbt) = \sum_{\vec c \in [N]^m} \alpha_{\vec j}(\mbt, h_{\vec j, \vec c}) \gamma(\mbt, h_{\vec j, \vec c}) \ket{h_{\vec j, \vec c}(r)}\bra{c_1,c_2,\dots,c_m}.
\end{align}
Note that this equation implies that $\tilde{g}_{\vec j}(\mbt)$ is $1$-column-sparse. 

In fact, we can derive an alternative expression for $\tilde{g}_{\vec j}(\mbt)$ as follows.
Define an operator $\omega_{\vec j}(\mbt)$ by traversing the nodes of $\mbt$ in a reverse topological order and assigning an operator to each node:
\begin{itemize}
    \item Start with the leaves of $\mbt$. Each leaf is assigned the operator $I$.
    \item For each internal node $v_i$ of $\mbt$, if its children $c_1(v_i)$ and $c_2(v_i)$ have been assigned operators $\omega_{\vec j}(c_1(v_i))$ and 
    $\omega_{\vec j}(c_2(v_i))$, respectively, then we assign the operator $$\tilde{F}_2^{(j_i)} \lrsb{\omega_{\vec j}(c_1(v_i)) \otimes \omega_{\vec j}(c_2(v_i))}$$ 
    to $v_i$.
    \item Finally, $\omega_{\vec j}(\mbt)$ equals the operator assigned to the root $r$ of $\mbt$.
\end{itemize}
Meanwhile, define  
\begin{align}
    M_{\vec j}(\mbt) = \sum_{\vec c \in [N]^m} \gamma(\mbt, h_{\vec j, \vec c}) \ket{h_{\vec j, \vec c}(r)}\bra{c_1,c_2,\dots,c_m}.
\end{align}
Then one can verify that
\begin{align}
    \tilde{g}_{\vec j}(\mbt) = \omega_{\vec j}(\mbt) \odot M_{\vec j}(\mbt).
\end{align}
Since $\tilde{F}_2^{(1)}, \tilde{F}_2^{(2)}, \dots, \tilde{F}_2^{(s)}$ are $1$-column-sparse and have spectral norms at most $\norm{\tilde{F}_2}$, we know that $\omega_{\vec j}(\mbt)$ is 
$1$-column-sparse and has spectral norm
\begin{align}
    \norm{\omega_{\vec j}(\mbt)} \le \norm{\tilde{F}_2^{(j_1)}}\norm{\tilde{F}_2^{(j_2)}}\dots \norm{\tilde{F}_2^{(j_{m-1})}} \le \norm{\tilde{F}_2}^{m-1}.
\end{align}
On the other hand, for any labeling $h$ of $\mbt$, we can bound the magnitude of $\gamma(\mbt, h)$ as follows. If $m=1$, then $\gamma(\mbt, h)=1$. Otherwise, we have
\begin{align}
\abs{\gamma(\mbt, h)} &\le  \sum_{(v_1,v_2,\dots,v_{m-1}) \in O(\mbt)} \prod_{l=1}^{m-1}\frac{1}{\abs{\sum_{w \in C(\lrcb{v_1,v_2,\dots,v_l})} \lambda_{h(w)} -\lambda_{h(r)}}} \\
&\le (m-2)! \cdot \frac{1}{\Delta} \cdot \frac{2^{\nu}}{\Delta}  \cdot \dots \cdot 
\frac{(m-1)^{\nu}}{\Delta} \\
&\le \frac{[(m-1)!]^{1+\nu}}{\Delta^{m-1}},
\end{align}
where the second step follows from the following observations:
\begin{itemize}
    \item There are at most $(m-2)!$ valid topological orders of the $m-1$ internal nodes of $\mbt$ (since they always start with the root $r$), i.e. $|O(\mbt)| \le (m-2)!$;
    \item For any $(v_1,v_2,\dots,v_{m-1}) \in O(\mbt)$ and any $l \in [m-1]$, the induced subgraph of $\mbt$ on $\lrcb{v_1,v_2,\dots,v_l}$ is connected. Thus, we have
    $|C(\lrcb{v_1,v_2,\dots,v_l})|=l+1$;
    \item Then by Assumption~\ref{assump:gap_condition_siegel_domain}, we have 
    \begin{align}
        \abs{\sum_{w \in C(\lrcb{v_1,v_2,\dots,v_l})} \lambda_{h(w)} -\lambda_{h(r)}} \ge \Delta l^{-\nu}.
    \end{align}
\end{itemize}
Overall, we always have
\begin{align}
    \abs{\gamma(\mbt, h)} \le \frac{[(m-1)!]^{1+\nu}}{\Delta^{m-1}}
\end{align}
regardless of the value of $m$. This implies that each entry of $M_{\vec j}(\mbt)$ has magnitude at most $[(m-1)!]^{1+\nu}\Delta^{-(m-1)}$.

Now since $\omega_{\vec j}(\mbt)$ is $1$-column-sparse and has spectral norm at most $\norm{\tilde{F}_2}^{m-1}$, and the entries of $M_{\vec j}(\mbt)$ have magnitudes at most $[(m-1)!]^{1+\nu}\Delta^{-(m-1)}$, we invoke Lemma~\ref{lem:hadamard_prod_norm_bound} to conclude that 
\begin{align}
\tilde{g}_{\vec j}(\mbt) = \omega_{\vec j}(\mbt) \odot M_{\vec j}(\mbt)    
\end{align}
has spectral norm at most $[(m-1)!]^{1+\nu}\lrb{\frac{\norm{\tilde{F}_2}}{\Delta}}^{m-1}$, as desired. This completes the proof of the lemma.    
\end{proof}

\subsection{Carleman error bound}

With the established bounds on the norms of the blocks of $\tilde{V}$ and $\tilde{V}^{-1}$, we are now prepared to derive an upper bound on the error of truncated Carleman linearization for the system under consideration. 

\begin{thm}
[Carleman error bound for nonresonant systems in the Siegel domain]
\label{thm:non-resonant_Siegel}
Let $x(t)$ be the solution to the $N$-dimensional quadratic ODE system
\begin{align}
    \dot{x}(t) = F_1 x(t) + F_2 x(t)^{\ot 2},
\end{align}
and
\begin{align}
    \dot{y}(t) = A y(t),\qquad y(0)=[x(0),x(0)^{\ot 2},\dots,x(0)^{\ot k}]^T,
\end{align}
be the Carleman linearization of the system truncated at level $k$, see Eq.~\eqref{eq:CarlemanODE}. Assume $F_1=Q\Lambda Q^{-1}$, where $\Lambda=\diag{\lambda_1, \lambda_2, \dots, \lambda_N}$ with  $\realpart{\lambda_i} \le 0$ for all $i \in [N]$.
Suppose  $(\lambda_1, \lambda_2, \dots, \lambda_N)$ is of type $(\Delta, \nu)$ for some $\Delta, \nu > 0$, and $\tilde{F}_2 = Q^{-1}F_2 Q^{\otimes 2}$ is $s$-column-sparse. 
Let $\Xi_{\nu}(n)= \sum_{l=0}^n [(n-l)!]^{\nu}(l!)^{\nu-1}$, for $n \in \myN_0$, and $\norm{\tilde{x}_{\rm max}}=\max_{t \in [0,T]} \norm{Q^{-1}x(t)}$. Then the Carleman error vectors $\eta_i(t) = x(t)^{\ot i}-y^{[i]}(t)$ satisfy 
\begin{align}
\norm{\eta_i(t)} \le \frac{k! \cdot \Xi_{\nu}(k-i) }{(i-1)!} \cdot t  \norm{Q}^i \norm{\tilde{F}_2} \norm{x_{\rm max}}^{i+1}
\lrb{\frac{4 s \norm{\tilde{F}_2} \norm{\tilde{x}_{\rm max}}}{\Delta}}^{k-i},
\label{eq:error_bound_ith_block_siegel_domain}
\end{align}
for all $i\in [k]$ and $t \in [0, T]$. In particular, for $i=1$, we have
\begin{align}
\norm{\eta_1(t)} 
\le k! \cdot \Xi_{\nu}(k-1)  \cdot  t
\norm{Q} \norm{\tilde{F}_2} \norm{\tilde{x}_{\rm max}}^2 \lrb{\frac{4 s  \norm{\tilde{F}_2} \norm{\tilde{x}_{\rm max}}}{\Delta}}^{k-1},    
\label{eq:error_bound_1st_block_siegel_domain}
\end{align}
for all $t \in [0, T]$.
\label{thm:error_bound_nonresonant_siegel_domain}
\end{thm}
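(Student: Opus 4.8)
The plan is to mirror the structure of the proof of Theorem~\ref{thm:error_bound_nonresonant_poincare_domain} (the Poincar\'{e} domain case), substituting the Poincar\'{e}-domain norm bounds on the blocks of $\tilde{V}$ and $\tilde{V}^{-1}$ (Lemmas~\ref{lem:vij_norm_bound_poincare_domain} and \ref{lem:wij_norm_bound_poincare_domain}) with the Siegel-domain counterparts (Lemmas~\ref{lem:vij_norm_bound_siegel_domain} and \ref{lem:wij_norm_bound_siegel_domain}), which have already been established in this appendix. First I would pass to the diagonalized coordinates $\tilde{x}(t)=Q^{-1}x(t)$, which solves $\dot{\tilde{x}}(t)=\Lambda\tilde{x}(t)+\tilde{F}_2\tilde{x}(t)^{\otimes 2}$ with $\tilde{F}_2=Q^{-1}F_2Q^{\otimes 2}$, and write down the order-$k$ Carleman system $\dot{\tilde{y}}(t)=\tilde{A}\tilde{y}(t)$ with $\tilde{A}=\tilde{V}D\tilde{V}^{-1}$, $D=\diag{\Lambda_1,\dots,\Lambda_k}$. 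As in the Poincar\'{e} case, the error vector $\tilde{\eta}(t)=[\tilde\eta_1(t),\dots,\tilde\eta_k(t)]^T$ with $\tilde{\eta}_i(t)=\tilde{x}(t)^{\otimes i}-\tilde{y}^{[i]}(t)$ is driven only by the truncation source term $\tilde{b}_k(\tau)=\tilde{A}_{k,k+1}\tilde{x}(\tau)^{\otimes(k+1)}$, so by Duhamel's formula $\tilde{\eta}_i(t)=\sum_{j=i}^k\int_0^t\tilde{V}_{i,j}e^{\Lambda_j(t-\tau)}(\tilde{V}^{-1})_{j,k}\tilde{b}_k(\tau)\,d\tau$, and taking norms gives $\norm{\tilde\eta_i(t)}\le\sum_{j=i}^k\int_0^t\norm{\tilde{V}_{i,j}}\norm{(\tilde{V}^{-1})_{j,k}}\norm{\tilde{b}_k(\tau)}\,d\tau$, using $\norm{e^{\Lambda_j r}}\le 1$ since $\Lambda_j$ is diagonal with non-positive real parts.

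Next I would plug in the three ingredients. The source bound is unchanged: $\norm{\tilde{b}_k(\tau)}\le k\norm{\tilde{F}_2}\norm{\tilde{x}_{\rm max}}^{k+1}$. For $\norm{\tilde{V}_{i,j}}$, Lemma~\ref{lem:vij_norm_bound_siegel_domain} gives $\norm{\tilde{V}_{i,j}}\le[(j-i)!]^{\nu}\binom{j-1}{i-1}(4s\norm{\tilde{F}_2}/\Delta)^{j-i}$, and for $\norm{(\tilde{V}^{-1})_{j,k}}$, Lemma~\ref{lem:wij_norm_bound_siegel_domain} gives $\norm{(\tilde{V}^{-1})_{j,k}}\le[(k-j)!]^{1+\nu}\binom{k-1}{j-1}(4s\norm{\tilde{F}_2}/\Delta)^{k-j}$. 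Multiplying these and summing over $j$ from $i$ to $k$ produces, after collecting the $(4s\norm{\tilde{F}_2}/\Delta)^{k-i}$ factor, a combinatorial sum $\sum_{j=i}^k[(j-i)!]^{\nu}[(k-j)!]^{1+\nu}\binom{j-1}{i-1}\binom{k-1}{j-1}$. The task is to bound this by the quantity appearing in the statement. I would reindex with $l=j-i$ so the sum runs over $l=0,\dots,k-i$ with factor $(l!)^{\nu}[(k-i-l)!]^{1+\nu}\binom{i+l-1}{i-1}\binom{k-1}{i+l-1}$; bounding the binomial product crudely by $\binom{k-1}{i-1}/(k-i-l)!$ up to rearrangement (this is the routine but slightly fiddly estimate, using $\binom{j-1}{i-1}\binom{k-1}{j-1}=\binom{k-1}{i-1}\binom{k-i}{j-i}$ and $\binom{k-i}{l}\le (k-i)!/[l!(k-i-l)!]$) lets me factor out $\binom{k-1}{i-1}(k-i)!$ and recognize the residual sum as $\sum_{l=0}^{k-i}[(k-i-l)!]^{\nu}(l!)^{\nu-1}=\Xi_{\nu}(k-i)$. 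Finally I would convert back to the original coordinates using $\eta_i(t)=Q^{\otimes i}\tilde{\eta}_i(t)$, so $\norm{\eta_i(t)}\le\norm{Q}^i\norm{\tilde\eta_i(t)}$, and absorb the $\norm{\tilde{x}_{\rm max}}^{k+1}$ into $\norm{\tilde{x}_{\rm max}}^{i+1}$ times $\norm{\tilde{x}_{\rm max}}^{k-i}$ to land on Eq.~\eqref{eq:error_bound_ith_block_siegel_domain}, with the $i=1$ case following by specialization.

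The main obstacle is the bookkeeping in the combinatorial sum: unlike the Poincar\'{e} case, where the factorial powers are absent and the double binomial sum telescopes cleanly to a single binomial via Vandermonde, here the factors $[(j-i)!]^{\nu}$ and $[(k-j)!]^{1+\nu}$ do not combine with the binomials into a closed form, so I have to be careful to extract exactly the definition of $\Xi_{\nu}$ and not lose or gain a factorial. In particular I need to verify that the identity $\binom{j-1}{i-1}\binom{k-1}{j-1}=\binom{k-1}{i-1}\binom{k-i}{j-i}$ holds and that $[(k-j)!]^{1+\nu}\binom{k-i}{j-i}\le (k-i)!\,[(k-j)!]^{\nu}/(j-i)!$, i.e. that the extra factor of $(k-j)!$ from the $\tilde{V}^{-1}$ bound cancels against the $\binom{k-i}{j-i}$ denominator up to the $(k-i)!$ prefactor — this is elementary but must be done with care to justify the precise form $\frac{k!\cdot\Xi_\nu(k-i)}{(i-1)!}$ rather than merely $\myO{\cdot}$. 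I do not expect any conceptual difficulty beyond this arithmetic; everything else is a direct transcription of the Poincar\'{e}-domain argument.
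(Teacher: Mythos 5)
Your proposal matches the paper's proof essentially step for step: the same Duhamel representation $\tilde{\eta}_i(t)=\sum_{j=i}^k\int_0^t\tilde{V}_{i,j}e^{\Lambda_j(t-\tau)}(\tilde{V}^{-1})_{j,k}\tilde{b}_k(\tau)\,d\tau$, the same substitution of the Siegel-domain block bounds, and the same combinatorial reduction via $\binom{j-1}{i-1}\binom{k-1}{j-1}=\binom{k-1}{i-1}\binom{k-i}{j-i}$ to extract $\frac{k!\,\Xi_\nu(k-i)}{(i-1)!}$ (which is in fact an exact identity, not merely an upper bound, so the "fiddly" step you flag goes through cleanly). The argument is correct as proposed.
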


\begin{proof}
    The proof of this theorem parallels that of  Theorem~\ref{thm:error_bound_nonresonant_poincare_domain}, but it relies on the bounds for $\norm{\tilde{V}_{i,j}}$ and $\norm{\lrb{\tilde{V}^{-1}}_{i,j}}$ provided by Lemmas \ref{lem:vij_norm_bound_siegel_domain} and \ref{lem:wij_norm_bound_siegel_domain}, respectively, which differ from those used in the Poincar\'{e} domain case. 
    
    Recall that $\tilde{x}(t)=Q^{-1}x(t)$ is governed by the ODE system:
    \begin{align}
    \dot{\tilde{x}}(t)  = \Lambda \tilde{x}(t) + \tilde{F}_2 \tilde{x}(t)^{\otimes 2},   
    \end{align}
    where $\tilde{F}_2 = Q^{-1}F_2 Q^{\otimes 2}$. The order-$k$ Carleman linearization for this system is  
    \begin{align}
    \dot{\tilde{y}}(t) =\tilde{A}\tilde{y}(t)    
    \end{align}
     where $\tilde{y}(t)=[\tilde{y}^{[1]}(t), \tilde{y}^{[2]}(t),\dots,\tilde{y}^{[k]}(t)]^T$ with $\tilde{y}^{[i]}(t) \in \myC^{N^i}$ for $i \in [k]$. We have found $\tilde{V}$ such that $\tilde{A}=\tilde{V}D\tilde{V}^{-1}$, where $D=\diag{\Lambda_1, \Lambda_2, \dots, \Lambda_k}$, 
    with
    \begin{align}
    \Lambda_i=\diag{\lambda_{j_1}+\lambda_{j_2}+\dots+\lambda_{j_i}:~j_1,j_2,\dots,j_i \in [N]}, ~~~\forall i \in [k].
    \end{align}
    We will first establish an upper bound on $\norm{\tilde{x}(t)^{\otimes i}-\tilde{y}^{[i]}(t)}$, and then convert it into an upper bound on $\norm{x(t)^{\otimes i}-y^{[i]}(t)}$, for each $i \in [k]$.

    Let $\tilde{b}(t)=[\tilde{b}_1(t),\tilde{b}_2(t),\dots,\tilde{b}_k(t)]^T$, where $\tilde{b}_i(t)=\vec 0 \in \myC^{N^i}$ for $i \in [k-1]$, and 
    \begin{align}
    \tilde{b}_k(t)=\tilde{A}_{k,k+1}\tilde{x}(t)^{\otimes (k+1)}.    
    \end{align}
    Moreover, let $\tilde{\eta}_i(t)=\tilde{x}(t)^{\otimes i}-\tilde{y}^{[i]}(t)$ for $i \in [k]$, and let $\tilde{\eta}(t)=[\tilde{\eta}_1(t), \tilde{\eta}_2(t), \dots, \tilde{\eta}_k(t)]^T$. Then $\tilde{\eta}(t)$ is governed by the ODE system:
    \begin{align}
        \dot{\tilde{\eta}}(t) = \tilde{A} \tilde{b}(t).
    \end{align}
    As a consequence, we have
    \begin{align}
    \tilde{\eta}(t) =\int_0^t \tilde{V}e^{D(t-\tau)}\tilde{V}^{-1}\tilde{b}(\tau)d\tau.
    \end{align}
    In other words, for each $i \in [k]$, we have
    \begin{align}
    \tilde{\eta}_i(t) = \sum_{j=i}^k \int_0^t \tilde{V}_{i,j} e^{\Lambda_j(t-\tau)} \lrb{\tilde{V}^{-1}}_{j,k} \tilde{b}_k(\tau) 
    d\tau.
    \end{align}
    It follows that
    \begin{align}
    \norm{\tilde{\eta}_i(t)} \le \sum_{j=i}^k \int_0^t \norm{\tilde{V}_{i,j}} \norm{e^{\Lambda_j(t-\tau)}} \norm{\lrb{\tilde{V}^{-1}}_{j,k}} \norm{\tilde{b}_k(\tau)} 
    d\tau.
    \end{align}
    
    Note that $\Lambda_j$ is a diagonal matrix whose diagonal entries have non-positive real parts. Thus, we have $\norm{e^{\Lambda_j r}} \le 1$ for all $r \in \R$. Then using the definition of $\tilde{A}_{k, k+1}$,  $\tilde{x}(t)$ and $\norm{\tilde{x}_{\rm max}}$, we obtain
    \begin{align}
        \norm{\tilde{b}_k(\tau)} \le \norm{\tilde{A}_{k, k+1}} \norm{\tilde{x}(\tau)}^{k+1}
        \le k \norm{\tilde{F}_2} \norm{\tilde{x}_{\rm max}}^{k+1}, ~~~\forall \tau \in [0, T]. 
    \end{align}
    Finally, since the given system satisfies Assumptions~\ref{assump:gap_condition_siegel_domain} and $\tilde{F}_2$ is $s$-column-sparse, Lemma~\ref{lem:vij_norm_bound_siegel_domain} provides an upper bound on $\norm{\tilde{V}_{i,j}}$:
    \begin{align}
       \norm{\tilde{V}_{i,j}} \le [(j-i)!]^\nu \binom{j-1}{i-1} \lrb{\frac{4s\norm{\tilde{F}_2}}{\Delta}}^{j-i},
    \end{align}
    while Lemma~\ref{lem:wij_norm_bound_siegel_domain} provides an upper bound on $\norm{\lrb{\tilde{V}^{-1}}_{j,k}}$:
    \begin{align}
        \norm{\lrb{\tilde{V}^{-1}}_{j,k}} \le [(k-j)!]^{1+\nu}\binom{k-1}{j-1}\lrb{\frac{4s\norm{\tilde{F}_2}}{\Delta}}^{k-j}.
    \end{align}
    Combining the above facts yields
    \begin{align}
    \norm{\tilde{\eta}_i(t)} &\le \sum_{j=i}^k \int_0^t \norm{\tilde{V}_{i,j}} \norm{\lrb{\tilde{V}^{-1}}_{j,k}} \norm{\tilde{b}_k(\tau)} 
    d\tau \\
    & \le \sum_{j=i}^k [(j-i)!]^\nu [(k-j)!]^{1+\nu} \cdot
    k t \norm{\tilde{F}_2} \norm{\tilde{x}_{\rm max}}^{k+1}
        \binom{k-1}{j-1}\binom{j-1}{i-1}\lrb{\frac{4s\norm{\tilde{F}_2}}{\Delta}}^{k-i} \\
    & =  \frac{k! \cdot \Xi_{\nu}(k-i)}{(i-1)!} 
        \cdot t \norm{\tilde{F}_2} \norm{\tilde{x}_{\rm max}}^{k+1} \lrb{\frac{4 s\norm{\tilde{F}_2}}{\Delta}}^{k-i},
    \label{eq:tilde_eta_i_bound_siegel_domain}
    \end{align}
    where 
    \begin{align}
    \Xi_{\nu}(n)=\sum_{l=0}^n [(n-l)!]^{\nu}(l!)^{\nu-1},       
    \end{align}
    for all $i \in [k]$ and $t \in [0, T]$.
    
    Now since $x(t)=Q\tilde{x}(t)$ and $y^{[i]}(t)=Q^{\otimes i} \tilde{y}^{[i]}(t)$, we have ${\eta}_i(t)=x(t)^{\otimes i}-{y}^{[i]}(t)=
    Q^{\otimes i} \tilde{\eta}_i(t)$, for all $i \in [k]$ and $t \ge 0$. Then, by Eq.~\eqref{eq:tilde_eta_i_bound_siegel_domain}, we obtain
    \begin{align}
    \norm{\eta_i(t)}
    &\le \norm{Q}^{i} \norm{\tilde{\eta}_i(t)} \\
    &\le 
    \frac{k! \cdot \Xi_{\nu}(k-i) }{(i-1)!} 
    \cdot t \norm{Q}^i \norm{\tilde{F}_2}  \norm{\tilde{x}_{\rm max}}^{k+1} \lrb{\frac{4 s\norm{\tilde{F}_2}}{\Delta}}^{k-i} \\
    &= \frac{k! \cdot \Xi_{\nu}(k-i) }{(i-1)!}  
    \cdot t \norm{Q}^i \norm{\tilde{F}_2} \norm{\tilde{x}_{\rm max}}^{i+1}\lrb{\frac{4 s\norm{\tilde{F}_2} \norm{\tilde{x}_{\rm max}}}{\Delta}}^{k-i},
    \end{align}
    for all $i \in [k]$ and $t \in [0, T]$. This completes the proof of the theorem.   

\end{proof}

\subsection{Technical obstacles for Carleman convergence in the Siegel domain}

In contrast to Theorem~\ref{thm:error_bound_nonresonant_poincare_domain} for the case where the spectrum of $F_1$ lies in the Poincar\'{e} domain, Theorem~\ref{thm:error_bound_nonresonant_siegel_domain} does not yield a convergence result for the case where the spectrum of $F_1$ belongs to the Siegel domain. Specifically, it does not imply that $\norm{x(t)-y^{[1]}(t)} \to 0$ as $k \to +\infty$, even when $\norm{\tilde{F}_2}$ is arbitrarily small. The underlying issue is that the error bound in  Eq.~\eqref{eq:error_bound_1st_block_siegel_domain} contains the factor $k! \cdot \Xi_{\nu}(k-1)$, which grows faster than $C^k$ for any constant $C>0$. As a result, this term eventually dominates the bound on $\norm{x(t) - y^{[1]}(t)}$, which cannot be made arbitrarily small by simply increasing $k$. It remains unclear whether the error bound in Theorem~\ref{thm:error_bound_nonresonant_siegel_domain} can be significantly improved and whether a convergence result can be established for all nonresonant systems in the Siegel domain.

Let us investigate more closely why the error bound in Theorem~\ref{thm:error_bound_nonresonant_siegel_domain} for the Siegel domain case is worse than the corresponding bound in Theorem~\ref{thm:error_bound_nonresonant_poincare_domain} for the Poincaré domain case. Although the proofs of both theorems follow the same strategy and rely critically on upper bounds for $\norm{\tilde{V}_{i,j}}$ and $\norm{(\tilde{V}^{-1})_{a,b}}$, the resulting error  estimates differ substantially in magnitude.

A comparison of Lemma~\ref{lem:vij_norm_bound_poincare_domain} and Lemma~\ref{lem:vij_norm_bound_siegel_domain} reveals that the upper bound in the Siegel case is worse by a factor of $[(j - i)!]^{\nu}$. Similarly, the upper bound for $\norm{(\tilde{V}^{-1})_{a,b}}$ in Lemma~\ref{lem:wij_norm_bound_siegel_domain} exceeds its counterpart in Lemma~\ref{lem:wij_norm_bound_poincare_domain} by a factor of $[(b - a)!]^{1 + \nu}$.

The key reason for these differences lies in the distinct behavior of nonresonance gaps in the two domains. If $(\lambda_1, \lambda_2, \dots, \lambda_N)$ lies in the Poincaré domain, then
\begin{align}
\left| \lambda_i - \lambda_{j_1} - \lambda_{j_2} - \dots - \lambda_{j_l} \right| \ge \Delta(l - 1),
\end{align}
which implies
\begin{align}
\left| \frac{1}{\lambda_i - \lambda_{j_1} - \lambda_{j_2} - \dots - \lambda_{j_l}} \right| \le \frac{1}{\Delta(l - 1)},
\end{align}
a quantity that decreases as $l$ increases. In contrast, if $(\lambda_1, \lambda_2, \dots, \lambda_N)$ belongs to the Siegel domain, then, in the worst case, we might have
\begin{align}
\left| \lambda_i - \lambda_{j_1} - \lambda_{j_2} - \dots - \lambda_{j_l} \right| \approx \frac{\Delta}{(l - 1)^{\nu}},
\end{align}
and hence
\begin{align}
\left| \frac{1}{\lambda_i - \lambda_{j_1} - \lambda_{j_2} - \dots - \lambda_{j_l}} \right| \approx \frac{(l - 1)^{\nu}}{\Delta},
\label{eq:worst_scenario_nonresonance_gap}
\end{align}
a quantity that increases as $l$ grows. This behavior leads to rapidly growing contributions in the Siegel case. However, it is important to emphasize that such worst-case denominators arise only for specific index choices.

Meanwhile, every entry of $\tilde{V}_{i,j}$ and $(\tilde{V}^{-1})_{a,b}$ can be expressed as a sum of multiple terms, where each term is a product of certain entries of $\tilde{F}_2$ and factors of the form $\frac{1}{\lambda_i - \lambda_{j_1} - \lambda_{j_2} - \dots - \lambda_{j_l}}$, for suitable index tuples $(i, j_1, j_2, \dots, j_l)$. More precisely, each term in these expressions takes the form
\[
\pm \prod_{l=1}^m \bra{a_l}\tilde{F}_2\ket{b_l, c_l} \cdot 
\prod_{l=1}^m \frac{1}{\lambda_{i_l} - \sum_{j=1}^N \alpha_j^{(l)} \lambda_j},
\]
for appropriate choices of indices $a_l$, $b_l$, $c_l$, $i_l$, and coefficients $\alpha_j^{(l)}$.

If $(\lambda_1, \lambda_2, \dots, \lambda_N)$ lies in the Siegel domain, then the denominator $\left|\lambda_{i_l} - \sum_{j=1}^N \alpha_j^{(l)} \lambda_j\right|$ can become very small for certain choices of $i_l$ and $\alpha_j^{(l)}$, leading to large individual terms in the sum. However, such problematic contributions can be effectively eliminated by setting the corresponding entries of $\tilde{F}_2$ to zero. That is, if the product $\prod_{l=1}^m \bra{a_l}\tilde{F}_2\ket{b_l, c_l}$ vanishes, then the entire term is removed from the expansion of the corresponding matrix entry.

Consequently, by imposing appropriate structural conditions on the system’s nonlinearity -- specifically, on the sparsity pattern of $\tilde{F}_2$ -- we can eliminate the terms responsible for poor scaling and thereby obtain significantly tighter bounds on the truncation error than those provided in Theorem~\ref{thm:error_bound_nonresonant_siegel_domain}. Following this strategy, we establish an improved error bound for certain nonresonant systems in the Siegel domain in Theorem~\ref{thm:error_bound_nonresonant_siegel_domain_decompose_f2}. We note that other subsets of systems in the Siegel domain may likewise admit strengthened bounds through similar techniques.

\section{Proof of Theorem~\ref{thm:error_bound_nonresonant_siegel_domain_decompose_f2}}
\label{app:error_bound_nonresonant_siegel_domain_decompose_f2}

Recall that $\tilde{x}(t)=Q^{-1}x(t)$ is governed by the ODE system:
\begin{align}
\dot{\tilde{x}}(t) = \Lambda \tilde{x}(t) + \tilde{F}_2 \tilde{x}(t)^{\otimes 2},   
\end{align}
where $\tilde{F}_2 = Q^{-1}F_2 Q^{\otimes 2}$. The order-$k$ Carleman linearization for this system is  
\begin{align}
\dot{\tilde{y}}(t) =\tilde{A}\tilde{y}(t)    
\end{align}
 where $\tilde{y}(t)=[\tilde{y}^{[1]}(t), \tilde{y}^{[2]}(t),\dots,\tilde{y}^{[k]}(t)]^T$ with $\tilde{y}^{[i]}(t) \in \myC^{N^i}$ for $i \in [k]$. We have found $\tilde{V}$ such that $\tilde{A}=\tilde{V}D\tilde{V}^{-1}$, where $D=\diag{\Lambda_1, \Lambda_2, \dots, \Lambda_k}$, 
with
\begin{align}
\Lambda_i=\diag{\lambda_{j_1}+\lambda_{j_2}+\dots+\lambda_{j_i}:~j_1,j_2,\dots,j_i \in [N]}, ~~~\forall i \in [k].
\end{align}
We will first establish an upper bound on $\norm{\tilde{x}(t)^{\otimes i}-\tilde{y}^{[i]}(t)}$, and then convert it into an upper bound on $\norm{x(t)^{\otimes i}-y^{[i]}(t)}$, for each $i \in [k]$.

Let $P_{\pm} = \operatorname{span}\{\ket{i} : i \in S_{\pm}\}$. Then $P_+ + P_- = I$. In addition, define $\tilde{x}_{\pm}(t) = P_{\pm} \tilde{x}(t)$, $\Lambda_{\pm} = P_{\pm} \Lambda P_{\pm}$, and $\tilde{F}_{2,\pm} = P_{\pm} \tilde{F}_2 (P_{\pm} \otimes P_{\pm})$. With these definitions, we have $\tilde{x}(t) = \tilde{x}_+(t) \oplus \tilde{x}_-(t)$, and $\tilde{F}_2=\tilde{F}_{2,+} \oplus \tilde{F}_{2,-}$. That is,  
$\tilde{x}_+(t)$ and $\tilde{x}_-(t)$ reside in orthogonal subspaces, while
$\tilde{F}_{2,+}$ and $\tilde{F}_{2,-}$ act only on their respective subspaces. Consequently, under Assumption~\ref{assump:block_diagonal_f2}, the dynamics satisfy
\begin{align}
    \frac{d \tilde{x}_\pm(t)}{dt} = \Lambda_{\pm} \tilde{x}_{\pm}(t) + \tilde{F}_{2,\pm}  \tilde{x}_{\pm}(t)^{\otimes 2}.
    \label{eq:decoupled_subsystems}
\end{align}
In other words, the original system is decoupled into two subsystems on the subspaces $\operatorname{span}\{\ket{i} : i \in S_+\}$ and $\operatorname{span}\{\ket{i} : i \in S_-\}$, whose eigenvalues are given by $\lrcb{\lambda_i: ~i \in S_+}$ and $\lrcb{\lambda_i:~i \in S_-}$, respectively. 

Moreover, this decomposition naturally extends to the Carleman linearization of the system. That is, the order-$k$ Carleman linearized system can itself be decomposed into two subsystems:
\begin{align}
\frac{d\tilde{y}_{\pm}(t)}{dt} = \tilde{A}_{\pm} \tilde{y}_{\pm}(t),
\end{align}
where $\tilde{y}_{\pm}(t)=[\tilde{y}^{[1]}_{\pm}(t), \tilde{y}^{[2]}_{\pm}(t), \dots, \tilde{y}^{[k]}_{\pm}(t)]^T$ resides in the tensor algebra generated by $\operatorname{span}\{\ket{i} : i \in S_\pm\}$, and $\tilde{A}_{\pm}$ acts only on this subspace.

Let us now consider each subsystem individually, beginning with the system for $\tilde{x}_+(t)$. By Lemma~\ref{lem:delta_constant_diagonal_f2}, $\vec\lambda_+=(\lambda_i : i \in S_+)$ is $\Delta$-nonresonant. Moreover, since $\tilde{F}_2$ is $s$-column-sparse, we know that $\tilde{F}_{2,+}$ is also $s$-column-sparse. Finally, note that 
\begin{align}
\norm{\tilde{x}_+(t)} \le \norm{\tilde{x}(t)} \le \norm{\tilde{x}_{\rm max}}, \qquad \forall t \in [0, T].
\end{align}
Combining these facts, we can adapt the proof of Theorem~\ref{thm:error_bound_nonresonant_poincare_domain} to obtain
\begin{align}
\norm{\tilde{x}_+(t)^{\otimes i}-\tilde{y}^{[i]}_+(t)} &\le  k t  \norm{\tilde{F}_{2,+}} \norm{\tilde{x}_{\rm max}}^{k+1} \binom{k-1}{i-1} \lrb{\frac{8 s\norm{\tilde{F}_{2,+}} }{\Delta}}^{k-i} \\
&\le  k t  \norm{\tilde{F}_{2}} \norm{\tilde{x}_{\rm max}}^{k+1} \binom{k-1}{i-1} \lrb{\frac{8 s\norm{\tilde{F}_{2}} }{\Delta}}^{k-i},
\end{align}
for all $i \in [k]$ and $t \in [0, T]$. Here the second step follows from $\norm{\tilde{F}_{2,+}} \le \norm{\tilde{F}_{2}}$, as $\tilde{F}_2=\tilde{F}_{2,+} \oplus \tilde{F}_{2,-}$.

Similarly, we can also show that 
\begin{align}
\norm{\tilde{x}_-(t)^{\otimes i}-\tilde{y}^{[i]}_-(t)}\le  k t \norm{\tilde{F}_{2}} \norm{\tilde{x}_{\rm max}}^{k+1} \binom{k-1}{i-1} \lrb{\frac{8 s\norm{\tilde{F}_{2}} }{\Delta}}^{k-i},
\end{align}
for all $i \in [k]$ and $t \in [0, T]$.

Then since $\tilde{x}(t)=\tilde{x}_+(t) \oplus \tilde{x}_-(t)$
and $\tilde{y}^{[i]}(t)=\tilde{y}^{[i]}_+(t) \oplus \tilde{y}^{[i]}_-(t)$, we get
\begin{align}
\norm{\tilde{x}(t)^{\otimes i}-\tilde{y}^{[i]}(t)}\le \sqrt{2} k t  \norm{\tilde{F}_{2}} \norm{\tilde{x}_{\rm max}}^{k+1} \binom{k-1}{i-1} \lrb{\frac{8 s\norm{\tilde{F}_{2}} }{\Delta}}^{k-i},
\end{align}
for all $i \in [k]$ and $t \in [0, T]$.

Finally, since $x(t)=Q \tilde{x}(t)$ and $y^{[i]}(t) =Q^{\otimes i} \tilde{y}^{[i]}(t)$ for $i \in [k]$, it follows that
\begin{align}
\norm{{x}(t)^{\otimes i}-{y}^{[i]}(t)}& \le \sqrt{2} k t  \norm{Q}^i \norm{\tilde{F}_{2}} \norm{\tilde{x}_{\rm max}}^{k+1} \binom{k-1}{i-1} \lrb{\frac{8 s\norm{\tilde{F}_{2}} }{\Delta}}^{k-i} \\ 
&= \sqrt{2} k t  \norm{Q}^i \norm{\tilde{F}_{2}} \norm{\tilde{x}_{\rm max}}^{i+1} \binom{k-1}{i-1} \lrb{\frac{8 s\norm{\tilde{F}_{2}} \norm{\tilde{x}_{\rm max}}}{\Delta}}^{k-i}.
\end{align}
This bound holds for all $i \in [k]$ and $t \in [0, T]$. This completes the proof of Theorem \ref{thm:error_bound_nonresonant_siegel_domain_decompose_f2}.

\section{Proof of Theorem~\ref{thm:Poincarealgorithm}}
\label{app:poincare_domain_algorithm}

\noindent\textbf{Algorithm description.}
To prepare a quantum state close to $\ket{x(T)}$, we first apply the Carleman linearization scheme to the nonlinear ODE system, transforming it into a linear ODE system. We then utilize an existing quantum algorithm to solve the resulting linear system. The output quantum state consists of multiple blocks, with the first block approximately proportional to $x(T)$. To extract this block, we perform a projective measurement on the output state and succeed if the state collapses onto the first subspace. To ensure a sufficiently high success probability at this stage, we rescale the original system so that its solution remains bounded in norm.

Specifically, our algorithm for solving the given problem proceeds as follows:

\begin{itemize}
\item First, we rescale the ODE system so that the solution norm does not exceed $c/\kappa_Q$. Precisely, let $\bar{x}(t)= \frac{c}{\rho\kappa_Q} x(t)$. Then \( \bar{x}(t) \) satisfies $\norm{\bar{x}(t)} \le c /\kappa_Q$ for all $t \in [0,T]$, and evolves according to the equation:
\begin{align}
\dot{\bar{x}}(t) = \bar{F}_1 \bar{x}(t) + \bar{F}_2  \bar{x}(t)^{\otimes 2},
\label{eq:ode_system_with_rescaled_solution}
\end{align}
where $\bar{F}_1=F_1$ and 
\( \bar{F}_2 = \rho \kappa_Q  F_2 / c \). Note that $F_1$ remains unchanged under this rescaling. 

\item Next, we choose an integer $k$ such that
\begin{align}
     k T g (2c)^{k-1} \kappa_Q  \rho  \norm{Q^{-1}}\norm{\tilde{F}_2} \le \frac{\epsilon}{4}.
\end{align}
Precisely, we pick
\begin{align}
k = \left\lceil \frac{W_{-1}\lrb{\zeta}}{\myln{2c}} \right\rceil,
\label{eq:def_truncation_order_k}
\end{align}
where $W_{-1}$ is the lower branch of the Lambert W function, and
\begin{align}
\zeta =\mymax{\frac{c\epsilon\myln{2c}}{2gT\kappa_Q\rho\norm{Q^{-1}}\norm{\tilde{F}_2}}, -\frac{1}{e}}.
\end{align}
Using the assumption $8\rho s \norm{Q^{-1}} \norm{\tilde{F}_2} \le c\Delta$, where $c \in (0, 1/2)$ is a constant, one can prove that
\begin{align}
k=\myO{\max(\mylog{ g \kappa_Q T \Delta/(s\epsilon)}, 1)}.
\end{align}
\item Now let 
\begin{align}
\dot{\bar{y}}(t) = \bar{A} \bar{y}(t)
\label{eq:carleman_linearized_system_rescaled}
\end{align}
be the Carleman linearization of the ODE system~\eqref{eq:ode_system_with_rescaled_solution} truncated at level $k$, where $\bar{y}(t)=[\bar{y}^{[1]}(t)$, $\bar{y}^{[2]}(t)$, $\dots$, $\bar{y}^{[k]}(t)]^T$ with $\bar{y}^{[i]}(t) \in \myC^{N^i}$ 
for $i \in [k]$. We apply the algorithm from Ref.~\cite{jennings2023cost} to solve this linear ODE system with initial condition 
\begin{align}
\bar{y}(0)=[\bar{x}(0), \bar{x}(0)^{\otimes 2}, \dots, \bar{x}(0)^{\otimes k}]^T,    
\end{align}
and obtain a quantum state $\ket{\psi(T)}$ that is $\epsilon_1$-close to $\ket{\bar{y}(T)}$ in Euclidean norm, where 
\begin{align}
\epsilon_1=\frac{3\epsilon}{20\sqrt{k}}.    
\end{align}
This algorithm requires a block-encoding of $\bar{A}$ and a unitary procedure for preparing $\ket{\bar{y}(0)}$. The block-encoding of $\bar{A}$ can be constructed using queries to (controlled-) $U_1$ and $U_2$, while the preparation of $\ket{\bar{y}(0)}$ can be accomplished using queries to (controlled-)~$V_0$, supplemented by additional elementary quantum gates. Further details on the implementation of these operations are provided in Appendix~\ref{app:implemenation_algorithm_poincare_domain_case}.
\item 
Let $M=N+N^2+\dots+N^k=\frac{N^{k+1}-N}{N-1}$, so that $\bar{y}(t) \in \myC^M$. For each $i \in [k]$, define the index set 
\begin{align}
\mathcal{M}_i \defeq \lrcb{\frac{N^i-N}{N-1}+1, \frac{N^i-N}{N-1}+2, \dots
\frac{N^{i+1}-N}{N-1}}.   
\end{align}
Then the entries of $\bar{y}^{[i]}(t)$ are embedded within $\bar{y}(t)$ at the positions indexed by $\mathcal{M}_i$. Define
\begin{align}
\Pi_i \defeq \sum_{j \in \mathcal{M}_i} \ket{j}\bra{j},    
\end{align}
for each $i \in [k]$. Then $\lrcb{\Pi_1, \Pi_2, \dots, \Pi_k}$ forms a projective measurement on $\myC^M$. 
We perform this measurement -- or alternatively, $\lrcb{\Pi_1, I- \Pi_1}$ -- on $\ket{\psi(T)}$, and consider it successful if the  outcome corresponds to $\Pi_1$, in which case we output the resulting post-measurement state. We will later show that the success probability is $\myOmega{1/k}$. Thus, by repeating the  procedure $\myO{k}$ times, we can obtain the desired state with  probability $1-\myO{1}$. Alternatively, using amplitude amplification, we can achieve this success probability with only $\myO{\sqrt{k}}$ repetitions.
\end{itemize}

\noindent \textbf{Correctness proof.}
Let $g_1=\frac{\norm{x(0)}}{\min_{t\in [0,T]} \norm{x(t)}}$ and
$g_2=\frac{\max_{t\in [0, T]}\norm{x(t)}}{\norm{x(0)}}$, so that $g=g_1g_2$. Moreover, define $\bar{\rho}=c/\kappa_Q$ and let
\begin{align}
\tilde{\bar{F}}_2=Q^{-1}\bar{F}_2Q^{\otimes 2}=\frac{\rho}{\bar{\rho}}\tilde{F}_2.    
\end{align}
Note that the ODE system \eqref{eq:ode_system_with_rescaled_solution} satisfies all the conditions of Theorem~\ref{thm:error_bound_nonresonant_poincare_domain}. In particular, $(\lambda_1, \lambda_2, \dots, \lambda_N)$ is $\Delta$-nonresonant. Furthermore, its solution $\bar{x}(t)$  satisfies 
\begin{align}
g_2 \norm{\bar{x}(0)} = \max_{t \in [0, T]}\norm{\bar{x}(t)} \le \bar{\rho}=\frac{c}{\kappa_Q}<\frac{1}{2}.
\end{align}
It follows that
\begin{align}
\norm{\tilde{\bar{x}}_{\rm max}} \defeq \max_{t \in [0, T]}\norm{Q^{-1}\bar{x}(t)} \le  g_2 \norm{\bar{x}(0)} \norm{Q^{-1}}
\le \bar{\rho} \norm{Q^{-1}} < \frac{\norm{Q^{-1}}}{2}.    
\label{eq:bound_tilde_bar_x_max}
\end{align}
Then by Theorem~\ref{thm:error_bound_nonresonant_poincare_domain}, we obtain
\begin{align}
\norm{\bar{x}(t)^{\otimes i}-\bar{y}^{[i]}(t)} 
&\le k t \norm{Q}^{i} \norm{\tilde{\bar{F}}_2}
\norm{\tilde{\bar{x}}_{\rm max}}^{k+1} 
\binom{k-1}{i-1} \lrb{\frac{8 s \norm{\tilde{\bar{F}}_2}}{\Delta}}^{k-i}
\\
&\le k t \norm{Q}^{i} \norm{\tilde{\bar{F}}_2}
\cdot \lrb{\bar{\rho}\norm{Q^{-1}}}^k 
\cdot g_2\norm{\bar{x}(0)} \norm{Q^{-1}}
\cdot \binom{k-1}{i-1} \lrb{\frac{8 s \norm{\tilde{\bar{F}}_2}}{\Delta}}^{k-i} \\
&= k t g_2 \kappa_Q^i \bar{\rho}^{k}\norm{\bar{x}(0)}\norm{Q^{-1}}\norm{\tilde{\bar{F}}_2} \binom{k-1}{i-1} \lrb{\frac{8 s\norm{Q^{-1}} \norm{\tilde{\bar{F}}_2}}{\Delta}}^{k-i} \\
& = 
k t g_2 \kappa_Q \rho c^{i-1} 
  \norm{\bar{x}(0)} \norm{Q^{-1}}\norm{\tilde{F}_2} \binom{k-1}{i-1} \lrb{\frac{8 \rho s\norm{Q^{-1}} \norm{\tilde{F}_2}}{\Delta}}^{k-i}  \\
&\le 
k t g_2 \kappa_Q \rho c^{k-1}  \norm{\bar{x}(0)} \norm{Q^{-1}}\norm{\tilde{F}_2} \binom{k-1}{i-1} \\
& \le 
k t g_2 \kappa_Q \rho (2c)^{k-1}  \norm{\bar{x}(0)} \norm{Q^{-1}}\norm{\tilde{F}_2},  
\label{eq:better_error_bounds_i}
\end{align}
for all $i\in [k]$ and $t \in [0,T]$. Here the second step follows from Eq.~\eqref{eq:bound_tilde_bar_x_max}, the fourth step follows from $\bar{\rho}=c/\kappa_Q$ and $\tilde{\bar{F}}_2=\rho \tilde{F}_2 / \bar{\rho}$, and the fifth step follows from the assumption $8\rho s \norm{Q^{-1}}\norm{\tilde{F}_2}\le c\Delta$.

By our choice of $k$, we guarantee that
\begin{align}
\norm{\bar{x}(t)^{\otimes i}-\bar{y}^{[i]}(t)} \le \frac{\epsilon}{4} \cdot \frac{\norm{\bar{x}(0)}}{g_1}
\le \frac{\epsilon}{4} \cdot\norm{\bar{x}(t)},
\label{eq:dist_barx_bary}
\end{align}
for all $i \in [k]$ and $t \in [0,T]$. Then since $\norm{\bar{x}(t)}\le \bar{\rho} < 1/2$, we get
\begin{align}
    \norm{\bar{y}^{[i]}(t)} \le \norm{\bar{x}(t)}^{i}+\frac{\epsilon}{4}\cdot \norm{\bar{x}(t)} \le 
    \lrb{1+\frac{\epsilon}{4}} \norm{\bar{x}(t)}
    \label{eq:upper_bound_norm_baryi}
\end{align}
for all $i \in [k]$ and $t \in [0, T]$.

Moreover, taking $i=1$ in Eq.~\eqref{eq:dist_barx_bary} yields
\begin{align}
    \norm{\bar{x}(t)-\bar{y}^{[1]}(t)} \le \frac{\epsilon}{4} \norm{\bar{x}(t)},
    \label{eq:dist_barx_bary1}
\end{align}
which implies 
\begin{align}
\lrb{1-\frac{\epsilon}{4}} \norm{\bar{x}(t)}
\le \norm{\bar{y}^{[1]}(t)}
\le \lrb{1+\frac{\epsilon}{4}} \norm{\bar{x}(t)}.
\label{eq:lower_bound_norm_bary1}
\end{align}
Meanwhile, one can prove that if two vectors $u, v$ of the same dimension satisfy $\norm{u-v}\le \delta \norm{u}$ for some  $\delta>0$, then
\begin{align}
    \norm{\frac{u}{\norm{u}}-\frac{v}{\norm{v}}} \le 2\delta.
\end{align}
Using this fact and Eq.~\eqref{eq:dist_barx_bary1}, we know that
\begin{align}
    \norm{\frac{\bar{x}(t)}{\norm{\bar{x}(t)}} - \frac{\bar{y}^{[1]}(t)}{\norm{\bar{y}^{[1]}(t)}}} \le \frac{\epsilon}{2}.
    \label{eq:dist_normalized_xt_y1t}
\end{align}
That is, the normalized version of $\bar{y}^{[1]}(t)$ is $\frac{\epsilon}{2}$-close to that of $\bar{x}(t)$, which coincides with that of $x(t)$.

Now if we perform the projective measurement $\lrcb{\Pi_1, \Pi_2, \dots, \Pi_k}$ (or alternatively $\lrcb{\Pi_1, I-\Pi_1}$) on the state $\ket{\bar{y}(T)}$, the probability of obtaining an outcome corresponding to $\Pi_1$ is
\begin{align}
    \bra{\bar{y}(T)} \Pi_1 \ket{\bar{y}(T)} = 
    \frac{\norm{\bar{y}^{[1]}(T)}^2}{\sum_{l=1}^k\norm{\bar{y}^{[l]}(T)}^2}
    \ge \frac{(1-\epsilon/4)^2}{(1-\epsilon/4)^2+ (k-1) (1+\epsilon/4)^2}
    \ge \frac{9}{25k},
\end{align}
where the second step follows from Eqs.~\eqref{eq:upper_bound_norm_baryi} and \eqref{eq:lower_bound_norm_bary1}, and the last step follows from $0<\epsilon<1$. Thus, we have
\begin{align}
\norm{\Pi_1 \ket{\bar{y}(T)}} \ge \frac{3}{5\sqrt{k}}.  
\label{eq:lower_bound_pi1_bary}
\end{align}
Then since we prepare $\ket{\bar{y}(T)}$ within $\epsilon_1=\frac{3\epsilon}{20\sqrt{k}}$ accuracy, i.e.
\begin{align}
\norm{\ket{\psi(T)} - \ket{\bar{y}(T)}} \le \frac{3\epsilon}{20\sqrt{k}},
\end{align}
it follows that
\begin{align}
\norm{\Pi_1 \ket{\psi(T)} - \Pi_1 \ket{\bar{y}(T)}} \le    \frac{3\epsilon}{20\sqrt{k}} \le \frac{\epsilon}{4} \norm{\Pi_1 \ket{\bar{y}(T)}},
\label{eq:dist_pi1psi_pi1bary}
\end{align}
and hence
\begin{align}
\norm{\frac{\Pi_1 \ket{\psi(T)}}{\norm{\Pi_1 \ket{\psi(T)}}} - \frac{\Pi_1 \ket{\bar{y}(T)}}{\norm{\Pi_1 \ket{\bar{y}(T)}}}} \le \frac{\epsilon}{2}. 
\label{eq:dist_pi1psit_pi1yt}
\end{align}
Meanwhile, by Eq.~\eqref{eq:dist_normalized_xt_y1t}, we have
\begin{align}
\norm{
\frac{\Pi_1 \ket{\bar{y}(T)}}{\norm{\Pi_1 \ket{\bar{y}(T)}}} -  \ket{x(T)}} 
= 
\norm{ \ket{\bar{y}^{[1]}(T)}
- \ket{\bar{x}(T)}}
\le \frac{\epsilon}{2}.
\label{eq:dist_pi1yt_xt}
\end{align} 
Then it follows from Eqs.~\eqref{eq:dist_pi1psi_pi1bary} and \eqref{eq:dist_pi1psit_pi1yt} that
\begin{align}
\norm{\frac{\Pi_1 \ket{\psi(T)}}{\norm{\Pi_1 \ket{\psi(T)}}} -  \ket{x(T)}} \le \epsilon. 
\end{align}
Thus, conditioned on successful post-selection, the output state produced by our algorithm is $\epsilon$-close to the target state, as required.

Furthermore, by Eqs.~\eqref{eq:lower_bound_pi1_bary} and \eqref{eq:dist_pi1psi_pi1bary}, we have
\begin{align}
    \norm{\Pi_1 \ket{\psi(T)} } \ge \lrb{1-\frac{\epsilon}{4}} \cdot \frac{3}{5\sqrt{k}}
\end{align}
and hence
\begin{align}
    \bra{\psi(T)}\Pi_1 \ket{\psi(T)} \ge \lrb{1-\frac{\epsilon}{4}}^2 \cdot \frac{9}{25k} \ge \frac{81}{400k}.
\end{align}
That is, the success probability of post-selection in each round is $\myOmega{1/k}$. Consequently, it suffices to repeat the procedure $\myO{k}$ times to boost the overall success probability to $1-\myO{1}$. Alternatively, by employing amplitude amplification, we can reduce the number of repetitions to $\myO{\sqrt{k}}$.\\

\noindent\textbf{Complexity analysis.}
We utilize the algorithm from Ref.~\cite{jennings2023cost} to solve the Carleman linearized system~\eqref{eq:carleman_linearized_system_rescaled}. The cost of this algorithm can be characterized as follows. Consider a linear ODE system
\begin{align}
    \dot{z}(t)  = B z(t).
\end{align}
Let $U_B$ be an $(\omega, a, 0)$-block-encoding of the matrix $B$, and let $U_0$ be a unitary that prepares the normalized initial state $\ket{z(0)}$. Then the algorithm of Ref.~\cite{jennings2023cost} prepares a quantum state that is $\epsilon$-close to $\ket{z(T)}$ using
$$\mytO{\omega C_{\rm max}\hat{g}T\mylog{1/\epsilon}}$$
queries to (controlled-) $U_B$, $U_0$, and their inverses, where 
\begin{align}
C_{\rm max} = \max_{t \in [0,T]} \norm{e^{Bt}},~~~\hat{g}=\frac{\max_{t \in [0,T]} \norm{z(t)}}{\norm{z(T)}},
\end{align}
and additional
$$\mytO{\omega C_{\rm max}\hat{g}T\mylog{1/\epsilon}\cdot \mypoly{\mylog{N}}}$$
elementary quantum gates.

In our case, $B$ is the order-$k$ Carleman matrix $\bar{A}$ associated with the rescaled ODE system \eqref{eq:ode_system_with_rescaled_solution}. That is,
\begin{align}
    B = \bar{A} = \begin{pmatrix}
    \bar{A}_{1,1} & \bar{A}_{1,2} & & & &\\
     & \bar{A}_{2,2} & \bar{A}_{2,3} & & &\\
            &  & \bar{A}_{3,3} & \bar{A}_{3,4} &       & \\
            &         & \ddots   & \ddots   & \ddots & \\
            &         &         &  & \bar{A}_{k-1,k-1}  & \bar{A}_{k-1,k}\\      
            &         &         &         &   & \bar{A}_{k,k}
    \end{pmatrix}
\end{align}
where
\begin{align}
\bar{A}_{j,j}&=\sum_{l=0}^{j-1} I^{\otimes l} \otimes \bar{F}_1 \otimes I^{\otimes j-l-1} \in \myC^{N^j\times N^j},
\label{eq:carleman_linearization_def_ajj_bar}\\ 
\bar{A}_{j,j+1}&=\sum_{l=0}^{j-1} I^{\otimes l} \otimes \bar{F}_2 \otimes I^{\otimes j-l-1} \in \myC^{N^{j}\times N^{j+1}}.
\end{align}
Meanwhile, $z(t)$ is given by $\bar{y}(t)=[\bar{y}^{[1]}(t)$, $\bar{y}^{[2]}(t)$, $\dots$, $\bar{y}^{[k]}(t)]^T$, with initial condition $z(0)=\bar{y}(0)=[\bar{x}(0), \bar{x}(0)^{\otimes 2}, \dots, \bar{x}(0)^{\otimes k}]^T$. 

In Appendix~\ref{app:implemenation_algorithm_poincare_domain_case}, we show that one can construct a
$(k\alpha_1+(k-1)\alpha_2', q+a+4, 0)$-block-encoding of $\bar{A}$ using $\myO{k}$ queries to controlled-$U_1$ and controlled-$U_2$, along with $\mytO{k^2 \mylog{N}}$ additional elementary gates, where $\alpha_2'=\rho \kappa_Q\alpha_2/c$, $a=\mymax{a_1,a_2}$, and $q=\myO{\mylog{k}}$. Moreover, we also demonstrate a unitary procedure that prepares a state proportional to $\bar{y}(0)$ 
using $k$ queries to controlled-$V_0$ and $\mytO{k}$ extra elementary gates.

In addition, using Eq.~\eqref{eq:dist_barx_bary} and the fact that
\begin{align}
\frac{\norm{\bar{x}(0)}}{g_1} \le \norm{\bar{x}(t)} \le g_2\norm{\bar{x}(0)} < \frac{1}{2}, \qquad \forall t \in [0, T], 
\end{align}
we obtain
\begin{align}
\hat{g} &=\frac{\max_{t \in [0,T]} \norm{\bar{y}(t)}}{\norm{\bar{y}(T)}}  \\
& 
\le \frac{\max_{t \in [0,T]} \sqrt{\sum_{i=1}^k \norm{\bar{y}^{[i]}(t)}^2}}{\norm{\bar{y}^{[1]}(T)}} \\
&\le \frac{\sqrt{k}g_2\norm{x(0)} (1+\epsilon/4)}{g_1^{-1}\norm{x(0)}(1-\epsilon/4)} \\
&= \sqrt{k}g_1g_2 \cdot \frac{1+\epsilon/4}{1-\epsilon/4} \\
&= \myO{\sqrt{k} g}.    
\end{align}

Finally, we prove:
\begin{lemma}
\begin{align}
\norm{e^{\bar{A}t}}\le \frac{2^k \kappa_Q^{k}}{\sqrt{3}},\quad \forall~t \in \R.     
\label{eq:bound_exp_At}
\end{align}    
\label{lem:bound_exp_At}
\end{lemma}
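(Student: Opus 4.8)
The plan is to use the explicit diagonalization $\bar{A}=\bar{V}D\bar{V}^{-1}$ established in Section~\ref{subsec:carleman_matrix_diagonalization}, and then bound $\norm{e^{\bar{A}t}}$ by $\norm{\bar{V}}\norm{e^{Dt}}\norm{\bar{V}^{-1}}$. Since $D=\diag{\Lambda_1,\dots,\Lambda_k}$ has diagonal entries with non-positive real parts (because $\realpart{\lambda_i}\le 0$ for all $i$), we immediately get $\norm{e^{Dt}}\le 1$ for all $t\in\R$. So the whole estimate reduces to bounding $\norm{\bar{V}}$ and $\norm{\bar{V}^{-1}}$. Here $\bar{V}$ is the transformation diagonalizing $\bar{A}$, related to $\tilde{V}$ (diagonalizing $\tilde{A}$, the Carleman matrix of the system written in the eigenbasis of $F_1$) by $\bar{V}_{i,j}=Q^{\otimes i}\tilde{V}_{i,j}$ and $(\bar{V}^{-1})_{i,j}=(\tilde{V}^{-1})_{i,j}(Q^{-1})^{\otimes j}$.

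First I would bound the blocks. By Lemma~\ref{lem:vij_norm_bound_poincare_domain}, since $\tilde{\bar{F}}_2 = Q^{-1}\bar{F}_2 Q^{\otimes 2}$ is $s$-column-sparse (as a rescaling of $\tilde{F}_2$), we have $\norm{\tilde{V}_{i,j}} \le \binom{j-1}{i-1}\lrb{4s\norm{\tilde{\bar F}_2}/\Delta}^{j-i}$, and likewise by Lemma~\ref{lem:wij_norm_bound_poincare_domain} for $\norm{(\tilde{V}^{-1})_{i,j}}$. Recalling $\tilde{\bar F}_2 = \rho\kappa_Q\tilde{F}_2/c$ and the hypothesis $8\rho s\norm{Q^{-1}}\norm{\tilde F_2}\le c\Delta$ — note $\norm{\tilde{\bar F}_2}\le \norm{Q^{-1}}\cdot\rho\kappa_Q\norm{\tilde F_2}/(c\norm{Q^{-1}})$... more carefully, one uses that $s\norm{\tilde{\bar F}_2}/\Delta$ is controlled: $4s\norm{\tilde{\bar F}_2}/\Delta = 4s\rho\kappa_Q\norm{\tilde F_2}/(c\Delta)\le \kappa_Q/(2\norm{Q^{-1}})$ after applying the hypothesis (this is where the constant $c<1/2$ and the factor $8$ versus $4$ do the work). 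Then $\norm{\bar{V}_{i,j}}\le\norm{Q}^i\binom{j-1}{i-1}\lrb{\kappa_Q/(2\norm{Q^{-1}})}^{j-i}$, and summing over $i\le j$ via block-norm submultiplicativity of the upper-triangular structure gives an overall bound on $\norm{\bar V}$ of the shape $(\text{something})^k$. I would assemble $\norm{\bar{V}}\norm{\bar{V}^{-1}}$ and verify that all the $\norm{Q}$, $\norm{Q^{-1}}$ powers and binomial sums collapse to $2^k\kappa_Q^k/\sqrt3$ — the binomial sums $\sum_i\binom{j-1}{i-1}x^{j-i}=(1+x)^{j-1}$ being the source of the factor of $2$, and the residual $1/\sqrt 3$ presumably coming from a slightly more careful Frobenius-type or $\ell^2$-over-blocks estimate rather than a crude triangle inequality.

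The main obstacle I anticipate is bookkeeping the block-matrix norm: $\norm{\bar V}$ is not simply $\sum_{i\le j}\norm{\bar V_{i,j}}$ if one wants the tight constant $1/\sqrt3$, so I would either (a) use the operator-norm bound $\norm{\bar V}\le \max_j\sum_i\norm{\bar V_{i,j}}$ combined with $\norm{\bar V^{-1}}\le\max_i\sum_j\norm{(\bar V^{-1})_{i,j}}$ (Schur-test style), being careful that with $x=\kappa_Q/(2\norm{Q^{-1}})$ the geometric sums are bounded uniformly, or (b) exploit that $\bar V = Q^{\otimes\bullet}\tilde V$ with $\tilde V$ unipotent upper-triangular and bound $\norm{\tilde V}$, $\norm{\tilde V^{-1}}$ directly before conjugating by the block-diagonal $\oplus_i Q^{\otimes i}$, which has norm $\max(1,\norm{Q})^k$ and inverse-norm $\max(1,\norm{Q^{-1}})^k$. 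Route (b) seems cleanest: it separates the spectral contribution (diagonalizable, norm-$1$ exponential) from the non-normality of $F_1$ (the $\kappa_Q^k$ factor) and the combinatorial $2^k$ from the tree-sums. I would then do a final arithmetic check that the product of all these factors, together with the bound on $s\norm{\tilde{\bar F}_2}/\Delta$ coming from the hypothesis, yields exactly $2^k\kappa_Q^k/\sqrt3$, valid for all $t\in\R$ since only $\norm{e^{Dt}}\le1$ was used and that holds for all real $t$.
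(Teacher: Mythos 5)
Your overall skeleton (diagonalize $\bar{A}$, use $\norm{e^{Dt}}\le 1$, invoke Lemmas~\ref{lem:vij_norm_bound_poincare_domain} and \ref{lem:wij_norm_bound_poincare_domain}, and use the hypothesis $8\rho s\norm{Q^{-1}}\norm{\tilde{F}_2}\le c\Delta$ to tame $s\norm{\tilde{\bar{F}}_2}/\Delta$) matches the paper, and you correctly identify the binomial sums as the source of the $2^k$. However, the assembly step is where your proposal has a genuine gap: bounding $\norm{e^{\bar{A}t}}\le\norm{\bar{V}}\,\norm{e^{Dt}}\,\norm{\bar{V}^{-1}}$ with \emph{separate} bounds on $\norm{\bar{V}}$ and $\norm{\bar{V}^{-1}}$ cannot produce $\kappa_Q^k$. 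The reason is an exponent-pairing issue. The block bounds carry $\norm{\bar{V}_{i,j}}\le\norm{Q}^i\binom{j-1}{i-1}\lrb{\sigma\norm{Q}/2}^{j-i}$ and $\norm{(\bar{V}^{-1})_{j,l}}\le\norm{Q^{-1}}^l\binom{l-1}{j-1}\lrb{\sigma\norm{Q}/2}^{l-j}$ (note that $\tilde{\bar{F}}_2=\rho\kappa_Q\tilde{F}_2/c$, so even the ``tilde'' blocks contain powers of $\norm{Q}$). The bound $\kappa_Q^l=\norm{Q}^l\norm{Q^{-1}}^l$ emerges only when, for each fixed output block $i$ and input block $l$, the factor $\norm{Q}^{l-i}$ accumulated across the chain $i\to j\to l$ recombines with $\norm{Q}^i$ from the left and $\norm{Q^{-1}}^l$ from the right. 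If you bound the two matrices separately, the $\norm{Q}$-powers from $\bar{V}$ and the $\norm{Q^{-1}}$-powers from $\bar{V}^{-1}$ are maximized over independent indices and do not telescope; in the worst case (e.g.\ $\norm{Q}<1$ or $\norm{Q}\gg 1$) you get factors like $\norm{Q}^{3k}\norm{Q^{-1}}^{k}$ rather than $\kappa_Q^k$. Your route (a) has the additional defect that $\norm{M}\le\max_j\sum_i\norm{M_{i,j}}$ is simply false for the spectral norm (a $1\times n$ row of ones already violates it); the Schur test needs the geometric mean of the max row-sum and max column-sum. Your route (b) conjugates by $W=\diag{Q,Q^{\otimes 2},\dots,Q^{\otimes k}}$ and pays $\max(1,\norm{Q})^k\max(1,\norm{Q^{-1}})^k$, which exceeds $\kappa_Q^k$ unless both norms are at least one, and still leaves the $\norm{Q}$-powers inside $\norm{\tilde{\bar{V}}}\norm{\tilde{\bar{V}}^{-1}}$ unpaired.

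The paper's proof avoids all of this by never bounding $\norm{\bar{V}}$ or $\norm{\bar{V}^{-1}}$ as whole matrices. It applies $e^{\bar{A}t}=W\tilde{\bar{V}}e^{Dt}\tilde{\bar{V}}^{-1}W^{-1}$ to an arbitrary block vector $x=[x^{[1]},\dots,x^{[k]}]^T$, writes the $i$-th output block as $y^{[i]}=Q^{\otimes i}\sum_{j=i}^k\sum_{l=j}^k\tilde{\bar{V}}_{i,j}e^{\Lambda_j t}(\tilde{\bar{V}}^{-1})_{j,l}(Q^{-1})^{\otimes l}x^{[l]}$, and first sums over the intermediate index $j$ using the Vandermonde-type identity
\begin{align}
\sum_{j=i}^{l}\binom{j-1}{i-1}\binom{l-1}{j-1}\lrb{\frac{4s\norm{\tilde{\bar{F}}_2}}{\Delta}}^{l-i}=\binom{l-1}{i-1}\lrb{\frac{8s\norm{\tilde{\bar{F}}_2}}{\Delta}}^{l-i}=\sigma^{l-i}\norm{Q}^{l-i}\binom{l-1}{i-1},
\end{align}
with $\sigma=8\rho s\norm{Q^{-1}}\norm{\tilde{F}_2}/(c\Delta)\in[0,1]$. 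Now the exponents pair per $(i,l)$: $\norm{Q}^i\cdot\norm{Q}^{l-i}\cdot\norm{Q^{-1}}^l=\kappa_Q^l$. A Cauchy--Schwarz step over $l$ then gives $\norm{e^{\bar{A}t}x}^2\le\lrsb{\sum_{l=1}^k\kappa_Q^{2l}(1+\sigma)^{2(l-1)}}\norm{x}^2\le\lrsb{\sum_{l=1}^k\kappa_Q^{2l}4^{l-1}}\norm{x}^2$, and the $1/\sqrt{3}$ is just the geometric series $\sum_{l=1}^k(4\kappa_Q^2)^{l}\cdot\frac14\le\frac{(4\kappa_Q^2)^k}{3}$ using $4\kappa_Q^2-1\ge 3\kappa_Q^2$ --- not a Frobenius estimate. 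To repair your proposal you would need to reorganize it along these lines, i.e.\ keep the three factors coupled through the shared block indices rather than splitting $\norm{\bar{V}}$ and $\norm{\bar{V}^{-1}}$.
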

\begin{proof}
    Let $\tilde{\bar{A}}=W^{-1}\bar{A}W$, 
where
\begin{align}
W = \diag{Q, Q^{\otimes 2}, \dots, Q^{\otimes k}}.
\end{align}
Then we have
\begin{align}
    \tilde{\bar{A}}=
    \begin{pmatrix}
    \tilde{\bar{A}}_{1,1} & \tilde{\bar{A}}_{1,2} & & & &\\
     & \tilde{\bar{A}}_{2,2} & \tilde{\bar{A}}_{2,3} & & &\\
            &  & \tilde{\bar{A}}_{3,3} & \tilde{\bar{A}}_{3,4} &       & \\
            &         & \ddots   & \ddots   & \ddots & \\
            &         &         &  & \tilde{\bar{A}}_{k-1,k-1}  & \tilde{\bar{A}}_{k-1,k}\\      
            &         &         &         &   & \tilde{\bar{A}}_{k,k}
    \end{pmatrix},
\end{align}
where
\begin{align}
\tilde{\bar{A}}_{j,j}&=\sum_{l=0}^{j-1} I^{\otimes l} \otimes \Lambda \otimes I^{\otimes j-l-1} \in \myC^{N^j\times N^j}, \\    
\tilde{\bar{A}}_{j,j+1}&=\sum_{l=0}^{j-1} I^{\otimes l} \otimes \tilde{\bar{F}}_2 \otimes I^{\otimes j-l-1} \in \myC^{N^{j}\times N^{j+1}}.
\end{align}
Using the method described in Section~\ref{subsec:carleman_matrix_diagonalization}, we can diagonalize $\tilde{\bar{A}}$ as 
\begin{align}
    \tilde{\bar{A}}=\tilde{\bar{V}} D \tilde{\bar{V}}^{-1},
\end{align}
where
\begin{align}
D=\begin{pmatrix}
        \Lambda_1  &           &        & \\
                 & \Lambda_2 &        & \\
                 &           & \ddots & \\
                 &           &      &\Lambda_k \end{pmatrix},~~~
        \tilde{\bar{V}}=\begin{pmatrix}
        \tilde{\bar{V}}_{1,1}  & \tilde{\bar{V}}_{1,2}   &  \tilde{\bar{V}}_{1,3} & \dots  & \tilde{\bar{V}}_{1,k} \\
                 & \tilde{\bar{V}}_{2,2}   &  \tilde{\bar{V}}_{2,3} & \dots & \tilde{\bar{V}}_{2,k} \\
                 &           & \ddots &   \ddots       & \vdots \\
                 &           &          & \tilde{\bar{V}}_{k-1,k-1} & \tilde{\bar{V}}_{k-1, k} \\
                &           &          &  & \tilde{\bar{V}}_{k, k} \\\end{pmatrix},     
\end{align}
with
\begin{align}
\Lambda_j=\diag{\sum_{l=1}^j \lambda_{i_l}:~i_1, i_2,\dots,i_j\in [N]}, ~~~\forall j \in [k],    
\end{align}
and $\tilde{\bar{V}}_{i,j} \in \myC^{N^i \times N^j}$ for $1 \le i \le j \le k$ (in particular, 
$\tilde{\bar{V}}_{j,j}=I^{\otimes j}$ for $j \in [k]$). Note that $\tilde{\bar{V}}$ is upper-triangular, and Lemma~\ref{lem:vij_norm_bound_poincare_domain} gives an upper bound on the norm of each block of this matrix:
\begin{align}
    \norm{\tilde{\bar{V}}_{i,j}} 
    \le \binom{j-1}{i-1} \lrb{\frac{4s\norm{\tilde{\bar{F}}_2}}{\Delta}}^{j-i},~&~\forall~i \le j.
    \label{eq:bound_on_vij_norm}
\end{align}
Furthermore, the inverse of $\tilde{\bar{V}}$ is also upper-triangular, i.e.
\begin{align}
 \tilde{\bar{V}}^{-1}=\begin{pmatrix}
        (\tilde{\bar{V}}^{-1})_{1,1}  & (\tilde{\bar{V}}^{-1})_{1,2}   &  (\tilde{\bar{V}}^{-1})_{1,3} & \dots  & (\tilde{\bar{V}}^{-1})_{1,k} \\
                 & (\tilde{\bar{V}}^{-1})_{2,2}   &  (\tilde{\bar{V}}^{-1})_{2,3} & \dots & (\tilde{\bar{V}}^{-1})_{2,k} \\
                 &           & \ddots &   \ddots       & \vdots \\
                 &           &          & (\tilde{\bar{V}}^{-1})_{k-1,k-1} & (\tilde{\bar{V}}^{-1})_{k-1, k} \\
                &           &          &  & (\tilde{\bar{V}}^{-1})_{k, k} \\\end{pmatrix},   
\end{align}
where $(\tilde{\bar{V}}^{-1})_{i,j} \in \myC^{N^i \times N^j}$ for $1 \le i \le j \le k$ 
(in particular, $(\tilde{\bar{V}}^{-1})_{j,j}=I^{\otimes j}$ for $j \in [k]$),  and Lemma~\ref{lem:wij_norm_bound_poincare_domain} provides an upper bound on the norm of each block of this matrix:
\begin{align}
\norm{\lrb{\tilde{\bar{V}}^{-1}}_{a,b}} \le \binom{b-1}{a-1}\lrb{\frac{4s\norm{\tilde{\bar{F}}_2}}{\Delta}}^{b-a},~&~\forall a\le b.
\label{eq:bound_on_wab_norm}
\end{align}

Now we bound the norm of $e^{\bar{A}t}$ as follows. Note that
\begin{align}
e^{\bar{A}t}=W e^{\tilde{\bar{A}}t} W^{-1}  
=W \tilde{\bar{V}} e^{Dt} \tilde{\bar{V}}^{-1}  W^{-1},
\end{align}
where $D$ is diagonal, $W$ and $W^{-1}$ are block-diagonal, $\tilde{\bar{V}}$ and $\tilde{\bar{V}}^{-1}$ are upper-triangular.

Let $x=[x^{[1]}, x^{[2]}, \dots, x^{[k]}]^T \in \myC^M$, where $x^{[j]} \in \myC^{N^j}$, be arbitrary. We will prove an upper bound on the norm of $e^{\bar{A}t} x$. Using the block structures of $D$, $W$, $W^{-1}$, $\tilde{\bar{V}}$ and $\tilde{\bar{V}}^{-1}$, we get
\begin{align}
    e^{\bar{A}t} x = [y^{[1]}, y^{[2]}, \dots, y^{[k]}]^T,
\end{align}
where
\begin{align}
    y^{[i]}=Q^{\otimes i} \sum_{j=i}^{k} \sum_{l=j}^k \tilde{\bar{V}}_{i,j} e^{\Lambda_j}t \tilde{\bar{V}}_{j,l} (Q^{-1})^{\otimes l} x^{[l]},~~~\forall i \in [k].
\end{align}
Since $\Lambda_j$ is diagonal, and
its diagonal entries have non-positive real parts, we have $\norm{e^{\Lambda_j t}} \le 1$ for all $t \in \R$. Consequently, we get
\begin{align}
    \norm{y^{[i]}}\le \norm{Q}^i \sum_{j=i}^{k} \sum_{l=j}^k \norm{\tilde{\bar{V}}_{i,j}}  \norm{\tilde{\bar{V}}_{j,l}} \norm{Q^{-1}}^l \norm{x^{[l]}},~~~\forall i \in [k].
\end{align}
It follows that
\begin{align}
\norm{e^{\bar{A}t} x}^2 
&=\sum_{i=1}^k \norm{y^{[i]}}^2 \\
&\le 
\sum_{i=1}^k 
\norm{Q}^{2i} \lrb{\sum_{j=i}^{k} \sum_{l=j}^k \norm{\tilde{\bar{V}}_{i,j}}  \norm{\tilde{\bar{V}}_{j,l}} \norm{Q^{-1}}^l \norm{x^{[l]}}}^2 \\
&\le 
\sum_{i=1}^k 
\norm{Q}^{2i} \lrsb{\sum_{l=i}^k \norm{Q^{-1}}^{2l}\lrb{\sum_{j=i}^{l}   \norm{\tilde{\bar{V}}_{i,j}}  \norm{\tilde{\bar{V}}_{j,l}} }^2}
\lrb{\sum_{l=i}^k \norm{x^{[l]}}^2} \\
&\le 
\lrcb{\sum_{i=1}^k 
\norm{Q}^{2i} \lrsb{\sum_{l=i}^k \norm{Q^{-1}}^{2l}\lrb{\sum_{j=i}^{l}   \norm{\tilde{\bar{V}}_{i,j}}  \norm{\tilde{\bar{V}}_{j,l}} }^2}} \norm{x}^2,
\label{eq:bound_exp_at_x}
\end{align}
where we use the triangle inequality in 
the third step. Then using Eqs.~\eqref{eq:bound_on_vij_norm},  \eqref{eq:bound_on_wab_norm} and $\tilde{\bar{F}}_2=\rho \kappa_Q \tilde{F}_2/c$, we obtain that for any $i \le l$,
\begin{align}
    \sum_{j=i}^{l} \norm{\tilde{\bar{V}}_{i,j}}  \norm{\tilde{\bar{V}}_{j,l}}
    &\le \sum_{j=i}^{l} \binom{j-1}{i-1} \binom{l-1}{j-1} \lrb{\frac{4s \norm{\tilde{\bar{F}}_2}}{\Delta}}^{l-i} \\
    &=\binom{l-1}{i-1}\lrb{\frac{8s \norm{\tilde{\bar{F}}_2}}{\Delta}}^{l-i} \\
    &=
    \sigma^{l-i} \norm{Q}^{l-i}  \binom{l-1}{i-1},
    \label{eq:bound_sum_vij_vjl}
\end{align}
where
\begin{align}
    \sigma = \frac{8\rho  s \norm{Q^{-1}} \norm{\tilde{F}_2}}{c\Delta} \in [0, 1].
\end{align}
Then it follows from Eqs.~\eqref{eq:bound_exp_at_x} and \eqref{eq:bound_sum_vij_vjl} that
\begin{align}
    \norm{e^{\bar{A}t} x}^2 
    &\le \lrcb{\sum_{i=1}^k \norm{Q}^{2i} \lrsb{\sum_{l=i}^k \norm{Q^{-1}}^{2l} \sigma^{2(l-i)} \norm{Q}^{2(l-i)} \binom{l-1}{i-1}^2}} \cdot \norm{x}^2 \\
    &= \lrcb{\sum_{l=1}^k \kappa_{Q}^{2l} \lrsb{\sum_{i=1}^l \sigma^{2(l-i)} \binom{l-1}{i-1}^2}} \cdot \norm{x}^2 \\
    & \le 
    \lrcb{\sum_{l=1}^k \kappa_{Q}^{2l} \lrsb{\sum_{i=1}^l \sigma^{l-i} \binom{l-1}{i-1}}^2} \cdot \norm{x}^2 \\
    &=
     \lrsb{\sum_{l=1}^k \kappa_{Q}^{2l} \lrb{1+\sigma}^{2(l-1)}} \cdot \norm{x}^2.
\end{align}
Now using $\sigma \in [0, 1]$, we obtain
\begin{align}
    \norm{e^{\bar{A}t} x}^2 
    \le
     \lrb{\sum_{l=1}^k \kappa_{Q}^{2l} 4^{l-1}} \cdot \norm{x}^2
    = \frac{\kappa_Q^2\lrsb{(4\kappa_Q^2)^k-1}}{4\kappa_Q^2-1}
    \cdot \norm{x}^2.
\end{align}
This inequality holds for all $x \in \myC^M$ and $t \in \R$. Therefore, we have 
\begin{align}
\norm{e^{\bar{A}t}} \le  \kappa_Q \sqrt{\frac{(4\kappa_Q^2)^k-1}{4\kappa_Q^2-1}}
\le \kappa_Q \sqrt{\frac{(4\kappa_Q^2)^k}{3\kappa_Q^2}}
=\frac{2^k \kappa_Q^{k}}{\sqrt{3}}, 
\end{align}
for all $t \in \R$, as claimed. 
\end{proof}

Now we combine the above results and conclude that each run of the algorithm from Ref.~\cite{jennings2023cost} costs $$\mytO{\alpha g T k^2\sqrt{k} 2^k \kappa_Q^k \mylog{k/\epsilon}}
$$ 
queries to 
controlled-$U_1$,
controlled-$U_2$,
controlled-$V_0$ and their inverses,
where $\alpha=\alpha_1+{\alpha_2\rho \kappa_Q}$, 
along with
$$\mytO{\alpha g T k^3 \sqrt{k} 2^k \kappa_Q^k \mylog{k/\epsilon} \cdot \mypoly{\mylog{N}}}$$ 
additional elementary gates. Using amplitude amplification,  this procedure only needs to be repeated $\myO{\sqrt{k}}$ times. Note that all $\mypoly{k}$ factors are negligible compared to the dominate cost factor $2^k \kappa_Q^k$. Thus, our algorithm requires a total of
$$\mytO{\alpha g T  2^k \kappa_Q^k\mylog{1/\epsilon}}$$
queries to 
controlled-$U_1$,
controlled-$U_2$, 
controlled-$V_0$ and their inverses,
and
$$\mytO{\alpha g T 2^k \kappa_Q^k \mylog{1/\epsilon} \cdot \mypoly{\mylog{N}}}$$
extra elementary gates. 

This completes the proof of the theorem.

\subsection{Implementation details}
\label{app:implemenation_algorithm_poincare_domain_case}

The algorithm in Theorem~\ref{thm:algorithm_complexity_poincare_domain_case} requires (i) a block-encoding of the Carleman matrix $\bar{A}$ and (ii) a unitary that prepares the normalized initial state $\ket{\bar{y}(0)}$. In this appendix, we describe  how both components can be implemented using the provided unitaries $U_1$, $U_2$ and $V_0$, together with elementary quantum gates.

For simplicity of presentation, we assume that $N=2^n$ and $k=2^q$ are powers of two. In cases where they are not, the system can be padded and embedded into a larger system of appropriate size, which does not alter the overall complexity of the algorithm. Each $N$-dimensional system is encoded with $n$ qubits, and each $k$-dimensional system is encoded with $q$ qubits. We will use $I_m$ to denote the identity operator acting on $m$ qubits for any $m \ge 1$.

Let $r=q+kn$. To facilitate the implementation of our algorithm, we map the $M$-dimensional matrix
$\bar{A}$ to an $r$-qubit linear operator
\begin{align}
    \mathcal{A} \defeq \sum_{j=0}^{k-1} \ket{j}\bra{j} \otimes \mathcal{A}_{j+1,j+1} +
    \sum_{j=0}^{k-2} \ket{j}\bra{j+1} \otimes \mathcal{A}_{j+1,j+2},
\end{align}
where
\begin{align}
\mathcal{A}_{j,j} &\defeq \sum_{l=0}^{j-1} I_n^{\otimes l} \otimes \bar{F}_1 \otimes I_n^{\otimes (k-l-1)},
\label{eq:carleman_linearization_def_ajj_cal}\\ 
\mathcal{A}_{j,j+1}&\defeq \sum_{l=0}^{j-1} I_n^{\otimes l} \otimes \bar{F}_2 \otimes I_n^{\otimes (j-l-1)} \otimes \ket{0^n} \otimes I_n^{\otimes (k-j-1)}
\end{align}
are both linear operators acting on $kn$ qubits. Accordingly, we map the $M$-dimensional vector $\bar{y}(t)$ to an unnormalized $r$-qubit state 
\begin{align}
    \ket{\tilde{\mathcal{Y}}(t)} \defeq \sum_{j=0}^{k-1} \norm{\bar{y}^{[j+1]}(t)}\ket{j} \otimes \ket{\bar{y}^{[j+1]}(t)} \otimes \ket{0^n}^{\otimes (k-j-1)}.
\end{align}
The interaction between $\mathcal{A}$ and $\ket{\tilde{\mathcal{Y}}(t)}$ is identical to that between $\bar{A}$ and $\bar{y}(t)$. Thus, this mapping preserves the essential behavior of the algorithm. \\

\noindent\textbf{Block-encoding of the Carleman matrix}. Recall that $U_1$ is an $(\alpha_1, a_1, 0)$-block-encoding of $F_1$, i.e.
\begin{align}
\lrb{\bra{0^{a_1}} \otimes I_n}
U_1
\lrb{\ket{0^{a_1}} \otimes I_n}
=\frac{F_1}{\alpha_1}.
\end{align}
Then since $\bar{F}_1=F_1$, $U_1$ is also an $(\alpha_1, a_1, 0)$-block-encoding of $\bar{F}_1$. Meanwhile, 
$U_2$ is an $(\alpha_2, a_2, 0)$-block-encoding of $F_2$, i.e.
\begin{align}
\lrb{\bra{0^{a_2}} \otimes I_n \otimes I_n}
U_2
\lrb{\ket{0^{a_2}} \otimes I_n \otimes I_n}
=\frac{F_2 \otimes \ket{0^n}}{\alpha_2}.
\end{align}  
Note that $F_2 \otimes \ket{0^n}$ is a linear operator acting on $2n$ qubits. Then by $\bar{F}_2=\rho \kappa_Q F_2 / c$, we know that $U_2$ is an $(\alpha_2', a_2, 0)$-block-encoding of $\bar{F}_2$,
where 
$\alpha_2'=\rho\kappa_Q\alpha_2/c$.

Now we use (controlled-) $U_1$, $U_2$ and additional quantum gates to implement a block-encoding of $\mathcal{A}$. To this end, we decompose $\mathcal{A}$ as the sum of $2k-1$ terms:
\begin{align}
    \mathcal{A} = \sum_{l=0}^{k-1} \mathcal{B}_l + 
    \sum_{l=0}^{k-2} \mathcal{C}_l, 
\end{align}
where
\begin{align}
    \mathcal{B}_l &= \sum_{j=l}^{k-1} \ket{j}\bra{j} \otimes I_n^{\otimes l} \otimes \bar{F}_1 \otimes I_n^{\otimes (k-l-1)}, \\
    \mathcal{C}_l &=\sum_{j=l}^{k-2} \ket{j}\bra{j+1}\otimes I_n^{\otimes l} \otimes \bar{F}_2 \otimes I_n^{\otimes (j-l)} \otimes \ket{0^n} \otimes I_n^{\otimes (k-j-2)}.
\end{align}
Note that $\mathcal{B}_l$ applies $\bar{F}_1$ to the $(l+1)$-th level of the Carleman system, and 
$\mathcal{C}_l$ applies $\bar{F}_2$ to the $(l+1)$-th and $(l+2)$-th levels of the Carleman system.

Next, for each $l \in \lrcb{0,1,\dots,k-1}$, 
we construct a block-encoding of $\mathcal{B}_l$ using controlled-$U_1$ and the following operations. For any $m \in \lrcb{0,1,\dots,k-1}$, let $\mathcal{R}_m$ be an $(q+1)$-qubit unitary operator such that
\begin{align}
    \mathcal{R}_m \ket{j}\ket{0} =
    \begin{cases}
    \ket{j}\ket{0}, ~&~\mathrm{if}~0\le j < m;\\        
    \ket{j}\ket{1}, ~&~\mathrm{if}~m\le j\le k-1.
    \end{cases}
\label{eq:def_rm}
\end{align}
Note that $\mathcal{R}_m$ can be implemented using $\myO{\mylog{k}}$ elementary gates, with the help of one ancilla qubits.

Consider the following quantum circuit on $r +a_1+1$ qubits, partitioned into $k+3$ registers containing $q$, $1$, $a_1$, $n$, $n$, $\dots$, $n$ qubits, respectively. First, we apply $\mathcal{R}_l$ to the first and second registers. Then, we apply controlled-$U_1$ to the second, third and $(l+4)$-th registers, using the second register as the control qubit. Finally, we apply an $X$ gate to the second register. That is, this circuit implements the $(r+a_1+1)$-qubit unitary operator
\begin{align}
\mathcal{U}_l = X^{(2)}\text{c-}U_1^{(2, 3,\,l+4)}\mathcal{R}_l^{(1,2)},
\end{align}
where the superscripts indicate the subsystems on which each operator acts. One can verify that
\begin{align}
    \lrb{I_q \otimes \bra{0} \otimes \bra{0^{a_1}} \otimes I_n^{\otimes k}}\mathcal{U}_l
    \lrb{I_q \otimes \ket{0} \otimes \ket{0^{a_1}} \otimes I_n^{\otimes k}}    
    =\frac{\mathcal{B}_l}{\alpha_1}.    
\end{align}
Thus, $\mathcal{U}_l$ is an $(\alpha_1, a_1+1, 0)$-block-encoding of $\mathcal{B}_l$ -- or  
an $(\alpha_1, a_1+2, 0)$-block-encoding of $\mathcal{B}_l$ if we account for the ancilla qubit used for implementing $\mathcal{R}_l$. The above implementation of $\mathcal{U}_l$ requires one query to controlled-$U_1$ and $\myO{\mylog{k}}$ additional elementary gates.

Similarly, for each $l \in \lrcb{0,1,\dots,k-2}$, we can construct a block-encoding of $\mathcal{C}_l$ using controlled-$U_2$ and the following operations. First, let $\mathcal{R}_m$ be defined as in Eq.~\eqref{eq:def_rm} for $m \in \lrcb{0,1,\dots,k-1}$. Second, let $\mathrm{SWAP}_n$ denote the unitary that swaps two $n$-qubit systems, i.e.
\begin{align}
\mathrm{SWAP}_n \ket{i_1,i_2,\dots,i_n}\ket{j_1,j_2,\dots,j_n} = \ket{j_1,j_2,\dots,j_n}\ket{i_1,i_2,\dots,i_n},
\end{align}
for all $i_1,i_2,\dots,i_n, j_1,j_2,\dots,j_n\in \lrcb{0,1}$. Clearly, $\mathrm{SWAP}_n$ can be implemented with $n$ individual $\mathrm{SWAP}$ gates. Finally, let 
\begin{align}
X_k=\sum_{j=0}^{k-1} \ket{(j-1)~\mathrm{mod}~k}\bra{j}    
\end{align}
be the $k$-dimensional cyclic shift operator, which can be implemented using $\myO{\mypoly{\mylog{k}}}$ elementary gates.

Consider the following quantum circuit on $r +a_2+2$ qubits, partitioned into $k+4$ registers containing $q$, $1$, $1$, $a_2$, $n$, $n$, $\dots$, $n$ qubits, respectively. The circuit proceeds as follows:
\begin{enumerate}
    \item Apply $\mathcal{R}_{l+1}$ to the first and second register.
    \item Apply controlled-$U_2$ to the second, fourth, $(l+5)$-th and $(l+6)$-th registers, using the second register as the control qubit.   \item Apply an $X$ gate to the second register.
    \item For $m=l+2, l+3, \dots, k-1$, do the following:
    \begin{enumerate}
        \item Apply $\mathcal{R}_m$ to the first and third register.
        \item Apply controlled-$\mathrm{SWAP}_n$ to the third, $(m+4)$-th and $(m+5)$-th registers, using the third register as the control qubit.
        \item Apply $\mathcal{R}_m^{\dagger}$ to the first and third register.
    \end{enumerate}
    \item Apply $X_k$ to the first register. 
\end{enumerate}
In other words, this circuit implements the $(r+a_2+2)$-qubit unitary operator
\begin{align}
    \mathcal{V}_l=X_k^{(1)} 
    \mathcal{W}_{k-1} 
    \mathcal{W}_{k-2}\dots \mathcal{W}_{l+3}
    \mathcal{W}_{l+2}
    X^{(2)}\text{c-}U_2^{(2, 4,l+5, l+6)}\mathcal{R}_{l+1}^{(1,2)} ,
\end{align}
where
\begin{align}
    \mathcal{W}_m = \lrb{(\mathcal{R}_m^\dagger)^{(1,3)} \otimes I_n^{(m+4)} \otimes I_n^{(m+5)}}
    \lrb{I_q^{(1)} \otimes \text{c-}\mathrm{SWAP}_n^{(3,m+4,m+5)}}
    \lrb{\mathcal{R}_m^{(1,3)} \otimes I_n^{(m+4)} \otimes I_n^{(m+5)}}.
\end{align}
Here the superscripts indicate the subsystems on which each operator acts. The purpose of the $\mathcal{W}_m$ operations is to move the $\ket{0^n}$ state - introduced by the application of $U_2$ - to the correct position. One can verify by direct calculation that
\begin{align}
    \lrb{I_q \otimes \bra{0} \otimes \bra{0} \otimes \bra{0^{a_2}} \otimes I_n^{\otimes k}}\mathcal{V}_l
    \lrb{I_q \otimes \ket{0} \otimes \ket{0} \otimes \ket{0^{a_2}} \otimes I_n^{\otimes k}}    
    =\frac{\mathcal{C}_l}{\alpha_2'}.    
\end{align}
Thus, $\mathcal{V}_l$ is an $(\alpha_2', a_2+2, 0)$-block-encoding of $\mathcal{C}_l$ -- or  
an $(\alpha_2, a_2+3, 0)$-block-encoding of $\mathcal{C}_l$ if we account for the ancilla qubit used for implementing the $\mathcal{R}_m$ operations. 
The above implementation of $\mathcal{V}_l$ requires one controlled-$U_2$ and $\mytO{k\mylog{N}}$ additional elementary gates. 

Finally, we employ the standard linear combination of unitaries (LCU) technique to construct a block-encoding of $\mathcal{A}$ from the individual block-encodings of $\mathcal{B}_l$ and $\mathcal{C}_l$. Specifically, let $\mathrm{PREP}$ be an $(q+1)$-qubit unitary operator such that
\begin{align}
\mathrm{PREP}\ket{0^{q+1}}=\frac{1}{\sqrt{k\alpha_1+(k-1)\alpha_2'}} \lrb{\sqrt{\alpha_1}\sum_{j=0}^{k-1} \ket{j} + \sqrt{\alpha_2'}\sum_{j=k}^{2k-2} \ket{j}}.    
\end{align}
Moreover, let $a=\mymax{a_1, a_2}$, and let $\mathrm{SELECT}$ be an $(r+a+q+4)$-qubit unitary operator defined as
\begin{align}
\mathrm{SELECT} =\sum_{j=0}^{k-1} \ket{j}\bra{j} \otimes \mathcal{U}_j'  
+
\sum_{j=0}^{k-2} \ket{k+j}\bra{k+j} \otimes \mathcal{V}_j',
\end{align}
where $\mathcal{U}_j'$ and $\mathcal{V}_j'$ are essentially equivalent to $\mathcal{U}_j$ and $\mathcal{V}_j$, respectively, except that we have re-ordered the qubits, and added extra (unused) qubits if necessary, to make sure they conform to the standard definition of a block-encoding of $\mathcal{B}_j$, $\mathcal{C}_j$, respectively, and use the same number of ancilla qubits. Formally, we have
\begin{align}
    \lrb{\bra{0^{a+3}} \otimes I_r}\mathcal{U}_j'
    \lrb{\ket{0^{a+3}} \otimes I_r}    
    &=\frac{\mathcal{B}_j}{\alpha_1}, \\
    \lrb{\bra{0^{a+3}} \otimes I_r}\mathcal{V}_j'
    \lrb{\ket{0^{a+3}} \otimes I_r}    
    &=\frac{\mathcal{C}_j}{\alpha_2'}.    
\end{align}
Then by direct calculation, one can verify that
\begin{align}
\lrb{\bra{0^{q+a+4}} \otimes I_{r}} 
\lrb{\mathrm{PREP}^\dagger \otimes I_{r+a+3}}  \cdot 
\mathrm{SELECT} \cdot
\lrb{\mathrm{PREP}  \otimes I_{r+a+3}}  
\lrb{\ket{0^{q+a+4}} \otimes I_{r}}
=\frac{\mathcal{A}}{k\alpha_1+(k-1)\alpha_2'}.
\end{align}
In other words, we have obtained a 
$(k\alpha_1+(k-1)\alpha_2', q+a+4, 0)$-block-encoding of $\mathcal{A}$. Note that $\mathrm{PREP}$ can be implemented using $\myO{k}$ elementary gates, while $\mathrm{SELECT}$ can be implemented using at most 
one query to each of the controlled-$\mathcal{U}_j$ and controlled-$\mathcal{V}_j$ operations, along with $\myO{\mylog{k}}$ additional elementary gates. Therefore, this block-encoding of $\mathcal{A}$ can be implemented using $\myO{k}$ queries to controlled-$U_1$ and controlled-$U_2$, along with $\mytO{k^2 \mylog{N}}$ extra elementary gates.\\

\noindent\textbf{Initial state preparation}.
Recall that $V_0$ is a unitary that prepares a state proportional to $\bar{x}(0)$, i.e. 
$V_0\ket{0^n}=\ket{\bar{x}(0)}$. Now we show how to use $V_0$ and elementary gates to generate the initial state for the Carleman system:
\begin{align}
\ket{\mathcal{Y}(0)} 
=\frac{1}{\sqrt{Z}}\sum_{j=0}^{k-1}  \norm{\bar{x}(0)}^{j+1}\ket{j} \otimes \ket{\bar{x}(0)}^{\otimes (j+1)} \otimes \ket{0^n}^{\otimes (k-j-1)},
\end{align}
where the normalization factor $Z$ is given by
\begin{align}
Z = \sum_{j=1}^k \norm{\bar{x}(0)}^{2j}
=  \frac{\beta^2(\beta^{2k}-1)}{\beta^2-1}
\end{align}
with $\beta=\norm{\bar{x}(0)}=\frac{c\norm{x(0)}}{\rho \kappa_Q}$.  Let $\mathcal{W}$ be a $q$-qubit unitary operator such that
\begin{align}
    \mathcal{W}\ket{0^q}=\frac{1}{\sqrt{Z}}\sum_{j=0}^{k-1} \beta^{j+1}\ket{j},
\end{align}
and let $\mathcal{V}$ be the $r$-qubit unitary operator defined as
\begin{align}
    \mathcal{V}=\sum_{j=0}^{k-1} \ket{j}\bra{j}\otimes V_0^{\otimes (j+1)} \otimes I_n^{\otimes (k-j-1)}.
\end{align}
Then one can verify that
\begin{align}
    \mathcal{V}(\mathcal{W} \otimes I_n^{\otimes k})(\ket{0^q}\otimes \ket{0^n}^{\otimes k}) = \ket{\mathcal{Y}(0)}.
\end{align}
Note that $\mathcal{W}$ can be implemented using $\myO{k}$ elementary gates, and $\mathcal{V}$ can be implemented using $k$ queries to controlled-$V_0$ along with $\mytO{k}$ additional elementary gates. Thus, this procedure for preparing $\ket{\mathcal{Y}(0)}$ costs $k$ queries to controlled-$V_0$ and $\mytO{k}$ extra elementary gates.

\section{Proof of Theorem~\ref{thm:algorithmstable}}
\label{app:proof_thm_algorithmstable}

Here we sketch the proof. Similarly to the proof of Theorem~\ref{thm:Carlemanstable}, we perform the change of coordinates $\tilde{x} = Q x$, where $Q^\dag Q = P$. Setting
\begin{align}
\label{eq:Ftildetransformation1}
    \tilde{F}_2 = Q F_2 Q^{-1} \otimes Q^{-1}, \quad \tilde{F}_1 = Q F_1 Q^{-1}, \quad \tilde{F}_0 = Q F_0,
\end{align}
we get
\begin{align}
\label{eq:ysystem1}
    \dot{\tilde{x}} = \tilde{F}_2 \tilde{x} \otimes \tilde{x} + \tilde{F}_1 \tilde{x} + \tilde{F}_0.
\end{align}
Recall that since $\mu_P(F_1)  <0$ and $R_P <1$, we proved in Theorem~\ref{thm:Carleman_cons} that a rescaling exist for which $\mu(\tilde{F}_1)<0$ and $\| \tilde{x}(0)\|<1$. We shall choose such rescaling. This is the problem to which we apply the Carleman embedding, obtaining after truncation at level $k$, the following finite-dimensional linear ODE system:
\begin{align}
    \dot{\tilde{y}} = \tilde{A} \tilde{y} + \tilde{a}, \quad \tilde{a} = [\tilde{F}_0, 0, \dots, 0].
\end{align} 
This problem is tackled with quantum linear system solvers after time discretization via a truncated Taylor series.

We construct a $\alpha_{F_0} \alpha_Q$-block-encoding of $\tilde{F}_0$ with a $1$ call to the block-encoding of $F_0$ and $1$ call to the block-encoding of $Q$. Then consider
\begin{align}
    \tilde{F}_1 = \alpha_{F_1} \kappa_Q \frac{Q}{\|Q\|}  \frac{F_1}{\alpha_{F_1}} \frac{Q^{-1}}{\|Q^{-1}\|}.
\end{align}
A $\alpha_{F_1} \kappa_Q$-block-encoding can be obtained with $1$ call to the block-encoding of $F_1$ and $O(\kappa_Q \log(1/\epsilon))$ calls to the block-encoding of $Q$, where we used quantum singular value transform to achieve the matrix inversion~\cite{gilyen2019quantum}. Finally consider
\begin{align}
    \tilde{F}_2 = \frac{\alpha_{F_2} \kappa^2_Q}{\|Q\|} \frac{Q}{\|Q\|}  \frac{F_2}{\alpha_{F_2}} \frac{Q^{-1}}{\|Q^{-1}\|}\frac{Q^{-1}}{\|Q^{-1}\|}.
\end{align}
A $\frac{\alpha_{F_2} \kappa_Q^2}{\|Q\|}$-block-encoding of $\tilde{F}_2$ can be obtained with $O(\kappa_Q^2 \log(1/\epsilon)^2)$ calls to the block-encoding of $Q$ and $1$ call to the block-encoding of $F_2$.

From these building blocks, we can use LCU and techniques similar to those in Appendix \ref{app:implemenation_algorithm_poincare_domain_case} to block-encode the Carleman matrix~$\tilde{A}$ with rescaling factor 
\begin{align}
    k \left(\alpha_{\tilde{F}_0} + \alpha_{\tilde{F}_1} + \alpha_{\tilde{F}_2}\right) = k \left(\alpha_{F_0} \alpha_Q + \alpha_{F_1} \kappa_Q + \frac{\alpha_{F_2} \kappa_Q^2}{\alpha_Q}\right),
\end{align} 
with a single call to each of the $\tilde{F}_i$, which means a single call to each of the $F_i$, and $O(\kappa_Q^2 \log(1/\epsilon)^2)$ calls to the block-encoding of $Q$. Correspondingly, we have a block-encoding of $\tilde{A} h$ with rescaling $\alpha_{\tilde{A} h} = O(1)$. 

We shall then define a linear system of equations encoding the truncated Taylor series method as in Ref.~\cite{berry2015simulating}, i.e., $
    L \tilde{\psi} = \tilde{b}$
where setting $M = \lceil T/h \rceil$
\begin{align}
    L = I -\sum_{m=0}^{M-1} \ketbra{m+1}{m} \otimes \sum_{j=0}^K \frac{(\tilde{A}h)^j}{j!},
\end{align}
and
\begin{align}
\tilde{y} = [ \tilde{y}^{0}, \tilde{y}^{1}, \dots, \tilde{y}^{m}, \dots, \tilde{y}^{M}], \quad 
    \tilde{b} = [ \tilde{y}(0), \underbrace{v, \dots, v}_{M \textrm{ times}}], \quad v = \sum_{j=1}^K \frac{(\tilde{A}h)^{j-1}}{j!} \tilde{a}. 
\end{align}
From the block-encoding of $\tilde{A}h$, another LCU allows us to block-encode truncated Taylor series $\sum_{j=0}^K (\tilde{A}h)^j/j!$, with rescaling factor $O(1)$, using $O(K)$ calls to the block-encoding of $\tilde{A}h$, where we shall take $K= \log (T/\epsilon)$ to ensure the time-discretization error is $O(\epsilon)$. With a single call to this block-encoding we obtain, via LCU, a block-encoding of $L$, with rescaling factor $O(1)$. The cost of a call to the block-encoding of $L$ is  then
\begin{align}
O(\log(T/\epsilon)) & \quad \textrm{calls to each of the block-encodings of the } F_i, \\
O(\log(T/\epsilon) \kappa^2_Q \log(1/\epsilon)^2) & \quad  \textrm{calls to the block-encoding of } Q.
\end{align}

Similarly, we can block-encode $\sum_{j=1}^K (Ah)^{j-1}/j!$ with rescaling factor $O(1)$, with $O(K)$ queries to the block-encoding of $\tilde{A}h$. With this, and a single call to a state preparation for $\ket{a}$ (which means $1$ call to the block-encoding of $F_0$ and $1$ call to the block-encoding of $Q$), we can  prepare~$\ket{v}$ at an overall cost scaling in the same way as the cost the block-encoding of $L$. 

With $1$ call to the state preparation of~$\ket{v}$, and one call to the state preparation of $\ket{\tilde{y}(0)}$, we prepare $\ket{\tilde{b}}$. The call to $\ket{\tilde{y}(0)}$ can be achieved, following the strategy in Section~4.4 of Ref.~\cite{liu2021efficient}, with $O(k) = O(\log(T/\epsilon))$ calls to the state preparation of $\ket{\tilde{x}(0)}$. The latter can be finally obtained by applying the block-encoding of $Q$ to $\ket{x(0)}$ and the performing $O(\alpha_Q)$ rounds of amplitude amplification, resulting in a cost of $O(\alpha_Q)$ calls to a state preparation of $\ket{x(0)}$ and $\alpha_Q$ calls to $Q$.

Putting everything together, the cost of preparing $\ket{\tilde{b}}$ is 
\begin{align}
O(\log(T/\epsilon)) & \quad \textrm{calls to each of the block-encodings of the } F_i, \\
O(\log(T/\epsilon) \kappa^2_Q \log(1/\epsilon)^2) & \quad  \textrm{calls to the block-encoding of } Q, \\
O(\alpha_Q \log(T/\epsilon)) & \quad \textrm{call to the state preparation of $\ket{x(0)}$}. 
\end{align}

We then apply an optimal quantum linear solver algorithm the the linear system $L \ket{\tilde{\psi}} = \ket{\tilde{b}}$, which outputs a quantum state $\epsilon$-close to
\begin{align}
\ket{\tilde{\psi}(T)} = \frac{1}{\sqrt{\sum_{m'=1}^M \| \tilde{y}(m'h)\|^2 }} \sum_{m=1}^{\lceil T/h \rceil } \| \tilde{y}(m h)\| \ket{\tilde{y}(mh)} \otimes \ket{t}.
    \end{align}
The cost of applying an optimally scaling quantum linear solver algorithm~\cite{costa2022optimal, dalzell2024shortcut, jennings2023efficient}, using that $k = O(\log(T/\epsilon))$ due to the convergence results in Theorem~\ref{thm:Carlemanstable}, and the fact that we can bound $\kappa_L = O(1)$ using techniques similar to those discussed in Ref.~\cite{an2024fast}, is
\begin{align}
    O\left((\alpha_{F_0} \alpha_Q + \alpha_{F_1} \kappa_Q + \frac{\alpha_{F_2} \kappa_Q^2}{\alpha_Q})\log(1/\epsilon)\log(T/\epsilon) \right) 
\end{align}
calls to a block-encoding of $L$ and to the state preparation of $\ket{\tilde{b}}$. 

Recall that, due to $\mu(\tilde{F}_1)<0$ and $\| \tilde{x}(0)\|<1$, one has $\| \tilde{x}(t)\| <1$ at all times.  This means that the components in the $\tilde{y}(t)$ vector that correspond to $k =2, \dots, k$ Carleman blocks have exponentially smaller norm than the $k=1$ component, and can hence be removed via amplitude amplification of $k=1$ at a constant cost, reducing to a state $\epsilon$-close to
\begin{align}
\ket{\tilde{\psi}_1(T)} \propto \sum_{m=1}^{\lceil T/h \rceil} \| \tilde{x}(mh)\| \ket{\tilde{x}(mh)}.
\end{align}
We can now apply a block-encoding of $Q^{-1}/\|Q^{-1}\|$, using again $O(\kappa_Q \log(1/\epsilon))$ calls to the block-encoding of $Q$, giving a state $\epsilon$-close to
\begin{align}
    \ket{\psi_1} \propto \sum_{m=1}^{\lceil T/h \rceil} \| x(mh)\| \ket{x(mh)},
\end{align}
after $O(\|Q^{-1}\|)$ rounds of amplitude amplification. 

The overall cost is then 
\footnotesize
\begin{align}
O(\|Q^{-1}\|(\alpha_{F_0} \alpha_Q + \alpha_{F_1} \kappa_Q + \frac{\alpha_{F_2} \kappa_Q^2}{\alpha_Q})\log(1/\epsilon) \log^2(T/\epsilon)) & \quad \textrm{calls to each of the block-encodings of the } F_i, \nonumber \\
O(\|Q^{-1}\|(\alpha_{F_0} \alpha_Q + \alpha_{F_1} \kappa_Q + \frac{\alpha_{F_2} \kappa_Q^2}{\alpha_Q})\log^3(1/\epsilon) \log^2(T/\epsilon) \kappa^3_Q) & \quad  \textrm{calls to the block-encoding of } Q, \nonumber \\
 O(\kappa_Q (\alpha_{F_0} \alpha_Q + \alpha_{F_1} \kappa_Q + \frac{\alpha_{F_2} \kappa_Q^2}{\alpha_Q})\log(1/\epsilon)\log^2(T/\epsilon)) & \quad \textrm{calls to the state preparation of $\ket{x(0)}$}. 
\end{align}
\normalsize


\section{Proof of Theorem~\ref{thm:bqp}}
\label{sec:proofbqpcompleteness}


\subsection{Change of coordinates and parameter choices}

In the first part of the proof, we will perform a coordinate transformation that highlight the norm-preserving properties of the linear, dissipationless dynamics, together with a rescaling that will be key later on to argue that the kinetic energy can be efficiently extracted by the quantum algorithm. We shall also make specific choices concerning the strength the nonlinearity and dissipation. The choices may seem arbitrary at this stage, but they have been made so that later on we will be able to argue for the convergence of the Carleman embedding procedure, and derive bounds on the difference between the kinetic energy computed for the nonlinear oscillator vs the same quantity computed for the linear, dissipationless problem.

Let us start with Eq.~\eqref{eq:bqpproblem}. 
The vector $(q, \dot{q})$ satisfies the equations
\begin{align}
    \frac{d}{dt} \begin{bmatrix} q \\ \dot{q} \end{bmatrix}  =  \begin{bmatrix}
       0 &  I \\
       - D & - R
   \end{bmatrix}  \begin{bmatrix} q \\ \dot{q} \end{bmatrix} + F_2  \begin{bmatrix} q \\ \dot{q} \end{bmatrix}^{\otimes 2},
\end{align}
where 
\begin{align}
    F_2 = \begin{bmatrix}
        0 & 0 & 0 & 0 \\
        0 & 0 & 0 & -W
    \end{bmatrix}. 
\end{align}
Perform the coordinate transformation 
\begin{align}
 \begin{bmatrix} q \\ \dot{q} \end{bmatrix} \mapsto   x := \begin{bmatrix} y \\ \dot{q} \end{bmatrix} = \frac{1}{c} \begin{bmatrix}
        \sqrt{D} & 0 \\ 
        0 & I
    \end{bmatrix} \begin{bmatrix} q \\ \dot{q} \end{bmatrix}  = \frac{1}{c} \begin{bmatrix} \sqrt{D} q \\ \dot{q} \end{bmatrix}, 
\end{align}
as described in Ref.~\cite{krovi2024quantum}, but with an additional rescaling $$c = 16 \sqrt{2}/3.$$ The transformed equations read
\begin{align}
\label{eq:transformedoscillator}
    \frac{dx}{dt} = \begin{bmatrix}
       0 &  \sqrt{D} \\
       - \sqrt{D} & -R
   \end{bmatrix} x + \bar{F}_2 x^{\otimes 2} := F_1 x + \bar{F}_2 x^{\otimes 2},
\end{align}
where 
\begin{align}
    F_1 = \begin{bmatrix}
        0 & \sqrt{D} \\  - \sqrt{D} & -R
    \end{bmatrix},
\end{align}
and $\bar{F}_2 = c F_2$. Note that
\begin{align}
    \| x(0) \|^2 = \frac{\| \sqrt{D} q(0) \|^2}{c^2} +  \frac{\|\dot{q}(0)\|^2}{c^2} =  0 + \frac{2}{c^2} = \frac{9}{256}<1.
\end{align}

We set $\epsilon = \Theta(1/\mathrm{polylog}(N)) \in (0,1)$, and make the following choices:
\begin{itemize}
\item The matrix $D$ takes the form given in Eq. (C4) of Ref.~\cite{babbush2023exponential} and encodes a universal quantum circuit consisting of Hadamard, Pauli-X and Toffoli gates.
In particular, the eigenvalues of $D$ are nonnegative and of order $\Theta(1)$ -- precisely, 
\begin{equation}
    d_i:=\lambda_i(D)  \geq d_{\mathrm{min}} \ge 3-\sqrt{2}.
\end{equation}
This follows the Gershgorin's circle theorem, the construction of $D$ in Ref.~\cite{babbush2023exponential} and the fact that the encoded circuit does not contain two consecutive Hadamard gates.

\item $R= r I_{N'} \oplus 0$, where  $I_{N'}$ is the $N'$-dimensional identity matrix, $0$ is the matrix of all zeros on the complement, and 
    \begin{equation}
        r= \frac{\epsilon}{4 t_\star} \leq \sqrt{2}.
    \end{equation} 
    So  $\|R\| =  \Theta(1/\mathrm{polylog}(N))$, as required.
    
\item Let $q = (q',q'')$, where $q'$ denotes the first $N'$ components of $q$. On vectors $[\dot{q}'^{\otimes 2}, \dot{q}' \otimes \dot{q}'', \dot{q}'' \otimes \dot{q}', \dot{q}'^{\otimes 2}]^T$ the matrix $W$ acts as
    \begin{align}
    \label{eq:Wmatrix}
        W= \begin{bmatrix}
            W' & 0 & 0 & 0\\
            0 & 0 & 0 & 0
        \end{bmatrix}, 
    \end{align}
    where we take $W'_{k|ij} = w \delta_{ij} \delta_{ik}$ for all $i,j,k= 1, \dots, N'$. We will take $w = \Theta(1/\mathrm{polylog}(N))$.
     
    Note that if we apply $W'$ to a canonical basis state $\ket{k}$ we get either $0$ or $w \ket{k}$ depending on $k$ and so the norm of $W'$ is at least $|w|$. What is more, since the matrix elements of $W'$ are zero or $w$, and the matrix is $1$-sparse, the largest singular value is upper bounded by $w$. So the norm of $W'$ is exactly $|w|$. We then have that $W$ acts on $\mathrm{polylog}(N)$ oscillators and $\|W\| = \|W'\| = \Theta(1/\mathrm{polylog}(N))$, as required.   

\item We shall impose
    \begin{align}
    \label{eq:wcondition}
        |w| = \frac{9\sqrt{2} \epsilon}{2048 \, e \, t_\star} = \frac{9\sqrt{2}}{512 e}r,
    \end{align}  
    which is consistent with the previous requirement $|w| = 1/\mathrm{polylog}(N)$.
\end{itemize}


\subsection{The problem is in BQP}

In the second part of the proof, we shall argue that a quantum algorithm exists for extracting the kinetic energy of an oscillator in the nonlinear oscillator network. We will do so require in three steps. In the first step, we need to argue that transformed via Carleman embedding into a larger linear problem with controlled errors and overheads. This will require us to get estimates for the solution norm, the condition number of the diagonalizing matrix of the linear components which will feed into a $R$-number condition. In the second step, we will argue that the resulting problem can be  solved via quantum linear solver algorithms with $O(\mathrm{polylog}(N)$ queries to certain oracles. This will require some further estimates: the norm the Carleman matrix generating the embedded linear problem, and that of its matrix exponential. In the third step, finally, we will see that the oracles can be efficiently constructed with $O(\mathrm{polylog}(N)$ gates.

\bigskip

\textbf{Step 1: Carleman convergence.} If $U$ is the unitary that diagonalizes $D$, we have that $U \oplus U$ brings $F_1$ to the form,
\begin{align}
   (U^\dag \oplus U^\dag) F_1 (U \oplus U)= \begin{bmatrix}
        0 & \mathrm{diag}[\sqrt{d_1}, \dots, \sqrt{d_N}] \\  - \mathrm{diag}[\sqrt{d_1}, \dots, \sqrt{d_N}] & -r I_{N'} \oplus 0
    \end{bmatrix}.
\end{align}
Note that if we reorder the coordinates as 
\begin{align}
    \label{eq:bqp_order}
    [y_1, \dot{q}_1, \dots,  y_{N'}, \dot{q}_{N'}, y_{N'+1}, \dot{q}_{N'+1}, \dots, y_N, \dot{q}_N ],
\end{align}
the transformed $F_1$ matrix takes the form
\begin{align}
\label{eq:directproductstructureoscillators}
    \bigoplus_{i=1}^{N'} \begin{bmatrix}
        0 & \sqrt{d_i} \\ 
       - \sqrt{d_i} & -r
    \end{bmatrix}   \bigoplus_{i=N'+1}^{N} \begin{bmatrix}
        0 & \sqrt{d_i} \\ 
       - \sqrt{d_i} & 0
    \end{bmatrix} .
\end{align}
From this, it is clear that to complete the diagonalization of $F_1$ we need to apply a matrix of the form
\begin{align}
V' = \bigoplus_{i=1}^{N'} M_i \bigoplus_{i=N'+1}^{N} U_i,
\end{align}
where $M_i$ are non-unitary matrices diagonalizing the matrices with dissipation $r>0$ on the left-hand-side of Eq.~\eqref{eq:directproductstructureoscillators}, and $U_i$ are unitary matrices diagonalizing the matrices with no dissipation on the right-hand-side of Eq.~\eqref{eq:directproductstructureoscillators}. The $M_i$ matrices read
\begin{align}
M_i = \left[
\begin{array}{cc}
 \frac{\sqrt{r^2-4 d_i}-r}{2 \sqrt{d_i}} & -\frac{\sqrt{r^2-4 d_i}+r}{2 \sqrt{d_i}} \\
 1 & 1 \\
\end{array}
\right].
\end{align}
The condition number of $M_i$ is
\begin{align}
    \kappa_ i = \sqrt{1+\frac{2 r}{2 \sqrt{d_i}-r}} \leq \sqrt{1+\frac{ r}{ \sqrt{d_i}}} \leq 1 + \frac{r}{2\sqrt{d_i}} \leq 1+ \frac{r}{2{ \sqrt{d_{\mathrm{min}}}}}.
\end{align}
From these considerations, the condition number of the matrix $V'$ is
\begin{align}
    \kappa_{V'} = \max_{i=1}^{N'} \kappa_i \leq 1+ \frac{r}{2 {\sqrt{d_{\mathrm{min}}}}} = O(1), 
\end{align}
since $r = \Theta(1/\mathrm{polylog}(N))$. Consequently, denoting by $V$ the matrix diagonalizing $F_1$ (meaning $V^{-1} F_1 V$ is diagonal), we have $V= V' (U \oplus U)$ and 
\begin{align}
    \kappa_V = \kappa_{V'} = O(1).
\end{align}

Next, we will explain under what conditions the Carleman embedding error converges for the analysed system. The eigenvalues of $F_1$ are given by
\begin{align}
    \label{eq:bqp_convergence}
   \left\{ \frac{1}{2} (- r \pm \sqrt{r^2 - 4 |d_j|}) \right\}_{j=1}^{N'} \quad \cup \quad \{\pm i |d_j|\}_{i=N'}^N,
\end{align}
with the purely imaginary eigenvalues living in the subspace corresponding to subsystems $i\in\{N'+1, \dots, N\}$, and all the other eigenvalues having a negative real part. In other words, if we order the coordinates as in Eq.~\eqref{eq:bqp_order}, then the imaginary eigenvalues live in the subspace spanned by canonical basis states $\ket{k}$ for $k>2N'$. At the same time, from the structure of $F_2$ and that of $W$ in Eq.~\eqref{eq:Wmatrix}, we see that $F_2$ satisfies $\bra{k} F_2 = 0$ for $k>2N'$. As explained in Remark~\ref{rmk:oscillating}, this means that we can apply the conservative Carleman analysis from Section~\ref{sec:conservative}, with the convergence of the Carleman error vector occurring under the following condition (see the proof of Theorem~\ref{thm:Carleman_cons} in Appendix~\ref{app:thm_cons} for details):
\begin{align}
R_\delta':=\max \left\{ 2 \|\tilde{x}_{\mathrm{max}}\| \| V\| , \frac{2e\|\tilde{x}_{\mathrm{max}}\|\|V^{-1}\bar{F}_2V^{\otimes 2}\|}{\delta(F_1)} \right\} < 1.
\end{align}
In our case $V = V' (U \oplus U)$, $\|\tilde{x}_{\mathrm{max}}\| = \|V^{-1} x_{\mathrm{max}}\|$, and $\delta(F_1) = r/2$, and the only remaining ingredient to specify the convergence regime is to bound the norm of the solution.

Start with
\begin{align}
\frac{d\| x\|^2}{dt} = x^\dag \bar{F}_2 (x \otimes x) + (x^\dag \otimes x^\dag) \bar{F}_2^\dag x + x^\dag (F_1 + F_1^\dag) x.
\end{align}
Furthermore,
\begin{align}
   x^\dag( F_1 + F_1^\dag ) x= x^\dag\begin{bmatrix}
        0 & 0 \\  0 & -2R
    \end{bmatrix} x = -2 \dot{q}^\dag R \dot{q} = -2 r \| \dot{q}'\|^2
\end{align}
and
\begin{align}
    x^\dag \bar{F}_2 (x \otimes x) & = [ y^\dag \quad \dot{q}^\dag] \begin{bmatrix}
        0 & 0 & 0 & 0 \\
        0 & 0 & 0 & -c W
    \end{bmatrix} \begin{bmatrix}
        y \otimes y\\ 
        y \otimes \dot{q} \\
        \dot{q} \otimes y \\ \dot{q} \otimes \dot{q} 
    \end{bmatrix} =  -c \dot{q}^\dag W \dot{q} \otimes \dot{q}   \\ & =-c \dot{q}^{\prime\dag} W' \dot{q}' \otimes \dot{q}' = -c w \dot{q}^{\prime\dag} (\dot{q}' \ast \dot{q}') = -c w \sum_{i=1}^{N'} \dot{q}_i^{\prime 3}. 
\end{align}
Hence,
\begin{align}
    {\frac{d \|x\|^2}{dt}} = -2\sum_{i=1}^{N'} (cw\dot{q}_i'+r) \dot{q}_i^{\prime 2} \leq   2\sum_{i=1}^{N'} (c|w|\|x\|-r) \dot{q}_i^{\prime 2},
\end{align}
which leads to the above upper bound
\begin{align}
    {\frac{d \|x\|}{dt}} \leq \left(c|w|-\frac{r}{\|x\|}\right)  \sum_{i=1}^{N'}  \dot{q}_i^{\prime 2}.
\end{align}
Note that at $t =0$ 
\begin{align}
    c|w|-\frac{r}{\|x(0)\|} = \left(|w| - \frac{r}{\sqrt{2}}\right)c.
\end{align}
Moreover, Eq.~\eqref{eq:wcondition} implies
\begin{align}
    |w|<r/\sqrt{2},
\end{align} 
which ensures the above derivative is non-positive at $t=0$ and, consequently, at all times. So $\|x(t)\|$ can be bounded by its value at the initial time:
\begin{equation}
    \label{eq:normUpperBound}
    \|x(t)\|\leq \|x(0)\|.
\end{equation}

As for the lower bound, we have
\begin{align}
    {\frac{d \|x\|}{dt}} = \frac{1}{\|x\|}\sum_{i=1}^{N'} (-cw\dot{q}_i'-r) \dot{q}_i^{\prime 2} \geq   -\frac{1}{\|x\|}\sum_{i=1}^{N'} (c|w|\|x\|+r) \dot{q}_i^{\prime 2} \geq -(c|w|\|x\|+r)\|x\|.
\end{align}
Combining the above with Eq.~\eqref{eq:normUpperBound}, we get
\begin{align}
    {\frac{d \|x\|}{dt}}  \geq -(c|w|\|x(0)\|+r)\|x(0)\|,
\end{align}
and so
\begin{equation}
    \|x(t)\| \geq \|x(0)\|(1 - (r+c|w|\|x(0)\|) t),
\end{equation}
By setting the RHS of the above equation to be larger or equal than $\|x(0)\|/2$ we get the condition
\begin{align}
    |w| \leq \frac{1}{2 \sqrt{2} t} - \frac{r}{\sqrt{2}}.
\end{align}
So, with the condition $r= \frac{1}{4t_\star} \leq \frac{1}{4t}$, it suffices
\begin{align}
    |w| \leq \frac{1}{4 \sqrt{2} t_\star},
\end{align}
which is satisfied via the assumption in Eq.~\eqref{eq:wcondition}. We conclude that $\|x(t)\| = \Theta(1)$, and so $\|x_{\mathrm{max}}\|= \Theta(1)$.

We can now go back to the convergence condition, Eq.~\eqref{eq:bqp_convergence} and use what we have derived together with the fact that for $N$ large enough, $r < \sqrt{2}/3$ and so $\kappa_V \leq 4/3$:
\begin{align}
    R_\delta' & \leq \max\left\{2 \|x_{\mathrm{max}}\| \kappa_V, \frac{2e\|x_{\mathrm{max}}\|\|\bar{F}_2\|\kappa_V^2}{\delta(F_1)}\right\}
    \\ & \leq \max\left\{\frac{8 \|x_{\mathrm{max}}\|}{3}, \frac{32e\|x_{\mathrm{max}}\|\|\bar{F}_2\|}{9\delta(F_1)}\right\}
        \\ & = \max\left\{\frac{8 \sqrt{2}}{3c}, \frac{128\sqrt{2} e|w|}{9 r}\right\}.
\end{align}
Since $c = 16 \sqrt{2}/3$ and $|w|= \frac{9r}{256 \sqrt{2} e}$, the max equals $1/2$.

We conclude that, for all $N$ larger than some constant, one has
\begin{align}
       R'_\delta  \leq 1/2.
\end{align}
Thus, we ensured $R'_{\delta}<1$, so the convergence condition for the Carleman system is satisfied, and to get linear embedding errors smaller than $\epsilon$ it then suffices to take the Carleman truncation order
\begin{equation}
    k = \Theta \left(\log\frac{1}{\epsilon}\right)= O (\mathrm{polyloglog}(N)).
\end{equation}

\bigskip

\textbf{Step 2: Efficiency of the quantum algorithm.}  After truncation to level $k$, we get a linear system of equations for the Carleman vector $z$,
\begin{align}
    \frac{dz}{dt} = A z.
\end{align}
We know that $\alpha(F_1) = 0$. Furthermore, since $F_0 = 0$, the matrix
\begin{align}
   V_A:= V \oplus V^{\otimes 2} \oplus \dots \oplus V^{\otimes k},
\end{align}
brings the Carleman matrix to upper triangular form:
\begin{align}
    V_A A V_A^{-1} = A',
\end{align}
where
\begin{align}
    A' = B_D + B.
\end{align}
Here, $B_D$ is a diagonal matrix with largest real parts of elements equal to zero, and $B$ is the matrix
\begin{align}
    B = \begin{bmatrix}
        0 & \tilde{A}_{0,1} & 0 & 0 & \cdots \\
        0 & 0 & \tilde{A}_{1,2} & 0 & \cdots \\
        \vdots & \vdots & \ddots & \ddots & \ddots \\
        0 & 0 & 0 & 0 & \tilde{A}_{k, k+1}
    \end{bmatrix}\end{align}
with 
$$\tilde{A}_{j,j+1} = V \bar{F}_2 (V^{-1})^{\otimes 2} \otimes \underbrace{I \otimes \cdots \otimes I}_{j-1} + \mathrm{shifts}.$$
Recall that for any two matrices $M_1$ and $M_2$, if we know that $\|e^{M_1 t}\| \leq e^\beta$, then
\begin{equation}
    \|e^{(M_1 + M_2)t} \| \leq e^{(\beta + \| M_2\|)t},
\end{equation}
as shown in Refs.~\cite{kato2013perturbation,krovi2024quantum}.
Since $\|e^{B_D t}\| \leq 1$ we can apply the bound with $M_1 = B_D$, $M_2 = B$ and $\beta =0$:
\begin{align}
    \| e^{A' t} \| \leq e^{\|B\| t}. 
\end{align}
Now note that, since $\kappa_V = O(1)$, $k = O(\mathrm{polyloglog}(N))$, and $\|\bar{F_2}\| = O(1/\mathrm{polylog}(N))$, we have that 
\begin{align}
    \|B\| = O(k \| \bar{F}_2\|)  = O(1/ \mathrm{polylog}(N)).
\end{align}
Since $t_\star = \Theta(\mathrm{polylog}(N))$, the overall scaling of $\|B\| t = O(1)$ and so for all $t \in [0, t_\star]$
\begin{align}
    \|e^{A' t}\| = O(1).
\end{align}
We can then conclude that
\begin{align}
    \| e^{A t}\| \leq \kappa_{V_A} \|e^{A' t}\| = \kappa_V^{k} \|e^{A' t}\| = O(\mathrm{polylog}(N)).
\end{align}
Finally note that $\|A\| \leq k (\| F_1\| + \| F_2\|) = O(\mathrm{polyloglog}(N))$.

We can then apply the results from Ref.~\cite{jennings2023cost} [Table 2, history state] with
\begin{itemize}
    \item $C_{\mathrm{max}} = O(\mathrm{polylog}(N))$,
    \item $g = \Theta(1)$,
    \item $\|A\| = O(\mathrm{polyloglog}(N))$
    \item $t_\star = \mathrm{polylog}(N)$,
    \item $\max_{t \in [0,t_\star]}\|x(t)\| = \Theta(1)$,
    \item $\epsilon = 1/\mathrm{poly log}(N)$,
    \item $\kappa_V=O(1)$,
\end{itemize}
to obtain a quantum linear solver algorithm extracting a state $\epsilon$-close to a history state $\ket{z_H(t_\star)}$ for $\ket{z(t_\star)}$ encoded as a quantum state and using $\mathrm{polylog}(N)$ queries to the $A$ matrix. This is also $\epsilon$-close to an idealized Carleman state which, via a constant number of rounds of amplitude amplification, gives a state $\epsilon$-close to a history state $\ket{x_H(t_\star)}$ for $\ket{x(t_\star)}$ with any chosen $\Theta(1)$ failure probability. Using the results of Ref.~\cite{dalzell2024shortcut}, we also get a $\delta$ multiplicative approximation of $\|x_H(t_\star)\|$ with a cost $O(1/\delta)$ and any chosen $\Theta(1)$ failure probability.

Note that 
\begin{align}
    \frac{1}{2} \dot{q}_1(t_\star)^\dag \dot{q}_1(t_\star) & =    \frac{1}{2} x(t_\star) \begin{bmatrix} 0 & 0 \\ 0 & \Lambda
    \end{bmatrix}
   x(t_\star) =   \frac{1}{2} x_H(t_\star) \left(\ketbra{t_\star}{t_\star} \otimes \begin{bmatrix} 0 & 0 \\ 0 & \Lambda 
    \end{bmatrix} \right)
   x_H(t_\star)  \\ & =
   \frac{1}{2}\| x_H(t_\star)\|^2 \left(\bra{x_H(t_\star)} \left[\ketbra{t_\star}{t_\star} \otimes  \begin{bmatrix} 0 & 0 \\ 0 & \Lambda 
    \end{bmatrix} \right] \ket{x_H(t_\star)}  \right),
\end{align}
where $\Lambda$ is the diagonal matrix with a single $1$ in the top-left position and the rest zeroes; $x_H(t_\star)$ denotes a vector encoding $\{x(t)\}_{t \in [0,t_\star]}$ with timestep $\Delta t = 1/\|A\| = O(\mathrm{polyloglog}(N))$, and $\ket{x_H(t_\star)}$ is the corresponding normalized quantum state. 

Since we have an algorithm to prepare $\ket{x_H(t_\star)}$ with $O(\mathrm{polylog}(N))$ queries to $A$, and the matrix $\ketbra{t_\star}{t_\star} \otimes  \begin{bmatrix} 0 & 0 \\ 0 & \Lambda \end{bmatrix}$ can be trivially block-encoded, a Hadamard test gives an $\epsilon = \Theta(1/\mathrm{polylog}(N))$-accurate estimate of the term in parenthesis with the same query complexity scaling. Also, we mentioned that at the same query cost we can estimate $\|x_H(t)\|$ up to an error $\delta \| x_H(t_\star)\|$. Since $\|x_H(t_\star)\| = O( t_\star \max_{t \in [0,t_\star]} \|x(t)\|) = O(\mathrm{polylog}(N))$, we can take $\delta = 1/\Theta(\mathrm{polylog(N)})$ to get an estimate of $\frac{1}{2} \dot{q}(t_\star)^\dag \dot{q}(t_\star)$ with overall query cost $\Theta(\mathrm{polylog}(N))$ in terms of number of calls to a block-encoding of $A$ and a state preparation of $\ket{z(0)}$. The extra gates we used are $O(\mathrm{polylog(N)})$. 

\bigskip

\textbf{Step 3: Unpacking the oracles.} Queries to the $A$ matrix can be reduced to queries to the matrices $R$, $\sqrt{D}$ and $W$. Let us see how. First, a query to the Carleman matrix can be reduced, via LCU, to querying $F_1$ and $F_2$, with a query overhead $\mathrm{poly}(k) = \Theta(\mathrm{polylog}(N))$ and $\mathrm{polylog}(N)$ extra gates. The block-encoding of $F_2$ immediately reduces to the block-encoding of $W'$ and $\Theta(\mathrm{polylog}(N))$ extra gates. As for $F_1$, it can be written as a LCU
\begin{align}
   F_1 = \begin{bmatrix}
        0 & 0 \\  0 & -R
    \end{bmatrix} + \begin{bmatrix}
        0 & \sqrt{D} \\  - \sqrt{D} & 0
    \end{bmatrix}
\end{align}
The first matrix can be block-encoded with a single call to a block-encoding of $R$. For the second matrix, it can be easily reduced to two calls to a block-encoding of $\sqrt{D}$, controlled on a single qubit. Hence, we need $O(\mathrm{polylog}(N))$ queries to $R$, $W$, $\sqrt{D}$ to run the algorithm.

To block-encode $\sqrt{D}$ using access to a block-encoding of $D$, we can use the result of Ref.~\cite{babbush2023exponential} as reported in Ref.~\cite{krovi2024quantum} [Theorem 4]. The theorem says that a block-encoding of $\sqrt{D}$, up to error $\tilde{\epsilon}$, can be constructed from the following number of calls to an $s$-sparse access oracle for $D$:
\begin{align}
O\left(\| D\|_{\mathrm{max}} s \log\left(\frac{1}{\tilde{\epsilon}}\right) \frac{1}{\tilde{\epsilon}} \min \left\{ \sqrt{\|D^{-1}\|}, \frac{1}{\tilde{\epsilon}} \right\} \right).
\end{align}
Since $\| D\|_{\mathrm{max}} = \Theta(1)$, $\| D^{-1}\| = \Theta(1)$, $s = \Theta(1)$, we have that the number of calls to $D$ to realize a single call to $\sqrt{D}$ is $O(1/\tilde{\epsilon})$. Since we query a $\tilde{\epsilon}$-block-encoding of $\sqrt{D}$ a number of times $O(\mathrm{polylog}(N))$, the following relation is required to ensure that the overall error accumulated during the algorithm is $O(\epsilon)$
\begin{align}
\tilde{\epsilon} \times O(\mathrm{polylog}(N)) \leq \epsilon \Rightarrow  \tilde{\epsilon}= \Theta(1/\mathrm{polylog}(N)).   
\end{align}
Hence $\tilde{\epsilon} = 1/ \mathrm{polylog}(N)$, and the overall number of queries to a sparse access oracle for $D$ is $O(\mathrm{polylog}(N))$. 

Similarly, the state preparation of $z(0)$ can be reduced to the state preparation requires $O(k) = O(\mathrm{polylog}(N))$ calls to the state preparation of $x(0)$, as well as $O(\mathrm{polylog}(N))$ extra gates.

Finally, we need to argue that the block-encodings of $R$, $D$, $W'$, and the state preparation of $x(0)$, are efficient. The only nontrivial claim is for $D$. For $D$ we note that the matrix has $\Theta(1)$ maximum element, efficiently specifiable sparsity structure, $\Theta(1)$ sparsity. The matrix elements are specified via an efficient quantum circuit, according to Eq.~(C4) in Ref.~\cite{babbush2023exponential}. Hence, all oracles can be realized at $\mathrm{polylog}(N)$ gate cost. Hence the problem described in the theorem statement is contained in BQP. 


\subsection{The problem is BQP-hard} 

In the last part of the proof, we show that the kinetic energy estimated for the nonlinear, dissipative problem is a $O(1/\mathrm{polylog}(N))$ approximation for the kinetic energy of the linear, dissipationless problem. Since the latter is known to be BQP-hard~\cite{babbush2023exponential}, it follows that so is the former.

Let $\mathcal{K}_{}(t_\star) = \frac{1}{2} \dot{q}_{1}^\dag(t_\star) \dot{q}_{1}(t_\star) =  \frac{1}{2}x_{}^\dag(t_\star) \Lambda x_{}(t_\star)$, where $\Lambda$ is a diagonal matrix with a single $1$ at position $N+1$ and $q_{}(t_\star)$ is the solution to \eqref{eq:bqpproblem} with initial conditions $(q(0), \dot{q}(0))$. We instead denote by $q_L(t_\star)$ the solution to the problem where we set $W=R=0$ (so the linear, dissipation-less version of the same problem),  and $\mathcal{K}_{\mathrm{L}}(t_\star) = \frac{1}{2} \dot{q}_{\mathrm{L},1}^\dag(t_\star) \dot{q}_{\mathrm{L},1}(t_\star) =  \frac{1}{2}x_{\mathrm{L}}^\dag(t_\star) \Lambda x_{\mathrm{L}}(t_\star)$. 

Let us compute
\begin{align}
  |\mathcal{K}_{}(t_\star) -   \mathcal{K}_{\mathrm{L}}(t_\star)| & = \left| \frac{1}{2}x_{}^\dag(t_\star) \Lambda x_{}(t_\star) - \frac{1}{2}x_{\mathrm{L}}^\dag(t_\star) \Lambda x_{\mathrm{L}}(t_\star) \right|\\ 
   & = \left| \frac{1}{2}x_{}^\dag(t_\star) \Lambda (x_{}(t_\star) - x_{\mathrm{L}}(t_\star) ) +  \frac{1}{2}(x_{}(t_\star)- x_{\mathrm{L}}(t_\star))^\dag \Lambda x_{\mathrm{L}}(t_\star) \right| \\
   & \leq  \frac{1}{2}(\|x_{}(t_\star)\|+\|x_{L}(t_\star)\|) \| x_{}(t_\star)- x_{\mathrm{L}}(t_\star)\|
  \\ & \leq (\|x_{}(t_\star)- x_{\mathrm{L}}(t_\star)\| + \|x_{\mathrm{L}}(t_\star) \|)  \|x_{}(t_\star)- x_{\mathrm{L}}(t_\star)\|
\end{align}
We need to bound $ \| x_{}(t_\star) -  x_{\mathrm{L}}(t_\star)\|$ and $\|x_L(t_\star)\|$. We have
\begin{align}
     \frac{d y_{}}{dt} &= \sqrt{D} \dot{q}_{}, \\
     \frac{d \dot{q}_{}}{dt} &= -\sqrt{D} y_{} - R \dot{q}_{} + c W q_{}^{\otimes 2}, \\
        \frac{d y_{\mathrm{L}}}{dt}  &= \sqrt{D} \dot{q}_{\mathrm{L}}, \\
     \frac{d \dot{q}_{\mathrm{L}}}{dt} &= -\sqrt{D} y_{\mathrm{L}}.
\end{align}
Hence,
\begin{align*}
  \frac{d}{dt} \| x_{} - x_{\mathrm{L}} \|^2 =&\frac{d}{dt} \| y_{} - y_{\mathrm{L}} \|^2 + \frac{d}{dt} \| \dot{q}_{} - \dot{q}_{\mathrm{L}} \|^2  \\
     =& \frac{d}{dt}( y_{} - y_{\mathrm{L}})^\dag (y_{} - y_{\mathrm{L}}) +  ( y_{} - y_{\mathrm{L}})^\dag\frac{d}{dt}(y_{} - y_{\mathrm{L}}) \\
   & + \frac{d}{dt}( \dot{q}_{} - \dot{q}_{\mathrm{L}})^\dag (\dot{q}_{} - \dot{q}_{\mathrm{L}}) +  ( \dot{q}_{} - \dot{q}_{\mathrm{L}})^\dag\frac{d}{dt}(\dot{q}_{} - \dot{q}_{\mathrm{L}})
   \\
     =& (\dot{q}_{} - \dot{q}_{\mathrm{L}})^\dag \sqrt{D}(y_{} - y_{\mathrm{L}}) +  ( y_{} - y_{\mathrm{L}})^\dag \sqrt{D}(\dot{q}_{} - \dot{q}_{\mathrm{L}}) \\
   &  - (y_{} - y_{\mathrm{L}})^\dag \sqrt{D} (\dot{q}_{} - \dot{q}_{\mathrm{L}}) -   \dot{q}_{} ^\dag R (\dot{q}_{} - \dot{q}_{\mathrm{L}}) + (\dot{q}^{\otimes 2})^\dag_{} c W^\dag (\dot{q}_{} - \dot{q}_{\mathrm{L}})   \\ &  - (\dot{q}_{} - \dot{q}_{\mathrm{L}})^\dag \sqrt{D} (y_{} - y_{\mathrm{L}})  -   \dot{q}_{}^\dag R (\dot{q}_{} - \dot{q}_{\mathrm{L}}) +  (\dot{q}_{} - \dot{q}_{\mathrm{L}})^\dag c W \dot{q}^{\otimes 2}_{}  \\ 
      =&  -2 \dot{q}^\dag R (\dot{q}_{} - \dot{q}_{\mathrm{L}}) + (\dot{q}^{\otimes 2})^\dag_{} c W^\dag (\dot{q}_{} - \dot{q}_{\mathrm{L}})    +  (\dot{q}_{} - \dot{q}_{\mathrm{L}})^\dag c W \dot{q}^{\otimes 2}_{}  
    \\
     =&  - 2r \sum_{i=1}^{N'} \dot{q}'_{i}( \dot{q}'_{i} - \dot{q}'_{\mathrm{L},i}) + 2c w \sum_{i=1}^{N'} ( \dot{q}'_{i} - \dot{q}'_{\mathrm{L}, i}) (\dot{q}'_{i})^2
     \\
     \leq &   2r \|\dot{q}'\|\| \dot{q}' - \dot{q}'_{\mathrm{L}}\| + 2c |w| \| \dot{q}'_{} - \dot{q}'_{L} \| \sqrt{\sum_{i=1}^{N'} \dot{q}_i^{'4}}
      \\
     \leq &   2r \|x(0)\| \|x - x_{\mathrm{L}}\| + 2c|w| \| \dot{q}'_{} - \dot{q}'_{\mathrm{L}} \| \|\dot{q}'_{}\|_4^2
     \\
       \leq &   2r \|x(0)\| \|x - x_{\mathrm{L}}\| + 2c|w| \| x_{} - x_{\mathrm{L}} \| \|x\|^2_4
        \\
       \leq &   2r \|x(0)\| \|x - x_{\mathrm{L}}\| + 2c|w| \| x_{} - x_{\mathrm{L}} \| \|x\|^2 \\
           \leq &   2r \|x(0)\| \|x - x_{\mathrm{L}}\| + 2c|w| \| x_{} - x_{\mathrm{L}} \| \|x(0)\|^2.
\end{align*}
And so, for $t\in[0,t_\star]$,
\begin{align}
    \frac{d}{dt} \| x_{}(t) - x_{\mathrm{L}}(t) \|   
     &\leq  r \|x(0)\|  + c|w| \|x(0)\|^2 \\
     & = \frac{\sqrt{2}r}{c}  + \frac{2|w|}{c}. 
\end{align}

 Hence, at time $t_\star$
 \begin{align}
     \|x(t_\star) - x_\mathrm{L}(t_\star)\|  \leq  \frac{\sqrt{2}r t_\star}{c}  + \frac{2|w| t_\star}{c} = \frac{3 r t_\star}{16} + \frac{3 t_\star |w| }{8 \sqrt{2}} = \left(\frac{3}{64}+ \frac{27}{16384 e}\right)\epsilon=: c_1 \epsilon.
 \end{align}

Furthermore, since $x_\mathrm{L}$ evolves under an antisymmetric generator, its norm is constant, in particular we have $\| x_L(t_\star)\| = \| x_{}(0)\|= \sqrt{2}/c = 3/16$. So,
\begin{align}
    |\mathcal{K}_{}(t_\star) -   \mathcal{K}_{\mathrm{L}}(t_\star)| \leq \left(c_1 \epsilon + \frac{3}{16}\right) c_1 \epsilon.
\end{align}
which is smaller than $\epsilon$ for every $\epsilon \in (0,1)$.

This allows us to use the results of the algorithm applied to the nonlinear problem to solve the decision problem stated in the theorem but for the linear problem. The latter problem was shown to be BQP-hard in Ref.~\cite{babbush2023exponential}.

\printbibliography

\end{document}